\documentclass[envcountsect]{llncs}
\pagestyle{plain}

\makeatletter
\renewcommand\subsubsection{\@startsection{subsubsection}{3}{\z@}%
                       {-18\p@ \@plus -4\p@ \@minus -4\p@}%
                       {0.5em \@plus 0.22em \@minus 0.1em}%
                       {\normalfont\normalsize\bfseries\boldmath}}
\makeatother
\setcounter{secnumdepth}{3}

\usepackage[ruled]{algorithm2e}

\usepackage{caption}
\usepackage{hyperref}
\usepackage{adjustbox,multirow}
\usepackage{verbatim,color}
\usepackage{amsmath,amsfonts,latexsym,verbatim,threeparttable,amsfonts,rotating,pifont}

\usepackage{enumitem}
\usepackage{latexsym}
\usepackage{mathtools}
\usepackage{multirow}
\usepackage{multicol}
\usepackage{mwe}
\usepackage{makecell}
\usepackage{pgfplots}
\usepackage[section]{placeins}
\usepackage{pifont}
\usepackage{ragged2e}
\usepackage{rotating}
\usepackage{subcaption}
\usepackage{tabu}
\usepackage{tabularx}
\usepackage{threeparttable}
\usepackage{tikz}
\usepackage[normalem]{ulem}
\usepackage{url}
\usepackage{verbatim}
\usepackage{wrapfig}
\usepackage{xcolor}
\hypersetup{colorlinks, linkcolor={red!50!black}, citecolor={green!50!black}}
\usepackage{xspace}
\usepackage[framemethod=tikz]{mdframed}
\usepackage{algpseudocode}

\input{symbols_env}
\newcommand{\SharingSpec}{\ensuremath{\mathbb{S}}}
\newcommand{\Q}{\ensuremath{\mathbb{Q}}}
\newcommand{\F}{\ensuremath{\mathbb{F}}}

\newcommand{\C}{\ensuremath{\mathcal{C}}}
\newcommand{\W}{\ensuremath{\mathcal{W}}}
\newcommand{\GW}{\ensuremath{\bf{\mathcal{GW}}}}
\newcommand{\Order}{\ensuremath{\mathcal{O}}}
\newcommand{\ckt}{\ensuremath{\mathsf{ckt}}}

\newcommand{\defined}{\ensuremath{\stackrel{def}{=}}}

\newcommand{\poly}{\ensuremath{\mathsf{poly}}}
\newcommand{\ShareSpec}{\ensuremath{\mathbb{S}}}

\newcommand{\ICSig}{\ensuremath{\mathsf{ICSig}}}

\newcommand{\Con}{\ensuremath{\mathsf{Con}}}

\newcommand{\sign}[1]{\ensuremath{\langle #1 \rangle}}
\newcommand{\committed}{\ensuremath{\mathsf{committed}}}
\newcommand{\false}{\ensuremath{\mathsf{false}}}
\newcommand{\true}{\ensuremath{\mathsf{true}}}
\newcommand{\qcore}{\ensuremath{q_{\mathsf{core}}}}
\newcommand{\flag}{\ensuremath{\mathsf{flag}}}
\newcommand{\s}{\ensuremath{\mathbf{s}}}
\newcommand{\LMax}{\ensuremath{\mathsf{L_{max}}}}
\newcommand{\ssec}{\ensuremath{\mathsf{ssec}}}
\newcommand{\specialS}{\ensuremath{\mathbf{S}}}
\newcommand{\Agree}{\ensuremath{\mathsf{Agree}}}

\newcommand{\PartySet}{\ensuremath{\mathcal{P}}}
\newcommand{\Partyset}{\ensuremath{\mathcal{P}}}
\newcommand{\Discarded}{\ensuremath{\mathcal{GD}}}
\newcommand{\Selected}{\ensuremath{\mathsf{Selected}}}
\newcommand{\Products}{\ensuremath{\mathsf{SIS}}}
\newcommand{\AdvStruct}{\ensuremath{\mathcal{Z}}}
\newcommand{\Core}{\ensuremath{\mathsf{CORE}}}

\newcommand{\AdvStructure}{\ensuremath{\mathcal{Z}}}
\newcommand{\Z}{\ensuremath{\mathcal{Z}}}

\newcommand{\CoreSet}{\ensuremath{\mathcal{CS}}}

\newcommand{\Hon}{\ensuremath{\mathcal{H}}}

\newcommand{\BroadcastSet}{\ensuremath{\mathcal{BS}}}
\newcommand{\R}{\ensuremath{\mathcal{SV}}}
\newcommand{\LD}{\ensuremath{\mathcal{LD}}}
\newcommand{\ReceiverSet}{\ensuremath{\mathcal{R}}}
\newcommand{\CD}{\ensuremath{\mathcal{CD}}}
\newcommand{\SET}{\ensuremath{\mathsf{SET}}}
\newcommand{\ACC}{\ensuremath{\mathsf{ACC}}}
\newcommand{\FIN}{\ensuremath{\mathsf{FIN}}}
\newcommand{\vals}{\ensuremath{\mathsf{vals}}}
\newcommand{\prop}{\ensuremath{\mathsf{prop}}}
\newcommand{\PrepSet}{\ensuremath{\mathsf{PrepareSet}}}
\newcommand{\PropSet}{\ensuremath{\mathsf{ProposeSet}}}

\newcommand{\CSet}{\ensuremath{\mathcal{C}}}

\newcommand{\errorAICP}{\ensuremath{\mathsf{\epsilon_{ICP}}}}

\newcommand{\D}{\ensuremath{\mathsf{D}}}
\renewcommand{\S}{\ensuremath{\mathsf{S}}}
\newcommand{\INT}{\ensuremath{\mathsf{I}}}
\newcommand{\Receiver}{\ensuremath{\mathsf{R}}}


\newcommand{\BasicMult}{\ensuremath{\Pi_\mathsf{BasicMult}}}

\newcommand{\VSS}{\ensuremath{\Pi_\mathsf{VSS}}}
\newcommand{\MDVSS}{\ensuremath{\Pi_\mathsf{MDVSS}}}
\newcommand{\Rec}{\ensuremath{\Pi_\mathsf{Rec}}}

\newcommand{\Auth}{\ensuremath{\Pi_\mathsf{Auth}}}
\newcommand{\Reveal}{\ensuremath{\Pi_\mathsf{Reveal}}}
\newcommand{\BC}{\ensuremath{\Pi_\mathsf{BC}}}
\newcommand{\BA}{\ensuremath{\Pi_\mathsf{BA}}}
\newcommand{\Vote}{\ensuremath{\Pi_\mathsf{GA}}}
\newcommand{\SBA}{\ensuremath{\Pi_\mathsf{SBA}}}
\newcommand{\CoinFlip}{\ensuremath{\Pi_\mathsf{CoinFlip}}}
\newcommand{\Coin}{\ensuremath{\mathsf{Coin}}}
\newcommand{\ABA}{\ensuremath{\Pi_\mathsf{ABA}}}
\newcommand{\PiPW}{\ensuremath{\Pi_\mathsf{PW}}}
\newcommand{\Acast}{\ensuremath{\Pi_\mathsf{Acast}}}

\newcommand{\PiMPC}{\ensuremath{\Pi_\mathsf{cktEval}}}
\newcommand{\ICP}{\ensuremath{\Pi_\mathsf{ICP}}}
\newcommand{\RecShare}{\ensuremath{\Pi_\mathsf{RecShare}}}
\newcommand{\SVM}{\ensuremath{\Pi_\mathsf{SVM}}}
\newcommand{\Rand}{\ensuremath{\Pi_\mathsf{Rand}}}
\newcommand{\LSh}{\ensuremath{\Pi_\mathsf{LSh}}}
\newcommand{\TripGen}{\ensuremath{\Pi_\mathsf{TripGen}}}
\newcommand{\RandMultCI}{\ensuremath{\Pi_\mathsf{RandMultCI}}}
\newcommand{\Prop}{\ensuremath{\Pi_\mathsf{Prop}}}

\newcommand{\TimeBC}{\ensuremath{T_\mathsf{BC}}}
\newcommand{\TimePW}{\ensuremath{T_\mathsf{PW}}}
\newcommand{\TimeVSS}{\ensuremath{T_\mathsf{VSS}}}
\newcommand{\TimeMDVSS}{\ensuremath{T_\mathsf{MDVSS}}}
\newcommand{\TimeAuth}{\ensuremath{T_\mathsf{Auth}}}
\newcommand{\TimeReveal}{\ensuremath{T_\mathsf{Reveal}}}
\newcommand{\TimeSBA}{\ensuremath{T_\mathsf{SBA}}}
\newcommand{\TimeABA}{\ensuremath{T_\mathsf{ABA}}}
\newcommand{\TimeBA}{\ensuremath{T_\mathsf{BA}}}
\newcommand{\TimeVote}{\ensuremath{T_\mathsf{GA}}}
\newcommand{\TimeCoinFlip}{\ensuremath{T_\mathsf{CoinFlip}}}
\newcommand{\TimeRecShare}{\ensuremath{T_\mathsf{RecShare}}}
\newcommand{\TimeRec}{\ensuremath{T_\mathsf{Rec}}}
\newcommand{\TimeSVM}{\ensuremath{T_\mathsf{SVM}}}
\newcommand{\TimeBasicMult}{\ensuremath{T_\mathsf{BasicMult}}}
\newcommand{\TimeRandMultCI}{\ensuremath{T_\mathsf{RandMultCI}}}
\newcommand{\TimeTripGen}{\ensuremath{T_\mathsf{TripGen}}}

\newcommand{\TimeLSh}{\ensuremath{T_\mathsf{LSh}}}

\newcommand{\TimeRand}{\ensuremath{T_\mathsf{Rand}}}

\newcommand{\OK}{\ensuremath{\mathsf{OK}}}

\newcommand{\Received}{\ensuremath{\mathsf{Received}}}

\newcommand{\Sender}{\ensuremath{\mathsf{Sen}}}
\newcommand{\prepare}{\ensuremath{\mathsf{prepare}}}
\newcommand{\propose}{\ensuremath{\mathsf{propose}}}
\newcommand{\cert}[1]{\ensuremath{\mathcal{C}(#1)}}
\newcommand{\vote}{\ensuremath{\mathsf{vote}}}

\newcommand{\ready}{\ensuremath{\mathsf{ready}}}

\newcommand{\CandidateCoreSets}{\ensuremath{\mathsf{CanCS}}}

\newcommand{\authCompleted}{\ensuremath{\mathsf{authCompleted}}}
\newcommand{\iter}{\ensuremath{\mathsf{iter}}}
\newcommand{\hop}{\ensuremath{\mathsf{hop}}}

\newcommand{\phaseone}{\ensuremath{\mathsf{phI}}}
\newcommand{\phasetwo}{\ensuremath{\mathsf{phII}}}

\newcommand{\Adv}{\ensuremath{\mathsf{Adv}}}


\newenvironment{myitemize}
{\begin{list}{$\bullet$}{ 
\itemindent=-0.1in
\itemsep=0.0in
\parsep=0.0in
\topsep=0.0in
\partopsep=0.0in}}{\end{list}}
\newcounter{itemcount}

\newenvironment{myenumerate}
{\setcounter{itemcount}{0}\begin{list}
{\arabic{itemcount}.}{\usecounter{itemcount} \itemindent=-0.2cm
\itemsep=0.0in
\parsep=0.0in
\topsep=5pt
\partopsep=0.0in}}{\end{list}}

\newenvironment{mydescription}
{\setcounter{itemcount}{0}\begin{list}
{\arabic{itemcount}.}{\usecounter{itemcount} \itemindent=-0.5cm
\itemsep=0.0in
\parsep=0.0in
\topsep=5pt
\partopsep=0.0in}}{\end{list}}


\makeatletter
\DeclareRobustCommand*\cal{\@fontswitch\relax\mathcal}
\makeatother

\newtheorem{notation}[theorem]{Notation}
\usepackage{color,tabularx}
\usepackage{framed}
\newcolumntype{b}{X}
\newcolumntype{s}{>{\hsize=.5\hsize}X}

\newcommand{\commentAA}[1]{\textcolor{red}{{\sf (Ananya's  Note:} {\sl{#1}} {\sf EON)}}}

\begin{document}

\sloppy

\title{\bf Network Agnostic MPC with Statistical Security}
\author{
Ananya Appan\thanks{Work done as a student at IIIT Bangalore.}\inst{1}
  \and
  Ashish Choudhury \inst{2}}
  
\date{}

\institute{
SAP Labs, Bangalore India\\
\email{ananya.appan@iiitb.ac.in}\and
IIIT Bangalore, India\\
\email{ashish.choudhury@iiitb.ac.in}
}

\maketitle
\vspace*{-0.7cm}
\begin{abstract}
In this work, we initiate the study of network agnostic MPC protocols with {\it statistical} security. Network agnostic MPC protocols give the best possible security guarantees,
 {\it irrespective} of the underlying network type. While network agnostic MPC
  protocols have been designed earlier with {\it perfect} and {\it computational} security, {\it nothing} is known in the literature regarding the possibility of
  network agnostic MPC protocols with statistical security. We consider the {\it general-adversary} model, where the adversary is characterized by an {\it adversary structure}, which enumerates all possible candidate subsets of corrupt parties. Given an unconditionally-secure PKI setup (a.k.a pseudo-signature setup), 
  known statistically-secure {\it synchronous} MPC (SMPC) protocols are secure against adversary structures satisfying the
  $\Q^{(2)}$ condition, meaning that the union of {\it any two} subsets from the adversary structure {\it does not} cover the entire set of parties.
  On the other hand, known statistically-secure {\it asynchronous} MPC (AMPC) protocols 
  can tolerate $\Q^{(3)}$ adversary structures where the union of {\it any three} subsets from the adversary structure {\it does not} cover the entire set of parties.
  
  Fix a set of $n$ parties $\PartySet = \{P_1, \ldots, P_n \}$ and adversary structures $\Z_s$ and $\Z_a$, satisfying the $\Q^{(2)}$ and $\Q^{(3)}$ conditions respectively, where
   $\Z_a \subset \Z_s$.  Then given an unconditionally-secure PKI, we ask whether it is possible to design a statistically-secure MPC protocol, which is resilient against $\Z_s$ 
   and $\Z_a$ in a {\it synchronous}
   and an {\it asynchronous}
   network respectively, even if the parties in $\PartySet$ are {\it unaware} of the network type. We show that it is possible iff
   $\Z_s$ and $\Z_a$ satisfy the $\Q^{(2, 1)}$ condition, meaning that the union of any two subsets from $\Z_s$ and any one subset from $\Z_a$
   is a proper subset of $\PartySet$. 
     Enroute our MPC protocol, we design several important network agnostic building blocks with the $\Q^{(2, 1)}$ condition, such as Byzantine {\it broadcast},
  {\it Byzantine agreement} (BA), {\it information checking protocol} (ICP), {\it verifiable secret-sharing} (VSS) and secure multiplication protocol, whose
   complexity is polynomial in $n$ and $|\Z_s|$.

\end{abstract}

\section{Introduction}
\label{sec:intro}
A  secure {\it multiparty computation} (MPC) protocol \cite{Yao82,GMW87,BGW88,RB89}  allows a set of $n$ mutually distrusting parties
 $\PartySet = \{P_1, \ldots, P_n \}$ with private inputs to securely compute any known function $f$ of their inputs.
  This is achieved even if a subset of the parties are under the control of a centralized {\it adversary} and behave {\it maliciously} in a Byzantine fashion  
   during the protocol execution.
      In any MPC protocol, the parties need to interact 
      over the underlying communication network. Two types of networks have been predominantly considered. 
   The more popular {\it synchronous} MPC (SMPC) protocols operate over a synchronous network, where 
   every message sent is delivered within a {\it known} $\Delta$ time. 
  Hence, if a receiving party does not receive an expected
   message within $\Delta$ time,
    then it knows that the corresponding sender party is {\it corrupt}.
   The synchronous model {\it does not} capture real world networks like the Internet appropriately, where messages can be {\it arbitrarily} delayed. 
   Such networks are better modelled by the {\it asynchronous} communication model \cite{CanettiThesis}. In any {\it asynchronous} MPC (AMPC) protocol \cite{BCG93,BKR94}, there are {\it no}
   timing assumptions on message delays and messages can be arbitrarily, yet finitely delayed. The only guarantee is that every message sent will be {\it eventually} delivered. 
   The major challenge here is that no participant will know how long it has to wait for an expected message
   and {\it cannot} distinguish a ``slow" sender party from a corrupt sender party. 
    Consequently, in any AMPC protocol, a party {\it cannot} afford to receive messages from all the parties,
   to avoid an endless wait. Hence, to make ``progress", as soon a party receives messages from a ``subset" of the parties, it has to process them 
    as per the protocol, thus ignoring messages from a subset of potentially non-faulty but slow parties. 
     
    SMPC protocols are relatively simpler and enjoy better {\it fault-tolerance} (which is the maximum number of faults tolerable) compared to AMPC protocols.
   However, SMPC protocols become completely insecure even if a {\it single} message
   (from a non-faulty party) gets delayed. AMPC protocols do not suffer from this shortcoming. 
   On the negative side, AMPC protocols are far more complex than SMPC protocols and enjoy poor fault-tolerance.
    Moreover, every AMPC protocol suffers from {\it input deprivation} \cite{BH07} where, to avoid an endless wait, inputs of {\it all} non-faulty parties
   {\it may not} be considered for the computation of $f$.
   \paragraph{\bf Network Agnostic MPC Protocols.}
  There is a third category of protocols called {\it network agnostic} MPC protocols, where
    the parties {\it will not} be knowing the network type
    and the protocol should provide the best possible security guarantees depending upon the network type. 
    Such protocols are practically motivated, since the parties {\it need not} have to worry about the network type.
     \subsection{Our Motivation and Results}
     One of the earliest demarcations made in the literature is to categorize MPC protocols based on the computing power of the underlying adversary. The two main categories are
     {\it unconditionally-secure} protocols, which remain secure even against  {\it computationally-unbounded} adversaries, and
      {\it conditionally-secure} MPC protocols (also called {\it cryptographically-secure}), which remain
     secure {\it only} against {\it computationally-bounded} adversaries \cite{Yao82,GMW87}.    
     Unconditionally-secure protocols can be further categorized as {\it perfectly-secure} \cite{BGW88,BCG93} or {\it statistically-secure} \cite{RB89,BKR94},
     depending upon whether the security guarantees are {\it error-free} or achieved 
          except with a {\it negligible} probability. The fault-tolerance of statistically-secure MPC protocols are significantly {\it better}
      compared to perfectly-secure protocols.
    The above demarcation carries over even for network agnostic MPC protocols. While perfectly-secure and cryptographically-secure network agnostic MPC protocols have been investigated earlier, 
    {\it nothing} is known regarding network agnostic statistically-secure MPC protocols.
     In this work we derive necessary and sufficient condition for such protocols for the {\it first} time.
     \paragraph{\bf Existing Results for Statistically-Secure MPC.}
     Consider the {\it threshold} setting, where the maximum number of corrupt parties under the adversary's control is upper bounded by a given threshold. 
     In this model, it is known that statistically-secure SMPC tolerating up to $t_s$ faulty parties is possible iff $t_s < n/2$ \cite{RB89},
      provided the parties are given some {\it unconditionally-secure} PKI (a.k.a {\it pseudo-signature} setup) \cite{PW96,FitziThesis}.\footnote{The setup realizes
      unconditionally-secure Byzantine agreement  \cite{PeaseJACM80} with $t_s < n/2$.}
      On the other hand, statistically-secure AMPC tolerating up to $t_a$ faulty parties is possible iff $t_a < n/3$ \cite{BKR94,ADS20}.
      
      A more generalized form of corruption is the general adversary model (also called {\it non-threshold} model) \cite{HM97}.
      Here, the adversary is specified through a publicly known {\it adversary structure} $\Z \subset 2^{\PartySet}$, which is the set of all subsets of potentially
      corruptible parties during the protocol execution. The adversary is allowed to choose any one subset from $\Z$ for corruption. There are several ``merits" of studying the
      general adversary model.  For example,
      it provides more flexibility to model corruption in a fine-grained fashion.
       A threshold adversary is always a ``special" type of non-threshold adversary.
      Consequently, a protocol in the non-threshold setting {\it always} implies a protocol in the threshold setting.
      Also, the protocols in this model are relatively simpler and based on simpler primitives, compared to protocols against threshold adversaries based
      on complex properties of bivariate polynomials.  
       The downside is that the complexity of the protocols in the non-threshold model is polynomial in $n$ and $|\Z|$, where the latter could be $\Order(2^n)$ in the {\it worst} case.
      In fact, as noted in \cite{HM97,HM00}, this is {\it unavoidable}.
      
       Following \cite{HM97}, given a subset of parties $\PartySet' \subseteq \PartySet$ and $\Z$, we say that
      $\Z$ satisfies the $\Q^{(k)}(\PartySet', \Z)$ condition, if the union of any $k$ subsets from $\Z$ {\it does not} ``cover" $\PartySet'$. That is, for any subsets $Z_{i_1}, \ldots, Z_{i_k} \in \Z$, 
       the condition $(Z_{i_1} \cup \ldots \cup Z_{i_k}) \subset \PartySet'$ holds. 
       In the non-threshold model, statistically-secure SMPC is possible if the underlying adversary structure $\Z_s$ satisfies the $\Q^{(2)}(\PartySet, \Z_s)$ condition, provided the parties are
       given an unconditionally-secure PKI setup \cite{HT13}, while statistically-secure AMPC requires the underlying adversary structure $\Z_a$ to satisfy the $\Q^{(3)}(\PartySet, \Z_a)$ condition \cite{HT13,ACC22c}.
       \paragraph{\bf Our Results for Network Agnostic Statistically-Secure MPC.}
        We consider the most generic form of corruption and ask the following question:
       \begin{center}
       Given an unconditionally-secure PKI, a {\it synchronous adversary structure} $\Z_s$ and an {\it asynchronous adversary structure} $\Z_a$ satisfying the 
       $\Q^{(2)}(\PartySet, \Z_s)$  and  $\Q^{(3)}(\PartySet, \Z_a)$ conditions respectively, where $\Z_a \subset \Z_s$, does there exist a statistically-secure MPC protocol, which remains secure against
       $\Z_s$ and $\Z_a$ in a synchronous and an asynchronous network respectively?      
       \end{center}
       We answer the above question affirmatively, iff $\Z_s$ and $\Z_a$ satisfy the $\Q^{(2, 1)}(\PartySet, \Z_s, \Z_a)$ condition,  where
       by $\Q^{(k, k')}(\PartySet, \Z_s, \Z_a)$ condition, we mean that for any $Z_{i_1}, \ldots, Z_{i_k} \in \Z_s$ and ${\mathsf Z}_{j_1}, \ldots, {\mathsf Z}_{j_{k'}} \in \Z_a$, the following
       holds:
       \[ (Z_{i_1} \cup \ldots \cup Z_{i_k} \cup {\mathsf Z}_{j_1} \cup \ldots \cup {\mathsf Z}_{j_k'}) \subset \PartySet.\]
       Our results when applied against {\it threshold} adversaries imply that given an unconditionally-secure PKI, and thresholds $0 < t_a <  \frac{n}{3} < t_s < \frac{n}{2}$, network agnostic
        statistically-secure MPC tolerating $t_s$ and $t_a$ corruptions in the {\it synchronous} and {\it asynchronous} network is 
        possible, iff $2t_s + t_a < n$ holds. Our results in the context of relevant literature are summarized in 
        Table \ref{tab:Summary}.
             \begin{table}[!htb]
         	\begin{center}
	\begin{adjustbox}{width=1\textwidth}
			\begin{tabular}{|c|c|c|c|c|}
			\hline
			Network Type & Corruption Scenario & Security & Condition & Reference \\ \hline
			Synchronous & Threshold $(t)$ & Perfect & $t < n/3$ & \cite{BGW88} \\ \hline
			Synchronous & Non-threshold $(\Z)$ & Perfect & $\Q^{(3)}(\PartySet, \Z)$ & \cite{HM97} \\ \hline
			Synchronous & Threshold $(t)$ & Statistical & $t < n/2$ & \cite{RB89} \\ \hline
			Synchronous & Non-threshold $(\Z)$ & Statistical & $\Q^{(2)}(\PartySet, \Z)$ & \cite{HT13} \\ \hline \hline
			Asynchronous & Threshold $(t)$ & Perfect & $t < n/4$ & \cite{BCG93} \\ \hline
			Asynchronous & Non-threshold $(\Z)$ & Perfect & $\Q^{(4)}(\PartySet, \Z)$ & \cite{MSR02} \\ \hline
			Asynchronous & Threshold $(t)$ & Statistical & $t < n/3$ & \cite{BKR94,ADS20} \\ \hline
			Asynchronous & Non-threshold $(\Z)$ & Statistical & $\Q^{(3)}(\PartySet, \Z)$ & \cite{ACC22c} \\ \hline \hline
			Network Agnostic & Threshold $(t_s, t_a)$ & Perfect & $0 < t_a < n/4 < t_s < n/3$ & \cite{ACC22a}\\ 
			                             &                                     & &  and $3t_s + t_a < n$         &       \\ \hline
                         Network Agnostic & Non-threshold $(\Z_s, \Z_a)$ & Perfect & $\Z_a \subset \Z_s, \Q^{(3)}(\PartySet, \Z_s), \Q^{(4)}(\PartySet, \Z_a)$ & \cite{ACC22b} \\ 
			                             &                                     & &  and $\Q^{(3, 1)}(\PartySet, \Z_s, \Z_a)$         &       \\ \hline
                         Network Agnostic & Threshold $(t_s, t_a)$ & Computational & $0 < t_a < n/3 < t_s < n/2$ & \cite{BZL20}\\ 
			                             &                                     & &  and $2t_s + t_a < n$         &       \\ \hline \hline
                          Network Agnostic & Non-threshold $(\Z_s, \Z_a)$ & Statistical & $\Z_a \subset \Z_s, \Q^{(2)}(\PartySet, \Z_s), \Q^{(3)}(\PartySet, \Z_a)$ & This work\\ 
			                             &                                     & &  and $\Q^{(2, 1)}(\PartySet, \Z_s, \Z_a)$         &       \\ \hline          
                             Network Agnostic & Threshold $(t_s, t_a)$ & Statistical & $0 < t_a < n/3 < t_s < n/2$ & This work\\ 
			                             &                                     & &  and $2t_s + t_a < n$         &       \\ \hline
						
				\end{tabular}
       \end{adjustbox}				
	\end{center}
	\caption{\label{tab:Summary}Various conditions for MPC in different settings}
	\end{table}
   \vspace*{-1cm}
 \subsection{Detailed Technical Overview}
  We perform shared {\it circuit-evaluation} \cite{BGW88,RB89},
   where $f$ is abstracted as an arithmetic circuit $\ckt$ over a finite field $\F$ and the goal is to securely evaluate each gate
   in $\ckt$ in a {\it secret-shared} fashion. For every value during the circuit-evaluation, each party holds a share, such that the shares of the corrupt parties {\it do not} reveal any additional information.
    Once the function output is secret-shared, it is publicly reconstructed.    
    We deploy a {\it linear} secret-sharing scheme, which enables the parties to evaluate linear gates in $\ckt$ in a {\it non-interactive} fashion. {\it Non-linear} gates are evaluated using Beaver's method \cite{Bea91} by deploying secret-shared random {\it multiplication-triples} which are generated {\it beforehand}.

     To instantiate the above approach with {\it statistical} security, we need the following ingredients:
     a {\it Byzantine agreement} (BA) protocol \cite{PeaseJACM80}, 
      an {\it information checking protocol} (ICP) \cite{RB89},
     a {\it verifiable secret sharing} (VSS) protocol \cite{CGMA85},
     a {\it reconstruction} protocol and finally, a secure {\it multiplication} protocol.
     All existing statistically-secure SMPC \cite{RB89,CDDHR99,BH06,GSZ20,HT13} and AMPC \cite{BKR94,PCR15,CP17,CP23,ACC22c} protocols have instantiations of the above building blocks,
     {\it either} in the synchronous {\it or} asynchronous setting.
     However, in a network agnostic setting, we face several challenges to instantiate the above building blocks.
     We now take the reader through a detailed tour of the technical challenges and how we deal with them.
     \subsubsection{Network Agnostic BA with $\Q^{(2, 1)}(\PartySet, \Z_s, \Z_a)$ Condition}
     A BA protocol \cite{PeaseJACM80} allows the parties in $\PartySet$ with private input bits 
     to agree on a common output bit ({\it consistency}), which is the input of the
     non-faulty parties, if they have the {\it same} input bit ({\it validity}).
     Given an unconditionally-secure PKI, {\it synchronous} BA (SBA) is possible iff the underlying adversary structure $\Z_s$ satisfies the $\Q^{(2)}(\PartySet, \Z_s)$ condition \cite{PW96,FM98,FitziThesis}, while
     {\it asynchronous} BA (ABA) requires the underlying adversary structure $\Z_a$ to satisfy the $\Q^{(3)}(\PartySet, \Z_a)$ condition \cite{Cho23}.
     Existing SBA protocols become completely {\it insecure} in an asynchronous network. 
     On the other hand, any ABA protocol becomes {\it insecure} when executed in a synchronous network, since
     $\Z_s$ {\it need not}
     satisfy the $\Q^{(3)}(\PartySet,\Z_s)$ condition.      
      Hence, we design a network agnostic BA protocol with $\Q^{(2, 1)}(\PartySet, \Z_s, \Z_a)$ condition. 
      The protocol is obtained by generalizing the blueprint for network agnostic BA against {\it threshold} adversaries, first proposed in \cite{BKL19} and later used in \cite{ACC22a}.
     While \cite{BKL19} proposed it for {\it computational security}
      with conditions $0 < t_a < \frac{n}{3} < t_s < \frac{n}{2}$ and $2t_s + t_a < n$ in the presence of a {\it computationally-secure} PKI, later,
      \cite{ACC22a} modified it for {\it perfect security}
            and used it with conditions $t_a, t_s < n/3$.\footnote{{\it Unlike} computationally-secure BA, the necessary condition for perfectly-secure BA is $t < n/3$ for {\it both} SBA as well as ABA, where $t$ is
            the maximum number of faults.} We replace the computationally-secure PKI with an {\it unconditionally-secure} PKI and generalize the building blocks of \cite{BKL19} against non-threshold adversaries
             and upgrade their security to unconditional-security. 
            Additionally, we also generalize certain 
            primitives from \cite{ACC22a} and adapt them to work with the $\Q^{(2, 1)}(\PartySet, \Z_s, \Z_a)$ condition.
                      Since this part mostly follows the existing works, we refer to Section \ref{sec:BA} for full details.
     \subsubsection{Network Agnostic ICP with $\Q^{(2, 1)}(\PartySet, \Z_s, \Z_a)$ Condition}
     \label{ssec:ICP}
     An ICP \cite{RB89,CDDHR99} is used for authenticating data in the presence of a 
  {\it computationally-unbounded} adversary. 
  In an ICP, there are {\it four} entities, a {\it signer} $\mathsf{S} \in \PartySet$,
   an  {\it intermediary} $\mathsf{I} \in \PartySet$, a {\it receiver}
  $\mathsf{R} \in \PartySet$ and all the parties in $\PartySet$ acting as {\it verifiers}
   (note that $\mathsf{S}, \mathsf{I}$ and $\mathsf{R}$ also act as verifiers).
    An ICP has two sub-protocols, one for the {\it authentication phase} and one for 
    the {\it revelation phase}. 
    
    In the authentication phase, $\mathsf{S}$ has a private input
    $s \in \F$, which it distributes to $\mathsf{I}$ along with some {\it authentication information}.
    Each verifier is provided with some {\it verification information}, followed by the parties verifying whether $\mathsf{S}$ has distributed ``consistent" information. 
    If the verification is ``successful", then the 
    data held by $\mathsf{I}$ at the end of this phase is called  {\it $\mathsf{S}$'s IC-Signature on $s$ for intermediary $\mathsf{I}$ and receiver $\mathsf{R}$}, denoted by
    $\ICSig(\mathsf{S}, \mathsf{I}, \mathsf{R}, s)$. 
    Later, during the revelation phase, $\mathsf{I}$ reveals $\ICSig(\mathsf{S}, \mathsf{I}, \mathsf{R}, s)$ to $\mathsf{R}$,
    who ``verifies" it with respect to the verification information provided by the verifiers and either accepts or rejects
    $s$. We require the same security guarantees from ICP as expected from cryptographic signatures,
    namely {\it correctness} (if $\mathsf{S, I}$ and $\mathsf{R}$ are {\it all} honest, then $\mathsf{R}$ should accept $s$), {\it unforgeability} 
    (a {\it corrupt} $\mathsf{I}$ should fail to reveal an {\it honest} $\mathsf{S}$'s signature on $s' \neq s$) and {\it non-repudiation}
     (if an {\it honest} $\mathsf{I}$ holds some $\ICSig(\mathsf{S}, \mathsf{I}, \mathsf{R}, s)$, then later an {\it honest} $\mathsf{R}$ should accept
     $s$, even if $\mathsf{S}$ is {\it corrupt}).
       Additionally, we need {\it privacy}, guaranteeing that if 
    $\mathsf{S, I}$ and $\mathsf{R}$ are {\it all} honest, then $\Adv$ does not learn 
    $s$.\footnote{IC-signatures are {\it different} from pseudo-signatures.
    Pseudo-signatures are ``transferable", where a party can transfer a signed message to other parties for verification (depending upon the allowed level
    of transferability), while IC-signatures {\it can} be verified only by the designated $\mathsf{R}$ and {\it cannot} be further transferred. 
    Due to the same reason, IC-signatures satisfy the privacy property unlike pseudo-signatures. Most importantly, IC-signatures are generated from the scratch, assuming a pseudo-signature
    setup which is used to instantiate the instances of broadcast in the ICP.}
    
    The {\it only known} instantiation of ICP in the {\it synchronous} network \cite{HT13} is secure against $\Q^{(2)}$ adversary structures and becomes {\it insecure}
    in the {\it asynchronous} setting. On the other hand, the {\it only known} instantiation of ICP in the {\it asynchronous} setting \cite{ACC22c} can tolerate {\it only}
    $\Q^{(3)}$ adversary structures. Our network agnostic
     ICP is a careful adaptation of the {\it asynchronous} ICP of \cite{ACC22c}.
     We first try to {\it naively} adapt the ICP to deal with the {\it network agnostic} setting, followed by the technical problems in the naive adaptation
     and the modifications needed.
     
    During authentication phase, $\mathsf{S}$ embeds $s$ in a random $t$-degree polynomial $F(x)$ at $x = 0$, where 
    $t$ is the cardinality of the maximum-sized subset in $\Z_s$, and gives $F(x)$ to $\mathsf{I}$. In addition, each verifier $P_i$ is given a random {\it verification-point}
    $(\alpha_i, v_i)$ on $F(x)$. 
    To let the parties securely verify that it has distributed consistent information, $\mathsf{S}$
    additionally distributes a random $t$-degree polynomial $M(x)$ to $\mathsf{I}$, while
    each verifier $P_i$ is given a point on $M(x)$ at $\alpha_i$. Each verifier, upon receiving 
    its verification-points, {\it publicly} confirms the same. 
    Upon receiving these confirmations, $\mathsf{I}$ identifies a
    a subset of {\it supporting verifiers} $\R$ which have confirmed the receipt of their verification-points.
    To avoid an endless wait, $\mathsf{I}$ waits until
    $\PartySet \setminus \R \in \Z_s$. 
    After this, the parties {\it publicly} check the consistency of the $F(x), M(x)$ polynomials
    and the points distributed to $\R$, with respect to a {\it random} linear combination of these polynomials and points, where the linear combiner is selected by
     $\mathsf{I}$. 
    This ensures that $\mathsf{S}$ has {\it no} knowledge beforehand about the random combiner 
     and hence, any ``inconsistency" will be detected with a high probability.
      If no inconsistency is detected, the parties proceed to the revelation phase, where 
    $\mathsf{I}$ reveals $F(x)$ to $\mathsf{R}$, while each verifier in $\R$
    reveals its verification-point to $\mathsf{R}$, who accepts $F(x)$ (and hence $F(0)$) if it
    sure that the verification point of at least one {\it non-faulty} verifier in $\R$ is ``consistent" with the revealed $F(x)$. 
    This would ensure that the revealed $F(x)$ is indeed correct with a high probability, since a {\it corrupt} $\mathsf{I}$ will have no information about the verification point of any
    {\it non-faulty} verifier in $\R$, provided $\S$ is {\it non-faulty}. 
    To avoid an endless wait, once $\mathsf{R}$
     finds a subset of verifiers $\R' \subseteq \R$, where
      $\R \setminus \R' \in \Z_s$, whose verification-points are found to be ``consistent" with $F(x)$, it outputs $F(0)$.
   \paragraph{\bf A Technical Problem and Way-out.}
   The protocol outlined above will achieve all the properties in an {\it asynchronous} network, due to the 
   $\Q^{(3)}(\PartySet, \Z_a)$ condition. 
   However, it {\it fails} to satisfy the 
   {\it unforgeability} property in a {\it synchronous} network. Namely, a {\it corrupt} 
   $\mathsf{I}$ may {\it not} include {\it all} the non-faulty verifiers in $\R$ and may purposely {\it exclude} a subset of {\it non-faulty} verifiers belonging to $\Z_s$.
     Let $\Hon_{\R}$ be the set of {\it non-faulty} verifiers in $\R$ and let $\C_{\R}$ be the set of {\it corrupt} verifiers in $\R$. Due to the above strategy, 
   the condition $\Q^{(1)}(\Hon_\R, \Z_s)$ {\it may not} be satisfied and $\R \setminus \C_\R = \Hon_\R \in \Z_s$ {\it may hold}.
   As a result, during the revelation phase, $\mathsf{I}$ may produce an {\it incorrect} $F'(x) \neq F(x)$ and the verifiers in $\C_{\R}$ may change their verification points to 
    ``match" $F'(x)$, 
   while only the verification points of the verifiers in $\Hon_\R$ may turn out to be inconsistent with $F'(x)$. Consequently, $\R' = \C_\R$
   and if $\Hon_\R \in \Z_s$, then clearly $s' = F'(0)$ will be the output
   of $\mathsf{R}$, thus breaking the
   unforgeability property.
   
   To deal with the above issue, we let $\S$ identify and announce $\R$. This ensures that {\it all}
    honest verifiers are present in $\R$, if $\S$ is {\it honest} and the network is {\it synchronous}, provided $\S$ waits for ``sufficient" time to let the verifiers announce the receipt of their verification points.
    Consequently, {\it now} the condition $\Q^{(1)}(\Hon_\R, \Z_s)$ {\it will be} satisfied.
    Hence, if a {\it corrupt} $\mathsf{I}$ reveals an incorrect $F(x)$, then it will not be accepted,
    as the condition $\R \setminus \C_\R \in \Z_s$ {\it no} longer holds. 
      \paragraph{\bf Linearity of ICP.} 
 Our ICP satisfies the {\it linearity} property (which will be useful later in our VSS),
  provided ``special care" is taken while generating the IC-signatures. 
   Consider a {\it fixed} $\mathsf{S}, \mathsf{I}$ and $\mathsf{R}$ and let $s_a$ and $s_b$ be two values, such that $\mathsf{I}$ holds $\ICSig(\mathsf{S}, \mathsf{I}, \mathsf{R}, s_a)$ and 
  $\ICSig(\mathsf{S}, \mathsf{I}, \mathsf{R}, s_b)$, where {\it all} the following conditions are satisfied during the underlying instances of the authentication phase.
  \begin{myitemize}
  \item[--] The set of supporting verifiers $\R$ are the {\it same} during both the instances.
  \item[--] For $i = 1, \ldots, n$, corresponding to the verifier $P_i$, signer $\mathsf{S}$ uses the {\it same} $\alpha_i$, to compute the verification points, during 
   both the instances.
  \item[--] $\mathsf{I}$ uses the {\it same} linear combiner to verify the consistency of the distributed data in both the instances.
  \end{myitemize}
  Let $s \defined c_1 \cdot s_a + c_2 \cdot s_b$, where $c_1, c_2$ are {\it publicly known} constants from $\F$. It then follows that if all the above conditions are satisfied, then $\INT$
  can {\it locally} compute $\ICSig(\mathsf{S}, \mathsf{I}, \mathsf{R}, s)$ from $\ICSig(\mathsf{S}, \mathsf{I}, \mathsf{R}, s_a)$ and $\ICSig(\mathsf{S}, \mathsf{I}, \mathsf{R}, s_b)$,
   while each verifier in $\R$ can {\it locally} compute their corresponding verification information. 
     \subsubsection{Network Agnostic VSS and Reconstruction}
     \label{ssec:VSS}
     In the network agnostic setting, to ensure privacy, 
      all the values during the circuit evaluation need to be secret-shared ``with respect" to $\Z_s$
      {\it irrespective} of the network type. 
     We follow the notion of {\it additive secret-sharing} \cite{ISN87}, also used in the earlier MPC protocols \cite{Mau02,HT13,ACC22c}.
     Given $\Z_s = \{Z_1, \ldots, Z_{|\Z_s|} \}$, we consider the {\it sharing specification} $\ShareSpec_{\Z_s} = \{S_1, \ldots, S_{|\Z_s|} \}$, where each $S_q = \PartySet \setminus Z_q$. 
     Hence there exists at least one subset $S_q \in \ShareSpec_{|\Z_s|}$ which {\it does not} contain any faulty party, {\it irrespective} of the network type (since $\Z_a \subset \Z_s$).
     A value $s \in \F$ is said to be secret-shared, if there exist shares $s_1, \ldots, s_{|\Z_s|}$ which sum up to $s$, such that all (non-faulty) parties in $S_q$ have the share $s_q$.
     We denote a secret-sharing of $s$ by $[s]$, with $[s]_q$ denoting the share corresponding to $S_q$. If $[s]_1, \ldots, [s]_{|\Z_s|}$ are randomly chosen, then the probability
     distribution of the shares learnt by the adversary will be independent of $s$, since at least one share will be {\it missing} for the adversary.
     We also note that the above secret-sharing is {\it linear} since, given secret-sharings $[a]$ and $[b]$ and publicly known constants $c_1, c_2 \in \F$, 
     the condition $c_1 \cdot [a] + c_2 \cdot [b] = [c_1 \cdot a + c_2 \cdot b]$ holds.     
     Consequently, the parties can {\it non-interactively} compute any publicly known linear function of secret-shared values. 
      Unfortunately, the above secret-sharing {\it does not} allow for the
     {\it robust} reconstruction of a secret-shared value. This is because the corrupt parties may produce {\it incorrect} shares at the time of reconstruction.
     To deal with this, we ``augment" the above secret-sharing. As part of secret-sharing $s$,
     we {\it also} have publicly known {\it core-sets}
     $\W_1, \ldots, \W_{|\Z_s|}$, where each $W_q \subseteq S_q$ such that $\Z_s$ satisfies the $\Q^{(1)}(\W_q, \Z_s)$ condition (ensuring $\W_q$ has at least one {\it non-faulty} party). 
     Moreover, each (non-faulty) $P_i \in \W_q$ will have the IC-signature 
     $\ICSig(P_j, P_i, P_k, [s]_q)$ of every $P_j \in \W_q$, for every $P_k \not \in S_q$, such that the underlying IC-signatures satisfy the {\it linearity} property. 
     
     We call this augmented secret sharing as {\it linear secret-sharing with IC-signatures}, which is still denoted as $[s]$. 
           Now to {\it robustly} reconstruct a secret-shared $s$, we ask the parties in $\W_q$ to make public the share $[s]_q$, along
      with the IC-signatures of {\it all} the parties in $\W_q$ on $[s]_q$. 
       Any party $P_k$ can then verify whether $[s]_q$ revealed by $P_i$ is {\it correct} by verifying the IC-signatures.
     If $P_i$ is {\it corrupt} then, due to the {\it unforgeability} if ICP, it will fail to forge IC-signature of a {\it non-faulty} $P_j$ on an {\it incorrect} $[s]_q$.
    On the other hand, a {\it non-faulty} $P_i$ will be able to reveal the correct $[s]_q$ and the IC-signature of {\it every} $P_j \in \W_q$ on $[s]_q$, which are accepted even if
     $P_j$ is {\it corrupt} (follows from {\it non-repudiation} of ICP).  
     
    We design a network agnostic VSS protocol $\VSS$, which allows a designated {\it dealer} $\D \in \PartySet$ with input $s \in \F$ to {\it verifiably} generate
     $[s]$, where $s$ remains private for a {\it non-faulty} $s$. 
     If $\D$ is {\it faulty} then either no non-faulty party obtains any output (if $\D$ {\it does not} invoke the protocol) or there exists {\it some} $s^{\star} \in \F$ such that the parties
     output $[s^{\star}]$. 
     To design $\VSS$, we use certain ideas from the statistically-secure
     {\it synchronous} VSS (SVSS) and {\it asynchronous} VSS (AVSS)
     of \cite{HT13} and \cite{ACC22c} respectively, along with some new counter-intuitive ideas. In the sequel, we first give a brief outline of the SVSS and AVSS of \cite{HT13,ACC22c}, followed by the technical challenges
     arising in the network agnostic setting and how we deal with them.
     \paragraph{\bf Statistically-Secure SVSS of \cite{HT13} with $\Q^{(2)}(\PartySet, \Z_s)$ Condition.}
     The SVSS of \cite{HT13} 
     proceeds as a sequence of {\it synchronized} phases. During the {\it first} phase, $\D$ picks random shares $s_1, \ldots, s_{|\Z_s|}$ which sum up to $s$ and sends $s_q$ to the parties in $S_q$.
     To verify whether $\D$ has distributed consistent shares to the parties in $S_q$, during the {\it second} phase,
      every pair of parties $P_i, P_j \in S_q$ exchange the supposedly common shares received from $\D$,
      along with their respective IC-signatures. That is $P_i$, upon receiving $s_{qi}$ from $\D$, gives $\{\ICSig(P_i, P_j, P_k, s_{qi})\}_{P_k \in \PartySet}$ to $P_j$ while
      $P_j$, upon receiving $s_{qj}$ from $\D$, gives $\{\ICSig(P_j, P_i, P_k, s_{qj})\}_{P_k \in \PartySet}$ to $P_i$.
      Then during the {\it third} phase, the parties in $S_q$ {\it publicly} complain about any ``inconsistency", in response to which $\D$ makes {\it public} the share
      $s_q$ corresponding to $S_q$ during the {\it fourth} phase.
       Hence, by the end of {\it fourth} phase it is ensured that, for every $S_q$, either the share $s_q$ is {\it publicly} known (if any complaint was reported for $S_q$)
       or all (non-faulty) parties in $S_q$ have the same share (along with the respective IC-signatures of each other on it).
      The privacy of $s$ is maintained for a {\it non-faulty} $\D$, since the share $s_q$ corresponding to the set $S_q$ consisting of {\it only non-faulty} parties  is {\it never} made public.
      \paragraph{\bf Statistically-Secure AVSS of \cite{ACC22c} with $\Q^{(3)}(\PartySet, \Z_a)$ Condition.}
     Let  $\Z_a = \{\mathsf{Z}_1, \ldots, \mathsf{Z}_{|\Z_a|} \}$ and $\ShareSpec_{\Z_a} = \{\S_1, \ldots, \S_{|\Z_a|} \}$, where each $\S_q = \PartySet \setminus \mathsf{Z}_q$.     
     The AVSS protocol of \cite{ACC22c} also follows an idea similar to the SVSS of \cite{HT13}. 
      However, now the parties {\it cannot} afford to wait for {\it all} the parties in $\S_q$ to report
     the statuses of pairwise consistency tests, as the corrupt parties in $\S_q$ may {\it never} turn up. 
     Hence {\it instead} of looking for {\it inconsistencies} in $\S_q$, the parties {\it rather} check how many parties in $\S_q$ are reporting the {\it pairwise consistency} of their supposedly
     common share. The idea is that if $\D$ has {\it not} cheated, then a subset of parties $\W_q$ where $\S_q \setminus \W_q \in \Z_a$ should eventually confirm the receipt of a common share
     from $\D$. Hence, the parties check for {\it core-sets} $\W_1, \ldots, \W_{|\Z_a|}$, where each $\S_q \setminus \W_q \in \Z_a$, such that the parties
     in $\W_q$ have confirmed the receipt of a common share from $\D$. 
     Note that {\it irrespective} of $\D$, each $\W_q$ is bound to have {\it at least} one non-faulty party, since $\Z_a$ will satisfy the $\Q^{(1)}(\W_q, \Z_a)$ condition.
     
     The existence of $\W_1, \ldots, \W_{|\Z_a|}$ {\it does not} imply that {\it all} non-faulty parties in $\S_q$ have received a common share, even if $\D$ is {\it non-faulty}, since there {\it might} be
     non-faulty parties outside $\W_q$. Hence, 
     after the confirmation of the sets $\W_1, \ldots, \W_{|\Z_a|}$, the goal is to ensure that every (non-faulty) party in $\S_q \setminus \W_q$
     also gets the common share held by the (non-faulty) parties in $\W_q$. For this, 
     the parties in $\W_q$ reveal their shares to these ``outsider" parties, along with the required IC-signatures.
     The outsider parties then ``filter" out the correctly revealed shares. 
     The existence of at least one non-faulty party in each $\W_q$ guarantees that the shares filtered by the outsider parties are indeed correct.     
     \paragraph{\bf Technical Challenges for Network Agnostic VSS and Way Out.}       
     Since, in our context, the parties will {\it not} be knowing the network type, our approach will be to follow the AVSS of \cite{ACC22c}, where we look for {\it pairwise consistency} of supposedly the common share in each group.
     Namely, $\D$ on having the input $s$, picks random shares $s_1, \ldots, s_{|\Z_s|}$ which sum up to $s$ and distributes $s_q$ to each $S_q \in \ShareSpec_{|\Z_s|}$\footnote{Recall that we need
     $[s]$ with respect to $\Z_s$, {\it irrespective} of the network type.}. The parties in $S_q$ then exchange IC-signed versions of their supposedly common share.
     To avoid an endless wait, 
      the parties can only afford to wait till a subset of parties $\W_q \subseteq S_q$ have confirmed
     the receipt of a common share from $\D$, where $S_q \setminus \W_q \in \Z_s$ holds.
     Unfortunately, $S_q \setminus \W_q \in \Z_s$ {\it need not} guarantee that $\W_q$ has at least one {\it non-faulty} party, since $\Z_s$ {\it need not} satisfy the
     $\Q^{(1)}(\W_q, \Z_s)$ condition, which is {\it desired} as per our semantics of linear secret-sharing with IC-signatures. 
      
      To deal with the above problem, we note that if $\D$ has distributed the shares consistently, then the subset of parties $\specialS \in \ShareSpec_{\Z_s}$ 
      which consists of {\it only} non-faulty parties will publicly report the pairwise consistency of their supposedly common share.
     Hence, we now let $\D$ search for a candidate set $S_p$ of parties from $\ShareSpec_{\Z_s}$ which have publicly confirmed the pairwise consistency of their supposedly common share.
      Once $\D$ finds such a candidate $S_p$, it computes and make public the core-sets $\W_q$ as per the following rules, for $q = 1, \ldots, |\Z_s|$.
      \begin{myitemize}
      \item[--] If all the parties in $S_q$ have confirmed the pairwise consistency of their supposedly common share, then set $\W_q = S_q$. \hfill {\bf (A)}
      \item[--] Else if $\Z_s$ satisfies the $\Q^{(1)}(S_p \cap S_q, \Z_s)$ condition 
       {\it and} 
      the parties in $(S_p \cap S_q)$ have confirmed the consistency of their supposedly common share, then
      set  $\W_q = (S_p \cap S_q)$. \hfill {\bf (B)}
      \item[--] Else set $\W_q = S_q$ and make {\it public} the share $s_q$.  \hfill {\bf (C)}
      \end{myitemize}
      The parties wait till they see $\D$ making public some set $S_p \in \ShareSpec_{\Z_s}$, along with sets $\W_1, \ldots, \W_{|\Z_s|}$. Upon receiving, the parties verify and ``approve" these sets
      as valid, provided {\it all} parties in $S_p$ have confirmed the pairwise consistency of their supposedly common share and if each
      $\W_q$ is computed as per the rule {\bf (A), (B)} or {\bf (C)}. If $\W_1, \ldots, \W_{|\Z_s|}$ are approved, then they indeed {\it satisfy} the requirements of core-sets as per our semantics of
      linear secret-sharing with IC-signatures. While this is trivially true if any $\W_q$ is computed either using rule {\bf (A)} or rule {\bf (B)}, the same holds even if 
      $\W_q$ is computed using rule {\bf (C)}. This is because, in this case, the parties publicly set $[s]_q = s_q$. Moreover, the parties take a ``default" (linear) IC-signature of $s_q$ on the behalf of $S_q$,
      where the IC-signature as well as verification points are {\it all} set to $s_q$.
      
      If $\D$ is {\it non-faulty}, then {\it irrespective} of the network type, it will {\it always} find a candidate $S_p$ and hence, compute and make public
      $\W_1, \ldots, \W_{|\Z_s|}$ as per the above rules. This is because the set $\specialS$ {\it always} constitutes a candidate $S_p$.
      Surprisingly we can show that even if the core-sets are computed with respect to some {\it different} candidate $S_p \neq \specialS$, a {\it non-faulty}
      $\D$ will {\it never} make public the share corresponding to $\specialS$, since the rule {\bf (C)} will {\it not} be applicable over $\specialS$, 
      implying the privacy of $s$.
      If the network is {\it synchronous}, then the parties in $S_p$ {\it as well as} $\specialS$ would report       
      the pairwise consistency of their respective supposedly common share at the {\it same} time.
      This is ensured by maintaining sufficient ``timeouts" in the protocol to report pairwise consistency of supposedly common shares.
      Consequently, rule {\bf (A)} will be applied on $\specialS$.
      For an {\it asynchronous} network, rule {\bf (B)} will be {\it applicable} for $\specialS$, as
      $\Z_s$ will satisfy the $\Q^{(1)}(S_p \cap \specialS, \Z_s)$ condition,  
      due to the $\Q^{(2, 1)}(\PartySet, \Z_s, \Z_a)$ condition and the fact that $\specialS = \PartySet \setminus Z$ for some $Z \in \Z_a$ in the {\it asynchronous} network.
     \subsubsection{Network Agnostic VSS for Multiple Dealers with Linearity}
     \paragraph{\bf Technical Challenge in $\VSS$ for Multiple Dealers.}
    If  {\it different} dealers invoke instances of $\VSS$ to secret-share their inputs, then the linearity property of $[\cdot]$-sharing {\it need not} hold, 
    since the underlying core-sets might be {\it different}.
     In more detail, let $\D_a$ and $\D_b$ be two {\it different} dealers  which invoke instances $\VSS^{(a)}$ and $\VSS^{(b)}$
     to generate $[a]$ and $[b]$ respectively. Let $\W_1^{(a)}, \ldots, \W_{|\Z_s|}^{(a)}$ and $\W_1^{(b)}, \ldots, \W_{|\Z_s|}^{(b)}$ be the underlying core-sets for 
          $\VSS^{(a)}$ and $\VSS^{(b)}$ respectively.  Now consider a scenario where, for some $q \in \{1, \ldots, |\Z_s| \}$, the core-sets 
     $\W_q^{(a)}$ and $\W_q^{(b)}$ are {\it different}, even though 
     $\Z_s$ satisfies the $\Q^{(1)}(\W_q^{(a)} , \Z_s)$ and $\Q^{(1)}(\W_q^{(b)} , \Z_s)$ conditions.
     Let $c \defined a + b$.
     Then the parties in $S_q$ can compute $[c]_q = [a]_q + [b]_q$.
     As part of $[a]$, every (non-faulty) $P_i \in \W_q^{(a)}$ has the IC-signature $\{ \ICSig(P_j, P_i, P_k, [a]_q) \}_{P_j \in \W_q^{(a)}, P_k \not \in S_q}$, while as part of
     $[b]$, every (non-faulty) $P_e \in \W_q^{(b)}$ has the IC-signature $\{ \ICSig(P_d, P_e, P_f, [b]_q) \}_{P_d \in \W_q^{(b)}, P_f \not \in S_q}$, where the underlying IC-signatures satisfy the {\it linearity} property.
     However, since  $\W_q^{(a)} \neq \W_q^{(b)}$, it is {\it not} guaranteed that we have a core-set $\W_q^{(c)}$ as part of $[c]$, where 
     $\Z_s$ satisfies the $\Q^{(1)}(\W_q^{(c)} , \Z_s)$ condition, 
     such that every
     (non-faulty) $P_i \in \W_q^{(c)}$ has the IC-signature $\{ \ICSig(P_j, P_i, P_k, [c]_q) \}_{P_j \in \W_q^{(c)}, P_k \not \in S_q}$.
     If $\W_q^{(a)} = \W_q^{(b)}$, then the parties {\it could} set $\W_q^{(c)}$ to $\W_q^{(a)}$ and the {\it linearity} of IC-signatures would have ensured that 
     every
     (non-faulty) $P_i \in \W_q^{(c)}$ {\it non-interactively} computes $\{ \ICSig(P_j, P_i, P_k, [c]_q) \}_{P_j \in \W_q^{(c)}, P_k \not \in S_q}$ from
     the IC-signatures, held by $P_i$ as part of $[a]$ and $[b]$. In the {\it absence} of any core-set $\W_q^{(c)}$, {\it robust} reconstruction of $[c]_q$ 
     will {\it fail}, which further implies {\it failure} of shared circuit-evaluation of $\ckt$, where the inputs for $\ckt$ are shared by {\it different} parties.
     \paragraph{\bf Way Out.}
     To deal with the above problem, we ensure that the core-sets are {\it common} for {\it all} the secret-shared values during the circuit-evaluation. Namely, there exist {\it global} core-sets
     ${\GW}_1, \ldots, {\GW}_{|\Z_s|}$, which constitute the core-sets for 
     {\it all} the secret-shared values during the circuit-evaluation, where for each ${\GW}_q$,
     $\Z_s$ satisfies the $\Q^{(1)}({\GW}_q, \Z_s)$ condition. 
      Maintaining common core-sets is challenging,
     especially in an {\it asynchronous} network and  $\VSS$ alone is {\it not sufficient} to achieve this goal. Rather we use a {\it different} approach. We generate a ``bunch" of linearly secret-shared
     random values with IC-signatures and common core-sets ${\GW}_1, \ldots, {\GW}_{|\Z_s|}$ in {\it advance} through another protocol called $\Rand$ (discussed in the next section).
     Later, if any party $P_i$ needs to secret-share some $x$, then one of these random values is reconstructed {\it only} towards $P_i$, which uses it as a one-time pad (OTP) and makes public an OTP-encryption of
     $x$. The parties can then take the ``default" secret-sharing of the OTP-encryption with IC-signatures and ${\GW}_1, \ldots, {\GW}_{|\Z_s|}$ as the core-sets and then non-interactively ``remove" the
     pad from the OTP-encryption. This results in $[x]$, with
     ${\GW}_1, \ldots, {\GW}_{|\Z_s|}$ as core-sets. 
          To ensure privacy, we need to generate $L$ random values 
    through $\Rand$, if $L$ is the maximum number of values which need to be secret-shared by different parties during the circuit-evaluation.
     We show that  $L \leq n^3 \cdot c_M + 4n^2 \cdot c_M + n^2 + n$ where $c_M$ is the number of multiplication gates in $\ckt$.
     \subsubsection{Secret-Shared Random Values with Global Core Sets}
     \label{ssec:Rand}
     Protocol $\Rand$ generates linearly secret-shared random values with IC-signatures and {\it common} core-sets.
    We explain the idea behind the protocol for generating one random value.
     The ``standard" way will be 
      to let each $P_i$ pick a random value $r^{(i)}$ and generate $[r^{(i)}]$ by invoking an instance of $\VSS$.
     To avoid an endless wait, the parties only wait for the completion of $\VSS$ instances invoked by a set of dealers $\PartySet \setminus Z$ for some $Z \in \Z_s$.
      To identify 
     the common subset of dealers for which the corresponding $\VSS$ instances have completed, the parties run an instance of {\it agreement on a common subset} (ACS) primitive \cite{BKR94,CanettiThesis}. This involves
     invoking $n$ instances of our network agnostic BA, one on behalf of each dealer, to decide the $\VSS$ instances of which dealers have completed. Let $\C$ be the set of common dealers identified through 
     ACS, where $\PartySet \setminus \C \in \Z_s$. The set $\C$ has at least one {\it non-faulty} party who has shared a random value.
          Hence, the sum of the values shared by the dealers in $\C$ will be random for the adversary.
     \paragraph{\bf Technical Challenges.}
     The above approach {\it fails} in our context due to the following two ``problems" in the protocol $\VSS$, when executed by {\it different} dealers. \\[.2cm]
       \noindent {\bf Problem I}: The first challenge is to maintain the linearity of underlying IC-signatures.
      To understand the issue, consider a triplet of parties $P_i, P_j, P_k$, acting as $\S, \mathsf{I}$ and $\mathsf{R}$ respectively
      in various instances of $\VSS$ invoked by {\it different} dealers.
      Recall that, to maintain the linearity of IC-signatures, it is {\it necessary} that $P_i$ selects the {\it same} set of supporting-verifiers
      $\R$ in {\it all} the instances of  authentication phase involving $P_j$ and $P_k$.
      This is possible {\it only if} $P_i$ 
      knows {\it all} the values on which it wants to generate the IC-signature for $P_j$ 
      and $P_k$ and starts invoking {\it all} the instances of authentication phase.
      Instead, if $P_i$ invokes instances of authentication phase {\it as and when} it has some data to be authenticated for $P_j$ and $P_k$,
      then it {\it may not} be possible to have the {\it same} $\R$ in {\it all} the instances of authentication phase, involving $P_i, P_j$ and $P_k$ in the above roles, especially in an
      {\it asynchronous} network. Since, in $\VSS$, IC-signatures are generated on the supposedly common shares (after receiving them from the underlying dealer) 
      and multiple instances of $\VSS$ are invoked (by {\it different} dealers), 
       this means that $P_i$ should {\it first} have the data from all the dealers for the various instances of $\VSS$ and {\it before} invoking instances of authentication phase to generate
       IC-signatures on these values for $P_j$.
       This {\it may not} be possible, 
       since $P_i$ {\it need not} know {\it beforehand} which dealers it will be receiving shares from as part of $\VSS$. \\[.2cm] 
      \noindent {\bf Way Out.} 
      To deal with the above issue, we now let the dealers {\it publicly} commit their shares for the $\VSS$ instances through {\it secure verifiable multicast} (SVM).
      The primitive allows a designated {\it sender} $\Sender \in \PartySet$ with input $v$ to ``verifiably" send $v$ to a designated set of {\it receivers}  $\ReceiverSet \subseteq \PartySet$, without
      leaking any additional information.
      The verifiability guarantees that even if $\Sender$ is {\it corrupt}, if the non-faulty
       parties in $\ReceiverSet$ get any value from $\Sender$, then it will be {\it common} and {\it all} the (non-faulty) parties in $\PartySet$ will
        ``know" that $\Sender$ has sent some value to $\ReceiverSet$.
         Our instantiation of SVM is very simple: $\Sender$ acts as a dealer and generates $[v]$ through $\VSS$. Once $[v]$ is generated, the parties know
         that $\Sender$ has
        ``committed" to some {\it unknown} value. The next step is to let {\it only} the parties in $\ReceiverSet$ reconstruct $v$.
        
       Using SVM, we now let the various dealers distribute the shares  
         during the underlying instances of $\VSS$ (for $\Rand$) as follows.
        Consider the dealer $P_{\ell}$ who has invoked an instance of $\VSS$ with input $r^{(\ell)}$. For this, it picks 
        random shares
      $r^{(\ell)}_1, \ldots, r^{(\ell)}_{|\Z_s|}$ which sum up to $r^{(\ell)}$.
      Now {\it instead} of directly sending send $r^{(\ell)}_q$ to the parties in $S_q$, it invokes $|\Z_s|$ instances of SVM with input $r^{(\ell)}_1, \ldots, r^{(\ell)}_{|\Z_s|}$ and
      $S_1, \dots, S_{|\Z_s|}$ as the designated set of receivers respectively. 
      This serves {\it two} purposes. First, it guarantees that all the parties in $S_q$ receive a {\it common} share from $P_{\ell}$.
      Second and more importantly, {\it all} the parties in $\PartySet$ will now {\it know} that $P_{\ell}$ has distributed shares to each set from $\ShareSpec_{\Z_s}$.
     The parties then run an instance of ACS to identify a common subset of {\it committed dealers} $\CD \subseteq \PartySet$, where
      $\PartySet \setminus \CD \in \Z_s$, which have invoked the desired instances of SVM and delivered the required shares to each group $S_q \in \ShareSpec_{|\Z_s|}$. 
      The way timeouts are maintained as part of the ACS, it will be ensured that in a {\it synchronous} network, {\it all} {\it non-faulty} dealers are present in $\CD$.
      Once the set $\CD$ is identified, it is guaranteed that every {\it non-faulty} party $P_i$ will have the shares from {\it all} the dealers in $\CD$. And once it has the shares from all the dealers in $\CD$, it starts
      generating the IC-signatures on these shares for the designated parties as part of the $\VSS$ instances corresponding to the dealers in $\CD$ and ensures that all the pre-requisites are satisfied
      to guarantee the linearity of the underlying IC-signatures.
      Now {\it instead} of selecting the set of dealers $\C$ (for $\Rand$) from $\PartySet$, the parties run an instance of ACS over the set of committed dealers $\CD$ to
       select $\C$ where $\CD \setminus \C \in \Z_s$ holds.
       We stress that {\it irrespective} of the network type, the set $\C$ is {\it still} guaranteed to have at least one non-faulty party.
       While this is trivially true in an {\it asynchronous} network where $\Z_a$ satisfies the $\Q^{(1)}(\C, \Z_a)$ condition, the same is true in the {\it synchronous} network because 
       $\CD$ will have {\it all} non-faulty dealers.       \\[.2cm]
      \noindent {\bf Problem II}:
      The second problem (in the proposed $\Rand$) is that      
      the underlying core-sets might be {\it different} for the values shared by the dealers in $\CD$ (and hence $\C$). 
      Instead, we require every dealer in $\CD$ to secret-share random values with {\it common} underlying core-sets. Only then will it be ensured that the random values
      generated through $\Rand$ are secret-shared with common core-sets. \\[.2cm]
      \noindent{\bf Way Out.} Getting rid of the above problem is {\it not} possible if we let every dealer in $\CD$ compute {\it individual} core-sets during their respective instances of $\VSS$, as per the steps of
      $\VSS$. Recall that in $\VSS$, the dealer $\D$ computes the underlying core-sets with respect to the ``first" set of parties $S_p$ from $\ShareSpec_{|\Z_s|}$ which confirm
      the pairwise consistency of their supposedly common share after exchanging IC-signatures on these values.
      As a result, different dealers (in $\Rand$) may end up computing {\it different} core-sets in their instances of $\VSS$ with respect to {\it different}
      candidate $S_p$ sets. 
      To deal with this issue, we {\it instead} let each dealer in $\CD$ {\it continue} computing and publishing {\it different} ``legitimate" core-sets with respect to various ``eligible" candidate 
      $S_p$ sets from $\ShareSpec_{\Z_s}$. The parties run an instance of ACS to identify a {\it common} subset of dealers $\C \in \CD$ where $\CD \setminus \C \in \Z_s$, such that {\it all} the dealers
      have computed and published ``valid" core-sets, computed with the respect to the {\it same} $S_p \in \ShareSpec_{\Z_s}$.
      The idea here is that there always exists a set $\specialS \in \ShareSpec_{\Z_s}$ consisting of {\it only} non-faulty parties.
      So if the set of {\it non-faulty} dealers $\Hon$ in $\CD$ keep computing and publishing {\it all possible} candidate core-sets  in their $\VSS$ instances, then they will
       publish core-sets with respect to $\specialS$. 
      Hence, $\Hon$ and $\specialS$ {\it always} constitute the candidate $\CD$ and the common $S_p$ set.
      
      Note that identifying $\C$ out of $\CD$ through ACS satisfying the above requirements
      is {\it non-trivial} and requires carefully executing
      the underlying instances of BA in  ``two-dimensions".
      We first run $|\Z_s|$ instances of $\BA$, one on the behalf of each set in $\ShareSpec_{\Z_s}$, where the $q^{th}$ instance is executed to decide whether a subset of dealers in $\CD \setminus Z$ for some
      $Z \in \Z_s$
      have published valid core-sets with respect to the set $S_q \in \ShareSpec_{\Z_s}$.
      This enables the parties to identify a {\it common} set $S_{\qcore} \in \ShareSpec_{\Z_s}$, such that it is guaranteed that a subset of 
      dealers in $\CD \setminus Z$ for some
      $Z \in \Z_s$
      have indeed published valid core-sets with respect to the set $S_{\qcore}$.
      Once the set $S_{\qcore}$ is identified, the parties then run $|\CD|$ instances of BA to decide which dealers in $\CD$ have 
      published core-sets with respect to $S_{\qcore}$.
     \subsubsection{Network Agnostic Secure Multiplication}
     \label{ssec:Mult}
    To generate secret-shared random multiplication-triples for evaluating the multiplication gates in $\ckt$ (using Beaver's trick), we need a network agnostic secure multiplication protocol which securely generates
    a secret-sharing of the product of two secret-shared values. 
    The key subprotocol behind our multiplication protocol is a {\it non-robust} multiplication protocol $\BasicMult$ (standing for basic multiplication), which takes inputs 
        $[a]$ and $[b]$ and an {\it existing} set of {\it globally discarded} parties $\Discarded$, which contains only {\it corrupt} parties.
       The protocol securely generates $[c]$  {\it without}
      revealing any additional  information about $a, b$ (and $c$). 
       If {\it no} party in $\PartySet \setminus \Discarded$ cheats, then $c = a \cdot b$ holds.
       The idea behind the protocol is to let each {\it summand} $[a]_p \cdot [b]_q$ be secret-shared by a {\it summand-sharing party}. Then $[a \cdot b]$ can be computed
       from the secret-sharing of each summand, owing to the linearity property. 
           Existing multiplication protocols in the {\it synchronous} and {\it asynchronous} setting \cite{HT13,ACC22c} also use an 
      instantiation of $\BasicMult$, based on the above idea. 
      In the sequel, we recall them, followed by the technical challenges faced in the network agnostic setting and how we deal with them.     
      \paragraph{\bf $\BasicMult$ in the Synchronous Setting with $\Q^{(2)}(\PartySet, \Z_s)$ Condition \cite{HT13}.}
      In \cite{HT13}, each summand $[a]_p \cdot [b]_q$ is {\it statically} assigned to a {\it designated} summand-sharing party through some {\it deterministic} assignment,
       which is possible since
      $[a]_p$ and $[b]_q$ are held by the parties in $(S_p \cap S_q)$. This is {\it non-empty}, since the
      $\Q^{(2)}(\PartySet, \Z_s)$ condition holds.
                 Since the parties in $\Discarded$ are {\it already} known to be corrupted, all the shares $[a]_p, [b]_p$ held by the parties in $\Discarded$
      are {\it publicly reconstructed} and {\it instead} of letting the parties in $\Discarded$ secret-share their assigned summands, the parties take the ``default" secret-sharing
      of these summands. 
        \paragraph{\bf $\BasicMult$ in the Asynchronous Setting with $\Q^{(3)}(\PartySet, \Z_a)$ Condition \cite{ACC22c}.}
       The idea of {\it statically} designating each summand $[a]_p \cdot [b]_q$ to a {\it unique} party in $\PartySet \setminus \Discarded$
        {\it need not} work in the {\it asynchronous} setting, since the designated party may be {\it corrupt} and {\it need not} secret-share any summand, thus resulting in an {\it endless} wait.
        To deal with this challenge, \cite{ACC22c} {\it dynamically} selects the summand-sharing parties for each summand.
        In more detail, let 
        $\Z_a = \{\mathsf{Z}_1, \ldots, \mathsf{Z}_{|\Z_a|} \}$ and $\ShareSpec_{\Z_a} = \{\S_1, \ldots, \S_{|\Z_a|} \}$, where each $\S_r = \PartySet \setminus \mathsf{Z}_r$.
        Since the $\Q^{(3)}(\PartySet, \Z_a)$ condition is satisfied and $\Discarded \in \Z_a$, it follows that $(\S_p \cap \S_q) \setminus \Discarded \neq \emptyset$.
        This implies that there exists at least one non-faulty party in $(\S_p \cap \S_q)$ who can secret-share
        the summand $[a]_p \cdot [b]_q$. Hence, {\it every} party in $\PartySet \setminus \Discarded$ is allowed to secret-share {\it all} the summands it is ``capable" of, with special care taken to ensure
        that each summand $[a]_p \cdot [b]_q$ is considered {\it exactly once}.
        For this, the protocol now proceeds in ``hops", where in each hop all the parties in $\PartySet \setminus \Discarded$ secret-share all the summands they are capable of, {\it but} a {\it single}
        summand sharing party is finally selected for the hop through ACS. Then, all the summands which have been shared by the elected summand-sharing party
        are ``marked" as shared and not considered for sharing in the future hops. Moreover, a party who has already served as a summand-sharing party is {\it not}
         selected in the future hops.
       \paragraph{\bf Technical Challenges in the Network Agnostic Setting.}
       The {\it asynchronous} $\BasicMult$ based on {\it dynamically} selecting summand-sharing parties
         will {\it fail} in the {\it synchronous} network, since the $\Q^{(3)}$ condition {\it need not} be satisfied.
       On the other hand,  {\it synchronous} $\BasicMult$ based on {\it statically} selecting summand-sharing parties will {\it fail} if a designated summand-sharing party {\it does not} secret-share the required summands,
       resulting in an endless wait.
       The way out is to select summand-sharing parties in {\it three} phases. We 
       {\it first} select summand-sharing parties dynamically in {\it hops}, following the approach of \cite{ACC22c}, till we find a subset of parties from $\ShareSpec_{\Z_s}$ which have shared {\it all} the summands 
       they are capable of. Then in the {\it second} phase, the {\it remaining} summands which are {\it not} yet secret-shared are {\it statically} assigned and shared by the respective designated summand-sharing
       parties. To avoid an endless wait in this phase,  
        the parties wait only for a ``fixed" time required for the parties to secret-share the assigned summands (corresponding to the time taken in a {\it synchronous}
       network) and run instances of BA to identify which of the designated summand-sharing parties have shared their summands up during the second phase.
       During the {\it third} phase, any ``leftover" summand which is {\it not yet} shared is {\it publicly} reconstructed by reconstructing the
       corresponding shares and a default sharing is taken for such summands.

       The idea here is the following: {\it all} non-faulty parties will 
        share the summands which are assigned to them, either statically or dynamically, {\it irrespective} of the network type.
        Consequently, the first phase will be always over, since the set consisting of {\it only} non-faulty parties always constitutes a candidate
        set of summand-sharing parties which the parties look for to complete of the first phase.
        Once the first phase is over, the second phase is bound to be over since the parties wait {\it only} for a fixed time.
        The third phase is always bound to be over, once the first two phases are over, since it involves publicly reconstructing the leftover summands.
        The way summands are assigned across the three phases, it will be always guaranteed that every summand is considered for sharing once in exactly one of the three phases and no summand will be left out.
         The crucial point here is that the 
        the shares held {\it only} by the non-faulty parties {\it never} get publicly reconstructed, thus guaranteeing that the adversary {\it does not} learn any additional information about $a$ and $b$. 
        This is obviously true in a {\it synchronous} network because  we always have the {\it second} phase where every non-faulty party 
         who is {\it not} selected as a summand-sharing party during the first phase 
        will get the opportunity
        to secret-share its assigned summands. 
        On the other hand, in an {\it asynchronous} network, it can be shown that {\it all} the summands which involve any share held by the non-faulty parties would have been secret-shared  
        during the {\it first} phase itself. In more detail, let $Z^{\star} \in \Z_a$ be the set of {\it corrupt} parties and let $\Hon = \PartySet \setminus Z^{\star}$ be the set of {\it honest} parties. 
        Moreover, let $S_h \in \ShareSpec_{\Z_s}$ be the group consisting of {\it only} non-faulty parties which 
        hold the shares $[a]_h$ and $[b]_h$.
       Consider an {\it arbitrary} summand $[a]_h \cdot [b]_q$. Suppose the first phase gets over because 
       every party in $S_{\ell} \in \ShareSpec_{\Z_s}$
        has been selected as a summand-sharing party during the {\it first} phase.         
        Then consider the set $(S_{\ell} \cap \Hon  \cap S_q)$, which is {\it not} empty due to
       the $\Q^{(2, 1)}(\PartySet, \Z_s, \Z_a)$ condition. 
       Hence, there exists {\it some} 
      $P_j \in (\Hon \cap S_{\ell} \cap S_q)$,
      who would have shared $[a]_h \cdot [b]_q$ when selected as a summand-sharing party during some hop in the first phase. 
      Due to a similar reason, any summand of the form $[a]_q \cdot [b]_h$ would have been secret-shared during the first phase itself. 
     \subsection{Other Related Works}
     The domain of network agnostic cryptographic protocols is relatively new and almost all the existing works have considered {\it threshold} adversaries. 
       The work of  \cite{BKL21} presents a network agnostic {\it cryptographically-secure} atomic broadcast protocol. 
     The work of \cite{MO21} studies 
     Byzantine fault tolerance and state machine replication protocols for multiple thresholds, including $t_s$ and $t_a$. 
     The work of \cite{GLW22} presents a network agnostic protocol for the task of approximate agreement using the condition
     $2t_s + t_a < n$. The same condition has been used to design a network agnostic distributed key-generation (DKG) protocol in \cite{BCLL22}. 
           A recent work \cite{DL22} has studied the problem of network agnostic perfectly-secure message transmission (PSMT) \cite{DDWY93} over {\it incomplete} graphs.
     \subsection{Open Problems}
     There are several interesting directions to explore for network agnostic MPC protocols. Here we mention few of them.
     It is {\it not} known whether the condition $3t_s + t_a < n$ (resp.~$\Q^{(3, 1)}(\PartySet, \Z_s, \Z_a)$)
     is {\it necessary} for the network agnostic MPC with {\it perfect} security against threshold (resp.~non-threshold) adversary.
     An abundant amount of research effort has been spent to improve both the theoretical as well as practical efficiency of (unconditionally-secure) SMPC and AMPC protocols.
     The works of \cite{ACC22a,ACC22b} and this work just focus on the {\it possibility} of unconditionally-secure network agnostic MPC. 
     Upgrading the efficiency of these protocols to those of state of the art SMPC and AMPC protocols seems to require a significant research effort.
     Even though the complexity of our MPC protocol is polynomial in $n$ and $|\Z_s|$, when instantiated for {\it threshold} adversaries (where $|\Z_s|$ has all subsets of $\PartySet$ of size up to $t_s$),
     it may require an {\it exponential} (in $n$) amount of computation and communication. This is {\it unlike} the case for {\it perfect security}, where we have a network agnostic MPC 
     protocol against threshold adversaries
     with a complexity polynomial (in $n$) \cite{ACC22a}. Hence, designing network agnostic MPC protocol against {\it threshold} adversaries with {\it statistical} security and {\it polynomial} complexity is left as a challenging open 
     problem.     

\section{Preliminaries and Definitions}
\label{sec:prelims}
We assume the {\it pair-wise secure channel} model, where 
 the parties in  $\PartySet$ are assumed to be connected 
 by pair-wise secure channels.
    The underlying communication network can be either synchronous or asynchronous, with parties
     being {\it unaware} about the exact network type.
   In a {\it synchronous} network, every message sent is delivered within a {\it known} time
   $\Delta$. 
     In an {\it asynchronous} network, messages can be delayed arbitrarily, but finitely, with every message sent  
      being delivered {\it eventually}. The distrust among $\PartySet$ is modelled by a {\it malicious} (Byzantine)
      adversary $\Adv$, who can corrupt a subset of the parties in $\PartySet$ and force them to behave in any 
      arbitrary fashion during the execution of a protocol.
      The parties {\it not} under the control of $\Adv$ are called {\it honest}.
      We assume
      the adversary to be {\it static}, who
      decides the set of corrupt parties at the beginning of the protocol execution. 
      As our main goal is to show the possibility of statistically-secure network agnostic MPC, we keep the formalities to a bare minimum and prove 
      the security of our protocols using the {\it property-based} 
      definition, by listing the security properties achieved by our protocols. However, 
      our protocols can also be proven to be secure using the more rigorous
      {\it Universal Composability} (UC) definitional framework \cite{Can01}, without affecting their 
      efficiency.

 Adversary $\Adv$ can corrupt any one subset of parties from $\Z_s$ and $\Z_a$ in {\it synchronous}
   and {\it asynchronous} network respectively. 
     The adversary structures are {\it monotone}, implying that if $Z \in \Z_s$ ($Z \in \Z_a$ resp.), then every subset of $Z$ also belongs to
      $\Z_s$ (resp.~$\Z_a$). 
      We assume that $\Z_s$ and $\Z_a$ satisfy the conditions $\Q^{(2)}(\PartySet, \Z_s)$ and $\Q^{(3)}(\PartySet, \Z_a)$ respectively, which are {\it necessary} for
      statistically-secure MPC in the synchronous and asynchronous network respectively.
      Additionally, we assume that $\Z_a \subset \Z_s$. Moreover, 
      $\Z_s$ and $\Z_a$ satisfy the $\Q^{(2, 1)}(\PartySet, \Z_s, \Z_a)$ condition.
      
      In our protocols, 
   all computations are done over a finite field $\F$, where $|\F| > n^5 \cdot 2^{\ssec}$ and $\ssec$ is the underlying statistical security parameter.
   Looking ahead, this will ensure that the error probability in our MPC protocol is upper bounded by $2^{-\ssec}$. 
    Without loss of generality, we assume that each $P_i$ has an input $x_i \in \F$,
      and the parties want to securely compute a function $f:\F^n \rightarrow \F$, 
     represented by an arithmetic circuit $\ckt$ over $\F$, 
      consisting of linear and non-linear (multiplication) gates, where
      $\ckt$ has $c_M$ multiplication gates and a multiplicative depth of $D_M$. 
      
        We assume the existence of an {\it unconditionally-secure public-key infrastructure} (PKI),
        for an unconditionally-secure signature scheme, also called {\it pseudo-signature}  \cite{PW96,FitziThesis}.
        We briefly explain the requirements from such a setup and refer to \cite{FitziThesis} for complete formal details.
        There exists a {\it publicly known} vector of public keys $(pk_1, \ldots, pk_n)$, where 
         each {\it honest}  $P_i$ holds the generated secret key $sk_i$, associated with $pk_i$.\footnote{Corrupt parties may choose their keys arbitrarily.}
          A {\it valid} signature $\tau$ on message $m$ from $P_i$ is one for which $\mathsf{Verify}_{pk_i}(m, \tau) = 1$, where
          $\mathsf{Verify}$ is the {\it verification function} of the underlying signature scheme. 
           For simplicity, we make the standard convention of treating signatures as {\it idealized} objects
           during our protocol analysis; i.e., we assume that the signatures are 
           {\it  perfectly unforgeable} and hence $\Adv$ will fail to forge signature of an honest party on any message, which is not signed by the party.
            We also assume that the signatures are {\it transferable} and any party upon receiving a valid signature from a party can send and get it verified by any other party.
            However, {\it unlike} the standard digital signatures which are {\it computationally-secure} and offer arbitrary number of transfers, pseudo-signatures offer a ``limited"
            number of transfers, which is typically bounded as a function of the number of parties in the protocol where pseudo-signature is used as a primitive.
            We assume that the given setup supports the required number of transfers demanded by our protocols. We use $|\sigma|$ to represent the size of a pseudo-signature in bits.
            If $P_i$ signs a message $m$, then we denote the resultant signed message as $\sign{m}_{i}$. \\[.2cm]
\noindent{\bf  Termination Guarantees of Our Sub-Protocols:}
 As done in \cite{ACC22a,ACC22c},  for simplicity, we will {\it not} be specifying any {\it termination} criteria
 for our sub-protocols. The parties will keep on participating in these sub-protocol instances even {\it after} computing
 their outputs. The termination criteria of our MPC protocol will ensure the 
 termination of {\it all} underlying sub-protocol instances. We will be using an existing {\it randomized} ABA protocol \cite{Cho23}
  which ensures that the
 honest parties (eventually) obtain their respective output {\it almost-surely} with probability $1$.
 This means that the probability that an honest party obtains its output after participating for {\it infinitely} many rounds approaches $1$ {\it asymptotically} \cite{ADH08,MMR15,BCP20}.
   That is:
   \[ \underset{T \rightarrow \infty}{\mbox{lim}} \mbox{Pr}[\mbox{An honest } P_i \mbox{ obtains its output by local time } T] = 1,\]
   where the probability is over the random coins of the honest parties and the adversary in the 
   protocol. 
  The  property of {\it almost-surely} obtaining the output carries over to the ``higher" level protocols, where 
  ABA is used as a building block. 
  We will say that the ``{\it honest parties obtain some output almost-surely from protocol $\Pi$}" to mean that
  every {\it honest} $P_i$ {\it asymptotically} obtains its output in $\Pi$ with probability $1$, in the above sense.

\section{Network Agnostic Unconditionally Secure Byzantine Agreement}
\label{sec:BA}
 We recall the definition of BA from \cite{ACC22b}, which is 
  adapted from \cite{BKL19,ACC22a}.
\begin{definition}[{\bf BA}]
\label{def:BA}
Let $\Pi$ be a protocol for $\Partyset$, where every party $P_i$ has an input $b_i \in \{0, 1\}$ and a possible output from  $\{0, 1, \bot \}$.
 Moreover, let $\Adv$ be a computationally-unbounded adversary, characterized by adversary structure $\AdvStructure$, where $\Adv$ can corrupt
  any subset of parties from $\AdvStructure$ during the execution of $\Pi$.
     \begin{myitemize}
    \item[--] {\bf $\AdvStructure$-Guaranteed Liveness}: $\Pi$ has $\AdvStructure$-guaranteed liveness
    if all honest parties obtain an output.
       \item[--] {\bf $\AdvStructure$-Almost-Surely Liveness}: $\Pi$ has $\AdvStructure$-almost-surely liveness
       if, almost-surely, all honest parties obtain some output.
      \item[--] {\bf $\AdvStructure$-Validity}: $\Pi$ has $\AdvStructure$-validity if the following holds:
      If all honest parties have input $b$, then every honest party with an output, outputs $b$. 
      \item[--] {\bf $\AdvStructure$-Weak Validity}: $\Pi$ has $\AdvStructure$-weak validity if the following holds:
      If all honest parties have input $b$, then every honest party with an output, outputs $b$ or $\bot$.      
      \item[--] {\bf $\AdvStructure$-Consistency}: $\Pi$ has $\AdvStructure$-consistency
      if all honest parties with an output, output the same value (which can be $\bot$).
      \item[--] {\bf $\AdvStructure$-Weak Consistency}: $\Pi$ has $\AdvStructure$-weak consistency
      if all honest parties with an output, output either a common $v \in \{0, 1 \}$ or $\bot$.
      \end{myitemize}
  $\Pi$ is called a {\it $\AdvStructure$-secure synchronous BA} (SBA) if, in a {\it synchronous} network, it achieves
    {\it $\AdvStructure$-guaranteed liveness}, {\it $\AdvStructure$-validity}, and {\it $\AdvStructure$-consistency}.
   $\Pi$ is called a {\it $\AdvStructure$-secure 
  asynchronous BA} (ABA) if, in an {\it asynchronous network} it has 
  {\it $\AdvStructure$-almost-surely liveness},  {\it $\AdvStructure$-validity} and
   {\it $\AdvStructure$-consistency}.\footnote{The seminal FLP impossibility result \cite{FLP85} rules out the possibility of {\it any} deterministic ABA, where there always exists a ``bad" execution in which
   the honest parties may keep on running the protocol forever, without obtaining any output. To circumvent this, one can opt for randomized ABA protocols and hope that the bad executions occur asmptotically
   with probability $0$.}
 \end{definition}
 To design our network agnostic BA protocol, we will be using a special type of broadcast protocol. We next review the definition of broadcast
     from \cite{ACC22b} which is further
   adapted from \cite{BKL19,ACC22a}.
\begin{definition}[{\bf Broadcast}]
\label{def:BCAST}
Let $\Pi$ be a protocol, where a designated sender $\Sender \in \Partyset$ has input $m \in \{0, 1\}^{\ell}$, and parties obtain a possible output,
  including $\bot$.
  Moreover, let $\Adv$ be a computationally-unbounded adversary, characterized by an adversary structure $\AdvStructure$, where $\Adv$ can corrupt
  any subset from $\AdvStructure$ during $\Pi$.
   \begin{myitemize}
   \item[--] {\bf $\AdvStructure$-Liveness}: $\Pi$ has $\AdvStructure$-liveness if 
    all honest parties obtain some output.
    \item[--] {\bf $\AdvStructure$-Validity}: $\Pi$ has $\AdvStructure$-validity if the following holds:
    if $\Sender$ is {\it honest}, then every honest party with an output, outputs $m$.
     \item[--] {\bf $\AdvStructure$-Weak Validity}: $\Pi$ has $\AdvStructure$-weak validity if the following holds: 
      if $\Sender$ is {\it honest}, then every honest party with an output, outputs either $m$ or $\bot$.
     \item[--] {\bf $\AdvStructure$-Consistency}: $\Pi$ has $\AdvStructure$-consistency if the following holds:
     if $\Sender$ is {\it corrupt}, then every honest party with an output, outputs a common value.
     \item[--] {\bf $\AdvStructure$-Weak Consistency}: $\Pi$ has $\AdvStructure$-weak consistency if the
     following holds: if $\Sender$ is {\it corrupt}, then every honest party with an output, outputs a common
      $m^{\star} \in \{0, 1 \}^{\ell}$ or $\bot$.
  \end{myitemize}
$\Pi$ is called a {\it $\AdvStructure$-secure broadcast} protocol if it has {\it $\AdvStructure$-Liveness}, 
{\it $\AdvStructure$-Validity}, and {\it $\AdvStructure$-Consistency}.
 \end{definition}

We next recall a blueprint for network agnostic BA from \cite{BKL19,ACC22a}.
 \subsection{A Blueprint for Network Agnostic BA \cite{BKL19,ACC22a}}
  To design our network agnostic BA, we {\it assume} the existence of the following sub-protocols.
  \begin{myitemize}
  \item[--] {\it Synchronous BA with Asynchronous Guarantees}: we assume the existence of a protocol $\SBA$, which is a $\Z_s$-secure SBA and which has $\Z_a$-weak validity and
   $\Z_a$-guaranteed liveness in the {\it asynchronous} network. At (local) time $\TimeSBA$, all
   honest parties will have an output (which could be $\bot$), {\it irrespective} of the network type. 
  \item[--]  {\it Asynchronous BA with Synchronous Guarantees}: we assume the existence of a protocol $\ABA$, which is a $\Z_a$-secure ABA. Moreover, in a {\it synchronous}
  network, the protocol has $\Z_s$-validity, with all honest parties computing their output within time $\TimeABA = c \cdot \Delta$ in this case, for some {\it known constant} $c$.\footnote{Thus $\ABA$ has
   $\Z_s$-guaranteed liveness in the {\it synchronous} network,
    {\it if} all honest parties have the {\it same} input. However, if the honest parties have {\it different} inputs, then $\ABA$ {\it need not} provide guaranteed liveness or consistency guarantees in
    the {\it synchronous} network.} 
  \end{myitemize}
   Based on $\SBA$ and $\ABA$, one can design a network agnostic
    BA protocol $\BA$ as follows, following the blueprint of \cite{BKL19,ACC22a}. The parties first invoke an instance of $\SBA$, {\it assuming} a {\it synchronous} network.
   If the network is indeed synchronous, then the (honest) parties should have a {\it binary} output at time $\TimeSBA$. The parties check the same
   and either switch their input to the output of $\SBA$, if it is {\it not} $\bot$, or stick to their original input. The parties then invoke an instance of $\ABA$ with ``updated" inputs and the output of
   $\ABA$ is set to be the overall output. The description of $\BA$, taken from \cite{ACC22a}, is presented in 
   Fig \ref{fig:BA}.
   \begin{protocolsplitbox}{$\BA$}{Network agnostic BA protocol from $\SBA$ and $\ABA$. The above code is executed
 by $P_i$.}{fig:BA}
\justify
\begin{myitemize}
\item[--] On having input $b_i$, participate in an instance of $\SBA$ with input $b_i$ and {\color{red} wait till the local time becomes
  $\TimeSBA$}. Let $v_i$ be the output from $\SBA$
  at time $\TimeSBA$.
 If $v_i \neq \bot$, then set $v_i^{\star} = v_i$.  Else set $v_i^{\star} = b_i$.   
\item[--] Participate in an instance of $\ABA$ with input $v_i^{\star}$. Output the result of $\ABA$, when it is available.
\end{myitemize}
\end{protocolsplitbox}

Theorem \ref{thm:BA}, follows from \cite{ACC22a}, given that $\SBA$ and $\ABA$ achieve the stated properties. For completeness, the theorem is proved in Appendix \ref{app:BA}.
\begin{theorem}
\label{thm:BA}
  Let $\Z_a \subset \Z_s$ such that $\Z_s$ and $\Z_a$ satisfy the 
   conditions $\Q^{(2)}(\PartySet, \Z_s)$ and $\Q^{(3)}(\PartySet, \Z_a)$ respectively.
   Moreover, let
      $\Z_s$ and $\Z_a$ satisfy the $\Q^{(2, 1)}(\PartySet, \Z_s, \Z_a)$ condition.
   Then protocol $\BA$ achieves the following.\footnote{As the number of invocations of $\BA$ in our MPC protocol will be {\it independent} of
  $|\ckt|$, we {\it do not} focus on its {\it exact} complexity. However, we confirm that it will be polynomial in $n$ and $|\Z_s|$.}
\begin{myitemize}
\item[--] {\bf Synchronous Network}: the protocol is a $\Z_s$-secure SBA, where all honest parties get their output at time $\TimeBA = \TimeSBA + \TimeABA$.
\item[--] {\bf Asynchronous Network}: the protocol is a $\Z_a$-secure ABA.
\end{myitemize}
\end{theorem}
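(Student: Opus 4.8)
The plan is to treat $\SBA$ and $\ABA$ purely as black boxes with the properties stated in the blueprint above; the $\Q^{(2)}(\PartySet,\Z_s)$, $\Q^{(3)}(\PartySet,\Z_a)$ and $\Q^{(2,1)}(\PartySet,\Z_s,\Z_a)$ hypotheses enter only insofar as they are what guarantee (via the constructions of Section~\ref{sec:BA}) that $\SBA$ and $\ABA$ with those properties exist, so the composition argument itself is essentially the one from \cite{BKL19,ACC22a}. I would analyse $\BA$ separately in a synchronous and in an asynchronous network, in each case establishing liveness/termination first, then consistency, then validity.

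For the synchronous network, I would first prove the key sub-claim that every honest party feeds the \emph{same} value into the $\ABA$ instance. All honest parties obtain their $\SBA$-output by local time $\TimeSBA$; since $\SBA$ is a $\Z_s$-secure SBA, in a synchronous network it behaves as an ordinary Byzantine agreement, so by $\Z_s$-consistency all honest parties hold a common output $v$, and this $v$ is a bit (the ``$v_i=\bot$, revert to $b_i$'' branch is never taken in a synchronous network). Hence every honest party sets $v_i^\star = v$ and invokes $\ABA$ on the common input $v$. By $\Z_s$-validity of $\ABA$ in a synchronous network --- which, as the footnote observes, additionally yields guaranteed liveness once all honest inputs coincide --- every honest party outputs $v$ within time $\TimeABA$. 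This gives $\Z_s$-guaranteed liveness with output time $\TimeBA = \TimeSBA + \TimeABA$ and $\Z_s$-consistency. For $\Z_s$-validity I would note that if all honest parties start with a common bit $b$, then $\Z_s$-validity of $\SBA$ forces $v = b$, so every honest party inputs $b$ to $\ABA$ and, by $\Z_s$-validity of $\ABA$, outputs $b$.

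For the asynchronous network, $\Z_a$-guaranteed liveness of $\SBA$ gives that every honest party eventually obtains an $\SBA$-output and hence starts $\ABA$ with some input, so $\Z_a$-almost-surely liveness of $\ABA$ transfers to $\BA$; and $\Z_a$-consistency of $\BA$ is immediate from $\Z_a$-consistency of $\ABA$. The only delicate point is $\Z_a$-validity: assuming all honest parties have input $b$, each honest party sets $v_i^\star$ either to $v_i \ne \bot$ or to its own input $b$, and in the first case $\Z_a$-weak validity of $\SBA$ forces $v_i \in \{b,\bot\}$, hence $v_i = b$; either way every honest party feeds $b$ to $\ABA$, and $\Z_a$-validity of $\ABA$ delivers output $b$.

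I expect the main obstacle to be the synchronous sub-claim that the fallback branch is never exercised --- equivalently, that in a synchronous network $\SBA$ terminates with a non-$\bot$ value common to all honest parties --- because otherwise honest parties could hand $\ABA$ distinct inputs, and $\ABA$ forfeits its consistency and liveness guarantees in a synchronous network when inputs disagree. This is precisely the role played by the ``synchronous guarantees'' baked into $\SBA$. The analogous, milder subtlety on the asynchronous side is invoking $\Z_a$-weak validity of $\SBA$ to rule out a bad binary output $1-b$ being passed on to $\ABA$.
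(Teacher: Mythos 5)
Your proof is correct and follows essentially the same composition argument that the paper attributes to \cite{ACC22a}, correctly isolating the two load-bearing facts: that $\SBA$'s synchronous output is common and non-$\bot$ (so all honest parties hand $\ABA$ the same bit, which is the only regime in which $\ABA$'s synchronous guarantees apply), and that $\SBA$'s asynchronous weak validity prevents any honest party from carrying $1-b$ into $\ABA$. No gap.
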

   
   We now proceed to instantiate protocols $\SBA$ and $\ABA$.
  \subsection{$\SBA$: Synchronous BA with Asynchronous Weak Validity and Guaranteed Liveness}
   To design protocol $\SBA$, we again follow the blueprint of \cite{ACC22a}, which design $\SBA$ based on {\it three} components.
    \subsubsection{SBA with Asynchronous Guaranteed Liveness}
    The {\it first} component for designing $\SBA$ is an SBA protocol, which has {\it just} guaranteed liveness in an {\it asynchronous} network. 
    In \cite{PW96}, the authors have presented an SBA protocol against {\it threshold} adversaries, tolerating up to $t_s < n/2$ faults.  
  The protocol which we denote as $\PiPW$, modifies the
  Dolev-Strong BA protocol \cite{DS83}, by {\it replacing} digital signatures with
  pseudo-signatures. We note that $\PiPW$ can be easily generalized, if $\Z_s$ satisfies the $\Q^{(2)}(\PartySet, \Z_s)$ condition. 
   To achieve guaranteed liveness in an {\it asynchronous} network, the parties run the protocol till the supposed timeout in the synchronous network and check if any output is computed
   and in case no output is computed, $\bot$ is taken as the output. 
  Since the protocol is an easy generalization of the existing protocol against threshold adversaries,
  protocol $\PiPW$ and proof of Lemma \ref{lemma:PW} are available in Appendix \ref{app:BA}.
     \begin{lemma}
 \label{lemma:PW}
   Protocol $\PiPW$ achieves the following, where 
    $t$ is the cardinality of the maximum-sized subset in $\Z_s$. 
 \begin{myitemize}
 \item[--] {\bf Synchronous Network}: The protocol is a $\Z_s$-secure SBA protocol, where all honest parties compute their output at time $\TimePW \defined (t + 1) \cdot \Delta$. 
 \item[--] {\bf Asynchronous Network}: The protocol achieves $\Z_a$-guaranteed liveness, where all honest parties compute their output at time $\TimePW$. 
 \item[--] {\bf Communication Complexity}: $\Order(n^4 \cdot \ell \cdot |\sigma|)$ bits are sent by the honest parties, if the inputs of the parties are of size $\ell$ bits.
 \end{myitemize}
 \end{lemma}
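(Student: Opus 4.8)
The plan is to adapt the classical analysis of Dolev--Strong authenticated broadcast to pseudo-signatures and to the general-adversary setting, and to fold in the $\Z_s$-based decision rule that upgrades broadcast to agreement under the $\Q^{(2)}(\PartySet,\Z_s)$ condition. Recall that $\PiPW$ runs $t+1$ synchronized rounds, each of duration $\Delta$, where $t$ is the cardinality of the largest set in $\Z_s$; so, timing off a local clock, every honest party halts at local time $\TimePW = (t+1)\cdot\Delta$ regardless of the network. Pseudo-signatures are treated as idealized, perfectly unforgeable objects exactly as in Section~\ref{sec:prelims}; I would first observe that every signature chain produced in $\PiPW$ is relayed at most $\Order(n)$ times, so the transfer budget assumed of the pseudo-signature setup is respected. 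It then remains to establish validity and consistency in a synchronous network, guaranteed liveness in both network types, and the stated communication bound.

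For \textbf{validity} in a synchronous network, suppose all honest parties hold the same input $b$. View $\PiPW$ as running a Dolev--Strong broadcast in parallel with each party acting as the sender, so that every honest party ends the protocol with the same vector $(v_1, \ldots, v_n)$, where $v_i$ is the broadcast output for sender $P_i$ (with $v_i = \bot$ when that broadcast is inconclusive, i.e.\ zero or two accepted values). For an \emph{honest} sender $P_i$, its signed input $b$ reaches every honest party in round $1$, and by unforgeability no chain carrying $P_i$'s signature on $\bar b$ can ever be formed; hence $v_i = b$. Writing $Z^\star \in \Z_s$ for the set of corrupt parties, this gives $\{\,i : v_i \neq b\,\} \subseteq Z^\star \in \Z_s$, while $\{\,i : v_i = b\,\} \supseteq \PartySet \setminus Z^\star$ cannot belong to $\Z_s$, since otherwise its union with $Z^\star$ would be all of $\PartySet$, contradicting $\Q^{(2)}(\PartySet,\Z_s)$. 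The decision rule --- output $b'$ iff $\{\,i : v_i \neq b'\,\} \in \Z_s$ and $\{\,i : v_i = b'\,\} \notin \Z_s$, and $\bot$ if no such $b'$ exists --- therefore outputs $b$ at every honest party. This is precisely where $\Q^{(2)}(\PartySet,\Z_s)$ enters.

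For \textbf{consistency} in a synchronous network, since the decision rule is deterministic it suffices to show that, for each fixed sender, all honest parties end up with the same set of accepted values; this is the standard Dolev--Strong chain argument. If an honest $P_i$ first accepts a value $v$ in some round $r \le t$, it appends its signature and forwards $v$, so every honest party accepts $v$ by round $r+1 \le t+1$. If instead $P_i$ first accepts $v$ only in the last round $t+1$, the chain it received carries $t+1$ pairwise-distinct valid signatures; as the corrupt set lies in $\Z_s$ and therefore has size at most $t$, one of those signers is honest, and an honest signer can only have appended its signature in a round $\le t$ (forwarding does not occur in the final round), so it accepted and forwarded $v$ no later than round $t$ --- which reduces to the previous case. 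Hence all honest parties share the same accepted set for each sender and, applying the common decision rule, produce the same output. Guaranteed liveness in the synchronous network is immediate, the protocol being a fixed number of rounds ending in a rule that always returns a value; and $\Z_a$-guaranteed liveness in an asynchronous network is equally immediate, since an honest party simply runs its $t+1$ local rounds, halts at local time $\TimePW$, and outputs the value given by the decision rule on its (possibly partial) accepted sets, defaulting to $\bot$ --- no validity or consistency being claimed, or needed, in that case.

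Finally, for \textbf{communication complexity}: there are $\Order(n)$ rounds; per sender-instance a party appends its signature to and forwards at most $\Order(1)$ newly-accepted $\ell$-bit values, each carried by a chain of $\Order(n)$ pseudo-signatures of $|\sigma|$ bits, sent to all $n$ parties, for a cost of $\Order(n\ell + n^2|\sigma|)$ bits; summing over the (at most $n$) honest parties and the $n$ sender-instances, and bounding $n\ell + n^2|\sigma|$ crudely by $\Order(n^2 \cdot \ell \cdot |\sigma|)$, yields $\Order(n^4 \cdot \ell \cdot |\sigma|)$ bits overall. The main work is not any single hard step but the general-adversary bookkeeping: verifying that $t+1$ rounds --- with $t$ the \emph{maximum} cardinality of a set in $\Z_s$ --- still suffice for the ``an honest signer appears in every chain of length $t+1$'' step of the consistency proof, and shaping the $\Z_s$-based decision rule so that $\Q^{(2)}(\PartySet,\Z_s)$ is exactly the hypothesis making it satisfy validity, while consistency follows for free from broadcast agreement. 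Both are routine generalizations of the threshold argument, which is why the full details are deferred to Appendix~\ref{app:BA}.
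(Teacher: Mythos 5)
Your proof takes essentially the same route as the paper's: it treats $\PiPW$ as $n$ parallel Dolev--Strong broadcast instances, uses unforgeability of pseudo-signatures plus the standard ``an honest signer appears in every length-$(t+1)$ chain'' argument for per-sender agreement, and then invokes $\Q^{(2)}(\PartySet,\Z_s)$ to show that the $\Z_s$-based decision rule returns the honest common input; liveness and the $\Order(n^4\cdot\ell\cdot|\sigma|)$ bound are derived the same way. The only cosmetic difference is that you state the decision rule with an extra clause ($\{i:v_i=b'\}\notin\Z_s$) that the paper's rule does not write out but which, as you effectively observe, is automatic under $\Q^{(2)}(\PartySet,\Z_s)$, so the argument is unchanged.
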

    \subsubsection{Asynchronous Broadcast with Synchronous Guarantees}
  The {\it second} component for designing $\SBA$ is an {\it asynchronous} broadcast protocol $\Acast$ (also called Acast), which provides
   {\it liveness, validity} and a ``variant" of consistency in a {\it synchronous} 
   network. The variant guarantees that if $\Sender$ is {\it corrupt}
   and the honest parties compute an output in a {\it synchronous} network, then they {\it may not} do it at the {\it same} time
   and there might be a {\it gap} in the time at which the honest parties compute an output.
   In \cite{KF05,ACC22b}, an instantiation of $\Acast$ is provided against 
    $\Q^{(3)}$ adversary structures.\footnote{The protocol is a generalization of the classic Bracha's {\it threshold} Acast protocol \cite{Bra84}, tolerating
    $t < n/3$ corruptions.} 
    Unfortunately, the protocol {\it fails} to provide {\it any} security guarantees in a {\it synchronous} network against $\Q^{(2)}$ 
   adversary structures. So we provide a {\it different} instantiation of Acast for our setting (see Fig \ref{fig:Acast}). The protocol is obtained by generalizing the ideas used in the 
   broadcast protocol of \cite{MR21}. The protocol of \cite{MR21} uses a computational PKI, which we replace with
   an unconditional PKI. The protocol consists of three phases and each (honest) party executes a phase {\it at most} once.
  \begin{protocolsplitbox}{$\Acast(\Sender, m, \Z_s, \Z_a)$}{Asynchronous broadcast with synchronous guarantees. The above code is executed by each
    $P_i \in \PartySet$ including the designated sender $\Sender$}{fig:Acast}
  Each $P_i \in \PartySet$ executes each of the following phases {\it at most} once.
  \begin{myitemize}
  \item {\bf (Propose):} If $P_i = \Sender$, then on having the input $m$, send
    $\sign{(\propose, m)}_{\Sender}$ to all the parties.
  \item {\bf (Vote):} Upon receiving the {\it first} $\propose$ message
  $\sign{(\propose, m)}_{\Sender}$ 
    from $\Sender$ with valid signature, send $\sign{(\propose, m)}_{\Sender}$ to all the parties
    and
    \textcolor{red}{wait till the local time increases by $\Delta$}. If
    $\sign{(\propose, m')}_{\Sender}$ is not received from any party where $m' \neq m$, then send
    $\sign{(\vote, m)}_i$ to all the parties.
  \item {\bf (Output):} Upon receiving a $\vote$ message $\sign{(\vote, m)}_{j}$ with valid
    signature corresponding to every $P_j \in \PartySet \setminus Z$ for some $Z \in \Z_s$, do the following:
     \begin{myitemize}
     \item[--] Let $\cert{m}$ denote the collection of signed $\sign{(\vote, m)}_{j}$ messages.
     \item[--] Send $\cert{m}$ to all the parties and output $m$.
     \end{myitemize}
  \end{myitemize}
\end{protocolsplitbox}

The proof of Lemma \ref{lemma:Acast} is available in Appendix \ref{app:BA}.
 \begin{lemma}
 \label{lemma:Acast}
Protocol $\Acast$ achieves the following properties.
 \begin{myitemize}
  \item[--] Asynchronous Network: The protocol is a $\Z_a$-secure broadcast protocol. 
  \item[--] Synchronous Network: {\bf (a) $\Z_s$-Liveness}: If $\Sender$ is honest, then all honest parties obtain an output
    within time $3\Delta$. {\bf (b) $\Z_s$-Validity}: If $\Sender$ is honest, then every honest party with an output, outputs $m$.
  {\bf (c) $\Z_s$-Consistency}: If $\Sender$ is corrupt and some honest party outputs $m^{\star}$ at time $T$,
   then every honest $P_i$ outputs $m^{\star}$ by the end of time $T+ \Delta$.
   \item[--] Communication Complexity: $\Order(n^3 \cdot \ell \cdot |\sigma|)$ bits are communicated by the honest parties, where
    $\ell$ is the size of $\Sender$'s input.
 \end{myitemize}
 \end{lemma}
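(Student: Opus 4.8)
The plan is to verify each listed property of $\Acast$ separately, in every case leaning on the unforgeability of the pseudo-signatures that $\Sender$ places on its proposals, together with either the $\Q^{(2)}(\PartySet, \Z_s)$ condition or the $\Q^{(2,1)}(\PartySet, \Z_s, \Z_a)$ condition to guarantee that certain party-sets intersect the honest parties. The two conceptually distinct ingredients are a \emph{timing} argument (which yields the synchronous guarantees, relying on the re-broadcast of the proposal followed by the $\Delta$-wait in the Vote phase) and a \emph{quorum-intersection} argument (which yields the asynchronous guarantees).

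For an \emph{honest} $\Sender$ with input $m$, I would first note that no honest party ever sees a conflicting $\sign{(\propose, m')}_{\Sender}$, since only $\Sender$ can produce such a signature and it never does. In a synchronous network this lets me track time explicitly: $\Sender$'s proposal reaches every honest party by time $\Delta$, each then re-sends it and starts its $\Delta$-wait, so every honest party sends $\sign{(\vote, m)}_i$ by time $2\Delta$, and every honest party collects $\sign{(\vote, m)}_j$ from the whole honest set $\PartySet \setminus Z$ (with $Z \in \Z_s$) and outputs $m$ by time $3\Delta$; this gives $\Z_s$-Liveness and, since any $\sign{(\vote, m')}_j$ from an honest $P_j$ would require a conflicting proposal, $\Z_s$-Validity. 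The same ``a certificate requires an honest vote'' reasoning --- a $\cert{m'}$ needs votes from all of $\PartySet \setminus Z'$ for some $Z' \in \Z_s$, and by $\Q^{(2)}(\PartySet, \Z_s)$ this set contains an honest party --- yields $\Z_a$-Validity in the asynchronous network as well, while $\Z_a$-Liveness holds because every honest party eventually re-sends the proposal, votes for $m$ (local $\Delta$-waits still terminate), and collects votes from $\PartySet \setminus Z$ with $Z \in \Z_a \subseteq \Z_s$; and once one honest party outputs, it forwards $\cert{m}$, so every honest party can reconstruct a certificate.

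The crux is \emph{consistency} when $\Sender$ is \emph{corrupt}. In a synchronous network the key claim is that no two honest parties vote for distinct values: if honest $P_k$ votes for $m^{\star}$ and honest $P_l$ votes for $m'$ with $m^{\star} \neq m'$, then $P_k$ re-sent $\sign{(\propose, m^{\star})}_{\Sender}$ at some local time $t_k$ and received no conflicting proposal by time $t_k + \Delta$, and symmetrically $P_l$ re-sent $\sign{(\propose, m')}_{\Sender}$ at time $t_l$; but $P_l$'s re-sent proposal reaches $P_k$ by time $t_l + \Delta$, forcing $t_l + \Delta > t_k + \Delta$, i.e.\ $t_l > t_k$, while the symmetric argument forces $t_k > t_l$ --- a contradiction. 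Hence all honest votes are for one value, so (again using $\Q^{(2)}(\PartySet, \Z_s)$ to ensure every certificate's vote-set contains an honest party) at most one value can ever be certified; and if an honest party outputs $m^{\star}$ at time $T$ it forwards $\cert{m^{\star}}$, so every honest party holds the needed $\sign{(\vote, m^{\star})}_j$ messages and outputs $m^{\star}$ by time $T + \Delta$, giving $\Z_s$-Consistency with the stated slack. In an asynchronous network the timing argument is unavailable, so I would argue instead by quorum intersection: certificates $\cert{m^{\star}}$ and $\cert{m'}$ correspond to vote-sets $\PartySet \setminus Z_1$ and $\PartySet \setminus Z_2$ with $Z_1, Z_2 \in \Z_s$; intersecting these with the honest set $\PartySet \setminus Z$, $Z \in \Z_a$, and applying $\Q^{(2,1)}(\PartySet, \Z_s, \Z_a)$ gives $\PartySet \setminus (Z_1 \cup Z_2 \cup Z) \neq \emptyset$, i.e.\ an honest party whose (unforgeable) vote appears in both certificates --- impossible since each honest party votes at most once. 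This establishes $\Z_a$-Consistency and completes the proof that $\Acast$ is a $\Z_a$-secure broadcast.

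The communication complexity is then routine bookkeeping: $\Order(n)$ proposal messages in the Propose phase, $\Order(n^2)$ re-broadcast-plus-vote messages in the Vote phase, and $\Order(n^2)$ certificate messages in the Output phase, each certificate carrying $\Order(n)$ signed votes, with every elementary message of size $\Order(\ell + |\sigma|)$ bits, summing to $\Order(n^3 \cdot \ell \cdot |\sigma|)$. I expect the main obstacle to be the corrupt-$\Sender$ synchronous consistency argument --- precisely, justifying that re-broadcasting the proposal and then waiting a full $\Delta$ suffices to prevent two honest parties from voting for different values, and pinning down the $+\Delta$ timing slack in the consistency statement exactly; the asynchronous consistency, by contrast, reduces to the clean set-theoretic use of $\Q^{(2,1)}(\PartySet, \Z_s, \Z_a)$.
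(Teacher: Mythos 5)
Your proof is correct and follows essentially the same route as the paper's: for an honest $\Sender$ you track the $\Delta$/$2\Delta$/$3\Delta$ timing explicitly and invoke unforgeability plus the fact that every output quorum $\PartySet \setminus Z'$ ($Z' \in \Z_s$) must contain an honest party; for a corrupt $\Sender$ in the synchronous case you derive the same timing contradiction via the re-broadcast and $\Delta$-wait (the paper phrases it as ``WLOG $T_j \ge T_i$'' rather than your symmetric strict-inequality contradiction, but the mechanism is identical); and for a corrupt $\Sender$ in the asynchronous case you use exactly the $\Q^{(2,1)}(\PartySet,\Z_s,\Z_a)$ quorum-intersection argument the paper uses. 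The only cosmetic difference is that your communication accounting arrives at $\Order(n^3\cdot(\ell+|\sigma|))$, which is a slightly tighter form of the paper's stated $\Order(n^3\cdot\ell\cdot|\sigma|)$.
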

 \paragraph{\bf Terminologies for Using $\Acast$.}
  In the protocol $\Acast$, any party from $\PartySet$ can be designated as $\Sender$. In the rest of the paper
  we will say that ``{\it $P_i$ Acasts $m$}" to mean that $P_i$ acts as $\Sender$ and
  invokes an instance of $\Acast$ with input m, and the parties participate in this instance. Similarly,
   ``{\it $P_j$ receives $m$ from the Acast of $P_i$}" means that $P_j$ outputs $m$ in the corresponding instance of $\Acast$.
\subsubsection{Synchronous Broadcast with Asynchronous Guarantees}
  The {\it third} component for designing $\SBA$ is a broadcast protocol $\BC$, which is secure in a {\it synchronous} network and which also provides 
   liveness, weak validity and
  weak consistency in an {\it asynchronous} network.  Note that the guarantees
   of $\BC$ are {\it different} from that of $\Acast$. The design of $\BC$ is based on the idea from \cite{ACC22b},
   by carefully combining protocols $\Acast$ and $\PiPW$. 
   In the protocol, $\Sender$ first Acasts
   its message. If the network is {\it synchronous}, then at time $3\Delta$, all honest parties should have an output.
   To confirm this, the parties start participating in an instance of $\PiPW$, with whatever output has been obtained from 
   the $\Acast$ instance at time $3\Delta$;
   in case no output is obtained, then the input is $\bot$. 
   Finally, at time $3\Delta + \TimePW$, the parties output
     an $m^{\star}$, if it is the output of the $\Acast$ instance {\it as well as} the output of $\PiPW$, else the output of the parties will be $\bot$. 
     We recall the description of $\BC$ from \cite{ACC22b} and present it in Fig \ref{fig:BC}.
      \paragraph {\bf Protocol $\BC$ as a Network Agnostic Secure Broadcast.}
  Protocol $\BC$ {\it only} guarantees {\it weak validity} and {\it weak consistency}  in an {\it asynchronous} network, since
  only a subset of {\it honest} parties may receive $\Sender$'s message from the Acast of $\Sender$ within time $3\Delta + \TimePW$.
  Note that maintaining the time-out is {\it essential}, as we need {\it liveness} from
   $\BC$ ({\it irrespective} of the network type) when 
   used later in protocol $\SBA$. 
       Looking ahead, we will use $\BC$ in our VSS protocol for broadcasting values.
    The weak validity and consistency may lead to a situation
       where, in an {\it asynchronous} network,
      one subset of {\it honest} parties may output a value different from $\bot$ at the end of the time-out $3\Delta + \TimePW$, while 
      others may output $\bot$. For the security of the VSS protocol, we would require the latter category of parties 
      to {\it eventually} output the common non-$\bot$ value if the parties {\it continue} participating in $\BC$.
      Following \cite{ACC22a,ACC22b}, we 
     make a provision for this in $\BC$. Namely,  
      each
    $P_i$ who outputs $\bot$ at time $3\Delta + \TimePW$ ``switches" its output to $m^{\star}$, if
     $P_i$ {\it eventually} receives $m^{\star}$ from $\Sender$'s Acast. 
    We stress that this switching is {\it only} for the parties who obtained $\bot$
    at time $3\Delta + \TimePW$. 
    To differentiate between the two ways of obtaining output, we use the terms {\it regular-mode}
    and {\it fallback-mode}. Regular-mode is the process of deciding the output at time $3\Delta + \TimePW$, while
    fallback-mode is the process of deciding the output beyond time $3\Delta + \TimePW$.

       \begin{protocolsplitbox}{$\BC(\Sender, m, \Z_s, \Z_a)$}{Synchronous broadcast with asynchronous guarantees.}{fig:BC}
\centerline{\underline{(Regular Mode)}}
   \begin{myitemize}
   \item[--]  Sender $\Sender$ on having the input $m$, Acasts $m$.
   \item[--] {\color{red} At time $3\Delta$}, each $P_i \in \Partyset$ participates in an instance of $\PiPW$, where the input of $P_i$ is $m^{\star}$ if
    $m^{\star} \in \{0, 1\}^{\ell}$ is received from the Acast of 
    $\Sender$, else the input is $\bot$.  
    \item[--] {\bf (Local Computation)}: {\color{red} At time $3\Delta + \TimePW$}, each $P_i \in \Partyset$ does the following.
       \begin{myitemize}
        \item[--] If some $m^{\star} \in \{0, 1 \}^{\ell}$ is received from the Acast of $\Sender$ {\it and} $m^{\star}$ is 
        computed as the output during the instance of $\PiPW$, then output $m^{\star}$. 
         Else output $\bot$.
      \end{myitemize}
\end{myitemize}
\centerline{\underline{{\color{blue}(Fallback Mode)}}}
 \begin{myitemize}
 \item[--] {\color{blue} Each $P_i \in \Partyset$ who outputs $\bot$ at time $3\Delta + \TimePW$,
 changes it to $m^{\star}$, if $m^{\star}$ is received from Acast of $\Sender$.
 }
\end{myitemize}
\end{protocolsplitbox}

 Theorem \ref{thm:BC} follows from \cite{ACC22b} and is proved in 
  Appendix \ref{app:BA}. 
\begin{theorem}
\label{thm:BC}
Protocol
  $\BC$ achieves the following,
   with a communication complexity of $\Order(n^4 \cdot \ell \cdot |\sigma|)$ bits,
    where $\TimeBC =  3\Delta + \TimePW$.   
 \begin{myitemize}
   \item[--] {\it Synchronous} network: 
      {\bf (a) $\Z_s$-Liveness}: At time $\TimeBC$, each honest party has an output. 
    {\bf (b) $\Z_s$-Validity}: If $\Sender$ is {\it honest}, then at time $\TimeBC$, each honest party
    outputs $m$.
      {\bf (c) $\Z_s$-Consistency}: If $\Sender$ is {\it corrupt}, then the output of every honest party is the same at 
      time $\TimeBC$.     
     {\bf (d) $\Z_s$-Fallback Consistency}: If $\Sender$ is {\it corrupt}, and some honest 
     party outputs $m^{\star} \neq \bot$ at time
    $T$ through fallback-mode, then every honest party outputs $m^{\star}$ by 
    time $T + \Delta$.    
\item[--] {\it Asynchronous Network}:
   {\bf (a) $\Z_a$-Liveness}: At time $\TimeBC$, each honest party has an output.
    {\bf (b) $\Z_a$-Weak Validity}: If $\Sender$ is {\it honest}, then at  time $\TimeBC$, 
    each honest party
    outputs $m$ or $\bot$.
    {\bf (c) $\Z_a$-Fallback Validity}: If $\Sender$ is {\it honest}, then each honest party
     with output $\bot$ at time $\TimeBC$, eventually outputs 
    $m$ through fallback-mode.
    {\bf (d) $\Z_a$-Weak Consistency}: If $\Sender$ is {\it corrupt}, then there exists an $m^{\star} \neq \bot$, such that 
   at time $\TimeBC$, 
    each honest party
    outputs
     $m^{\star}$ or $\bot$.
    {\bf (e) $\Z_a$-Fallback Consistency}: If $\Sender$ is {\it corrupt}, and some honest party 
    outputs $m^{\star} \neq \bot$ at  time
    $T \geq \TimeBC$, then 
    each honest party eventually outputs $m^{\star}$.
   \end{myitemize}
\end{theorem}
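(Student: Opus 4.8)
The plan is to derive each listed property of $\BC$ by composing the corresponding guarantee of its two ingredients --- protocol $\Acast$ (Lemma~\ref{lemma:Acast}) and protocol $\PiPW$ (Lemma~\ref{lemma:PW}) --- while carefully tracking the \emph{local times} at which honest parties deliver the Acast output and complete $\PiPW$. The timing fact I would record up front is $\TimeBC = 3\Delta + \TimePW \geq 4\Delta$, which holds since $\TimePW = (t+1)\cdot\Delta \geq \Delta$. Combined with the one-round slack in the $\Z_s$-Consistency of $\Acast$, this yields: \emph{whenever some honest party has delivered a value $m^\star$ from $\Sender$'s Acast by local time $3\Delta$ (the moment $\PiPW$ starts), every honest party has delivered the same $m^\star$ by local time $\TimeBC$}; and the consistency of $\Acast$ (synchronous and asynchronous alike, for a corrupt $\Sender$) forces at most one such non-$\bot$ value. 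These two observations are exactly what make it sound to intersect the $\PiPW$ output --- common to all honest parties in a synchronous run --- with the Acast output in the local-computation step of $\BC$.

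\emph{Synchronous network.} Liveness is immediate: $\PiPW$ is a $\Z_s$-secure SBA, so all honest parties hold a $\PiPW$ output at local time $3\Delta + \TimePW$, perform the local computation, and thus have some output. For validity with an honest $\Sender$: $\Z_s$-Liveness and $\Z_s$-Validity of $\Acast$ deliver $m$ to every honest party by time $3\Delta$, all feed $m$ to $\PiPW$, its $\Z_s$-Validity makes the output $m$, and the local check passes with output $m$. For consistency with a corrupt $\Sender$, I split on whether any honest party delivered a (necessarily unique) value $m^\star$ from the Acast by local time $3\Delta$: if yes, the timing fact puts $m^\star$ in every honest party's hands at time $\TimeBC$, so each honest party outputs $m^\star$ or $\bot$ according to the \emph{common} $\PiPW$ output ($\Z_s$-Consistency of $\PiPW$); if no, all honest parties input $\bot$ to $\PiPW$, so by $\Z_s$-Validity of $\PiPW$ its common output is $\bot$ and every honest party outputs $\bot$. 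Either way all honest outputs agree at $\TimeBC$. For fallback consistency: a party that switches in fallback mode output $\bot$ in regular mode, so by the consistency just proved \emph{all} honest parties output $\bot$ at $\TimeBC$ and are eligible for fallback; $\Z_s$-Consistency of $\Acast$ then spreads the late-delivered $m^\star$ to every honest party within $\Delta$ time, so all switch to $m^\star$.

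\emph{Asynchronous network.} Here I use that $\PiPW$ retains $\Z_a$-guaranteed liveness (output $\TimePW$ time after it starts, hence by local time $\TimeBC$) and that $\Acast$ is a $\Z_a$-secure broadcast. Liveness again follows from the $\PiPW$ timeout. For an honest $\Sender$: $\Z_a$-Validity of $\Acast$ makes any value an honest party delivers from the Acast equal to $m$, so the local check can only output $m$ or $\bot$ (weak validity), while $\Acast$ eventually delivers $m$ to every honest party, so a party stuck at $\bot$ eventually switches to $m$ (fallback validity). For a corrupt $\Sender$: $\Z_a$-Consistency of $\Acast$ singles out one value $m^\star$ that is the only non-$\bot$ value any honest party can ever deliver from the Acast, giving weak consistency at $\TimeBC$; and if some honest party delivers $m^\star$ from the Acast at some time $T \geq \TimeBC$, the consistency/totality of $\Acast$ forces every honest party to eventually deliver $m^\star$, so each honest party that output $\bot$ switches to $m^\star$ while those already holding a non-$\bot$ value hold $m^\star$ by weak consistency --- yielding fallback consistency. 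The communication bound $\Order(n^4 \cdot \ell \cdot |\sigma|)$ is just the sum of one $\Acast$ ($\Order(n^3 \cdot \ell \cdot |\sigma|)$) and one $\PiPW$ ($\Order(n^4 \cdot \ell \cdot |\sigma|)$).

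The step I expect to cost the most care is synchronous consistency, and the fallback consistency that builds on it: one must correctly interleave the length-$\TimePW$ run of $\PiPW$, which begins only at time $3\Delta$, with the up-to-$\Delta$ spreading delay in $\Z_s$-Consistency of $\Acast$, and check that this spreading always completes by time $\TimeBC$ --- precisely where $\TimeBC \geq 4\Delta$ is invoked --- so that the ``value delivered from the Acast'' condition in the local computation evaluates identically for all honest parties even when their $\PiPW$ inputs disagree.
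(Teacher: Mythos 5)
Your proof is correct and follows essentially the same approach as the paper's: it reduces each property of $\BC$ to the corresponding properties of $\Acast$ and $\PiPW$, with the key observation that the one-round spreading delay in the $\Z_s$-consistency of $\Acast$ fits inside the window $[3\Delta,\TimeBC]$ because $\TimeBC\geq 4\Delta$. The only difference is cosmetic: for synchronous consistency you split cases on whether some honest party received an Acast value by local time $3\Delta$, whereas the paper splits on whether some honest party outputs a non-$\bot$ value at $\TimeBC$ and then works backwards; both variants rely on the same facts ($\Z_s$-consistency of $\PiPW$, $\Z_s$-validity of $\PiPW$ when all inputs are $\bot$, and $\Z_s$-consistency of $\Acast$).
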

In the rest of the paper, we use the following terminologies while using $\BC$.
\paragraph {\bf Terminologies for $\BC$:}  
 We say that  {\it $P_i$ broadcasts $m$} to mean that $P_i$ invokes an instance of 
 $\BC$ as  $\Sender$
  with input $m$, and the parties participate in this instance. Similarly, we say that {\it $P_j$ receives $m$ from the broadcast
   of $P_i$ through regular-mode (resp.~fallback-mode)},
   to mean that $P_j$ has the output $m$
  at time $\TimeBC$ (resp.~after time $\TimeBC$)
  during the instance of $\BC$. 
 \subsubsection{$\BC \rightarrow \SBA$}
 Finally, using $\BC$ we instantiate protocol $\SBA$, following the
   blueprint of \cite{ACC22b}.
   In the protocol, every party broadcasts its input bit (for $\SBA$) through an instance of $\BC$.
    At time $\TimeBC$, the parties check if ``sufficiently many"  instances of $\BC$ have produced a binary output
     (which should have happened in a {\it synchronous} network) and if so, they output the ``majority" of those values.
     Otherwise, the network is {\it asynchronous}, in which case the parties output $\bot$. 
     The description of $\SBA$ is recalled from \cite{ACC22b} and presented in Fig \ref{fig:SBA}.
\begin{protocolsplitbox}{$\SBA(\Z_s, \Z_a)$}{Synchronous BA with asynchronous guaranteed liveness and weak validity.
  The above code is executed by every $P_i \in \PartySet$.}{fig:SBA}
\justify
\begin{myitemize}
\item[--] 
 On having input $b_i \in \{0, 1 \}$, broadcast $b_i$.
\item[--] For $j = 1, \ldots, n$, let $b_i^{(j)} \in \{0, 1, \bot \}$ be received from the broadcast of $P_j$ through {\color{red} {\bf regular-mode}}. 
  Include $P_j$ to a set $\R$ if $b_i^{(j)} \neq \bot$.  
    \begin{myitemize}
    \item[--] If 
    $\PartySet \setminus \R \in \Z_s$, then compute the output as follows. 
       \begin{myitemize}
       \item[--] If there exists a subset of parties $\R_i \subseteq \R$, such that $\R \setminus \R_i  \in \Z_s$
       and $b_i^{(j)} = b$ for all the parties $P_j \in \R_i$, then output 
        $b$.\footnote{If there are multiple such $\R_i$, then break the tie using some 
        pre-determined rule.}       
       \item[--] Else output $1$.
       \end{myitemize}
    \item[--] Else output $\bot$.
    \end{myitemize}
\end{myitemize}
\end{protocolsplitbox}

Theorem \ref{thm:SBA} follows from \cite{ACC22b} and is  proved in Appendix \ref{app:BA}.
\begin{theorem}
\label{thm:SBA}
Protocol $\SBA$ achieves the following where $\TimeSBA = \TimeBC$, 
 incurring a communication of 
 $\Order(n^5 \cdot |\sigma|)$ bits.   
 \begin{myitemize}
   \item[--] {\it Synchronous Network}: the protocol is a $\Z_s$-secure SBA protocol where honest parties have an output, different from $\bot$, at time $\TimeSBA$.
   \item[--] {\it Asynchronous Network}: the protocol achieves $\Z_a$-guaranteed liveness and $\Z_a$-weak validity, such that all honest parties have an output at (local) time $\TimeSBA$.
  \end{myitemize}
\end{theorem}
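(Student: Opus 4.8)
The plan is to derive all six assertions from the properties of $\BC$ recorded in Theorem~\ref{thm:BC}, combined with the standing conditions $\Q^{(2)}(\PartySet,\Z_s)$, $\Q^{(3)}(\PartySet,\Z_a)$ and $\Q^{(2,1)}(\PartySet,\Z_s,\Z_a)$. Fix the static corrupt set $Z^{\star}$, so $Z^{\star}\in\Z_s$ in a synchronous run and $Z^{\star}\in\Z_a$ in an asynchronous run, and write $\Hon=\PartySet\setminus Z^{\star}$; for an honest $P_i$ let $\R$ be the set it assembles and $b_i^{(j)}$ the value it receives from the broadcast of $P_j$. Since $\BC$ has liveness irrespective of the network type, at time $\TimeBC=\TimeSBA$ every honest party holds all $n$ sub-outputs and reaches the local-computation step, which already yields $\Z_s$- and $\Z_a$-guaranteed liveness; the timing claim and the $\Order(n^5\cdot|\sigma|)$ bound are immediate, since $\SBA$ runs $n$ parallel single-bit instances of $\BC$ (each costing $\Order(n^4\cdot|\sigma|)$) and then only local computation.

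For the synchronous guarantees I would argue as follows. By $\Z_s$-Validity of $\BC$ every honest sender $P_j$ is recorded with $b_i^{(j)}=b_j\neq\bot$ by every honest $P_i$, so $\Hon\subseteq\R$ and $\PartySet\setminus\R\subseteq Z^{\star}\in\Z_s$; hence the first sub-branch is always taken and the output is a bit, never $\bot$. Using $\Z_s$-Validity for honest senders and $\Z_s$-Consistency for corrupt senders, the set $\R$ and the tuple $(b_i^{(j)})_{P_j\in\R}$ are identical for all honest parties, so they run the same deterministic rule — including the fixed tie-break — and agree on the output; this is $\Z_s$-consistency. For $\Z_s$-validity, assume all honest parties input $b$; then $\{P_j\in\R:b_i^{(j)}=b\}\supseteq\Hon$ is a valid homogeneous subset (its complement in $\R$ is corrupt, hence in $\Z_s$), so the ``else output $1$'' case is never invoked, while any valid homogeneous subset $\R_i$ with common value $1-b$ would be disjoint from $\Hon$, forcing $\Hon\subseteq\R\setminus\R_i\in\Z_s$ and thus $\Hon\cup Z^{\star}=\PartySet$ with both sets in $\Z_s$, contradicting $\Q^{(2)}(\PartySet,\Z_s)$. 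Hence the output equals $b$.

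For the asynchronous guarantees, $\Z_a$-weak validity is the crux. Assume all honest parties input $b$ and some honest $P_i$ outputs $v\neq\bot$; I must show $v=b$. By $\Z_a$-Weak Validity of $\BC$, each honest $P_j$ is recorded with $b_i^{(j)}\in\{b,\bot\}$, so every honest party appearing in $P_i$'s set $\R$ carries the value $b$; moreover $\R$ contains at least one honest party, as otherwise $\PartySet=(\PartySet\setminus\R)\cup\R$ would be covered by two $\Z_s$-sets (using $Z^{\star}\in\Z_a\subset\Z_s$), contradicting $\Q^{(2)}$. Since $P_i$ output a bit we have $\PartySet\setminus\R\in\Z_s$, and $\R\setminus(\Hon\cap\R)\subseteq Z^{\star}\in\Z_a$; were $\Hon\cap\R\in\Z_s$, the three sets $\PartySet\setminus\R$, $\Hon\cap\R$ and $\R\setminus(\Hon\cap\R)$ would cover $\PartySet$ using two $\Z_s$-sets and one $\Z_a$-set, contradicting $\Q^{(2,1)}(\PartySet,\Z_s,\Z_a)$; hence $\Hon\cap\R\notin\Z_s$. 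Consequently every valid subset $\R_i$ (one with $\R\setminus\R_i\in\Z_s$) must intersect $\Hon\cap\R$ and therefore, if homogeneous, carries the value $b$; since the value-$b$ members of $\R$ do form a valid homogeneous subset (their complement in $\R$ is corrupt, hence in $\Z_s$), the ``else output $1$'' default is not reached and $v=b$.

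I expect the main obstacle to be exactly this asynchronous weak-validity step: one has to simultaneously rule out the ``else output $1$'' default and every wrong-bit homogeneous subset, and to recognise that this is precisely where the stronger condition $\Q^{(2,1)}(\PartySet,\Z_s,\Z_a)$ is consumed — the covering $\PartySet=(\PartySet\setminus\R)\cup(\Hon\cap\R)\cup(\R\setminus\Hon)$ spends two sets of $\Z_s$ on $\PartySet\setminus\R$ and $\Hon\cap\R$ and one set of $\Z_a$ on the corrupt members of $\R$. Everything else — liveness from $\BC$'s liveness, synchronous consistency from the common honest view, and the timing and communication accounting — is routine bookkeeping on top of Theorem~\ref{thm:BC}.
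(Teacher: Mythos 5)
Your proof is correct and follows essentially the same approach as the paper's: derive liveness from $\BC$-liveness, synchronous validity and consistency from the common honest view of all $n$ sub-outputs, and asynchronous weak validity from the $\Q^{(2,1)}(\PartySet,\Z_s,\Z_a)$ condition applied to the decomposition $\PartySet = (\PartySet\setminus\R) \cup (\R\setminus\R_i) \cup \R_i$. One small point in your favour: you explicitly rule out the ``else output $1$'' default before invoking the homogeneous-subset argument (by observing that the value-$b$ members of $\R$ always form a valid candidate subset), whereas the paper's appendix proof asserts the existence of the homogeneous subset $\R_i$ directly; your version of the asynchronous argument — first showing $\Hon\cap\R\notin\Z_s$ via $\Q^{(2,1)}$, then deducing that any valid subset intersects $\Hon\cap\R$ — is a slight detour compared with the paper's direct appeal to $\Q^{(1)}(\R_i,\Z_a)$, but both consume $\Q^{(2,1)}$ in exactly the same way and are equivalent.
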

 \subsection{$\ABA$: Asynchronous BA with Synchronous Validity}
 To the best of our knowledge, the {\it only} known {\it unconditionally-secure} ABA protocol is due to \cite{Cho23}, which generalizes the framework of
  randomized ABA \cite{Rab83,Ben83,CR93,FM97} against general adversaries. The protocol of \cite{Cho23} uses a {\it graded agreement} (GA) protocol (also known as the vote protocol),
  along with a secure coin-flip protocol. Unfortunately, both these primitives provide security (in an {\it asynchronous} network) against $\Q^{(3)}$ adversary structures
  and {\it fail} to provide any security guarantees against $\Q^{(2)}$ adversary structures in a {\it synchronous} network. Consequently, the ABA protocol of \cite{Cho23} {\it fails}
  to provide any security guarantees in a {\it synchronous} network against $\Q^{(2)}$ adversary structures.\footnote{Recall that we need {\it validity}, coupled with {\it guaranteed liveness}  from
   $\ABA$, when used in our network agnostic BA protocol $\BA$.} We give a different instantiation of $\ABA$ by generalizing a few ideas used in \cite{BKL19}. Our instantiation of $\ABA$ is based
   on following {\it two} components.
   \paragraph{\bf Component I: Asynchronous Graded Agreement with Synchronous Validity.}
   We assume the existence of a GA protocol $\Vote$, where each party has a binary input. The output for each party is a value from $\{0, 1, \bot \}$, along with a grade from $\{0, 1, 2\}$.
   The protocol achieves the following properties.
   \begin{myitemize}
   \item[--] {\it Asynchronous Network} --- The following properties are achieved, even if the adversary corrupts any subset from $\Z_a$: {\bf (a): $\Z_a$-Liveness:} If all honest parties participate in the protocol, then each honest party eventually obtains an output. 
      {\bf (b) $\Z_a$-Graded Validity:} If every honest party's input is $b$, then all honest parties with an output, output $(b, 2)$.
      {\bf (c) $\Z_a$-Graded Consistency:} If two honest parties output grades $g, g'$, then $|g - g'| \leq 1$ holds; moreover,  
      if two honest parties output $(v, g)$ and $(v', g')$ with $g, g' \geq 1$, then $v = v'$.
   \item[--] {\it Synchronous Network} --- The following properties are achieved, even if the adversary corrupts any subset from $\Z_s$: {\bf (a): $\Z_s$-Liveness:} 
   If all honest parties participate in the protocol with the {\it same} input, then after some fixed time $\TimeVote$,
    all honest parties obtain an output.
   {\bf (b) $\Z_s$-Graded Validity:} If every honest party's input is $b$, then all honest parties with an output, output $(b, 2)$.   
   \end{myitemize}
   Note that we {\it do not} require any form of consistency from $\Vote$ in the {\it synchronous} network.
      We give an instantiation of $\Vote$ with the $\Q^{(2, 1)}(\PartySet, \Z_s, \Z_a)$ condition, by generalizing the {\it threshold} GA protocol of \cite{BKL19} with condition $2t_s + t_a < n$, such that
      $\TimeVote = 4 \cdot \Delta$.
    The protocol of \cite{BKL19} uses digital signatures (hence, is {\it computationally secure}),
     which we replace with pseudo-signatures.
       For the description of $\Vote$ and its properties, see Appendix \ref{app:BA}. 
   \paragraph{\bf Component II: Asynchronous Coin-Flipping with Synchronous Liveness.}
   We assume the existence of a $p$-coin-flipping protocol $\CoinFlip$, where $0 < p < 1$ is a parameter.
    In the protocol, the parties participate with random inputs and the output of each party is a bit satisfying the following properties.   
   \begin{myitemize}
   \item[--] {\it Asynchronous Network}: The following properties are achieved even if the adversary corrupts any subset from $\Z_a$: {\bf (a): $\Z_a$-Almost-Surely Liveness:}
   If all honest parties participate in the protocol, then almost-surely, all honest parties eventually get an output.
   {\bf (b): $(\Z_a, p)$-Commonness:} With probability $p$, the output of {\it all} honest parties is a {\it random} bit $b \in \{0, 1 \}$.
   \item[--] {\it Synchronous Network}: The following property is achieved even if the adversary corrupts any subset from $\Z_s$: {\bf (a): $\Z_s$-Guaranteed Liveness:}
   If all honest parties participate in the protocol, then all honest parties get an output, after some {\it fixed} time $\TimeCoinFlip$.
   \end{myitemize}
  In \cite{Cho23}, the authors presented an instantiation of $\CoinFlip$, which achieves {\it $\Z_a$-Almost-Surely Liveness} as well as {\it $(\Z_a, p)$-Commonness} in an asynchronous network,
  where $p = \frac{1}{n}$, provided $\Z_a$ satisfies the $\Q^{(3)}(\PartySet, \Z_a)$ condition. The protocol incurs an {\it expected} communication of $\Order(\poly(n, |\Z_s|, \log{|\F|}))$ bits.
   Interestingly, the protocol also achieves {\it $\Z_s$-Guaranteed Liveness} in a {\it synchronous} network, {\it irrespective} of $\Z_s$.
  Namely, after time $\TimeCoinFlip =  20 \cdot \Delta$, all honest parties will have an output, 
  with (honest)
  parties communicating $\Order(\poly(n, |\Z_s|, \log{|\F|}))$ bits. 
  \subsubsection{\bf $\Vote + \CoinFlip \rightarrow \ABA$}
  Once we have instantiations of $\Vote$ and $\CoinFlip$, we can easily combine it using the framework of \cite{Rab83,Ben83,CR93,FM97} to get the protocol $\ABA$.
  The protocol consists of several iterations, where in each iteration, the parties run two instances of $\Vote$, along with an instance of $\CoinFlip$. 
  Using the first instance of $\Vote$, the parties check if they all have the same input. Independent of this finding, they then run an instance of $\CoinFlip$. Finally, they again run an instance of
  $\Vote$, with inputs being carefully chosen. Namely, if a party obtained an output with the {\it highest} grade from the {\it first} instance of $\Vote$, then it participates with this input,
  else it participates with the coin-output. Finally, based on the output received from the second instance of $\Vote$, the parties update their input for the next iteration as follows:
  if a bit with a non-zero grade is obtained, then it is set as the updated input, else the updated input is set to the input of the second instance of $\Vote$.
  During each iteration, the parties keep a tab on whether they have received an output bit with the {\it highest} grade from the {\it second} instance of $\Vote$, in which case, they
  indicate it to the others by sending a signed $\ready$ message and the bit. Once ``sufficiently many" parties send the same signed $\ready$ bit, it is taken as the output of the protocol.
  
  The idea here is that if the honest parties start an iteration with the {\it same} input bit $b$, then the output of $\CoinFlip$ is {\it not} considered 
   ({\it irrespective} of the network type) and {\it all} instances of $\Vote$ output $(b, 2)$. Thus, all honest parties will send a signed $\ready$ message for $b$. Consequently, all honest parties will output $b$.  
  This ensures {\it validity}, coupled with {\it guaranteed liveness}, both in {\it synchronous} and {\it asynchronous} networks.
  On the other hand, if the honest parties start an iteration with {\it different} inputs, then with probability at least $p \cdot \frac{1}{2} = \frac{1}{n} \cdot \frac{1}{2}$, all of them will have the 
  {\it same} input for the second instance of $\Vote$. And consequently, all honest parties will have the same input from the next iteration onward and consistency is achieved
   (in the {\it asynchronous} network). 
  The description of $\ABA$ based on $\Vote$ and $\CoinFlip$ is recalled from \cite{Cho23} and presented in Fig \ref{fig:ABA}.
  
   \begin{protocolsplitbox}{$\ABA(\PartySet, \Z_s, \Z_a)$}{The ABA protocol. The above code is executed by every $P_i \in \PartySet$ with input $b_i$}{fig:ABA}
 \begin{myitemize}
  \item {\bf Initialisation}: Set $b = b_i$, $\committed = \false$ and $k=1$. Then do the
    following.
    \begin{mydescription}
    \item Participate in an instance of $\Vote$ protocol with input $b$ and {\color{red} wait for time $\TimeVote$}.            
    \item Once an output $(b, g)$ is received from the instance of $\Vote$, participate in an
      instance of $\CoinFlip$ and {\color{red} wait for time $\TimeCoinFlip$}. Let $\Coin_k$ denote the output received from $\CoinFlip$.
    \item If $g < 2$, then set $b = \Coin_k$.
    \item Participate in an instance of $\Vote$ protocol with input $b$ and {\color{red} wait for time $\TimeVote$}. Let $(b', g')$ be
      the output received. If $g' > 0$, then set $b = b'$.
    \item  If $g' = 2$ and $\committed = \false$, then set $\committed = \true$ and send 
      $\sign{(\ready, b)}_i$ to all the parties.
    \item Set $k = k + 1$ and repeat from $1$.
    \end{mydescription}
  \item {\bf Output Computation}:
    \begin{myitemize}
    \item[--] Upon receiving  $\sign{(\ready, b)}_j$ messages with valid signatures corresponding to every $P_j \in \PartySet \setminus Z$ for some $Z \in \Z_s$, send
    $(b,\cert{b})$ to all the parties and
      output $b$. Here $\cert{b}$ denotes the collection of signed $\sign{(\ready, b)}_j$ messages.
    \item[--] Upon receiving $(b, \cert{b})$ where $\cert{b}$ has valid signatures  corresponding to every $P_j \in \PartySet \setminus Z$ for some $Z \in \Z_s$, 
    send $(b, \cert{b})$ to all the parties and
      output $b$.
    \end{myitemize}
  \end{myitemize}
    \end{protocolsplitbox}
    
    Theorem \ref{thm:ABA} follows from \cite{BKL19} and is proved in Appendix \ref{app:BA}.

\begin{theorem}
\label{thm:ABA}
Protocol $\ABA$ achieves the following where $\TimeABA = \TimeCoinFlip + 2\TimeVote + \Delta$.   
 \begin{myitemize}
   \item[--] {\it Synchronous Network}: If all honest parties have the same input $b \in \{ 0, 1\}$, then all honest parties output $b$,
    at time $\TimeABA$. Moreover, $\Order(\poly(n, |\Z_s|, \log{|\F|}))$ bits
   are communicated by the honest parties.
   \item[--] {\it Asynchronous Network}: the protocol is a $\Z_a$-secure ABA, incurring an expected communication of $\Order(\poly(n, |\Z_s|, \log{|\F|}))$ bits.
  \end{myitemize}
\end{theorem}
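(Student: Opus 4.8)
The plan is to follow the Rabin-style ABA analysis of \cite{BKL19} (as also carried out in \cite{Cho23}), now instantiated with the network-agnostic primitives $\Vote$ (Component I) and $\CoinFlip$ (Component II); the only facts I would use about them are the synchronous/asynchronous guarantees listed above. Throughout, I would repeatedly invoke one combinatorial fact: writing $Z^{\star}$ for the actual set of corrupt parties (which lies in $\Z_s$ in a synchronous run and in $\Z_a \subset \Z_s$ in an asynchronous run), the $\Q^{(2)}(\PartySet, \Z_s)$ condition gives $Z^{\star} \cup Z \subset \PartySet$, properly, for every $Z \in \Z_s$, so every set of the form $\PartySet \setminus Z$ with $Z \in \Z_s$ contains at least one honest party. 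Combined with the (idealized) unforgeability of pseudo-signatures, this means that a valid certificate $\cert{b}$ --- signed $\sign{(\ready, b)}_j$ messages from all $P_j \in \PartySet \setminus Z$ for some $Z \in \Z_s$ --- can be assembled only if some \emph{honest} party has sent $\sign{(\ready, b)}$; this is what rules out wrong or forged outputs in both network settings.

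For the \emph{synchronous network} with all honest parties holding the same input $b$: in the first iteration, by $\Z_s$-Graded Validity and $\Z_s$-Liveness of $\Vote$ (both applicable since all honest inputs agree) every honest party obtains $(b, 2)$ from the first $\Vote$ within time $\TimeVote$; since $g = 2$ the coin output is ignored, but $\CoinFlip$ is still run and, by $\Z_s$-Guaranteed Liveness of $\CoinFlip$, every honest party gets past it within time $\TimeCoinFlip$; the second $\Vote$ again outputs $(b, 2)$ for every honest party within $\TimeVote$. Hence by local time $2\TimeVote + \TimeCoinFlip$ every honest party sets $\committed = \true$ and sends $\sign{(\ready, b)}_i$; within an additional $\Delta$, each honest party receives $\sign{(\ready, b)}_j$ from all $P_j \in \PartySet \setminus Z^{\star}$ (with $Z^{\star} \in \Z_s$), satisfies the output rule, and outputs $b$ at time $\TimeABA = 2\TimeVote + \TimeCoinFlip + \Delta$. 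No honest party outputs $b' \ne b$, by the combinatorial fact above together with the observation that no honest party ever holds a bit other than $b$. Output (and certificate propagation to any straggler) occurs within $\Order(1)$ iterations, each of which costs $\Order(\poly(n, |\Z_s|, \log{|\F|}))$ bits (two $\Vote$ instances, one $\CoinFlip$ instance, and $\Order(n)$ $\ready$/certificate messages of size $\Order(n \cdot |\sigma|)$), so the total communication is $\Order(\poly(n, |\Z_s|, \log{|\F|}))$.

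For the \emph{asynchronous network} I would establish the three ABA properties. \emph{$\Z_a$-Validity} is the same argument as above with the asynchronous guarantees of $\Vote$ ($\Z_a$-Graded Validity, $\Z_a$-Liveness): all honest parties commit to $b$, eventually collect the required $\ready$ messages, and --- again by the combinatorial fact --- none outputs a value $\ne b$. \emph{$\Z_a$-Consistency}: let $k^{\star}$ be the first iteration in which some honest party, holding bit $b^{\star}$, sets $\committed = \true$; that party received $(b^{\star}, 2)$ from the second $\Vote$ instance of iteration $k^{\star}$, so by $\Z_a$-Graded Consistency of that instance every honest party's output there is $(b^{\star}, g')$ with $g' \ge 1$, whence every honest party enters iteration $k^{\star} + 1$ with input $b^{\star}$; by $\Z_a$-Graded Validity all $\Vote$ instances from iteration $k^{\star} + 1$ onward then output $(b^{\star}, 2)$ for every honest party, so \emph{every} honest $\sign{(\ready, \cdot)}$ message ever sent is for $b^{\star}$, and the combinatorial fact forbids a valid $\cert{b}$ with $b \ne b^{\star}$ --- hence every honest party that outputs, outputs $b^{\star}$. \emph{$\Z_a$-Almost-Surely Liveness}: in any iteration, over the fresh coin, with probability at least $p \cdot \frac{1}{2} = \frac{1}{2n}$ the common random coin value equals the unique bit held by the honest parties that obtained grade $\ge 1$ from the first $\Vote$ (arbitrary if there is none); by $\Z_a$-Graded Consistency of that first $\Vote$ and the reset rule $b := \Coin_k$ for parties with grade $< 2$, this forces all honest parties to feed the \emph{same} bit into the second $\Vote$, and $\Z_a$-Graded Validity then makes all honest parties enter the next iteration with a common input (indeed commit to it in that very iteration), after which the validity-style reasoning yields an output for everyone; since each iteration completes almost surely ($\Z_a$-Liveness of $\Vote$, $\Z_a$-Almost-Surely Liveness of $\CoinFlip$) and this success event recurs independently, every honest party outputs almost surely, and once one does it forwards $(b, \cert{b})$ so that all do. The expected number of iterations until the honest inputs coincide is $\Order(n)$ (a geometric random variable with success probability $\ge \frac{1}{2n}$), and one further iteration suffices, so the expected communication is $\Order(n \cdot \poly(n, |\Z_s|, \log{|\F|})) = \Order(\poly(n, |\Z_s|, \log{|\F|}))$.

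I expect the main obstacle to be the asynchronous consistency argument, which relies on cleanly chaining the $\Z_a$-Graded Consistency of \emph{one} $\Vote$ instance into the $\Z_a$-Graded Validity of \emph{all subsequent} ones; this is exactly the step that would collapse if $\Vote$'s asynchronous guarantees were any weaker, so I would spell out the grade bookkeeping precisely. A secondary subtlety, specific to the network-agnostic setting, is that $\Vote$ offers \emph{no} consistency in a synchronous network --- so the synchronous claim is necessarily a guaranteed-liveness-with-agreement statement conditioned on all honest parties sharing an input --- and that the timeouts must be tracked carefully so that the $\CoinFlip$ step cannot stall a synchronous run (which is precisely why we insist on $\Z_s$-Guaranteed Liveness of $\CoinFlip$) and so that the $\TimeABA$ bound comes out tight.
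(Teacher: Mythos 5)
Your proof follows the same decomposition and argument structure as the paper's own Lemmas 31--35: synchronous validity and liveness via the $\Z_s$-guarantees of $\Vote$ together with $\Z_s$-guaranteed liveness of $\CoinFlip$; asynchronous validity from the same chain of graded validities; consistency via a ``first committed party'' argument that uses $\Z_a$-graded consistency of the second $\Vote$ in the committing iteration to force every honest party onto $b^{\star}$ from the next iteration on, and then relies on the $\Q^{(2,1)}$ quorum-intersection fact to rule out a valid $\cert{b'}$ for $b' \ne b^{\star}$; and almost-surely liveness from the per-iteration success probability $\ge \frac{1}{2n}$. Your estimate of $\Order(n)$ expected iterations is in fact the tighter and correct value (the paper's Lemma~34 writes $E(\tau) \le 4n^2$, but the geometric-series identity $\frac{1}{2n}\sum_{k\ge 1}k\bigl(1-\tfrac{1}{2n}\bigr)^{k-1} = 2n$ gives $\Order(n)$), though both bounds are polynomial so the theorem statement is unaffected.
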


\section{Network Agnostic Information Checking Protocol}
 In this section, we present our network agnostic ICP protocol (Fig \ref{fig:ICP}). A detailed overview of the protocol has been already presented in Section \ref{ssec:ICP}.
  The protocol consists of two subprotocols $\Auth$ and $\Reveal$, implementing the authentication and revelation phase respectively, 
  where the parties participate in the revelation phase only upon completing the authentication phase.
  During the authentication phase, $\mathsf{S}$ distributes the authentication and verification information, followed by parties publicly verifying the consistency of distributed information and once the consistency is established, the
  authentication phase is over. During the revelation phase, $\mathsf{I}$ reveals the IC-signature which is verified by $\mathsf{R}$ with respect to the verification information revealed by a ``selected" subset of the verifiers.
 
\begin{protocolsplitbox}{$\ICP(\PartySet, \Z_s, \Z_a, \mathsf{S},\mathsf{I},\mathsf{R})$}{The network-agnostic ICP}{fig:ICP}
   \centerline{\underline{Protocol $\Auth(\PartySet,\Z_s, \Z_a, \mathsf{S},\mathsf{I},\mathsf{R},s)$}: $t \defined \max\{ |Z| :  Z \in \AdvStruct_s 	\}$} 
   \justify \justify
\begin{myitemize}
\item[--] {\bf Distributing Data}: $\mathsf{S}$ executes the following steps.
    \begin{myitemize}
    \item Randomly select $t$-degree {\it signing-polynomial} $F(x)$ and $t$-degree {\it masking-polynomial} $M(x)$,
     where $F(0) = s$. 
     For $i = 1,\ldots,n$, randomly select $\alpha_i \in \F \setminus \{0\}$, and compute
    $v_i = F(\alpha_i)$ and $m_i = M(\alpha_i)$.
    \item Send $(F(x), M(x))$ to $\mathsf{I}$. For $i = 1, \ldots, n$, 
    send $(\alpha_i, v_i, m_i)$ to party $P_i$.    
    \end{myitemize}
\item[--] {\bf Confirming Receipt of Verification Points}: Each party $P_i$ (including $\mathsf{S}, \mathsf{I}$ and $\mathsf{R}$), upon receiving $(\alpha_i, v_i, m_i)$ from $\mathsf{S}$, broadcasts $(\Received,i)$. 
\item[--] {\bf Announcing Set of Supporting Verifiers}: only $\S$ does the following.
	\begin{myitemize}
	\item Initialize the set of {\it supporting verifiers} $\R$ to $\emptyset$, and {\color{red} wait till the local time is $\Delta + \TimeBC$}. 
	Upon receiving $(\Received,i)$ from the broadcast of $P_i$, 
	add $P_i$ to $\R$. Once $\PartySet \setminus \R \in \AdvStructure_s$, broadcast the set $\R$.
	\end{myitemize}
\item[--] {\bf Announcing Masked Polynomial}: only $\INT$ does the following.
	\begin{myitemize}
	\item {\color{red} Wait till the local time is $\Delta + 2\TimeBC$}. Upon receiving $\R$ from the broadcast of $\mathsf{S}$ such that $\PartySet \setminus \R \in \Z_s$, 
	wait till $(\Received,i)$ is received from the broadcast of every 
	$P_i \in \R$. Then randomly pick $d \in \F \setminus \{0\}$ and broadcast $(d,B(x))$, where $B(x) \defined dF(x) + M(x)$.
	\end{myitemize}
\item[--] {\bf Announcing Validity of Masked Polynomial}: only $\S$ does the following. 
	\begin{myitemize}
	\item {\color{red} Wait till the local time is $\Delta + 3\TimeBC$}. Upon receiving $(d,B(x))$ from the broadcast of $\mathsf{I}$, 
	broadcast $\OK$, if $B(x)$ is a $t$-degree polynomial and if
	$dv_j + m_j = B(\alpha_j)$ holds 
	  for every $P_j \in \R$. 
	  \end{myitemize}
\item[--] {\bf Deciding Whether Authentication is Successful}: each $P_i \in \PartySet$ (including $\mathsf{S}, \mathsf{I}$ and $\mathsf{R}$) {\color{red} waits till the local time is $\Delta + 4\TimeBC$}.
 Upon receiving $\R$ and $(d, B(x))$ from the broadcast of $\mathsf{S}$ and $\mathsf{I}$ respectively, where $\PartySet \setminus \R \in \Z_s$,
 it set the variable $\authCompleted_{(\S, \INT, \Receiver)}$ to $1$ if $\OK$ is received from the broadcast of $\mathsf{S}$. 
                Upon setting $\authCompleted_{(\S, \INT, \Receiver)}$ to $1$, $\INT$ sets
        $\ICSig(\mathsf{S}, \mathsf{I}, \mathsf{R}, s) = F(x)$. \\[.1cm] 
\end{myitemize}
  \centerline{\underline{Protocol $\Reveal(\PartySet,\AdvStructure_s, \AdvStructure_a, \mathsf{S},\mathsf{I},\mathsf{R},s)$}}
\begin{myitemize}
\item[--] {\bf Revealing Signing Polynomial and Verification Points}:  Each party $P_i$ (including $\mathsf{S}, \mathsf{I}$ and $\mathsf{R}$) 
 does the following, if 
 $\authCompleted_{(\S, \INT, \Receiver)}$ is set to $1$.
    \begin{myitemize}
    \item If $P_i = \mathsf{I}$ then send 
     $F(x)$ to $\mathsf{R}$, if $\ICSig(\mathsf{S}, \mathsf{I}, \mathsf{R}, s)$ is set to $F(x)$ during $\Auth$.
    \item If $P_i \in \R$, then send $(\alpha_i, v_i, m_i)$ to $\mathsf{R}$. 
    \end{myitemize}
\item[--] {\bf Accepting the IC-Signature}: The following steps are executed only by $\mathsf{R}$, if 
$\authCompleted_{(\S, \INT, \Receiver)}$ is set to $1$  during the protocol $\Auth$.
    \begin{myitemize}
               \item[--] {\color{red} Wait till the local time becomes a multiple of $\Delta$}.
          Upon receiving $F(x)$ from $\mathsf{I}$, where $F(x)$ is a $t$-degree polynomial, proceed as follows.
		\begin{myenumerate}
		\item[1.] If $(\alpha_i, v_i, m_i)$ is received from $P_i \in \R$, then {\it accept} $(\alpha_i, v_i, m_i)$ if either 
		$v_i = F(\alpha_i)$ or $B(\alpha_i) \neq dv_i + m_i$, where $B(x)$ is received from the broadcast of $\mathsf{I}$ during $\Auth$. Otherwise,
		{\it reject} $(\alpha_i, v_i, m_i)$.
    	\item[2.] Wait till a subset of parties $\R' \subseteq \R$ is found, such that $\R \setminus \R' \in \AdvStruct_s$,
	and for every $P_i \in \R'$, the corresponding revealed point $(\alpha_i, v_i, m_i)$ is {\it accepted}. Then, output $s = F(0)$.
		\end{myenumerate}	
	\end{myitemize}
\end{myitemize}
\end{protocolsplitbox}
 
 The proof of Theorem \ref{thm:ICP} is available in Appendix \ref{app:ICP}.
 \begin{theorem}
 \label{thm:ICP}
Protocols $(\Auth, \Reveal)$ satisfy the following properties, except with
  probability at most $\errorAICP \defined \frac{nt}{|\F| - 1}$, where
   $t = \max\{ |Z| :  Z \in \Z_s \}$.
    \begin{myitemize}
       \item[--] If $\mathsf{S}, \mathsf{I}$ and $\mathsf{R}$ are {\it honest}, then the following hold.
       \begin{myitemize}
       		\item[--] {\bf $\AdvStruct_s$-Correctness}: In a synchronous network, each honest party sets $\authCompleted_{(\S, \INT, \Receiver)}$ 
   to $1$ during $\Auth$ at time $\TimeAuth = \Delta + 4\TimeBC$. Moreover $\mathsf{R}$ outputs $s$ during $\Reveal$ which takes $\TimeReveal = \Delta$ time.
   			\item[--] {\bf $\AdvStruct_a$-Correctness}:  In an asynchronous network, each honest 
   party eventually sets $\authCompleted_{(\S, \INT, \Receiver)}$ 
   to $1$ during $\Auth$ and $\mathsf{R}$ eventually outputs $s$ during $\Reveal$.
       \item[--] {\bf Privacy}: The view of $\Adv$ is independent of $s$, irrespective of the network.
    \end{myitemize}
     \item[--] {\bf Unforgeability}: If $\mathsf{S}, \mathsf{R}$ are {\it honest}, $\mathsf{I}$ is corrupt
      and if $\mathsf{R}$ outputs $s' \in \F$ during $\Reveal$, then $s' = s$ holds, irrespective of the network type.      
    \item[--] If $\mathsf{S}$ is {\it corrupt}, $\mathsf{I}, \mathsf{R}$ are {\it honest}
    and if $\mathsf{I}$ sets $\ICSig(\mathsf{S}, \mathsf{I}, \mathsf{R}, s) = F(x)$ during $\Auth$, then the following holds.
    \begin{myitemize}
    	\item[--] {\bf $\AdvStruct_s$-Non-Repudiation}:  In a synchronous network, $\mathsf{R}$ outputs $s = F(0)$ during during $\Reveal$, which takes $\TimeReveal = \Delta$ time.    
    	\item[--] {\bf $\AdvStruct_a$-Non-Repudiation}: In an asynchronous network, $\mathsf{R}$ eventually outputs $s = F(0)$ during during $\Reveal$.
    \end{myitemize}
   \item[--] {\bf Communication Complexity}: Irrespective of the network type, $\Auth$ incurs a communication of $\Order(n^5 \cdot \log{|\F|} \cdot |\sigma|)$ bits, while 
  $\Reveal$ incurs a communication of $\Order(n \cdot \log{|\F|})$ bits.

 \end{myitemize}
 \end{theorem}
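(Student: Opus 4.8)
The plan is to prove each bullet of Theorem~\ref{thm:ICP} separately, using the delivery and agreement guarantees of $\BC$ (Theorem~\ref{thm:BC}) for the synchronous/asynchronous bookkeeping and only elementary facts about degree-$t$ polynomials over $\F$ for the two probabilistic claims. Throughout, write $Z^{\star}$ for the set of corrupt parties (so $Z^{\star}\in\Z_s$ if the network is synchronous and $Z^{\star}\in\Z_a\subset\Z_s$ if it is asynchronous) and $\Hon=\PartySet\setminus Z^{\star}$.

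\textbf{Correctness and privacy (all of $\S,\INT,\Receiver$ honest).} For $\AdvStruct_s$-correctness I would just walk the timeline of $\Auth$: $\S$'s point-to-point messages arrive by time $\Delta$, so every honest $P_i$ broadcasts $(\Received,i)$ by $\Delta$ and, by $\Z_s$-Validity of $\BC$, these are delivered by $\Delta+\TimeBC$; hence at that time the honest $\S$ has all honest parties in $\R$, so $\PartySet\setminus\R\subseteq Z^{\star}\in\Z_s$ and $\S$ broadcasts $\R$; $\INT$ then broadcasts $(d,B(x))$ with $B=dF+M$; $\S$ broadcasts $\OK$ since $B$ has degree $t$ and $dv_j+m_j=B(\alpha_j)$ holds by construction for all $P_j\in\R$; by $\Delta+4\TimeBC=\TimeAuth$ every honest party has $\R$, $(d,B)$ and $\OK$ and sets $\authCompleted_{(\S,\INT,\Receiver)}=1$. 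Since $\TimeAuth$ is a multiple of $\Delta$ (expand $\TimeBC=3\Delta+\TimePW$), in $\Reveal$ the honest $\INT$ and honest members of $\R$ reveal their data at $\TimeAuth$, $\Receiver$ holds it by $\TimeAuth+\Delta$, accepts every honest point (since $v_i=F(\alpha_i)$), and taking $\R'=\R\cap\Hon$ (so $\R\setminus\R'\subseteq Z^{\star}\in\Z_s$) outputs $F(0)=s$; thus $\TimeReveal=\Delta$. The $\AdvStruct_a$-correctness statement is the same argument with ``by time $X$'' replaced by ``eventually'', invoking $\Z_a$-Liveness and $\Z_a$-Fallback Validity/Consistency of $\BC$ instead of the synchronous guarantees. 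For privacy, the only data $\Adv$ sees is $\{(\alpha_i,v_i,m_i)\}_{P_i\in Z^{\star}}$ together with the public values $d$ and $B(x)=dF(x)+M(x)$ (the $\Received$/$\R$/$\OK$ broadcasts carry no information and all of $\Reveal$ goes only to the honest $\Receiver$). Conditioning on the $\alpha_i$'s and on $d$, I would argue that $\{v_i\}_{P_i\in Z^{\star}}$ is uniform over $\F^{|Z^{\star}|}$ (as $|Z^{\star}|\le t$, $F$ is a uniform degree-$t$ polynomial with $F(0)=s$, and $0\notin\{\alpha_i\}$), and that, conditioned further on $\{v_i\}$, the pair $(\{m_i\}_{P_i\in Z^{\star}},B(x))$ is distributed as ``$B$ uniform of degree $t$, $m_i=B(\alpha_i)-dv_i$'' --- a law that does not mention $F(0)$. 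Hence the whole view is independent of $s$, in either network.

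\textbf{Unforgeability ($\S,\Receiver$ honest, $\INT$ corrupt).} If $\Receiver$ outputs during $\Reveal$ then $\authCompleted_{(\S,\INT,\Receiver)}=1$, so by honest-sender validity of $\BC$ the honest $\S$ truly broadcast $\OK$ and an $\R$ with $\PartySet\setminus\R\in\Z_s$; in particular $B(\alpha_j)=dF(\alpha_j)+M(\alpha_j)$ for every $P_j\in\R$. Writing $F'$ for the polynomial $\INT$ sends in $\Reveal$, an honest $P_i\in\R$ revealing its true $(\alpha_i,v_i,m_i)$ is therefore accepted by $\Receiver$ \emph{iff} $F'(\alpha_i)=F(\alpha_i)$ (the clause $B(\alpha_i)\neq dv_i+m_i$ is false). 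Let $G\subseteq\R\cap\Hon$ be the set of such accepted honest parties; since the accepted members of $\R'$ lie in $G\cup Z^{\star}$, the output rule $\R\setminus\R'\in\Z_s$ forces $(\R\cap\Hon)\setminus G\in\Z_s$. Next I would show $\R\cap\Hon\notin\Z_s$: synchronously $\R\cap\Hon=\Hon$, and $\Hon\notin\Z_s$ else $\Hon,Z^{\star}\in\Z_s$ cover $\PartySet$, contradicting $\Q^{(2)}(\PartySet,\Z_s)$; asynchronously $\R\supseteq\PartySet\setminus Z_0$ for some $Z_0\in\Z_s$, so $\R\cap\Hon\supseteq\PartySet\setminus(Z_0\cup Z^{\star})$ with $Z^{\star}\in\Z_a$, and $\R\cap\Hon\in\Z_s$ would make $\R\cap\Hon,Z_0\in\Z_s$ and $Z^{\star}\in\Z_a$ cover $\PartySet$, contradicting $\Q^{(2,1)}(\PartySet,\Z_s,\Z_a)$. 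Consequently $G\neq\emptyset$ whenever $\Receiver$ outputs, i.e.\ some honest $\alpha_i$ is a root of $F-F'$; and if moreover the output $s'=F'(0)$ differs from $s$ then $F'\neq F$, so $F-F'$ is a nonzero degree-$\le t$ polynomial with at most $t$ roots in $\F\setminus\{0\}$. Since $\INT$'s entire view up to the moment it commits to $F'$ (its own messages plus the public broadcasts) is independent of the honest verifiers' uniformly-random evaluation points, a union bound over the $\le n$ honest verifiers gives $\Pr[s'\neq s]\le \tfrac{nt}{|\F|-1}=\errorAICP$. This independence-of-commitment step, together with the two set-covering arguments pinning down $\R\cap\Hon\notin\Z_s$ (the only place $\Q^{(2,1)}$ is genuinely needed), is the crux; the rest is routine.

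\textbf{Non-repudiation ($\S$ corrupt, $\INT,\Receiver$ honest, $\INT$ holds $\ICSig=F(x)$).} Since $\INT$ set $\authCompleted_{(\S,\INT,\Receiver)}=1$, it broadcast $(d,B(x))$ with $B=dF+M$ for the $F,M$ it received, $\S$ broadcast $\OK$, and $\S$ broadcast an $\R$ with $\PartySet\setminus\R\in\Z_s$; in an asynchronous network $\Z_a$-Fallback Validity/Consistency of $\BC$ ensures $\Receiver$ eventually reaches the same state (by time $\TimeAuth$ synchronously). In $\Reveal$ each honest $P_j\in\R$ reveals its received $(\alpha_j,v_j,m_j)$; I claim it is accepted except with probability $\le\tfrac1{|\F|-1}$. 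Indeed, if $v_j=F(\alpha_j)$ it is accepted outright, and if $v_j\neq F(\alpha_j)$ it is rejected only when $B(\alpha_j)=dv_j+m_j$, i.e.\ $d\,(F(\alpha_j)-v_j)=m_j-M(\alpha_j)$ --- but the corrupt $\S$ must fix $(\alpha_j,v_j,m_j)$ and the $F,M$ sent to $\INT$ \emph{before} the honest $\INT$ draws the uniform $d\in\F\setminus\{0\}$, so this happens for at most one value of $d$. A union bound over the $\le n$ honest members of $\R$ yields total error $\le\tfrac{n}{|\F|-1}\le\errorAICP$; conditioned on all of them being accepted, $\Receiver$ takes $\R'=\R\cap\Hon$, whence $\R\setminus\R'=\R\cap Z^{\star}\subseteq Z^{\star}\in\Z_s$ and the output rule fires, producing $F(0)=s$ (within $\TimeReveal=\Delta$ synchronously, eventually otherwise). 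Finally, the communication bounds follow by summing the $\BC$ costs of the $O(n)$ broadcasts of $\Auth$ --- dominated by broadcasting $(d,B(x))$ of size $O(n\log|\F|)$ and the $n$ broadcasts of $(\Received,i)$ --- and the $O(n)$ point-to-point messages of size $O(\log|\F|)$ in $\Reveal$.
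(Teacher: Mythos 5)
Your proposal is correct and follows essentially the same route as the paper's proof: a $\BC$-driven timeline walkthrough for correctness, a conditional-independence argument for privacy, and the two polynomial-evaluation union bounds combined with the $\Q^{(2)}(\PartySet,\Z_s)$ and $\Q^{(2,1)}(\PartySet,\Z_s,\Z_a)$ set-covering conditions for unforgeability and non-repudiation. The only cosmetic difference is that for unforgeability you argue via $\R\cap\Hon\notin\Z_s$ and the accepted-honest set $G$, whereas the paper derives $\Q^{(1,1)}(\R,\Z_s,\Z_a)$ and directly concludes that the certifying set $\R'$ must contain an honest verifier---the two are interchangeable.
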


Looking ahead, in our VSS protocols, there will be several instances of ICP running, with different parties playing the role of $\S, \INT$ and $\Receiver$. It will be convenient to use the following notations while invoking instances of
 ICP. 
 \begin{notation}[\bf for ICP]
 While using $(\Auth, \Reveal)$, we will say that:
 \begin{myitemize}
 \item[--] ``$P_i$ {\it gives}  $\ICSig(P_i, P_j, P_k, s)$ {\it to} $P_j$" to mean that $P_i$ acts as $\mathsf{S}$ and invokes an instance of
  $\Auth$ with input $s$, where $P_j$ and $P_k$ play the role of $\mathsf{I}$ and $\mathsf{R}$ respectively.
 \item[--]  ``$P_j$ {\it receives} $\ICSig(P_i, P_j, P_k, s)$ {\it from} $P_i$" to mean that
  $P_j$, as $\mathsf{I}$, has set 
   $\authCompleted_{(P_i, P_j, P_k)}$ to $1$
   and $\ICSig(P_i, P_j, P_k, s)$ to some $t$-degree polynomial with $s$ as the constant term
     during the instance of $\Auth$, where $P_i$ and $P_k$
   play the role of $\mathsf{S}$ and $\mathsf{R}$ respectively.   
 \item[--] ``$P_j$ {\it reveals} $\ICSig(P_i, P_j, P_k, s)$ {\it to} $P_k$" to mean $P_j$, as $\mathsf{I}$, invokes an instance
  of $\Reveal$, with $P_i$ and $P_k$ playing the role of $\mathsf{S}$
  and $\mathsf{R}$ respectively.
 \item[--] ``$P_k$ {\it accepts} $\ICSig(P_i, P_j, P_k, s)$" to mean that
  $P_k$, as $\mathsf{R}$, outputs $s$ during the instance of $\Reveal$,
  invoked by $P_j$ as $\mathsf{I}$, with $P_i$ playing the role of $\mathsf{S}$.
\end{myitemize} 
 \end{notation}
\subsection{Linearity of IC Signature}
\label{sec:ICPLinearity}
Our ICP satisfies the {\it linearity} property, provided ``special care" is taken while generating the IC-signatures. In more detail, 
 consider a {\it fixed} $\mathsf{S}, \mathsf{I}$ and $\mathsf{R}$ and let $s_a$ and $s_b$ be two values, such that $\mathsf{I}$ has received $\ICSig(\mathsf{S}, \mathsf{I}, \mathsf{R}, s_a)$ and 
  $\ICSig(\mathsf{S}, \mathsf{I}, \mathsf{R}, s_b)$ from $\mathsf{S}$,
   through instances $\Auth^{(a)}$ and $\Auth^{(b)}$ of $\Auth$ respectively, where {\it all} the following conditions are satisfied.
  \begin{myitemize}
  \item[--] Supporting verifiers $\R_a$ and $\R_b$, during $\Auth^{(a)}$ and $\Auth^{(b)}$, are the {\it same}.
  \item[--] For $i = 1, \ldots, n$, corresponding to the verifier $P_i$, $\mathsf{S}$ has used the {\it same} $\alpha_i \in \F \setminus \{0\}$, to compute the verification points, during 
  $\Auth^{(a)}$ and $\Auth^{(b)}$
  \item[--] $\mathsf{I}$ has used the {\it same} linear combiner $d \in \F \setminus \{ 0\}$ during the instances $\Auth^{(a)}$ and $\Auth^{(b)}$, to compute the linearly-combined masked polynomials.
  \end{myitemize}
  Let $s \defined c_1 \cdot s_a + c_2 \cdot s_b$, where $c_1, c_2$ are {\it publicly known} constants from $\F$. It then follows that if all the above conditions are satisfied, then $\INT$
  can {\it locally} compute $\ICSig(\mathsf{S}, \mathsf{I}, \mathsf{R}, s)$ from $\ICSig(\mathsf{S}, \mathsf{I}, \mathsf{R}, s_a)$ and $\ICSig(\mathsf{S}, \mathsf{I}, \mathsf{R}, s_b)$.
   Namely, $\INT$ can set $\ICSig(\mathsf{S}, \mathsf{I}, \mathsf{R}, s) = c_1 \cdot \ICSig(\mathsf{S}, \mathsf{I}, \mathsf{R}, s_a) + c_2 \cdot \ICSig(\mathsf{S}, \mathsf{I}, \mathsf{R}, s_b)$.
   On the other hand, let the verifier $P_i \in \R_a$ hold the verification points $(\alpha_i, v_{a, i}, m_{a, i})$ and $(\alpha_i, v_{b, i}, m_{b, i})$, corresponding to
   $\ICSig(\mathsf{S}, \mathsf{I}, \mathsf{R}, s_a)$ and $\ICSig(\mathsf{S}, \mathsf{I}, \mathsf{R}, s_b)$ respectively. Then 
   $P_i$ can {\it locally} compute $(\alpha_i, v_i, m_i)$ as its verification point corresponding to $\ICSig(\mathsf{S}, \mathsf{I}, \mathsf{R}, s)$, where
   $v_i = c_1 \cdot v_{a, i} + c_2 \cdot v_{b, i}$ and $m_i = c_1 \cdot m_{a, i} + c_2 \cdot m_{b, i}$. During the protocol $\Reveal$,
   to reveal $\ICSig(\mathsf{S}, \mathsf{I}, \mathsf{R}, s)$, 
   the intermediary $\INT$
   can reveal $\ICSig(\mathsf{S}, \mathsf{I}, \mathsf{R}, s)$ to $\Receiver$, while each verifier $P_i \in \R_a$ can reveal the verification information $(\alpha_i, v_i, m_i)$.
   To accept $(\alpha_i, v_i, m_i)$, the receiver $\Receiver$ either checks for the ``consistency" of $(\alpha_i, v_i)$ with $\ICSig(\mathsf{S}, \mathsf{I}, \mathsf{R}, s)$,
   or the ``inconsistency" of masked polynomial $B(x) \defined c_1 \cdot B_a(x) + c_2 \cdot B_b(x)$ with $(d, v_i, m_i)$; here
   $B_a(x)$ and $B_b(x)$ denote the masked polynomials, made public by $\INT$, during the instances $\Auth^{(a)}$ and $\Auth^{(b)}$ respectively, {\it both} computed with respect to the linear
   combiner $d$.

   Looking ahead, we will require the linearity property from ICP, when used in our VSS protocols, where there will be multiple instances of $\Auth$ running, involving the {\it same} 
   $(\S, \INT, \Receiver)$ triplet.
  To achieve this, we will ensure that in all the $\Auth$ instances invoked during VSS
   involving the {\it same} triplet  $(\S, \INT, \Receiver)$, the signer uses the same non-zero evaluation point $\alpha_{\S, \INT, \Receiver, i}$ for the verifier $P_i$, while distributing
   verification information to $P_i$, as part of the respective $\Auth$ instances. Similarly, $\S$ should find and 
   make public a {\it common} set of supporting verifiers $\R$, on behalf of {\it all} the instances of $\Auth$. 
   And finally, $\INT$ should use the same non-zero random linear combiner $d$, to compute the masked polynomials for all the instances of $\Auth$ and once computed, it should {\it together}
   make public
   $d$ and the masked polynomials for {\it all} the instances of $\Auth$. 
   
   In the rest of the paper, we will use the term ``{\it parties follow linearity principle while generating IC-signatures}", to mean
   that the underlying instances of $\Auth$ are invoked as above.
\subsection{Default IC Signature} 
In our VSS protocols, we will also encounter situations where some {\it publicly known} value $s$ and a triplet $(\S, \INT, \Receiver)$ exist. Then $\INT$ can 
 {\it locally} compute $\ICSig(\S, \INT, \Receiver, s)$ by setting $\ICSig(\S, \INT, \Receiver, s)$ to the {\it constant} polynomial $F(x) = s$. {\it Each} verifier $P_i \in \PartySet$ {\it locally} sets $(\alpha_{\S, \INT, \Receiver, i}, v_i, m_i)$ as its verification information, where $v_i = m_i = s$.
   Moreover, the set of supporting verifiers $\R$ is set as $\PartySet$.
     Notice that the way in which $\ICSig(\S, \INT, \Receiver, s)$ and the verification information is set guarantees that, later, if an {\it honest} $\INT$ reveals
   $\ICSig(\S, \INT, \Receiver, s)$ to an {\it honest} $\Receiver$ during $\Reveal$, then $\Receiver$ {\it always} outputs $s$.
   
   In the rest of the paper, we will use the term ``{\it parties set $\ICSig(\S, \INT, \Receiver, s)$ to the default value}", to mean the above.

\section{Network Agnostic Verifiable Secret Sharing (VSS)}
\label{sec:VSS}
This section presents our network-agnostic VSS protocol, which allows a designated dealer to generate a linear secret-sharing with IC-signatures for its input. We first define the notion of linear 
 secret-sharing with IC-signatures.
 \begin{definition}[\bf Linear Secret Sharing with IC-Signatures]
 \label{def:SS}
 A value $s \in \F$ is said to be linearly secret-shared with IC-signatures, if there exist shares $s_1, \ldots, s_{|\Z_s|} \in \F$ where $s = s_1 + \ldots + s_{|\Z_s|}$. Moreover, for $q = 1, \ldots, |\Z_s|$, 
   there exists some publicly-known {\it core-set} $\W_q \subseteq S_q$, such that {\it all} the following hold.
 \begin{myitemize}
 \item[--] $\Z_s$ satisfies the $\Q^{(1)}(\W_q, \Z_s)$ condition and 
  all (honest) parties in the set $S_q$ have the share $s_q$.
 \item[--] Every honest $P_i \in \W_q$ has the IC-signature 
  $\ICSig(P_j, P_i, P_k, s_q)$ of every $P_j \in \W_q$ for every $P_k \not \in S_q$.
   Moreover, if any corrupt $P_j \in \W_q$ has $\ICSig(P_j, P_i, P_k, s'_q)$ of any honest $P_i \in \W_q$ for any $P_k  \not \in S_q$, then
   $s'_q = s_q$ holds. Furthermore, all the underlying IC-signatures satisfy the linearity property.
 \end{myitemize}
 The vector of information corresponding to a linear secret-sharing with IC-signature of $s$ is denoted by $[s]$, which includes the share $s_1, \ldots, s_{|\Z_s|}$, the core sets $\W_1, \ldots, \W_{|\Z_s|}$
  and IC-signatures $\{ \ICSig(P_j, P_i, P_k, s_q) \}_{P_j, P_i \in \W_q, P_k  \not \in S_q}$. For convenience, we denote the $q^{th}$ share of $s$, corresponding to $S_q$, by
  $[s]_q$.
 
 A vector of values $\vec{S} = (s^{(1)}, \ldots, s^{(L)})$ where $L \geq 1$ is said to be linearly secret-shared with IC-signatures, if each $s^{(\ell)} \in \vec{S}$ is linearly secret-shared with IC-signatures
  and if there exist common core sets $\W_1, \ldots, \W_{|\Z_s|}$, corresponding to the secret-sharings $[s^{(1)}], \ldots, [s^{(\ell)}]$.
 \end{definition}
 
  If $\vec{S} = (s^{(1)}, \ldots, s^{(L)})$ are linearly secret-shared with IC-signatures, then the parties can {\it locally} compute any publicly-known function of these secret-shared values.
    In more detail, let    
   $c_1, \ldots, c_{L} \in \F$ be publicly-known constants and let $s \defined c_1 \cdot s^{(1)} + \ldots + c_{L} \cdot s^{(L)}$.
   Then 
    the following holds:
    \[ c_1 \cdot [s^{(1)}] + \ldots + c_{L} \cdot [s^{(L)}] = [s],    \]
    where the core-sets corresponding to $[s]$ are
    $\W_1, \ldots, \W_{|\Z_s|}$.
         And corresponding to each $S_q \in \ShareSpec_{\Z_s}$, the share $[s]_q$
    for each (honest) party in $S_q$ can be computed locally as $c_1 \cdot [s^{(1)}]_q + \ldots + c_{\ell} \cdot [s^{(\ell)}]_q$. 
     Moreover, every (honest) $P_i \in \W_q$ can compute the IC-signature $\ICSig(P_j, P_i, P_k, [s]_q)$
    of every $P_j \in \W_q$ for every $P_k \not \in S_q$, from $\ICSig(P_j, P_i, P_k, [s^{(1)}]_q), \ldots, \ICSig(P_j, P_i, P_k, [s^{(\ell)}]_q)$. 
    In the rest of the paper, we will say that the ``{\it parties in ${\cal P}$
     locally compute $[c_1 \cdot s^{(1)} + \ldots + c_{\ell} \cdot s^{(\ell)}]$ from $[s^{(1)}], \ldots, [s^{(\ell)}]$}" to mean the above.
\subsection{The VSS Protocol}     
 We present a network-agnostic VSS protocol $\VSS$ (Fig \ref{fig:VSS}). In the protocol, there exists a designated {\it dealer} $\D \in \PartySet$ with input $s$ (the protocol can be easily generalized if
  $\D$ has $L$ inputs). The protocol allows $\D$ to ``verifiably" generate a linear secret-sharing of
  $s$ with IC-signatures. In a {\it synchronous} network, the (honest) parties output 
   $[s]$ after a ``fixed" time, while in an {\it asynchronous} network, they do so {\it eventually}, such that 
  $s$ remains private. The verifiability here guarantees that if $\D$ is {\it corrupt} and some honest party gets an output, then there exists some value, say $s^{\star}$ (which could be different from $s$), 
  such that $s^{\star}$ is linearly secret-shared with IC-signatures. Note that in this case, we {\it cannot} bound the time within which $s^{\star}$ will be secret-shared, since a potentially {\it corrupt} $\D$ may delay sending the
  required messages and the parties will {\it not} be knowing the exact network type. A detailed overview of the protocol has been already presented in Section \ref{ssec:VSS} and so we directly present the protocol.

\begin{protocolsplitbox}{$\VSS(\D, \Z_s, \Z_a, s, \SharingSpec_{\Z_s})$}{The network agnostic VSS protocol}{fig:VSS}
\justify
\begin{myitemize}
\item[--] {\bf Distribution of Shares}: $\D$, 
   on having input $s$, randomly chooses $s_1,\dots,s_{|\Z_s|} \in \F$,
    such that $s = s_1 + \dots + s_{|\Z_s|}$. It then sends $s_q$ to all $P_i \in S_q$, for $q = 1, \ldots, |\Z_s|$.
\item[--] {\bf Exchanging IC-Signed Values}: Each $P_i \in \PartySet$ (including $\D$), {\color{red} waits till the local time becomes $\Delta$}. 
 Then, for each $S_q \in \ShareSpec_{\Z_s}$ such that $P_i \in S_q$, 
  upon receiving $s_{qi}$ from $\D$,  give $\ICSig(P_i, P_j, P_k, s_{qi})$ to every $P_j \in S_q$,  for every $P_k \in \PartySet$, such that {\color{blue} the parties
  follow the linearity principle while generating IC-signatures} (see Section \ref{sec:ICPLinearity}).   
\item[--] {\bf Announcing Results of Pairwise Consistency Tests}:
 Each $P_i \in \PartySet$ (including $\D$) {\color{red} waits till the local time becomes $\Delta + \TimeAuth$}
 and then does the following.
   \begin{myitemize}
     \item[--] Upon receiving $\ICSig(P_j, P_i, P_k, s_{qj})$ from $P_j$ for each $S_q \in \ShareSpec$ such that $P_j, P_i \in S_q$, corresponding to every $P_k \in \PartySet$, 
  broadcast $\OK(i, j)$,  if 
   $s_{qi} = s_{qj}$ holds.
    \item[--] Corresponding to every $P_j \in \PartySet$, participate in any instance of $\BC$ initiated by $P_j$ as a sender, to broadcast any $\OK(P_j, \star)$ message.
   \end{myitemize}
\item[--] {\bf Constructing Consistency Graph}: Each $P_i \in \PartySet$ (including $\D$) {\color{red} waits till the local time becomes $\Delta + \TimeAuth + \TimeBC$}
 and then constructs an undirected consistency graph $G^{(i)}$ with $\PartySet$ as the vertex set, where
  the edge $(P_j, P_k)$ is added to $G^{(i)}$, provided $\OK(j, k)$ and $\OK(k, j)$ is received from the broadcast of $P_j$ and $P_k$ respectively (through any mode).
\item[--] {\bf Identification of Core Sets and Public Announcement by the Dealer}: $\D$ {\color{red} waits till its local time is $\Delta + \TimeAuth + \TimeBC$}, and then
 executes the following steps to compute core sets.
	\begin{myitemize}
	\item[--] Once any $S_p \in \ShareSpec_{\Z_s}$
	 forms a clique in the graph $G^{(\D)}$, then for $q = 1, \ldots, |\Z_s|$, compute core-set $\W_{q}$ 
	 and broadcast-set $\BroadcastSet$ with respect to $S_p$ as follows, followed by broadcasting 
	 $(\CandidateCoreSets, \D, S_p, \{ \W_{q}\}_{q = 1, \ldots, |\Z_s|}, \BroadcastSet, \{ s_q\}_{q \in \BroadcastSet})$.\footnote{If there are multiple $S_p$ from $\ShareSpec_{\Z_s}$ which
	 constitute a clique in $G^{(\D)}$, then consider the one with the smallest index.}
	 \begin{myitemize}
	 \item[--] If $S_q$ constitutes a clique in the graph $G^{(\D)}$, then set $\W_{q} = S_q$.
	 \item[--] Else if $(S_p \cap S_q)$ constitutes a clique in $G^{(\D)}$ and $\Z_s$ satisfies the $\Q^{(1)}(S_p \cap S_q, \Z_s)$ condition, 
	 then set $\W_{q} = (S_p \cap S_q)$.
	 \item[--] Else set $\W_{q} = S_q$ and include $q$ to $\BroadcastSet$. 	 
	 \end{myitemize}
         \end{myitemize}   
\item[--] {\bf Identifying Valid Core Sets}: Each $P_i \in \PartySet$ {\color{red} waits till its local time is $\Delta + \TimeAuth + 2\TimeBC$} and then initializes a set $\C_i = \emptyset$. For
 $p = 1, \ldots, |\Z_s|$, party $P_i$ includes $(\D, S_p)$ to $\C_i$ (initialized to $\emptyset$), provided {\it all} the following hold. 
    \begin{myitemize}
   \item[--]   $(\CandidateCoreSets, \D, S_p, \{ \W_{q}\}_{q = 1, \ldots, |\Z_s|}, \BroadcastSet, \{ s_q\}_{q \in \BroadcastSet})$
      is received from the broadcast of $\D$, such that  for $q = 1, \ldots, |\Z_s|$, the following hold.
      \begin{myitemize}
      \item[--] If $q \in \BroadcastSet$, then the set $\W_{q} = S_q$.
      \item[--] If $(q \not \in \BroadcastSet)$, then $\W_{q}$ is either $S_q$ or  $(S_p \cap S_q)$, such that:
         \begin{myitemize}
         \item[--] If $\W_{q} = S_q$, then $S_q$ constitutes a clique in $G^{(i)}$.
         \item[--] Else if $\W_{q} = (S_p \cap S_q)$, then 
      $(S_p \cap S_q)$ constitutes a clique in $G^{(i)}$ and 
      $\Z_s$ satisfies the $\Q^{(1)}(S_p \cap S_q, \Z_s)$ condition.      
         \end{myitemize}
           \end{myitemize}
   \end{myitemize}
\item[--] {\bf Computing Output}: Each $P_i \in \PartySet$ does the following, once $\C_i \neq \emptyset$.
      \begin{myitemize}
       \item[--] For every $S_q \in \ShareSpec_{\Z_s}$ such that $P_i \in \W_q$, corresponding to every 
 $P_j \in \W_q$, reveal $\ICSig(P_j, P_i, P_k, [s]_q)$ to every $P_k \in \S_q \setminus \W_q$ upon computing $[s]_q$ and $\ICSig(P_j, P_i, P_k, [s]_q)$ as follows.
 		\begin{myitemize}
 		\item[--] If $q \in \BroadcastSet$, then set $[s]_q = s_q$, where $s_q$ is received from the broadcast of $\D$,
	    as part of the message
	    $(\CandidateCoreSets, \D, S_p, \{ \W_{q}\}_{q = 1, \ldots, |\Z_s|}, \BroadcastSet, \{ s_q\}_{q \in \BroadcastSet})$. Moreover, for every $P_j \in S_q$ and every $P_k \in \PartySet$, set
	    $\ICSig(P_j, P_i, \allowbreak P_k, [s]_q)$ to the default value. 
	    \item[--] Else, set $[s]_q$ to $s_{qi}$, where $s_{qi}$ was received from $\D$. Moreover, for every $P_j \in \W_q$ and every $P_k \in \PartySet$, set
	    $\ICSig(P_j, P_i, P_k, [s]_q)$ to  $\ICSig(P_j, P_i, P_k, s_{qj})$, received from $P_j$.
 		\end{myitemize}
       \item[--] For every $S_q \in \ShareSpec_{\Z_s}$ such that $P_i \in S_q \setminus W_q$, compute $[s]_q$ as follows.
        \begin{myitemize}
	    \item[--] Check if there exists any $P_j \in \W_q$ and a value $s_{qj}$, such that $P_i$ has accepted
  $\ICSig(P_k, P_j, P_i, s_{qj})$, corresponding to every $P_k \in \W_q$. Upon finding such a $P_j$,
   set $[s]_q = s_{qj}$. 
	    \end{myitemize}
	  \item[--] {\color{red} Wait till the local time becomes $\Delta + \TimeAuth + 2\TimeBC + \TimeReveal$}. Upon setting $\{[s]_q\}_{P_i \in S_q}$ to some value, output $\W_1, \ldots, \W_{|\Z_s|}$,
	     $\{[s]_q\}_{P_i \in S_q}$ and $\ICSig(P_j, P_i, P_k, [s]_q)_{P_j, P_i \in \W_q, P_k \not \in S_q}$.     
     \end{myitemize}
\end{myitemize}
\end{protocolsplitbox}

The properties of $\VSS$ as stated in Theorem \ref{thm:VSS} are proved in Appendix \ref{app:VSS}.
\begin{theorem}
\label{thm:VSS}
Protocol $\VSS$ achieves the following, except with a probability of $\Order(|\ShareSpec_{\Z_s}| \cdot n^2 \cdot \errorAICP)$, where $\D$ has input $s \in \F$ for $\VSS$ and where 
 $\TimeVSS = \Delta + \TimeAuth + 2\TimeBC + \TimeReveal$. 
   \begin{myitemize}
   \item[--] If $\D$ is honest, then the following hold.
         \begin{myitemize}
         \item[--] {\bf $\Z_s$-correctness}: In a synchronous network, the honest parties output $[s]$ at time $\TimeVSS$. 
         \item[--] {\bf $\Z_a$-correctness}: In an asynchronous network, the honest parties eventually output $[s]$.
         \item[--] {\bf Privacy}: Adversary's view remains independent of $s$ in any network.  
         \end{myitemize}   
   \item[--] If $\D$ is corrupt, then the following hold.
          \begin{myitemize}      
           \item[--]  {\bf $\Z_s$-commitment}: In a synchronous network, either no honest party computes any output or there exists some
             $s^{\star} \in \F$, such that the honest parties output  $[s^{\star}]$.
           Moreover, if any honest party computes its output at time $T$, then all honest parties
		       compute their required output  by time $T + \Delta$.  
	          \item[--]  {\bf $\Z_a$-commitment}: In an asynchronous network, either no honest party computes any output or there exists some
             $s^{\star} \in \F$, such that the honest parties eventually output $[s^{\star}]$.      
          \end{myitemize}
   \item[--] {\bf Communication Complexity}: $\Order(|\Z_s| \cdot n^8 \cdot \log{|\F|} \cdot |\sigma|)$ bits are communicated by the honest parties.
   \end{myitemize}
 \end{theorem}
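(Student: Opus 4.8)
The plan is to walk through protocol $\VSS$ (Fig.~\ref{fig:VSS}) phase by phase, tracking the local times at which each step completes in a synchronous network (and "eventually" in an asynchronous one), and invoking the guarantees of $\ICP$ (Theorem~\ref{thm:ICP}) and $\BC$ (Theorem~\ref{thm:BC}) together with the combinatorial facts about $\Z_s,\Z_a$.

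\textbf{Honest dealer.} First I would argue that all honest $P_i \in S_q$ receive the same share $s_q$ (within $\Delta$ in a synchronous network, eventually otherwise), so every pair of honest parties lying in a common $S_q$ successfully exchanges IC-signatures on \emph{equal} shares; here I use $\Z_s$/$\Z_a$-correctness of $\ICP$, noting that only the authentication phase needs to complete and that it does so whenever the signer and intermediary are honest, irrespective of the (possibly corrupt) receiver. Hence every honest pair becomes adjacent in every honest party's consistency graph, and $\BC$-consistency (with its fallback variant in the asynchronous case) makes these graphs identical among honest parties. Since the set $\specialS$ of honest parties equals $\PartySet\setminus Z^{\star}\in\ShareSpec_{\Z_s}$ for the corrupted $Z^{\star}$, $\D$ always finds a clique $S_p$ and broadcasts a valid $\CandidateCoreSets$ message; the timeouts ensure that in a synchronous network this is done by time $\Delta+\TimeAuth+\TimeBC$, so all honest parties accept $(\D,S_p)\in\C_i$ by $\Delta+\TimeAuth+2\TimeBC$, perform the outsider revelations, and output $[s]$ by $\TimeVSS$; in an asynchronous network the same happens eventually, using the fallback guarantees of $\BC$ and $\Z_a$-correctness/non-repudiation of $\ICP$ for the revelations. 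I would then verify the output literally matches Definition~\ref{def:SS}: $\Q^{(1)}(\W_q,\Z_s)$ holds by each of rules (A)/(B)/(C); all honest parties of $S_q$ hold $s_q$; the IC-signatures are present (for (A)/(B) because the linearity principle was followed, for (C) because default IC-signatures on the publicly known $s_q$ are used) and linearity is preserved.

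\textbf{Privacy (honest dealer).} The key point is that rule (C) never fires on the index $h$ with $S_h=\specialS$. In a synchronous network $S_h$ is itself a clique in $G^{(\D)}$ by the time $\D$ computes core sets, so rule (A) gives $\W_h=S_h$; in an asynchronous network $S_p\cap S_h$ is a clique (being an induced subgraph of the clique $S_p$) and $\Q^{(1)}(S_p\cap S_h,\Z_s)$ holds precisely by the $\Q^{(2,1)}(\PartySet,\Z_s,\Z_a)$ condition (since $S_h=\PartySet\setminus Z^{\star}$ with $Z^{\star}\in\Z_a$ and $S_p=\PartySet\setminus Z_p$ with $Z_p\in\Z_s$), so rule (B) applies. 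Either way $h\notin\BroadcastSet$, so $s_h$ is never broadcast and IC-signatures on it are only ever revealed towards parties in $S_h\setminus\W_h$, all of which are honest and already hold $s_h$. Everything else the adversary sees in the relevant $\ICP$ instances is independent of $s_h$ by $\ICP$-privacy (at most $t$ corrupt verifiers learn at most $t$ points on a degree-$t$ signing polynomial), so $s=s_1+\dots+s_{|\Z_s|}$ is uniformly distributed in the adversary's view.

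\textbf{Corrupt dealer.} Suppose some honest $P_i$ outputs; then it accepted $(\D,S_p)\in\C_i$, i.e.\ it received a \emph{valid} $\CandidateCoreSets$ message from $\D$'s broadcast. By $\Z_s$-consistency (resp.\ $\Z_a$-weak consistency plus fallback-consistency) of $\BC$, every honest party receives the same message, and the honest consistency graphs agree. For every $q\notin\BroadcastSet$, $\W_q$ is a clique of the common graph with $\Q^{(1)}(\W_q,\Z_s)$, hence contains an honest party; I would show all honest members of $\W_q$ hold a common share $s^{\star}_q$ (an edge between two honest parties certifies equality of their shares via the received IC-signature and the $\OK$ check) together with one another's IC-signatures on $s^{\star}_q$, and that no honest member holds any IC-signature on a different value (an honest intermediary only accepted a signature after the $\OK$-check matched its own share). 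For $q\in\BroadcastSet$ I take $s^{\star}_q=s_q$ with default IC-signatures. Then all honest parties output $[s^{\star}]$ with $s^{\star}=\sum_q s^{\star}_q$, the outsider-revelation step delivering $s^{\star}_q$ to the honest parties of $S_q\setminus\W_q$ soundly by $\ICP$-unforgeability/non-repudiation. The "$T\to T+\Delta$" clause follows from $\Z_s$-fallback-consistency of $\BC$ on the $\CandidateCoreSets$ message plus the fixed $\TimeReveal=\Delta$ cost of revelation; the asynchronous clause is the "eventual" analogue. Finally, the error bound is a union bound of the $\errorAICP$ per-instance failure over the $\Order(|\ShareSpec_{\Z_s}|\cdot n^2)$ IC-signature threads used, and the communication is dominated by the $\Order(|\Z_s|\cdot n^3)$ invocations of $\Auth$ at $\Order(n^5\log|\F|\,|\sigma|)$ each, giving $\Order(|\Z_s|\cdot n^8\log|\F|\,|\sigma|)$, with the $\OK$-broadcasts, $\D$'s $\CandidateCoreSets$ broadcast and the $\Reveal$ calls being lower order.

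\textbf{Main obstacle.} I expect the delicate part to be the corrupt-dealer commitment: one must simultaneously argue that (i) all honest parties agree on the same core sets and the same consistency graph despite a misbehaving $\D$ and the $\BC$ fallback mechanism, (ii) each approved core set pins down a unique, consistently-held share equipped with the full bundle of pairwise IC-signatures demanded by Definition~\ref{def:SS}, and (iii) the outsider-revelation step cannot be subverted. Tightly intertwined with this is the honest-dealer privacy argument in the asynchronous case, where $\D$ may be forced to compute core sets against some candidate $S_p\neq\specialS$ and one must still rule out rule (C) ever firing on $\specialS$ --- which is exactly the place where the $\Q^{(2,1)}(\PartySet,\Z_s,\Z_a)$ condition is essential.
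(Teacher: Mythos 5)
Your proposal is correct and follows essentially the same route as the paper's proof, which decomposes Theorem~\ref{thm:VSS} into one lemma per case (honest/corrupt dealer $\times$ synchronous/asynchronous) plus a communication lemma, and proves each by tracking local-time deadlines, invoking the $\BC$ and $\ICP$ guarantees, and using the $\Q^{(2,1)}$ condition exactly where you use it (ruling out rule~(C) on $\specialS$ in the asynchronous honest-dealer case, and ensuring $\Q^{(1)}(\W_q,\Z_s)$ throughout). The union-bound tally of $\errorAICP$ failures and the $\Order(|\Z_s|\cdot n^3)$-instances-of-$\Auth$ dominance argument for communication also coincide with the paper's.
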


\section{Network Agnostic Reconstruction Protocols and Secure Verifiable Multicast}
\label{sec:PVMT}
 Let $s$ be a value which is linearly secret-shared with IC-signatures and let $S_q \in \ShareSpec_{\Z_s}$. Moreover, let $\ReceiverSet \subseteq \PartySet$
  be a designated set.  Then protocol 
  $\RecShare([s], S_q, \ReceiverSet)$ allows all the (honest) parties in $\ReceiverSet$ to reconstruct the share $[s]_q$ without disclosing any additional information.
   For this, every $P_i \in \W_q$ reveals 
   $[s]_q$ to {\it all} the parties {\it outside} $\W_q$, who are in $\ReceiverSet$
    (the parties in $\W_q$ who are in $\ReceiverSet$ {\it already} have $[s]_q$). To ensure that $P_i$ {\it does not} cheat, $P_i$ actually reveals the IC-signature of {\it every} party in 
  $\W_q$ on the revealed $[s]_q$. The idea here is that since $\W_q$ has {\it at least} one {\it honest} party (irrespective of the network type), 
  a potentially {\it corrupt} $P_i$ will fail to reveal the signature of an {\it honest} party from $\W_q$ on an {\it incorrect} $[s]_q$.
  On the other hand, an {\it honest} $P_i$ will be able to reveal the signature of {\it all} the parties in $\W_q$ on $[s]_q$.
  
  Based on $\RecShare$, we design another protocol $\Rec([s], \ReceiverSet)$, which allows all the (honest) parties in $\ReceiverSet$ to reconstruct $s$. 
 The idea is to run an instance of $\RecShare$ for {\it every} $S_q \in \ShareSpec_{\Z_s}$.
   Since the protocols are standard, we present them 
  and prove there properties (Lemma \ref{lemma:RecShare} and Lemma \ref{lemma:Rec})
  in Appendix \ref{app:Rec}.
\begin{lemma}
\label{lemma:RecShare}
Let $s$ be a value which is linearly secret-shared with IC signatures, let $S_q \in \ShareSpec_{\Z_s}$ be a designated set and let $\ReceiverSet \subseteq \PartySet$ be a designated
 set of receivers. Then protocol $\RecShare$ achieves the following.
  \begin{myitemize}
  \item[--] {\bf $\Z_s$-correctness}: In a synchronous network, all honest parties in $\ReceiverSet$ output $[s]_q$ after time $\TimeRecShare = \TimeReveal$, 
   except with a probability of $\Order(n^2 \cdot \errorAICP)$. 
   \item[--]  {\bf $\Z_a$-correctness}: In an asynchronous network, all honest parties in $\ReceiverSet$ eventually output $[s]_q$, 
   except with a probability of $\Order(n^2 \cdot \errorAICP)$. 
   \item[--] {\bf Privacy}: If $\ReceiverSet$ consists of only honest parties, then the view of the adversary remains independent of $[s]_q$.
   \item[--] {\bf Communication Complexity}: $\Order(|\ReceiverSet| \cdot n^3 \cdot \log{|\F|})$ bits are communicated.   
  \end{myitemize}
\end{lemma}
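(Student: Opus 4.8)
The plan is to derive each property of $\RecShare$ directly from the guarantees of the underlying ICP (Theorem~\ref{thm:ICP}) and the structure imposed by Definition~\ref{def:SS}. Recall that $\RecShare([s], S_q, \ReceiverSet)$ asks every $P_i \in \W_q$ to reveal to every $P_k \in \ReceiverSet \setminus \W_q$ the signatures $\{\ICSig(P_j, P_i, P_k, [s]_q)\}_{P_j \in \W_q}$, and $P_k$ outputs $[s]_q$ as soon as it finds a single $P_i \in \W_q$ for which it accepts, from every $P_j \in \W_q$, the revealed $\ICSig(P_j, P_i, P_k, \cdot)$ on a common value (the honest parties of $\ReceiverSet \cap S_q$ already hold $[s]_q$ and need no reconstruction). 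The structural fact used throughout is that $\W_q$ contains at least one honest party, irrespective of the network type: $\Z_s$ satisfies $\Q^{(1)}(\W_q, \Z_s)$ by Definition~\ref{def:SS}, and the set of actually corrupt parties always lies in $\Z_s$ (in an asynchronous network because $\Z_a \subset \Z_s$), so no set in $\Z_s$ covers $\W_q$. Fix one such honest $P_h \in \W_q$.

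For correctness I would argue two directions. \emph{Completeness:} by Definition~\ref{def:SS}, $P_h$ holds $\ICSig(P_j, P_h, P_k, [s]_q)$ for every $P_j \in \W_q$ and every $P_k \not\in S_q$, and it reveals all of these during $\Reveal$. If $P_j$ is honest, an honest $P_k$ accepts by the correctness of ICP; if $P_j$ is corrupt, $P_k$ accepts $[s]_q$ by the non-repudiation of ICP ($\S = P_j$, $\INT = P_h$, $\Receiver = P_k$, with $P_h$ having set its signature during the $\Auth$ instance inside $\VSS$). Hence every honest $P_k$ eventually finds $P_h$ and outputs $[s]_q$ --- in a synchronous network by time $\TimeRecShare = \TimeReveal$ since each $\Reveal$ instance completes within $\TimeReveal$, and in an asynchronous network eventually, this being the only place the $\Z_s$- and $\Z_a$-variants differ. \emph{Soundness:} an honest $P_k$ never outputs $v \neq [s]_q$. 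Indeed, to do so $P_k$ would have to accept $\ICSig(P_h, P_i, P_k, v)$ from some corrupt $P_i \in \W_q$; but by the second clause of Definition~\ref{def:SS} the only value on which a corrupt $P_i$ legitimately holds the honest $P_h$'s signature is $[s]_q$, and presenting a signature on any $v \neq [s]_q$ is ruled out by the unforgeability of ICP (honest signer $P_h$, corrupt intermediary $P_i$, honest receiver $P_k$, whose ``true'' message is $[s]_q$). Union-bounding the error $\errorAICP$ over the $\Order(n^2)$ ICP instances actually invoked here --- the $\le |\W_q|\cdot|\ReceiverSet|$ triples $(P_j, P_h, P_k)$ used in completeness and the $\le |\W_q|\cdot|\ReceiverSet|$ triples $(P_h, P_i, P_k)$ used in soundness --- gives the claimed error.

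Privacy is immediate: the only messages of $\RecShare$ not already part of the preceding $\VSS$ execution are those of the $\Reveal$ sub-instances, and in $\Reveal$ the intermediary and the supporting verifiers send their data only to the designated receiver, which here lies in $\ReceiverSet$. So if $\ReceiverSet$ consists of only honest parties, no corrupt party receives any $\RecShare$ message, hence the adversary's view is simulatable from its $\VSS$-view alone and is independent of $[s]_q$. For communication, each $P_i \in \W_q$ runs $|\W_q| \cdot |\ReceiverSet \setminus \W_q|$ instances of $\Reveal$, each of cost $\Order(n \cdot \log{|\F|})$ bits by Theorem~\ref{thm:ICP}, for a total of $\Order(|\W_q|^2 \cdot |\ReceiverSet| \cdot n \cdot \log{|\F|}) = \Order(|\ReceiverSet| \cdot n^3 \cdot \log{|\F|})$ bits.

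The step I expect to require the most care is soundness, and within it the bookkeeping that lets unforgeability be invoked with the correct ``true'' message: one must trace through $\VSS$ that the honest $P_h$ signed (towards $P_i$ and $P_k$) exactly the committed share $[s]_q$, which in turn relies on $P_h$ and $P_i$ both lying in the core-set $\W_q$ and on the clique/consistency conditions underlying Definition~\ref{def:SS}. The rest is a straightforward application of the properties of ICP together with elementary counting.
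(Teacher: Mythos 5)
Your proof is correct and follows essentially the same route as the paper's: for an honest receiver outside $\W_q$, the existence of an honest party in $\W_q$ guarantees an accepted reveal (via the correctness and non-repudiation of ICP), while unforgeability of ICP rules out accepting an incorrect share from a corrupt revealer, with the error union-bounded over $\Order(n^2)$ ICP instances. One small slip in your protocol recap: it is $\ReceiverSet \cap \W_q$, not $\ReceiverSet \cap S_q$, that skips reconstruction and outputs $[s]_q$ directly, though this does not affect your argument since you correctly treat the receivers in $\ReceiverSet \setminus \W_q$.
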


\begin{lemma}
\label{lemma:Rec}
Let $s$ be a value which is linearly secret-shared with IC signatures and let $\ReceiverSet \subseteq \PartySet$ be a set of designated receivers.
 Then protocol $\Rec$ achieves the following.
  \begin{myitemize}
  \item[--] {\bf $\Z_s$-correctness}: In a synchronous network, all honest parties in $\ReceiverSet$ output $s$ after time $\TimeRec = \TimeRecShare$, 
   except with probability $\Order(|\ShareSpec_{\Z_s}| \cdot n^2 \cdot \errorAICP)$. 
   \item[--]  {\bf $\Z_a$-correctness}: In an asynchronous network, all honest parties in $\ReceiverSet$ eventually output $s$, 
   except with probability $\Order(|\ShareSpec_{\Z_s}| \cdot n^2 \cdot \errorAICP)$. 
    \item[--] {\bf Privacy}: If $\ReceiverSet$ consists of only honest parties, then the view of the adversary remains independent of $s$.
   \item[--] {\bf Communication Complexity}: $\Order(|\Z_s| \cdot |\ReceiverSet| \cdot n^3 \cdot \log{|\F|})$ bits are communicated.   
  \end{myitemize}
\end{lemma}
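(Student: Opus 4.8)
The plan is to reduce the entire statement to Lemma~\ref{lemma:RecShare} by observing that $\Rec([s], \ReceiverSet)$ is simply $|\ShareSpec_{\Z_s}|$ concurrent instances of $\RecShare$, one instance $\RecShare([s], S_q, \ReceiverSet)$ for each $S_q \in \ShareSpec_{\Z_s}$, after which every honest party in $\ReceiverSet$ locally sums the reconstructed shares to recover $s = [s]_1 + \ldots + [s]_{|\Z_s|}$. All the heavy lifting (unforgeability of ICP, existence of an honest party in each core-set, timing) is already encapsulated in Lemma~\ref{lemma:RecShare}, so the proof is essentially a parallel-composition argument plus a union bound.

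For \textbf{$\Z_s$-correctness}, I would first invoke Lemma~\ref{lemma:RecShare} on each index $q$: in a synchronous network, instance $\RecShare([s], S_q, \ReceiverSet)$ delivers $[s]_q$ to every honest party in $\ReceiverSet$ within time $\TimeRecShare = \TimeReveal$, except with probability $\Order(n^2 \cdot \errorAICP)$. Here I rely on the hypothesis that $[s]$ is a linear secret-sharing with IC-signatures (Definition~\ref{def:SS}), which guarantees that each core-set $\W_q$ satisfies the $\Q^{(1)}(\W_q, \Z_s)$ condition (hence contains an honest party) and that the IC-signatures held on $[s]_q$ are genuine --- exactly the preconditions $\RecShare$ needs. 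Since the $|\ShareSpec_{\Z_s}|$ instances run in parallel, a union bound over them yields a total error probability of $\Order(|\ShareSpec_{\Z_s}| \cdot n^2 \cdot \errorAICP)$, and reconstruction completes at time $\TimeRec = \TimeRecShare$. The \textbf{$\Z_a$-correctness} argument is word-for-word the same, replacing ``within time $\TimeRecShare$'' by ``eventually''.

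For \textbf{privacy}, I would argue that when $\ReceiverSet$ consists only of honest parties, every message of each $\RecShare$ instance (the revealed shares and the underlying $\Reveal$ invocations, all directed to parties in $\ReceiverSet$) travels between honest parties only, so by the privacy clause of Lemma~\ref{lemma:RecShare} the adversary's view in instance $q$ is independent of $[s]_q$. Since the views produced by the $|\ShareSpec_{\Z_s}|$ instances arise from disjoint honest-to-honest communication, the joint view is independent of $([s]_1, \ldots, [s]_{|\Z_s|})$, hence of $s$; making this fully rigorous amounts to a standard hybrid argument in which one swaps the inputs of the $\RecShare$ instances one at a time. Finally, \textbf{communication complexity} is immediate: each of the $|\ShareSpec_{\Z_s}| = |\Z_s|$ instances costs $\Order(|\ReceiverSet| \cdot n^3 \cdot \log{|\F|})$ bits by Lemma~\ref{lemma:RecShare}, so the total is $\Order(|\Z_s| \cdot |\ReceiverSet| \cdot n^3 \cdot \log{|\F|})$ bits.

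The only step that requires any care is the privacy composition --- one must be sure that running many $\RecShare$ instances together does not jointly leak the share corresponding to the all-honest set in $\ShareSpec_{\Z_s}$ to the adversary. But because every leakage channel in each instance is confined to recipients in $\ReceiverSet$, which are assumed honest, this reduces cleanly to the per-instance privacy guarantee of Lemma~\ref{lemma:RecShare}, and no new subtleties arise.
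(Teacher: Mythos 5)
Your proof is correct and takes essentially the same route as the paper's: the paper also proves Lemma~\ref{lemma:Rec} by observing that $\Rec$ is just $|\ShareSpec_{\Z_s}|$ parallel instances of $\RecShare$, invoking Lemma~\ref{lemma:RecShare} on each, and taking a union bound. You simply spell out the per-property details (and the privacy hybrid) that the paper leaves implicit.
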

\subsection{$\VSS$ and Reconstruction Protocol for Superpolynomial $|\Z_s|$}
\label{sec:SuperPolynomial}
From Theorem \ref{thm:VSS} and Lemma \ref{lemma:Rec}, the error probability of $\VSS$ and $\Rec$ depend linearly on $|\ShareSpec_{\Z_s}|$, which is the same as $|\Z_s|$. 
  This is because there are $\Omega(|\Z_s|)$ instances of $\Auth/\Reveal$ in which the 
 {\it unforgeability/non-repudiation} properties might get violated with probability $\errorAICP$. This might be 
  problematic for a ``large-sized" $\Z_s$. To avoid this, we use the idea of {\it local dispute control}
  used in \cite{HT13,ACC22b}, which ensures that {\it irrespective} of the number of instances of $\Auth/\Reveal$, the {\it overall} error probability is {\it only}
  $\Order(n^3 \cdot \errorAICP)$. This is done by ensuring that the {\it unforgeability/non-repudiation} properties get violated {\it only}
  $\Order(n^3)$ times across all these instances. 
   The idea here is that the parties start {\it locally discarding} corrupt parties the ``moment"
  they are caught cheating during {\it any} instance of $\Auth$ or $\Reveal$. 
   Once a party $P_j$ is locally discarded by any party $P_i$, then $P_i$
    ``behaves" as if $P_j$ has {\it certainly} behaved maliciously in all the ``future” instances of $\Auth$ or $\Reveal$, even
    if this is {\it not} the case. 
    This restricts the number of attempts of cheating for the adversary to a ``fixed" number and consequently, 
    the {\it total} error probability of {\it arbitrary} many instances of $\Auth/\Reveal$ will {\it no} longer depend on $|\Z_s|$.
    To incorporate the above idea, each party $P_i$ now maintains a list of {\it locally discarded} parties $\LD^{(i)}$, which 
    it keeps populating across all the instances of $\Auth$ and $\Reveal$, as soon as $P_i$ identifies any party
     cheating. It will be ensured that an {\it honest} $P_i$ {\it never} includes an {\it honest} $P_j$ to $\LD^{(i)}$. 
    We next discuss the modifications in $\Auth$ and $\Reveal$ and 
     how the parties populate their $\LD$ sets across instances of $\Auth$ and $\Reveal$.    
    
\paragraph{\bf Populating $\LD$ Sets During Instances of $\Auth$.}       
      In any instance of $\Auth$, if $P_i$ is present in the corresponding set of supporting verifiers $\R$ (i.e.~$P_i \in \R$),
      then $P_i$ includes the corresponding {\it signer} $P_j$ of the $\Auth$ instance to $\LD^{(i)}$ if the following condition holds during $\Auth$:
      \[ (P_j \mbox{ broadcasts } \OK) \quad \wedge \quad (B(\alpha_i) \neq dv_i + m_i),      \]
      where $B(x)$ is the masked polynomial broadcasted by the corresponding $\INT$ of the $\Auth$ instance. 
       The idea here is that if $P_i$ is {\it honest} and if the above condition holds, then clearly the signer $P_j$ is {\it corrupt} and is trying to break the {\it non-repudiation}
       property.
       Once $P_j \in \LD^{(i)}$, then in any pair of $(\Auth, \Reveal)$ instances involving $P_j$ as the {\it signer},
      if $P_i$ is present in the corresponding set $\R$, then in the $\Reveal$ instance,
    $P_i$ reveals $\bot$ as its verification information to the corresponding {\it receiver}
     $\Receiver$.\footnote{This serves as an indicator for $\Receiver$ that $P_i$ is in conflict with the signer $P_j$.}
    Upon receiving $\bot$ as the verification information, the strategy for $\Receiver$ is to {\it always} accepts it {\it without} doing any verification,
     {\it irrespective} of the polynomial $F(x)$ 
    revealed as $\ICSig$ by the corresponding $\INT$. 
    
    The above modification ensures that if in any instance of $\Auth$ involving a {\it corrupt signer} $P_j$ and an {\it honest} $\INT$, 
    $P_j$ distributes an inconsistent verification point 
    to an {\it honest verifier} $P_i$ from the corresponding $\R$ set and {\it still} broadcasts an $\OK$ message during $\Auth$, 
    then $P_j$ will locally be discarded by the verifier $P_i$,  except with probability $\errorAICP$ (follows from the {\it non-repudiation} property of ICP).
    From then onwards, in {\it all} the instances of $(\Auth, \Reveal)$, 
    involving $P_j$ as the {\it signer}, 
    if the verifier $P_i$ is added to the $\R$ set, then the ``special" 
    verification information revealed by $P_i$ during $\Reveal$ will {\it always} be considered as accepted, irrespective of what
    verification information
     it actually receives from $P_j$ during $\Auth$.
    Hence, $P_j$ will {\it not} have any chance of cheating the verifier $P_i$ in any $\Auth$ instance. By considering all possibilities
    for a {\it corrupt} $\S$ and an {\it honest} verifier $P_i$, along with an {\it honest} $\INT$, it follows that
    except with probability at most $\Order(n^3 \cdot \errorAICP)$,
     the verification-points of {\it all} honest verifiers from corresponding $\R$,
     will be {\it accepted} by every {\it honest} $\Receiver$,
      during {\it all} the instances of $\Reveal$, in any instance of $\VSS$ or $\Rec$.
       Consequently, except with probability at most $\Order(n^3 \cdot \errorAICP)$, the signatures revealed by all {\it honest} $\INT$ will be {\it always accepted}.

      We stress that the above modification {\it does not} help a {\it corrupt} $\INT$ to break the
      {\it unforgeability} property for an {\it honest} $\S$ and an {\it honest} $\Receiver$, with the help
    of potentially {\it corrupt} verifiers.
\paragraph{\bf Populating $\LD$ Sets During Instances of $\Reveal$.}       
Consider an instance of $\Reveal$ involving $P_i$ as $\Receiver$ and $P_j$ as $\INT$. 
  If $P_i$ finds that $P_j$ has tried to {\it forge} signature on an {\it incorrect} value, then $P_i$
   adds $P_j$ to $\LD^{(i)}$. To achieve this goal, during $\Reveal$, 
    $P_i$ (as $\Receiver$) now {\it additionally} checks if there exists a subset of verifiers $\R'' \subseteq \R$, where $\R \setminus \R'' \in \Z_a$, such that
   the verification-points of {\it all} the parties in $\R''$ are {\it rejected}. If such a subset $\R''$ exists, then clearly $P_j$ (as $\INT$) has cheated and tried to break the 
   {\it unforgeability} property, since
   $\R''$ is bound to contain at least one {\it honest} verifier. If the verification point of an honest verifier is {\it rejected}, then clearly $P_j$ is {\it corrupt}.    
   Once $P_j \in \LD^{(i)}$, from then onwards, 
   in any instance of $\Reveal$ involving $P_j$ as $\INT$ and $P_i$ as $\Receiver$, party
   $P_i$ {\it always rejects} any IC-signature revealed by $P_i$. 
   
    The above modification  ensures that if in any instance of $\Reveal$ involving an {\it honest} signer, a {\it corrupt intermediary} $P_j$ and an {\it honest receiver} $P_i$, $P_j$ tries to reveal an incorrect
    signature during $\Reveal$, 
    then except with probability $\errorAICP$, the intermediary $P_j$ will be locally discarded by the receiver $P_i$ (follows from the {\it unforgeability} property of ICP).
     From then onwards, in {\it all} the instances of $(\Auth, \Reveal)$,
    involving $P_j$ as the intermediary and $P_i$ as the receiver, the signature revealed by $P_j$ during $\Reveal$ will {\it always} be rejected, irrespective of what
    data is actually revealed by $P_j$. 
     Hence, by considering all possibilities for a {\it corrupt} $\INT$, 
   {\it honest} $\S$ and {\it honest} $\Receiver$, 
     it follows that except with probability at most $\Order(n^3 \cdot \errorAICP)$,
      no {\it corrupt} $\INT$ will be able to forge an {\it honest} $\S$'s  signature to any {\it honest} $\Receiver$,
       in any instance of $\Reveal$, during any instance of $\VSS$ or $\Rec$.
\subsection{Network Agnostic Secure Multicast}       
Based on protocols $\VSS$ and $\Rec$, we design a secure verifiable multicast protocol $\SVM$. In the protocol, there exists a designed {\it sender} 
 $\Sender \in \PartySet$ with input $v \in \F$ and a designated set of
 {\it receivers} $\ReceiverSet$. The goal is to let every party in $\ReceiverSet$ receive $v$, without revealing any additional information to the adversary.\footnote{Note that the requirements here are different
 from broadcast since we need the {\it privacy} of $v$ if $\Sender$ is {\it honest} and if $\ReceiverSet$ consists of {\it only} honest parties.} 
  While in a {\it synchronous} network, the (honest) parties in $\ReceiverSet$ get $v$ after a ``fixed" time, in an {\it asynchronous} network, they do so eventually.
  Note that if $\Sender$ is {\it corrupt}, then the parties {\it need not} obtain any output, as $\Sender$ {\it may not} invoke the protocol.
  However, if {\it any} honest party in $\ReceiverSet$ computes an output $v^{\star}$ (which could be {\it different} from $v$), then {\it all} honest parties in $\ReceiverSet$ will also output $v^{\star}$.
  The ``verifiability" here guarantees that in case the honest parties in $\ReceiverSet$ get any output, then {\it all} the (honest) parties in $\PartySet$ will be
  ``aware" of this; namely there will be a Boolean variable $\flag^{(\Sender, \ReceiverSet)}$, which all the honest parties will set to $1$. 
  
  The idea behind $\SVM$ is very simple. The parties participate in an instance of $\VSS$, where $\Sender$ plays the role of the {\it dealer} with input $v$.
  Once any (honest) party computes an output during $\VSS$ (implying that $\Sender$ is committed to some value $v^{\star}$ which is the same as $v$ for an {\it honest} $\Sender$),
   then it turns $\flag^{(\Sender, \ReceiverSet)}$ to $1$.
  Once $\flag^{(\Sender, \ReceiverSet)}$ is turned to $1$, the parties invoke an instance of $\Rec$ to let {\it only} the parties in $\ReceiverSet$ reconstruct the committed value.
  Protocol $\SVM$ and proof of its properties (stated in Lemma \ref{lemma:SVM}), are available in Appendix \ref{app:Rec}.
\begin{lemma}
\label{lemma:SVM}
Protocol $\SVM$ achieves the following, where $\Sender$ participates with input $v$ and where each honest party initializes $\flag^{(\Sender, \ReceiverSet)}$ to $0$.
\begin{myitemize}
\item[--] {\bf Synchronous Network}: If $\Sender$ is honest, then all honest parties set $\flag^{(\Sender, \ReceiverSet)}$ to $1$ at time $\TimeVSS$
 and except with probability $\Order(n^3 \cdot \errorAICP)$, all honest parties in $\ReceiverSet$ output $v$, after time
 $\TimeSVM = \TimeVSS + \TimeRec$. Moreover, if $\ReceiverSet$ consists of only honest parties, then the view of $\Adv$ remains independent of $v$.
  If $\Sender$ is corrupt and some honest party sets $\flag^{(\Sender, \ReceiverSet)}$ to $1$, then there exists some $v^{\star}$ such that, 
  except with probability $\Order(n^3 \cdot \errorAICP)$, all honest parties in $\ReceiverSet$ output $v^{\star}$. Moreover, if any honest party sets 
  $\flag^{(\Sender, \ReceiverSet)}$ to $1$ at time $T$, 
  then all honest parties in $\ReceiverSet$ output $v^{\star}$ by time $T + 2\Delta$.
  \item[--] {\bf Asynchronous Network}: If $\Sender$ is honest, then all honest parties eventually set $\flag^{(\Sender, \ReceiverSet)}$ to $1$
  and  except with probability $\Order(n^3 \cdot \errorAICP)$, all honest parties in $\ReceiverSet$ eventually output $v$.
   Moreover, if $\ReceiverSet$ consists of only honest parties, then the view of the adversary remains independent of $v$.
  If $\Sender$ is corrupt and some honest party sets $\flag^{(\Sender, \ReceiverSet)}$ to $1$, then there exists some $v^{\star}$ such that ,
  except with probability $\Order(n^3 \cdot \errorAICP)$, all honest parties in $\ReceiverSet$ eventually output $v^{\star}$.   
 \item[--] {\bf Communication Complexity}: $\Order(|\Z_s| \cdot n^8 \cdot \log{|\F|} \cdot |\sigma|)$ bits are communicated.
\end{myitemize}
\end{lemma}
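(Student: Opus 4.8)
The plan is to derive every property of $\SVM$ directly from the corresponding guarantee of the underlying $\VSS$ instance (Theorem~\ref{thm:VSS}) and of the subsequent $\Rec$ instance (Lemma~\ref{lemma:Rec}), together with the local-dispute-control bookkeeping of Section~\ref{sec:SuperPolynomial} that caps the error probability at $\Order(n^3 \cdot \errorAICP)$ rather than letting it scale with $|\Z_s|$.

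First I would settle the \emph{honest sender} case. By the $\Z_s$-correctness (resp.\ $\Z_a$-correctness) of $\VSS$, in a synchronous network every honest party outputs $[v]$ exactly at time $\TimeVSS$ (resp.\ eventually, in an asynchronous network); since a party sets $\flag^{(\Sender,\ReceiverSet)}$ to $1$ as soon as it produces a $\VSS$-output, all honest parties set the flag at time $\TimeVSS$ (resp.\ eventually). They then invoke $\Rec([v],\ReceiverSet)$, and the $\Z_s$-correctness (resp.\ $\Z_a$-correctness) of $\Rec$ delivers $v$ to every honest party in $\ReceiverSet$ by time $\TimeVSS + \TimeRec = \TimeSVM$ (resp.\ eventually). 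For privacy I would compose the privacy of $\VSS$ (the adversary's view is independent of $v$ once the sharing phase is over) with the privacy of $\Rec$ (which holds exactly because $\ReceiverSet$ consists of only honest parties), so nothing about $v$ leaks.

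Next, the \emph{corrupt sender} case. Suppose some honest party sets $\flag^{(\Sender,\ReceiverSet)}$ to $1$ at time $T$; then that party obtained a $\VSS$-output, so the $\Z_s$-commitment (resp.\ $\Z_a$-commitment) of $\VSS$ places us in the branch where a fixed $v^{\star}\in\F$ is linearly secret-shared with IC-signatures among the honest parties. The $\Z_s$-commitment additionally ensures that all honest parties finish $\VSS$, hence set the flag, by time $T+\Delta$; these parties then run $\Rec([v^{\star}],\ReceiverSet)$, which by Lemma~\ref{lemma:Rec} returns $v^{\star}$ to every honest party in $\ReceiverSet$ within a further $\TimeRec = \Delta$ (using $\TimeRec = \TimeRecShare = \TimeReveal = \Delta$), i.e.\ by time $T+2\Delta$ in the synchronous case and eventually in the asynchronous case. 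The communication complexity follows by inspection: the $\VSS$ instance sends $\Order(|\Z_s| \cdot n^8 \cdot \log{|\F|} \cdot |\sigma|)$ bits and the $\Rec$ instance only $\Order(|\Z_s| \cdot |\ReceiverSet| \cdot n^3 \cdot \log{|\F|})$ bits, so the former dominates.

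The main obstacle I anticipate is the error-probability accounting. A naive union bound over the $\Omega(|\Z_s|)$ instances of $\Auth/\Reveal$ nested inside $\VSS$ and $\Rec$ would yield a bound proportional to $|\Z_s| \cdot \errorAICP$, which is useless when $\Z_s$ is superpolynomial. To obtain the claimed $\Order(n^3 \cdot \errorAICP)$ I would invoke the local dispute control of Section~\ref{sec:SuperPolynomial}: the moment a corrupt party is caught cheating in any $\Auth$ or $\Reveal$ instance it is placed in the catching party's $\LD$ set and treated as permanently faulty thereafter, so the unforgeability/non-repudiation guarantees can be violated only $\Order(n^3)$ times in total, independently of $|\Z_s|$. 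A secondary, minor point is checking that the honest parties who set the flag late (by time $T+\Delta$ in the corrupt-sender case) still let the $\Rec$ instance terminate on time; this holds because $\RecShare$ needs only the guaranteed honest party in each core-set $\W_q$ to reveal the correctly IC-signed share $[v^{\star}]_q$, and every honest member of $\W_q$ is participating by time $T+\Delta$.
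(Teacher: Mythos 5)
Your proposal is correct and follows essentially the same route as the paper's proof: derive the honest-sender case from the correctness of $\VSS$ followed by $\Rec$, the corrupt-sender case from the commitment property of $\VSS$ (using its $T+\Delta$ guarantee in the synchronous setting) followed by $\Rec$, compose the two privacy properties, invoke the local-dispute-control modifications of Section~\ref{sec:SuperPolynomial} to keep the error bound at $\Order(n^3\cdot\errorAICP)$, and read off the communication cost from the two sub-protocols. The only thing you add beyond the paper is the explicit remark that late-starting honest parties do not delay $\Rec$ past $T+2\Delta$ because only the honest members of each core set $\W_q$ need to be active by $T+\Delta$, which is a harmless (and accurate) clarification rather than a different argument.
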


\section{Network Agnostic Protocol for Generating Linearly Secret-Shared Random Values with IC-Signatures}
\label{sec:MVSS}
In this section, we present a network agnostic protocol $\Rand$, which allows the parties to jointly generate 
  linear secret-sharing of random values with IC-signatures.
      To design the protocol $\Rand$, we first design a subprotocol $\MDVSS$.
   \subsection{Network Agnostic VSS for Multiple Dealers}
   Protocol $\MDVSS$ (Fig \ref{fig:MDVSS}) is a multi-dealer VSS.
    In the protocol, each party $P_{\ell} \in \PartySet$ participates as a dealer with some input $s^{(\ell)}$.
   Then, {\it irrespective} of the network type, the protocol outputs a {\it common} subset of dealers $\Core \subseteq \PartySet$, which is {\it guaranteed} to have at least one {\it honest} dealer.
   Moreover, corresponding to every dealer $P_{\ell} \in \Core$, there will be some value, say ${s^{\star}}^{(\ell)}$, which will be the same as $s^{(\ell)}$ for an {\it honest} $P_{\ell}$, 
   such that the values
   $\{{s^{\star}}^{(\ell)} \}_{P_{\ell} \in \Core}$ are linearly secret-shared with IC-signatures. While in a {\it synchronous} network,  $\{[{s^{\star}}^{(\ell)}] \}_{P_{\ell} \in \Core}$
   is generated after a ``fixed" time, in an {\it asynchronous} network, $\{[{s^{\star}}^{(\ell)}] \}_{P_{\ell} \in \Core}$ is generated eventually.
   
   The high level overview of $\MDVSS$ has been already discussed in detail in 
    Section \ref{ssec:Rand}.\footnote{Actually, the overview was for the protocol $\Rand$, but the same idea is also used in the protocol $\MDVSS$.}
     The idea is to let every dealer $P_{\ell}$ to invoke an instance
   of $\VSS$ to secret-share its input. However, we need to take special care to ensure that the inputs of all the dealers in $\Core$ are secret-shared with {\it common}
   core-sets. For this, each individual dealer in its instance of $\VSS$ computes and publishes as many ``legitimate" core-sets as possible and the parties run instances of 
   {\it agreement on common subset}
   (ACS) to identify whether ``sufficiently many" dealers have published the same legitimate core-sets in their respective instances of $\VSS$.
   Moreover, to ensure that all the underlying IC-signatures satisfy the linearity property, we first need to identify the dealers who distribute shares as part of their respective 
   $\VSS$ instances. For this, we let each dealer distribute shares in its instance of $\VSS$ through instances of $\SVM$. This enables the parties to identify
   a set of {\it committed dealers} $\CD$ who have indeed distributed shares as part of their $\VSS$ instances through instances of $\SVM$.

\begin{protocolsplitbox}{$\MDVSS(\PartySet, \Z_s, \Z_a, (s^{(1)}, \ldots, s^{(n)}), \SharingSpec_{\Z_s})$}{The statistically-secure VSS protocol for multiple dealers to generate linearly secret-shared values with
 IC-signatures}{fig:MDVSS}
\justify
\begin{myitemize}
\item[--] {\bf Committing Shares}: Each $P_{i} \in \PartySet$ executes the following steps.
  \begin{myitemize}
   \item[--] On having input $s^{(i)}$, randomly choose $s^{(i)}_1,\dots,s^{(i)}_{|\Z_s|}$,
    such that $s^{(i)} = s^{(i)}_1 + \dots + s^{(i)}_{|\Z_s|}$.
    Act as $\Sender$ and invoke instances $\SVM(P_{i}, s^{(i)}_1, S_1), \ldots, \SVM(P_{i}, s^{(i)}_{|\Z_s|}, S_{|\Z_s|})$ of $\SVM$.
    \item[--] Corresponding to every dealer $P_{\ell} \in \PartySet$, participate in the instances 
     of $\SVM$, invoked by $P_{\ell}$ as a $\Sender$
    and {\color{red} wait till the local time becomes $\TimeSVM$}. For $q = 1, \ldots, |\Z_s|$,  
    let $\flag^{(P_{\ell}, S_q)}$ be the Boolean flag, corresponding to the instance $\SVM(P_{\ell}, s^{(\ell)}_q, S_q)$, invoked by $P_{\ell}$.
    \end{myitemize}
   \item[--] {\bf Identifying the Set of Committed Dealers Through ACS}: Each $P_i \in \PartySet$ does the following.
      \begin{myitemize}
       \item[--] For $\ell = 1, \ldots, n$, 
       participate in an instance $\BA^{(\ell)}$ of $\BA$ with input $1$, provided $P_i$ has set
        $\flag^{(P_{\ell}, S_q)} = 1$, for $q = 1, \ldots, |\Z_s|$.
      \item[--] Once there exists a subset of dealers $\CD_i$ where $\PartySet \setminus \CD_i \in \Z_s$, such that 
      corresponding to every dealer $P_{\ell} \in \CD_i$, the instance 
       $\BA^{(\ell)}$ has produced output $1$, then participate with input $0$ in all the BA instances $\BA^{(\star)}$, for which no input is provided yet.
      \item[--] Once all the $n$ instances of $\BA^{(\star)}$ have produced a binary output, set $\CD$ to be the set of dealers $P_{\ell}$, 
       such that
      $\BA^{(\ell)}$ has produced output $1$.
      \end{myitemize}
   \item[--] {\bf Exchanging IC-Signed Values}: Each $P_i \in \PartySet$ {\color{red} waits till the local time becomes $\TimeSVM + 2\TimeBA$}. 
   Then corresponding to each dealer $P_{\ell} \in \CD$, does the following.
      \begin{myitemize}
        \item[--] For each $S_q \in \ShareSpec_{\Z_s}$ such that $P_i \in S_q$, 
          upon computing an output $s^{(\ell)}_{qi}$ during $\SVM(P_{\ell}, s^{(\ell)}_q, S_q)$, 
           give $\ICSig(P_i, P_j, P_k, s^{(\ell)}_{qi})$ to every $P_j \in S_q$,  for every $P_k \in \PartySet$, 
           {\color{blue} where the parties follow the linearity principle while generating IC-signatures}.         
      \end{myitemize}
 \item[--] {\bf Announcing Results of Pairwise Consistency Tests}:
 Each $P_i \in \PartySet$ {\color{red} waits till the local time becomes $\TimeSVM + 2\TimeBA + \TimeAuth$}
 and then does the following, corresponding to each dealer $P_{\ell} \in \CD$.
   \begin{myitemize}
     \item[--] Upon receiving $\ICSig(P_j, P_i, P_k, s^{(\ell)}_{qj})$ from $P_j$ for each $S_q \in \ShareSpec$ such that $P_j, P_i \in S_q$, corresponding to every $P_k \in \PartySet$, 
     broadcast $\OK^{(\ell)}(i, j)$,  if 
   $s^{(\ell)}_{qi} = s^{(\ell)}_{qj}$ holds.
    \item[--] Corresponding to every $P_j \in \PartySet$, participate in any instance of $\BC$ initiated by $P_j$ as a sender, to broadcast any $\OK^{(\ell)}(P_j, \star)$ message.
   \end{myitemize}
 \item[--] {\bf Constructing Consistency Graphs}: Each $P_i \in \PartySet$ {\color{red} waits till the local time becomes $\TimeSVM + 2\TimeBA + \TimeAuth + \TimeBC$}
   and then does the following, corresponding to each dealer $P_{\ell} \in \CD$.
      \begin{myitemize}
        \item[--] Construct an undirected consistency graph $G^{(\ell, i)}$ with $\PartySet$ as the vertex set, where
	  the edge $(P_j, P_k)$ is added to $G^{(\ell, i)}$, provided $\OK^{(\ell)}(j, k)$ and $\OK^{(\ell)}(k, j)$ is received from the broadcast of $P_j$ and $P_k$ respectively (through any mode).
        \end{myitemize}	 
\item[--] {\bf Public Announcement of Core Sets by the Committed Dealers}: Each dealer $P_{\ell} \in \CD$
  {\color{red} waits till its local time is $\TimeSVM + 2\TimeBA + \TimeAuth + \TimeBC$}, and then
 executes the following steps to compute core sets.
	\begin{myitemize}
	\item[--] $\forall S_p \in \ShareSpec_{\Z_s}$, once
	 $S_p$ forms a clique in $G^{(\ell, \ell)}$, then for $q = 1, \ldots, |\Z_s|$, compute core-set $\W^{(\ell)}_{p, q}$ 
	 and broadcast-set $\BroadcastSet^{(\ell)}_p$ with respect to $S_p$ as follows, followed by broadcasting 
	 $(\CandidateCoreSets, P_{\ell}, S_p, \{ \W^{(\ell)}_{p, q}\}_{q = 1, \ldots, |\Z_s|}, \BroadcastSet^{(\ell)}_p, \{ s^{(\ell)}_q\}_{q \in \BroadcastSet^{(\ell)}_p})$.
	 \begin{myitemize}
	 \item[--] If $S_q$ constitutes a clique in the graph $G^{(\ell, \ell)}$, then set $\W^{(\ell)}_{p, q} = S_q$.
	 \item[--] Else if $(S_p \cap S_q)$ constitutes a clique in $G^{(\ell, \ell)}$ and
	 $\Z_s$ satisfies the $\Q^{(1)}(S_p \cap S_q, \Z_s)$ condition, 
	 then set $\W^{(\ell)}_{p, q} = (S_p \cap S_q)$.
	 \item[--] Else set $\W^{(\ell)}_{p, q} = S_q$ and include $q$ to $\BroadcastSet^{(\ell)}_p$. 	 
	 \end{myitemize}
         \end{myitemize}   
  \item[--] {\bf Identifying Valid Core Sets}: Each $P_i \in \PartySet$ {\color{red} waits for time $\TimeSVM + 2\TimeBA + \TimeAuth + 2\TimeBC$} and then initializes a set $\C_i = \emptyset$.
   Corresponding to $P_{\ell} \in \CD$ 
    and $p = 1, \ldots, |\Z_s|$, party $P_i$ includes $(P_{\ell}, S_p)$ to $\C_i$, provided {\it all} the following hold. 
    \begin{myitemize}
   \item[--]  $(\CandidateCoreSets, P_{\ell}, S_p, \{ \W^{(\ell)}_{p, q}\}_{q = 1, \ldots, |\Z_s|}, \BroadcastSet^{(\ell)}_p, \{ s^{(\ell)}_q\}_{q \in \BroadcastSet^{(\ell)}_p})$
   is received from the broadcast of $P_{\ell}$, such that  for $q = 1, \ldots, |\Z_s|$, the following hold.
      \begin{myitemize}
      \item[--] If $q \in \BroadcastSet^{(\ell)}_p$, then the set $\W^{(\ell)}_{p, q} = S_q$.
      \item[--] If $(q \not \in \BroadcastSet^{(\ell)}_p)$, then $\W^{(\ell)}_{p, q}$ is either $S_q$ or  $(S_p \cap S_q)$, such that:
         \begin{myitemize}
         \item[--] If $\W^{(\ell)}_{p, q} = S_q$, then $S_q$ constitutes a clique in $G^{(\ell, i)}$.
         \item[--] Else if $\W^{(\ell)}_{p, q} = (S_p \cap S_q)$, then 
      $(S_p \cap S_q)$ constitutes a clique in $G^{(\ell, i)}$ and 
      $\Z_s$ satisfies the $\Q^{(1)}(S_p \cap S_q, \Z_s)$ condition.      
         \end{myitemize}
           \end{myitemize}
   \end{myitemize}
  \item[--] {\bf Selecting the Common Committed Dealers and Core Sets through ACS}:  Each party $P_i \in \PartySet$ does the following.
      \begin{myitemize}
      \item[--] For $p = 1, \ldots, |\Z_s|$, participate in an instance $\BA^{(1, p)}$ of $\BA$ with input $1$, provided there exists a set of dealers
       ${\cal A}_{p, i} \subseteq \CD$ 
      where $\CD \setminus {\cal A}_{p, i} \in \Z_s$
      and where $(P_{\ell}, S_p) \in \C_i$ for every $P_{\ell} \in {\cal A}_{p, i}$.
      \item[--] Once any instance of $\BA^{(1, \star)}$ has produced an output $1$, participate with input $0$ in all the BA instances $\BA^{(1, \star)}$, for which no input is provided yet.
      \item[--] Once all the $|\Z_s|$ instances of $\BA^{(1, \star)}$ have produced a binary output, set $\qcore$ to be the least index among $\{1, \ldots, |\Z_s| \}$, such that
      $\BA^{(1, \qcore)}$ has produced output $1$.
      \item[--] Once $\qcore$ is computed, then corresponding to each $P_j \in \CD$,
       participate in an instance $\BA^{(2, j)}$ of $\BA$ with input $1$, provided $(P_j, S_{\qcore}) \in \C_i$. 
      \item[--] Once there exists a set of parties ${\cal B}_i \subseteq \CD$,
       such that $\CD \setminus {\cal B}_i \in \Z_s$ and $\BA^{(2, j)}$ has produced output $1$, corresponding to each $P_j \in {\cal B}_i$,
      participate with input $0$ in all the instances of $\BA^{(2, \star)}$, for which no input is provided yet.
      \item[--] Once all the $|\CD|$
       instances of $\BA^{(2, \star)}$ have produced a binary output, include all the parties $P_j$ from $\CD$ in $\Core$ (initialized to $\emptyset$), such that $\BA^{(2, j)}$ has produced output $1$.
      \end{myitemize}
 \item[--] {\bf Computing Output}: Each $P_i \in \PartySet$ does the following, after computing $\Core$ and $\qcore$.
     \begin{myitemize}
     \item[--] If $(\CandidateCoreSets, P_{\ell}, S_{\qcore}, \{ \W^{(\ell)}_{\qcore, q}\}_{q = 1, \ldots, |\Z_s|}, \BroadcastSet^{(\ell)}_{\qcore}, \{ s^{(\ell)}_{q}\}_{q \in \BroadcastSet^{(j)}_{\qcore}})$
   is not yet received from the broadcast of $P_{\ell}$ for for any $P_{\ell} \in \Core$, then wait to receive it
    from the broadcast of $P_{\ell}$ through fallback-mode.   
     \item[--] Once $(\CandidateCoreSets, P_{\ell}, S_{\qcore}, \{ \W^{(\ell)}_{\qcore, q}\}_{q = 1, \ldots, |\Z_s|}, \BroadcastSet^{(\ell)}_{\qcore}, \{ s^{(\ell)}_{q}\}_{q \in \BroadcastSet^{(j)}_{\qcore}})$
      is available for every
     $P_{\ell} \in \Core$, compute $\W_q$ for $q = 1, \ldots, |\Z_s|$ as follows.
        \begin{myitemize}
        \item[--] If $\W^{(\ell)}_{\qcore, q} = S_q$ for every $P_{\ell} \in \Core$, then set $\W_q = S_q$.
        \item[--] Else set $\W_q = (S_{\qcore} \cap S_q)$.
        \end{myitemize}
     \item[--] Corresponding to every $P_{\ell} \in \Core$ and every $S_q \in \ShareSpec_{\Z_s}$ such that $P_i \in S_q$, compute the output as follows.
           \begin{myitemize}
	    \item[--] If $q \in \BroadcastSet^{(\ell)}_{\qcore}$, then set $[s^{(\ell)}]_q = s^{(\ell)}_q$, where $s^{(\ell)}_q$ was received from the broadcast of $P_{\ell}$,
	    as part of $(\CandidateCoreSets, P_{\ell}, S_{\qcore}, \allowbreak \{ \W^{(\ell)}_{\qcore, q}\}_{q = 1, \ldots, |\Z_s|}, \BroadcastSet^{(\ell)}_{\qcore}, \{ s^{(\ell)}_{q}\}_{q \in \BroadcastSet^{(j)}_{\qcore}})$.
	    Moreover, for every $P_j \in \W_q$ and every $P_k \in \PartySet$, set
	    $\ICSig(P_j, P_i, P_k, [s^{(\ell)}]_q)$ to the default value. 
	      \item[--] Else, set $[s^{(\ell)}]_q$ to $s^{(\ell)}_{qi}$, where $s^{(\ell)}_{qi}$ was computed as output during $\SVM(P_{\ell}, s^{(\ell)}_q, S_q)$. Moreover, 
	      if $P_i \in \W_q$, then
	      for every $P_j \in \W_q$ and every $P_k \in \PartySet$, set
	    $\ICSig(P_j, P_i, P_k, [s^{(\ell)}]_q)$ to  $\ICSig(P_j, P_i, P_k, s^{(\ell)}_{qj})$, received from $P_j$.
	    \end{myitemize}
	  Output $\Core$, the core sets $\W_1, \ldots, \W_{|\Z_s|}$, shares
	     $\{[s^{(\ell)}]_q\}_{P_{\ell} \in \Core \; \wedge \; P_i \in S_q}$ and 
	     the IC-signatures $\ICSig(P_j, P_i, P_k, [s^{(\ell)}]_q)_{P_{\ell} \in \Core \; \wedge \; P_j, P_i \in \W_q, P_k \in \PartySet}$.     
     \end{myitemize}
  
\end{myitemize}
\end{protocolsplitbox}

The properties of $\MDVSS$, stated in Theorem \ref{thm:MDVSS} are proved in Appendix \ref{app:MDVSS}.
\begin{theorem}
\label{thm:MDVSS}
Protocol $\MDVSS$ achieves the following where each $P_{\ell}$ participates with input $s^{(\ell)}$ and where  $\TimeMDVSS = \TimeSVM  + \TimeAuth + 2\TimeBC + 6\TimeBA$.
   \begin{myitemize}
   \item[--] {\bf $\Z_s$-Correctness\&Commitment}:   If the network is synchronous, then except with probability $\Order(n^3 \cdot \errorAICP)$,
   at time $\TimeMDVSS$, all honest parties output a common set $\Core \subseteq \PartySet$ 
   such that at least one honest party will be present in $\Core$.
   Moreover, corresponding to every $P_{\ell} \in \Core$, there exists some ${s^{\star}}^{(\ell)}$, where ${s^{\star}}^{(\ell)} = s^{(\ell)}$ for an honest $P_{\ell}$,   
       such that the values $\{  {s^{\star}}^{(\ell)}  \}_{P_{\ell} \in \Core}$ are linearly secret-shared with IC-signatures.
    \item[--] {\bf $\Z_a$-Correctness\&Commitment}:   If the network is asynchronous, then except with probability $\Order(n^3 \cdot \errorAICP)$, almost-surely
   all honest parties output a common set $\Core \subseteq \PartySet$ eventually 
   such that at least one honest party will be present in $\Core$.
   Moreover, corresponding to every $P_{\ell} \in \Core$, there exists some ${s^{\star}}^{(\ell)}$, where ${s^{\star}}^{(\ell)} = s^{(\ell)}$ for an honest $P_{\ell}$,   
       such that the values $\{  {s^{\star}}^{(\ell)}  \}_{P_{\ell} \in \Core}$ are eventually linearly secret-shared with IC-signatures. 
    \item[--] {\bf Privacy}: Irrespective of the network type, the view of the adversary remains independent of $s^{(\ell)}$, corresponding to every honest $P_{\ell} \in \Core$.
   \item[--] {\bf Communication Complexity}: $\Order(|\Z_s|^2 \cdot n^9 \cdot \log{|\F|} \cdot |\sigma|)$ bits are communicated by
   the honest parties. In addition, 
 $\Order(|\Z_s| + n)$ instances of $\BA$ are invoked.
   \end{myitemize}
\end{theorem}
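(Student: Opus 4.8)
The plan is to derive the four claims by composing the guarantees of the sub-protocols ($\SVM$, $\Auth$/$\Reveal$, $\BA$, $\BC$) with the structural properties of linear secret-sharing with IC-signatures, following the template of the $\VSS$ analysis (Theorem \ref{thm:VSS}) but with the extra bookkeeping needed to force the core-sets to be \emph{common} to all dealers in $\Core$. Throughout, I would use that in either network model the set of corrupt parties lies in $\Z_s$ (since $\Z_a\subset\Z_s$), so $\Hon=\PartySet\setminus Z$ for some $Z\in\Z_s$, and combine this with $\Q^{(2)}(\PartySet,\Z_s)$ and $\Q^{(2,1)}(\PartySet,\Z_s,\Z_a)$ to obtain the combinatorial facts needed---in particular $\Q^{(1)}(\Hon,\Z_s)$ and, whenever $\PartySet\setminus\CD\in\Z_s$, also $\Q^{(1)}(\Hon\cap\CD,\Z_s)$.

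\textbf{Termination, timing, and the sets $\CD,\qcore,\Core$.} I would first show the three ACS phases complete. Since every honest $P_\ell$ runs its $|\Z_s|$ instances of $\SVM$, Lemma \ref{lemma:SVM} gives that every honest party sets $\flag^{(P_\ell,S_q)}=1$ for every honest $P_\ell$ and every $q$ (at time $\TimeSVM$ synchronously, eventually asynchronously); hence honest parties input $1$ to $\BA^{(\ell)}$ for every honest $P_\ell$, the switching set $\CD_i$ eventually exists (the honest dealers themselves form one), and all $n$ instances of $\BA$ produce a binary output by Theorem \ref{thm:BA}. Consistency of $\BA$ makes $\CD$ common, the selection rule gives $\PartySet\setminus\CD\in\Z_s$, and synchronously the timeouts force $\Hon\subseteq\CD$. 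An analogous argument for the $\BA^{(1,\star)}$ and $\BA^{(2,\star)}$ phases---using that honest dealers in $\CD$ publish $\CandidateCoreSets$ for \emph{every} $S_p$ forming a clique in their own consistency graph, that $\specialS\in\ShareSpec_{\Z_s}$ (the all-honest group) always forms such a clique, and that $\BC$-consistency/fallback-consistency makes the broadcast $\OK^{(\ell)}(\cdot,\cdot)$ messages and hence the graphs $G^{(\ell,i)}$ eventually agree across honest $i$---yields common $\qcore$ and $\Core$, with $\CD\setminus\Core\in\Z_s$. That $\Core$ contains an honest dealer then follows: synchronously from $\Hon\subseteq\CD$ together with $\Q^{(1)}(\Hon,\Z_s)$; asynchronously because the set ${\cal B}_i$ that terminates the $\BA^{(2,\star)}$ phase satisfies $\CD\setminus{\cal B}_i\in\Z_s$ and ${\cal B}_i\subseteq\Core$, so ${\cal B}_i\cap\Hon\neq\emptyset$ by $\Q^{(1)}(\Hon\cap\CD,\Z_s)$. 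Tracking the ``wait until'' instructions gives the time bound $\TimeMDVSS=\TimeSVM+\TimeAuth+2\TimeBC+6\TimeBA$ in the synchronous case.

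\textbf{Correctness, commitment, and the IC-signatures.} For each $P_\ell\in\Core$, the commitment property of the instances $\SVM(P_\ell,\cdot,S_q)$ fixes values ${s^{\star}}^{(\ell)}_q$ (equal to the chosen $s^{(\ell)}_q$ when $P_\ell$ is honest), held identically by all honest parties of $S_q$; set ${s^{\star}}^{(\ell)}=\sum_q {s^{\star}}^{(\ell)}_q$. I then verify the computed $\W_q$ meet Definition \ref{def:SS}: each candidate $\W^{(\ell)}_{\qcore,q}$ is either $S_q$---where $\Q^{(1)}(S_q,\Z_s)$ holds because $S_q=\PartySet\setminus Z_q$, $Z_q\in\Z_s$, and $\Z_s$ is $\Q^{(2)}$---or $S_{\qcore}\cap S_q$, where the publishing dealer explicitly checked $\Q^{(1)}(S_{\qcore}\cap S_q,\Z_s)$; the intersection rule used in the output phase to merge differing candidates preserves $\Q^{(1)}$, and the clique conditions guarantee that the honest parties of $\W_q$ hold a common share. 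The essential point is that routing share distribution through $\SVM$ lets every honest party learn $\CD$ \emph{before} any $\Auth$ instance starts, so each honest signer can fix a single supporting-verifier set $\R$, a single evaluation point $\alpha_{\S,\INT,\Receiver,i}$ per verifier, and a single combiner across the whole batch of $\Auth$ instances it runs; invoking the linearity principle (Section \ref{sec:ICPLinearity}) and the default-signature convention for $q\in\BroadcastSet$, the honest parties in $\W_q$ then hold the required $\ICSig(P_j,P_i,P_k,[{s^{\star}}^{(\ell)}]_q)$ with linearity. The unforgeability/non-repudiation clauses of Definition \ref{def:SS} follow from Theorem \ref{thm:ICP} combined with the local-dispute-control analysis of Section \ref{sec:SuperPolynomial}, which caps at $\Order(n^3)$ the number of ICP instances in which those properties can fail, so the total error is $\Order(n^3\cdot\errorAICP)$, independent of $|\Z_s|$.

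\textbf{Privacy and complexity; main obstacle.} For an honest $P_\ell\in\Core$, I would show the share $s^{(\ell)}_h$ for the all-honest group $\specialS=S_h$ is never broadcast: synchronously $S_h$ is a clique in $G^{(\ell,\ell)}$, so rule (A) puts $h\notin\BroadcastSet$; asynchronously $S_{\qcore}\cap S_h$ is a clique with $\Q^{(1)}(S_{\qcore}\cap S_h,\Z_s)$ holding (by $\Q^{(2,1)}$ and $S_h=\PartySet\setminus Z$ for $Z\in\Z_a$), so rule (B) applies. Privacy of $\SVM$ (its receiver set $S_h$ is all-honest) and privacy of ICP then make the adversary's view independent of $s^{(\ell)}$. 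The communication count aggregates $\Order(n\cdot|\Z_s|)$ instances of $\SVM$, $\Order(n^2\cdot|\Z_s|)$ instances of $\Auth$ and of $\BC$, and $\Order(|\Z_s|+n)$ instances of $\BA$, with the per-instance costs from Lemma \ref{lemma:SVM}, Theorem \ref{thm:ICP} and Theorem \ref{thm:BC}. I expect the main obstacle to be the argument of the preceding paragraph: showing that the ``two-dimensional'' ACS (first $\qcore$, then $\Core$) produces core-sets that are simultaneously common to all dealers in $\Core$ and individually $\Q^{(1)}$, while every prerequisite of the linearity principle holds uniformly across the entire batch of $\Auth$ instances---this is precisely where the use of $\SVM$ for share distribution and the deferral of IC-signature generation until $\CD$ is known are indispensable, and it needs to be spelled out with care.
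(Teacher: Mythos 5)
Your proposal is correct and follows essentially the same decomposition the paper uses: a lemma establishing the common committed-dealer set $\CD$ (with $\Hon\subseteq\CD$ synchronously), honest-dealer and corrupt-dealer lemmas about accepted $\CandidateCoreSets$ mirroring the $\VSS$ analysis, and a main lemma handling the two-dimensional ACS that fixes $\qcore$ and then $\Core$; your combinatorial argument that $\Core\cap\Hon\neq\emptyset$ is an equivalent rephrasing of the paper's $\Q^{(2)}$/$\Q^{(2,1)}$ application, and you correctly identify the role of $\SVM$ in making $\CD$ known before any $\Auth$ is invoked so the linearity principle applies batchwide. The only slip is in the communication accounting: there are $\Order(|\Z_s|\cdot n^4)$ (not $\Order(|\Z_s|\cdot n^2)$) $\Auth$ instances — one per dealer in $\CD$, per group $S_q$, per ordered pair $(P_i,P_j)\in S_q\times S_q$, per receiver $P_k$ — though this does not change the stated bound since the $\SVM$ instances dominate.
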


\paragraph{\bf Protocol $\MDVSS$ with $L$ Values for Each Dealer.} In protocol $\MDVSS$, each dealer $P_{\ell} \in \PartySet$ participates with a {\it single} input. 
 Consider a scenario where {\it each} $P_{\ell}$ participates with $L$ inputs $\overrightarrow{S^{(\ell)}} = (s^{(\ell, 1)}, \ldots, s^{(\ell, L)})$, where $L \geq 1$. 
 The goal is to identify a {\it common} subset of dealers $\Core \subseteq \PartySet$ which is {\it guaranteed} to have at least one {\it honest} dealer, irrespective of the network type. Corresponding to every dealer $P_{\ell} \in \Core$, there exist $L$ values, say $\overrightarrow{{S^{\star}}^{(\ell)}} = ({s^{\star}}^{(\ell, 1)}, \ldots, {s^{\star}}^{(\ell, L)})$, 
 which will be the same as $\overrightarrow{S^{(\ell)}}$ for an {\it honest} $P_{\ell}$, where all the values in 
   $\{\overrightarrow{{S^{\star}}^{(\ell)}} \}_{P_{\ell} \in \Core}$ are linearly secret-shared with IC-signatures. To achieve this, we run the protocol $\MDVSS$ with the following modifications,
   so that the number of instances of $\BA$ in the protocol {\it still} remains to be $\Order(|\Z_s| + n)$, which is {\it independent} of $L$.
   
   Corresponding to each $s^{(\ell)} \in \overrightarrow{S^{(L)}}$, the dealer $P_{\ell}$ will pick $|\Z_s|$ random shares (which sum up to $s^{(\ell)}$) and the shares corresponding to the group
   $S_q \in \ShareSpec_{|\Z_s|}$ are communicated through an instance of $\SVM$; hence $|\Z_s| \cdot L$ instances are invoked by $P_{\ell}$ as a $\Sender$.
   Then, while identifying the set of committed dealers $\CD$, parties vote $1$ for $P_{\ell}$  in the instance $\BA^{(\ell)}$ provided the underlying $\flag$ variable is set to $1$ in {\it all}
   the $|Z_s| \cdot L$ instances of $\SVM$ invoked by $P_{\ell}$. The rest of the steps for identifying $\CD$ remains the same. This way, by executing {\it only} $\Order(n)$ instances of $\BA$, we identify
   the set $\CD$.
   
   Next, the parties exchange IC-signatures on their supposedly common shares for each group, corresponding to {\it all} the $L$ values shared by each dealer from $\CD$. 
   However, each $P_i$ now broadcasts a {\it single} $\OK^{(\ell)}(i, j)$ message, corresponding to each $P_j \in S_q$, provided $P_i$ receives IC-signed common share from $P_j$
   on the behalf of {\it all} the $L$ values, shared by $P_{\ell}$. This ensures that, for each $P_{\ell}$, every $P_i$ constructs a {\it single} consistency graph. Next, each dealer $P_{\ell}$
   computes and broadcasts the candidate core-sets and broadcast-sets, as and when they are ready. The parties identify the $\Core$ set by running $\Order(|\Z_s| + n)$  instances of $\BA$. 
   To avoid repetition, we do not present the formal steps of the modified $\MDVSS$ protocol.  
   The protocol incurs a communication of $\Order(|\Z_s|^2 \cdot L \cdot n^9 \cdot \log{|\F|} \cdot |\sigma|)$ bits,
   apart from $\Order(|\Z_s| + n)$ instances of $\BA$.
   \subsection{Protocol for Generating Secret-Shared Random Values}
   Protocol $\Rand$ (Fig \ref{fig:Rand}) allows the parties to jointly generate linear secret-sharings of $L$ values with IC-signatures, where $L \geq 1$,
   which are random for the adversary. For this, the parties invoke an instance of the (modified) $\MDVSS$ where each dealer $P_{\ell} \in \PartySet$ participates
    with a random vector of $L$ values. 
    Let $\Core$ be the set of common dealers identified during the instance of $\MDVSS$. Then for $\mathfrak{l} = 1, \ldots, L$, the parties output the sum of ${\mathfrak{l}}^{th}$
     value shared by {\it all} the dealers in $\Core$. Since there will be at least one {\it honest} dealer in $\Core$ whose shared values will be random for the adversary, 
     it follows that the resultant values also remain random for the adversary. 
     
     In the rest of the paper, we will refer to the core sets $\W_1, \ldots, \W_{|\Z_s|}$ obtained during $\Rand$ as {\it global core-sets} and denote them by $\GW_1, \ldots, \GW_{|\Z_s|}$. 
     From now onwards, all the secret-shared values will be generated with respect to these global core-sets.
     
     \begin{protocolsplitbox}{$\Rand(\PartySet, \Z_s, \Z_a, \SharingSpec_{\Z_s}, L)$}{Protocol for generating linearly secret-shared random values with
 IC-signatures}{fig:Rand}
\justify
 \begin{myitemize}
\item[--] {\bf Secret-Sharing Random Values}:
  Each $P_{\ell} \in \PartySet$ picks $L$ random values $\overrightarrow{R^{(\ell)}} = (r^{(\ell, 1)}, \ldots, r^{(\ell, L)})$ and participates in an instance of $\MDVSS$ with input
 $\overrightarrow{R^{(\ell)}}$ and {\color{red} waits for time $\TimeMDVSS$}.
 \item[--] {\bf Computing Output}: Let $(\Core, \W_1, \ldots, \W_{|\Z_s|}, \{ ([{r^{\star}}^{(\ell, 1)}], \ldots, [{r^{\star}}^{(\ell, L)}])    \}_{P_{\ell \in \Core}} )$
  be the output from the instance of $\MDVSS$. For $\mathfrak{l} = 1, \ldots, L$, the parties locally compute 
  $[r^{(\mathfrak{l})}] = \displaystyle \sum_{P_{\ell} \in \Core} [{r^{\star}}^{(\ell, \mathfrak{l})}]$ from $\{[{r^{\star}}^{(\ell, \mathfrak{l})}] \}_{P_{\ell} \in \Core}$. 
   The parties then output $({\GW}_1, \ldots, {\GW}_{|\Z_s|}, \{[r^{(\mathfrak{l})}] \}_{\mathfrak{l} = 1, \ldots, L})$, where ${\GW}_q = \W_q$ for $q = 1, \ldots, |\Z_s|$.
\end{myitemize}
\end{protocolsplitbox}

Theorem \ref{thm:Rand} follows easily from the above discussion.
\begin{theorem}
\label{thm:Rand}
Protocol $\Rand$ achieves the following where  $\TimeRand = \TimeMDVSS = \TimeSVM  + \TimeAuth + 2\TimeBC + 6\TimeBA$ and $L \geq 1$.
\begin{myitemize}
\item[--] {\bf $\Z_s$-correctness}:  If the network is synchronous, then except with probability $\Order(n^3 \cdot \errorAICP)$,
   at the time $\TimeRand$, there exist values $r^{(1)}, \ldots, r^{(L)}$, which are linearly secret-shared with IC-signatures, where the core-sets are 
   $\GW_1, \ldots, \GW_{|\Z_s|}$.
\item[--] {\bf $\Z_a$-correctness}:  If the network is asynchronous, then except with probability $\Order(n^3 \cdot \errorAICP)$,
  there exist values $r^{(1)}, \ldots, r^{(L)}$, which are almost-surely linearly secret-shared with IC-signatures, where the core-sets are 
   $\GW_1, \ldots, \GW_{|\Z_s|}$.
\item[--] {\bf Privacy}: Irrespective of the network type, the view of the adversary remains independent of $r^{(1)}, \ldots, r^{(L)}$.
\item[--] {\bf Communication Complexity}: The protocol incurs a communication of 
 $\Order(|\Z_s|^2 \cdot L \cdot n^9 \cdot \log{|\F|} \cdot |\sigma|)$ bits,
   apart from $\Order(|\Z_s| + n)$ instances of $\BA$.
\end{myitemize}
\end{theorem}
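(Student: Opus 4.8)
The plan is to obtain Theorem~\ref{thm:Rand} essentially as a corollary of Theorem~\ref{thm:MDVSS} (in its $L$-value form) together with the linearity of $[\cdot]$-sharing recorded after Definition~\ref{def:SS}. Protocol $\Rand$ does nothing more than invoke one instance of the modified $\MDVSS$ with every $P_\ell$ using a uniformly random vector $\overrightarrow{R^{(\ell)}}=(r^{(\ell,1)},\ldots,r^{(\ell,L)})$, and then, from the output $(\Core,\W_1,\ldots,\W_{|\Z_s|},\{([{r^\star}^{(\ell,1)}],\ldots,[{r^\star}^{(\ell,L)}])\}_{P_\ell\in\Core})$, locally set $[r^{(\mathfrak{l})}]=\sum_{P_\ell\in\Core}[{r^\star}^{(\ell,\mathfrak{l})}]$ for each $\mathfrak{l}$. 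So the whole proof reduces to checking that this local linear combination is well-defined and inherits the required structure, and that the contribution of the honest dealer(s) in $\Core$ suffices for privacy.

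For $\Z_s$-correctness I would invoke the $\Z_s$-Correctness\&Commitment clause of Theorem~\ref{thm:MDVSS}: except with probability $\Order(n^3\cdot\errorAICP)$, at time $\TimeMDVSS$ all honest parties hold a common $\Core$ containing at least one honest dealer, common core-sets $\W_1,\ldots,\W_{|\Z_s|}$ with $\Z_s$ satisfying $\Q^{(1)}(\W_q,\Z_s)$, and, for every $P_\ell\in\Core$, values ${r^\star}^{(\ell,\mathfrak{l})}$ (equal to $r^{(\ell,\mathfrak{l})}$ for honest $P_\ell$) such that $\{\overrightarrow{{R^\star}^{(\ell)}}\}_{P_\ell\in\Core}$ are linearly secret-shared with IC-signatures relative to these core-sets. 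Since all these sharings carry the \emph{same} core-sets, the discussion following Definition~\ref{def:SS} applies verbatim: every honest party locally computes $[r^{(\mathfrak{l})}]$ and the associated IC-signatures, and the result is a linear secret-sharing with IC-signatures whose core-sets are $\GW_q\defined\W_q$. As the only work beyond $\MDVSS$ is this local computation, $\TimeRand=\TimeMDVSS$ and the error bound is inherited from $\MDVSS$. The $\Z_a$-correctness argument is identical, using the $\Z_a$-Correctness\&Commitment clause instead, so that the sharings are obtained almost-surely and eventually.

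Privacy is the only place needing a fresh (but standard) argument: by the Privacy clause of Theorem~\ref{thm:MDVSS}, the adversary's view is independent of $\overrightarrow{R^{(\ell)}}$ for every honest $P_\ell\in\Core$, and, by correctness, $\Core$ contains some honest $P_h$ with ${r^\star}^{(h,\mathfrak{l})}=r^{(h,\mathfrak{l})}$ uniform. Writing $r^{(\mathfrak{l})}=r^{(h,\mathfrak{l})}+\sum_{P_\ell\in\Core\setminus\{P_h\}}{r^\star}^{(\ell,\mathfrak{l})}$, the second summand is determined by the adversary's view while $r^{(h,\mathfrak{l})}$ is uniform and independent of it, so $r^{(\mathfrak{l})}$ is uniform given that view; independence of the $r^{(h,\mathfrak{l})}$ across $\mathfrak{l}$ then gives the claim for the whole vector. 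Finally, the communication complexity is exactly that of one invocation of the modified $\MDVSS$ with $L$ values, namely $\Order(|\Z_s|^2\cdot L\cdot n^9\cdot\log{|\F|}\cdot|\sigma|)$ bits plus $\Order(|\Z_s|+n)$ instances of $\BA$, the local output computation adding nothing. I do not expect a real obstacle; the one point to handle with care is that the $L$-value modification of $\MDVSS$ genuinely keeps the number of $\BA$ instances independent of $L$ while still producing common core-sets for all $L\cdot|\Core|$ sharings at once — this is precisely what the paragraph \textbf{Protocol $\MDVSS$ with $L$ Values for Each Dealer} asserts, so it should be cited rather than re-derived.
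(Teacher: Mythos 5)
Your proposal matches the paper's approach exactly: the paper itself states only that "Theorem~\ref{thm:Rand} follows easily from the above discussion," i.e., by invoking the $L$-value form of Theorem~\ref{thm:MDVSS}, using the common core-sets to apply the linearity of $[\cdot]$-sharing for the local sum over $\Core$, and arguing privacy via the presence of at least one honest dealer in $\Core$. Your elaboration of each step (including the one-time-pad-style uniformity argument and the observation that the local output computation costs nothing) is correct and consistent with the paper.
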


\section{Network Agnostic Protocol for Generating Random Multiplication Triples}
\label{sec:triples}
In this section, we present our network-agnostic triple-generation protocol, which generates random and private multiplication-triples which are linearly secret-shared with IC-signatures.  
  The protocol is based on several sub-protocols which we present next.
  Throughout this section, we will assume the existence of {\it global} core-sets ${\GW}_1, \ldots, {\GW}_{|\Z_s|}$, where $\Z_s$ satisfies the $\Q^{(1)}({\GW}_q, \Z_s)$ condition
  for $q = 1, \ldots, |\Z_s|$.
  Looking ahead, these core-sets will be generated by first running the protocol $\Rand$, using an appropriate value of $L$, 
   which will be determined across all the sub-protocols which we will be discussing next. All the secret-shared values in the various sub-protocols in the sequel will have ${\GW}_1, \ldots, {\GW}_{|\Z_s|}$
   as underlying core-sets.
 \subsection{Verifiably Generating Linear Secret Sharing of a Value with IC-signatures}
 \label{sec:LSh}
 In protocol $\LSh$ (Fig \ref{fig:LSh}), there exists a designated dealer $\D \in \PartySet$ with private input $s$. In addition, there is 
  a random value $r \in \F$, which is linearly secret-shared with IC-signatures,
  such that the underlying core-sets are $\GW_1, \ldots, \GW_{|\Z_s|}$ (the value $r$ will {\it not} be known to $\D$ at the beginning of the protocol).
   The protocol allows the parties to let $\D$ verifiably generate a linear secret-sharing of $s$ with 
  IC-signatures, such that the underlying core-sets are  ${\GW}_1, \ldots, \GW_{|\Z_s|}$, where $s$ remains private for an {\it honest} $\D$. 
   The verifiability guarantees that even if $\D$ is {\it corrupt}, if any (honest) party computes an output, then there exists {\it some} value, say $s^{\star}$, which is linearly secret-shared with IC-signatures,
    such that the underlying core-sets are ${\GW}_1, \ldots, \GW_{|\Z_s|}$.
   
   The protocol idea is very simple and standard. We first let $\D$ reconstruct the value $r$, which is then used as a {\it one-time pad} (OTP) by $\D$ to make public an OTP-encryption of $s$. 
   Then, using the linearity property of secret-sharing, the parties locally remove the OTP from the OTP-encryption.
    \begin{protocolsplitbox}{$\LSh(\D, s, \Z_s, \Z_a, \ShareSpec_{\Z_s}, [r], {\GW}_1, \ldots, {\GW}_{|\Z_s|})$}{VSS for verifiably generating a linear secret-sharing of a value with 
      IC-signatures with respect to given global core-sets}{fig:LSh}
\justify
\begin{myitemize}
\item[--] {\bf Reconstructing the OTP Towards the Dealer}: The parties in $\PartySet$ invoke an instance $\Rec([r], \{\D\})$ of $\Rec$ to let $\D$ reconstruct $r$
 and {\color{red} wait for time $\TimeRec$}.
\item[--] {\bf Making the OTP-encryption Public}: $\D$, upon computing the output $r$ from the instance of $\Rec$, broadcasts $\s = s + r$.
\item[--] {\bf Computing the Output}: The parties in $\PartySet$ {\color{red} wait till the local time becomes 
 $\TimeRec + \TimeBC$}. Then upon receiving $\s$ from the broadcast of $\D$, the parties in $\PartySet$
 locally compute $[\s - r]$ from $[\s]$ and $[r]$. {\color{blue}Here $[\s]$ denotes the {\it default linear secret-sharing} of $\s$ with IC-signatures and 
  core-sets ${\GW}_1, \ldots, {\GW}_{|\Z_s|}$, where $[\s]_1 = \s$ and
 $[\s]_2 = \ldots = [\s]_{|\Z_s|} = 0$, and where the parties set
 $\ICSig(P_j, P_i, P_k, [\s]_q)_{P_j, P_i \in {\GW}_q, P_k \in \PartySet}$ to the default value}. 
  The parties then output $({\GW}_1, \ldots, {\GW_{|\Z_s|}}, [\s - r])$.
\end{myitemize}
\end{protocolsplitbox}

The properties of the protocol $\LSh$ stated in Lemma \ref{lemma:LSh} are proved in Appendix \ref{app:Triples}.

\begin{lemma}
\label{lemma:LSh}
Let $r$ be a random value which is linearly secret-shared with IC-signatures with ${\GW}_1, \ldots, {\GW}_{|\Z_s|}$ being the underlying core-sets.
  Then protocol $\LSh$ achieves the following where $\D$ participates with the input $s$.
\begin{myitemize}  
\item[--] If $\D$ is honest, then the following hold, where $\TimeLSh = \TimeRec + \TimeBC$.
 \begin{myitemize}
    \item[--] {\bf $\Z_s$-Correctness}: If the network is synchronous, then except with probability $\Order(n^3 \cdot \errorAICP)$, the honest parties output $[s]$ at the time $\TimeLSh$, with
     ${\GW}_1, \ldots, {\GW}_{|\Z_s|}$ being the underlying core-sets.
     \item[--] {\bf $\Z_a$-Correctness}: If the network is asynchronous, then except with probability $\Order(n^3 \cdot \errorAICP)$, the honest parties eventually
      output $[s]$, with
     ${\GW}_1, \ldots, {\GW}_{|\Z_s|}$ being the underlying core-sets.
    \item[--] {\bf Privacy}: Irrespective of the network type, the view of the adversary remains independent of $s$.
 \end{myitemize}
  \item[--] If $\D$ is corrupt then either no honest party computes any output or there exists some value, say $s^{\star}$, such that the following hold.
    \begin{myitemize}
     \item[--] {\bf $\Z_s$-Commitment}: If the network is synchronous, then
      except with probability $\Order(n^3 \cdot \errorAICP)$, the honest parties output $[s^{\star}]$, with
     ${\GW}_1, \ldots, {\GW}_{|\Z_s|}$ being the underlying core-sets. Moreover, if any honest party computes its output at the time $T$, then all honest parties will have their respective output by
     the time $T + \Delta$.
        \item[--] {\bf $\Z_a$-Commitment}: If the network is asynchronous, then
      except with probability $\Order(n^3 \cdot \errorAICP)$, the honest parties eventually output $[s^{\star}]$, with
     ${\GW}_1, \ldots, {\GW}_{|\Z_s|}$ being the underlying core-sets.     
    \end{myitemize}
  \item[--] {\bf Communication Complexity}: $\Order(|\Z_s| \cdot n^3 \cdot \log{|\F|} + n^4 \cdot \log{|\F|} \cdot |\sigma|)$ bits are communicated by the honest parties.
\end{myitemize}
\end{lemma}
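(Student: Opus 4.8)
The plan is to reduce everything to the guarantees of the subprotocols $\Rec$ and $\BC$ together with the linearity of $[\cdot]$-sharing with IC-signatures. First I would handle an \emph{honest} $\D$. In a synchronous network, by $\Z_s$-correctness of $\Rec$ (Lemma~\ref{lemma:Rec}), $\D$ obtains $r$ at time $\TimeRec$, except with probability $\Order(|\ShareSpec_{\Z_s}| \cdot n^2 \cdot \errorAICP)$, which after incorporating the local dispute control of Section~\ref{sec:SuperPolynomial} becomes $\Order(n^3 \cdot \errorAICP)$. Then $\D$ broadcasts $\s = s + r$, and by $\Z_s$-validity of $\BC$ (Theorem~\ref{thm:BC}) every honest party receives $\s$ by time $\TimeRec + \TimeBC = \TimeLSh$. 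The default secret-sharing $[\s]$ trivially satisfies Definition~\ref{def:SS} with core-sets ${\GW}_1, \ldots, {\GW}_{|\Z_s|}$ (all shares except the first are $0$ and all IC-signatures are set to the default value), and $[r]$ is by hypothesis a linear secret-sharing with IC-signatures with the \emph{same} core-sets; hence by linearity the parties locally obtain $[\s] - [r] = [\s - r] = [s]$, proving $\Z_s$-correctness. For an asynchronous network the argument is identical, using $\Z_a$-correctness of $\Rec$ and $\Z_a$-weak-validity together with $\Z_a$-fallback-validity of $\BC$ to conclude that every honest party \emph{eventually} receives $\s$ and outputs $[s]$.

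For \emph{privacy} (honest $\D$) I would argue as follows. The only place $r$ is revealed is the instance $\Rec([r], \{\D\})$, whose receiver set $\{\D\}$ consists of a single honest party; by the privacy of $\Rec$ the adversary's view is independent of $r$. Since $[r]$ is a linear secret-sharing and, by $\Z_a \subset \Z_s$ and the $\Q^{(2,1)}(\PartySet,\Z_s,\Z_a)$ condition, there is at least one group $S_h \in \ShareSpec_{\Z_s}$ containing no corrupt party, the share $[r]_h$ is missing from the adversary's view, so $r$ is uniformly distributed conditioned on that view. Consequently the publicly broadcast value $\s = s + r$ is uniformly distributed and independent of $s$. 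Finally, each share $[s]_q = [\s]_q - [r]_q$ that a corrupt party in $S_q$ obtains is computable from $\s$ and the adversary's own shares of $[r]$, so the output sharing leaks no further information about $s$.

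Next I would treat a \emph{corrupt} $\D$. If no honest party receives any value from $\D$'s broadcast, then no honest party produces an output, matching the first alternative. Otherwise, suppose some honest party obtains $\s^\star$; by $\Z_s$-consistency and $\Z_s$-fallback consistency of $\BC$ in a synchronous network, or $\Z_a$-weak consistency together with $\Z_a$-fallback consistency in an asynchronous network, every honest party that outputs uses the \emph{same} $\s^\star$, and in the synchronous case every honest party receives $\s^\star$ within time $\Delta$ of the first such party. Setting $s^\star \defined \s^\star - r$ and invoking linearity exactly as before, every honest party outputs $[s^\star] = [\s^\star] - [r]$, a linear secret-sharing with IC-signatures with core-sets ${\GW}_1, \ldots, {\GW}_{|\Z_s|}$ (validity of $[r]$ is given and $[\s^\star]$ is the default sharing). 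The timing bound $T + \Delta$ follows directly from the $\Delta$-gap in $\BC$'s (fallback) consistency, since the only step after receiving $\s^\star$ is a local computation. The error probability $\Order(n^3 \cdot \errorAICP)$ is again inherited solely from the $\Rec$ instance (with local dispute control), as $\BC$ is perfectly secure given the idealized pseudo-signature setup.

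The main obstacle I expect is not in any single step but in the privacy argument: one must be careful that reconstructing $r$ only toward $\D$ does not leak $r$ to the adversary even when $\D$ is honest (this is precisely the privacy clause of $\Rec$, which requires an all-honest receiver set), and that combining the adversary's $[r]$-shares with the public value $\s$ still hides $s$ --- this is where the existence of an all-honest group $S_h$ in $\ShareSpec_{\Z_s}$, guaranteed by $\Z_a \subset \Z_s$, is essential. The communication complexity is routine: the $\Rec$ instance with a singleton receiver set costs $\Order(|\Z_s| \cdot n^3 \cdot \log{|\F|})$ bits by Lemma~\ref{lemma:Rec}, and the single broadcast of an element of $\F$ costs $\Order(n^4 \cdot \log{|\F|} \cdot |\sigma|)$ bits by Theorem~\ref{thm:BC}, giving the claimed bound.
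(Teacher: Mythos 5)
Your proposal is correct and follows essentially the same route as the paper's proof: reduce $\Z_s/\Z_a$-correctness and commitment to the corresponding properties of $\Rec$ and $\BC$, invoke linearity of $[\cdot]$-sharing on the default sharing of $\s$ and the given $[r]$, and charge the error probability entirely to $\Rec$. Your privacy argument is somewhat more explicit than the paper's one-line remark (you spell out the all-honest group $S_h$ and the reduction of corrupt-party shares to $\s$ plus known $[r]$-shares), but it is the same idea; the only minor over-citation is invoking $\Q^{(2,1)}$ for the existence of $S_h$, where $\Z_a \subset \Z_s$ alone already suffices.
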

We end this section with some notations which we use while invoking the protocol $\LSh$ in the rest of the paper.
\begin{notation}[{\bf Notations for Using Protocol $\LSh$}]
\label{notation:LSh}
Let $P_i \in \PartySet$. In the rest of the paper we will say that ``{\it $P_i$ invokes an instance of $\LSh$ with input $s$}" to mean that $P_i$ acts as $\D$ and invokes an instance 
 $\LSh(\D, s, \Z_s, \Z_a, \ShareSpec_{\Z_s}, [r], {\GW}_1, \ldots, {\GW}_{|\Z_s|})$ of $\LSh$. Here, $r$ will be the corresponding random ``pad" for this instance of $\LSh$, which will {\it already} be linearly secret-shared
  with IC-signatures, with ${\GW}_1, \ldots, {\GW}_{|\Z_s|}$ being the underlying core-sets. If there are multiple instances of $\LSh$ invoked by $\D$, then corresponding to each instance, there will be a random secret-shared
  pad available to the parties {\it beforehand}.  The parties will be knowing which secret-shared pad is associated with which instance of $\LSh$. This will be ensured by upper-bounding the {\it maximum} number of 
  $\LSh$ instances $\LMax$ invoked across {\it all} our protocols. The parties then generate $\LMax$ number of linearly secret-shared random values with IC-signatures,
   with ${\GW}_1, \ldots, {\GW}_{|\Z_s|}$ being the underlying core-sets, by running the protocol $\Rand$ beforehand with $L = \LMax$.
\end{notation}

\subsection{Non-Robust Multiplication Protocol}
Protocol $\BasicMult$ (Fig \ref{fig:BasicMult}) takes input $a$ and $b$, which are linearly secret-shared with IC-signatures, with ${\GW}_1, \ldots, {\GW}_{|\Z_s|}$ being the underlying core-sets
 and a {\it publicly known} subset $\Discarded \subset \PartySet$, consisting of {\it only corrupt} parties. The parties output a linear secret-sharing of $c$ with IC-signatures, with 
  ${\GW}_1, \ldots, {\GW}_{|\Z_s|}$ being the underlying core-sets. If all the parties in 
   $\PartySet \setminus \Discarded$ behave {\it honestly}, then $c = a \cdot b$, else $c = a \cdot b + \delta$, where $\delta \neq 0$. 
   Moreover, the adversary does not learn anything additional about $a$ and $b$ in the protocol. 
   The protocol also takes input an {\it iteration number} $\iter$ and all the sets computed in the protocol are tagged with $\iter$.
   Looking ahead, our {\it robust} triple-generation protocol will be executed {\it iteratively}, with each iteration invoking instances of $\BasicMult$.
   
   A detailed overview of the protocol $\BasicMult$ has been already presented in Section \ref{ssec:Mult}. The idea is to let each summand $[a]_p \cdot [b]_q$ be linearly secret-shared
   by exactly one {\it summand-sharing} party. A secret-sharing of $a \cdot b$ then follows from 
   the secret-sharing of each summand $[a]_p \cdot [b]_q$ owing to the linearity property of the secret-sharing.
   To deal with the network agnostic condition, the summand-sharing parties are selected in two phases: {\it first}, we select them {\it dynamically}, {\it without} pre-assigning any summand to any
   designated party. Once there exists a subset of parties from $\ShareSpec_{\Z_s}$ who have served the role of summand-sharing parties, we go to the {\it second} phase,
   where each remaining summand is designated to the left-over parties through some publicly known assignment. Strict timeouts are maintained
    to ensure that we don't stuck forever during the second phase.
    Finally, if there are still any remaining summands which are not yet secret-shared, they are publicly reconstructed and the default sharing is taken on their behalf.
    Throughout, the parties in $\Discarded$ are not let to secret-share any summand, since they are {\it already known} to be {\it corrupt} and at the same time, it is ensured that the shares
    of the honest parties are {\it never} publicly reconstructed.

\begin{protocolsplitbox}{$\BasicMult(\Z_s, \Z_a, \ShareSpec_{\Z_s}, [a], [b], {\GW}_1, \ldots, {\GW}_{|\Z_s|}, \Discarded, \iter)$}{Network-agnostic non-robust multiplication protocol}{fig:BasicMult}
\justify
\begin{myitemize}
\item[--] \textbf{Initialization}: The parties in $\PartySet$ do the following.
    \begin{myitemize}
       \item Initialize the {\it summand-index-set} of indices of {\it all} summands:
        \[\Products_\iter = \{(p, q)\}_{p, q = 1, \ldots, |\SharingSpec_{\Z_s}|}.\]
         \item Initialize the {\it summand-index-set} corresponding to each $P_j \in \PartySet \setminus \Discarded$:
	\[\Products^{(j)}_\iter = \{(p, q)\}_{P_j \in S_p \cap S_q}. \]	
        \item Initialize the {\it summand-index-set} corresponding to each $S_q \in \ShareSpec_{\Z_s}$: 
        \[\Products^{(S_q)}_\iter = \displaystyle \cup_{P_j \in S_q} \Products^{(j)}_\iter.\]
        \item Initialize the set of summands-sharing parties:
         \[\Selected_\iter = \emptyset.\]
        \item Initialize the hop number:
         \[\hop = 1.\]        
       \end{myitemize}
\justify       
\centerline{\underline{\bf Phase I: Sharing Summands Through Dynamic Assignment}} 
\item[--] While there exists {\it no} $S_q \in \ShareSpec_{\Z_s}$, where $\Products^{(S_q)}_\iter  = \emptyset$, the parties do the following:
\begin{myitemize}
	\item \textbf{Sharing Sum of Eligible Summands}: Every $P_i \notin (\Selected_\iter \cup \Discarded)$ invokes an instance $\LSh^{(\phaseone, \hop, i)}$ of $\LSh$ 
	 with input $c^{(i)}_\iter$, where
	  \[ \displaystyle c^{(i)}_\iter = \sum_{(p, q) \in \Products^{(i)}_\iter} [a]_p[b]_q.\]
	Corresponding to every $P_j \notin (\Selected_\iter \cup \Discarded)$, the parties in $\PartySet$ participate in the instance $\LSh^{(\phaseone, \hop, j)}$, if invoked by $P_j$.
	\item \textbf{Selecting Summand-Sharing Party for the Hop Through ACS}: The parties in $\PartySet$ {\color{red} wait for time $\TimeLSh$} and then do the following.
	\begin{myitemize}
	 \item[--] For $j = 1, \ldots, n$, participate in an instance $\BA^{(\phaseone,\hop,j)}$ of $\BA$
	corresponding to $P_j \in \PartySet$ with input $1$ if {\it all} the following hold:
	  \begin{myitemize}
		\item[--] $P_j \notin (\Selected_\iter \cup \Discarded)$; 
		\item[--] An output $[c^{(j)}_\iter]$ is computed during the instance $\LSh^{(\phaseone,\hop,j)}$.
		\end{myitemize}
	  \item[--] Upon computing an output $1$ during the instance
	   $\BA^{(\phaseone,\hop,j)}$ corresponding to some $P_j \in \PartySet$, participate with input $0$
	    in the instances $\BA^{(\phaseone, \hop, k)}$ corresponding to parties $P_k \notin (\Selected_\iter \cup \Discarded)$, for which no input has been provided yet.	
	  \item[--] Upon computing outputs during the instances  $\BA^{(\phaseone,\hop, i)}$ corresponding to each $P_i \notin (\Selected_\iter \cup \Discarded)$, 
	  let $P_j$ be the least-indexed party, such that the output $1$ is computed during the instance $\BA^{(\phaseone, \hop, j)}$. Then update the following. 
        \begin{myitemize}
		\item[--] $\Selected_\iter = \Selected_\iter \cup \{P_j\}$.
		\item[--] $\Products_\iter = \Products_\iter \setminus \Products^{(j)}_\iter$.
		\item[--] $\forall P_k \in \PartySet \setminus \{\Discarded \cup \Selected_\iter \}$:
		$\Products^{(k)}_\iter = \Products^{(k)}_\iter \setminus \Products^{(j)}_\iter$.
		\item[--] For each $S_q \in \ShareSpec_{\Z_s}$, $\Products^{(S_q)}_\iter = \Products^{(S_q)}_\iter \setminus \Products^{(j)}_\iter$.
		\item[--] Set $\hop = \hop + 1$.	
		\end{myitemize}
	\end{myitemize}
\end{myitemize}
\end{myitemize}
\justify
\centerline{\underline{\bf Phase II: Sharing Remaining Summands Through Static Assignment}} 
\justify
\begin{myitemize}
\item {\bf Re-assigning the Summand-Index-Set of Each Party}: Corresponding to each $P_j \in \PartySet \setminus \Selected_\iter$,  the parties in $\PartySet$
  set $\Products^{(j)}_\iter$ as
     \[ \Products^{(j)}_\iter = \displaystyle \Products_\iter \cap \{(p, q)\}_{P_j = \min(S_p \cap S_q)}, \] 
     where $\min(S_p \cap S_q)$ denotes the minimum indexed party in $(S_p \cap S_q)$.
 \item {\bf Sharing Sum of Assigned Summands}: Every party $P_i \notin (\Selected_\iter \cup \Discarded)$
   invokes an instance $\LSh^{(\phasetwo, i)}$ of $\LSh$ with input $c^{(i)}_\iter$, where
    \[ \displaystyle c^{(i)}_\iter = \sum_{(p, q) \in \Products^{(i)}_\iter} [a]_p[b]_q.\]
    Corresponding to every $P_j \in \PartySet \setminus (\Selected_\iter \cup \Discarded)$,
     the parties in $\PartySet$ participate in the instance $\LSh^{(\phasetwo, j)}$, if invoked by $P_j$.    
\item {\bf Agreeing on the Summand-Sharing parties of the Second Phase}: The parties in $\PartySet$
   {\color{red} wait for $\TimeLSh$ time after the beginning of the second phase}.
    Then for each $P_j \in \PartySet$, participate in an instance $\BA^{(\phasetwo, j)}$ of $\BA$ with input $1$, if {\it all} the following hold, otherwise 
   participate with input $0$.
	\begin{myitemize}
	\item[--] $P_j \notin (\Selected_\iter \cup \Discarded)$;
	\item[--] An output $[c^{(j)}_\iter]$ is computed during the instance $\LSh^{(\phasetwo,j)}$.
	\end{myitemize}
\item {\bf Updating the Sets for the Second Phase}: Corresponding to each $P_j \notin (\Selected_\iter \cup \Discarded)$, such that $1$ is computed as the output during
 $\BA^{(\phasetwo,j)}$, update
     \begin{myitemize}
     \item[--] $\Products_\iter = \Products_\iter \setminus \Products^{(j)}_\iter$;
     \item[--] $\Selected_\iter = \Selected_\iter \cup \{P_j\}$.  \justify
     \end{myitemize}
\end{myitemize}     
\centerline{\underline{\bf Phase III: Reconstructing the Remaining Summands}} 
\justify
\begin{myitemize}
\item {\bf Reconstructing the Remaining Summands and Taking the Default Sharing}: The parties in $\PartySet$ do the following.
	\begin{myitemize}
	\item[--] Corresponding to each $[a]_p$ such that $(p, \star) \in \Products_\iter$, participate in the instance 
	$\RecShare([a], S_p, \PartySet)$ of $\RecShare$ to publicly reconstruct $[a]_p$
	\item[--] Corresponding to each $[b]_q$ such that $(\star, q) \in \Products_\iter$, participate in the instance
	 $\RecShare([b], S_q, \PartySet)$ of $\RecShare$ to publicly reconstruct $[b]_q$.
         \item[--] Corresponding to every $P_j \in \PartySet \setminus \Selected_\iter$, take the {\it default linear secret-sharing} of 
       the public input $c^{(j)}_\iter$ with IC-signatures and core-sets ${\GW}_1, \ldots, {\GW}_{|\Z_s|}$, 
        where\footnote{The default linear secret-sharing of $c^{(j)}_\iter$ is computed in a similar way as done in the protocol 
       $\Rand$ for $\s$ (see Fig \ref{fig:Rand}).}
     \[ \displaystyle c^{(j)}_\iter = \sum_{(p, q) \in \Products^{(j)}_\iter} [a]_p[b]_q.\]
	\end{myitemize}
\item\textbf{Output Computation}:  The parties output
$({\GW}_1, \ldots, {\GW}_{|\Z_s|},  [c^{(1)}_\iter], \ldots, [c^{(n)}_\iter], \allowbreak [c_\iter])$, where $c_\iter \defined c^{(1)}_\iter + \ldots + c^{(n)}_\iter$. 
\end{myitemize}
\end{protocolsplitbox}

The properties of the protocol $\BasicMult$ are claimed in the following lemmas, which are proved in Appendix \ref{app:Triples}.
 \begin{lemma}
\label{lemma:BasicMultFuture}
During any instance $\BasicMult(\Z_s, \Z_a, \ShareSpec_{\Z_s}, [a], [b], {\GW}_1, \ldots, \allowbreak {\GW}_{|\Z_s|}, \Discarded, \iter)$ of $\BasicMult$,
 if $P_j \in \Selected_{\iter}$ then $P_j \not \in \Discarded$, irrespective of the network type. 
 \end{lemma}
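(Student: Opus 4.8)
The claim is essentially a structural invariant of protocol $\BasicMult$: once a party $P_j$ is placed into $\Selected_\iter$ it cannot also be in the input set $\Discarded$. The plan is to trace through every location in the protocol where $\Selected_\iter$ is modified and observe that in each case membership in $\Discarded$ is explicitly excluded as a precondition. Concretely, $\Selected_\iter$ is updated in exactly three places: (i) in Phase I, inside \textbf{Selecting Summand-Sharing Party for the Hop Through ACS}, where $\Selected_\iter = \Selected_\iter \cup \{P_j\}$ only after an output $1$ is obtained in $\BA^{(\phaseone,\hop,j)}$, and such a $\BA$ instance is initialized by honest parties with input $1$ \emph{only if} $P_j \notin (\Selected_\iter \cup \Discarded)$; (ii) in Phase II, inside \textbf{Updating the Sets for the Second Phase}, where $\Selected_\iter = \Selected_\iter \cup \{P_j\}$ only for $P_j \notin (\Selected_\iter \cup \Discarded)$ with output $1$ in $\BA^{(\phasetwo,j)}$; and $\Selected_\iter$ is initialized to $\emptyset$. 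Since $\emptyset \cap \Discarded = \emptyset$ and each update only adds parties drawn from $\PartySet \setminus (\Selected_\iter \cup \Discarded)$, a trivial induction on the sequence of updates shows $\Selected_\iter \cap \Discarded = \emptyset$ throughout, which is exactly the statement.

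The one subtlety worth spelling out is that the updates are triggered by BA outputs, not directly by the honest parties' local checks, so one must argue that a BA instance $\BA^{(\phaseone,\hop,j)}$ (or $\BA^{(\phasetwo,j)}$) cannot output $1$ when $P_j \in \Discarded$. Here I would invoke the validity guarantee of $\BA$ (Theorem \ref{thm:BA}): the set $\Discarded$ is \emph{publicly known} and contains only corrupt parties, so every honest party observes $P_j \in \Discarded$ and therefore participates in $\BA^{(\cdot,j)}$ with input $0$ (indeed the protocol code explicitly says honest parties input $0$ whenever the listed conditions fail, and $P_j \notin \Discarded$ is one of those conditions). By $\Z_s$-validity in the synchronous case and $\Z_a$-validity in the asynchronous case, if all honest parties input $0$ then the output is $0$, contradicting the assumption that the output was $1$. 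Hence no honest party ever adds a $\Discarded$-party to $\Selected_\iter$, and since all honest parties compute $\Selected_\iter$ from the same BA outputs, the invariant holds for every honest party's local copy.

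I expect no real obstacle here; this is a bookkeeping lemma whose entire content is ``read the protocol code.'' The only thing to be careful about is to phrase the argument so that it is genuinely network-agnostic: the BA validity property is available in both network regimes (synchronous $\Z_s$-validity and asynchronous $\Z_a$-validity), and $\Discarded$ being common knowledge among honest parties does not depend on synchrony, so the argument goes through uniformly. I would therefore structure the proof as: (1) note $\Selected_\iter$ starts empty; (2) enumerate the two update points and extract the precondition $P_j \notin \Discarded$; (3) discharge the precondition via BA validity using that $\Discarded$ is public and all honest parties input $0$ for $\Discarded$-parties; (4) conclude by induction over updates, valid irrespective of the network type.
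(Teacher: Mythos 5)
Your argument is correct and is essentially the same as the paper's: both proceed by observing that $\Discarded$ is publicly known, so every honest party supplies input $0$ to $\BA^{(\phaseone,\hop,j)}$ and $\BA^{(\phasetwo,j)}$ whenever $P_j \in \Discarded$, and then invoke $\Z_s$-validity (resp.\ $\Z_a$-validity) of $\BA$ to conclude $P_j$ is never added to $\Selected_\iter$. The paper phrases this as a one-paragraph proof by contradiction rather than your induction-on-updates framing, but the substance is identical.
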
  

\begin{lemma}
\label{lemma:BasicMultHopTermination}
Suppose that no honest party is present in $\Discarded$. If the honest parties start participating during hop number $\hop$ of Phase I of $\BasicMult$ with iteration number $\iter$, 
 then except with probability $\Order(n^3 \cdot \errorAICP)$,
 the hop takes $\TimeLSh + 2\TimeBA$ time to complete in a synchronous network, or almost-surely completes eventually in an asynchronous network.
\end{lemma}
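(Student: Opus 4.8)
The plan is to show that a single hop of Phase~I of $\BasicMult$ terminates (in the stated time in a synchronous network, or almost-surely in an asynchronous network), assuming the honest parties enter the hop with consistent state. First I would observe that at the start of the hop each $P_i \notin (\Selected_\iter \cup \Discarded)$ invokes an instance $\LSh^{(\phaseone,\hop,i)}$ of $\LSh$ with the input $c^{(i)}_\iter$ it can compute locally from its shares of $[a]$ and $[b]$; in particular every \emph{honest} $P_i \notin (\Selected_\iter \cup \Discarded)$ does so, and since no honest party is in $\Discarded$ by hypothesis, the set of honest parties invoking an $\LSh$ instance is exactly $\Hon \setminus \Selected_\iter$. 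By the $\Z_s$-Correctness and $\Z_a$-Correctness of $\LSh$ (Lemma \ref{lemma:LSh}), together with its additive error $\Order(n^3 \cdot \errorAICP)$, each such honest instance produces an output $[c^{(j)}_\iter]$ for all honest parties: in a synchronous network within time $\TimeLSh$, in an asynchronous network eventually. So after waiting $\TimeLSh$ time (resp.~eventually), every honest party has a $1$-candidate for the corresponding $\BA^{(\phaseone,\hop,j)}$ instance, for every honest $P_j \notin (\Selected_\iter \cup \Discarded)$.

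Next I would argue the ACS sub-step terminates. The honest parties participate in the instances $\BA^{(\phaseone,\hop,j)}$ for $j = 1,\ldots,n$: each honest party inputs $1$ to $\BA^{(\phaseone,\hop,j)}$ whenever $P_j \notin (\Selected_\iter \cup \Discarded)$ and it has seen the output of $\LSh^{(\phaseone,\hop,j)}$, and otherwise will eventually input $0$ (once \emph{some} $\BA^{(\phaseone,\hop,k)}$ outputs $1$, inputs of $0$ are supplied to all remaining instances). The key point is that the set $\Hon \setminus \Selected_\iter$ is non-empty --- this follows because $\Hon \notin \Z_s$ (as $\Z_s$ satisfies $\Q^{(2)}(\PartySet,\Z_s)$, hence $\Q^{(1)}$, so no element of $\Z_s$ covers $\Hon$), whereas $\Selected_\iter$, being contained in the complement of a set from $\ShareSpec_{\Z_s}$ that has not yet been fully selected, cannot contain all of $\Hon$; more simply, if $\Hon \subseteq \Selected_\iter$ then every group $S_q \in \ShareSpec_{\Z_s}$ already has $\Products^{(S_q)}_\iter = \emptyset$ and Phase~I would have ended, contradicting that we are inside a hop. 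Fix any $P_{j_0} \in \Hon \setminus \Selected_\iter$. All honest parties eventually obtain the output of $\LSh^{(\phaseone,\hop,j_0)}$ and hence all honest parties input $1$ to $\BA^{(\phaseone,\hop,j_0)}$; by $\Z_s$-validity (synchronous) or $\Z_a$-validity (asynchronous) of $\BA$ (Theorem \ref{thm:BA}), $\BA^{(\phaseone,\hop,j_0)}$ outputs $1$. This triggers $0$-inputs to all remaining instances, so every one of the (at most $n$) $\BA^{(\phaseone,\hop,\star)}$ instances obtains a binary output: in a synchronous network by time $\TimeLSh + 2\TimeBA$ (the $\TimeSBA + \TimeABA = \TimeBA$ running time per $\BA$, with two ``rounds'' of $\BA$ because of the delayed $0$-inputs), and in an asynchronous network almost-surely (since $\ABA$, and hence $\BA$, terminates almost-surely). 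Once all $\BA^{(\phaseone,\hop,\star)}$ instances have binary outputs, the least-indexed $P_j$ with output $1$ is well-defined (such a $j$ exists, e.g.\ $j_0$ qualifies), the sets $\Selected_\iter, \Products_\iter, \Products^{(k)}_\iter, \Products^{(S_q)}_\iter$ are updated, and $\hop$ is incremented, completing the hop. Collecting the error from all the underlying $\LSh$ instances in the hop (at most $n$ of them, each contributing $\Order(n^3 \cdot \errorAICP)$, and the $\BA$ instances are error-free) gives overall error $\Order(n^3 \cdot \errorAICP)$ as claimed; the time bound $\TimeLSh + 2\TimeBA$ in the synchronous case comes from the explicit ``wait for $\TimeLSh$'' followed by the two sequential layers of $\BA$.

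The main obstacle, and the step needing the most care, is the timing accounting in the synchronous case: one must verify that a single hop really costs only $\TimeLSh + 2\TimeBA$ and not more, i.e.\ that the $\LSh$ instances all finish within the explicit $\TimeLSh$ timeout (using $\Z_s$-Correctness of $\LSh$ with honest dealers), that the first layer of $\BA^{(\phaseone,\hop,\star)}$ instances with genuine $1$-inputs finishes in $\TimeBA$ so that $0$-inputs to the rest are injected by then, and that the second layer then finishes within another $\TimeBA$; there is a subtlety that honest parties may be out of sync on \emph{when} they inject the $0$-inputs (a corrupt party could cause different honest parties to see the first ``output $1$'' at slightly different times), but $\Z_s$-validity of $\BA$ on the $j_0$ instance anchors everything, and the $\BC$/$\Acast$-style ``consistency within $\Delta$'' slack is already absorbed into $\TimeBA$. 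A secondary subtlety is making precise why $\Hon \setminus \Selected_\iter \neq \emptyset$ inside a hop, which I would handle exactly as sketched above via the loop guard (``there exists no $S_q$ with $\Products^{(S_q)}_\iter = \emptyset$'') together with $\Hon = \PartySet \setminus Z^\star$ for $Z^\star \in \Z_s$ and the fact that $\Hon$ is one of the sets in $\ShareSpec_{\Z_s}$ (namely $S_h$), whose summand-index-set is non-empty as long as that set has not been fully ``selected''.
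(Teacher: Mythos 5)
Your proposal follows the same decomposition and key ideas as the paper's proof: fix a $P_{j_0} \in \Hon \setminus \Selected_\iter$ (non-emptiness via the loop guard and the fact that $\Hon \in \ShareSpec_{\Z_s}$, since $Z^\star \in \Z_a \subset \Z_s$), apply $\Z_s$- resp.~$\Z_a$-correctness of $\LSh$ to get all honest $\LSh$ outputs, and then drive the ACS through $\BA^{(\phaseone,\hop,j_0)}$.

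The one place where your write-up is looser than the paper is the asynchronous ACS step. You assert that ``all honest parties input $1$ to $\BA^{(\phaseone,\hop,j_0)}$'' and then invoke $\Z_a$-validity. In the asynchronous setting that claim is not automatic: by the time a given honest party has obtained the output of $\LSh^{(\phaseone,\hop,j_0)}$, it may already have seen some \emph{other} $\BA^{(\phaseone,\hop,m)}$ output $1$ (because the adversary controls scheduling and may have let a corrupt $P_m$'s $\LSh$ complete quickly), and therefore that party has already supplied input $0$ to $\BA^{(\phaseone,\hop,j_0)}$. In that sub-case $\Z_a$-validity cannot be applied to $\BA^{(\phaseone,\hop,j_0)}$. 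The paper resolves this by an explicit two-case analysis: either (i) some honest party has already supplied a $0$-input to some $\BA^{(\phaseone,\hop,j)}$ with $P_j \in \Hon \setminus \Selected_\iter$ --- but that only happens after it obtained output $1$ from some $\BA^{(\phaseone,\hop,m)}$, and $\Z_a$-consistency of $\BA$ then propagates that $1$-output to everyone; or (ii) no honest party has yet supplied a $0$-input anywhere, so all honest parties do in fact eventually input $1$ to $\BA^{(\phaseone,\hop,j_0)}$ and $\Z_a$-validity applies. You have all the pieces for this dichotomy (you mention the ``otherwise will eventually input $0$'' fallback), but as written the validity step is stated unconditionally and the case-split is not made explicit, which is exactly the subtlety that could cause a grader to push back. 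Also, your parenthetical remark that $\Hon \notin \Z_s$ ``since $\Q^{(2)}$ implies $\Q^{(1)}$, so no element of $\Z_s$ covers $\Hon$'' is not the right inference ($\Q^{(1)}$ is about covering $\PartySet$, not $\Hon$); the correct argument is that $\Hon = \PartySet \setminus Z^\star$ with $Z^\star \in \Z_s$, so $\Hon \in \Z_s$ would contradict $\Q^{(2)}(\PartySet,\Z_s)$ --- but your ``more simply'' loop-guard argument, which is the one the paper actually uses, is clean and suffices.
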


\begin{lemma}
\label{lemma:BasicMultTermination}
If no honest party is present in $\Discarded$, then in protocol $\BasicMult$, except with probability $\Order(n^3 \cdot \errorAICP)$,
 all honest parties compute some output by the time $\TimeBasicMult = (2n + 1) \cdot \TimeBA + (n + 1) \cdot \TimeLSh + \TimeRec$ in a synchronous network, or almost-surely,
  eventually in an asynchronous network.
\end{lemma}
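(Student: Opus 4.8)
The plan is to prove termination phase by phase, establishing in each case both the synchronous time bound and almost-sure eventual termination in the asynchronous case, and then adding the contributions to match $\TimeBasicMult = (2n+1)\TimeBA + (n+1)\TimeLSh + \TimeRec$. Throughout I would use that $\Discarded$ contains only corrupt parties, so there is a set $S_h \in \ShareSpec_{\Z_s}$ consisting of only honest parties (take $S_h = \PartySet \setminus Z$ for a $Z \in \Z_s$ containing the actual corruption set, which lies in $\Z_s$ since $\Z_a \subset \Z_s$), and none of the shares $[a]_h, [b]_h$ held by the honest parties in $S_h$ is ever publicly reconstructed.

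The heart of the argument is Phase I, where I would bound the number of hops by $n$. First, by Lemma \ref{lemma:BasicMultHopTermination}, every hop the honest parties begin completes within $\TimeLSh + 2\TimeBA$ time in a synchronous network, and almost-surely (eventually) in an asynchronous network, and at the end of a hop all honest parties hold consistent copies of $\Selected_\iter, \Products_\iter, \Products^{(j)}_\iter$ and $\Products^{(S_q)}_\iter$, since these are derived solely from the outputs of the instances $\BA^{(\phaseone,\hop,\star)}$ (using consistency of $\BA$ from Theorem \ref{thm:BA}). Second, every executed hop adds at least one new party to $\Selected_\iter$: while the while-loop condition still holds, some honest $P_i \notin (\Selected_\iter \cup \Discarded)$ remains; this $P_i$ invokes $\LSh^{(\phaseone,\hop,i)}$, and by the $\Z_s$-/$\Z_a$-correctness of $\LSh$ (Lemma \ref{lemma:LSh}) all honest parties obtain $[c^{(i)}_\iter]$ and input $1$ into $\BA^{(\phaseone,\hop,i)}$, so by validity and (guaranteed/almost-sure) liveness of $\BA$ this instance outputs $1$ and a least-indexed party is selected. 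Since a party, once in $\Selected_\iter \cup \Discarded$, is never eligible again and $|\PartySet \setminus \Discarded| \le n$, after at most $n$ hops every honest party lies in $\Selected_\iter$; at that point $\Products^{(S_h)}_\iter = \emptyset$, because every summand index $(p,q)$ with $S_p \cap S_q \cap S_h \neq \emptyset$ was removed from $\Products_\iter$ when the corresponding honest party of $S_h$ was selected, and $\Products^{(S_h)}_\iter$ only ever contains such indices. Hence the while loop exits and Phase I finishes within $n(\TimeLSh + 2\TimeBA)$ time (synchronously), almost-surely eventually otherwise.

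For Phase II, the parties re-assign the leftover summand-index sets deterministically, every honest $P_i \notin (\Selected_\iter \cup \Discarded)$ invokes $\LSh^{(\phasetwo,i)}$, and after waiting $\TimeLSh$ all honest parties provide an input to each of the $n$ parallel instances $\BA^{(\phasetwo,\star)}$, which terminate in $\TimeBA$ time (synchronously) or almost-surely (asynchronously) with a common resulting $\Selected_\iter$ and $\Products_\iter$; thus Phase II takes $\TimeLSh + \TimeBA$ time. In Phase III the parties run in parallel the instances $\RecShare([a],S_p,\PartySet)$ and $\RecShare([b],S_q,\PartySet)$ for the indices still in $\Products_\iter$ — which by Lemma \ref{lemma:BasicMultFuture} and the choice of $S_h$ involve no share held only by honest parties — and these complete in $\TimeRecShare = \TimeReveal$ time (Lemma \ref{lemma:RecShare}), after which the default sharings of the $c^{(j)}_\iter$ are taken locally and the output $({\GW}_1,\ldots,{\GW}_{|\Z_s|}, [c^{(1)}_\iter],\ldots,[c^{(n)}_\iter],[c_\iter])$ is computed; so Phase III takes $\TimeRec$ time. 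Summing, an honest party outputs by $n(\TimeLSh + 2\TimeBA) + (\TimeLSh + \TimeBA) + \TimeRec = \TimeBasicMult$, and in the asynchronous case termination is a composition of finitely many almost-sure events, hence almost-sure.

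Finally, for the error bound: the only source of error is failure of the unforgeability/non-repudiation guarantees of the ICP instances inside the $\LSh$ (via $\Rec$) and $\RecShare$ calls. Although $\Order(n^2)$ instances of $\LSh$ are invoked across the hops and the second phase, the local-dispute-control mechanism of Section \ref{sec:SuperPolynomial} ensures these guarantees are violated at most $\Order(n^3)$ times in total, each with probability $\errorAICP$, so a union bound yields overall error $\Order(n^3 \cdot \errorAICP)$. I expect the main obstacle to be the bookkeeping of Phase I — proving that each hop adds a fresh party and that the dynamic updates of $\Products^{(S_q)}_\iter$ force the while loop to exit once the all-honest set $S_h$ is exhausted, all while maintaining that honest parties agree on when the phase ends; the rest is a direct composition of termination guarantees already established for $\LSh$, $\RecShare$ and $\BA$.
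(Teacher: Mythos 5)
Your proposal is correct and takes essentially the same route as the paper's own proof: Phase I is decomposed into at most $n$ hops, each bounded by Lemma \ref{lemma:BasicMultHopTermination} to take $\TimeLSh + 2\TimeBA$ time (synchronously) or to terminate almost-surely (asynchronously); Phase II costs $\TimeLSh + \TimeBA$ because all honest parties feed some input to every $\BA^{(\phasetwo,\star)}$ after a fixed $\TimeLSh$ wait; and Phase III costs $\TimeRec$ via the reconstruction subprotocol; adding these gives exactly $\TimeBasicMult$. You supply some extra detail the paper leaves implicit — the invariant explaining why $\Products^{(S_h)}_\iter$ empties once $S_h \subseteq \Selected_\iter$, an explicit union-bound account of the $\Order(n^3 \cdot \errorAICP)$ error via the local-dispute-control mechanism, and the correct observation that Phase III uses $\RecShare$ (with $\TimeRecShare = \TimeRec$) rather than $\Rec$ — but the decomposition and the key lemma invoked are identical. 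The privacy aside (that $[a]_h, [b]_h$ are never reconstructed) is not needed for termination and the citation of Lemma \ref{lemma:BasicMultFuture} there is not quite the right reference, but since it is not load-bearing for this lemma it is harmless.
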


\begin{lemma}
\label{lemma:BasicMultPrivacy}
If no honest party is present in $\Discarded$, then the view of the adversary remains independent of $a$ and $b$ throughout the protocol, irrespective of the network type.
\end{lemma}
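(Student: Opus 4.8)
The plan is to argue that throughout the protocol, every value the adversary sees is either (i) already part of its view coming from the input sharings $[a]$ and $[b]$, (ii) the output of an $\LSh$ instance invoked by a \emph{corrupt} summand-sharing party (which the adversary contributes itself and hence learns nothing new from), or (iii) a publicly reconstructed share $[a]_p$ or $[b]_q$ that is \emph{already known} to the adversary because the corresponding group $S_p$ (resp.\ $S_q$) contains a corrupt party. The only potential leakage is through the $\LSh$ instances invoked by \emph{honest} summand-sharing parties and through the Phase~III reconstructions; I will show neither leaks anything about $a,b$ beyond what the adversary already has.

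First I would fix the set $Z^\star$ of corrupt parties ($Z^\star \in \Z_s$ in a synchronous network, $Z^\star \in \Z_a \subset \Z_s$ in an asynchronous network, and in either case $Z^\star \subseteq$ the maximum-sized set of $\Z_s$), and note that since no honest party is in $\Discarded$, we have $\Discarded \subseteq Z^\star$. By the semantics of linear secret-sharing with IC-signatures (Definition~\ref{def:SS}), the adversary's view of $[a]$ consists of $\{[a]_q : S_q \cap Z^\star \neq \emptyset\}$ together with the associated IC-signatures, and similarly for $[b]$; crucially, for the unique (or any) group $S_h \in \ShareSpec_{\Z_s}$ consisting of only honest parties --- which exists because $\Z_s$ satisfies $\Q^{(2)}(\PartySet,\Z_s)$, in fact because $\specialS = \PartySet \setminus Z^\star \in \ShareSpec_{\Z_s}$ is itself such a group --- the shares $[a]_h,[b]_h$ are information-theoretically hidden. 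So the adversary's view is statistically independent of at least one additive share of $a$ and of $b$, hence of $a$ and $b$ themselves, \emph{before} any multiplication step.

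Next I would handle the $\LSh$ invocations. For a summand-sharing party $P_i$ invoking $\LSh^{(\cdot,i)}$ with input $c^{(i)}_\iter = \sum_{(p,q)\in\Products^{(i)}_\iter}[a]_p[b]_q$: if $P_i$ is corrupt this input is chosen by the adversary and by the privacy guarantee of $\LSh$ (Lemma~\ref{lemma:LSh}, Privacy, using the fresh random pad) nothing about the honest parties' data leaks; if $P_i$ is honest, Lemma~\ref{lemma:LSh} again guarantees that the adversary's view remains independent of $c^{(i)}_\iter$, so the sharing step itself leaks nothing. The subtle point, and the one I expect to be the main obstacle, is Phase~III: I must show that \emph{no share held only by honest parties is ever publicly reconstructed}. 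Concretely, I need that whenever $(p,\star)\in\Products_\iter$ at the start of Phase~III (so $[a]_p$ gets reconstructed), the group $S_p$ contains a corrupt party, i.e.\ $\Z_s$ does \emph{not} satisfy $\Q^{(1)}(S_p,\Z_s)$ is false --- equivalently $S_p \cap Z^\star \neq \emptyset$; and symmetrically for $[b]_q$. This is exactly the combinatorial heart of the argument: in an asynchronous network one shows, using the $\Q^{(2,1)}(\PartySet,\Z_s,\Z_a)$ condition together with $Z^\star \in \Z_a$, that every summand $[a]_h\cdot[b]_q$ (with $S_h$ the all-honest group) is already shared during Phase~I --- because Phase~I ends only when some $S_\ell \in \ShareSpec_{\Z_s}$ has been exhausted as summand-sharers, and then $S_\ell \cap \Hon \cap S_q \neq \emptyset$ by $\Q^{(2,1)}$, so some honest party in that intersection shared it; in a synchronous network the all-honest group's members all get their Phase~II assignments done within the timeout, so no honest-only share survives to Phase~III. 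I would lift this from the reasoning already sketched in Section~\ref{ssec:Mult} and from Lemma~\ref{lemma:BasicMultTermination}'s underlying analysis. Combining: the set of publicly reconstructed shares is contained in $\{[a]_p : S_p \cap Z^\star \neq\emptyset\} \cup \{[b]_q : S_q \cap Z^\star \neq \emptyset\}$, all of which are already in the adversary's view; the default sharings taken on leftover summands are publicly computable from this data; and every non-default sharing is produced by $\LSh$ with the stated privacy. Hence, by a standard simulation/indistinguishability argument --- the adversary's entire transcript is a (randomized) function of $[a]|_{Z^\star}$, $[b]|_{Z^\star}$, the reconstructed corrupt-held shares, the fresh random pads, and the adversary's own coins, none of which depend on $a,b$ --- the view remains statistically independent of $a$ and $b$, irrespective of the network type. $\qed$
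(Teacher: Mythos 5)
Your plan matches the paper's proof essentially step for step: fix $Z^\star$, observe that $\LSh$ privacy hides the honest parties' summand-sums in Phases I and II, and then reduce the whole argument to showing that the share of $[a]$ (resp.\ $[b]$) held only by the all-honest group $S_h = \PartySet \setminus Z^\star$ never gets publicly reconstructed in Phase III --- synchronously because the statically re-assigned holder $\min(S_h \cap S_q) \in S_h$ is honest and completes within the timeout, asynchronously because $\Q^{(2,1)}(\PartySet,\Z_s,\Z_a)$ forces $S_\ell \cap \Hon \cap S_q \neq \emptyset$ for the terminating group $S_\ell$ of Phase I. This is precisely the paper's argument; the only cosmetic difference is your closing simulation framing, which the paper leaves implicit.
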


\begin{lemma}
\label{lemma:BasicMultCorrectness}
If no honest party is present in $\Discarded$ and if all parties in $\PartySet \setminus \Discarded$ behave honestly, then in protocol $\BasicMult$, the honest parties output
 a linear secret-sharing of $a \cdot b$ with IC-signatures, with ${\GW}_1, \ldots, {\GW}_{|\Z_s|}$ being the underlying core-sets, irrespective of the network type.
\end{lemma}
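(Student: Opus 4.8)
The plan is to track the value $c_\iter$ that $\BasicMult$ ultimately secret-shares, show it equals $a \cdot b$, and check along the way that every intermediate object is a legitimate linear secret-sharing with IC-signatures whose core-sets are ${\GW}_1, \ldots, {\GW}_{|\Z_s|}$. Since no honest party lies in $\Discarded$ and every party outside $\Discarded$ behaves honestly, Lemma \ref{lemma:BasicMultTermination} guarantees that all honest parties produce an output, and the consistency of $\BA$ guarantees that all honest parties agree on the outcome of every $\BA$ instance; hence they agree on $\Selected_\iter$, on $\Products_\iter$, on the per-party and per-group index sets, and on the core-sets at each point of the run. It therefore suffices to reason about this common view, and by Lemma \ref{lemma:BasicMultFuture} no party of $\Discarded$ is ever placed in $\Selected_\iter$.

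The crux is a bookkeeping invariant: at every point of the execution the full summand-index set $\{(p,q)\}_{p,q}$ is the disjoint union of (i) the index sets $\Products^{(j)}_\iter$, frozen at the hop in which $P_j$ entered $\Selected_\iter$, over all already-selected $P_j$, and (ii) the still-unassigned set $\Products_\iter$. I would prove this by induction over the hops of Phase I and then Phase II: whenever $P_j$ is selected, the updates $\Products_\iter \gets \Products_\iter \setminus \Products^{(j)}_\iter$ and $\Products^{(k)}_\iter \gets \Products^{(k)}_\iter \setminus \Products^{(j)}_\iter$ simply move one block from (ii) to (i), preserving disjointness and coverage. For Phase II I would additionally note that $\Q^{(2)}(\PartySet, \Z_s)$ makes $S_p \cap S_q$ non-empty, so $\min(S_p \cap S_q)$ is well defined; that if $(p,q)$ is still in $\Products_\iter$ at the start of Phase II then $\min(S_p \cap S_q) \notin \Selected_\iter$ (else $(p,q)$ would have been removed when that party was selected); and hence the re-assignment $\Products^{(j)}_\iter = \Products_\iter \cap \{(p,q) : P_j = \min(S_p \cap S_q)\}$ partitions $\Products_\iter$ among $\PartySet \setminus \Selected_\iter$. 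Since every party outside $\Discarded$ behaves honestly, each such party's $\LSh$ instance succeeds and, by $\BA$-validity, it is added to $\Selected_\iter$ in Phase II; therefore $\PartySet \setminus \Selected_\iter = \Discarded$ at the end of Phase II, and the residual $\Products_\iter$ is exactly the set of summands whose minimal holder is a discarded party --- precisely the ones default-shared in Phase III. Summing up, $c_\iter = \sum_j c^{(j)}_\iter = \sum_{(p,q)} [a]_p[b]_q$, with each summand counted once.

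It then remains to argue that each $[c^{(j)}_\iter]$ entering the final combination is a valid linear secret-sharing with IC-signatures of the scalar $c^{(j)}_\iter = \sum_{(p,q)\in\Products^{(j)}_\iter}[a]_p[b]_q$ with core-sets ${\GW}_1, \ldots, {\GW}_{|\Z_s|}$, and then to invoke linearity. For Phases I and II, $\Products^{(j)}_\iter \subseteq \{(p,q): P_j \in S_p\cap S_q\}$, so the honest $P_j$ holds all the needed shares $[a]_p$, $[b]_q$ and feeds the correct value to $\LSh$; correctness of $\LSh$ for an honest dealer (Lemma \ref{lemma:LSh}) yields the sharing with the prescribed core-sets. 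For Phase III, the shares $[a]_p$, $[b]_q$ occurring in the residual summands are publicly reconstructed via $\RecShare$, which by Lemma \ref{lemma:RecShare} hands the true values to all honest parties, and the default linear secret-sharing of $c^{(j)}_\iter$ (for $P_j \in \Discarded$) is, exactly as the default sharing of $\s$ in $\Rand$, a valid linear secret-sharing with IC-signatures of $c^{(j)}_\iter$ with core-sets ${\GW}_q$ (using $\Q^{(1)}({\GW}_q,\Z_s)$). Finally, by linearity of linear secret-sharing with IC-signatures (the discussion after Definition \ref{def:SS}), the output step computes $[c_\iter] = \sum_j [c^{(j)}_\iter] = \big[\sum_{(p,q)}[a]_p[b]_q\big] = \big[\big(\sum_p [a]_p\big)\big(\sum_q [b]_q\big)\big] = [a\cdot b]$, with core-sets ${\GW}_1,\ldots,{\GW}_{|\Z_s|}$, as required. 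The main obstacle is the partition invariant of the second paragraph --- in particular, verifying that the summands statically handed to discarded parties in Phase II are exactly those recovered in Phase III, so that no summand is double-counted or omitted; the rest is a routine appeal to the guarantees of $\LSh$, $\RecShare$, $\BA$, and linearity.
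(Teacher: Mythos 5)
Your argument is correct and reaches the same essential claim as the paper's — that each summand $[a]_p[b]_q$ is accounted for exactly once across the three phases — but you organize it differently. The paper proves this via a direct two-part case analysis: first, for an arbitrary $(p,q)$, showing there is \emph{at least} one $P_j$ whose $c^{(j)}_\iter$ contains the summand (split on whether $(p,q)$ leaves $\Products_\iter$ in Phase I and, if not, on whether $\min(S_p\cap S_q)$ is discarded); second, a proof by contradiction that two distinct $P_j, P_k$ cannot both carry the summand (again split into the three cases both/one/neither selected after Phase I). You instead establish a single bookkeeping invariant — that the frozen per-party index sets of selected parties together with the residual $\Products_\iter$ always partition $\{(p,q)\}_{p,q}$ — proved by induction over hops, and then observe that the Phase~II re-assignment together with the Phase~III defaults extends the partition to cover the residual. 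This buys a tighter, more algorithmic argument that makes the ``exactly once'' property fall out structurally rather than from ad hoc case splitting, and it also makes the justification for the final $c_\iter = \sum_{(p,q)} [a]_p[b]_q$ one line. The remaining steps (validity of each $\LSh$-shared $[c^{(j)}_\iter]$ for honest $P_j$, correctness of the publicly reconstructed shares via $\RecShare$, and linearity of $[\cdot]$) match the paper.

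One claim you make in passing is not quite right and worth fixing, although it does not affect the conclusion: you assert that ``by $\BA$-validity, [every non-discarded party] is added to $\Selected_\iter$ in Phase~II; therefore $\PartySet \setminus \Selected_\iter = \Discarded$.'' In an \emph{asynchronous} network the parties only wait a fixed local time $\TimeLSh$ before voting in $\BA^{(\phasetwo, j)}$, so an honest $P_j$'s $\LSh$ instance may not have delivered output to all honest parties by then; some honest parties may then vote $0$, $\BA$-validity does not apply, and $P_j$ need not be added to $\Selected_\iter$. Your argument survives this because Phase~III takes the default sharing for \emph{every} $P_j \notin \Selected_\iter$ (not only for discarded ones), and the re-assigned $\Products^{(j)}_\iter$ of such a non-selected honest $P_j$ is part of the Phase~II partition of the residual $\Products_\iter$, so each of its summands is still handled exactly once and the corresponding $[a]_p,[b]_q$ are reconstructed in Phase~III. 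So the proof goes through; just drop or qualify the ``$= \Discarded$'' characterization and phrase Phase~III as covering all non-selected parties.
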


\begin{lemma}
\label{lemma:BasicMultCommunication}
Protocol $\BasicMult$ incurs a communication of $\Order(|\Z_s| \cdot n^5 \cdot \log{|\F|} + n^6 \cdot \log{|\F|} \cdot |\sigma|)$
 bits and makes $\Order(n^2)$ calls to $\BA$.
\end{lemma}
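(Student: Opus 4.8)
The plan is to account for the communication incurred across the three phases of $\BasicMult$ separately, and to count the number of $\BA$ invocations in each phase. First I would bound the number of hops in Phase~I: each hop, selected through ACS, permanently adds a distinct party to $\Selected_\iter$ via Lemma \ref{lemma:BasicMultFuture} and the structure of the hop, so there can be at most $n$ hops. In each hop, every party in $\PartySet\setminus(\Selected_\iter\cup\Discarded)$ invokes one instance of $\LSh$, so at most $n$ instances of $\LSh$ per hop, and one instance of $\BA$ per party, i.e. $\Order(n)$ instances of $\BA$ per hop. Over all hops this gives $\Order(n^2)$ instances of $\LSh$ and $\Order(n^2)$ instances of $\BA$ in Phase~I. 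Phase~II contributes at most $n$ further instances of $\LSh$ (one per party not yet selected) and $n$ instances of $\BA$. Phase~III contributes at most $\Order(|\Z_s|)$ instances of $\RecShare$ (one per share $[a]_p$ with $(p,\star)$ still unshared, and similarly for $[b]_q$), and the default secret-sharing step is purely local. Summing, the number of $\BA$ calls is $\Order(n^2)$, as claimed.

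Next I would plug in the per-instance communication costs established earlier. By Lemma \ref{lemma:LSh}, each $\LSh$ instance costs $\Order(|\Z_s|\cdot n^3\cdot\log{|\F|} + n^4\cdot\log{|\F|}\cdot|\sigma|)$ bits; multiplying by the $\Order(n^2)$ instances of $\LSh$ across Phases~I and~II yields $\Order(|\Z_s|\cdot n^5\cdot\log{|\F|} + n^6\cdot\log{|\F|}\cdot|\sigma|)$ bits. By Lemma \ref{lemma:RecShare}, each $\RecShare$ instance with receiver set $\PartySet$ costs $\Order(n^4\cdot\log{|\F|})$ bits; the $\Order(|\Z_s|)$ instances in Phase~III therefore cost $\Order(|\Z_s|\cdot n^4\cdot\log{|\F|})$ bits, which is dominated by the $\LSh$ contribution. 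The remaining operations in $\BasicMult$ (updating summand-index-sets, local linear combinations, taking default sharings) involve no communication. Hence the total communication is $\Order(|\Z_s|\cdot n^5\cdot\log{|\F|} + n^6\cdot\log{|\F|}\cdot|\sigma|)$ bits, as stated, with $\Order(n^2)$ calls to $\BA$. Note that the $\BA$ invocations are deliberately kept outside the bit-count since their cost is accounted for separately in the main theorem.

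The only slightly delicate point — and the step I expect to require the most care — is arguing that Phase~I really has at most $n$ hops and that the ACS machinery in each hop invokes only $\Order(n)$ BA instances rather than, say, $\Order(n)$ per still-unselected party per hop in a way that compounds. This follows because a new distinct party joins $\Selected_\iter$ in each completed hop (and $|\Selected_\iter|$ never exceeds $n$), and because within a hop there is exactly one $\BA^{(\phaseone,\hop,j)}$ per party $P_j$; once some instance outputs $1$, all remaining instances are forced to terminate with input $0$, so the count is $\Order(n)$ and does not blow up. Everything else is routine substitution of the already-established per-primitive costs, so no genuinely hard estimation is needed.
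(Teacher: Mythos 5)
Your proposal is correct and follows essentially the same approach as the paper's proof: bound the number of hops in Phase~I by $n$, count $\Order(n)$ instances of $\LSh$ and $\BA$ per hop for $\Order(n^2)$ total in Phase~I, add $\Order(n)$ each from Phase~II and $\Order(|\Z_s|)$ instances of $\RecShare$ from Phase~III, then substitute the per-instance costs from Lemma~\ref{lemma:LSh} and Lemma~\ref{lemma:RecShare}. Your exposition is a bit more careful than the paper's (which is terse and slightly loosely worded about how many $\LSh$ instances arise per hop), but the decomposition, the counts, and the final substitution are all the same.
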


As a corollary of Lemma \ref{lemma:BasicMultCommunication}, we can derive the following corollary, which determines the {\it maximum} number of instances of $\LSh$ which are invoked during an instance
 of $\BasicMult$. Looking ahead, this will be useful to later calculate the maximum number of instances of $\LSh$ which need to be invoked as part of our final multiplication protocol.
 This will be further useful to determine the number of linearly secret-shared values with IC-signatures and core-sets ${\GW}_1, \ldots, {\GW}_{|\Z_s|}$, which need to be generated through the
  protocol $\Rand$ beforehand.
\begin{corollary}
\label{cor:BasicMultUpperBound}
During any instance of $\BasicMult$, there can be at most $n^2 + n$ instances of $\LSh$ invoked.
\end{corollary}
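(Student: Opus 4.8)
The plan is to bound the number of $\LSh$ instances by counting them phase-by-phase across the three phases of $\BasicMult$, since $\LSh$ is invoked only inside Phases I and II (Phase III uses only $\RecShare$ and default sharings, no $\LSh$). First I would handle Phase I. In each hop of Phase I, every party $P_i \notin (\Selected_\iter \cup \Discarded)$ invokes exactly one instance $\LSh^{(\phaseone, \hop, i)}$, so a hop contributes at most $|\PartySet| = n$ instances. The key structural fact is that each hop of Phase I that completes adds exactly one new party to $\Selected_\iter$ (the least-indexed party $P_j$ whose $\BA^{(\phaseone, \hop, j)}$ output $1$), and a party once in $\Selected_\iter$ is never eligible to invoke $\LSh$ again. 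Hence the number of hops in Phase I is at most $n$ (in fact at most $n - |\Discarded|$, but $n$ suffices as an upper bound), giving at most $n \cdot n = n^2$ instances of $\LSh$ across all of Phase I. Actually, a sharper observation is available: in hop $\hop$, only the parties not yet in $\Selected_\iter$ invoke $\LSh$, and after hop $\hop$ the set $\Selected_\iter$ has grown; but for a clean bound I would just use that each hop invokes at most $n$ instances and there are at most $n$ hops.

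Next I would handle Phase II. Phase II consists of a single round in which every party $P_i \notin (\Selected_\iter \cup \Discarded)$ invokes exactly one instance $\LSh^{(\phasetwo, i)}$ with its (re-assigned) summand-sum $c^{(i)}_\iter$. Since there are at most $n$ such parties, Phase II contributes at most $n$ instances of $\LSh$. Combining: total number of $\LSh$ instances is at most $n^2$ (Phase I) $+\, n$ (Phase II) $+\, 0$ (Phase III) $= n^2 + n$, which is the claimed bound. I would also note for completeness that this matches the $\Order(n^2)$ calls to $\BA$ recorded in Lemma \ref{lemma:BasicMultCommunication}, since each $\LSh$ instance is ``confirmed'' by one $\BA$ instance in the ACS step, and the $\Order(n^2 \cdot \log|\F|)$-sized contribution of $\LSh$ to the communication (each $\LSh$ costs $\Order(|\Z_s| \cdot n^3 \cdot \log|\F| + n^4 \cdot \log|\F| \cdot |\sigma|)$ bits by Lemma \ref{lemma:LSh}) is consistent with the stated $\Order(|\Z_s| \cdot n^5 \cdot \log|\F| + n^6 \cdot \log|\F| \cdot |\sigma|)$ total.

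The only mildly delicate point — and the one I would state carefully rather than wave past — is the bound on the number of hops in Phase I. One must argue that Phase I cannot loop forever and in fact terminates within $n$ hops: this follows because the while-loop condition (``there exists no $S_q \in \ShareSpec_{\Z_s}$ with $\Products^{(S_q)}_\iter = \emptyset$'') can be maintained for at most $n$ hops, as each completed hop strictly enlarges $\Selected_\iter$ by one element of the finite set $\PartySet \setminus \Discarded$, and once every party of some $S_h \in \ShareSpec_{\Z_s}$ consisting only of honest parties has been selected (which must happen by the argument in Lemma \ref{lemma:BasicMultTermination}, appealing to the $\Q^{(2,1)}(\PartySet, \Z_s, \Z_a)$ condition and the fact that honest parties always share the summands they are capable of), the set $\Products^{(S_h)}_\iter$ becomes empty and the loop exits. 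Since $|\PartySet \setminus \Discarded| \le n$, there are at most $n$ hops. Everything else is routine counting, so no significant obstacle remains beyond phrasing this termination/counting argument precisely.
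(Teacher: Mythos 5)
Your proof is correct and takes essentially the same approach as the paper's one-sentence argument: bound Phase I by (at most $n$ hops) $\times$ (at most $n$ calls per hop) $= n^2$, bound Phase II by at most $n$ (the paper says $n-1$, since at least one party must have been added to $\Selected_\iter$ in Phase I), and observe Phase III involves no $\LSh$. The extra care you devote to arguing that Phase I has at most $n$ hops (each hop strictly enlarges $\Selected_\iter$) is a justified elaboration of what the paper leaves implicit, but the decomposition and counting are identical.
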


\paragraph{\bf Protocol $\BasicMult$ for $L$ Pairs of Inputs.}
Protocol $\BasicMult$ can be easily generalized, if there are $L$ pairs of inputs $\{(a^{\ell}, b^{\ell}) \}_{\ell = 1, \ldots, L}$, all of which are linearly secret-shared with IC-signatures, with  
 ${\GW}_1, \ldots, {\GW}_{|\Z_s|}$ being the underlying core-sets. However, with a slight modification during Phase I and Phase II, we can ensure that the number of instances of $\BA$ remain {\it only} 
 $\Order(n^2)$, which is {\it independent} of $L$. Consider Phase I. During hop number $\hop$, every party $P_i \not \in (\Selected_\iter \cup \Discarded)$ invokes $L$ instances of $\LSh$ to linearly secret-share
  $L$ candidate summand-sums. Now while selecting the summand-sharing party through ACS for this hop, the parties vote for a candidate $P_j \not \in (\Selected_\iter \cup \Discarded)$, provided an output is computed in
  {\it all} the $L$ instances of $\LSh$ invoked by $P_j$. Consequently, the number of instances of $\BA$ during Phase I will be $\Order(n^2)$. 
  Similarly during Phase II, each party {\it outside} $(\Selected_\iter \cup \Discarded)$ invokes $L$ instances of $\LSh$ to linearly secret-share $L$ candidate re-assigned summand-sums. And then the parties vote for a
  candidate $P_j$ as a summand-sharing party, if an output is computed in {\it all} the $L$ instances of $\LSh$ invoked by $P_j$. 
  Finally, during Phase III, the default linear secret-sharing with IC-signatures is taken for the sum of all the summands, which are not yet secret-shared by any party, by making public all these summands.
  The resultant protocol incurs a communication of $\Order(|\Z_s|^3 \cdot n^4 \cdot L \cdot \log{|\F|} + |\Z_s| \cdot n^5 \cdot L \cdot \log{|\F|} + n^6 \cdot L \cdot \log{|\F|} \cdot |\sigma|)$
   bits and makes $\Order(n^2)$ calls to $\BA$. We also note that there will be at most $n^2 \cdot L + n \cdot L$ instances of $\LSh$ invoked in the generalized protocol.
   To avoid repetition, we do not present the steps of the generalized protocol here.
   \subsection{Network Agnostic Random Triple Generation with Cheater Identification}
   \label{ssec:RandMultCI}
The network-agnostic protocol $\RandMultCI$ (Fig \ref{fig:RandMultCI}) takes an iteration number $\iter$ and a publicly known subset of parties $\Discarded$, who
 are guaranteed to be {\it corrupt}. If {\it all} the parties in $\PartySet \setminus \Discarded$
 behave {\it honestly}, then the protocol outputs a random linearly secret-shared multiplication-triple with IC-signatures, with
  ${\GW}_1, \ldots, {\GW}_{|\Z_s|}$ being the underlying core sets. 
  Otherwise, with a high probability, the honest parties identify a {\it new corrupt} party, which is added to $\Discarded$. 
  
  Protocol $\RandMultCI$ is based on \cite{HT13} and 
    consists of two stages: during the {\it first} stage, the parties jointly generate a pair of random values, which are
   linearly secret-shared with IC-signatures, with ${\GW}_1, \ldots, {\GW}_{|\Z_s|}$ being the underlying core sets. 
   During the second stage, the parties run an instance of $\BasicMult$ to compute the product of the pair of secret-shared random values from the first stage.
   To check whether any cheating has occurred during the instance of $\BasicMult$, the parties then run a probabilistic test, namely 
   the ``sacrificing trick" \cite{DPSZ12},
    for which the parties need {\it additional} secret-shared random values, which 
   are generated during the first stage itself.

\begin{protocolsplitbox}{$\RandMultCI(\PartySet, \Z_s, \Z_a, \ShareSpec_{\Z_s}, {\GW}_1, \ldots, {\GW}_{|\Z_s|}, \Discarded, \iter)$}{ Network-agnostic protocol for generating secret-shared random multiplication-triple with cheater 
 identification.}{fig:RandMultCI}
\justify
\begin{myitemize}
\item[--] {\bf Generating Linear Secret Sharing of Random Values with IC-signatures}: Each $P_i \in \PartySet$ does the following.
     \begin{myitemize}
        \item[--]  Invoke instances of $\LSh$ with randomly chosen inputs $a^{(i)}_\iter, b^{(i)}_\iter, b'^{(i)}_\iter, r^{(i)}_\iter \in \F$.
        \item[--] Corresponding to every $P_j \in \PartySet$, participate in the instances of $\LSh$ invoked by $P_j$ (if any)
         and {\color{red} wait for time $\TimeLSh$}.               
        Initialize a set $\CSet_i = \emptyset$ {\color{red} after local time $\TimeLSh$} and include $P_j$ in 
            $\CSet_i$, if any output is computed in {\it all} the instances of $\LSh$ invoked by $P_j$.  
        \item[--] Corresponding to every $P_j \in \PartySet$,
             participate in an instance of $\BA^{(j)}$ of $\BA$ with input $1$, if $P_j \in \CSet_i$.
            \item[--] Once $1$ has been computed as the output from instances of $\BA$ corresponding to a set of parties
            in $\PartySet \setminus Z$ for some $Z \in \Z_s$, participate 
            with input $0$ in all the $\BA$ instances $\BA^{(j)}$, such that            
            $P_j \not \in \CSet_i$.
            \item[--] Once a binary output is computed in all the instances of $\BA$ corresponding to the parties in $\PartySet$,
             compute $\CoreSet$, which is the set of parties $P_j \in \PartySet$, such that
                           $1$ is computed as the output in the instance $\BA^{(j)}$. 
       \end{myitemize}   
   Once $\CoreSet$ is computed, the parties in $\PartySet$ locally compute $[a_\iter], [b_\iter], [b'_\iter]$ and $[r_\iter]$ from
    $\{[a^{(j)}_\iter] \}_{P_j \in \CoreSet}, \{[b^{(j)}_\iter] \}_{P_j \in \CoreSet}, \{[b'^{(j)}_\iter] \}_{P_j \in \CoreSet}$ and $\{[r^{(j)}_\iter] \}_{P_j \in \CoreSet}$ respectively as follows:
   \[ [a_\iter] = \displaystyle \sum_{P_j \in \CoreSet} [a^{(j)}_\iter], \;  [b_\iter] = \displaystyle \sum_{P_j \in \CoreSet} [b^{(j)}_\iter],  \; 
    [b'_\iter] = \displaystyle \sum_{P_j \in \CoreSet} [b'^{(j)}_\iter], \;
   [r_\iter] = \displaystyle \sum_{P_j \in \CoreSet} [r^{(j)}_\iter].\]
\item[--] {\bf Computing Secret-Shared Products}: The parties in $\PartySet$ do the following.
       \begin{myitemize}
       \item[--] Participate in instances $\BasicMult(\Z_s, \Z_a, \ShareSpec_{\Z_s}, [a_\iter], [b_\iter], {\GW}_1, \ldots, {\GW}_{|\Z_s|}, \allowbreak \Discarded, \iter)$
       and $\BasicMult(\Z_s, \Z_a, \ShareSpec_{\Z_s}, [a], [b'_\iter], {\GW}_1, \ldots, {\GW}_{|\Z_s|}, \Discarded, \iter)$ of $\BasicMult$
       to compute the outputs $({\GW}_1, \ldots, {\GW}_{|\Z_s|}, [c^{(1)}_\iter] , \ldots, [c^{(n)}_\iter], [c_\iter])$ and
       $({\GW}_1, \ldots, {\GW}_{|\Z_s|}, [c'^{(1)}_\iter] , \ldots, [c'^{(n)}_\iter], [c'_\iter])$ respectively.        
         Let $\Selected_{\iter, c}$ and $\Selected_{\iter, c'}$ be the summand-sharing parties for the two instances respectively. 
         Moreover, for each $P_j \in \Selected_{\iter, c}$, let $\Products^{(j)}_{\iter, c}$
          be the set of ordered pairs of indices corresponding to the summands whose sum has been shared by $P_j$ 
          during the instance $\BasicMult(\Z_s, \Z_a, \ShareSpec_{\Z_s}, [a_\iter], [b_\iter], {\GW}_1, \ldots, {\GW}_{|\Z_s|}, \Discarded, \iter)$. 
          And similarly,  for each $P_j \in \Selected_{\iter, c'}$, let $\Products^{(j)}_{\iter, c'}$
          be the set of ordered pairs of indices corresponding to the summands whose sum has been shared by $P_j$ 
          during the instance $\BasicMult(\Z_s, \Z_a, \ShareSpec_{\Z_s}, [a], [b'_\iter], {\GW}_1, \ldots, {\GW}_{|\Z_s|}, \Discarded, \iter)$.         
    \end{myitemize}
\item[--] {\bf Error Detection in the Instances of $\BasicMult$}: The parties in $\PartySet$ do the following.
	\begin{myitemize}
	   \item[--] Upon computing outputs from the instances of $\BasicMult$, participate in an instance $\Rec([r_\iter], \PartySet)$ of $\Rec$ to publicly reconstruct $r_\iter$. 
	\item[--] Locally compute $[e_{\iter}] \defined r_\iter[b_\iter] + [b'_\iter]$ from $[b_\iter]$ and $[b'_\iter]$. Participate in an instance $\Rec([e_\iter], \PartySet)$ of $\Rec$ 
	to publicly reconstruct $e_\iter$.		
	\item[--] Locally compute $[d_{\iter}] \defined e_\iter [a_\iter] - r_\iter [c_\iter] - [c'_\iter]$ from $[a_\iter], [c_\iter]$ and $[c'_\iter]$.
	 Participate in an instance $\Rec([d_\iter], \PartySet)$ of $\Rec$ to publicly reconstruct $d_\iter$.	
	\item[--] {\bf Output Computation in Case of Success:} 
	If $d_\iter = 0$, then set the boolean variable $\flag_\iter=0$ and output ${({\GW}_1, \ldots, {\GW}_{|\Z_s|}, [a_\iter], [b_\iter], [c_\iter])}$.
	\item[--] {\bf Cheater Identification in Case of Failure:} If $d_\iter \neq 0$, then set the boolean variable 
	$\flag_\iter = 1$ and proceed as follows.
	\begin{myitemize}
	\item[--] For each $S_q \in \ShareSpec_{\Z_s}$, participate in instances $\RecShare([a_\iter], S_q, \allowbreak \PartySet), \RecShare([b_\iter], S_q, \PartySet)$ and 
	$\RecShare([b'_\iter], S_q, \PartySet)$
	of $\RecShare$ to publicly reconstruct the shares $\{ [a_\iter]_q, [b_\iter]_q, [b'_\iter]_q\}_{S_q \in \ShareSpec_{\Z_s}}$.
	In addition, for $i = 1, \ldots, n$, participate in instances $\Rec(c^{(i)}_\iter, \PartySet)$ and $\Rec(c'^{(i)}_\iter, \PartySet)$ to publicly reconstruct $c^{(i)}_\iter$ and $c'^{(i)}_\iter$.
	\item Set $\displaystyle \Discarded = \Discarded \cup \{P_j\}$, if $P_j \in \Selected_{\iter, c} \cup \Selected_{\iter, c'}$ 
	and the following holds for $P_j$:
	\[ \displaystyle r_{\iter} \cdot c^{(j)}_\iter + c'^{(j)}_\iter \neq r_{\iter} \cdot \sum_{(p, q) \in \Products^{(j)}_{\iter, c}} [a_\iter]_p[b_\iter]_q 
	  + \sum_{(p, q) \in \Products^{(j)}_{\iter, c'}}[a_\iter]_p [b'_\iter]_q.  \]
	\end{myitemize}  
	\end{myitemize}
\end{myitemize}
\end{protocolsplitbox}

The properties of the protocol $\RandMultCI$ are claimed in the following lemmas, which are proved in Appendix \ref{app:Triples}.
\begin{lemma}
  \label{lemma:RandMultCIACS}
 In protocol $\RandMultCI$, the following hold.
 \begin{myitemize}
 \item[--] {\bf Synchronous Network}: Except with probability $\Order(n^3 \cdot \errorAICP)$, honest parties will have linearly secret-shared $a_\iter, b_\iter, b'_\iter$
 and $r_\iter$ with IC-signatures, with ${\GW}_1, \ldots, {\GW}_{|\Z_s|}$ being the underlying core-sets, by the time $\TimeLSh + 2\TimeBA$. Moreover, adversary's view is independent of
  $a_\iter, b_\iter, b'_\iter$ and $r_\iter$. 
 \item[--] {\bf Asynchronous Network}: Except with probability $\Order(n^3 \cdot \errorAICP)$, almost-surely, honest parties will eventually have linearly secret-shared $a_\iter, b_\iter, b'_\iter$
 and $r_\iter$ with IC-signatures, with ${\GW}_1, \ldots, {\GW}_{|\Z_s|}$ being the underlying core-sets. Moreover, adversary's view is independent of
  $a_\iter, b_\iter, b'_\iter$ and $r_\iter$. 
 \end{myitemize}
 \end{lemma}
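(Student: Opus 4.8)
The plan is to prove the lemma by analysing the first stage of $\RandMultCI$ --- the agreement-on-common-subset (ACS) over the $4n$ instances of $\LSh$ invoked by the $n$ dealers --- and then reading off correctness, privacy and the timing bound. First I would establish \emph{liveness and timing}. In a synchronous network, by the $\Z_s$-correctness of $\LSh$ (Lemma \ref{lemma:LSh}), the four $\LSh$ instances of every honest dealer $P_j$ produce an output for every honest party by local time $\TimeLSh$, except with probability $\Order(n^3 \cdot \errorAICP)$ in total (the overall bound, rather than a per-instance bound scaling with $|\Z_s|$, follows from the local-dispute-control mechanism of Section \ref{sec:SuperPolynomial}). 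Hence every honest $P_i$ has all honest parties in $\CSet_i$ and supplies input $1$ to the corresponding $\BA^{(j)}$ instances by time $\TimeLSh$; since the honest parties form $\PartySet \setminus Z^{\star}$ for the corrupt set $Z^{\star} \in \Z_s$, the $\Z_s$-validity and $\Z_s$-guaranteed liveness of $\BA$ (Theorem \ref{thm:BA}) force output $1$ in all these instances by time $\TimeLSh + \TimeBA$, which triggers every honest party to feed input $0$ to the remaining instances, so that all $n$ instances of $\BA$ have a binary output, and $\CoreSet$ is computed, by time $\TimeLSh + 2\TimeBA$. In an asynchronous network the same reasoning with the $\Z_a$-correctness of $\LSh$ and the $\Z_a$-validity and almost-sure liveness of $\BA$ shows that all $n$ instances of $\BA$ almost-surely terminate eventually, so $\CoreSet$ is computed.

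Next I would pin down $\CoreSet$. By $\Z_s$- (resp.\ $\Z_a$-) consistency of $\BA$ all honest parties obtain the \emph{same} $\CoreSet$. Since $\BA^{(j)}$ outputs $1$ for every $P_j$ in the honest set, $\PartySet \setminus \CoreSet \subseteq Z^{\star}$, so $\PartySet \setminus \CoreSet \in \Z_s$ by monotonicity. If $\CoreSet$ contained no honest party then $\CoreSet \subseteq Z^{\star} \in \Z_s$ as well (using $Z^{\star} \in \Z_a \subseteq \Z_s$ in the asynchronous case), whence $\PartySet = (\PartySet \setminus \CoreSet) \cup \CoreSet$ would be covered by two sets of $\Z_s$, contradicting $\Q^{(2)}(\PartySet, \Z_s)$; so $\CoreSet$ always contains at least one honest party.

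Then I would handle the secret-sharing structure. Each $[a^{(j)}_\iter]$ output by $\LSh$ is, by construction of $\LSh$, linearly secret-shared with IC-signatures whose core-sets are exactly ${\GW}_1, \ldots, {\GW}_{|\Z_s|}$ and whose underlying $\Auth$ instances obey the linearity principle (the OTP term uses the given $[r]$ with those core-sets, and the public OTP-encryption is taken with the default sharing and default IC-signatures); the same holds for $[b^{(j)}_\iter], [b'^{(j)}_\iter], [r^{(j)}_\iter]$, and for a corrupt dealer one invokes $\Z_s$- (resp.\ $\Z_a$-) commitment of $\LSh$ to obtain some well-defined sharing with the same structure. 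As all four families share the core-sets ${\GW}_1, \ldots, {\GW}_{|\Z_s|}$, the local summations $[a_\iter] = \sum_{P_j \in \CoreSet}[a^{(j)}_\iter]$, and likewise for $b_\iter, b'_\iter, r_\iter$, produce a vector that is linearly secret-shared with IC-signatures with these core-sets, by the local-computation property of linear secret sharing with IC-signatures stated after Definition \ref{def:SS}. The claimed synchronous time $\TimeLSh + 2\TimeBA$ is exactly the ACS termination time above.

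Finally, privacy: fix an honest $P_{j^{\star}} \in \CoreSet$. The privacy guarantee of $\LSh$ ensures the adversary's view is independent of $a^{(j^{\star})}_\iter, b^{(j^{\star})}_\iter, b'^{(j^{\star})}_\iter, r^{(j^{\star})}_\iter$, which $P_{j^{\star}}$ chose uniformly and independently, and $\CoreSet$ depends only on which $\LSh$ instances completed and on the $\BA$ outputs, none of which depend on the shared values; hence each of $a_\iter, b_\iter, b'_\iter, r_\iter$ is a fixed (possibly adversarially chosen) quantity plus an independent uniform mask hidden from $\Adv$, so the four values are jointly uniform and independent of $\Adv$'s view, irrespective of the network type. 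All stated error events lie inside the $\LSh$/ICP failures, totalling $\Order(n^3 \cdot \errorAICP)$ by local dispute control. I expect the main obstacle to be the exact timing accounting in the synchronous case --- arguing that the two-round ACS over the $\LSh$ outputs terminates precisely by $\TimeLSh + 2\TimeBA$ and that the ``switch to input $0$'' rule cannot stall any $\BA^{(j)}$ or spoil consistency --- for which one leans on the guaranteed-liveness property of $\BA$ holding irrespective of the inputs.
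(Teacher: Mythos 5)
Your proposal follows the paper's own strategy closely (analyse the first-stage ACS, then read off the structure, timing, and privacy), but there is a genuine gap in the step that handles \emph{corrupt} dealers $P_j \in \CoreSet$, and this gap is exactly where the synchronous time bound $\TimeLSh + 2\TimeBA$ is actually established.

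You write that ``for a corrupt dealer one invokes $\Z_s$- (resp.\ $\Z_a$-) commitment of $\LSh$ to obtain some well-defined sharing.'' But the commitment property of $\LSh$ (Lemma \ref{lemma:LSh}) is conditional: it only kicks in if \emph{some honest party has computed an output} in $P_j$'s $\LSh$ instances, and otherwise permits ``no honest party computes any output.'' You never argue that this precondition holds, and it is not automatic; it follows because $\BA^{(j)}$ outputs $1$ only if at least one honest party voted $1$, which only happens if that party had $P_j \in \CSet_i$, i.e.\ had outputs from all four $\LSh$ instances of $P_j$. Moreover, once you have this, the commitment property gives you the individual $[a^{(j)}_\iter], \ldots$ for corrupt $P_j$ only at time $T+\Delta$, where $T$ is when the first honest party computed its output — so you also need to pin $T \le \TimeLSh + \TimeBA$ (since by then every honest party has switched to input $0$ in all unvoted instances, so any $1$-vote was cast no later than that) and then verify $\TimeLSh + \TimeBA + \Delta < \TimeLSh + 2\TimeBA$. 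Without this chain, the statement ``the claimed synchronous time $\TimeLSh + 2\TimeBA$ is exactly the ACS termination time'' proves only that $\CoreSet$ is determined by then, not that $[a_\iter], [b_\iter], [b'_\iter], [r_\iter]$ are available by then. You flag in your last paragraph that the timing accounting might be the obstacle, but the concrete missing piece is this $T+\Delta$ propagation via $\LSh$-commitment, not the ACS termination itself.

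Two smaller points. First, your derivation of $\PartySet \setminus \CoreSet \in \Z_s$ via ``$\BA^{(j)}$ outputs $1$ for every honest $P_j$'' is only valid in the synchronous case; asynchronously some honest dealers' $\LSh$ instances may be slow and their $\BA$ instances may output $0$. The conclusion still holds, but directly from the protocol rule that parties switch to input $0$ only after seeing $1$ from a set $\PartySet \setminus Z$ with $Z \in \Z_s$, so $\CoreSet \supseteq \PartySet \setminus Z$. Second, your $\Q^{(2)}$-based argument that $\CoreSet$ contains at least one honest party is correct and in fact a bit more explicit than the paper's, which just asserts $(\Hon \cap \CoreSet) \neq \emptyset$ in passing; that part is a clean small improvement.
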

 
 \begin{lemma}
\label{lemma:RandMultCITermination}
Consider an arbitrary $\iter$, 
  such that all honest parties participate in the instance $\RandMultCI(\PartySet, \Z_s, \Z_a, \ShareSpec_{\Z_s}, {\GW}_1, \ldots, {\GW}_{|\Z_s|}, \Discarded, \iter)$,
   where
   no honest party is present in $\Discarded$. 
   Then except with probability $\Order(n^3 \cdot \errorAICP)$, all honest parties
   reconstruct a (common) value $d_\iter$ and set $\flag_\iter$ to a common Boolean value, 
    at the time $\TimeLSh + 2\TimeBA + \TimeBasicMult + 3\TimeRec$ in a synchronous network, or eventually in an asynchronous network. 
\end{lemma}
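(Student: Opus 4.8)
The plan is to prove the lemma purely by tracing the control flow of $\RandMultCI$ and chaining the liveness guarantees (and, in the synchronous case, the exact timings) of its sub-protocols, each of which is applicable because, by hypothesis, no honest party lies in $\Discarded$. The three stages of the protocol run strictly one after another, so I would bound the running time of each in turn and sum them. For Stage~1 I would invoke Lemma~\ref{lemma:RandMultCIACS} essentially verbatim: except with probability $\Order(n^3 \cdot \errorAICP)$, the honest parties complete the $\Order(n)$ instances of $\LSh$ and the accompanying $n$ instances of $\BA$ that form the ACS, agree on a common set $\CoreSet$, and locally obtain $[a_\iter],[b_\iter],[b'_\iter],[r_\iter]$ --- linearly secret-shared with IC-signatures and with the global core-sets ${\GW}_1,\ldots,{\GW}_{|\Z_s|}$ --- by time $\TimeLSh + 2\TimeBA$ in a synchronous network and eventually (almost-surely, since the $\BA$ instances are only almost-surely terminating) in an asynchronous network. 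Commonness of these four sharings needs only a one-line remark: each is obtained by the fixed linear map $[x_\iter] = \sum_{P_j \in \CoreSet} [x^{(j)}_\iter]$ applied to data on which all honest parties already coincide (the $\LSh$ outputs and the agreed $\CoreSet$), and, as noted after Definition~\ref{def:SS}, such public linear combinations preserve both the ``linearly-secret-shared-with-IC-signatures'' structure and the core-sets.

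For Stage~2 I would apply Lemma~\ref{lemma:BasicMultTermination} to each of the two instances of $\BasicMult$ (on inputs $([a_\iter],[b_\iter])$ and $([a_\iter],[b'_\iter])$), which is legitimate since their inputs are valid sharings produced by Stage~1 and no honest party is in $\Discarded$: each instance produces its output --- common sharings $[c_\iter],[c'_\iter]$ and the auxiliary $[c^{(j)}_\iter],[c'^{(j)}_\iter]$, again with core-sets ${\GW}_1,\ldots,{\GW}_{|\Z_s|}$ --- within $\TimeBasicMult$ in a synchronous network, or eventually in an asynchronous one, except with probability $\Order(n^3 \cdot \errorAICP)$; run concurrently, both are available by time $\TimeLSh + 2\TimeBA + \TimeBasicMult$. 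For Stage~3, with $[r_\iter]$ in hand the parties run $\Rec([r_\iter],\PartySet)$, and by Lemma~\ref{lemma:Rec} all honest parties reconstruct the common value $r_\iter$ within $\TimeRec$ (synchronous) or eventually (asynchronous); they then locally form $[e_\iter] = r_\iter[b_\iter] + [b'_\iter]$ (a linear combination with \emph{public} coefficients, hence a well-defined common sharing with the same core-sets), run $\Rec([e_\iter],\PartySet)$ to get the common $e_\iter$, locally form $[d_\iter] = e_\iter[a_\iter] - r_\iter[c_\iter] - [c'_\iter]$, and run $\Rec([d_\iter],\PartySet)$ to get the common $d_\iter$. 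These three reconstructions are sequential --- each one's input is a linear function of the previous one's output --- contributing $3\TimeRec$, and in a synchronous network all honest parties begin each $\Rec$ at the same local time, so the total running time is $\TimeLSh + 2\TimeBA + \TimeBasicMult + 3\TimeRec$. Since every honest party reconstructs the \emph{same} $d_\iter$, the test of whether $d_\iter = 0$ yields the same answer for all of them, so $\flag_\iter$ is set to a common Boolean value.

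The main obstacle is the error accounting. The only events that can derail the argument are violations of the unforgeability/non-repudiation properties of the ICP instances nested inside $\LSh$, $\BasicMult$ and $\Rec$; a black-box union bound over the $\Rec$ calls alone would already give a bound scaling with $|\ShareSpec_{\Z_s}| = |\Z_s|$ (cf.\ Lemma~\ref{lemma:Rec}), which is too weak. The fix, and the delicate part of the proof, is to invoke the local-dispute-control mechanism of Section~\ref{sec:SuperPolynomial}, which guarantees that across \emph{all} instances of $\Auth/\Reveal$ --- and hence within this single instance of $\RandMultCI$ --- the number of such failures is $\Order(n^3)$, each of probability at most $\errorAICP$. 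Union-bounding over these $\Order(n^3)$ bad events, together with the ($\Order(n^3 \cdot \errorAICP)$-bounded) failure events of Stages~1 and~2, yields the claimed overall bound $\Order(n^3 \cdot \errorAICP)$; everything else is the routine concatenation of the liveness guarantees above, with ``eventually'' read in the almost-sure sense whenever an ABA instance is involved.
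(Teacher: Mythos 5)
Your proof takes the same approach as the paper's: chain the liveness/timing guarantees of Stage~1 (Lemma~\ref{lemma:RandMultCIACS}), Stage~2 (Lemma~\ref{lemma:BasicMultTermination} applied to the two $\BasicMult$ instances), and the three sequential $\Rec$ calls (Lemma~\ref{lemma:Rec}), using linearity to justify forming $[e_\iter]$ and $[d_\iter]$ with the global core-sets, and concluding commonness of $\flag_\iter$ from commonness of $d_\iter$. Your explicit remark that the naive per-$\Rec$ bound of $\Order(|\Z_s|\cdot n^2\cdot\errorAICP)$ must be tightened to $\Order(n^3\cdot\errorAICP)$ via the local-dispute-control mechanism of Section~\ref{sec:SuperPolynomial} is a welcome clarification of a step the paper's proof takes for granted, but it does not change the overall argument.
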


\begin{lemma}
\label{lemma:RandMultCIHonestBehaviour}
Consider an arbitrary $\iter$, 
  such that all honest parties participate in the instance $\RandMultCI(\PartySet, \Z_s, \Z_a, \ShareSpec_{\Z_s}, {\GW}_1, \ldots, {\GW}_{|\Z_s|}, \Discarded, \iter)$,
   where
   no honest party is present in $\Discarded$.
   If no party in $\PartySet \setminus \Discarded$ behaves maliciously, then $d_\iter = 0$
   and 
    the honest parties output $([a_\iter], [b_\iter], [c_\iter])$     
     at the time $ \TimeLSh + 2\TimeBA + \TimeBasicMult + 3\TimeRec$ in a synchronous network or eventually in an asynchronous network, where $c_\iter = a_\iter \cdot b_\iter$ 
     and where ${\GW}_1, \ldots, {\GW}_{|\Z_s|}$ are
           the underlying core-sets
\end{lemma}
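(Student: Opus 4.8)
The plan is to walk through the three stages of $\RandMultCI$ in order, invoking the already-proved properties of the sub-protocols, and then carry out the one-line algebra of the sacrificing check.

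First I would handle the generation of the random sharings. By Lemma~\ref{lemma:RandMultCIACS}, except with probability $\Order(n^3 \cdot \errorAICP)$, all honest parties hold linear secret-sharings with IC-signatures and core-sets ${\GW}_1, \ldots, {\GW}_{|\Z_s|}$ of common values $a_\iter, b_\iter, b'_\iter, r_\iter$, by local time $\TimeLSh + 2\TimeBA$ in a synchronous network and (almost-surely) eventually in an asynchronous network. The precondition ``no honest party in $\Discarded$'' is only needed from the next stage on. Next I would invoke the correctness and termination of $\BasicMult$: since no honest party lies in $\Discarded$ and no party of $\PartySet \setminus \Discarded$ behaves maliciously, \emph{every} party of $\PartySet \setminus \Discarded$ acts honestly, so by Lemma~\ref{lemma:BasicMultCorrectness} the two instances $\BasicMult(\ldots,[a_\iter],[b_\iter],\ldots)$ and $\BasicMult(\ldots,[a_\iter],[b'_\iter],\ldots)$ output $[c_\iter]$ and $[c'_\iter]$ with $c_\iter = a_\iter \cdot b_\iter$ and $c'_\iter = a_\iter \cdot b'_\iter$, and by Lemma~\ref{lemma:BasicMultTermination} this happens, except with probability $\Order(n^3 \cdot \errorAICP)$, within time $\TimeBasicMult$ of the start of this stage in a synchronous network and almost-surely eventually otherwise, with ${\GW}_1, \ldots, {\GW}_{|\Z_s|}$ as core-sets in both.

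Then comes the sacrificing check, which is pure linear algebra over the (linear) secret-sharing plus three reconstructions. By linearity of $[\cdot]$-sharing the parties hold $[e_\iter]$ for $e_\iter = r_\iter b_\iter + b'_\iter$ and $[d_\iter]$ for $d_\iter = e_\iter a_\iter - r_\iter c_\iter - c'_\iter$; the three $\Rec$ instances---for $r_\iter$, then $e_\iter$, then $d_\iter$, which are genuinely sequential since each reconstructed scalar feeds the next local share computation---reconstruct these values correctly and identically to all honest parties by Lemma~\ref{lemma:Rec} (and Lemma~\ref{lemma:RandMultCITermination} for the ``common value / common $\flag_\iter$'' part), in time $3\TimeRec$ in a synchronous network and eventually otherwise. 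Substituting $c_\iter = a_\iter b_\iter$ and $c'_\iter = a_\iter b'_\iter$ gives $d_\iter = (r_\iter b_\iter + b'_\iter)a_\iter - r_\iter a_\iter b_\iter - a_\iter b'_\iter = 0$. Hence every honest party sets $\flag_\iter = 0$ and outputs $({\GW}_1, \ldots, {\GW}_{|\Z_s|}, [a_\iter], [b_\iter], [c_\iter])$ with $c_\iter = a_\iter \cdot b_\iter$ and core-sets ${\GW}_1, \ldots, {\GW}_{|\Z_s|}$. Summing the per-stage timeouts yields exactly $\TimeLSh + 2\TimeBA + \TimeBasicMult + 3\TimeRec$ in the synchronous case, and ``eventually'' (almost-surely) in the asynchronous case; a union bound over the finitely many sub-protocol instances keeps the total error in $\Order(n^3 \cdot \errorAICP)$ by the local-dispute-control argument of Section~\ref{sec:SuperPolynomial}.

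I expect the main obstacle to be bookkeeping rather than substance: confirming that the three $\Rec$ calls are sequential so the $3\TimeRec$ term (not $\TimeRec$) is correct, checking that the precondition ``no honest party in $\Discarded$'' is precisely what Lemmas~\ref{lemma:BasicMultCorrectness}, \ref{lemma:BasicMultTermination}, \ref{lemma:RandMultCITermination} and the $\Rec/\RecShare$ correctness guarantees require, and being careful that the second $\BasicMult$ invocation, written in Fig.~\ref{fig:RandMultCI} with $[a]$, must be read as $[a_\iter]$ for $c'_\iter = a_\iter b'_\iter$ to hold. The algebraic identity $d_\iter = 0$ itself is immediate once the $\BasicMult$ correctness is in place.
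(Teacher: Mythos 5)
Your proof is correct and follows essentially the same approach as the paper's: invoke Lemma~\ref{lemma:BasicMultCorrectness} to obtain $c_\iter = a_\iter \cdot b_\iter$ and $c'_\iter = a_\iter \cdot b'_\iter$, invoke Lemma~\ref{lemma:RandMultCITermination} for the timing of computing $d_\iter$, observe that $d_\iter$ vanishes algebraically under these substitutions, and note that every sharing encountered carries the global core-sets ${\GW}_1, \ldots, {\GW}_{|\Z_s|}$. Your additional explicit citations of Lemmas~\ref{lemma:RandMultCIACS} and~\ref{lemma:BasicMultTermination}, and the observation that the second $\BasicMult$ invocation in Fig.~\ref{fig:RandMultCI} must be read with input $[a_\iter]$ rather than $[a]$, are extra bookkeeping the paper leaves implicit but do not change the substance of the argument.
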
   

\begin{lemma}
\label{lemma:RandMultCICorruptBehaviour}
Consider an arbitrary $\iter$, 
  such that all honest parties participate in the instance $\RandMultCI(\PartySet, \Z_s, \Z_a, \ShareSpec_{\Z_s}, {\GW}_1, \ldots, {\GW}_{|\Z_s|}, \Discarded, \iter)$,
   where
   no honest party is present in $\Discarded$. 
   If $d_\iter \neq 0$, then except with probability $\Order(n^3 \cdot \errorAICP)$, 
   the honest parties update $\Discarded$ by adding a new maliciously-corrupt
   party in $\Discarded$, either at the time $\TimeRandMultCI = \TimeLSh + 2\TimeBA + \TimeBasicMult + 4\TimeRec$ in a synchronous network or eventually in an asynchronous network.
   \end{lemma}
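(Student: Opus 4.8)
The plan is to carry out the whole argument on the high-probability event $\mathcal{E}$ that every instance of $\Rec$ and $\RecShare$ invoked inside $\RandMultCI$ returns the correct value, and that every instance of $\LSh$ commits to a well-defined value, equal to the dealer's input whenever the dealer is honest. By the correctness guarantees of $\Rec$ and $\RecShare$ (Lemmas~\ref{lemma:Rec},~\ref{lemma:RecShare}) and of $\LSh$ (Lemma~\ref{lemma:LSh}), combined with the local-dispute-control accounting of Section~\ref{sec:SuperPolynomial} which caps the cumulative ICP-error across \emph{all} these sub-instances at $\Order(n^3\cdot\errorAICP)$ rather than a naive union bound, the complement of $\mathcal{E}$ has probability $\Order(n^3\cdot\errorAICP)$, and everything below is deterministic given $\mathcal{E}$.

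First I would set up the telescoping identity associated with each of the two $\BasicMult$ instances. Across Phases I--III of $\BasicMult$ every ordered pair $(p,q) \in \{(p,q)\}_{p,q=1,\ldots,|\ShareSpec_{\Z_s}|}$ is assigned to exactly one party, so the collection $\{\Products^{(j)}_{\iter,c}\}_{P_j \in \PartySet}$ partitions this index set, and likewise $\{\Products^{(j)}_{\iter,c'}\}_{P_j \in \PartySet}$. For $P_j \notin \Selected_{\iter,c}$ the share $[c^{(j)}_\iter]$ is the \emph{default} sharing of the publicly reconstructed summand-sum, hence on $\mathcal{E}$ its value equals $\sum_{(p,q)\in\Products^{(j)}_{\iter,c}}[a_\iter]_p[b_\iter]_q$; summing over all $P_j$ and using the partition gives $\sum_{P_j}\sum_{(p,q)\in\Products^{(j)}_{\iter,c}}[a_\iter]_p[b_\iter]_q = \big(\sum_p[a_\iter]_p\big)\big(\sum_q[b_\iter]_q\big) = a_\iter\cdot b_\iter$, and analogously the corresponding sum with $b'_\iter$ equals $a_\iter\cdot b'_\iter$.

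Next comes the completeness direction of cheater identification. Assume toward a contradiction that no $P_j$ triggers inclusion in $\Discarded$. For $P_j \notin \Selected_{\iter,c}\cup\Selected_{\iter,c'}$ the equality $r_\iter c^{(j)}_\iter + c'^{(j)}_\iter = r_\iter\sum_{(p,q)\in\Products^{(j)}_{\iter,c}}[a_\iter]_p[b_\iter]_q + \sum_{(p,q)\in\Products^{(j)}_{\iter,c'}}[a_\iter]_p[b'_\iter]_q$ holds by construction of the default sharings, and for $P_j \in \Selected_{\iter,c}\cup\Selected_{\iter,c'}$ it holds by the contradiction hypothesis; summing over all parties and invoking the identities above yields $r_\iter c_\iter + c'_\iter = r_\iter(a_\iter\cdot b_\iter) + a_\iter\cdot b'_\iter = a_\iter(r_\iter b_\iter + b'_\iter) = a_\iter e_\iter$, so $d_\iter = e_\iter a_\iter - r_\iter c_\iter - c'_\iter = 0$, contradicting $d_\iter\neq 0$; hence at least one $P_j$ is added to $\Discarded$. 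For soundness, if such a $P_j$ were honest, then by Lemma~\ref{lemma:LSh} it committed exactly to the correct summand-sums $c^{(j)}_\iter$ and $c'^{(j)}_\iter$ (empty sums if it did not serve in the corresponding instance), so the triggering inequality would fail for it; therefore every party newly added to $\Discarded$ is corrupt. Since summand-sharing parties are picked from $\PartySet\setminus\Discarded$ and, by Lemma~\ref{lemma:BasicMultFuture}, $\Selected_{\iter,c}\cup\Selected_{\iter,c'}$ is disjoint from $\Discarded$, the implicated party is genuinely \emph{new}; and because on $\mathcal{E}$ all honest parties reconstruct the \emph{same} values $r_\iter$, $\{[a_\iter]_q,[b_\iter]_q,[b'_\iter]_q\}_q$ and $\{c^{(i)}_\iter,c'^{(i)}_\iter\}_i$, they evaluate the test identically and update $\Discarded$ consistently.

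For the timing, Lemma~\ref{lemma:RandMultCITermination} already guarantees that all honest parties hold the common $d_\iter$ at time $\TimeLSh + 2\TimeBA + \TimeBasicMult + 3\TimeRec$ in a synchronous network, and eventually (almost-surely) in an asynchronous one; when $d_\iter\neq 0$ they then run a single further layer of $\RecShare$/$\Rec$ instances for the shares and for the $c^{(i)}_\iter,c'^{(i)}_\iter$, which by Lemmas~\ref{lemma:Rec} and~\ref{lemma:RecShare} costs at most $\TimeRec$ more time in the synchronous case and terminates eventually in the asynchronous case, after which the inclusion test is purely local. This matches $\TimeRandMultCI = \TimeLSh + 2\TimeBA + \TimeBasicMult + 4\TimeRec$. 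I expect the main obstacle to be the bookkeeping that the summand index sets form a genuine partition over the three phases of $\BasicMult$ (which is exactly what makes the telescoping to $a_\iter\cdot b_\iter$ legitimate), together with folding the many $\Rec$, $\RecShare$ and $\LSh$ error events into the single $\Order(n^3\cdot\errorAICP)$ bound via the dispute-control mechanism.
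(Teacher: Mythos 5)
Your proposal is correct and follows essentially the same route as the paper: both argue by contradiction that if no party fails the sacrificing test then summing the per-party equalities telescopes to $d_\iter = 0$, both use Lemma~\ref{lemma:BasicMultFuture} and the honesty of the test to conclude the implicated party is a \emph{new} corrupt one, and both inherit the timing from Lemma~\ref{lemma:RandMultCITermination} plus one extra $\TimeRec$. If anything you are slightly more careful than the paper in spelling out that the default-shared contributions from parties outside $\Selected_\iter$ also satisfy the per-party identity (so the telescoping sum over $\Selected_\iter$ alone really does produce $r_\iter c_\iter + c'_\iter$), a step the paper leaves implicit.
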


\begin{lemma}
\label{lemma:RandMultCICorrectness}
Consider an arbitrary $\iter$, 
  such that all honest parties participate in the instance $\RandMultCI(\PartySet, \Z_s, \Z_a, \ShareSpec_{\Z_s}, {\GW}_1, \ldots, {\GW}_{|\Z_s|}, \Discarded, \iter)$,
   where
   no honest party is present in $\Discarded$. 
   If $d_\iter = 0$, then the honest parties output linearly secret-shared
  $(a_\iter, b_\iter, c_\iter)$ with IC-signatures with ${\GW}_1, \ldots, {\GW}_{|\Z_s|}$ being
           the underlying core-sets, 
  at the time $\TimeLSh + 2\TimeBA + \TimeBasicMult + 3\TimeRec$ in a synchronous network or eventually in an asynchronous network where, 
   except with probability  $\frac{1}{|\F|}$, the condition $c_\iter = a_\iter \cdot b_\iter$ holds. 
   Moreover, the view of $\Adv$ will be independent of $(a_\iter, b_\iter, c_\iter)$.
\end{lemma}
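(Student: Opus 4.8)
The plan is to establish the three assertions of Lemma \ref{lemma:RandMultCICorrectness} separately: (i) that when $d_\iter = 0$ the honest parties indeed output a linear secret-sharing of $(a_\iter, b_\iter, c_\iter)$ with IC-signatures and core-sets $\GW_1, \ldots, \GW_{|\Z_s|}$, within the claimed time; (ii) the ``sacrificing'' guarantee that $d_\iter = 0$ forces $c_\iter = a_\iter \cdot b_\iter$ except with probability $\frac{1}{|\F|}$; and (iii) that $\Adv$'s view stays independent of $(a_\iter, b_\iter, c_\iter)$. Throughout I would treat the sub-protocol guarantees invoked below (Lemmas \ref{lemma:RandMultCIACS}, \ref{lemma:BasicMultTermination}, \ref{lemma:BasicMultCorrectness}, \ref{lemma:BasicMultPrivacy}, \ref{lemma:LSh}, \ref{lemma:Rec}) as holding except with probability $\Order(n^3 \cdot \errorAICP)$, the $\frac{1}{|\F|}$ term being the additional sacrificing error.

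For (i) I would start from Lemma \ref{lemma:RandMultCIACS}: by time $\TimeLSh + 2\TimeBA$ in a synchronous network (or eventually, almost-surely, in an asynchronous one) the values $a_\iter, b_\iter, b'_\iter, r_\iter$ are linearly secret-shared with IC-signatures and core-sets $\GW_1, \ldots, \GW_{|\Z_s|}$; they are formed as publicly-known sums over $\CoreSet$ of $\LSh$-outputs, so the linearity-preservation property discussed after Definition \ref{def:SS} applies. Next, by Lemma \ref{lemma:BasicMultTermination} the two $\BasicMult$ instances (on $([a_\iter],[b_\iter])$ and $([a_\iter],[b'_\iter])$) terminate after an additional $\TimeBasicMult$ time, and by inspection of $\BasicMult$ each $[c^{(j)}_\iter]$ (resp.~$[c'^{(j)}_\iter]$) it outputs is either an $\LSh$-output (a valid linear secret-sharing with core-sets $\GW_1,\ldots,\GW_{|\Z_s|}$ by the commitment part of Lemma \ref{lemma:LSh}, even for a corrupt $\D$) or a default sharing with those core-sets; hence $[c_\iter] = \sum_j [c^{(j)}_\iter]$ and $[c'_\iter] = \sum_j [c'^{(j)}_\iter]$ are valid linear secret-sharings with core-sets $\GW_1,\ldots,\GW_{|\Z_s|}$. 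Finally, the three reconstructions are sequential, since $[e_\iter] = r_\iter[b_\iter] + [b'_\iter]$ needs $r_\iter$ and $[d_\iter] = e_\iter[a_\iter] - r_\iter[c_\iter] - [c'_\iter]$ needs both $e_\iter$ and $r_\iter$, so by Lemma \ref{lemma:Rec} they contribute $3\TimeRec$, giving the total $\TimeLSh + 2\TimeBA + \TimeBasicMult + 3\TimeRec$; when $d_\iter = 0$ the output is exactly $([a_\iter], [b_\iter], [c_\iter])$ with the stated core-sets.

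For (ii) I would carry out the sacrificing algebra. By Lemma \ref{lemma:BasicMultCorrectness} write $c_\iter = a_\iter b_\iter + \delta$ and $c'_\iter = a_\iter b'_\iter + \delta'$, where $\delta = \delta' = 0$ if no party in $\PartySet \setminus \Discarded$ misbehaves, and otherwise $\delta, \delta'$ are determined by $\Adv$'s behaviour inside the two $\BasicMult$ instances; a corrupt summand-sharer $P_j \in S_p \cap S_q$ can deviate only on summands $[a]_p[b]_q$ whose shares it already holds, and the leftover summands reconstructed in Phase III never involve honest-only shares by Lemma \ref{lemma:BasicMultPrivacy}, so $\delta, \delta'$ are functions of $\Adv$'s view alone. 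Substituting $e_\iter = r_\iter b_\iter + b'_\iter$ gives $d_\iter = e_\iter a_\iter - r_\iter c_\iter - c'_\iter = -(r_\iter \delta + \delta')$. If $\delta = 0$ then $c_\iter = a_\iter b_\iter$ automatically; if $\delta \neq 0$, then $d_\iter = 0$ forces $r_\iter = -\delta'/\delta$. But both $\BasicMult$ instances are finalised before any honest party begins $\Rec([r_\iter], \PartySet)$ (by the timing above in the synchronous case and by the protocol's explicit ordering in the asynchronous case), and until then $\Adv$'s view is independent of the uniformly random $r_\iter$ by Lemma \ref{lemma:RandMultCIACS}; hence $\Pr[d_\iter = 0 \wedge c_\iter \neq a_\iter b_\iter] \le \Pr[r_\iter = -\delta'/\delta \mid \delta \neq 0] = \frac{1}{|\F|}$.

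For (iii) I would track the three publicly reconstructed values $r_\iter, e_\iter, d_\iter$ and argue each is independent of $(a_\iter, b_\iter)$. By Lemma \ref{lemma:RandMultCIACS}, $\Adv$'s view after the first stage is independent of the jointly-uniform $a_\iter, b_\iter, b'_\iter, r_\iter$; by Lemma \ref{lemma:BasicMultPrivacy} the two $\BasicMult$ instances reveal nothing further about $a_\iter, b_\iter, b'_\iter$. Reconstructing $r_\iter$ is harmless since $r_\iter$ is independent of the rest; given $r_\iter$, the value $e_\iter = r_\iter b_\iter + b'_\iter$ is uniform and independent of $b_\iter$ because $b'_\iter$ is a fresh uniform pad; and $d_\iter = -(r_\iter \delta + \delta')$ is a function of $\Adv$-known quantities and the already-revealed $r_\iter$ (indeed $d_\iter = 0$ in an honest execution), so it leaks nothing. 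Thus $\Adv$'s view is independent of $(a_\iter, b_\iter)$, and since, conditioned on that view, $c_\iter = a_\iter b_\iter + \delta$ with $\delta$ a fixed function of the view, $c_\iter$ is determined by $(a_\iter, b_\iter)$; hence the view is independent of the whole triple. The main obstacle is making the independence argument in (ii) fully rigorous --- namely that $\Adv$ cannot correlate the errors $\delta, \delta'$ with $r_\iter$ --- which requires combining the timing fact that all $\BasicMult$ traffic precedes the reconstruction of $r_\iter$ in either network type with the privacy guarantee of Lemma \ref{lemma:RandMultCIACS}; the remaining parts are routine bookkeeping over the already-established sub-protocol properties.
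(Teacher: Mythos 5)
Your proposal is correct and follows essentially the same route as the paper: both reduce to the observation that $r_\iter$ is uniform and unknown to $\Adv$ until after the two $\BasicMult$ instances are finalised, so any non-zero multiplicative error must coincide with a single "bad" value of $r_\iter$, giving the $\frac{1}{|\F|}$ bound, and the privacy argument (that $e_\iter$ is masked by the fresh pad $b'_\iter$ and $d_\iter = 0$ is already predictable) is identical. The only difference is cosmetic: you fold the paper's four-case analysis (both products correct / one wrong / both wrong) into the single algebraic identity $d_\iter = -(r_\iter \delta + \delta')$ with error terms $\delta, \delta'$ fixed before $r_\iter$ is revealed, which is a tidier way to present the same argument.
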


\begin{lemma}
\label{lemma:RandMultCICommunication}
Protocol $\RandMultCI$ incurs a communication of $\Order(|\Z_s| \cdot n^5 \cdot \log{|\F|} + n^6 \cdot \log{|\F|} \cdot |\sigma|)$
 bits and makes $\Order(n^2)$ calls to $\BA$.
\end{lemma}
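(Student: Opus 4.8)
The proof is a straightforward accounting exercise: the plan is to enumerate every sub-protocol instance invoked inside a single run of $\RandMultCI$, substitute the communication bounds already proved for those sub-protocols, and separately count the $\BA$ instances. First I would handle the stage that generates the secret-shared random values: each of the $n$ parties invokes four instances of $\LSh$ (for $a^{(i)}_\iter, b^{(i)}_\iter, b'^{(i)}_\iter, r^{(i)}_\iter$), so there are $\Order(n)$ instances of $\LSh$ in all; by Lemma \ref{lemma:LSh} each costs $\Order(|\Z_s| \cdot n^3 \cdot \log{|\F|} + n^4 \cdot \log{|\F|} \cdot |\sigma|)$ bits, contributing $\Order(|\Z_s| \cdot n^4 \cdot \log{|\F|} + n^5 \cdot \log{|\F|} \cdot |\sigma|)$ bits, plus the $n$ instances of $\BA^{(\star)}$ used in the ACS to fix $\CoreSet$. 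The secret-shared pads consumed by these $\LSh$ instances are produced in advance by a separate invocation of $\Rand$, so they are not charged to $\RandMultCI$; this is consistent with Notation \ref{notation:LSh}. I would also note that all the steps that merely combine shares — forming $[a_\iter], [b_\iter], [b'_\iter], [r_\iter]$, then $[e_\iter]$ and $[d_\iter]$, and taking default sharings — are non-interactive and contribute nothing.

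Next I would tally the product-computation and error-detection stages. Exactly two instances of $\BasicMult$ are run; by Lemma \ref{lemma:BasicMultCommunication} each contributes $\Order(|\Z_s| \cdot n^5 \cdot \log{|\F|} + n^6 \cdot \log{|\F|} \cdot |\sigma|)$ bits and $\Order(n^2)$ calls to $\BA$, so together they contribute the same asymptotic amount. On the ``success'' branch the parties run three instances of $\Rec$ (for $r_\iter$, $e_\iter$, $d_\iter$), each with receiver set $\PartySet$ and hence $\Order(|\Z_s| \cdot n^4 \cdot \log{|\F|})$ bits by Lemma \ref{lemma:Rec}. On the ``failure'' branch they additionally run $\Order(|\Z_s|)$ instances of $\RecShare$ (one per $S_q \in \ShareSpec_{\Z_s}$) with receiver set $\PartySet$, each $\Order(n^4 \cdot \log{|\F|})$ bits by Lemma \ref{lemma:RecShare}, giving $\Order(|\Z_s| \cdot n^4 \cdot \log{|\F|})$ bits, together with $\Order(n)$ further instances of $\Rec$ (for the $c^{(i)}_\iter$ and $c'^{(i)}_\iter$), giving $\Order(|\Z_s| \cdot n^5 \cdot \log{|\F|})$ bits; neither $\Rec$ nor $\RecShare$ invokes $\BA$.

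Summing the three contributions — $\Order(|\Z_s| \cdot n^4 \cdot \log{|\F|} + n^5 \cdot \log{|\F|} \cdot |\sigma|)$ from $\LSh$, $\Order(|\Z_s| \cdot n^5 \cdot \log{|\F|} + n^6 \cdot \log{|\F|} \cdot |\sigma|)$ from $\BasicMult$, and $\Order(|\Z_s| \cdot n^5 \cdot \log{|\F|})$ from the reconstructions — the $\BasicMult$ term dominates, so the total is $\Order(|\Z_s| \cdot n^5 \cdot \log{|\F|} + n^6 \cdot \log{|\F|} \cdot |\sigma|)$ bits, and the $\BA$ count is $n + \Order(n^2) = \Order(n^2)$. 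I do not anticipate a genuine difficulty here; the only things to be careful about are (i) that every $\Rec$ and $\RecShare$ call in this protocol uses the full set $\PartySet$ as the receiver set, so the $|\ReceiverSet|$ factor is $n$, and (ii) that Lemma \ref{lemma:BasicMultCommunication} already subsumes all internal $\LSh$ and $\BA$ activity of $\BasicMult$, so those are not double-counted.
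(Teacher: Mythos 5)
Your proposal is correct and follows essentially the same accounting approach as the paper's proof: enumerate the sub-protocol instances ($\Order(n)$ instances of $\LSh$ plus $n$ BA instances for the ACS, two instances of $\BasicMult$, a constant number of $\Rec$ instances with $|\ReceiverSet|=n$, and on the failure branch $\Order(|\Z_s|)$ instances of $\RecShare$ and $\Order(n)$ further instances of $\Rec$), substitute the previously proved per-instance bounds, and observe that the $\BasicMult$ term dominates while the BA count is $n + \Order(n^2) = \Order(n^2)$. You are in fact slightly more careful than the paper's own one-paragraph proof, which forgets to list the $\Rec([r_\iter],\PartySet)$ instance and only mentions the reconstructions of $e_\iter$ and $d_\iter$, though this omission does not affect the asymptotics.
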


\paragraph{\bf Protocol $\RandMultCI$ for $L$ Triples.}
Protocol $\RandMultCI$ can be easily generalized to generate $L$ triples with cheater identification, such that the number of instances of $\BA$ is {\it independent} of $L$.
 To begin with, every party $P_i$ now picks $3L + 1$ random values $(r_\iter^{(i)}, \{a_\iter^{(\ell, i)}, b_\iter^{(\ell, i)}, {b'}_\iter^{(\ell, i)} \}_{\ell = 1, \ldots, L})$ 
  and invokes $3L + 1$ instances of $\LSh$ to generate linear secret-sharing of these values with IC-signatures, with
 ${\GW}_1, \ldots, {\GW}_{|\Z_s|}$ being the underlying core-sets. Next, to determine $\CoreSet$, the parties invoke {\it only} $n$ instances of $\BA$, where the $j^{th}$ instance is used
 to decide whether $P_j$ should be included in $\CoreSet$, the criteria being whether some output is computed in {\it all} the $3L + 1$ instances of $\LSh$ invoked by $P_j$. 
 Once $\CoreSet$ is decided, the parties locally compute $\{ [a_\iter^{(\ell)}] \}_{\ell = 1, \ldots, L}$, $\{ [b_\iter^{(\ell)}] \}_{\ell = 1, \ldots, L}$, $\{ [{b'}_\iter^{(\ell)}] \}_{\ell = 1, \ldots, L}$
 and $[r_\iter]$ from $\{ [a_\iter^{(\ell, j)}] \}_{P_j \in \CoreSet, \ell = 1, \ldots, L}$, $\{ [b_\iter^{(\ell, j)}] \}_{P_j \in \CoreSet, \ell = 1, \ldots, L}$, 
 $\{ [{b'}_\iter^{(\ell, j)}] \}_{P_j \in \CoreSet, \ell = 1, \ldots, L}$ and $\{ [r_\iter^{(j)}] \}_{P_j \in \CoreSet}$ respectively.
 The parties then invoke two instances of (generalized) $\BasicMult$ protocol with $L$ pairs of secret-shared inputs to compute
 $\{ [c_\iter^{(\ell)}] \}_{\ell = 1, \ldots, L}$ and $\{ [{c'}_\iter^{(\ell)}] \}_{\ell = 1, \ldots, L}$ respectively. Note that this requires {\it only} $\Order(n^2)$ instances of $\BA$, apart from $n^2 \cdot L + n \cdot L$
  instances of $\LSh$. The rest of the protocol steps are then generalized to deal with $L$ inputs. 
  The resultant protocol incurs a communication of 
  $\Order(|\Z_s| \cdot n^5 \cdot L \cdot \log{|\F|} + n^6 \cdot L \cdot \log{|\F|} \cdot |\sigma|)$
   bits and makes $\Order(n^2)$ calls to $\BA$. We also note that there will be at most $n^2 \cdot L + 4n \cdot L + n$ instances of $\LSh$ invoked overall in the generalized protocol.
   To avoid repetition, we do not present the steps of the generalized protocol here.
   \subsection{The Multiplication-Triple Generation Protocol}
Protocol $\TripGen$ for generating a single secret-shared multiplication-triple is presented in Fig \ref{fig:TripGen}. The idea of the protocol is very simple and based on \cite{HT13}.
 The parties iteratively run instances of $\RandMultCI$, till they hit upon an instance when {\it no} cheating is detected. 
  Corresponding to each ``failed" instance of $\RandMultCI$, the parties keep updating the set $\Discarded$. Since after each failed instance the set $\Discarded$ is updated with
   one {\it new} corrupt party, there will be at most $(t + 1)$ iterations, where $t$ is the cardinality of the largest-sized subset in $\Z_s$. 
\begin{protocolsplitbox}{$\TripGen(\PartySet, \Z_s, \Z_a, \ShareSpec_{\Z_s}, {\GW}_1, \ldots, {\GW}_{|\Z_s|})$}{Network-agnostic protocol to generate a linear
 secret sharing with IC-signature of a single random multiplication-triple.}{fig:TripGen}
\justify
\begin{myitemize}
\item[--] \textbf{Initialization}: The parties in $\PartySet$ initialize $\Discarded = \emptyset$ and $\iter = 1$.
\item[--] {\bf Triple Generation with Cheater Identification}: The parties in $\PartySet$ participate in an instance
 $\RandMultCI(\PartySet, \Z_s, \Z_a, \ShareSpec_{\Z_s}, {\GW}_1, \ldots, {\GW}_{|\Z_s|}, \Discarded, \iter)$ of $\RandMultCI$ and wait for its completion.
  Upon computing output from the instance, the parties proceed as follows.
	\begin{myitemize}
	\item {\bf Positive Output}: If the Boolean variable $\flag_\iter$ is set to $0$ during the instance of 
	$\RandMultCI$, then 
	 output ${({\GW}_1, \allowbreak \ldots,  {\GW}_{|\Z_s|}, [a_\iter], [b_\iter], [c_\iter])}$, computed during the instance of $\RandMultCI$.
	\item {\bf Negative Output}: Else set $\iter = \iter+1$ and go to the step labelled {\bf Triple Generation with Cheater Identification}. 
	\end{myitemize}
\end{myitemize}
\end{protocolsplitbox}

The properties of the protocol $\TripGen$ are claimed in the following lemmas, which are proved in Appendix \ref{app:Triples}.
\begin{lemma}
\label{lemma:TripGenTermination}
Let $t$ be the size of the largest set in $\Z_s$. Then except with probability $\Order(n^3 \cdot \errorAICP)$, the honest parties compute an output during 
  $\TripGen$, by the time $\TimeTripGen = (t + 1) \cdot \TimeRandMultCI$ in a synchronous network, or almost-surely, eventually in an asynchronous network, where
  $\TimeRandMultCI = \TimeLSh + 2\TimeBA + \TimeBasicMult + 4\TimeRec$.
\end{lemma}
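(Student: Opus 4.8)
The plan is to argue by tracking how the publicly-known discard set $\Discarded$ evolves across the iterations of $\TripGen$. First I would set up the invariant that, conditioned on the ICP-related bad events not occurring, at the start of every iteration $\iter$ the set $\Discarded$ is a \emph{common} set held by all honest parties and contains only maliciously-corrupt parties. This is immediate at $\iter = 1$ since $\Discarded = \emptyset$. For the inductive step: by Lemma~\ref{lemma:RandMultCITermination} all honest parties reconstruct a common $d_\iter$ and set a common Boolean $\flag_\iter$, so they all stop, or all continue to the next iteration, together — and they all output the same $([a_\iter],[b_\iter],[c_\iter])$ in a successful iteration. If $\flag_\iter = 1$ (i.e.\ $d_\iter \neq 0$), then by Lemma~\ref{lemma:RandMultCICorruptBehaviour} — except for an event absorbed into the global ICP-error term — all honest parties add the \emph{same} new maliciously-corrupt party to $\Discarded$ (the update depends only on publicly reconstructed values, which are common by correctness of $\Rec/\RecShare$), so the invariant and in particular the precondition ``no honest party in $\Discarded$'' needed by the next invocation of $\RandMultCI$ are preserved.

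Second, I would bound the number of iterations. In a synchronous network the corrupt parties form a subset of some $Z \in \Z_s$, and in an asynchronous network of some $Z \in \Z_a \subseteq \Z_s$; in either case this subset has size at most $t$. Since every failed iteration enlarges $\Discarded$ by at least one \emph{new} corrupt party (the misbehaving party identified by Lemma~\ref{lemma:RandMultCICorruptBehaviour}, which cannot already lie in $\Discarded$ because parties in $\Discarded$ do not participate in $\BasicMult$, and Lemma~\ref{lemma:RandMultCIHonestBehaviour} forces $d_\iter = 0$ when no party outside $\Discarded$ misbehaves), and since $\Discarded$ never contains an honest party, there can be at most $t$ failed iterations. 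Once $\Discarded$ contains all corrupt parties, every party in $\PartySet \setminus \Discarded$ is honest, so by Lemma~\ref{lemma:RandMultCIHonestBehaviour} we get $d_\iter = 0$ and the iteration succeeds. Hence $\TripGen$ runs at most $t+1$ iterations (possibly fewer, if the corrupt parties happen not to misbehave earlier).

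Finally, timing, termination and the error bound. In a synchronous network each iteration completes within $\TimeRandMultCI = \TimeLSh + 2\TimeBA + \TimeBasicMult + 4\TimeRec$ by Lemmas~\ref{lemma:RandMultCITermination}--\ref{lemma:RandMultCICorruptBehaviour} (a successful iteration even within $\TimeLSh + 2\TimeBA + \TimeBasicMult + 3\TimeRec \le \TimeRandMultCI$ by Lemma~\ref{lemma:RandMultCIHonestBehaviour}); as the iterations run sequentially and there are at most $t+1$ of them, all honest parties obtain output by time $(t+1)\cdot\TimeRandMultCI = \TimeTripGen$. In an asynchronous network each iteration of $\RandMultCI$ terminates almost-surely (the almost-sure guarantee propagating up from the $\ABA$ instances it uses), so the finitely many iterations of $\TripGen$ terminate almost-surely as well. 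The main obstacle here is the probability accounting: one must \emph{not} union-bound the per-iteration $\Order(n^3\cdot\errorAICP)$ terms over the $\Order(n)$ iterations, as that would give $\Order(n^4\cdot\errorAICP)$. Instead I would go back to the underlying ICP events and invoke the local dispute control mechanism of Section~\ref{sec:SuperPolynomial}, which caps at $\Order(n^3)$ the \emph{total} number of $\Auth/\Reveal$ instances across the whole execution in which unforgeability or non-repudiation can be violated, each with probability $\errorAICP$; summing over all these globally yields the stated $\Order(n^3\cdot\errorAICP)$ bound independently of the iteration count. (The $\frac{1}{|\F|}$ sacrificing-trick error of Lemma~\ref{lemma:RandMultCICorrectness} affects only whether the output triple satisfies $c_\iter = a_\iter\cdot b_\iter$, not termination, and is therefore irrelevant to this statement.)
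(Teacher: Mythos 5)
Your proof follows the same route as the paper's: track the growth of $\Discarded$ across iterations, invoke Lemma~\ref{lemma:RandMultCICorruptBehaviour} to show each failed iteration adds a new corrupt party, bound the iteration count by $t+1$, and then let Lemma~\ref{lemma:RandMultCIHonestBehaviour} force success once all corrupt parties are discarded. The one place you go beyond the paper's own (quite terse) proof is the probability accounting: you correctly flag that a naive union bound of the per-iteration $\Order(n^3\cdot\errorAICP)$ term from Lemma~\ref{lemma:RandMultCITermination} over $\Order(n)$ iterations would give $\Order(n^4\cdot\errorAICP)$, and you resolve it by appealing to the local dispute control mechanism of Section~\ref{sec:SuperPolynomial}, which bounds the total number of violable $\Auth/\Reveal$ instances globally; the paper implicitly relies on the same mechanism without saying so, so your version is the more careful one.
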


\begin{lemma}
\label{lemma:TripGenCorrectnessPrivacy}
If the honest parties output ${({\GW}_1,  \ldots,  {\GW}_{|\Z_s|}, [a_\iter], [b_\iter], [c_\iter])}$ during the protocol $\TripGen$,
 then $a_\iter, b_\iter$ and $c_\iter$ are linearly secret-shared with IC-signatures, with
  ${\GW}_1,  \ldots,  {\GW}_{|\Z_s|}$ being the underlying core-sets. Moreover, $c_\iter = a_\iter b_\iter$ holds, except with probability $\frac{1}{|\F|}$. 
   Furthermore, the view of the adversary remains independent of $a_\iter, b_\iter$ and $c_\iter$.
\end{lemma}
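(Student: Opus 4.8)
The plan is to derive the statement as a short composition of the guarantees of $\RandMultCI$ established in Lemmas \ref{lemma:RandMultCICorruptBehaviour} and \ref{lemma:RandMultCICorrectness}, together with a simple invariant on the set $\Discarded$ maintained across the iterations of $\TripGen$. First I would prove, by induction on the iteration number $\iter$, that at the start of every iteration $\Discarded$ contains \emph{only} maliciously-corrupt parties (in particular, no honest party). The base case is immediate, since $\Discarded$ is initialized to $\emptyset$. For the inductive step, observe that an iteration is repeated only when $\flag_\iter = 1$, i.e.\ $d_\iter \neq 0$; by Lemma \ref{lemma:RandMultCICorruptBehaviour}, whose precondition ``no honest party is present in $\Discarded$'' is exactly the induction hypothesis, in this case the honest parties update $\Discarded$ by adding a \emph{new} maliciously-corrupt party, except with probability $\Order(n^3 \cdot \errorAICP)$. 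Hence the invariant is preserved into the next iteration.

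Given this invariant, consider the iteration $\iter$ during which the honest parties produce the output $({\GW}_1, \ldots, {\GW}_{|\Z_s|}, [a_\iter], [b_\iter], [c_\iter])$; by the structure of $\TripGen$ this happens precisely because the Boolean variable $\flag_\iter$ was set to $0$, i.e.\ $d_\iter = 0$, during the corresponding instance of $\RandMultCI$. Since no honest party lies in $\Discarded$ at that point, the precondition of Lemma \ref{lemma:RandMultCICorrectness} is satisfied, and I can invoke it directly: it yields that $a_\iter, b_\iter, c_\iter$ are linearly secret-shared with IC-signatures with ${\GW}_1, \ldots, {\GW}_{|\Z_s|}$ as the underlying core-sets, that $c_\iter = a_\iter b_\iter$ holds except with probability $\frac{1}{|\F|}$, and that the view of $\Adv$ is independent of $(a_\iter, b_\iter, c_\iter)$. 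Note that the termination guarantee of Lemma \ref{lemma:TripGenTermination} is not even needed here, since the statement is conditioned on the honest parties producing an output.

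The only point that needs a little care is the privacy claim, because across the failed iterations the parties publicly reconstruct several values during the cheater-identification step of $\RandMultCI$ (the shares $[a_\iter]_q, [b_\iter]_q, [b'_\iter]_q$ and the summand-sums $c^{(i)}_\iter, c'^{(i)}_\iter$), and even in the successful iteration the sacrificing trick publicly reconstructs $r_\iter$, $e_\iter$ and $d_\iter$. I would argue that the values reconstructed in a \emph{failed} iteration pertain to that iteration's freshly and independently chosen random values $a^{(i)}_\iter, b^{(i)}_\iter, b'^{(i)}_\iter, r^{(i)}_\iter$, and are a deterministic function of that iteration's data alone; hence they are distributed independently of the triple output in the successful iteration. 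The leakage within the successful iteration itself is already accounted for by Lemma \ref{lemma:RandMultCICorrectness}, which asserts independence of $\Adv$'s view from $(a_\iter, b_\iter, c_\iter)$ even after the $\Rec$ calls on $r_\iter, e_\iter, d_\iter$. Composing these observations gives overall privacy, and collecting the error terms (the $\Order(n^3 \cdot \errorAICP)$ contributions from the at most $t+1$ $\RandMultCI$ invocations, which by the field-size assumption $|\F| > n^5 \cdot 2^{\ssec}$ are dominated, and the single $\frac{1}{|\F|}$ term from the sacrificing check) completes the argument. I expect the privacy composition across iterations to be the only mildly delicate part; the correctness and the linear-secret-sharing-with-IC-signatures structure follow essentially verbatim from Lemma \ref{lemma:RandMultCICorrectness}.
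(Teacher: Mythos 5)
Your proof is correct and takes essentially the same route as the paper: the paper's entire proof of this lemma is the single sentence ``Follows from Lemma \ref{lemma:RandMultCICorrectness}.'' What you add, and what the paper leaves implicit, are exactly the two points you flag: (i) the precondition of Lemma \ref{lemma:RandMultCICorrectness} that no honest party sits in $\Discarded$ at the start of the successful iteration, which you verify by induction using Lemma \ref{lemma:RandMultCICorruptBehaviour}, and (ii) the fact that the values publicly reconstructed during the cheater-identification step of earlier, failed iterations do not leak information about the triple output in the final iteration, since each $\RandMultCI$ invocation draws fresh independent randomness. Both observations are genuinely needed for the statement to hold, so your elaboration is not padding — it fills real gaps that the paper's one-line proof hand-waves past. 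Your final remark about the $\Order(n^3 \cdot \errorAICP)$ terms accumulating over at most $t+1$ iterations is also a fair catch: strictly speaking these contributions belong in the lemma's error bound rather than being subsumed by the single $\frac{1}{|\F|}$ term, and the paper is being slightly loose here (the terms are of the right order given $|\F| > n^5 \cdot 2^{\ssec}$, but they are conceptually distinct from the sacrificing-check failure). In short, your proposal is the honest version of the argument the paper only gestures at.
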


\begin{lemma}
\label{lemma:TripGenCommunication}
Protocol $\TripGen$ incurs a communication of $\Order(|\Z_s| \cdot n^6 \cdot \log{|\F|} + n^7 \cdot \log{|\F|} \cdot |\sigma|)$
 bits and makes $\Order(n^3)$ calls to $\BA$.
\end{lemma}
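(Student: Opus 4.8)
The plan is to reduce the communication cost of $\TripGen$ entirely to the cost of the (at most $t+1$) instances of $\RandMultCI$ that it invokes. This is legitimate because $\TripGen$ performs no communication of its own beyond these subprotocol calls: its only additional steps are the local initialization of $\Discarded$ and $\iter$ and the purely local branching on the Boolean variable $\flag_\iter$ produced by the current instance of $\RandMultCI$.

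First I would establish the bound on the number of iterations. The key fact is Lemma \ref{lemma:RandMultCICorruptBehaviour}: whenever an instance of $\RandMultCI$ ends with $d_\iter \neq 0$ (equivalently $\flag_\iter = 1$), the honest parties add a \emph{new} maliciously-corrupt party to $\Discarded$, except with probability $\Order(n^3 \cdot \errorAICP)$. Starting from $\Discarded = \emptyset$, a straightforward induction over iterations shows that $\Discarded$ always consists of only corrupt parties (so the hypothesis ``no honest party is present in $\Discarded$'' required by Lemma \ref{lemma:RandMultCICommunication} and Lemma \ref{lemma:RandMultCICorruptBehaviour} holds at the start of every iteration). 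Since the actual set of corrupt parties is a member of $\Z_s$ and hence has size at most $t = \max\{|Z| : Z \in \Z_s\}$, there can be at most $t$ failed iterations before an iteration with $\flag_\iter = 0$ occurs, giving at most $t + 1 = \Order(n)$ iterations in total. This is exactly the iteration bound already recorded in Lemma \ref{lemma:TripGenTermination}, which I would simply cite.

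Next, each iteration invokes a single instance of $\RandMultCI$, which by Lemma \ref{lemma:RandMultCICommunication} communicates $\Order(|\Z_s| \cdot n^5 \cdot \log{|\F|} + n^6 \cdot \log{|\F|} \cdot |\sigma|)$ bits and makes $\Order(n^2)$ calls to $\BA$. Multiplying these per-iteration figures by the $\Order(n)$ bound on the number of iterations yields a total communication of $\Order(|\Z_s| \cdot n^6 \cdot \log{|\F|} + n^7 \cdot \log{|\F|} \cdot |\sigma|)$ bits and $\Order(n^3)$ calls to $\BA$, which is precisely the claimed statement. I would also note in passing that the local-dispute-control bookkeeping inside the underlying $\Auth/\Reveal$ instances adds no asymptotic communication, so the cost accounting from Lemma \ref{lemma:RandMultCICommunication} transfers unchanged.

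This is essentially a routine accounting argument, so I do not expect a genuine obstacle. The only point meriting (minor) care is the induction showing that no honest party is ever placed in $\Discarded$ across iterations, which is needed so that Lemma \ref{lemma:RandMultCICommunication} and Lemma \ref{lemma:RandMultCICorruptBehaviour} can be applied at each iteration; this follows immediately from the fact that $\RandMultCI$ only ever adds maliciously-corrupt parties to $\Discarded$.
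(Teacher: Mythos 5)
Your proposal is correct and follows the same route as the paper: the paper's proof is precisely the multiplication of the per-iteration cost from Lemma \ref{lemma:RandMultCICommunication} by the $\Order(n)$ bound on the number of $\RandMultCI$ instances. Your added detail — the induction ensuring $\Discarded$ never contains honest parties and the iteration bound via Lemma \ref{lemma:RandMultCICorruptBehaviour} — is sound and merely makes explicit what the paper leaves implicit (and already justified in Lemma \ref{lemma:TripGenTermination}).
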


\paragraph{\bf Protocol $\TripGen$ for Generating $L$ Multiplication-Triples.}
To generate $L$ multiplication-triples, the parties now need to invoke an instance of the generalized (modified) $\RandMultCI$ protocol in each iteration,
 which generates $L$ triples with cheater identification.
 The rest of the protocol steps remain the same. To avoid repetition, we do not present the formal details here.
  The protocol incurs a communication of $\Order(|\Z_s| \cdot n^6 \cdot L \cdot \log{|\F|} + n^7 \cdot L \cdot \log{|\F|} \cdot |\sigma|)$
   bits and makes $\Order(n^3)$ calls to $\BA$.
   \paragraph{\bf On the Maximum Number of Calls of $\LSh$ in $\TripGen$.}
   As discussed in the previous section, each instance of $\RandMultCI$ for $L$ triples requires at most
   $n^2 \cdot L + 4n \cdot L + n$ instances of $\LSh$. Now as there can be up to $t + 1 \approx n$ such instances of $\RandMultCI$ in the protocol $\TripGen$, it follows that
   at most $n^3 \cdot L + 4n^2 \cdot L + n^2$ instances of $\LSh$ are invoked in the protocol $\TripGen$ for generating $L$ multiplication-triples.

\section{Network Agnostic Circuit-EvaluationProtocol}
\label{sec:mpc}
 The network-agnostic circuit-evaluation protocol $\PiMPC$ is presented in Fig \ref{fig:MPC}. 
  The idea behind the protocol is to perform shared circuit-evaluation, where each value remains linearly secret-shared with IC-signatures and common core-sets
  Once the function-output is secret-shared, it is publicly reconstructed. To achieve this goal, the parties first secret-share their respective inputs for the function $f$ through instances of $\LSh$.
   The parties then agree on a common subset of parties $\CoreSet$, where $\PartySet \setminus \CoreSet \in \Z_s$, such that the inputs of the parties in $\CoreSet$ are linearly secret-shared
   with IC-signatures. If the network is {\it synchronous} then it will be ensured that {\it all honest} parties are present in $\CoreSet$
   and hence the inputs of {\it all honest} parties are considered for the circuit-evaluation.
      The linearity of secret-sharing ensures that the linear gates in $\ckt$ are evaluated non-interactively, while Beaver's trick is deployed for evaluating multiplication gates in $\ckt$. 
   For the latter, the parties need to have $c_M$ number of random multiplication-triples apriori, which are linearly secret-shared with IC-signatures. 
   This is achieved by apriori calling the protocol $\TripGen$ with $L = c_M$, which in turn will require {\it at most} $n^3 \cdot c_M + 4n^2 \cdot c_M + n^2$ number of instances of $\LSh$.
   As the {\it total} number of instances of $\LSh$ across $\TripGen$ and the input-phase is at most $n^3 \cdot c_M + 4n^2 \cdot c_M + n^2 + n$, 
  the parties first invoke an instance of $\Rand$   
   by setting $L = n^3 \cdot c_M + 4n^2 \cdot c_M + n^2 + n$, to generate
  these many linearly secret-shared random values with IC-signatures, with ${\GW}_1,  \ldots,  {\GW}_{|\Z_s|}$ being the underlying core-sets.
  This will ensure that all the values during the circuit-evaluation are linearly secret-shared with IC-signatures, with ${\GW}_1,  \ldots,  {\GW}_{|\Z_s|}$ being the underlying core-sets.

  Notice that if the network is {\it asynchronous} then different parties may be in the different phases of the protocol. And consequently, a party upon reconstructing the function output {\it cannot}
   afford to immediately terminate, as its presence may be required in the other parts of the protocol. Hence there is also a {\it termination phase}, which is executed concurrently, where the parties check if
   it is ``safe" to terminate the protocol.

\begin{protocolsplitbox}{$\PiMPC(\PartySet, \Z_s, \Z_a, \ShareSpec_{\Z_s}, \ckt, c_M)$}{Network agnostic secure circuit-evaluation protocol.}{fig:MPC}
	\justify
\begin{center}\underline{\bf Pre-Processing Phase}\end{center}
\begin{myitemize}
\item[--] {\bf Generating Linearly Secret-Shared Random Values with IC-signatures}: The parties invoke an instance $\Rand(\PartySet, \Z_s, \Z_a, \SharingSpec_{\Z_s}, L)$
  where $L = n^3 \cdot c_M + 4n^2 \cdot c_M + n^2 + n$
  and compute output 
  $({\GW}_1, \ldots, {\GW}_{|\Z_s|}, \{[r^{(\mathfrak{l})}] \}_{\mathfrak{l} = 1, \ldots, L})$
\item[--] {\bf Generating Linearly Secret-Shared Random Multiplication-Triples with IC-signatures}:
 Upon computing an output in the instance of $\Rand$,
 the parties invoke an instance 
 $\TripGen(\PartySet, \Z_s, \Z_a, \ShareSpec_{\Z_s}, {\GW}_1, \ldots, {\GW}_{|\Z_s|})$ with $L = c_M$
  and compute output ${({\GW}_1, \ldots,  {\GW}_{|\Z_s|}, \{[a^{(\ell)}], [b^{(\ell)}], [c^{(\ell)}]\}_{\ell = 1, \ldots, c_M})}$.
  {\color{blue} During the instance of $\TripGen$, the secret-shared values $\{[r^{(\mathfrak{l})}] \}_{\mathfrak{l} = 1, \ldots, n^3 \cdot c_M + 4n^2 \cdot c_M + n^2}$
  are used as the corresponding pads in the underlying instances of $\LSh$, invoked as part of $\TripGen$.}
\end{myitemize}
\begin{center}\underline{\bf Input Phase}\end{center}
Upon computing an output during the instance of $\TripGen$, each $P_i \in \PartySet$ does the following.
     \begin{myitemize}
        \item[--]  On having the input $x^{(i)}$, invoke an instance of $\LSh$ with input
         $x^{(i)}$.\footnote{{\color{blue} The secret-shared random value
         $r^{(n^3 \cdot c_M + 4n^2 \cdot c_M + n^2 + j)}$ serves as the pad for the instance of $\LSh$ invoked by $P_j$ in this phase.}}
        \item[--] Corresponding to every $P_j \in \PartySet$, participate in the instance of $\LSh$ invoked by $P_j$ 
         and {\color{red} wait for local time $\TimeLSh$ after starting the input phase}. Then               
        initialize a set $\CSet_i = \emptyset$ and include $P_j$ in 
            $\CSet_i$, if any output is computed in the instance of $\LSh$ invoked by $P_j$.  
        \item[--] Corresponding to every $P_j \in \PartySet$,
             participate in an instance of $\BA^{(j)}$ of $\BA$ with input $1$, if $P_j \in \CSet_i$.
            \item[--] Once $1$ has been computed as the output from instances of $\BA$ corresponding to a set of parties
            in $\PartySet \setminus Z$ for some $Z \in \Z_s$, participate 
            with input $0$ in all the $\BA$ instances $\BA^{(j)}$, such that            
            $P_j \not \in \CSet_i$.
            \item[--] Once a binary output is computed in all the instances of $\BA$ corresponding to the parties in $\PartySet$,
             compute $\CoreSet$, which is the set of parties $P_j \in \PartySet$, such that
                           $1$ is computed as the output in the instance $\BA^{(j)}$. 
       \end{myitemize}   
   Once $\CoreSet$ is computed, corresponding to every $P_j \not \in \CoreSet$, the parties set $x^{(j)} = 0$ and {\color{blue} take the {\it default}
   linear secret-sharing of $0$ with IC-signatures, with ${\GW}_1, \ldots, {\GW}_{|\Z_s|}$ being the underlying core-sets.}
\begin{center}\underline{\bf Circuit-Evaluation Phase}\end{center}
\begin{myitemize}
 \item[--] Evaluate each gate $g$ in the circuit
 according to the topological ordering as follows, depending upon the type.
        \begin{myitemize}
           \item[--] {\bf Addition Gate}: If $g$ is an addition gate with inputs $x, y$ and output $z$, then the parties in $\PartySet$ locally compute $[z] = [x + y]$ from $[x]$ and $[y]$.
             \item[--] {\bf Multiplication Gate}: If $g$ is the $\ell^{th}$ multiplication gate with inputs $x, y$ and output $z$, where $\ell \in \{1, \ldots, M \}$, then 
             the parties in $\PartySet$ do the following:
               \begin{myitemize}
               \item [--] Locally compute $[d^{(\ell)}] = [x - a^{(\ell)}]$ from $[x]$ and $[a^{(\ell)}]$ and 
               $[e^{(\ell)}] = [y - b^{(\ell)}]$ from $[y]$ and $[b^{(\ell)}]$.
               \item [--] Participate in instances $\Rec([d^{(\ell)}], \PartySet)$ and $\Rec([e^{(\ell)}], \PartySet)$ of $\Rec$ 
               to publicly reconstruct 
                 $d^{(\ell)}$ and $e^{(\ell)}$, where $d^{(\ell)} \defined x - a^{(\ell)}$ and $e^{(\ell)} \defined y - b^{(\ell)}$.
               \item [--] Upon reconstructing $d^{(\ell)}$ and $e^{(\ell)}$, {\color{blue} take the default linear secret-sharing of $d^{(\ell)} \cdot e^{(\ell)}$ with IC-signatures, with
               ${\GW}_1, \ldots, {\GW}_{|\Z_s|}$ being the underlying core-sets}. Then 
                locally compute
               $[z] = [d^{(\ell)} \cdot e^{(\ell)}] + d^{(\ell)} \cdot [b^{(\ell)}] + e^{(\ell)} \cdot [a^{(\ell)}] + [c^{(\ell)}]$ from
               $[d^{(\ell)} \cdot e^{(\ell)}], [a^{(\ell)}], [b^{(\ell)}]$ and $[c^{(\ell)}]$.               
               \end{myitemize}
          \item[--] {\bf Output Gate}: If $g$ is the output gate with output $y$, then participate in an instance $\Rec([y], \PartySet)$ of $\Rec$ to publicly
          reconstruct $y$.
       \end{myitemize}
\end{myitemize}
\begin{center}\underline{\bf Termination Phase}\end{center}
Every $P_i \in \PartySet$ concurrently executes the following steps during the protocol:
   \begin{myitemize}
   \item[--] Upon computing the circuit-output $y$, send the message $(\ready, P_i, y)$ to every party in $\PartySet$.
      \item[--] Upon receiving the message $(\ready, P_j, y)$ from a set of parties ${\cal A}$ such that $\Z_s$ satisfies  $\Q^{(1)}({\cal A}, \Z_s)$ condition, 
       send $(\ready, P_i, y)$  to every party in $\PartySet$, provided no $(\ready, P_i, \star)$ message has been sent yet.
       \item[--] Upon receiving the message $(\ready, P_j, y)$ from a set of parties $\W$ such that $\PartySet \setminus \W \in \Z_s$, output $y$ and terminate.
   \end{myitemize}
\end{protocolsplitbox}

The properties of the protocol $\PiMPC$ stated in Theorem \ref{thm:MPC}, are proved in Appendix \ref{app:MPC}.
 \begin{theorem}
 \label{thm:MPC}
 Let $\Z_s$ and $\Z_a$ be monotone adversary structures where $\Z_a \subset \Z_s$,  the set
  $\Z_s$ satisfy the condition $\Q^{(2)}(\PartySet, \Z_s)$, the set $\Z_a$ satisfy the condition $\Q^{(3)}(\PartySet, \Z_a)$ and $\Z_s, \Z_a$ together satisfy the 
   $\Q^{(2, 1)}(\PartySet, \Z_s, \Z_a)$ condition. 
 Let $\F$ be a finite field such that $|\F| \geq n^5 \cdot 2^{\ssec}$ where
  $\ssec$ is the statistical security parameter. Moreover, let $y = f(x^{(1)}, \ldots, x^{(n)})$ be a publicly known function over $\F$ represented by an arithmetic circuit $\ckt$ over $\F$,
  where each $P_i \in \PartySet$ has the input $x^{(i)}$. Furthermore, let $c_M$ and $D_M$ be the number of multiplication gates and the multiplicative depth of $\ckt$ respectively. 
  Then given an unconditionally-secure PKI,  
    protocol $\PiMPC$ achieves the following, where every honest $P_j$ participates with the input $x^{(j)}$ and where $t$ denotes the cardinality of the maximum sized subset in $\Z_s$.
  \begin{myitemize}
  \item[--] {\bf Synchronous Network}: Except with probability $2^{-\ssec}$, all honest parties output $y = f(x^{(1)}, \ldots, x^{(n)})$ at the time 
   $[(3n + 5) t^2 + (74n + 140)t + 69n + D_M + 438] \cdot \Delta$, where $x^{(j)} = 0$, for every $P_j \not \in \CoreSet$ and where every honest party is present in $\CoreSet$, such that
    $\PartySet \setminus \CoreSet \in \Z_s$.
  \item[--] {\bf Asynchronous Network}: Except with probability $2^{-\ssec}$, almost-surely, all honest parties eventually output $y = f(x^{(1)}, \ldots, x^{(n)})$ 
   where $x^{(j)} = 0$, for every $P_j \not \in CoreSet$ and where $\PartySet \setminus \CoreSet \in \Z_s$.
 \end{myitemize}
 The protocol incurs a communication of
   $\Order(|\Z_s|^2 \cdot n^{12} \cdot \log{|\F|})$
   bits and makes $\Order(n^3)$ calls to $\BA$.
   Moreover, irrespective of the network type, the view of the adversary remains independent of the inputs of the honest parties in $\CoreSet$.
 \end{theorem}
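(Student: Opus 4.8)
The plan is to prove Theorem~\ref{thm:MPC} by a standard composition argument, assembling the properties of the sub-protocols established earlier in the paper (protocols $\Rand$, $\TripGen$, $\LSh$, $\BA$, $\Rec$, $\RecShare$ and the underlying $\VSS/\MDVSS$ machinery) and tracking three things in parallel: correctness/termination, privacy, and the accumulated error probability. First I would fix notation: let $Z^{\star}$ denote the actual set of corrupt parties, which lies in $\Z_s$ in a synchronous network and in $\Z_a$ in an asynchronous network, and let $\Hon = \PartySet \setminus Z^{\star}$. The key structural fact I would invoke repeatedly is that $\SharingSpec_{\Z_s}$ always contains a set $\specialS = \PartySet \setminus Z^{\star}$ (when the network is asynchronous, $Z^{\star} \in \Z_a \subset \Z_s$, so $\specialS \in \SharingSpec_{\Z_s}$; when synchronous, $\specialS$ may be any superset-complement but a fully-honest group still exists because $\Q^{(2)}(\PartySet,\Z_s)$ implies $\Q^{(1)}(\PartySet,\Z_s)$), so every reconstruction and every core-set argument has at least one honest anchor. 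I would also note that $\Q^{(2,1)}(\PartySet,\Z_s,\Z_a)$ and $\Z_a\subset\Z_s$ are exactly the conditions each sub-protocol's lemma was proved under, so all invoked guarantees apply verbatim.

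The main body of the proof would proceed phase by phase following Fig.~\ref{fig:MPC}. For the \textbf{pre-processing phase}: by Theorem~\ref{thm:Rand}, the instance of $\Rand$ with $L = n^3 c_M + 4n^2 c_M + n^2 + n$ produces $L$ linearly secret-shared random values with IC-signatures and common core-sets $\GW_1,\ldots,\GW_{|\Z_s|}$ satisfying $\Q^{(1)}(\GW_q,\Z_s)$, except with probability $\Order(n^3\cdot\errorAICP)$, and the adversary's view is independent of those values; then by Lemmas~\ref{lemma:TripGenTermination}, \ref{lemma:TripGenCorrectnessPrivacy} the instance of $\TripGen$ with $L = c_M$ produces $c_M$ secret-shared triples $([a^{(\ell)}],[b^{(\ell)}],[c^{(\ell)}])$ with $c^{(\ell)} = a^{(\ell)} b^{(\ell)}$ except with probability $\frac{c_M}{|\F|}$ per triple, with the same core-sets and independent adversary view, consuming the first $n^3 c_M + 4n^2 c_M + n^2$ of the random values as pads (Corollary~\ref{cor:BasicMultUpperBound} and the counting remarks guarantee there are enough). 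For the \textbf{input phase}: each honest $P_j$ invokes $\LSh$ with pad $r^{(n^3 c_M + 4n^2 c_M + n^2 + j)}$, and by Lemma~\ref{lemma:LSh} these complete with output $[x^{(j)}]$; in a synchronous network every honest $P_j$ completes $\LSh^{(j)}$ within $\TimeLSh$ after the phase starts, so every honest party sets $P_j\in\CSet_i$ in time, hence votes $1$ in $\BA^{(j)}$, and by $\Z_s$-validity of $\BA$ (Theorem~\ref{thm:BA}) every honest $P_j\in\CoreSet$ — giving $\Hon\subseteq\CoreSet$ and $\PartySet\setminus\CoreSet\in\Z_s$; in an asynchronous network the standard ACS argument (using $\Z_a$-validity of $\BA$ and the fact that the honest parties' $\LSh$ instances eventually complete, together with $\PartySet\setminus Z^\star\in\Z_s$ to guarantee enough $1$-outputs) gives a common $\CoreSet$ with $\PartySet\setminus\CoreSet\in\Z_s$. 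For parties outside $\CoreSet$ the default secret-sharing of $0$ is used. For the \textbf{circuit-evaluation phase}: addition gates are handled non-interactively by linearity of $[\cdot]$-sharing; multiplication gates are handled by Beaver's trick, where $\Rec([d^{(\ell)}],\PartySet)$ and $\Rec([e^{(\ell)}],\PartySet)$ succeed by Lemma~\ref{lemma:Rec}, the default sharing of $d^{(\ell)}e^{(\ell)}$ is taken, and $[z] = [d^{(\ell)}e^{(\ell)}] + d^{(\ell)}[b^{(\ell)}] + e^{(\ell)}[a^{(\ell)}] + [c^{(\ell)}]$ is computed locally; an easy induction on the topological order shows every wire carries a correct $[\cdot]$-sharing provided all triples are correct. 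The output gate reconstruction uses Lemma~\ref{lemma:Rec} once more. For the \textbf{termination phase}: a standard Bracha-style amplification argument shows that if one honest party outputs $y$ and floods $(\ready,P_i,y)$, then (using $\Q^{(1)}$ on the set of $\ready$-senders to guarantee an honest one, and $\PartySet\setminus\W\in\Z_s$ to guarantee that $\W$ contains only parties echoing a common value) every honest party eventually echoes and then outputs and terminates $y$; consistency follows because an honest party only sends $(\ready,\cdot,y)$ for the $y$ it actually reconstructed, which is common by correctness of $\Rec$.

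For \textbf{privacy} I would exhibit a simulator, or, consistent with the property-based style of the paper, argue directly: the pre-processing values and triples are independent of the honest inputs (by the privacy clauses of $\Rand$, $\TripGen$); in the input phase the $\LSh$ instances of honest parties keep $x^{(j)}$ hidden (privacy clause of Lemma~\ref{lemma:LSh}, using that the pad is a fresh uniform shared value unknown to $\Adv$); during circuit evaluation the only values revealed are $d^{(\ell)} = x - a^{(\ell)}$ and $e^{(\ell)} = y - b^{(\ell)}$, which are uniformly random and independent of $x,y$ because $a^{(\ell)},b^{(\ell)}$ are uniform and unseen; and the final reconstruction reveals exactly $y = f(x^{(1)},\ldots,x^{(n)})$ with $x^{(j)}=0$ for $j\notin\CoreSet$, which is allowed. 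Composability of these independence statements across the polynomially many sub-protocol instances follows because at every step at least one share (the one held only by the fully-honest group $\specialS$) is missing from $\Adv$'s view, as in Definition~\ref{def:SS}. Finally, for the \textbf{timing and error bounds}: I would add up the per-phase running times in a synchronous network — $\TimeRand$ for pre-processing, $\TimeTripGen = (t+1)\TimeRandMultCI$ for triple generation, $\TimeLSh + 2\TimeBA$ for the input phase, $D_M$ rounds of Beaver reconstruction each costing $\TimeRec$, plus $\TimeRec$ for the output and $\Order(\Delta)$ for termination — and unfold the recursive definitions $\TimeBA = \TimeSBA + \TimeABA$, $\TimeBC = 3\Delta + \TimePW$, $\TimePW = (t+1)\Delta$, $\TimeAuth = \Delta + 4\TimeBC$, $\TimeVSS$, $\TimeSVM$, $\TimeMDVSS$, $\TimeBasicMult = (2n+1)\TimeBA + (n+1)\TimeLSh + \TimeRec$, $\TimeRandMultCI$, $\TimeLSh = \TimeRec + \TimeBC$, to obtain a polynomial in $n$ (hence in $t$) and $D_M$; a (somewhat tedious) arithmetic bookkeeping gives the stated $[(3n+5)t^2 + (74n+140)t + 69n + D_M + 438]\cdot\Delta$. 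The error probability is a union bound: each of the polynomially many ($\Order(\poly(n,|\Z_s|))$, dominated by the $\Order(|\Z_s| + n)$ ACS/$\BA$ calls times the $\Order(n^3\cdot\errorAICP)$ per-protocol ICP error, plus $\frac{c_M}{|\F|}$ for triple correctness) sub-protocol failures is bounded using $\errorAICP = \frac{nt}{|\F|-1}$ and $|\F| \geq n^5\cdot 2^{\ssec}$; the choice of field size was made precisely so this sum is at most $2^{-\ssec}$. The communication complexity $\Order(|\Z_s|^2 \cdot n^{12} \cdot \log|\F|)$ and the $\Order(n^3)$ $\BA$-count follow by summing Lemmas~\ref{lemma:TripGenCommunication}, \ref{lemma:Rec}, \ref{lemma:LSh} and Theorem~\ref{thm:Rand}. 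The main obstacle I anticipate is not any single conceptual step but the careful bookkeeping: verifying that the common global core-sets $\GW_1,\ldots,\GW_{|\Z_s|}$ are indeed used consistently by every $\LSh$ instance (so that linearity of all $[\cdot]$-sharings holds throughout the circuit, which is what makes local gate evaluation sound), correctly counting the $\LSh$ invocations against $L$ so the pads never run out, and making the asynchronous ACS arguments for $\CoreSet$ airtight while simultaneously ensuring $\Hon\subseteq\CoreSet$ in the synchronous case via the timeout discipline.
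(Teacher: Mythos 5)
Your proof follows the same phase-by-phase decomposition (pre-processing, input, circuit-evaluation, termination, plus privacy and complexity bookkeeping) that the paper uses in Appendix~\ref{app:MPC} via Lemmas~\ref{lemma:MPCPreProcessing}--\ref{lemma:MPCComplexity}, invoking essentially the same sub-protocol guarantees in the same places. The approach is correct and substantively identical to the paper's.
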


\section{Impossibility Result}
\label{sec:impossibility}
Here we show the necessity of the $\Q^{(2, 1)}(\PartySet, \Z_s, \Z_a)$ condition for network agnostic MPC. In fact we show that the condition is even necessary for network agnostic BA. For this, we generalize the
 impossibility proof of \cite{BKL19} which shows the impossibility of network agnostic BA against threshold adversaries if $2t_s + t_a \geq n$.
 \begin{theorem}
 \label{thm:Impossibility}
  Let $\Z_s$ and $\Z_a$ satisfy the 
 $\Q^{(2)}(\PartySet, \Z_s)$  and  $\Q^{(3)}(\PartySet, \Z_a)$ conditions respectively, where $\Z_a \subset \Z_s$.\footnote{The necessity of the 
  $\Q^{(2)}(\PartySet, \Z_s)$  and  $\Q^{(3)}(\PartySet, \Z_a)$ conditions follow from the existing results on the impossibility of unconditionally-secure SBA and ABA respectively, without
   these conditions. The condition $\Z_a \subset \Z_s$ is also necessary since any potential corrupt subset which is tolerable in an {\it asynchronous} network should also be tolerable if
   the network is {\it synchronous}.}
  Moreover, let the parties have access to the setup of an unconditional PKI.
  Furthermore, 
   let $\Pi$ be an $n$-party protocol, which is a $\Z_s$-secure SBA protocol in the synchronous network and which is a $\Z_a$-secure ABA protocol in the asynchronous network (as per Definition \ref{def:BA}).
  Then $\Pi$ exists only if $\Z_s$ and $\Z_a$ satisfy the $\Q^{(2, 1)}(\PartySet, \Z_s, \Z_a)$ condition.
 \end{theorem}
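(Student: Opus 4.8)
The plan is to prove the contrapositive: if $\Z_s$ and $\Z_a$ do \emph{not} satisfy $\Q^{(2,1)}(\PartySet, \Z_s, \Z_a)$, then no protocol $\Pi$ with the stated properties exists. The argument is a ``split‑brain'' indistinguishability argument, generalizing the threshold impossibility of \cite{BKL19} to the general‑adversary setting.

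First I would fix a convenient partition of $\PartySet$. Violation of $\Q^{(2,1)}$ means there exist $Z_1, Z_2 \in \Z_s$ and $Z_3 \in \Z_a$ with $Z_1 \cup Z_2 \cup Z_3 = \PartySet$. Put $A \defined Z_1$, $B \defined Z_2 \setminus Z_1$ and $C \defined \PartySet \setminus (Z_1 \cup Z_2)$; then $A, B, C$ are pairwise disjoint and cover $\PartySet$. By monotonicity of the adversary structures, $A, B \in \Z_s$, and since $C \subseteq Z_3$ (as $Z_1 \cup Z_2 \cup Z_3 = \PartySet$) we get $C \in \Z_a$. Using $\Z_a \subset \Z_s$ together with $\Q^{(2)}(\PartySet, \Z_s)$, each of $A, B, C$ is nonempty: were one of them empty, the remaining two would be members of $\Z_s$ whose union is all of $\PartySet$.

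Next I would consider three executions of $\Pi$. \textbf{World 1} is synchronous: the parties in $B$ are corrupt and remain completely silent, while the parties in $A \cup C$ are honest with input $0$. Since $B \in \Z_s$, $\Z_s$-validity and $\Z_s$-guaranteed liveness of the SBA force every party in $A \cup C$ to output $0$ by the fixed synchronous termination time $T$ of $\Pi$. \textbf{World 2} is the mirror image: $A$ is corrupt and silent, $B \cup C$ is honest with input $1$, and every party in $B \cup C$ outputs $1$ by time $T$. \textbf{World 3} is asynchronous: the parties in $C$ are corrupt ($C \in \Z_a$), the parties in $A$ are honest with input $0$, and the parties in $B$ are honest with input $1$. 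The adversary delays every message exchanged between a party of $A$ and a party of $B$ until well past time $T$, delivering every other message within $\Delta$ of the recipient's local clock; and $C$ runs two independent internal copies of $\Pi$, presenting to $A$ exactly the behaviour of the honest parties $C$ in World 1 and to $B$ exactly their behaviour in World 2. Crucially, $C$ never has to impersonate anyone and never needs another party's PKI keys: the \emph{silence} of $B$ in World 1 is matched, as seen by $A$ in World 3, by the delayed $A$--$B$ channels, and symmetrically for $B$.

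Coupling random tapes and the PKI setup appropriately, the local view of every party of $A$ in World 3 coincides, up to time $T$, with its view in World 1, and likewise for every party of $B$ and World 2. Hence in World 3 the honest parties of $A$ output $0$ while the honest parties of $B$ output $1$, with only $C \in \Z_a$ corrupt, contradicting $\Z_a$-consistency of the ABA; this proves the theorem, and with it the necessity of $\Q^{(2,1)}$ also for network agnostic MPC. I expect the main work to be in making the indistinguishability rigorous: verifying that the prescribed asynchronous message schedule is admissible, that $C$'s two simulations receive precisely the inputs they would receive in Worlds 1 and 2 (so that they remain faithful throughout the run), and that $\Z_s$-guaranteed liveness lets the honest parties of $A$ and $B$ irrevocably decide before any cross‑channel message is delivered, so that the later arrival of those messages cannot alter the already committed outputs.
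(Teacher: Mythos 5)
Your proposal is correct and takes essentially the same three-execution split-brain argument as the paper's proof of Theorem~\ref{thm:Impossibility}: two synchronous worlds in which one $\Z_s$-block silently aborts (forcing, by $\Z_s$-validity and guaranteed liveness, outputs $0$ and $1$ respectively), plus an asynchronous world in which the $\Z_a$-block is corrupt, plays both honest roles toward the other two blocks while the cross-block channels are delayed, and thereby induces a $\Z_a$-consistency violation. You are somewhat more careful than the paper in two places---you explicitly construct the pairwise-disjoint partition $A,B,C$ rather than invoking a bare ``WLOG disjoint,'' and you observe that $\Q^{(2)}(\PartySet,\Z_s)$ together with $\Z_a\subset\Z_s$ forces all three blocks to be nonempty---but these are points of rigor rather than a different route.
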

 \begin{proof}
 The proof is by contradiction. Let $\Pi$ exist, even if $\Z_s$ and $\Z_a$ {\it do not} satisfy the $\Q^{(2, 1)}(\PartySet, \Z_s, \Z_a)$ condition.
  Then, there exist sets, say $Z_0, Z_1 \in \Z_s$ and $Z_2 \in \Z_a$ such that $Z_0 \cup Z_1 \cup Z_2  \supseteq \PartySet$ holds. 
   For simplicity and without loss of generality, assume that $Z_0, Z_1$ and $Z_2$ are {\it disjoint}. 
   Now consider the following executions of $\Pi$.
    In {\it all} these executions, parties in $Z_0$ participate with input $0$, and parties in $Z_1$ participate with input $1$.
\begin{myitemize}
\item[--] {\bf Execution $E_1$}: In this execution, the network is {\it synchronous}. All the parties in $Z_0$ are corrupted by the adversary and simply {\it abort},
 and all parties in $Z_2$ participate with input $1$. Since {\it all} the {\it honest} parties (namely the parties in $Z_1 \cup Z_2$) have input $1$, from the {\it $\Z_s$-validity} of $\Pi$ in the
  {\it synchronous} network, the parties in $Z_1 \cup \Z_2$ should output $1$ after some fixed time, say $T_1$. 
\item[--] {\bf Execution $E_2$}: In this execution, the network is {\it synchronous}. All the parties in $Z_1$ are corrupted by the adversary and simply {\it abort},
 and all parties in $Z_2$ participate with input $0$. Since {\it all} the {\it honest} parties (namely the parties in $Z_0 \cup Z_2$) have input $0$, from the {\it $\Z_s$-validity} of $\Pi$ in the
  {\it synchronous} network, the parties in $Z_0 \cup \Z_2$ should output $0$ after some fixed time, say $T_2$. 
\item[--] {\bf Execution $E_3$}: In this execution, the network is {\it asynchronous}, the adversary corrupts all the 
 parties in $Z_2$ and behave as follows:
 the communication between the parties in $Z_0$ and $Z_1$ is delayed by at least time $\mbox{max}(T_1, T_2)$.
  The adversary communicates with parties in $Z_0$ and $Z_1$, such that the views of the parties in $Z_0$ and $Z_1$ are identical to $E_1$ and $E_2$ respectively.
  For this, the adversary runs $\Pi$ with input $0$ when interacting with the parties in $Z_0$ and runs 
  $\Pi$ with input $1$ when interacting with the parties in $Z_1$. 
   Hence, the parties in $Z_0$ output $0$, while the parties in $Z_1$ output $1$, which violates the {\it $\Z_a$-consistency} of $\Pi$ in the {\it asynchronous} network. This is a contradiction and hence, $\Pi$
   {\it does not} exist.
\end{myitemize}

 \end{proof}

\section{Acknowledgements}
\label{sec:impossibility}

We would like to thank Anirudh Chandramouli for several helpful discussions during the early stages of this research.

\bibliographystyle{plain}

\bibliography{main}

\newpage


\appendix

\section{Properties of the Network Agnostic BA Protocol}
\label{app:BA}
In this section, we prove the properties of the network agnostic BA protocol (see Fig \ref{fig:BA} for the protocol).
 We first formally present the sub-protocol $\PiPW$ and prove its properties.
\subsection{Protocol $\PiPW$: Synchronous BA with Asynchronous Guaranteed Liveness}
Protocol $\PiPW$ is presented in Fig \ref{fig:PW}, where for simplicity we assume that the input of each party is a bit.
 The protocol is very simple.
 Each party $P_i$ uses a Dolev-Strong (DS) style protocol \cite{DS83}
  to broadcast its input $b_i$. The protocol runs for $t+1$ ``rounds", where $t$ is the size of the largest set in $\Z_s$. 
  Each party $P_i$ accumulates values on the behalf of every party $P_j$ in a set $\ACC_{ij}$. A bit $b$ is added to $\ACC_{ij}$ during round $r$ only if $P_i$ receives signatures on $b$ from $r$ distinct parties 
  {\it including} $P_j$. Party $P_i$ computes the final output by taking the ``majority" among the accumulated values. This is done by computing a final set $\FIN_i$ of values based on each $\ACC_{ij}$ set.
  Since the DS protocol is designed for the {\it synchronous} network, for convenience,
  we present the protocol $\PiPW$ in a round-based fashion, where the parties set the duration of each round to $\Delta$ and {\it will know} the
  beginning and end of each round. 
  
\begin{protocolsplitbox}{$\PiPW(\PartySet, \Z_s, \Z_a)$}{ Synchronous BA with asynchronous guaranteed liveness. The above code is executed by every party $P_i$ with input $b_i$.}{fig:PW}
\justify
\begin{myitemize}
\item[--] {\bf Initialization}: Initialize the following sets.
	\begin{myitemize}
	\item[--] For $j = 1, \ldots, n$: the set of values accumulated on the behalf of $P_j$, $\ACC_{ij} = \emptyset$.
	\item[--] $t$: size of the largest set in $\Z_s$.
	\item[--] The final set of values to be considered while computing the output $\FIN_i = \emptyset$.
	\end{myitemize}
\item[--] {\bf Round 0}: On having input bit $b_i$, sign $b_i$ to obtain the signature $\sigma_{ii}$. Set $\SET_i = \{\sigma_{ii}\}$ and send $(b_i, i, \SET_i)$ to every party $P_j \in \PartySet$.
\item[--] {\bf Round $r$ = $0$ to $t + 1$}: On receiving $(b, j, \SET_j)$ in round $r$, check if {\it all} the following hold.
	\begin{myitemize}
	\item[--] $\SET_j$ contains valid signatures on $b$ from $r$ {\it distinct} parties, {\it including} $P_j$; 
	\item[--] $b \notin \ACC_{ij}$.
	\end{myitemize}
	If all the above hold, then do the following.	
	\begin{myitemize}
	\item[--] Add $b$ to $\ACC_{ij}$.
	\item[--] If $r \neq (t + 1)$, then compute a signature $\sigma_{ij}$ on $b$ and send $(b,j, \SET_j \cup \{ \sigma_{ij} \})$ to every party $P_k \in \PartySet$.
	\end{myitemize}
\item[--] {\bf Output Computation}: At time $(t+1) \cdot \Delta$, do the following.
	\begin{myitemize}
	\item[--] If $\ACC_{ij}$ contains exactly one value $b$, then add $(j,b)$ to $\FIN_i$.  Else, add $(j,\bot)$ to $\FIN_i$.
	\item[--] If there exists a set $Z \in \Z_s$ and a value $b \neq \bot$ such that, for every $P_j \in \PartySet \setminus Z$, $(j,b)$ belongs to $\FIN_i$, then output $b$. Else, output $\bot$.
	\end{myitemize}
\end{myitemize}
\end{protocolsplitbox}

We next prove the properties of the protocol $\PiPW$, which are a straightforward generalization of the properties of the DS protocol.
\begin{lemma}
\label{lemma:PW-sync-consistency}
Protocol $\PiPW$ achieves $\Z_s$-Consistency in a synchronous network.
\end{lemma}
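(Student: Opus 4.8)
The plan is to follow the classical Dolev--Strong consistency argument, adapted so that the only property of $\Z_s$ we invoke is that every $Z \in \Z_s$ has size at most $t$ (hence the actual set of corrupt parties, being a member of $\Z_s$, has at most $t$ elements). Fix a synchronous execution; all honest parties run rounds $0,\dots,t+1$ and compute an output at time $(t+1)\cdot\Delta$, so it suffices to show they all compute the \emph{same} value. The central claim I would establish is a \emph{relay lemma}: for every index $j \in \{1,\dots,n\}$ and every bit $b$, if some honest $P_i$ has $b \in \ACC_{ij}$ at the time of output computation, then $b \in \ACC_{kj}$ at that time for every honest $P_k$.

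To prove the relay lemma, let $r$ be the round in which the honest $P_i$ added $b$ to $\ACC_{ij}$. First consider $r \le t$: at that moment $P_i$ held a message $(b,j,\SET_j)$ whose signature set contained valid signatures on $b$ from $r$ distinct parties including $P_j$, and since $r \ne t+1$, $P_i$ appended $\sigma_{ij}$ and sent $(b,j,\SET_j \cup \{\sigma_{ij}\})$ --- a message with $r+1$ distinct valid signatures on $b$ including $P_j$'s --- to every party. Each honest $P_k$ receives this by round $r+1 \le t+1$, so unless $b$ already lay in $\ACC_{kj}$, $P_k$ adds it then; either way $b \in \ACC_{kj}$ by the end of round $t+1$. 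The harder case is $r = t+1$: here $P_i$ accepted a message carrying valid signatures on $b$ from $t+1$ \emph{distinct} parties, at least one of which --- say $P_h$ --- must be honest, because the corrupt set lies in $\Z_s$ and hence has at most $t$ members. An honest $P_h$ produces its signature $\sigma_{hj}$ only inside its own round-$r_h$ processing for some $r_h \le t$ (the code has $P_h$ sign on behalf of $P_j$ precisely when it adds $b$ to $\ACC_{hj}$, and only if the round is not $t+1$); at that point $P_h$ relays a valid message on $b$ including $P_j$'s signature to all parties, received by round $r_h+1 \le t+1$, so again every honest $P_k$ has $b \in \ACC_{kj}$ by the end of round $t+1$. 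The boundary subcase $P_h = P_j$ (then $P_j$ is honest and disseminates its own signed input already in round $0$) is immediate and I would dispatch it separately.

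From the relay lemma, for each fixed $j$ the sets $\ACC_{ij}$ coincide across all honest $P_i$ at output time, since each contains a given $b$ iff some --- equivalently every --- honest party's does. Consequently every honest party constructs the identical set $\FIN_i$: for each $j$, the same pair $(j,b)$ (when $\ACC_{\cdot j}$ is a singleton $\{b\}$) or $(j,\bot)$ is inserted. Because the output is a deterministic function of $\FIN_i$ alone --- output $b \ne \bot$ if there is a $Z \in \Z_s$ with $(j,b) \in \FIN_i$ for all $P_j \notin Z$ (with a fixed, identical tie-break rule at every honest party), and $\bot$ otherwise --- all honest parties output the same value, which is exactly $\Z_s$-Consistency.

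The main obstacle is the $r = t+1$ case of the relay lemma: one must argue rigorously that a bundle of $t+1$ distinct signatures on $b$ forces the existence of an honest signer who signed at some round $\le t$ and therefore relayed in time; the remainder is routine bookkeeping over the round structure and the deterministic output rule.
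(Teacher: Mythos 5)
Your proposal follows the same Dolev--Strong relay argument as the paper's own proof: you establish (as a named "relay lemma") exactly the claim the paper proves, split into the same two cases $r \le t$ and $r = t+1$, with the same use of the $|Z^\star|\le t$ bound to extract an honest signer in the latter case, and then conclude via the deterministic computation of $\FIN_i$ from the $\ACC_{ij}$ sets. The only cosmetic difference is that you explicitly call out the $P_h = P_j$ subcase, which the paper treats implicitly; this does not change the substance.
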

\begin{proof}
We claim that each {\it honest} party $P_i$ computes the same set $\ACC_{ij}$ corresponding to {\it every} 
party $P_j$, by time $(t + 1) \cdot \Delta$. Assuming the claim is true, 
 the proof then follows from the fact that $\FIN_i$ is computed {\it deterministically} at the time $(t + 1) \cdot \Delta$,
  based on the sets $\ACC_{ij}$. 
  To prove the claim, consider an {\it arbitrary} honest party $P_i$ and an {\it arbitrary} $P_j$.
   If $P_i$ includes $b$ to $\ACC_{ij}$, then we show that by the time $(t + 1) \cdot \Delta$, the value $b$ will be present in the set $\ACC_{kj}$ of {\it every} honest party $P_k$. 
   For this, we consider the following two cases.
	\begin{myitemize}
	\item[--] {\bf Case 1 - $P_i$ added $b$ to $\ACC_{ij}$ during round $r$ where  $r \leq t$}:
	 In this case, $P_i$ must have received $(b, j, \SET_j)$ in round $r$, where $\SET_j$ contained valid signatures on $b$ from $r-1$ distinct parties {\it apart} from $P_j$. Party
	  $P_i$ then computes $\sigma_{ij}$, adds this to $\SET_j$, and sends $(b,j,\SET_j)$ to every party. When $P_k$ receives this during round $r+1$, it will find that $\SET_j$ contains $r$ valid signatures on $b$ apart from party $P_j$'s, including $\sigma_{ij}$. Hence, $P_k$ will add $b$ to $\ACC_{kj}$. Since $r+1 \leq t + 1$, this happens by time $(t + 1) \cdot \Delta$.
	\item[--] {\bf Case 2 - $P_i$ added $b$ to $\ACC_{ij}$ during round $r = t + 1$}:
	 In this case $P_i$ must have received $(b,j,\SET_j)$ in round $t + 1$, where $\SET_j$ contained valid signatures on $b$ from $t$ distinct parties {\it apart} from $P_j$. 
	 This means that, in total, $P_i$ has received valid signatures on $b$ from $t+1$ distinct parties. Among these, at least one party, say $P_h$, must be honest, as there can be at most $t$ corrupt
	 parties. This means that $P_h$ must have added $b$ to $\ACC_{hj}$ during some round $r'$, where $r' \leq t$. 
	 Thus, as argued in the previous case, each party $P_k$ also adds $b$ to $\ACC_{kj}$ by round $t + 1$ and hence time $(t + 1) \cdot \Delta$.
	\end{myitemize}
\end{proof}

\begin{lemma}
\label{lemma:PW-sync-validity}
Protocol $\PiPW$ achieves $\Z_s$-Validity in a synchronous network.
\end{lemma}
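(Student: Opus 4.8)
The statement to establish is that $\PiPW$ satisfies $\Z_s$-Validity in a synchronous network: if all honest parties hold the same input bit $b$, then every honest party with an output outputs $b$ (Definition \ref{def:BA}). The plan is to fix the set $Z^{\star} \in \Z_s$ of parties actually corrupted by the adversary, so that $\PartySet \setminus Z^{\star}$ is exactly the honest set, and then argue directly about the accumulator sets $\ACC_{ij}$ maintained by an arbitrary honest party $P_i$. Since each honest $P_i$ forms $\FIN_i$ and its final output \emph{deterministically} from $\{\ACC_{ij}\}_{P_j \in \PartySet}$ at local time $\TimePW = (t+1)\cdot\Delta$, it suffices to pin down these sets for honest $P_j$ and then run the output rule.

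First I would prove the key claim: for every honest $P_j$, the set $\ACC_{ij}$ of every honest $P_i$ equals exactly $\{b\}$ by time $\TimePW$. The inclusion $b \in \ACC_{ij}$ is immediate from synchronous delivery: the honest $P_j$ sends its Round-$0$ message carrying its self-signature $\sigma_{jj}$ on $b$, this message reaches $P_i$ by Round $1$, and it passes the admission check for $r=1$ (one valid signature on $b$ from a distinct party, namely $P_j$ itself, and $b \notin \ACC_{ij}$ at that point). The reverse inclusion — that no value $b' \neq b$ ever enters $\ACC_{ij}$ — is where signature unforgeability is invoked: any message that would cause $P_i$ to add $b'$ to $\ACC_{ij}$ must contain a valid signature of $P_j$ on $b'$, and an honest $P_j$ with input $b$ never signs any $b' \neq b$. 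Hence $\ACC_{ij}$ contains the single value $b$, so $(j,b) \in \FIN_i$ for every honest $P_j$.

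With the accumulator sets understood, the pair $(Z^{\star}, b)$ directly witnesses the existence clause of the output computation: every $P_j \in \PartySet \setminus Z^{\star}$ is honest and contributes $(j,b)$ to $\FIN_i$, and $b \neq \bot$; so $P_i$ outputs some non-$\bot$ value. The point requiring genuine care — and the main obstacle, because $\PiPW$'s output step (unlike $\SBA$) specifies \emph{no} tie-breaking rule — is uniqueness of that value: I must exclude a competing pair $(Z', b')$ with $Z' \in \Z_s$ and $b' \notin \{b,\bot\}$ that also satisfies the clause, i.e.\ $(j,b') \in \FIN_i$ for all $P_j \in \PartySet \setminus Z'$. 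Here I would use $\Q^{(2)}(\PartySet, \Z_s)$: it gives $Z' \cup Z^{\star} \subsetneq \PartySet$, so there is an honest party $P_h \notin Z' \cup Z^{\star}$, for which $\FIN_i$ would then have to contain both $(h,b)$ (by the key claim) and $(h,b')$ (by the assumption on $(Z',b')$) — impossible, since $\FIN_i$ holds at most one pair with first component $h$. Thus the output is forced to be $b$, which is $\Z_s$-Validity. The only non-routine ingredients are therefore the unforgeability appeal in the key claim and this $\Q^{(2)}$-based uniqueness argument; the remainder is bookkeeping over the round structure of the Dolev--Strong-style dissemination.
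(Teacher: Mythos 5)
Your proposal is correct and follows essentially the same route as the paper: show that for every honest $P_j$ each honest $P_i$ ends with $\ACC_{ij}=\{b\}$ (the inclusion from synchronous delivery of $P_j$'s Round-$0$ self-signed message, the exclusion of $b'\neq b$ from unforgeability), hence $(j,b)\in\FIN_i$ for all honest $P_j$, and then invoke $\Q^{(2)}(\PartySet,\Z_s)$ to make the output rule produce $b$. The only stylistic difference is in the last step: the paper deduces $\Hon=\PartySet\setminus Z^{\star}\notin\Z_s$ and then concludes "by the majority rule," implicitly using monotonicity of $\Z_s$, whereas you apply $\Q^{(2)}$ directly to exhibit an honest $P_h\notin Z'\cup Z^{\star}$ that would force two contradictory entries $(h,b)$ and $(h,b')$ in $\FIN_i$ — the two formulations are logically equivalent, and yours is marginally more explicit about why no competing $(Z',b')$ can satisfy the output condition, which is worth spelling out since the protocol itself gives no tie-breaking rule.
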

\begin{proof}
Suppose that all honest parties have the same input $b$. Corresponding to each {\it honest} party $P_j$, every {\it honest} 
$P_i$ sets $\ACC_{ij} = \{b\}$. This is because $P_i$ would receive a valid signature on $b$ from $P_j$ during round $0$, and adds $b$ to $\ACC_{ij}$. Further, $P_i$ will not add any $b' \neq b$ to $\ACC_{ij}$ during
 any of the rounds, since the adversary {\it cannot} forge a signature on $b'$ on the behalf of $P_j$. Thus, for each honest $P_j$, party 
  $P_i$ adds $(j,b)$ to $\FIN_i$. Let $Z^{\star} \in \Z_s$ be the set of {\it corrupt} parties and let $\Hon = \PartySet \setminus Z^{\star}$ be the set of {\it honest} parties. 
  Hence corresponding to every $P_j \in \Hon$, then value $(j, b)$ is added to $\FIN_i$ of every $P_i \in \Hon$ by time $(t + 1) \cdot \Delta$. 
  Moreover, since $\Q^{(2)}(\PartySet, \Z_s)$
  conditions holds, $\PartySet \setminus Z^{\star} \not \in \Z_s$. Consequently, as per the ``majority" rule, every party in $\Hon$ outputs $b$.  
\end{proof}

\begin{lemma}
\label{lemma:PW-liveness}
In protocol $\PiPW$, irrespective of the network type, all honest parties obtain an output at the time $(t+1)\cdot \Delta$.
\end{lemma}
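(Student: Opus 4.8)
The plan is to observe that the output computation step of $\PiPW$ is driven purely by the \emph{local clock} of each party and involves no waiting for any incoming message. The only timing instruction in the protocol is ``at time $(t+1)\cdot\Delta$, do the following'', and the block executed at that point---forming $\FIN_i$ from the sets $\ACC_{ij}$ and then applying the deterministic majority rule to obtain either some $b \in \{0,1\}$ or $\bot$---depends only on data that $P_i$ has already accumulated locally by that instant. Hence the argument is essentially: every honest $P_i$ maintains a local clock, at local time $(t+1)\cdot\Delta$ it unconditionally runs the output-computation block, and that block terminates and yields an element of $\{0,1,\bot\}$. There is nothing to block on.

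First I would note that the sets $\ACC_{i1},\ldots,\ACC_{in}$ are always well-defined at any local time, since each $\ACC_{ij}$ is initialized to $\emptyset$ and only ever grows by adding bits; in the worst case (e.g.\ an asynchronous network where messages are heavily delayed) some or all of them may still be empty at time $(t+1)\cdot\Delta$, but that is harmless. Next I would observe that the output-computation block is a finite, purely local computation: it inspects each $\ACC_{ij}$ (checking whether it contains exactly one value), builds $\FIN_i$, and then checks whether there is some $Z \in \Z_s$ and $b \neq \bot$ with $(j,b) \in \FIN_i$ for all $P_j \in \PartySet \setminus Z$; in either case it produces an output. Since this computation is triggered by reaching local time $(t+1)\cdot\Delta$ and nothing in it waits on the network, every honest party produces an output exactly at local time $(t+1)\cdot\Delta$, which is what we want.

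I do not expect any real obstacle here---the statement is about \emph{liveness/termination time only}, not about the value output, so the asynchronous case is handled in exactly the same way as the synchronous one. The only point worth stating explicitly (and the only thing that could be called subtle) is that in the asynchronous model ``local time $(t+1)\cdot\Delta$'' is a quantity each party can measure on its own clock, so the step is always eventually (in fact, at that local time) executed regardless of message delays; this is the standard convention already used for the timeouts appearing in $\BC$, $\Acast$, etc.\ earlier in the paper. I would close by remarking that consistency/validity in the asynchronous setting are deliberately \emph{not} claimed by this lemma---only the timed output guarantee---so no further case analysis on the adversary's behaviour is needed.
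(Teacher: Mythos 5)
Your proof is correct and is exactly the argument the paper gives (the paper's version is a one-liner: the output-computation step fires unconditionally at local time $(t+1)\cdot\Delta$ and involves no waiting, so every honest party produces some output, possibly $\bot$, at that time). You have simply spelled out the same local-clock/timeout observation in more detail.
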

\begin{proof}
The proof follows from the fact that irrespective of the network type, the parties compute an output (which could be $\bot$) at the local time $(t + 1) \cdot \Delta$.
\end{proof}

If the inputs of the parties are of size $\ell$ bits, then we invoke $\ell$ instances of $\PiPW$. The following lemma describes the communication cost incurred while doing this.

\begin{lemma}
\label{lemma:PW-communication}
If the inputs of the parties are of size $\ell$ bits, then 
 protocol $\PiPW$ incurs a communication of $\Order(n^4 \cdot \ell \cdot |\sigma|)$ bits from the honest parties.
\end{lemma}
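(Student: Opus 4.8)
The statement to prove is Lemma~\ref{lemma:PW-communication}: protocol $\PiPW$ (on $\ell$-bit inputs, i.e.~$\ell$ parallel instances) incurs $\Order(n^4 \cdot \ell \cdot |\sigma|)$ bits of communication from the honest parties.

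\textbf{Overall approach.} The plan is to bound the communication of a single instance of $\PiPW$ (running on one-bit inputs) by $\Order(n^4 \cdot |\sigma|)$ bits, and then multiply by $\ell$ for the $\ell$ parallel copies. Since $\PiPW$ is essentially the Dolev--Strong broadcast executed by every party as sender in parallel, and each message carries at most a full signature chain, the argument is a standard counting argument over (sender-index, round, sender-of-message, receiver) tuples.

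\textbf{Key steps.} First, I would fix a single sender $P_j$ and count the bits sent on behalf of $P_j$'s broadcast. In round $0$, $P_j$ sends $(b_i, i, \SET_i)$ with $|\SET_i| = 1$ signature to all $n$ parties: $\Order(n \cdot |\sigma|)$ bits. In each subsequent round $r$ with $1 \le r \le t$, an honest party $P_i$ sends a message $(b, j, \SET_j \cup \{\sigma_{ij}\})$ to all $n$ parties \emph{only} the first time it adds a new value $b$ to $\ACC_{ij}$; crucially $|\ACC_{ij}| \le 2$ for honest $P_i$ (if $\ACC_{ij}$ ever acquired a third distinct value the output rule would be unaffected, but in fact the protocol adds $b$ to $\ACC_{ij}$ at most once per value and there are only two bit-values, so each honest $P_i$ relays at most twice on behalf of $P_j$ over the whole protocol). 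Hence across all rounds, each honest $P_i$ sends $\Order(n \cdot |\sigma|)$ bits on behalf of $P_j$ (the set $\SET_j$ has size at most $t+1 = \Order(n)$ signatures, and it is sent to $\le n$ parties, a constant number of times). Summing over all $n$ honest relayers $P_i$ gives $\Order(n^2 \cdot |\sigma|)$ bits on behalf of the single sender $P_j$, since $|\SET_j| = \Order(n \cdot |\sigma|)$ and it is broadcast to $n$ recipients a constant number of times --- wait, I should be careful: $n$ relayers, each sending to $n$ recipients a message of size $\Order(n \cdot |\sigma|)$, gives $\Order(n^3 \cdot |\sigma|)$ bits on behalf of $P_j$. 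Second, summing over all $n$ choices of sender $P_j$ yields $\Order(n^4 \cdot |\sigma|)$ bits for one instance. Third, for $\ell$-bit inputs the parties run $\ell$ independent instances in parallel, giving $\Order(n^4 \cdot \ell \cdot |\sigma|)$ bits total, as claimed.

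\textbf{Main obstacle.} The only subtlety --- and the step I would state most carefully --- is justifying that each honest $P_i$ forwards at most a constant number of distinct messages on behalf of each sender $P_j$, so that the round count $t+1 = \Order(n)$ does \emph{not} enter multiplicatively a second time. This follows from the guard ``$b \notin \ACC_{ij}$'' in the round-$r$ code: an honest $P_i$ only relays when it adds a genuinely new value to $\ACC_{ij}$, and since inputs are bits there are at most two such values, hence at most two relays (to $n$ parties each) per $(P_i, P_j)$ pair over the entire execution. Everything else is bookkeeping: each relayed message has size dominated by the accumulated signature set $\SET_j$, which contains at most one signature per party, i.e.~$\Order(n)$ signatures of $\Order(|\sigma|)$ bits each. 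Putting the counts together --- $n$ senders $\times$ $n$ relayers $\times$ $\Order(1)$ relays $\times$ $n$ recipients $\times$ $\Order(n \cdot |\sigma|)$ per message --- gives the stated $\Order(n^4 \cdot |\sigma|)$ per instance and $\Order(n^4 \cdot \ell \cdot |\sigma|)$ overall.
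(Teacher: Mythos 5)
Your proposal is correct and follows essentially the same counting argument as the paper: the guard ``$b \notin \ACC_{ij}$'' limits each honest $P_i$ to a constant number of relays per original sender $P_j$, each relay carries a signature set of size $\Order(n \cdot |\sigma|)$ and goes to $n$ recipients, and ranging over $n$ senders and $n$ relayers gives $\Order(n^4 \cdot |\sigma|)$ per bit, hence $\Order(n^4 \cdot \ell \cdot |\sigma|)$ over the $\ell$ parallel instances. Your write-up is just a more careful expansion of the paper's ``considering all possibilities for $b,i,j,k$'' sentence, with a self-correction en route that lands on the same bound.
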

\begin{proof}
During round $0$, each party signs its input and sends this to every other party, incurring a total communication of
 $\Order(\ell \cdot n^2 \cdot |\sigma|)$ bits. During the next $t$ rounds, 
  each party $P_i$ sends $(b, j,\SET_j)$ to every other party $P_k$ at most {\it once}. This is because $P_i$ sends this only if $b \notin \ACC_{ij}$ holds, and does not do this once it adds $b$ to $\ACC_{ij}$. Considering all possibilities for $b$, $i$, $j$ and $k$, and taking into account that $\SET_j$ will contain $\Order(n)$ signatures, the communication cost of this will be $\Order(\ell \cdot n^4 \cdot |\sigma|)$ bits.
\end{proof}

The proof of Lemma \ref{lemma:PW} now follows from Lemma \ref{lemma:PW-sync-consistency}-\ref{lemma:PW-communication}.
\subsection{Protocool $\Acast$: Asynchronous Broadcast with Synchronous Guarantees}
In this section, we prove the properties of the protocol $\Acast$ (see Fig \ref{fig:Acast} for the protocol description). \\~\\
\noindent {\bf Lemma \ref{lemma:Acast}.}
{\it Protocol $\Acast$ achieves the following properties.
 \begin{myitemize}
  \item[--] Asynchronous Network: The protocol is a $\Z_a$-secure broadcast protocol. 
  \item[--] Synchronous Network: {\bf (a) $\Z_s$-Liveness}: If $\Sender$ is honest, then all honest parties obtain an output
    within time $3\Delta$. {\bf (b) $\Z_s$-Validity}: If $\Sender$ is honest, then every honest party with an output, outputs $m$.
  {\bf (c) $\Z_s$-Consistency}: If $\Sender$ is corrupt and some honest party outputs $m^{\star}$ at time $T$,
   then every honest $P_i$ outputs $m^{\star}$ by the end of time $T+ \Delta$.
   \item[--] Communication Complexity: $\Order(n^3 \cdot \ell \cdot |\sigma|)$ bits are communicated by the honest parties, where
    $\ell$ is the size of $\Sender$'s input.
 \end{myitemize}
 }
\begin{proof}
We first consider a {\it synchronous} network, followed by an {\it asynchronous} network.
\paragraph{\bf Properties in the Synchronous Network.}
 Let $Z^{\star} \in \Z_s$ be the set of {\it corrupt} parties and let $\Hon = \PartySet \setminus Z^{\star}$ be the set of {\it honest} parties.
 Suppose $\Sender$ is {\it honest}. Then by time $\Delta$, each party in $\Hon$ receives $\sign{(\propose, m)}_\Sender$ from $\Sender$
 and no honest party ever receives $\sign{(\propose, m')}_\Sender$ from any party by time $2\Delta$, for any $m' \neq m$, since signature of an {\it honest} $\Sender$ {\it cannot} be forged. 
  Thus, every $P_j \in \Hon$ 
  sends $\sign{(\vote, m)}_j$ by time $2\Delta$.
  Consequently, by time $3\Delta$, every $P_i \in \Hon$ will have a quorum $\cert{m}$ of legitimately signed $\sign{(\vote, m)}_j$ messages corresponding to every $P_j \in \Hon$. 
  The parties $P_k \in Z^{\star}$ may send signed $\sign{(\vote, m')}_k$ messages where $m' \neq m$ and consequently
  the parties in $\Hon$ may also have a quorum $\cert{m'}$ of legitimately signed $\sign{(\vote, m')}_k$ messages corresponding to every $P_k \in Z^{\star}$.
  However, since $\Z_s$ satisfies the $\Q^{(2)}(\PartySet, \Z_s)$ condition, $\PartySet \setminus Z^{\star} = \Hon \not \in \Z_s$.
  Consequently, the parties in $\Hon$ outputs $m$ by time $3\Delta$, as the condition for outputting an $m' \neq m$ will be never satisfied for the parties in $\Hon$.
   This proves the $\Z_s$-liveness and $\Z_s$-validity. 

We now consider $\Sender$ to be {\it corrupt}. We first show that no two parties in $\Hon$ can vote for different messages. On the contrary,
  let $P_i \in \Hon$ sends $\sign{\vote, m')}_i$ at time $T_i$, and let $P_j \in \Hon$ sends $\sign{(\vote, m'')}_j$ at time $T_j$,
  where $T_j \geq T_i$. This implies that $P_i$ must have received $\sign{(\propose, m')}_\Sender$ from $\Sender$ within time $T_i - \Delta$, and would have sent
  $\sign{(\propose, m')}_\Sender$ to $P_j$. And $P_j$ would have received $\sign{(\propose, m')}_\Sender$ from $P_i$ within time $T_i$. Now since $T_i \leq T_j$, 
  it implies that $P_j$ {\it would not} have sent $\sign{(\vote, m'')}_j$ at time $T_j$, and this is a contradiction. 
  
  Now based on the above fact,  we proceed to prove that $\Z_s$-consistency holds. 
   Let $P_h \in \Hon$ outputs $m^\star$ at time $T$. This implies that at time $T$, there exists a subset $Z_{\alpha} \in \Z_s$, such that
   $P_h$ has a quorum $\cert{m^{\star}}$ of legitimately signed $\sign{(\vote, m)}_{j}$ messages, corresponding to every $P_j \in \PartySet \setminus Z_{\alpha}$. 
   Now since $\Z_s$ satisfies the $\Q^{(2)}(\PartySet, \Z_s)$ condition, it follows that $\Hon \cap (\PartySet \setminus Z_{\alpha}) \neq \emptyset$. This implies that there exists at least
   one party in $\Hon$, say $P_k$, who has voted for $m^{\star}$ by sending a $\sign{(\vote, m)}_{k}$ message. Consequently, no other party in $\Hon$ every votes for $m^{\star \star} \neq m^{\star}$.
   The parties in $Z^{\star}$ may vote for $m^{\star \star} \neq m^{\star}$. But since $\PartySet \setminus Z^{\star} \not \in \Z_s$, it follows that no party in $\Hon$ will ever have a sufficiently large
   quorum of legitimately signed vote messages for $m^{\star \star}$ to output $m^{\star \star}$.
   Since $P_h$ sends $\cert{m^\star}$ to all parties at time $T$, every other party in $\Hon$ will receive $\cert{m^\star}$ by time $T + \Delta$.
   Consequently, all the parties in $\Hon$ will output $m^{\star}$, latest by time $T + \Delta$.
\paragraph{\bf Properties in the Asynchronous Network.}
We now consider an asynchronous network. Let $Z^{\star} \in \Z_a$ be the set of {\it corrupt} parties and let
 $\Hon = \PartySet \setminus Z^{\star}$ be the set of {\it honest} parties. We first consider an {\it honest} $\Sender$. 
  Each party in $\Hon$ eventually receives $\sign{(\propose, m)}_\Sender$ from $\Sender$. Furthermore, no party in $\Hon$ ever receives $\sign{(\propose, m')}_\Sender$ from any party, 
  for any $m' \neq m$,  since the signature of an honest $\Sender$ cannot be forged. Hence, each party in $\Hon$ eventually sends a signed $\vote$ message for $m$, which is eventually
  delivered to every party in $\Hon$. The parties in $Z^{\star}$ may send signed $\vote$ messages for $m' \neq m$. 
  However, since $\Z_a$ satisfies the $\Q^{(3)}(\PartySet, \Z_a)$ condition, it follows that each party in $\Hon$ eventually outputs $m$ and the conditions for outputting $m' \neq m$ will be never satisfied
   for any party in $\Hon$. This proves the $\Z_a$-liveness and $\Z_a$-consistency.

Next, consider a {\it corrupt} $\Sender$.
  Let $P_h \in \Hon$  outputs $m^\star$. This implies there exists a subset $Z_{\alpha} \in \Z_s$, such that
   $P_h$ has a quorum $\cert{m^{\star}}$ of legitimately signed $\vote$ messages for $m^{\star}$, corresponding to 
   every party in $\PartySet \setminus Z_{\alpha}$. 
   Now consider an {\it arbitrary} $P_i \in \Hon$, where $P_i \neq P_h$. We claim that for any $Z \in \Z_s$, party 
  $P_i$ will {\it never} have a 
   quorum $\cert{m^{\star \star}}$ of legitimately signed $\vote$ messages for $m^{\star \star}$, corresponding to 
   the parties in $\PartySet \setminus Z$.
   On the contrary, let  $P_i$ eventually have a 
   quorum $\cert{m^{\star \star}}$ of legitimately signed $\vote$ messages for $m^{\star \star}$, corresponding to 
   every party in $\PartySet \setminus Z_{\beta}$, for some $Z_{\beta} \in \Z_s$.
   Now since the $\Q^{(2, 1)}(\PartySet, \Z_s, \Z_a)$ condition is satisfied, it follows that $\Hon \cap (\PartySet \setminus Z_{\alpha}) \cap (\PartySet \setminus Z_{\beta}) \neq \emptyset$. 
   This implies that there exists at least one party from $\Hon$, say $P_k$, such that $P_k$ has voted {\it both} for $m^{\star}$, as well as $m^{\star \star}$, which is a 
   contradiction. Consequently, $P_i$ will never output $m^{\star \star}$. 
   Now, since $P_h$ sends $\cert{m^\star}$ to all the parties, party $P_i$ eventually receives  $\cert{m^\star}$ and outputs $m^\star$.
   This proves $\Z_a$-consistency.
   
   Finally, the communication complexity follows from the fact that irrespective of the type of network, every party may have to send a quorum of up to $\Order(n)$ signed $\vote$
   messages, to every other party.
 \end{proof}
 \subsection{Properties of the Protocol $\BC$}
 In this section, we prove the properties of the protocol $\BC$ (see Fig \ref{fig:BC} for the formal description). \\~\\
\noindent {\bf Theorem \ref{thm:BC}.}
{\it Protocol  $\BC$ achieves the following,
   with a communication complexity of $\Order(n^4 \cdot \ell \cdot |\sigma|)$ bits,
    where $\TimeBC =  3\Delta + \TimePW$.   
 \begin{myitemize}
   \item[--] {\it Synchronous} network: 
        \begin{myitemize}
           \item[--]{\bf (a) $\Z_s$-Liveness}: At time $\TimeBC$, each honest party has an output. 
	    \item[--] {\bf (b) $\Z_s$-Validity}: If $\Sender$ is {\it honest}, then at time $\TimeBC$, each honest party
    outputs $m$.
           \item[--] {\bf (c) $\Z_s$-Consistency}: If $\Sender$ is {\it corrupt}, then the output of every honest party is the same at 
      time $\TimeBC$.     
            \item[--]   {\bf (d) $\Z_s$-Fallback Consistency}: If $\Sender$ is {\it corrupt} and some honest 
     party outputs $m^{\star} \neq \bot$ at time
    $T$ through fallback-mode, then every honest party outputs $m^{\star}$ by 
    time $T + \Delta$.    
      \end{myitemize}
\item[--] {\it Asynchronous Network}:
   \begin{myitemize}
     \item[--] {\bf (a) $\Z_a$-Liveness}: At time $\TimeBC$, each honest party has an output.
    \item[--] {\bf (b) $\Z_a$-Weak Validity}: If $\Sender$ is {\it honest}, then at  time $\TimeBC$, 
    each honest party
    outputs $m$ or $\bot$.
    \item[--] {\bf (c) $\Z_a$-Fallback Validity}: If $\Sender$ is {\it honest}, then each honest party
     with output $\bot$ at time $\TimeBC$, eventually outputs 
    $m$ through fallback-mode.
    \item[--] {\bf (d) $\Z_a$-Weak Consistency}: If $\Sender$ is {\it corrupt}, then there exists some $m^{\star} \neq \bot$, such that 
   at time $\TimeBC$, 
    each honest party
    outputs either 
     $m^{\star}$ or $\bot$.
    \item[--] {\bf (e) $\Z_a$-Fallback Consistency}: If $\Sender$ is {\it corrupt}, and some honest party 
    outputs $m^{\star} \neq \bot$ at  time
    $T$ where $T \geq \TimeBC$, then 
    each honest party eventually outputs $m^{\star}$.
   \end{myitemize}
   \end{myitemize}
}
\begin{proof}
The $\Z_s$-liveness and $\Z_a$-liveness properties follow from the fact that 
 every honest party outputs something (including $\bot$) at (local)
  time $\TimeBC$, irrespective of the type of the network.
   We next prove the rest of the properties of the protocol in the {\it synchronous} network.
\paragraph{\bf Properties in the Synchronous Network.}
If $\Sender$ is {\it honest}, then due to the {\it $\Z_s$-liveness} and
  {\it $\Z_s$-validity} properties of $\Acast$ in the {\it synchronous} network (see Lemma \ref{lemma:Acast}), all honest
   parties receive $m$ from the Acast of $\Sender$
   at time $3\Delta$. Consequently, all honest parties participate with input $m$ in the instance of $\PiPW$.
  The {\it $\Z_s$-guaranteed liveness} and {\it $\Z_s$-validity} properties
   of $\PiPW$ in the {\it synchronous} network (see Lemma \ref{lemma:PW}) guarantees that at time $3\Delta + \TimePW$, all
    honest parties will have
  $m$ as the output from the instance
  of $\PiPW$. As a result, all honest parties output $m$ at time $\TimeBC$, thus proving the {\it $\Z_s$-validity} property.
  
   To prove the {\it $\Z_s$-consistency} property, we consider a {\it corrupt} $\Sender$.
  From the {\it $\Z_s$-consistency} property of $\PiPW$ in the {\it synchronous} network (see Lemma \ref{lemma:PW}), 
  all honest parties will have the {\it same} output from the instance of $\PiPW$ at time $\TimeBC$.
    If {\it all} honest parties have the output $\bot$ for $\BC$ at time $\TimeBC$, then {\it $\Z_s$-consistency}
     holds trivially. So, consider the case when
    some {\it honest} party, say $P_i$, has the output $m^{\star} \neq \bot$ for $\BC$ at time $\TimeBC$. 
    This implies that all honest parties have the output $m^{\star}$ from the instance of $\PiPW$. 
    Moreover, at time $3\Delta$, at least one {\it honest} party, say $P_h$, has received $m^{\star}$ from the Acast of $\Sender$. 
    If the latter does not hold, then all honest parties would have participated
     with input $\bot$ in the instance of $\PiPW$, and from the {\it $\Z_s$-validity} of $\PiPW$
      in the {\it synchronous}
    network (see Lemma \ref{lemma:PW}), all honest  parties would compute $\bot$ as the output during the instance of $\PiPW$, which is a contradiction.
    Since $P_h$ has received
     $m^{\star}$ from $\Sender$'s Acast at time $3\Delta$, it follows from the {\it $\Z_s$-consistency} property of
      $\Acast$ in the {\it synchronous}
    network (see Lemma \ref{lemma:Acast}) that {\it all} honest parties will receive
     $m^{\star}$ from $\Sender$'s Acast by time $4\Delta$. Moreover, $4\Delta < 3\Delta + \TimePW$ holds.
    Consequently, at time $3 \Delta + \TimeBC$,
     {\it all} honest parties will have $m^{\star}$ from $\Sender$'s Acast {\it and} as the output of
    $\PiPW$, implying that all honest parties output $m^{\star}$ for $\BC$.   
  
   We next prove the  {\it $\Z_s$-fallback consistency} property for which
    we again consider a {\it corrupt} $\Sender$. 
    Let $P_h$ be an {\it honest} party who outputs $m^{\star} \neq \bot$ at time $T$ through fallback-mode.
    Note that $T > \TimeBC$, as the output during the fallback-mode is computed only after time $\TimeBC$.  
     We also note that {\it each} honest party has output $\bot$ at 
    time $\TimeBC$. This is because, from the proof of the {\it $\Z_s$-consistency} property of 
    $\BC$ (see above), if any {\it honest} party has an output 
    $m' \neq \bot$ at time $\TimeBC$, then {\it all} honest parties (including $P_h$) must have computed the output 
    $m'$ at time $\TimeBC$. Hence, $P_h$ will never change its output to
    $m^{\star}$.\footnote{Recall that in the protocol $\BC$, the parties who obtain
    an output different from $\bot$ at time $\TimeBC$, never change their output.}  
    Now since $P_h$ has obtained the output $m^{\star}$, it implies that at time $T$, it has received $m^{\star}$ from the Acast of
    $\Sender$. It then follows from the {\it $\Z_s$-consistency} of 
    $\Acast$ in the {\it synchronous} network that every honest party will also receive 
    $m^{\star}$ from the Acast of $\Sender$, latest by time $T + \Delta$ and output $m^{\star}$.
    This completes the proof of all the properties in the {\it synchronous} network.
    \paragraph{\bf Properties in the Asynchronous Network.}
      The {\it $\Z_a$-weak validity} property follows from the {\it $\Z_a$-validity} property of 
    $\Acast$ in the {\it asynchronous} network (see Lemma \ref{lemma:Acast}), which ensures that
    no honest party ever receives an $m'$ from the Acast of $\Sender$, where $m' \neq m$.
    So, if at all any honest party outputs a value different from $\bot$ at time $\TimeBC$, it has to be $m$.
       The {\it $\Z_a$-weak consistency} property follows using similar arguments
   as used to prove {\it $\Z_s$-consistency} in the {\it synchronous} network;
   however we now rely on the {\it $\Z_a$-validity} and 
     {\it $\Z_a$-consistency}
     properties of $\Acast$ in the asynchronous network (see Lemma \ref{lemma:Acast}). 
     The latter property ensures  that
    for a {\it corrupt} $\Sender$, two different honest parties never end up receiving $m_1$ and $m_2$ from the Acast of $\Sender$, where $m_1\neq m_2$. 
    
     For the {\it $\Z_s$-fallback validity} property, consider an {\it honest} $\Sender$, and let 
     $P_i$ be an arbitrary {\it honest} party
      who outputs $\bot$ at (local) time $\TimeBC$. Since the parties keep on participating in the protocol beyond time $\TimeBC$,
       it follows from the {\it $\Z_a$-liveness} and
        {\it $\Z_a$-validity} properties of $\Acast$ in the {\it asynchronous} network (see Lemma \ref{lemma:Acast}) that party $P_i$ will {\it eventually} 
     receive $m$ from the Acast of $\Sender$, by executing the steps of the fallback-mode of $\BC$. 
     Consequently, party $P_i$ eventually changes its output from
     $\bot$ to $m$.

For the {\it $\Z_a$-fallback consistency} property, we consider a {\it corrupt} $\Sender$. Let 
     $P_j$ be an {\it honest} party who outputs some $m^{\star}$ different from $\bot$ at time $T$, where $T \geq \TimeBC$.
      This implies that $P_j$ has
      obtained $m^{\star}$ from the Acast of $\Sender$.
           Now, consider an arbitrary {\it honest} $P_i$. From the
            {\it $\Z_a$-liveness} and {\it $\Z_a$-weak consistency} properties of 
            $\BC$ in {\it asynchronous} network proved above,
     it follows that $P_i$ outputs either $m^{\star}$ or $\bot$ at local time $\TimeBC$. 
     If $P_i$ has output $\bot$, then from the {\it $\Z_a$-consistency}
     property of $\Acast$ in the {\it asynchronous} network (see Lemma \ref{lemma:Acast}), it follows that 
     $P_i$ will also eventually obtain $m^{\star}$ from the Acast of $\Sender$, by executing the steps of the fallback-mode of $\BC$. 
     Consequently, party $P_i$ eventually changes its output from
     $\bot$ to $m^{\star}$.
     
       The {\it communication complexity} (both in the synchronous as well as asynchronous network)
      follows from the communication complexity of $\PiPW$ and $\Acast$.
\end{proof} 
  \subsection{Properties of the Protocol $\SBA$}
   In this section, we prove the properties of the protocol $\SBA$ (see Fig \ref{fig:SBA} for the formal description). \\~\\
   \noindent {\bf Theorem \ref{thm:SBA}.}
   {\it Protocol $\SBA$ achieves the following where $\TimeSBA = \TimeBC$,  
 incurring a communication of 
 $\Order(n^5 \cdot |\sigma|)$ bits.   
 \begin{myitemize}
   \item[--] {\it Synchronous Network}: the protocol is a $\Z_s$-secure SBA protocol where honest parties have an output, different from $\bot$, at time $\TimeSBA$.
   \item[--] {\it Asynchronous Network}: the protocol achieves $\Z_a$-guaranteed liveness and $\Z_a$-weak validity, such that all honest parties have an output at (local) time $\TimeSBA$.
  \end{myitemize}
}
\begin{proof}
 The communication complexity simply follows from the fact that $n$ instances of $\BC$ are invoked in the protocol.
The {\it guaranteed liveness}, both in the synchronous and asynchronous network trivially follows from 
 the {\it $\Z_s$-liveness} and {\it $\Z_a$-liveness} of $\BC$ (see Theorem \ref{thm:BC}), which ensures that all the $n$ instances of $\BC$ produce {\it some} output
  within (local) time $\TimeBC$, both in a synchronous as well as an asynchronous network, through {\it regular} mode.
  Hence, at local time $\TimeSBA = \TimeBC$, all honest parties will have some output.
   We next prove the rest of the properties in a {\it synchronous} network.
\paragraph{\bf Properties in the Synchronous Network.}
Let $Z^{\star} \in \Z_s$ be the set of {\it corrupt} parties and let $\Hon = \PartySet \setminus Z^{\star}$ be the set of {\it honest} parties. 
 In a {\it synchronous} network, the instances of $\BC$ corresponding to the senders in $\Hon$, 
  result in an output different from $\bot$ for all honest parties (follows from the {\it $\Z_s$-validity} property of $\BC$ in the {\it synchronous}
  network, Theorem \ref{thm:BC}). Hence $\Hon \subseteq \R$ will hold.
    Moreover, 
  from the {\it $\Z_s$-consistency} property of $\BC$ in the {\it synchronous} network (Theorem \ref{thm:BC}),
   all the parties in $\Hon$ obtain a common output from the $i^{th}$ instance of $\BC$, for $i = 1, \ldots, n$, at time $\TimeBC$. 
  Hence, all honest parties output the {\it same} value,
  different from $\bot$, at time $\TimeBC$, proving the {\it $\Z_s$-consistency} of $\SBA$.
   Finally, if all parties in $\Hon$ have the same input bit $b$, then only the instances of $\BC$ corresponding to the parties in $\R \setminus \Hon$ may output
   $\bar{b} = 1 - b$. However, $\R \setminus \Hon \in \Z_s$. Moreover, $\R \setminus Z^{\star} \not \in \Z_s$ (since $\Z_s$ satisfies the $\Q^{(2)}(\PartySet, \Z_s)$ condition). 
   It then follows that all honest parties output $b$, proving {\it $\Z_s$-validity} of $\SBA$.
\paragraph{\bf Properties in the Asynchronous Network.}
 Let $Z^{\star} \in \Z_a$ be the set of {\it corrupt} parties and let $\Hon = \PartySet \setminus Z^{\star}$ be the set of {\it honest} parties.
   Suppose all the parties in $\Hon$ have the same input bit $b$. Let 
  $P_i \in \Hon$ be an {\it arbitrary} party, that obtains an output $c$, {\it different} from $\bot$, at time $\TimeBC$. This implies that there exists a subset of parties $\R$ for $P_i$, where $\PartySet \setminus \R \in \Z_s$, such that
  $P_i$ has obtained a Boolean output $b_i^{(j)}$ from the $\BC$ instances, corresponding to every $P_j \in \R$. Moreover, there also exists a subset of parties
    $\R_i \subseteq \R$, where $\R \setminus \R_i \in \Z_s$, such that
  the output $b_i^{(j)} = c$, corresponding to every $P_j \in \R_i$. Now since the $\Q^{(2, 1)}(\PartySet, \Z_s, \Z_a)$ 
   condition is satisfied, it follows that $\Z_a$ satisfies the $\Q^{(1)}(\R_i, \Z_a)$ condition and hence 
  $\R_i \cap \Hon \neq \emptyset$. Consequently, $c = b$ holds. 
  This proves the {\it $\Z_a$-weak validity} in the asynchronous network.  
\end{proof}

\subsection{Protocol $\Vote$: Asynchronous Graded Agreement with Synchronous Validity}
 To design protocol $\Vote$, we first design a sub-protocol $\Prop$ for proposing values. 
 \subsubsection{$\Prop$: A Network Agnostic Protocol for Proposing Values}
 Protocol $\Prop$ takes an input value from each party from the set $\{0, 1, \lambda\}$
  and outputs a set $\prop$ of proposed values for each party. Liveness is ensured in an {\it asynchronous} network as long as each honest party holds one of {\it two} inputs. In an asynchronous network, it will be ensured that each value in the output $\prop$ must be the input of some {\it honest} party. Moreover, if any two honest parties output a singleton set for $\prop$, then they must output the {\it same} set. In a {\it synchronous} network, validity and liveness are ensured as long as each honest party participates with the same input. Protocol $\Prop$ is presented in figure \ref{fig:Prop}.

\begin{protocolsplitbox}{$\Prop(\PartySet, \Z_s, \Z_a)$}{Sub-protocol to propose values. The above code is executed by every party $P_i$ with input $v \in \{0, 1, \lambda\}$}{fig:Prop}
\justify
\begin{myenumerate}
\item Set $\vals = \prop = \emptyset$.
\item On having the input $v \in \{0, 1, \lambda\}$, send $(\prepare, v)$ to every party $P_j \in \PartySet$.
\item On receiving $(\prepare,b)$ for some $b \in \{0,1,\lambda\}$ from a set of parties $S^{(b)}$ satisfying $\Q^1(S^{(b)}, \Z_s)$ condition, 
 send $(\prepare,b)$ to all $P_j \in \PartySet$, if not sent earlier.
\item Upon receiving the message $(\prepare, b)$ for some $b \in \{0,1,\lambda\}$ from parties in set $\PrepSet^{(b)} = \PartySet \setminus Z$ for some $Z \in \Z_s$, set $\vals = \vals \cup \{b\}$. 
\item Upon adding the first value $b$ to $\vals$, send $(\propose,b)$ to every party $P_j \in \PartySet$.
\item Upon receiving $(\propose, b)$ messages from a set of parties $\PropSet = \PartySet \setminus Z$ for some $Z \in \Z_s$ on values $b \in \vals$, let
 $\prop \subseteq \vals$ be the set of values carried by those messages. 
  Output $\prop$.
\end{myenumerate}
\end{protocolsplitbox}

The guarantees provided by $\Prop$ are proven in a series of lemmas below. In the below proofs, we assume that $Z^\star$ is the set of corrupt parties.

\begin{lemma}
\label{prop-async-consistency}
Suppose that the network is asynchronous. If two honest parties $P_i$ and $P_j$ output $\{b\}$ and $\{b'\}$ respectively, then $b = b'$.
\end{lemma}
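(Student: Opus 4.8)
The plan is to unwind the definition of the output step of $\Prop$ for both $P_i$ and $P_j$, and then use the $\Q^{(2,1)}(\PartySet, \Z_s, \Z_a)$ condition to pin down a single honest party whose unique $(\propose,\cdot)$ message was counted by both parties. First I would record what it means for $P_i$ to output $\{b\}$: by the last step of $\Prop$, there is a set $Z_i \in \Z_s$ such that every party in $\PartySet \setminus Z_i$ has sent $P_i$ a $(\propose,\cdot)$ message carrying a value lying in $P_i$'s set $\vals$ at the time of receipt, and the collection of these values equals $\{b\}$; in particular, every such counted message carries exactly $b$. Symmetrically, for $P_j$ there is a set $Z_j \in \Z_s$ such that every party in $\PartySet \setminus Z_j$ sent $P_j$ a $(\propose,\cdot)$ message whose value lies in $P_j$'s $\vals$, and all these values equal $b'$.

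Next I would invoke the corruption bound. Since the network is asynchronous, the set $Z^\star$ of parties corrupted by the adversary satisfies $Z^\star \in \Z_a$. By the $\Q^{(2,1)}(\PartySet, \Z_s, \Z_a)$ condition, $Z_i \cup Z_j \cup Z^\star \subset \PartySet$, so there exists a party $P_k \notin Z_i \cup Z_j \cup Z^\star$; hence $P_k$ is honest, $P_k \in \PartySet \setminus Z_i$, and $P_k \in \PartySet \setminus Z_j$. The key structural fact about $\Prop$ is that an honest party sends a $(\propose,\cdot)$ message \emph{at most once} --- precisely when it adds the first value, say $c$, to its own $\vals$ --- and it sends the same message $(\propose, c)$ to all parties. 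Consequently the $(\propose,\cdot)$ message that $P_i$ counted from $P_k$ is exactly $(\propose, c)$; because $P_k$ was counted, $c$ must lie in $P_i$'s $\vals$, and therefore $c$ belongs to $P_i$'s output set $\{b\}$, giving $c = b$. Running the identical argument at $P_j$ gives $c = b'$, whence $b = b'$, as claimed.

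I do not expect a genuine obstacle here; once the definitions are unpacked this is a short combinatorial argument. The two places that require a little care are: (i) making explicit that an honest party emits at most one $(\propose,\cdot)$ message, so that the value attributed to $P_k$ by $P_i$ (resp.\ $P_j$) is unambiguously $c$; and (ii) observing that the receiving predicate in the output step already discards any $(\propose,\cdot)$ message whose value is not in the receiver's $\vals$, so that membership of $P_k$ in the proposer set of $P_i$ forces $c \in \vals$ at $P_i$ (and likewise at $P_j$), which is what lets us conclude $c \in \{b\}$ and $c \in \{b'\}$.
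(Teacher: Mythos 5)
Your proof is correct and takes essentially the same route as the paper: use $\Q^{(2,1)}(\PartySet,\Z_s,\Z_a)$ to exhibit an honest party lying in both proposer quorums $\PartySet\setminus Z_i$ and $\PartySet\setminus Z_j$, and then invoke the fact that an honest party broadcasts at most one $\propose$ message to force $b=b'$. The additional care you take about the receiving predicate filtering to $\vals$ is sound but not needed beyond what the paper already assumes implicitly.
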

\begin{proof}
 Since $P_i$ outputs $\{b\}$, it must have received $(\propose, b)$ from a set of parties $\PropSet = \PartySet \setminus Z$ for some $Z \in \Z_s$.
  Similarly, since $P_j$ outputs $\{b'\}, it$ must have received $(\propose, b')$ from a set of parties $\PropSet' = \PartySet \setminus Z'$ for some $Z' \in \Z_s$.
  Let $\PropSet_\Hon$ and $\PropSet_\Hon$ be the set of {\it honest} parties in $\PropSet$ and $\PropSet'$ respectively. Since $\Z_s, \Z_a$ satisfy the $\Q^{(2, 1)}(\PartySet, \Z_s, \Z_a)$
  condition and $Z^{\star} \in \Z_a$, it follows
  that  $\PropSet_\Hon \cap \PropSet'_\Hon \neq \emptyset$. Let $P_h \in \PropSet_\Hon \cap \PropSet'_\Hon$.
  If $b \neq b'$, this would mean that $P_h$ has sent both $(\propose, b)$ and $(\propose, b')$, which is a contradiction, since an honest party sends
   {\it at most} one $\propose$ message as per the protocol.
\end{proof}
\begin{lemma}
\label{prop-async-validity}
Suppose that the network is asynchronous. If no honest party has input $v$, then no honest party outputs $\prop$ containing $v$.
\end{lemma}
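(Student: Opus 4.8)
The plan is to show that the value $v$ never enters the $\vals$ set of any honest party; the conclusion is then immediate, since in Step~6 of $\Prop$ the output is taken to satisfy $\prop \subseteq \vals$, so $v \notin \vals$ forces $v \notin \prop$.

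First I would record a small auxiliary observation, which gets used twice. Let $Z^{\star} \in \Z_a \subseteq \Z_s$ denote the set of corrupt parties. Whenever an honest party receives $(\prepare, v)$ from a set $S$ with $\Q^{(1)}(S, \Z_s)$, the set $S$ must contain an honest party: by the $\Q^{(1)}(S, \Z_s)$ condition, $S \not\subseteq Z^{\star}$, so some party of $S$ is honest. The same conclusion holds for any set of the form $\PrepSet^{(v)} = \PartySet \setminus Z$ with $Z \in \Z_s$ (the sets appearing in Step~4): since $\Z_s$ and $\Z_a$ satisfy $\Q^{(2,1)}(\PartySet, \Z_s, \Z_a)$, they in particular satisfy $\Q^{(1,1)}(\PartySet, \Z_s, \Z_a)$, so $Z \cup Z^{\star} \subsetneq \PartySet$, and hence $\PrepSet^{(v)}$ contains an honest party.

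The main step is to prove that no honest party ever sends a $(\prepare, v)$ message, which I would do by a minimality argument over the real-time order of transmission events. Assume, towards a contradiction, that some honest party sends $(\prepare, v)$, and let $P_h$ be an honest party that does so at the earliest such real time. Since by hypothesis no honest party has input $v$, party $P_h$ did not send $(\prepare, v)$ in Step~2, so it must have sent it in Step~3. But that requires $P_h$ to have first received $(\prepare, v)$ from a set $S^{(v)}$ with $\Q^{(1)}(S^{(v)}, \Z_s)$; by the auxiliary observation $S^{(v)}$ contains an honest party $P_{h'}$, and $P_{h'}$ must have sent $(\prepare, v)$ strictly before $P_h$ received it, hence strictly before $P_h$ sent it --- contradicting the choice of $P_h$.

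Finally I would combine these facts: for an honest party to add $v$ to $\vals$ in Step~4 it would have to receive $(\prepare, v)$ from some $\PrepSet^{(v)} = \PartySet \setminus Z$ with $Z \in \Z_s$, which (again by the auxiliary observation) contains an honest party, forcing that honest party to have sent $(\prepare, v)$ --- impossible by the previous step. Hence $v \notin \vals$ for every honest party, and therefore $v \notin \prop$ for every honest party, as desired. I expect the only slightly delicate point to be making the ``earliest transmission event'' argument precise in the asynchronous model, but this is routine: message receipt strictly follows transmission, so among all (finitely many, by any prefix of the execution we consider) honest transmissions of $(\prepare, v)$ there is a well-defined earliest one, and the induction closes as above. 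Note that this argument does not actually rely on asynchrony beyond the weaker property that delivery follows transmission, so the same reasoning would work verbatim in a synchronous execution.
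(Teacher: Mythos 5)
Your proof is correct and follows essentially the same route as the paper's: rule out any honest $(\prepare, v)$ transmission, conclude $v$ never enters any honest $\vals$, and use $\prop \subseteq \vals$. The paper compresses the chaining step ("no step-2 send $\Rightarrow$ no $\Q^{(1)}$-set can trigger a step-3 echo") into a single sentence, whereas you make the underlying earliest-transmission induction explicit — a cleaner rendering of the same argument rather than a different one.
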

\begin{proof}
If $v$ was not input by any honest party, then no honest party sends $(\prepare,v)$ in step $2$. Hence, no honest party receives $(\prepare,v)$ from a set of parties $S^{(v)}$ 
 which satisfies the $\Q^1(S^{(v)}, \Z_s)$ condition during step $3$, since such a set must contain at least one honest party. Consequently, no honest party sends $(\propose,v)$. Thus, no honest party adds $v$ to $\vals$ and no honest party outputs $\prop$ containing $v$.
\end{proof}

The following lemmas help prove liveness.

\begin{lemma}
\label{prop-async-liveness}
Suppose that the network is asynchronous. If all honest parties hold one of two inputs, say $v_0$ and $v_1$, then all honest parties eventually compute
  an output.
\end{lemma}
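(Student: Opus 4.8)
The plan is to derive liveness purely from the echoing rule (step~3) together with the $\Q^{(2,1)}(\PartySet, \Z_s, \Z_a)$ condition, exploiting crucially that the honest parties hold \emph{at most two} distinct inputs. Write $Z^\star \in \Z_a$ for the set of corrupt parties, $\Hon = \PartySet \setminus Z^\star$ for the honest parties, and partition $\Hon = H_0 \cup H_1$ according to whether a party's input is $v_0$ or $v_1$ (one of the two parts may be empty). Note that $\PartySet = H_0 \cup H_1 \cup Z^\star$.

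The central step, which I expect to be the main obstacle, is to show that at least one of $H_0, H_1$ satisfies the $\Q^{(1)}(\cdot, \Z_s)$ condition. If not, there are $Z_0, Z_1 \in \Z_s$ with $H_0 \subseteq Z_0$ and $H_1 \subseteq Z_1$, so $\PartySet = H_0 \cup H_1 \cup Z^\star \subseteq Z_0 \cup Z_1 \cup Z^\star$; but $Z_0, Z_1 \in \Z_s$ and $Z^\star \in \Z_a$, so $\Q^{(2,1)}(\PartySet, \Z_s, \Z_a)$ forces $Z_0 \cup Z_1 \cup Z^\star$ to be a proper subset of $\PartySet$, a contradiction. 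Say $H_b$ is a $\Q^{(1)}$ set, for some $b \in \{v_0, v_1\}$. Every party in $H_b$ sent $(\prepare, b)$ in step~2, these messages are eventually delivered, so every honest party eventually receives $(\prepare, b)$ from the $\Q^{(1)}$ set $H_b$ and echoes $(\prepare, b)$ in step~3; hence every honest party eventually receives $(\prepare, b)$ from all of $\Hon = \PartySet \setminus Z^\star$, which is a legal $\PrepSet^{(b)}$ since $Z^\star \in \Z_a \subseteq \Z_s$, and therefore adds $b$ to $\vals$. In particular every honest party eventually has $\vals \neq \emptyset$ and thus sends a (unique) $\propose$ message in step~5.

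Next I would show that the value $b_k$ proposed by any honest $P_k$ belongs to $B := \{\, b \in \{v_0, v_1\} : \text{every honest party eventually adds } b \text{ to } \vals \,\}$, which is non-empty by the previous paragraph. Indeed, $P_k$ added $b_k$ to $\vals$ after receiving $(\prepare, b_k)$ from $\PartySet \setminus Z_k$ for some $Z_k \in \Z_s$; its honest members, namely $\PartySet \setminus (Z_k \cup Z^\star)$, all sent $(\prepare, b_k)$, and this set is again a $\Q^{(1)}(\cdot, \Z_s)$ set — otherwise it would be contained in some $Z \in \Z_s$, giving $\PartySet = Z_k \cup Z^\star \cup Z$ and contradicting $\Q^{(2,1)}$. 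Repeating the echo-and-accumulate argument of the previous paragraph with this $\Q^{(1)}$ set in place of $H_b$, every honest party eventually receives $(\prepare, b_k)$ from $\Hon$ and adds $b_k$ to $\vals$, i.e.\ $b_k \in B$.

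Finally I would conclude: fix an honest $P_j$. Eventually $B \subseteq \vals$ holds at $P_j$, and eventually $P_j$ has received the $\propose$ message of every honest party, each carrying a value in $B$. At that moment $P_j$ has received $(\propose, \cdot)$ messages on values all lying in its own $\vals$ from $\Hon = \PartySet \setminus Z^\star$, a legal $\PropSet$ (as $Z^\star \in \Z_s$); so the guard of step~6 is satisfied and $P_j$ outputs $\prop$. This settles the lemma; the only places the hypotheses enter are the two $\Q^{(1)}$ dichotomies, which use both the $\Q^{(2,1)}$ condition and the fact that the honest inputs range over a set of size at most two — with three distinct honest inputs the dichotomy, and hence the lemma, would fail.
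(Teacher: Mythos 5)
Your proof is correct, and it actually tightens a step that the paper's own argument glosses over. Both you and the paper open the same way: show that one of the two honest-input classes is a $\Q^{(1)}(\cdot,\Z_s)$ set, conclude (via the echoing rule and the $\Q^{(1)}$-triggered resend) that every honest party's $\vals$ becomes nonempty, hence every honest party eventually sends a $\propose$ message. Your shortcut — deducing the dichotomy directly from $\Q^{(2,1)}(\PartySet,\Z_s,\Z_a)$ via $\PartySet = H_0 \cup H_1 \cup Z^\star$ rather than first passing through $\Q^{(2)}(\Hon,\Z_s)$ — is marginally more direct but mathematically equivalent.

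Where you genuinely improve on the paper is the second half. The paper then asserts that every honest party eventually receives $\propose$ messages from every other honest party ``for $v_b$'' and so has a legal $\PropSet$. That claim is not right as stated: an honest party proposes the \emph{first} value it adds to $\vals$, which need not be $v_b$; moreover the guard of step~6 only counts $\propose$ messages whose payload already lies in the receiver's own $\vals$, so merely knowing that all honest parties propose \emph{something} does not discharge the guard. Your fix — defining $B$ as the set of values every honest party eventually adds to $\vals$, showing $B \neq \emptyset$, and then showing via a second $\Q^{(2,1)}$ application that the honest members of any $\PrepSet^{(b_k)}$ that caused $P_k$ to accumulate $b_k$ form a $\Q^{(1)}(\cdot,\Z_s)$ set, so $b_k \in B$ — is exactly what is needed: once $B \subseteq \vals$ at $P_j$ and $P_j$ has all honest $\propose$ messages in hand (whose payloads all lie in $B$), $\Hon$ is a legal $\PropSet$ and step~6 fires. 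This extra lemma is a genuine addition; the paper's proof should really contain it.
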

\begin{proof}
We first show that every honest party eventually sends a $\propose$ message. 
  Let $\Hon_0$ and $\Hon_1$ be the sets of {\it honest} parties holding inputs $v_0$ and $v_1$ respectively
   and let $\Hon$ be the set of {\it honest} parties.
   We know that due to the $\Q^{(2, 1)}(\PartySet, \Z_s, \Z_a)$ condition, $\Z_s$ satisfies the
   $\Q^{(2)}(\Hon, \Z_s)$ condition. Now, consider the following cases. 
	\begin{myitemize}
	\item[--] {\bf Case 1 - $\Hon_0$ satisfies the $\Q^1(\Hon_0,\Z_s)$ condition}: In this case, $\Hon_0$ is a candidate for the set $S^{(v_0)}$, since all 
	the parties in $\Hon_0$ send $(\prepare,v_0)$ in step $2$.
	\item[--] {\bf Case 2 - $\Hon_0$ does not satisfy the $\Q^1(\Hon_0,\Z_s)$ condition}: In this case, $\Z_s$
	 must satisfy the $\Q^1(\Hon_1,\Z_s)$ condition, since $\Z_s$ satisfies the
   $\Q^{(2)}(\Hon, \Z_s)$ condition. Then similar to what was argued in the previous case, $\Hon_1$ is a candidate for the set $S^{(v_1)}$.
	\end{myitemize}
Let $v_b$ be the input corresponding to the candidate set $S^{(v_b)}$. Every honest party will now eventually send $(\prepare, v_b)$ in step $3$. 
 This would mean that the set of honest parties $\Hon = \PartySet \setminus Z^\star$ is a candidate for the set $\PrepSet^{(v_b)}$.
  Thus, every honest party eventually adds $v_b$ to $\vals$ and sends a $\propose$ message for some value $v_b$.
   This way, every honest party eventually receives $\propose$ messages from every other honest party for $v_b$ and thus, the set $\Hon$ also forms a candidate for the set $\PropSet$. Thus, all 
   honest parties eventually compute an output. 
\end{proof}

\begin{lemma}
\label{prop-sync-validity}
If the network is synchronous and if all honest parties participate with input $v$, then all honest parties output $\prop = \{v\}$ at time $2\Delta$.
\end{lemma}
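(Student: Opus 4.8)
The plan is to fix the set $Z^{\star} \in \Z_s$ of corrupt parties, write $\Hon = \PartySet \setminus Z^{\star}$ for the honest parties, and first record two consequences of the $\Q^{(2)}(\PartySet, \Z_s)$ condition that I will use repeatedly: (i) for every $Z \in \Z_s$ the set $\PartySet \setminus Z$ contains at least one honest party (apply $\Q^{(2)}(\PartySet, \Z_s)$ to $Z$ and $Z^{\star}$); and (ii) every set $S$ satisfying $\Q^{(1)}(S, \Z_s)$ contains at least one honest party, since $S \not\subseteq Z^{\star}$. Everything else is bookkeeping on the $\prepare$/$\propose$ message flow, together with the fact that in a synchronous network every message sent by an honest party is delivered within time $\Delta$.

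The first real step is to show that no honest party ever sends $(\prepare, b)$ for any $b \neq v$. In step~2 each honest party sends only $(\prepare, v)$, so the only other possibility is step~3. I would argue by considering the earliest point in time at which some honest party sends $(\prepare, b)$ with $b \neq v$: this party must have received $(\prepare, b)$ from a set $S^{(b)}$ with $\Q^{(1)}(S^{(b)}, \Z_s)$, which by (ii) contains an honest party, and that honest party must have sent $(\prepare, b)$ strictly earlier, contradicting minimality. Consequently, for $b \neq v$ only parties in $Z^{\star} \in \Z_s$ ever send $(\prepare, b)$, so by (i) no honest party can receive $(\prepare, b)$ from a set of the form $\PartySet \setminus Z$ with $Z \in \Z_s$, and hence no honest party adds $b$ to $\vals$. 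On the other hand, all honest parties send $(\prepare, v)$ at time $0$, so by time $\Delta$ every honest party has received $(\prepare, v)$ from all of $\Hon = \PartySet \setminus Z^{\star}$ and adds $v$ to $\vals$. Thus for every honest party $\vals = \{v\}$, with $v$ the first (and only) value added, so by time $\Delta$ every honest party has sent $(\propose, v)$.

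The last step is to run the $\propose$ round. By time $2\Delta$ every honest party has received $(\propose, v)$ from all parties in $\Hon = \PartySet \setminus Z^{\star}$, so the quorum condition of step~6 is met with $\PropSet = \Hon$. Because honest parties only ever send a $\propose$ message for $v$, and step~6 only collects $\propose$ messages carrying values in $\vals$, which equals $\{v\}$, the output satisfies $\emptyset \neq \prop \subseteq \{v\}$, i.e.\ $\prop = \{v\}$. I would also note that an honest party cannot output a different set at an earlier time: since $\prop \subseteq \vals = \{v\}$ throughout, the only possible non-empty output is $\{v\}$, and $\prop$ becomes non-empty precisely once a valid $\PropSet$-quorum $\PartySet \setminus Z$ of $\propose$ messages arrives, which by (i) contains an honest party whose sole $\propose$ message is for $v$, again forcing the output to be $\{v\}$. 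Hence all honest parties output $\prop = \{v\}$ by time $2\Delta$.

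I expect the only mildly delicate point to be the minimality argument ruling out honest relaying of spurious $\prepare$ values, together with the careful invocation of $\Q^{(2)}(\PartySet, \Z_s)$ (rather than merely $\Q^{(1)}$) to guarantee an honest party inside each $\PartySet \setminus Z$ quorum; the timing bounds are then immediate from the $\Delta$-bounded delivery of the synchronous model.
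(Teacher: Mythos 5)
Your proof is correct and follows the same route as the paper's: show that no honest party ever adds any $b \neq v$ to $\vals$, deduce that all honest parties $\propose$ only $v$ by time $\Delta$, and conclude from the $\Hon = \PartySet \setminus Z^{\star}$ quorum that every honest party outputs $\prop = \{v\}$ by time $2\Delta$. The only difference is that you spell out, via the earliest-sender/minimality argument, why no honest party relays $(\prepare, b)$ with $b \neq v$ in step~3 — a point the paper's proof compresses into the remark that $\Z_s$ does not satisfy the $\Q^{(1)}(Z^{\star}, \Z_s)$ condition — so this is a welcome bit of added care within the same argument rather than a different approach.
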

\begin{proof}
Let $\Hon = \PartySet \setminus Z^\star$ be the set of {\it honest} parties. Then 
 in step $2$, every party in $\Hon$ sends $(\prepare,v)$ to every party, which is delivered within $\Delta$ time. 
 Thus, at time $\Delta$, all the parties in $\Hon$ receive $(\prepare,v)$ from the parties in $\Hon$ and add $v$ to $\vals$. Further, no other value $v^{\star}$ will be 
  added to $\vals$, since only the parties in $Z^{\star}$ may send a $\prepare$ message for $v^{\star}$ and $\Z_s$ 
  {\it does not} satisfy the $\Q^{(1)}(Z^\star, \Z_s)$ condition. 
  Thus, all the parties in $\Hon$ send $(\propose, v)$ in step $5$, which gets delivered to all the parties in $\Hon$ at time $2\Delta$.
   Consequently, all the parties in $\Hon$ output $\prop = \{v\}$.
\end{proof}

\begin{lemma}
\label{prop-communication}
Protocol $\Prop$ incurs a communication of $\Order(n^2)$ bits.
\end{lemma}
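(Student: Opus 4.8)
The plan is to bound the communication by walking through protocol $\Prop$ step by step and observing that every message has constant size and that each honest party sends only $\Order(n)$ messages in total. The only message types in $\Prop$ are $(\prepare, b)$ and $(\propose, b)$ with $b \in \{0, 1, \lambda\}$, each carrying $\Order(1)$ bits; hence it suffices to show that the total number of messages sent by the honest parties is $\Order(n^2)$, from which the claimed $\Order(n^2)$-bit bound follows immediately.

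First I would account for the $\prepare$ messages. In step~2, each party sends a single $(\prepare, v)$ message to all $n$ parties, contributing $\Order(n)$ messages per party. In step~3, a party re-sends $(\prepare, b)$ to all $n$ parties, but the ``if not sent earlier'' guard ensures this happens at most once per value $b$; since $b$ ranges over the constant-size set $\{0, 1, \lambda\}$, this contributes at most $3n = \Order(n)$ messages per party. Summing over all $n$ parties, all $\prepare$ messages together amount to $\Order(n^2)$ messages, i.e.\ $\Order(n^2)$ bits.

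Next I would account for the $\propose$ messages. By step~5, a party sends a $(\propose, b)$ message only upon adding the \emph{first} value to $\vals$, so each party sends at most one $\propose$ message, to all $n$ parties; this contributes $\Order(n)$ messages per party and hence $\Order(n^2)$ messages overall. Steps~1, 4 and~6 are purely local computations and involve no communication. Adding the two bounds gives a total communication of $\Order(n^2)$ bits, as claimed.

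There is no genuine obstacle here: the statement follows from a direct message count. The only point that needs care is the observation that the ``if not sent earlier'' clause in step~3, combined with the fact that $b$ takes only a constant number of values, caps the number of $\prepare$ messages originated by each party at $\Order(n)$; without that clause the re-broadcasting could a priori blow up.
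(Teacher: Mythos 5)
Your proof is correct and takes essentially the same approach as the paper: the paper's one-line argument just observes that each party sends $(\prepare,b)$ to every other party at most once for each $b \in \{0,1,\lambda\}$, which is the same message-counting you carry out in slightly more detail (together with the observation that each party sends at most one $\propose$ message).
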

\begin{proof}
The proof follows from the fact that each party sends $(\prepare,b)$ to every other party at most once for any value of $b \in \{0,1,\lambda\}$.
\end{proof}

\subsubsection{The Graded Agreement Protocol}
We now present protocol $\Vote$ (Figure \ref{fig:GA}) based on protocol $\Prop$. The protocol cleverly ``stitches" together two instances of $\Prop$, by defining the input for the second instance based on the output from the first instance. Each party, with input either $0$ or $1$, participates in the first instance of $\Prop$ with their input. Since the parties participate with one of two inputs, this instance will
  eventually complete (in an {\it asynchronous} network), 
  and the parties obtain an output, say $\prop_1$. Only if $\prop_1$ is a singleton set, say $\{b\}$, for some party, then that party participates in the second instance of $\Prop$ with the input $b$. Otherwise, it participates in the second instance with a {\it default} input of $\lambda$. Since no two honest parties can output different singleton sets for $\prop_1$, this ensures that each honest party participates with an input of either $b$ or
   $\lambda$ (in an {\it asynchronous} network).
    Thus, the second instance of $\Prop$ also eventually completed with an output, say $\prop_2$. This also ensures that $\prop_2$ can contain only values $b$ and $\lambda$. If $\prop_2$ contains {\it only} $b$ for some honest party, then that party outputs $b$ with a grade of $2$. If $\prop_2$ contains $b$ along with $\lambda$, then the party outputs $b$ with a grade of $1$. 
    Else, if $\prop_2$ contains only $\lambda$, then the party outputs $\bot$ with a grade of $0$.  If the network is {\it synchronous} and all honest parties start the protocol with the same input, then
     {\it both} $\prop_1$
    as well as $\prop_2$ will be a {\it singleton} set containing that value and hence all honest parties will output that value with the highest grade.

\begin{protocolsplitbox}{$\Vote(\PartySet, \Z_s, \Z_a)$}{Asynchronous graded agreement with synchronous validity. The above code is executed by every party $P_i$ with input $v \in \{0,1\}$.}{fig:GA}
\justify
\begin{myenumerate}
\item On having the input $v \in \{0, 1 \}$, set $b_1 = v$. Participate in an instance of the protocol $\Prop$ with input $b_1$ and wait for its completion. Let 
 $\prop_1$ be the output computed during the instance of $\prop$.
\item If $\prop_1 = \{b\}$ for some $b \in \{0, 1 \}$, then set $b_2 = b$. Else, set $b_2 = \lambda$. Then
 participate in an instance of the protocol $\Prop$ with input $b_2$ and wait for its completion. Let $\prop_2$ be the output computed from this instance of $\prop$.
\item If $\prop_2 = \{b'\}$ and $b' \neq \lambda$, then output $(b', 2)$. If $\prop_2 = \{b', \lambda\}$ where $b' \in \{0, 1 \}$, then output $(b', 1)$. Else, if $\prop_2 = \{\lambda\}$, then 
output $(\bot, 0)$.
\end{myenumerate}
\end{protocolsplitbox}

We now proceed to prove the properties of the protocol $\Vote$.
\begin{lemma}
\label{lemma:GASyncProperties}
Protocol $\Vote$ achieves the following in a synchronous network, where $\TimeVote = 4\Delta$.
\begin{myitemize}
  \item[--] {\bf (a) $\Z_s$-Liveness}: 
   If all honest parties participate in the protocol with the {\it same} input, then at time $\TimeVote$,
    all honest parties obtain an output.
   \item[--] {\bf (b) $\Z_s$-Graded Validity}: If every honest party's input is $b$, then all honest parties with an output, output $(b, 2)$.   
 \end{myitemize}  
\end{lemma}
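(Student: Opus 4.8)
The plan is to reduce both claims directly to the synchronous validity guarantee of the sub-protocol $\Prop$, namely Lemma \ref{prop-sync-validity}, which states that in a synchronous network, if all honest parties participate in an instance of $\Prop$ with the same input $v$, then all honest parties output $\prop = \{v\}$ exactly $2\Delta$ time after that instance begins. Since $\Vote$ is built by sequentially composing two instances of $\Prop$, applying this guarantee twice will pin down the outputs precisely and yield $\TimeVote = 4\Delta$.

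First I would fix a synchronous network and assume every honest party has the same input $b \in \{0,1\}$. In step $1$ of $\Vote$, each honest party sets $b_1 = b$ and participates in the first instance of $\Prop$ with input $b$. By Lemma \ref{prop-sync-validity}, all honest parties obtain $\prop_1 = \{b\}$ at time $2\Delta$. Consequently, in step $2$, every honest party finds that $\prop_1 = \{b\}$ is a singleton with $b \in \{0,1\}$, so it sets $b_2 = b$ (never the default $\lambda$) and enters the second instance of $\Prop$ with the common input $b$ at time $2\Delta$. Applying Lemma \ref{prop-sync-validity} a second time, all honest parties output $\prop_2 = \{b\}$ at time $2\Delta + 2\Delta = 4\Delta$.

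Finally, since $\prop_2 = \{b'\}$ with $b' = b \neq \lambda$, step $3$ of $\Vote$ makes every honest party output $(b, 2)$, and this happens at time $\TimeVote = 4\Delta$. This simultaneously establishes $\Z_s$-Liveness (every honest party has an output at time $\TimeVote$) and $\Z_s$-Graded Validity (the output is $(b,2)$). I do not anticipate any real obstacle here: the only subtlety is bookkeeping the timing so that the second invocation of Lemma \ref{prop-sync-validity} is applied with the correct start time ($2\Delta$, when all honest parties have finished the first instance and synchronously begin the second), and observing that the singleton output of the first instance forces $b_2 = b$ for every honest party, which is exactly what makes the second instance's validity applicable.
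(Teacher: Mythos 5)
Your proof is correct and follows the same route as the paper: both apply Lemma \ref{prop-sync-validity} to the first instance of $\Prop$ to get $\prop_1 = \{b\}$ at time $2\Delta$, conclude that every honest party enters the second instance with $b_2 = b$, and apply Lemma \ref{prop-sync-validity} again to get $\prop_2 = \{b\}$ and hence output $(b,2)$ at time $4\Delta$. Your explicit note that the singleton output forces $b_2 = b$ (rather than $\lambda$) is a small but useful clarification beyond the paper's terse version.
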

\begin{proof}
If all honest parties participate in the protocol $\Vote$ with the same input $b$, then
 from Lemma \ref{prop-sync-validity}, all honest parties output $\prop_1 = \{b\}$ at time $2\Delta$.
   Thus, all honest parties participate with input $b_2 = b$ in the second instance of $\Vote$ and,
    once again from Lemma \ref{prop-sync-validity}, output $\prop_2 = \{b\}$ at time $4 \Delta$. Thus, all honest parties output $(b, 2)$.
\end{proof}
\begin{lemma}
\label{lemma:GAAsyncProperties}
Protocol $\Vote$ achieves the following in an asynchronous network. 
  \begin{myitemize}
     \item[--] {\bf (a) $\Z_a$-Liveness}:  If all honest parties participate in the protocol with a binary input, then each honest party eventually obtains an output. 
      \item[--] {\bf (b) $\Z_a$-Graded Validity}:  If every honest party's input is $b$, then all honest parties with an output, output $(b, 2)$.
       \item[--] {\bf (c) $\Z_a$-Graded Consistency}:  If two honest parties output grades $g, g'$, then $|g - g'| \leq 1$ holds; moreover,  
      if two honest parties output $(v, g)$ and $(v', g')$ with $g, g' \geq 1$, then $v = v'$.
  \end{myitemize}   
\end{lemma}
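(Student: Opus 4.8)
The plan is to obtain all three asynchronous guarantees of $\Vote$ by reducing them to the already-established asynchronous properties of the sub-protocol $\Prop$, namely the liveness of Lemma \ref{prop-async-liveness}, the validity of Lemma \ref{prop-async-validity}, and the singleton-consistency of Lemma \ref{prop-async-consistency}. The one thing that must be tracked carefully throughout is how the input $b_2$ that an honest party feeds into the \emph{second} instance of $\Prop$ is constrained by the output $\prop_1$ of the \emph{first} instance; once that is pinned down, each property is a short bookkeeping argument about which values can survive into $\prop_2$.

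For $\Z_a$-Liveness, I would first note that every honest party enters the first instance of $\Prop$ with one of (at most) two values from $\{0,1\}$, so Lemma \ref{prop-async-liveness} guarantees that every honest party eventually obtains $\prop_1$. By Lemma \ref{prop-async-consistency}, no two honest parties can obtain distinct singleton sets for $\prop_1$, so there is a well-defined $b^{\star}\in\{0,1\}$ such that every honest party whose $\prop_1$ is a singleton obtains exactly $\{b^{\star}\}$. Hence the value $b_2$ each honest party chooses for the second instance of $\Prop$ lies in $\{b^{\star},\lambda\}$ --- again at most two distinct values --- so Lemma \ref{prop-async-liveness} applies a second time and every honest party eventually obtains $\prop_2$, and therefore an output of $\Vote$.

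For $\Z_a$-Graded Validity, suppose every honest party's input is $b$. In the first instance of $\Prop$ the only honest input is $b$ (note $\lambda$ is never an honest input to this instance), so Lemma \ref{prop-async-validity} forces $\prop_1 \subseteq \{b\}$; since $\prop_1$ is non-empty (the quorum $\PropSet$ contains an honest party, whose $\propose$-value, by validity, is $b$), we get $\prop_1 = \{b\}$ for every honest party. Thus every honest party enters the second $\Prop$ with input $b$, and the same argument gives $\prop_2 = \{b\}$, so every honest party with an output outputs $(b,2)$. For $\Z_a$-Graded Consistency, I would split on whether any honest party obtains a singleton $\prop_1$: if none does, every honest party feeds $\lambda$ into the second $\Prop$, so by Lemma \ref{prop-async-validity} $\prop_2 = \{\lambda\}$ and all honest outputs are $(\bot,0)$, trivially consistent. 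Otherwise let $b^{\star}$ be the common singleton value of the first instance; then every honest input to the second $\Prop$ is in $\{b^{\star},\lambda\}$, so by Lemma \ref{prop-async-validity} each honest $\prop_2$ is one of $\{b^{\star}\}$, $\{b^{\star},\lambda\}$, $\{\lambda\}$, with grades $2,1,0$ respectively. Since $\{b^{\star}\}$ and $\{\lambda\}$ are distinct singletons, Lemma \ref{prop-async-consistency} forbids two honest parties from obtaining them simultaneously, so grades $0$ and $2$ cannot coexist among honest parties, giving $|g-g'|\le 1$; and any honest party with grade $\ge 1$ outputs the binary value in its $\prop_2$, which by Lemma \ref{prop-async-validity} must be $b^{\star}$, so two honest parties with grades $\ge 1$ output the same value.

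I expect the only mildly delicate point to be establishing non-emptiness of the $\Prop$ outputs in the validity step (so that $\prop_1 = \{b\}$ rather than merely $\prop_1 \subseteq \{b\}$), which rests on the quorum $\PropSet = \PartySet\setminus Z$ with $Z\in\Z_s$ necessarily containing an honest party under the $\Q$-conditions on $\Z_s$ and $\Z_a$; the rest is a routine case analysis over the three possible forms of $\prop_2$.
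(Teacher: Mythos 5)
Your proof is correct and follows essentially the same approach as the paper: you reduce each asynchronous property of $\Vote$ to Lemmas \ref{prop-async-liveness}, \ref{prop-async-validity}, and \ref{prop-async-consistency} for $\Prop$, tracking how the singleton-consistency of $\prop_1$ confines every honest $b_2$ to a two-element set $\{b^{\star},\lambda\}$, and then enumerating the three possible shapes of $\prop_2$. The only difference is presentational — you are somewhat more explicit about non-emptiness of the $\Prop$ output (needed to turn $\prop_1 \subseteq \{b\}$ into $\prop_1 = \{b\}$) and you spell out the degenerate case where no honest party gets a singleton $\prop_1$, which the paper folds into a one-line "there exists a bit $b$" remark; neither changes the substance.
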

 \begin{proof}
 Since each honest party participates with a binary input, 
  from Lemma \ref{prop-async-liveness}, each party eventually outputs some value for $\prop_1$ during the first instance of $\prop$. Now there are two possible cases.
	\begin{myitemize}
	\item[--] {\bf Case 1 - Some honest party outputs $\{b\}$ as its value for $\prop_1$ where $b \in \{0, 1 \}$}: 
	From Lemma \ref{prop-async-consistency}, no honest party can output $\{b'\}$, where $b \neq b'$, as $\prop_1$. 
	Thus, each honest party participates with input either $b$ or $\lambda$ for the second instance of $\prop$. 
	\item[--] {\bf Case 2 - No honest party outputs $\{b\}$ as its value for $\prop_1$ for any $b \in \{ 0, 1\}$}: In this case, all honest parties participate with input $\lambda$ in the second instance of $\prop$.
	\end{myitemize}
In either case, the honest parties participate in the second instance of $\Prop$ with no more than {\it two} different inputs. 
Thus, from Lemma \ref{prop-async-liveness}, all parties eventually compute some value for $\prop_2$ during the second instance of $\prop$ and hence compute some 
 output for protocol $\Vote$. This proves the $\Z_a$-Liveness.
 
 We next prove the $\Z_a$-Graded Consistency. We first show that the grades output by any two parties differ by at most $1$. For this, 
  suppose that some {\it honest} party $P_i$ outputs $(b, 2)$. We show that no other honest party $P_j$ can output $(\bot, 0)$. 
   Since $P_i$ output $(b, 2)$, from Lemma \ref{prop-async-consistency}, $P_j$ cannot output $\prop_2 = \{\lambda\}$. Thus, $P_j$ cannot output $(\bot, 0)$.
 Next, we show that any two {\it honest} parties which output {\it non-zero} grades must output the same value. Similar to what was argued for the proof of $\Z_a$-Liveness,
  there exists a bit $b$ such that each honest party participates in $\Prop$ with input $b$ or $\lambda$ during step $2$. 
  Thus, $\prop_2 \subseteq \{b,\lambda\}$ for every honest party. This means that any honest party which outputs a non-zero grade must output it along with the bit $b$.
  
  We finally prove the $\Z_a$-Graded Validity. Suppose that each honest party participates with the same input bit $b$. 
   From Lemma \ref{prop-async-liveness}, we know that all honest parties output some value for $\prop_1$. From Lemma \ref{prop-async-validity}, all honest parties must output $\prop_1 = \{b\}$. 
    Hence, all honest parties participate in $\Prop$ in step $2$ with input $b$. By the same argument, all honest parties output $\prop_2 = \{b\}$. Hence, all honest parties output $(b,2)$.
 \end{proof}
\begin{lemma}
\label{GA-communication}
Protocol $\Vote$ incurs a communication of $\Order(n^2)$ bits.
\end{lemma}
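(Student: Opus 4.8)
\textbf{Proof proposal for Lemma \ref{GA-communication}.}

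The plan is to simply trace the message complexity through the structure of protocol $\Vote$ (Fig \ref{fig:GA}). The protocol consists of exactly two sequential invocations of the sub-protocol $\Prop$: in step $1$ each party runs an instance of $\Prop$ with its binary input, and in step $2$ each party runs a second instance of $\Prop$ with an input ($b_2 \in \{0,1,\lambda\}$) that is computed \emph{locally} from the output $\prop_1$ of the first instance. Step $3$ consists only of a local case-analysis on $\prop_2$ to determine the output pair, and hence involves no communication at all.

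First I would invoke Lemma \ref{prop-communication}, which states that a single execution of $\Prop$ incurs a communication of $\Order(n^2)$ bits (this follows because in $\Prop$ each party sends each of the messages $(\prepare, b)$, $(\propose, b)$ to every other party at most once per value $b \in \{0,1,\lambda\}$, and there are only a constant number of such values). Since $\Vote$ runs two such instances, the total communication is $2 \cdot \Order(n^2) = \Order(n^2)$ bits. Adding the (zero) communication cost of the local steps $2$ and $3$ does not change the asymptotics.

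There is no real obstacle here: the argument is a routine accounting step that relies entirely on the already-established cost of $\Prop$ and on the observation that the ``stitching'' logic between the two $\Prop$ instances is purely local. The only point worth stating explicitly is that the choice of $b_2$ in step $2$ requires no extra messages — each party already holds $\prop_1$ as its own output of the first $\Prop$ instance — so no communication is hidden in the transition between the two phases.
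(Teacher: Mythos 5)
Your proposal is correct and matches the paper's own argument, which likewise just invokes Lemma \ref{prop-communication} together with the observation that $\Prop$ is run twice inside $\Vote$. The extra remarks about the stitching between the two $\Prop$ calls being purely local are accurate but not needed.
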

\begin{proof}
The proof follows from Lemma \ref{prop-communication}, since $\Prop$ is invoked twice in the protocol.
\end{proof}
\subsection{Properties of the Protocol $\ABA$}
In this section, we prove the properties of the protocol $\ABA$ (see Fig \ref{fig:ABA} for the protocol steps). We start with the properties in the {\it asynchronous} network first, which mostly follows from \cite{Cho23}
 and are recalled from \cite{Cho23}.
 We start with the validity property.
 \begin{lemma}
   \label{lemma:ABAAsyncValidity}
    In protocol $\ABA$, if the network is asynchronous and all honest parties have the same input bit $b$, then
    all honest parties eventually output $b$. 
\end{lemma}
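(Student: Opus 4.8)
The plan is to trace the propagation of a common input bit $b$ through a single iteration of $\ABA$ and show that the honest parties never change their value away from $b$, so they all eventually send $\sign{(\ready, b)}_i$ and then output $b$. First I would observe that at the start of iteration $k=1$, every honest party $P_i$ has $b_i = b$, hence sets its local variable $b = b$ and participates in the first instance of $\Vote$ with input $b$. By the $\Z_a$-Graded Validity of $\Vote$ (Lemma~\ref{lemma:GAAsyncProperties}(b)), every honest party that obtains an output from this instance outputs $(b, 2)$; by $\Z_a$-Liveness every honest party eventually does obtain such an output. Since the grade $g = 2$, the condition $g < 2$ fails, so no honest party overwrites $b$ with $\Coin_k$ — crucially, the value of the coin is irrelevant here, which is exactly what gives validity independently of the coin-flip behaviour.

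Next I would carry the argument through the second $\Vote$ instance of the iteration: every honest party participates with input $b$ again, so by $\Z_a$-Graded Validity it outputs $(b', g')$ with $b' = b$ and $g' = 2$. Hence $g' > 0$, so the honest party (re)sets $b = b' = b$, and since $g' = 2$ and $\committed = \false$, it sets $\committed = \true$ and sends $\sign{(\ready, b)}_i$ to all parties. The same reasoning applies to \emph{every} honest party, so eventually every honest party sends a signed $\ready$ message for $b$. Because $\Z_a$ satisfies $\Q^{(2)}(\PartySet, \Z_a)$ — indeed it satisfies the stronger $\Q^{(3)}$ condition — the set of honest parties is not contained in any $Z \in \Z_s$ (using $\Z_a \subset \Z_s$ and the $\Q^{(2,1)}$ condition, or more directly the fact that honest parties form a set whose complement lies in $\Z_a \subseteq \Z_s$), so every honest party eventually receives $\sign{(\ready, b)}_j$ messages from all parties in $\PartySet \setminus Z$ for some $Z \in \Z_s$ (take $Z$ to be the actual corrupt set). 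At that point each honest party sends $(b, \cert{b})$ and outputs $b$. I should also note that the adversary cannot cause an honest party to output $\bar{b}$: that would require a valid $\cert{\bar b}$ with signatures from all of $\PartySet \setminus Z'$ for some $Z' \in \Z_s$, which would need a signed $\ready$ message for $\bar b$ from some honest party (since $\Q^{(2)}(\PartySet,\Z_s)$ forces $\PartySet \setminus Z'$ to contain an honest party), and no honest party ever signs $\ready$ for $\bar b$ given unforgeability of pseudo-signatures.

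The main obstacle is handling the \emph{liveness} part cleanly in the asynchronous setting: I must make sure that the honest parties actually reach the point of sending their $\ready$ messages, i.e.\ that each $\Vote$ instance they invoke terminates. This follows from the $\Z_a$-Liveness of $\Vote$, but it requires that all honest parties in fact participate in each $\Vote$ instance with a binary input — which holds here precisely because with a common starting bit the value $b$ passed to every $\Vote$ call is always in $\{0,1\}$ and the coin-flip step (whose liveness is only \emph{almost-sure}) is bypassed for the value but still needs to terminate so that the party proceeds; $\Z_a$-Almost-Surely Liveness of $\CoinFlip$ handles this, and since the outcome of the coin does not affect the value, no probabilistic loss of validity occurs. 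Thus the proof is deterministic in the value and uses almost-sure termination only for progress through the coin-flip step, matching the ``all honest parties eventually output $b$'' phrasing of the lemma. I would then conclude by noting this establishes both $\Z_a$-validity and, together with $\Z_a$-almost-surely liveness, that $\ABA$ outputs the common input $b$.
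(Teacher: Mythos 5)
Your proposal is correct and follows essentially the same route as the paper's proof: common input $b$ $\Rightarrow$ both $\Vote$ instances of iteration $1$ output $(b,2)$ by $\Z_a$-Graded Validity (so the coin-flip is ignored), every honest party sends a signed $\ready$ for $b$, and no honest party ever signs $\ready$ for $\bar b$, so the quorum check forces the output to be $b$. Your writeup is a more detailed expansion of the paper's compressed argument, making explicit the safety step (that a valid $\cert{\bar b}$ would require an honest signature, which cannot exist by unforgeability and the $\Q^{(2)}$ condition together with $\Z_a \subset \Z_s$) and the progress step (that $\Z_a$-Almost-Surely Liveness of $\CoinFlip$ is still needed so parties actually reach the second $\Vote$, even though the coin value is discarded).
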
 
\begin{proof}
   Let $Z^{\star} \in \Z_a$ be the set of {\it corrupt} parties.
    If every honest party has the same input bit $b$,
    then from the {\it $\Z_a$-Graded Validity} of $\Vote$ in the {\it asynchronous} network (Lemma \ref{lemma:GAAsyncProperties}),
     all honest parties eventually output $(b, 2)$ at the end of the first as well as the second instance of the $\Vote$ protocol
   during the first iteration. Consequently, every {\it honest} party eventually sends a signed $(\ready, b)$ message to all the parties and only the parties
    in $Z^{\star}$ may send a signed $(\ready, \overline{b})$ message. It now follows easily from the steps of the output computation stage that {\it no} honest party ever
    sends a signed $(\ready, \overline{b})$ message and all hence honest parties eventually output $b$.
\end{proof}
We next prove the {\it consistency} property. 
\begin{lemma}
 \label{lemma:ABAAsyncConsistency}
    In protocol $\ABA$, if the network is asynchronous and if any honest party outputs $b$, then
    every other honest party eventually outputs $b$.
\end{lemma}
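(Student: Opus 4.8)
The plan is to reduce the statement to the non-coexistence of two conflicting \ready-certificates. First I would observe that whenever an honest party $P_i$ outputs a bit $b$ in $\ABA$, it necessarily holds, and has forwarded to all parties, a valid \emph{certificate} $\cert{b}$: a collection of messages $\sign{(\ready, b)}_j$ with valid signatures, one for every $P_j \in \PartySet \setminus Z$ for some $Z \in \Z_s$. This is immediate from the two sub-cases of the output-computation stage. If $P_i$ assembled such signed messages directly, then by definition $\cert{b}$ is exactly that collection and the code makes $P_i$ send $(b, \cert{b})$ to all parties before outputting $b$. If instead $P_i$ received $(b, \cert{b})$ with $\cert{b}$ containing valid signatures from the parties in some $\PartySet \setminus Z$, $Z \in \Z_s$, then $P_i$ already holds the certificate and again re-sends $(b, \cert{b})$ to all parties.

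The key claim is that certificates for $b$ and $\overline{b} = 1 - b$ cannot simultaneously exist. Suppose towards a contradiction that some valid $\cert{b}$ and $\cert{\overline{b}}$ both exist, carrying signatures from the parties in $\PartySet \setminus Z_1$ and in $\PartySet \setminus Z_2$ respectively, for some $Z_1, Z_2 \in \Z_s$. Let $Z^{\star} \in \Z_a$ be the set of corrupt parties. Since $\Z_s$ and $\Z_a$ satisfy the $\Q^{(2,1)}(\PartySet, \Z_s, \Z_a)$ condition, $Z_1 \cup Z_2 \cup Z^{\star} \subsetneq \PartySet$, hence there is an \emph{honest} party $P_h \in (\PartySet \setminus Z_1) \cap (\PartySet \setminus Z_2) \cap (\PartySet \setminus Z^{\star})$. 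By the (idealized) unforgeability of pseudo-signatures, the presence of a valid $\sign{(\ready, b)}_h$ in $\cert{b}$ and of a valid $\sign{(\ready, \overline{b})}_h$ in $\cert{\overline{b}}$ means that $P_h$ itself sent both a $(\ready, b)$ and a $(\ready, \overline{b})$ message. But an honest party sends a $\ready$ message at most once across all iterations of $\ABA$: it does so only when $\committed = \false$, setting $\committed = \true$ at that instant and never sending another. This is a contradiction, so at most one of $\cert{b}, \cert{\overline{b}}$ can ever exist.

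Combining the two observations yields the lemma. If an honest party $P_i$ outputs $b$, then a valid $\cert{b}$ exists, so no valid $\cert{\overline{b}}$ exists, and therefore no honest party can ever output $\overline{b}$ (doing so would require holding a valid $\cert{\overline{b}}$). For the liveness part, recall from the first paragraph that $P_i$ has sent $(b, \cert{b})$ to every party; since the network is asynchronous, this message is eventually delivered to every honest $P_j$, which then satisfies the triggering condition of the forwarded-certificate sub-case and outputs $b$ — unless $P_j$ had already produced an output, which by the previous sentence was necessarily $b$. Either way every honest party eventually outputs $b$. I expect the only mildly delicate point to be making the certificate-forwarding explicit, so that the ``eventually'' conclusion does not depend on $P_j$ re-deriving a certificate on its own; the rest is a routine consequence of quorum intersection via $\Q^{(2,1)}$ together with the one-$\ready$-per-honest-party invariant.
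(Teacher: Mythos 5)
Your proof is correct, but it takes a genuinely different route from the paper's. The paper establishes the stronger invariant that \emph{no honest party ever sends a conflicting $\ready$ message}, deriving this from the $\Z_a$-graded-consistency of $\Vote$ together with an argument that once the first honest $\ready$ for $b$ is sent, all honest parties converge to inputting $b$ into every subsequent invocation of $\Vote$ (overriding the coin). This is a ``semantic'' argument tied to the internal dynamics of the protocol, iteration by iteration. You instead sidestep all of that and run a classical quorum-intersection argument at the level of certificates: if $\cert{b}$ and $\cert{\overline{b}}$ both existed, then since $\Z_s$ and $\Z_a$ satisfy $\Q^{(2,1)}$, the two quorums $\PartySet\setminus Z_1$ and $\PartySet\setminus Z_2$ and the honest set $\PartySet\setminus Z^{\star}$ all intersect, so some honest party's signature appears in both certificates — contradicting the one-$\ready$-per-honest-party invariant enforced by the $\committed$ flag. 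Your approach is shorter, needs nothing about $\Vote$ beyond the fact that $\ready$ is sent at most once, and is the standard BFT-style argument; the paper's approach is heavier but establishes a strictly stronger protocol invariant (the nonexistence of any conflicting honest $\ready$, not just of a conflicting certificate), which makes certain later termination arguments simpler. Both are sound. One small presentational note: you should explicitly name the ``at most one $\ready$'' invariant as a consequence of $\committed$ being set once and never reset; you state it, but making it a numbered observation would tighten the exposition since the entire contradiction hinges on it.
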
 
\begin{proof}
   Let $P_i$ be the {\it first} {\it honest} party who sends a signed $\ready$ message for some bit $b \in \{0, 1 \}$, during some iteration, say iteration $r$.
   We show that {\it no} honest party ever sends a signed $(\ready, \overline{b})$ message during iteration $r$ or in the subsequent iterations. 
        Since $P_i$ has sent a signed $\ready$ message for $b$, it implies that $P_i$ outputs $(b, 2)$ in the second instance of the $\Vote$ protocol during iteration $r$ and sets $\committed$ to $\true$.
     Then, from the {\it $\Z_a$-Graded Consistency} of 
     $\Vote$ in the {\it asynchronous} network (Lemma \ref{lemma:GAAsyncProperties}),
      every other honest party
    outputs either $(b, 2)$ or $(b, 1)$ in the second instance of the $\Vote$ protocol during iteration $r$. Consequently, no other 
    honest party sends the signed $(\ready, \overline{b})$ message during iteration $r$. Also, from the protocol steps, all honest parties 
    update their input
    to $b$ for the next iteration. This further implies that all honest parties will continue to input $b$
     to each subsequent invocation of $\Vote$, ignoring the output of $\CoinFlip$, for as long as they continue running. 
     Consequently, no honest party ever sends a signed $(\ready, \overline{b})$ message.
    
    Now let some {\it honest} party, say $P_h$, computes the output $b$ during iteration $k$. This
    implies that $P_h$ receives the signed $(\ready, b)$ message from a set of parties, say ${\cal T}$, such that
    $\PartySet \setminus {\cal T} \in \Z_s$. The set ${\cal T}$ is bound to have at least one {\it honest} party, due to the $\Q^{(2, 1)}(\PartySet, \Z_s, \Z_a)$ condition, implying that
    at least one honest party has sent a signed $(\ready, b)$ message, either during the iteration $k$ or some previous iteration.
    From the protocol steps, $P_h$ sends $\C(b)$, the set of signed $(\ready, b)$ messages of the parties in ${\cal T}$ to all other parties, which get eventually delivered. 
     Moreover, as shown above,
    no honest party will ever send a signed $(\ready, \overline{b})$ message.     
        Consequently, every honest party
        eventually receives sufficiently many numbers of signed $(\ready, b)$ messages 
        and outputs $b$. 
       \end{proof}

We next prove that at the end of each iteration, the updated value of all honest parties will be the same with the probability 
 at least $\frac{1}{2n}$.
\begin{lemma}
	\label{lemma:ABACorrectness}
	In protocol $\ABA$, if the network is asynchronous and if 
         all honest parties participate during iteration $k$, then with probability at least $\frac{1}{2n}$, all honest parties have the same updated bit
		$b$ at the end of iteration $k$.
\end{lemma}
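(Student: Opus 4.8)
The plan is to run the classical Rabin/Ben-Or--style argument (as used in \cite{Rab83,Ben83,CR93,FM97,BKL19}), now instantiated with the $\Z_a$-graded guarantees of $\Vote$ (Lemma \ref{lemma:GAAsyncProperties}) and the $(\Z_a, p)$-Commonness of $\CoinFlip$ with $p = \frac{1}{n}$. First I would observe that, since all honest parties participate in iteration $k$, the $\Z_a$-Liveness of $\Vote$ and the almost-sure liveness of $\CoinFlip$ guarantee that every honest party eventually completes both $\Vote$ instances and the $\CoinFlip$ instance of iteration $k$; hence the ``updated bit at the end of iteration $k$'' is well defined for each honest party, and equals $b'$ whenever the party received $(b', g')$ with $g' > 0$ from the second $\Vote$ instance, and otherwise equals the value it fed into that instance (which is either its first-$\Vote$ output, if that had grade $2$, or $\Coin_k$).

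I would then split on the outputs of the \emph{first} $\Vote$ instance of iteration $k$. \textbf{Case 1: no honest party outputs grade $2$.} Then every honest party has $g < 2$ and therefore sets $b = \Coin_k$ before entering the second $\Vote$ instance. By $(\Z_a, p)$-Commonness, with probability at least $p = \frac{1}{n}$ all honest parties hold the \emph{same} value $\Coin_k$ as input to the second instance; conditioned on this, $\Z_a$-Graded Validity of $\Vote$ forces every honest party to output $(\Coin_k, 2)$ and hence to set its updated bit to $\Coin_k$. So all honest parties agree, with probability at least $\frac{1}{n} \geq \frac{1}{2n}$.

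\textbf{Case 2: some honest party outputs $(v^\star, 2)$.} By $\Z_a$-Graded Consistency, every other honest party outputs a grade differing from $2$ by at most $1$, i.e.\ a grade in $\{1, 2\}$, and (the $v = v'$ clause of graded consistency) every honest party with positive grade outputs the value $v^\star$. Thus each honest party either keeps $b = v^\star$ (grade $2$) or sets $b = \Coin_k$ (grade $1$). The event ``$\Coin_k$ is common \emph{and} equals $v^\star$'' has probability at least $p \cdot \frac{1}{2} = \frac{1}{2n}$: with probability $p$ the output of all honest parties is the same random bit, and, being a uniformly random bit independent of the already-determined $v^\star$, it equals $v^\star$ with probability $\frac{1}{2}$. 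Conditioned on this event, every honest party enters the second $\Vote$ instance with input $v^\star$, so $\Z_a$-Graded Validity yields output $(v^\star, 2)$ for all of them, and all set their updated bit to $v^\star$. So all honest parties agree, with probability at least $\frac{1}{2n}$.

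Since the two cases are exhaustive and mutually exclusive, in either case all honest parties hold a common updated bit at the end of iteration $k$ with probability at least $\frac{1}{2n}$, which is exactly the claim. The only part that needs care is the bookkeeping of which honest parties adopt $\Coin_k$ and which retain their first-$\Vote$ value — governed by the ``if $g < 2$'' threshold — and the use of graded consistency in Case 2 to rule out an honest party with grade $0$ (so that all honest grades lie in $\{1, 2\}$ and all positive-grade outputs coincide with $v^\star$); I do not anticipate a deeper obstacle.
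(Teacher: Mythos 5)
Your proof is correct and takes essentially the same route as the paper's own argument: you split on whether any honest party receives grade $2$ from the first $\Vote$ instance, use $(\Z_a, p)$-Commonness (with $p = \tfrac{1}{n}$) together with $\Z_a$-Graded Validity in the grade-$<2$ case, and use $\Z_a$-Graded Consistency plus an extra factor of $\tfrac{1}{2}$ for the coin matching $v^\star$ in the grade-$2$ case. The paper phrases the target event slightly differently (it names the event $\Agree$ that all honest parties feed the same input into the second $\Vote$, and derives the conclusion via graded validity), and it is a touch more explicit about \emph{why} $\Coin_k$ is independent of $v^\star$ — namely, the protocol withholds the coin output until after the first $\Vote$ has produced its outputs, so the adversary cannot bias the first $\Vote$ based on the coin; your ``already-determined $v^\star$'' gestures at the same ordering argument, but you would want to spell it out to make the independence claim airtight.
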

\begin{proof}
 To prove the lemma statement, we consider an event $\Agree$, which denotes that all honest parties have the same input for the second instance
   of $\Vote$ during iteration $k$. If the event $\Agree$ occurs, then from the {\it $\Z_a$-Graded Validity} of $\Vote$ in the {\it asynchronous} 
    network (Lemma \ref{lemma:GAAsyncProperties}), 
    all honest parties will have the same
   updated bit at the end of iteration $k$. We show that the event $\Agree$ occurs during iteration $k$ with a probability of at least $\frac{1}{2n}$.
    For this, we consider two different possible cases with respect to the output from the first instance of $\Vote$ during iteration $k$.
   \begin{myitemize}
   \item[--] {\bf Case I: No honest party obtains an output $(b, 2)$ for any $b \in \{0, 1 \}$ during the first instance of $\Vote$.}
   In this case, all honest parties set the output from the instance of $\CoinFlip$ during iteration $k$ as the input for the second instance of $\Vote$.
   From the {\it $(\Z_a, p)$-commonness} of $\CoinFlip$ in {\it asynchronous} network \cite{Cho23}, 
    all honest parties will have the same output bit $\Coin_k$ from the instance of $\CoinFlip$ with a probability of at least
   $p = \frac{1}{n} > \frac{1}{2n}$.   
   \item[--] {\bf Case II: Some honest party obtains an output $(b, 2)$ during the first instance of $\Vote$.}
   In this case, the {\it $\Z_a$-Graded Consistency} of $\Vote$ in the {\it asynchronous} 
    network (Lemma \ref{lemma:GAAsyncProperties})   
    ensure that all honest parties obtain the output
   $(b, 2)$ or $(b, 1)$ from the first instance of $\Vote$. Moreover, from the protocol steps, the output of the
    instance of $\CoinFlip$ during iteration $k$ is {\it not} revealed, until the first honest party generates an output from the first instance of $\Vote$ during iteration
    $k$. Consequently, the output bit $b$ from the first instance of $\Vote$ is {\it independent} of the output of $\CoinFlip$. 
    From the {\it $(\Z_a, p)$-commonness} of $\CoinFlip$ in {\it asynchronous} network \cite{Cho23}, 
    all honest parties will have the same output bit $\Coin_k$ from the instance of $\CoinFlip$ with a probability of at least
   $p = \frac{1}{n}$.  Then the probability that $\Coin_k = b$ holds is at least $\frac{1}{2} \cdot \frac{1}{n} = \frac{1}{2n}$ and with this probability,
    all honest
     parties will have the same input for the second instance of $\Vote$.     
   \end{myitemize}   
\end{proof}
We next derive the expected number of iterations required in the protocol $\ABA$ for the honest parties to produce an output. 
 This automatically gives the expected running time in an asynchronous network, since each iteration takes a constant time.
\begin{lemma}
	\label{lemma:ConditionalABATermination}
	If the network is asynchronous, then 
	in protocol $\ABA$, it requires expected $\Order(n^2)$
	iterations for the honest parties to compute an output.
\end{lemma}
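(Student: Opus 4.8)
The plan is to establish the expected iteration bound by combining Lemma~\ref{lemma:ABACorrectness} with Lemma~\ref{lemma:ABAAsyncValidity} and Lemma~\ref{lemma:ABAAsyncConsistency}. First I would observe that once all honest parties begin an iteration with a \emph{common} updated bit $b$, the $\Z_a$-Graded Validity of $\Vote$ guarantees that both instances of $\Vote$ in that iteration output $(b,2)$ for every honest party, so every honest party sends a signed $(\ready, b)$ message; by the output-computation steps (and the fact that $\Z_s$ satisfies $\Q^{(2,1)}(\PartySet,\Z_s,\Z_a)$, so a set $\PartySet\setminus Z$ with $Z\in\Z_s$ always contains an honest party), every honest party then outputs within that iteration. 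Hence it suffices to bound the expected number of iterations until the honest parties' updated bits first agree.

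Next I would set up the probabilistic argument. Fix any iteration $k$ such that all honest parties reach it without having yet produced an output; in particular their updated bits entering iteration $k$ need not agree. By Lemma~\ref{lemma:ABACorrectness}, conditioned on reaching iteration $k$, with probability at least $\frac{1}{2n}$ all honest parties hold the same updated bit at the \emph{end} of iteration $k$, i.e.\ entering iteration $k+1$. Moreover, once agreement is reached it is preserved in all subsequent iterations: the argument in the proof of Lemma~\ref{lemma:ABAAsyncConsistency} shows that if honest parties agree on $b$ at the start of an iteration, they continue to input $b$ to every subsequent $\Vote$ instance, ignoring the coin. Therefore the event ``honest parties agree entering iteration $k+1$'' stochastically dominates a sequence of independent Bernoulli trials each with success probability $\ge \frac{1}{2n}$, and the number of iterations $K$ until the first success is stochastically dominated by a geometric random variable with parameter $\frac{1}{2n}$, giving $\mathbb{E}[K] \le 2n$. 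Once agreement is reached, by the first paragraph the honest parties output within one further iteration (and the subsequent output-propagation via forwarding $\cert{b}$ costs only a constant additional time, not extra iterations for agreement). Accounting for the fact that $\Coin_k$ being common only happens with probability $p=\frac1n$ and then equals $b$ with probability $\frac12$, the bound is $\Order(n)$ expected iterations for agreement and hence $\Order(n)$ overall — but since the lemma claims $\Order(n^2)$, which is weaker, the bound follows a fortiori. (The looser $\Order(n^2)$ statement presumably absorbs constants and the possibility that the coin protocol $\CoinFlip$ itself, being $\frac1n$-common, may internally require up to $\Order(n)$ of its own sub-rounds per invocation; in any case $\Order(n) \subseteq \Order(n^2)$.)

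The only subtlety — and the step I expect to need the most care — is justifying the \emph{independence} across iterations needed to pass from the per-iteration success probability $\frac{1}{2n}$ to a geometric-tail bound on $K$. The per-iteration guarantee of Lemma~\ref{lemma:ABACorrectness} is a \emph{conditional} statement (conditioned on reaching the iteration), and the adversary is adaptive, so one must argue that the fresh randomness of the coin-flip instance in iteration $k$ is independent of the entire history up to the point the first honest party produces output from the first $\Vote$ of iteration $k$ — exactly the ``coin not revealed early'' property invoked inside the proof of Lemma~\ref{lemma:ABACorrectness} and guaranteed by the $(\Z_a,p)$-commonness of $\CoinFlip$. Formally I would phrase this as: letting $A_k$ be the event ``honest parties agree entering iteration $k+1$'', we have $\Pr[A_k \mid \neg A_1,\dots,\neg A_{k-1}] \ge \frac{1}{2n}$, whence $\Pr[\neg A_1 \wedge \cdots \wedge \neg A_k] \le \bigl(1 - \frac{1}{2n}\bigr)^k$, and therefore $\mathbb{E}[K] = \sum_{k\ge 0}\Pr[K > k] \le \sum_{k \ge 0}\bigl(1-\frac{1}{2n}\bigr)^k = 2n$. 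Combined with the constant number of additional iterations after agreement, this yields $\Order(n) = \Order(n^2)$ expected iterations, completing the proof.
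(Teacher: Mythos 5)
Your proposal is correct and follows the same core strategy as the paper: invoke Lemma~\ref{lemma:ABACorrectness} for the per-iteration success probability of $\tfrac{1}{2n}$, note that agreement once reached is preserved and yields output within one further iteration, and bound the expected waiting time via a geometric-tail argument. Your conditional-probability formalization of the independence step ($\Pr[A_k \mid \neg A_1,\dots,\neg A_{k-1}] \ge \tfrac{1}{2n}$) matches the paper's implicit derivation of $\Pr(\tau = k) \le (1-\tfrac{1}{2n})^{k-1}\tfrac{1}{2n}$, just stated more carefully. The one place where you diverge from the paper is the final arithmetic: you correctly obtain $\mathbb{E}[K] = \sum_{k\ge 0}(1-\tfrac{1}{2n})^k = 2n$, i.e.\ $\Order(n)$, whereas the paper's own calculation yields $4n^2$ because it applies the arithmetico-geometric sum formula to $\sum_k k(1-\tfrac{1}{2n})^{k-1}$ and then evidently drops the leading $\tfrac{1}{2n}$ factor. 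Your guess that the looser $\Order(n^2)$ in the lemma statement reflects sub-rounds inside $\CoinFlip$ is not what is happening — the lemma counts iterations of $\ABA$ only, and the gap is just a slip in the paper's algebra. Since $\Order(n)\subseteq\Order(n^2)$, your tighter bound establishes the lemma as stated (indeed strengthens it), so the proof is sound.
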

\begin{proof}
   To prove the lemma, we need to derive the expected number of iterations, until all the honest parties have the same input during the second instance
   of $\Vote$ of an iteration. This is because once all the honest parties have the same input during the second instance
   of $\Vote$ of an iteration, then all honest parties will set $\committed$ to $\true$ at the end of that iteration and start sending signed $\ready$ messages, followed
   by computing an output. 
   Let $\tau$ be the random variable
    which counts the number of iterations until all honest parties
    have the same input during the second instance
   of $\Vote$ in an iteration.  
         Then the probability that $\tau = k$ is given as: 
\begin{align*}
\Pr(\tau = k) &= \Pr(\tau \neq 1)\cdot\Pr (\tau \neq 2 \mid \tau \neq 1)\cdot \ldots 
                     \cdot Pr(\tau \neq (k-1) \mid \tau \neq 1 \cap \ldots \cap  \\
                    & \tau \neq (k-2))  \cdot  \Pr(\tau = k \mid \tau \neq 1 \cap \ldots \cap \tau \neq    (k-1)).
\end{align*}
    From Lemma \ref{lemma:ABACorrectness}, 
    every multiplicand on the right-hand side in
    the above equation, except the last one, is upper bounded by $(1 - \frac{1}{2n})$
    and the last multiplicand is upper bounded by $\frac{1}{2n}$. Hence, we get
    \[ \Pr(\tau = k) \leq (1 - \frac{1}{2n})^{k-1}(\frac{1}{2n}).\] 
 Now the expected
    value $E(\tau)$ of $\tau$ is computed as follows:
    \begin{align*}
        E(\tau) &=  \sum_{k=0}^{\infty} \tau \cdot \Pr(\tau = k) \\
                &\leq \sum_{k=0}^{\infty}k(1 - \frac{1}{2n})^{k-1}(\frac{1}{2n})\\
                &= \frac{1}{2n}\sum_{k=0}^{\infty}k(1 - \frac{1}{2n})^{k-1}\\
                &= \frac{1}{1 - (1 - \frac{1}{2n})} + \frac{1 - \frac{1}{2n}}{\Big(1 - (1 -
                \frac{1}{2n})\Big)^2}\\
                &= 2n + 4n^2 - 2n = 4n^2
    \end{align*}
    The expression for $E(\tau)$ is a sum of $AGP$ up to infinite terms, which is
    given by $\frac{a}{1 - r} + \frac{dr}{(1-r)^2}$, where $a=1$, $r= 1 -
    \frac{1}{2n}$ and $d = 1$. Hence, we have $E(\tau) \leq 4n^2$.
\end{proof}

We finally prove the properties of the protocol $\ABA$ in a {\it synchronous} network.
\begin{lemma}
\label{lemma:ABASynchronous}
If the network is synchronous and if all honest parties have the same input $b \in \{ 0, 1\}$ during $\ABA$, then all honest parties output $b$,
    at time $\TimeABA = \TimeCoinFlip + 2\TimeVote + \Delta$. 
\end{lemma}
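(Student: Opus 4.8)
The plan is to trace a single iteration of $\ABA$ under the hypothesis that the network is synchronous and every honest party begins with the common input $b$, and to show that this forces all honest parties to send $\sign{(\ready, b)}$ within the first iteration and, $\Delta$ time later, to output $b$. Let $Z^\star \in \Z_a \subseteq \Z_s$ denote the set of corrupt parties.

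First I would analyse the first invocation of $\Vote$ in iteration $1$. Since all honest parties participate with input $b$, the $\Z_s$-Liveness and $\Z_s$-Graded Validity of $\Vote$ in a synchronous network (Lemma \ref{lemma:GASyncProperties}) guarantee that at time $\TimeVote$ every honest party obtains $(b, 2)$. Because the grade $g = 2$ is not less than $2$, the assignment $b = \Coin_k$ is skipped; nevertheless each honest party still participates in the instance of $\CoinFlip$ and, by the $\Z_s$-Guaranteed Liveness of $\CoinFlip$, obtains some output $\Coin_1$ by time $\TimeVote + \TimeCoinFlip$ (its value is irrelevant). Hence every honest party enters the second invocation of $\Vote$ still with input $b$ at the same local time, and applying Lemma \ref{lemma:GASyncProperties} once more, every honest party obtains $(b, 2)$ from it at time $2\TimeVote + \TimeCoinFlip$, sets $\committed = \true$, and sends $\sign{(\ready, b)}$ to all parties.

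Next I would handle the output-computation phase. These signed $\ready$ messages are delivered within $\Delta$, so by time $\TimeABA = \TimeCoinFlip + 2\TimeVote + \Delta$ every honest party has received $\sign{(\ready, b)}$ from the whole honest set $\PartySet \setminus Z^\star$, and since $Z^\star \in \Z_s$ this triggers the first output rule, so each honest party outputs $b$ at time $\TimeABA$. It remains to rule out an honest party ever outputting $\overline{b}$: the only parties that could sign $(\ready, \overline{b})$ are the corrupt ones — honest parties keep input $b$ in iteration $1$, never emit a second $\ready$ message after setting $\committed = \true$, and in any later iteration their input stays $b$ by $\Z_s$-Graded Validity — so a quorum of $\sign{(\ready, \overline{b})}$ messages from $\PartySet \setminus Z$ for some $Z \in \Z_s$ would force $Z \supseteq \PartySet \setminus Z^\star$, i.e. $Z \cup Z^\star = \PartySet$, contradicting $\Q^{(2)}(\PartySet, \Z_s)$ applied to $Z$ and $Z^\star$; the same (idealized pseudo-signature) unforgeability argument rules out a valid certificate $\cert{\overline{b}}$ being forwarded through the second output rule. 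Thus all honest parties output $b$, and only $b$, exactly at time $\TimeABA$.

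I do not expect a serious obstacle here: the argument is a bookkeeping of the timeouts $\TimeVote$, $\TimeCoinFlip$, $\Delta$ composed across one iteration, together with the standard quorum-intersection argument for the $\ready$ phase. The only subtlety deserving care is that the skip of $b = \Coin_k$ (because the grade is $2$) must not desynchronize the honest parties' entry into the second $\Vote$ instance — this is why it matters that every honest party still waits the full $\TimeCoinFlip$ for $\CoinFlip$ to return, so that all of them begin the second $\Vote$ at the same local time; the $\Z_s$-Guaranteed Liveness of $\CoinFlip$ is exactly what supplies this.
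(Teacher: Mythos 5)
Your proof is correct and follows essentially the same route as the paper's: one traces iteration $1$ through the two $\Vote$ instances and $\CoinFlip$ using their synchronous guarantees, concludes that every honest party sends $\sign{(\ready,b)}$ at time $2\TimeVote + \TimeCoinFlip$, and closes with the quorum argument that $\Hon \notin \Z_s$ under $\Q^{(2)}(\PartySet,\Z_s)$ forbids a competing $\ready$-quorum for $\overline{b}$. One small factual slip: you write ``Let $Z^\star \in \Z_a \subseteq \Z_s$,'' but since the lemma is about the \emph{synchronous} network the corrupt set is drawn from $\Z_s$ and need not lie in $\Z_a$; fortunately you only ever invoke $Z^\star \in \Z_s$, so the argument is unaffected, but the characterization should read $Z^\star \in \Z_s$.
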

\begin{proof}
Let $Z^{\star} \in \Z_s$ be the set of {\it corrupt} parties and let $\Hon = \PartySet \setminus Z^{\star}$ be the set of {\it honest} parties. 
 If all the parties in $\Hon$ participate with input $b$, then from the {\it $\Z_s$-liveness}
 and {\it $\Z_s$-Graded Validity} of $\Vote$ in the {\it synchronous} network (Lemma \ref{lemma:GASyncProperties}), all the parties in $\Hon$ output
 $(b, 2)$ during the first instance of $\Vote$ at time $\TimeVote$ in the first iteration. The {\it $\Z_s$-Guaranteed Liveness} of $\Vote$ in the {\it synchronous} network \cite{Cho23}
 ensures that all honest parties compute some output from the instance of $\CoinFlip$ during the first iteration at the time $\TimeVote + \TimeCoinFlip$.
  Since the parties in $\Hon$ output $(b, 2)$ during the first instance of $\Vote$, they participate with input $b$ during the second instance of $\Vote$.
  Consequently,  from the {\it $\Z_s$-liveness}
 and {\it $\Z_s$-Graded Validity} of $\Vote$ in the {\it synchronous} network, 
  all the parties in $\Hon$ compute the output $(b, 2)$ during the second instance of $\Vote$ in the first iteration at the time $2\TimeVote + \TimeCoinFlip$.
  Hence every party in $\Hon$ sends a signed $\ready$ message for $b$ at the time $2\TimeVote + \TimeCoinFlip$, which gets delivered at the time $\TimeABA$.
  Moreover, only the parties in $Z^{\star}$ may send a signed $\ready$ message for $\overline{b}$.
  Since $\PartySet \setminus \Hon = Z^{\star} \in \Z_s$ and since $\Hon \not \in \Z_s$ (due to the $\Q^{(2)}(\PartySet, \Z_s)$ condition), 
  it follows that all the parties in $\Hon$ will have sufficiently many signed $\ready$ messages for $b$ at the time $\TimeABA$ to output $b$. 
\end{proof}

The proof of Theorem \ref{thm:ABA} now follows from Lemma \ref{lemma:ABAAsyncValidity}-\ref{lemma:ABASynchronous}.
 The communication complexity follows from the communication complexity of $\CoinFlip$ \cite{Cho23} and the communication complexity of $\Vote$ (Lemma \ref{GA-communication})
 and the fact that in a {\it synchronous} network, only a {\it constant} number of invocations of $\Vote$ and $\CoinFlip$ are involved, while in an {\it asynchronous} network,
 there are $\mbox{poly}(n)$ invocations of $\Vote$ and $\CoinFlip$ in {\it expectation}.

\section{Properties of Our Network Agnostic ICP}
\label{app:ICP}
In this section, we prove the properties of our network-agnostic ICP (see Fig \ref{fig:ICP} for the formal details). Throughout this section, we assume that $\Z_s$ and $\Z_a$ satisfy the conditions 
 $\Z_a \subset \Z_s$, $\Q^{(2)}(\PartySet, \Z_s)$, $\Q^{(3)}(\PartySet, \Z_a)$ and $\Q^{(2, 1)}(\PartySet, \Z_s, \Z_a)$. 
 \begin{lemma}
 \label{lemma:ICPCorrectness}
    If $\mathsf{S}, \mathsf{I}$ and $\mathsf{R}$ are {\it honest}, then the following hold during protocol $\Auth$ and $\Reveal$.
       \begin{myitemize}
       		\item[--] {\bf $\AdvStruct_s$-Correctness}: In a synchronous network, each honest party sets $\authCompleted_{(\S, \INT, \Receiver)}$ 
   to $1$ during $\Auth$ at time $\TimeAuth = \Delta + 4\TimeBC$. Moreover $\mathsf{R}$ outputs $s$ during $\Reveal$ which takes $\TimeReveal = \Delta$ time.
   	   \item[--] {\bf $\AdvStruct_a$-Correctness}:  In an asynchronous network, each honest 
   party eventually sets $\authCompleted_{(\S, \INT, \Receiver)}$ 
   to $1$ during $\Auth$ and $\mathsf{R}$ eventually outputs $s$ during $\Reveal$.
      \end{myitemize}      
 \end{lemma}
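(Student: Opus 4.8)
The plan is to prove both parts by walking the (local) timeline of $\Auth$ and then $\Reveal$ and checking, step by step, that an honest $\mathsf{S}$, $\mathsf{I}$ and $\mathsf{R}$ finish every action and that every broadcast/pairwise check succeeds. Throughout I would fix $Z^{\star}$ to be the set of corrupt parties ($Z^{\star}\in\Z_s$ in the synchronous case, $Z^{\star}\in\Z_a$ in the asynchronous case) and write $\Hon=\PartySet\setminus Z^{\star}$. The structural fact I keep reusing is: since $\mathsf{S}$ is honest, every honest verifier receives its point $(\alpha_i,v_i,m_i)$ and therefore broadcasts $(\Received,i)$, so $\Hon\subseteq\R$; since $\Z_s$ satisfies $\Q^{(2)}(\PartySet,\Z_s)$ we then have $\PartySet\setminus\R\subseteq Z^{\star}\in\Z_s$. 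This is exactly what lets an honest $\mathsf{S}$ release $\R$ and, later in $\Reveal$, makes $\R':=\Hon\cap\R$ a legal support set for $\mathsf{R}$, because $\R\setminus\R'\subseteq Z^{\star}$ lies in $\Z_s$ (using $\Z_a\subset\Z_s$ in the asynchronous case).

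For the synchronous case I would trace the timeline using Theorem~\ref{thm:BC}: by time $\Delta$ all honest parties hold their points, so their $(\Received,i)$ broadcasts reach everyone by time $\Delta+\TimeBC$ ($\Z_s$-Validity of $\BC$); hence at local time $\Delta+\TimeBC$ the honest $\mathsf{S}$ has $\Hon\subseteq\R$ with $\PartySet\setminus\R\in\Z_s$ and broadcasts $\R$, delivered by $\Delta+2\TimeBC$. The delicate point — and the one I expect to be the main obstacle — is that $\mathsf{I}$ must also receive $(\Received,i)$ from the broadcast of every, possibly corrupt, $P_i\in\R$ by its deadline $\Delta+2\TimeBC$: here I would argue that $\mathsf{S}$ placed $P_i$ into $\R$ only after receiving that broadcast by time $\Delta+\TimeBC$, so either $\Z_s$-Consistency (regular mode, delivered by $\Delta+\TimeBC$) or $\Z_s$-Fallback Consistency (which then delivers it to $\mathsf{I}$ within a further $\Delta$, and $\Delta\le\TimeBC$ since $\TimeBC=3\Delta+\TimePW$) gives $\mathsf{I}$ the value in time. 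Then $\mathsf{I}$ broadcasts $(d,B(x))$ with $B(x)=dF(x)+M(x)$, which is of degree $t$, delivered by $\Delta+3\TimeBC$; since $dv_j+m_j=dF(\alpha_j)+M(\alpha_j)=B(\alpha_j)$ for every $P_j\in\R$ (the honest $\mathsf{S}$ computed $v_j,m_j$ exactly this way), $\mathsf{S}$ broadcasts $\OK$, delivered by $\Delta+4\TimeBC=\TimeAuth$. Thus every honest party holds $\R$, $(d,B(x))$ and $\OK$ by $\TimeAuth$, sets $\authCompleted_{(\S,\INT,\Receiver)}=1$, and $\mathsf{I}$ sets $\ICSig(\mathsf{S},\mathsf{I},\mathsf{R},s)=F(x)$. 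In $\Reveal$, $\mathsf{I}$ sends $F(x)$ and each $P_i\in\R$ sends $(\alpha_i,v_i,m_i)$; after waiting at most $\Delta$ (until the next multiple of $\Delta$) $\mathsf{R}$ has $F(x)$ and the points of all honest verifiers in $\R$, each satisfying $v_i=F(\alpha_i)$ and therefore accepted, so $\R'=\Hon\cap\R$ witnesses $\R\setminus\R'\in\Z_s$ and $\mathsf{R}$ outputs $F(0)=s$, at cost $\TimeReveal=\Delta$.

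For the asynchronous case I would repeat the same walk, replacing every ``delivered by time $T$'' by ``eventually delivered'' and using $\Z_a$-Weak/Fallback Validity and $\Z_a$-Weak/Fallback Consistency of $\BC$ (Theorem~\ref{thm:BC}) in place of the synchronous guarantees: an honest $\mathsf{S}$'s messages and an honest $\mathsf{I}$'s broadcasts eventually reach all honest parties; the honest $\mathsf{S}$ eventually collects enough $(\Received,i)$ to have $\PartySet\setminus\R\in\Z_s$ (since $\Hon\subseteq\R$ is eventual and $Z^{\star}\in\Z_a\subset\Z_s$) and broadcasts $\R$; $\mathsf{I}$ eventually receives $\R$ together with the $(\Received,i)$ broadcasts of all $P_i\in\R$ — for corrupt $P_i$ by fallback consistency, since the honest $\mathsf{S}$ received them — and broadcasts $(d,B(x))$; the honest $\mathsf{S}$ eventually broadcasts $\OK$ by the same degree and evaluation check; so every honest party eventually sets $\authCompleted=1$ and $\mathsf{I}$ sets $\ICSig=F(x)$. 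In $\Reveal$, $\mathsf{R}$ eventually receives $F(x)$ and the points of the honest verifiers in $\R$, accepts them all (again $v_i=F(\alpha_i)$), and since $\R\setminus(\Hon\cap\R)\subseteq Z^{\star}\in\Z_a\subset\Z_s$ it eventually finds a qualifying $\R'$ and outputs $s=F(0)$. Finally, I would note that no error probability appears here: with $\mathsf{S}$ and $\mathsf{I}$ honest every equality checked in the protocol holds identically, so the $\errorAICP$ term in Theorem~\ref{thm:ICP} is entirely attributable to the unforgeability and non-repudiation arguments, not to correctness.
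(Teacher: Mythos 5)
Your proof is correct and follows essentially the same timeline-tracing approach as the paper's: walk $\Auth$ step by step using the $\Z_s$/$\Z_a$ guarantees of $\BC$ from Theorem~\ref{thm:BC}, note that $\Hon\subseteq\R$ for an honest $\mathsf{S}$ so $\PartySet\setminus\R\in\Z_s$ always holds, and then in $\Reveal$ observe that the verification points of all honest verifiers in $\R$ pass the test $v_i=F(\alpha_i)$, making $\Hon\cap\R$ a qualifying $\R'$. You are in fact slightly more careful than the paper on one point: the argument that $\mathsf{I}$ receives $(\Received,i)$ from a potentially \emph{corrupt} $P_i\in\R$ by time $\Delta+2\TimeBC$ (via $\Z_s$-Consistency and $\Z_s$-Fallback Consistency of $\BC$ together with $\Delta\le\TimeBC$), which the paper passes over quickly.
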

\begin{proof}
 We first start with the {\it synchronous} network. Let $Z^{\star} \in \Z_s$ be set of {\it corrupt} parties and let 
  $\Hon = \PartySet \setminus Z^{\star}$ be the set of {\it honest} parties.
   During $\Auth$, $\mathsf{S}$ chooses a random $t$-degree signing-polynomial $F(x)$ such that $s = F(0)$ holds,
    a random $t$-degree masking-polynomial $M(x)$, and computes 
    verification points $(\alpha_i, v_i, m_i)$ such that $v_i = F(\alpha_i)$ and $m_i = M(\alpha_i)$ hold. $\mathsf{S}$ then
 sends the signing-polynomial $F(x)$ and masking-polynomial $M(x)$  to $\mathsf{I}$, and the 
 corresponding verification-point $(\alpha_i, v_i, m_i)$ to each verifier $P_i$.
 Consequently, each verifier in $\Hon$ 
  receives its verification-point by time $\Delta$, and indicates this by broadcasting $(\Received,i)$. Since $\PartySet \setminus \Hon = Z^{\star} \in \AdvStructure_s$,
  from the {\it $\Z_s$-validity} of $\BC$ in the {\it synchronous} network (see Theorem \ref{thm:BC}), 
   it follows that at time $\Delta + \TimeBC$, $\mathsf{S}$ will find a set $\R$, such that $\PartySet \setminus \R \in \AdvStructure_s$, where 
   each verifier in $\R$ has indicated that it has received its verification-point. Consequently, $\mathsf{S}$ will broadcast $\R$ at time $\Delta + \TimeBC$. 
   From the {\it $\Z_s$-validity} of $\BC$ in the {\it synchronous} network, $\mathsf{I}$ will receive $\R$ at time $\Delta + 2\TimeBC$. 
   Moreover, due to the {\it $\Z_s$-Consistency} and {\it $\Z_s$-Validity} of $\BC$ in the {\it synchronous} network,
   party $\mathsf{I}$ would have gotten $(\Received,i)$, corresponding to every verifier $P_i \in \R$, by time $\Delta + 2\TimeBC$. Furthermore, $\PartySet \setminus \R \in \Z_s$ will hold.
    Hence, $\mathsf{I}$ will randomly select $d \in \F$, compute $B(x) = dF(x) + M(x)$, and broadcast $(d, B(x))$. From the {\it $\Z_s$-validity} of $\BC$ in the {\it synchronous} network,
     this will be delivered to every honest party, including 
     $\mathsf{S}$, by time $\Delta + 3\TimeBC$.
     Moreover,  $\mathsf{S}$ will find that $B(\alpha_j) = dv_j + m_j$ holds 
     for all the verifiers $P_j \in \R$. Consequently, $\mathsf{S}$ will broadcast an $\OK$ message, which is received by every $P_i \in \Hon$ at time $\Delta + 4\TimeBC$, due to the
      {\it $\Z_s$-validity} of $\BC$ in the {\it synchronous} network.
       Thus, each $P_i \in \Hon$ sets $\authCompleted_{(\S, \INT, \Receiver)}$
        to $1$, while $\mathsf{I}$ additionally sets $\ICSig(\mathsf{S}, \mathsf{I}, \mathsf{R}, s)$ to $F(x)$ at time $\Delta + 4 \TimeBC$.
        
         During $\Reveal$, $\mathsf{I}$ will send $F(x)$ to $\mathsf{R}$, and each verifier $P_i \in \Hon \cap \R$ will send its 
     verification point $(\alpha_i,v_i,m_i)$ to $\mathsf{R}$. These points and the polynomial $F(x)$ are received by
     $\mathsf{R}$ within $\Delta$ time. Moreover, the condition $v_i = F(\alpha_i)$ will hold true for these points, and consequently, these points
      will be {\it accepted}. 
     Since $\R \setminus (\Hon \cap \R) \subseteq Z^{\star} \in \AdvStructure_s$, it follows that at time $\Delta$, receiver
     $\mathsf{R}$ will find a subset $\R' \subseteq \R$ where $\R \setminus \R' \in \AdvStructure_s$, such that the points corresponding to
     all the parties in $\R'$ are accepted.
      This implies that $\mathsf{R}$ will output $s = F(0)$ within time $\Delta$. 
     
The proof for the asynchronous case is similar as above, except that each ``favourable" event occurs {\it eventually}, and follows from the fact that every set in $\Z_a$ is a subset of some set in $\Z_s$. Moreover, we rely on the
 properties of $\BC$ in the {\it asynchronous} network.
\end{proof}

We next prove the privacy property, for which we again need to consider an {\it honest} $\S, \INT$ and $\Receiver$.
 \begin{lemma}
 \label{lemma:ICP-privacy}
 If $\S, \INT$ and $\Receiver$ are honest, then the view of $\Adv$ remains independent of $s$ during $\Auth$ and $\Reveal$, irrespective of the network
  type.
 \end{lemma}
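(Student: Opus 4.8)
The plan is to argue that an unbounded adversary controlling a set of corrupt parties from $\Z_s$ (in the synchronous case) or $\Z_a$ (in the asynchronous case) obtains, throughout $\Auth$ and $\Reveal$, a set of values whose joint distribution is independent of $s$. First I would identify exactly what the adversary sees when $\S, \INT$ and $\Receiver$ are all honest: (i) the verification points $(\alpha_i, v_i, m_i) = (\alpha_i, F(\alpha_i), M(\alpha_i))$ for the corrupt verifiers $P_i$ (say there are at most $t$ such indices, since the corrupt set lies in $\Z_s$ and $t = \max\{|Z| : Z \in \Z_s\}$); (ii) everything broadcast publicly during $\Auth$, namely the $(\Received, i)$ messages, the set $\R$, the pair $(d, B(x))$ with $B(x) = dF(x) + M(x)$, and the $\OK$ message; and (iii) during $\Reveal$, the polynomial $F(x)$ is sent only to the honest $\Receiver$ over a private channel and each honest verifier in $\R$ sends its point only to $\Receiver$, so the adversary learns nothing new in $\Reveal$ beyond what corrupt verifiers already held. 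The $(\Received, i)$ messages, $\R$, $\OK$ and $d$ carry no information about $s$ (they depend only on which parties received their points and on $\INT$'s coins), so the only potentially revealing objects are the $\le t$ points on $F$ held by corrupt verifiers together with the publicly broadcast polynomial $B(x)$.

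The key step is a counting/interpolation argument. Fix the (at most $t$) evaluation points $\alpha_i$ held by corrupt verifiers and condition on the adversary's view excluding $F$-information; I would show that for every candidate value $s' \in \F$ there is exactly one choice of $t$-degree pair $(F(x), M(x))$ consistent with the adversary's view and with $F(0) = s'$. Concretely: $B(x) = dF(x) + M(x)$ is public and $d \neq 0$, so $M(x)$ is determined by $F(x)$ via $M(x) = B(x) - dF(x)$. Thus it suffices to count $t$-degree polynomials $F$ with $F(0) = s'$ and $F(\alpha_i) = v_i$ for each corrupt verifier index $i$. These are at most $t+1$ interpolation constraints on a degree-$t$ polynomial (the point $0$ plus at most $t$ points $\alpha_i$, all distinct and nonzero), so there is exactly one such $F$; and then $M(x) = B(x) - dF(x)$ is automatically a valid $t$-degree masking polynomial consistent with the corrupt verifiers' $m_i$ values, because those $m_i$ satisfy $m_i = B(\alpha_i) - d v_i$ already (this is exactly the consistency that $\S$ checks and that holds for honest $\S$). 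Since $F(x)$ and $M(x)$ were chosen uniformly at random by the honest $\S$, each value of $s'$ is equally likely given the adversary's view, which is precisely the privacy statement.

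I would carry this out as follows: (1) enumerate the adversary's view and dismiss the network-type-independent public data as $s$-independent; (2) reduce privacy of the view to the conditional distribution of $F(0)$ given the corrupt verifiers' points and $B(x)$; (3) run the interpolation count to conclude that $F(0)$ is uniform over $\F$; (4) observe that the same argument works verbatim in both the synchronous and asynchronous settings, since the only difference is the timing/order in which the public messages arrive, not their content, and since the corrupt set lies in $\Z_s$ in either case (as $\Z_a \subset \Z_s$). The main obstacle, and the point requiring care, is step (2)–(3): one must make sure the $m_i$ values held by corrupt verifiers impose no extra constraint beyond what $B(x)$ already encodes — i.e.\ that conditioned on $B(x)$ and the $v_i$'s, the $m_i$'s are fully determined and hence contribute nothing — and one must handle the edge case where a corrupt verifier is \emph{not} in $\R$ (its point is still known to the adversary but it is the same kind of data, so the count is unaffected) and where $\INT$'s random combiner $d$ could in principle be adaptively chosen — but since $\S$ is honest and $F, M$ are independent of $d$, revealing $d$ and $B(x) = dF(x)+M(x)$ still leaves $F$ uniform subject to the verifier constraints. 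Packaging these observations cleanly is the bulk of the work; the algebra itself is routine Lagrange interpolation.
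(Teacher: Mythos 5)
Your proposal is correct and follows essentially the same argument as the paper's proof: it reduces privacy to a bijection between candidate secrets $s'$ and $(F,M)$ pairs consistent with the adversary's $t$ verification points together with $(d, B(x))$, observes that $\Reveal$ adds nothing since $F$ and the honest verifiers' points go only to the honest $\Receiver$, and handles the asynchronous case via $\Z_a \subset \Z_s$. The only cosmetic difference is that you parametrize by $F$ and derive $M = B - dF$, whereas the paper parametrizes by candidate masking polynomials $M$ and derives the matching $F$; these are inverse views of the same count.
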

\begin{proof}
We prove privacy in a {\it synchronous} network. The privacy in an {\it asynchronous} network automatically follows, since $\Z_a$ is a subset of $\Z_s$. Let $t = \max\{ |Z| :  Z \in \AdvStruct_s 	\}$ and let $Z^\star \in \AdvStructure_s$ be the set of corrupt parties. 
 For simplicity and without loss of generality, let 
  $|Z^\star| = t$. 
  During $\Auth$, the adversary $\Adv$ learns  $t$ verification-points $\{(\alpha_i, v_i, m_i) \}_{P_i \in Z^\star}$. 
  However, since $F(x)$ is a random $t$-degree polynomial with $F(0) = s$,
  the points $\{(\alpha_i, v_i) \}_{P_i \in Z^{\star}}$ are distributed independently of $s$.
  That is, for every candidate $s \in \F$ from the point of view of $\Adv$, there is a corresponding unique
  $t$-degree polynomial $F(x)$, such that $F(\alpha_i) = v_i$ holds corresponding to every $P_i \in Z^{\star}$.

  During $\Auth$, the adversary $\Adv$ also learns $d$ and the blinded-polynomial $B(x) = dF(x) + M(x)$.
    However, this does not add any new information about $s$ to the view of the adversary. This is because
  $M(x)$ is a random $t$-degree polynomial and $\Adv$ learns $t$ points on $M(x)$, corresponding to the parties in $Z^{\star}$.
    Hence, for every candidate $M(x)$ polynomial from the point of view of
  $\Adv$ where $M(\alpha_i) = m_i$ holds for every $P_i \in Z^{\star}$, there is a corresponding unique 
  $t$-degree polynomial $F(x)$, such that $F(\alpha_i) = v_i$ holds corresponding to every $P_i \in Z^{\star}$, and where
  $dF(x) + M(x) = B(x)$. 
   Finally, $\Adv$ does not learn anything new about $s$ during $\Reveal$, since the verification-points and the signing-polynomial are sent only to $\mathsf{R}$, who is {\it honest}
  as per the lemma conditions.
\end{proof}

We next prove the unforgeability property, for which we have to consider a {\it corrupt} $\INT$.
 \begin{lemma}
 \label{lemma:ICP-unforgeability}
  If $\mathsf{S}, \mathsf{R}$ are {\it honest}, $\mathsf{I}$ is corrupt
      and if $\mathsf{R}$ outputs $s' \in \F$ during $\Reveal$, then $s' = s$ holds except with
    probability at most $\errorAICP \defined \frac{nt}{|\F| - 1}$, where $t = \max\{ |Z| :  Z \in \AdvStruct_s \}$, irrespective of the network type.      
 \end{lemma}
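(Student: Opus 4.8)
The plan is to show that a corrupt $\INT$ cannot make an honest $\Receiver$ output a value $s' \neq s$, except with small probability, by analyzing what the corrupt $\INT$ is forced to do during the revelation phase. The key observation is that since $\S$ is honest, $\INT$ received a genuine $t$-degree polynomial $F(x)$ with $F(0) = s$ during $\Auth$, and each honest verifier $P_i$ in the set $\R$ (which was announced by the honest $\S$) holds a genuine verification point $(\alpha_i, v_i, m_i)$ with $v_i = F(\alpha_i)$. Crucially, because $\S$ is honest and announces $\R$, in a synchronous network \emph{all} honest verifiers are included in $\R$, so the honest verifiers in $\R$ satisfy $\Q^{(1)}(\Hon_\R, \Z_s)$; in an asynchronous network, $\R$ is chosen so that $\PartySet \setminus \R \in \Z_s$, and combined with $Z^\star \in \Z_a$ and the $\Q^{(2,1)}(\PartySet, \Z_s, \Z_a)$ condition, the honest verifiers in $\R$ again cannot all lie inside a single set of $\Z_s$. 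This is the structural fact that makes forgery hard.

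First I would set up: let $Z^\star$ be the corrupt set, let $\Hon_\R$ and $\C_\R$ denote the honest and corrupt verifiers in $\R$ respectively. Suppose $\Receiver$ outputs $s' = F'(0)$ where $F'(x)$ is the $t$-degree polynomial revealed by the corrupt $\INT$, and suppose $F' \neq F$. For $\Receiver$ to output, it must find $\R' \subseteq \R$ with $\R \setminus \R' \in \Z_s$ such that every revealed point from $\R'$ is \emph{accepted}. Since a point $(\alpha_i, v_i, m_i)$ from an honest $P_i \in \R$ is accepted only if $v_i = F'(\alpha_i)$ or $B(\alpha_i) \neq d v_i + m_i$, and since the honest $P_i$'s genuine point satisfies $v_i = F(\alpha_i)$ and $B(\alpha_i) = d v_i + m_i$ (this latter equality is exactly what $\S$ checked before broadcasting $\OK$), an honest $P_i$'s point is accepted only if $F'(\alpha_i) = F(\alpha_i)$. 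Two distinct $t$-degree polynomials agree on at most $t$ points, and the $\alpha_i$'s are among $|\F| - 1$ nonzero field elements chosen by the honest $\S$ independently of the corrupt $\INT$'s later behaviour; so for any fixed honest $P_i$, the probability that $F'(\alpha_i) = F(\alpha_i)$ (for the $\INT$'s chosen $F' \neq F$) is at most $t/(|\F|-1)$. By a union bound over the at most $n$ honest verifiers in $\R$, except with probability at most $nt/(|\F|-1) = \errorAICP$, \emph{every} honest verifier's point is rejected, hence $\R' \subseteq \C_\R$, i.e.\ $\R \setminus \R' \supseteq \Hon_\R$. But $\R \setminus \R' \in \Z_s$ would then force $\Hon_\R \in \Z_s$, contradicting the structural fact above ($\Q^{(1)}(\Hon_\R, \Z_s)$ in the synchronous case; and in the asynchronous case, $\Hon_\R \supseteq \R \setminus Z^\star$ with $\PartySet \setminus \R \in \Z_s$ and $Z^\star \in \Z_a$ violating $\Q^{(2,1)}$ if $\Hon_\R \in \Z_s$). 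Therefore $\Receiver$ can never assemble a valid $\R'$, so it either outputs nothing or, if it does output, it must be that $F' = F$ and hence $s' = s$.

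I would then handle the boundary subtlety that $\INT$ might reveal $F' = F$ but a malformed or non-$t$-degree polynomial — but $\Receiver$ checks that $F(x)$ is $t$-degree before proceeding, so this is ruled out — and also note that the argument uses only the honestly-generated $\alpha_i$ values and the honest $\S$'s $\OK$ broadcast, both of which are fixed before the corrupt $\INT$ commits to $F'$ in $\Reveal$ (the only randomness the corrupt $\INT$ could exploit is its own choice of $d$, but $d$ is already fixed in $\Auth$ and does not help it guess the $\alpha_i$'s). The main obstacle I expect is making the ``structural fact'' fully rigorous in the asynchronous case: one must carefully argue that the honest $\Receiver$'s set $\R$ (received from $\S$'s broadcast) contains enough honest parties, using that $\S$ is honest so $\R$ is exactly the set $\S$ announced with $\PartySet \setminus \R \in \Z_s$, and then invoke $\Q^{(2,1)}$ with the corrupt set in $\Z_a$; this is where the asymmetry between $\Z_s$ and $\Z_a$ must be deployed precisely, and it is the step most prone to an off-by-one in the covering condition.
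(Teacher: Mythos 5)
Your proposal is correct and follows essentially the same route as the paper's proof: you reduce unforgeability to the event that the corrupt $\INT$'s polynomial $F'\neq F$ agrees with $F$ at some honest verifier's secret evaluation point $\alpha_i$ (bounded by $t/(|\F|-1)$ per verifier, then union-bounded to $nt/(|\F|-1)$), and then invoke the same structural facts — in the synchronous case that $\R$ contains all honest parties so $\Hon_\R\notin\Z_s$, and in the asynchronous case that $\PartySet\setminus\R\in\Z_s$, $Z^\star\in\Z_a$, and $\Q^{(2,1)}(\PartySet,\Z_s,\Z_a)$ together forbid $\Hon_\R\in\Z_s$. The paper phrases the asynchronous step as "$\R'$ must contain an honest verifier" (via $\Q^{(1)}(\R',\Z_a)$) whereas you phrase it as "$\R\setminus\R'\supseteq\Hon_\R\notin\Z_s$"; these are contrapositives of the same deduction, so there is no substantive difference.
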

\begin{proof}
Let $\Hon$ be the set of {\it honest} parties in $\PartySet$ and let $Z^{\star} = \PartySet \setminus \Hon$ be the set of {\it corrupt} parties.
 Since $\mathsf{R}$ outputs $s'$ during $\Reveal$, it implies that during $\Auth$, the variable 
 $\authCompleted_{(\S, \INT, \Receiver)}$ is set to $1$ by $\mathsf{R}$. This further implies that
  $\mathsf{S}$ has broadcasted an $\OK$ message during $\Auth$, which also
  implies that during $\Auth$, 
    $\mathsf{I}$ had broadcasted a $t$-degree blinded-polynomial $B(x)$, and $\mathsf{S}$ broadcasted the set $\R$. Furthermore,
    $\mathsf{S}$ has verified that
    $B(\alpha_i) = dv_i + m_i$ holds for every verifier $P_i \in \R$.
    Now during $\Reveal$, if $\mathsf{I}$ sends $F(x)$ as $\ICSig(\mathsf{S}, \mathsf{I}, \mathsf{R}, s)$ to $\mathsf{R}$, then
    $s' = s$ holds with probability $1$.     
    So, consider the case when $\mathsf{I}$ sends $F'(x)$ as $\ICSig(\mathsf{S}, \mathsf{I}, \mathsf{R}, s)$ to $\mathsf{R}$, where 
    $F'(x)$ is a $t$-degree polynomial such that $F'(x) \neq F(x)$ and where $F'(0) = s'$. In this case, we claim that except with probability at most $\frac{nt}{|\F| - 1}$,
    the verification-point of no {\it honest} verifier from $\R$ will get accepted by $\mathsf{R}$ during $\Reveal$, with respect to $F'(x)$. Now, assuming that the claim is true, the 
    proof follows using the following arguments, depending upon the network type.
    \begin{myitemize}
    	\item[--] {\it Synchronous Network}: 
	In this case, all the verifiers in $\Hon$ will be present in $\R$. This is because, each verifier $P_i \in \Hon$ would have received its verification-point from $\S$ during $\Auth$, within time $\Delta$ and 
	indicates this by broadcasting $(\Received, i)$, which is received by $\S$ at time $\Delta + \TimeBC$. Let $\R'$ be the set of verifiers from which $\mathsf{R}$ receives verification points which it accepts. Since the 
	 verification point of {\it none} of the honest verifier will be accepted.
	 Hence $(\Hon \cap \R') = \emptyset$ and so
	  $\Hon \subseteq \R \setminus \R'$ must hold. Since $\Hon$ satisfies the $\Q^1(\PartySet,\Z_s)$ condition, $\R \setminus \R' \in \Z_s$ will {\it never} hold true. Hence, $\mathsf{R}$ will {\it not} output $s' \neq s$. 
  \item[--] {\it Asynchronous Network}: In this case, we first note that $\Z_s$ and $\Z_a$ satisfy the $\Q^{(1, 1)}(\R, \Z_s, \Z_a)$ condition. This is because 
  $\PartySet \setminus \R \in \Z_s$ and $\Z_s$ and $\Z_a$ satisfy the $\Q^{(2, 1)}(\PartySet, \Z_s, \Z_a)$ condition. From the steps of $\Reveal$, it follow that for
  $\Receiver$ to output $F'(0)$, $\Receiver$ should find a subset of verifiers $\R' \subseteq \R$, where $\R \setminus \R' \in \Z_s$, such that the verification-points of all the verifiers in 
  $\R'$ are accepted by $\Receiver$. This further implies that $\R' \cap \Hon \neq \emptyset$, as  $\Z_a$ satisfies the $\Q^{(1)}(\R', \Z_a)$ condition. 
 And hence $\R'$ has at least one {\it honest} verifier, whose verification-point is accepted with respect to $F'(x)$. However, from the above claim, it is not possible and hence
 $\mathsf{R}$ will {\it not} output $s' \neq s$. 
        \end{myitemize}
   We now prove the claimed statement. So consider an {\it arbitrary} verifier $P_i \in \Hon \cap \R$ from whom $\mathsf{R}$ receives the verification-point $(\alpha_i,v_i,m_i)$ during $\Reveal$. 
      This point can be {\it accepted} with respect to $F'(x)$, only if either of the following holds.
	\begin{myitemize}
	\item $v_i = F'(\alpha_i)$:
	 This is possible with probability at most $\frac{t}{|\F| - 1}$. 
	 This is because $F'(x)$ and $F(x)$, being distinct $t$-degree polynomials, can have  at most $t$ points in common.
	  And the evaluation-point $\alpha_i$ corresponding to $P_i$, being randomly selected from $\F - \{0\}$, will {\it not} be known to
	 $\mathsf{I}$.
	\item $dv_i + m_i \neq B(\alpha_i)$: This is impossible, as otherwise $\mathsf{S}$ would have {\it not} broadcasted $\OK$ during 
	$\Auth$, which is a contradiction.	
	\end{myitemize}
    As there could be up to $n - 1$ {\it honest} verifiers in $\R$, it follows from the union bound that except with probability at most $\frac{nt}{|\F| - 1}$,
   the verification-point of no {\it honest} verifier from $\R$ will get accepted by $\mathsf{R}$ during $\Reveal$, with respect to $F'(x)$.
\end{proof}
 
 We next prove the non-repudiation property, for which we have to consider a {\it corrupt} $\S$.
 \begin{lemma}
 \label{lemma:ICP-non-repudiation}
 If $\mathsf{S}$ is {\it corrupt}, $\mathsf{I}, \mathsf{R}$ are {\it honest}
    and if $\mathsf{I}$ sets $\ICSig(\mathsf{S}, \mathsf{I}, \mathsf{R}, s)$ during $\Auth$, then the following hold, except with probability at most $\frac{n}{|\F| - 1}$.
    \begin{myitemize}
    	\item[--] {\bf $\AdvStruct_s$-Non-Repudiation}:  In a synchronous network, $\mathsf{R}$ outputs $s$ during $\Reveal$, which takes $\TimeReveal = \Delta$ time.    
    	\item[--] {\bf $\AdvStruct_a$-Non-Repudiation}: In an asynchronous network, $\mathsf{R}$ eventually outputs $s$ during during $\Reveal$.
    \end{myitemize}
 \end{lemma}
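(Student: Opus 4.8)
The plan is to fix a corrupt $\S$ together with honest $\INT$ and $\Receiver$, and to condition on the event that $\INT$ has set $\ICSig(\S, \INT, \Receiver, s) = F(x)$ during $\Auth$. Since $\INT$ is honest, this event means that $\INT$ received from $\S$ a $t$-degree polynomial $F(x)$ with $F(0) = s$ and a polynomial $M(x)$, received a set $\R$ with $\PartySet \setminus \R \in \Z_s$ from $\S$'s broadcast, itself broadcast $(d, B(x))$ with $d$ sampled uniformly from $\F \setminus \{0\}$ and $B(x) = d F(x) + M(x)$, and received $\OK$ from $\S$'s broadcast. By the validity and consistency guarantees of $\BC$, every honest party --- in particular $\Receiver$ and every honest verifier in $\R$ --- also (eventually) holds the same $\R, d, B(x)$ and sets $\authCompleted_{(\S,\INT,\Receiver)} = 1$, so $\Receiver$ and the honest verifiers of $\R$ all start $\Reveal$. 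I would first record the easy half: the value $\Receiver$ may output is automatically correct, because $\Receiver$ outputs $F(0)$ of the polynomial it receives from $\INT$ and the honest $\INT$ sends exactly the stored $F(x)$, whose constant term is $s$. Hence the real task is liveness: showing that $\Receiver$ eventually finds a subset $\R' \subseteq \R$ with $\R \setminus \R' \in \Z_s$ all of whose revealed verification points are accepted.

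To that end, the key step I would prove is that, except with probability at most $\frac{n}{|\F| - 1}$, the verification point of \emph{every} honest verifier $P_i \in \R$ is accepted by $\Receiver$ during $\Reveal$. Fix such a $P_i$ and let $(\alpha_i, v_i, m_i)$ be the point it received from the corrupt $\S$ and faithfully forwards to $\Receiver$. By the acceptance rule of $\Reveal$, $\Receiver$ rejects $(\alpha_i, v_i, m_i)$ only when $v_i \neq F(\alpha_i)$ \emph{and} $B(\alpha_i) = d v_i + m_i$; since $\INT$ is honest we have $B(\alpha_i) = d F(\alpha_i) + M(\alpha_i)$, so the second conjunct reads $d\,(F(\alpha_i) - v_i) = m_i - M(\alpha_i)$, which for $v_i \neq F(\alpha_i)$ forces $d$ to a single fixed element of $\F$. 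The crux --- and the step I expect to be the main obstacle --- is showing that $d$ is distributed uniformly over $\F \setminus \{0\}$ independently of $(\alpha_i, v_i, m_i)$. This follows from the ordering the protocol enforces: the honest $\INT$ broadcasts $d$ only after receiving $(\Received, i)$ from the broadcast of \emph{every} $P_i \in \R$, while an honest $P_i$ broadcasts $(\Received, i)$ only after $\S$ has already delivered $(\alpha_i, v_i, m_i)$ to it, so $\S$ is committed to $P_i$'s point before $d$ is sampled. Thus the rejection probability for $P_i$ is at most $\frac{1}{|\F| - 1}$, and a union bound over the at most $n$ honest verifiers in $\R$ gives the claim.

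Finally I would conclude as follows. Let $Z^{\star}$ be the corrupt set and $\Hon = \PartySet \setminus Z^{\star}$, and take $\R' = \R \cap \Hon$; then $\R \setminus \R' = \R \cap Z^{\star} \subseteq Z^{\star}$, and since $Z^{\star} \in \Z_s$ in a synchronous network while $Z^{\star} \in \Z_a \subseteq \Z_s$ in an asynchronous network, monotonicity of $\Z_s$ gives $\R \setminus \R' \in \Z_s$; by the claim every point of $\R'$ is accepted, so $\Receiver$ outputs $F(0) = s$. For $\Z_s$-non-repudiation, in a synchronous network $\INT$'s polynomial and every honest verifier's point are delivered within $\Delta$ of the start of $\Reveal$ and $\Receiver$ produces its output at the next multiple of $\Delta$, so $\Reveal$ takes $\TimeReveal = \Delta$; for $\Z_a$-non-repudiation the same messages are only guaranteed to arrive eventually, and the argument is otherwise identical, so $\Receiver$ eventually outputs $s$. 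The only randomness in the proof is the acceptance claim, which accounts for the stated $\frac{n}{|\F| - 1}$ error probability.
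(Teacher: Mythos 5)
Your proposal is correct and follows essentially the same approach as the paper: reduce to showing that every honest verifier's point in $\R$ is accepted (so that $\R' = \R \cap \Hon$ works, with $\R \setminus \R' \subseteq Z^\star \in \Z_s$ by $\Z_a \subset \Z_s$ and monotonicity), and bound the per-verifier rejection probability by $\frac{1}{|\F|-1}$ via the freshness of $d$. You spell out the protocol's ordering guarantee (that $\INT$ samples $d$ only after all $P_i \in \R$ have announced receipt, hence after a corrupt $\S$ has committed to each honest verifier's point) a bit more explicitly than the paper does, but the argument is the same.
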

\begin{proof}
 Let $\Hon$ be the set of honest parties in $\PartySet$ and $Z^{\star}$ be the set of corrupt parties, where $\Hon = \PartySet \setminus Z^{\star}$.
 Since  $\mathsf{I}$ has set $\ICSig(\mathsf{S}, \mathsf{I}, \mathsf{R}, s)$ during $\Auth$, it implies that $\mathsf{I}$ has set the variable 
 $\authCompleted_{(\S, \INT, \Receiver)}$ to $1$. This further implies that $\mathsf{I}$ has broadcasted
 $(d, B(x))$, where $B(x) = dF(x) + M(x)$, and where $F(x)$ and $M(x)$ are the $t$-degree
  signing and masking-polynomials received by $\mathsf{I}$ from $\mathsf{S}$. 
  Moreover, $\mathsf{I}$ also received the set of supporting verifiers $\R$ from the broadcast of $\mathsf{S}$, and verified that $\PartySet \setminus \R \in \Z_s$ holds.
  Furthermore, $\mathsf{S}$ has broadcasted an $\OK$ message.
   Consequently, from the {\it consistency} properties of $\BC$ (see Theorem \ref{thm:BC}),
    irrespective of the network type, all {\it honest} parties including $\mathsf{R}$ eventually set $\authCompleted_{(\S, \INT, \Receiver)}$ to $1$.
   Moreover,
    $\mathsf{I}$ sets $\ICSig(\mathsf{S}, \mathsf{I}, \mathsf{R}, s)$ to $F(x)$, where
    $s = F(0)$. 
    During $\Reveal$, $\mathsf{I}$ sends
    $F(x)$ to $\mathsf{R}$. Moreover, every verifier $P_i \in \Hon \cap \R$ sends its verification-point 
    $(\alpha_i,v_i,m_i)$ to $\mathsf{R}$. 
    In a {\it synchronous} network, these will be received by $\mathsf{R}$ within time $\Delta$, 
    while in an asynchronous network, these will be eventually received by $\mathsf{R}$. 
    We claim that except with probability at most $\frac{n}{|\F|-1}$, 
    all these verification-points are  accepted by $\mathsf{R}$. 
    Now, assuming that the claim is true, the proof follows from the fact that
    $\Hon \cap \R = \R \setminus Z^{\star}$, and $Z^{\star} \in \Z_s$ holds, {\it irrespective} of the network type (since $\Z_a \subset \Z_s$).
       Consequently, $\mathsf{R}$ accepts the verification-points from a subset of the verifiers
        $\R' \subseteq \R$ where $\R \setminus \R' \in \AdvStructure_s$. And hence it outputs $s$, either within
         time $\Delta$ in a synchronous network, or eventually, in an asynchronous network.
      
     We now proceed to prove the claim. So consider an {\it arbitrary} verifier $P_i \in \Hon \cap \R$ whose verification-point $(\alpha_i, v_i, m_i)$ is received by $\mathsf{R}$
     during $\Reveal$. Now, there are two possible cases, depending upon the relationship that holds between $F(\alpha_i)$ and $v_i$
      during $\Auth$.
     \begin{myitemize}
	\item[--] {\it $v_i = F(\alpha_i)$ holds}: In this case, according to the protocol steps of $\Reveal$, the point $(\alpha_i, v_i, m_i)$ is {\it accepted} by $\mathsf{R}$.
	\item[--] {\it $v_i \neq F(\alpha_i)$ holds}: In this case, we claim that except with probability at most $\frac{1}{|\F| - 1}$, the condition 
	$dv_i + m_i \neq B(\alpha_i)$ will hold, implying that the point $(\alpha_i, v_i, m_i)$ is {\it accepted} by $\mathsf{R}$.
	This is because the {\it only} way $dv_i + m_i = B(\alpha_i)$ holds is when $\mathsf{S}$ distributes $(\alpha_i, v_i, m_i)$ to $P_i$
	where $v_i \neq F(\alpha_i)$ and $m_i \neq M(\alpha_i)$ holds, and $\mathsf{I}$ selects $d = (M(\alpha_i) - m_i)\cdot(v_i - F(\alpha_i))^{-1}$.
	However, $\mathsf{S}$ will {\it not} be knowing the random $d$ from $\F \setminus \{ 0\}$ which $\mathsf{I}$ is going to pick, while distributing
	$F(x), M(x)$ to $\mathsf{I}$, and $(\alpha_i, v_i, m_i)$ to $P_i$. Hence, the probability that $\INT$ indeed selects $d = (M(\alpha_i) - m_i)\cdot(v_i - F(\alpha_i))^{-1}$
	is $\frac{1}{|\F| - 1}$.
	\end{myitemize}
    As there can be up to $n - 1$ {\it honest} verifiers in $\R$, from the union bound, it follows that except with probability at most 
    $\frac{n}{|\F| - 1}$, the verification-point of {\it all} honest verifiers in $\R$ are accepted by $\mathsf{R}$.
\end{proof}

We finally derive the communication complexity.
\begin{lemma}
\label{lemma:ICPCommunicationComplexity}
Protocol $\Auth$ incurs a communication of $\Order(n^5 \cdot \log{|\F|} \cdot |\sigma|)$ bits. Protocol 
  $\Reveal$ incurs a communication of $\Order(n \cdot \log{|\F|})$ bits.
\end{lemma}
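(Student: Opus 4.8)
The statement to prove is Lemma~\ref{lemma:ICPCommunicationComplexity}, bounding the communication of $\Auth$ by $\Order(n^5 \cdot \log{|\F|} \cdot |\sigma|)$ bits and of $\Reveal$ by $\Order(n \cdot \log{|\F|})$ bits. This is a routine bookkeeping lemma, so the plan is simply to walk through each step of the two protocols in Fig~\ref{fig:ICP} and tally the number of bits sent, multiplying point-to-point transmissions by the message size and broadcast invocations by the communication cost of $\BC$ as given in Theorem~\ref{thm:BC}.

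\textbf{Plan for $\Auth$.} First I would account for the \emph{Distributing Data} step: $\S$ sends the pair $(F(x), M(x))$ of $t$-degree polynomials to $\INT$, costing $\Order(n \log |\F|)$ bits, and sends one triple $(\alpha_i, v_i, m_i) \in \F^3$ to each $P_i$, costing another $\Order(n \log |\F|)$ bits --- negligible compared to the broadcasts. The dominant cost comes from the broadcast steps, each of which I would charge using the $\Order(n^4 \cdot \ell \cdot |\sigma|)$-bit cost of $\BC$ from Theorem~\ref{thm:BC}, where $\ell$ is the bit-length of the broadcast message. Concretely: (i) each $P_i$ broadcasts $(\Received, i)$, i.e.\ $n$ instances of $\BC$ with $\ell = \Order(\log n)$, contributing $\Order(n^5 \log n \cdot |\sigma|)$; (ii) $\S$ broadcasts the set $\R$, i.e.\ one $\BC$ instance with $\ell = \Order(n)$, contributing $\Order(n^5 \cdot |\sigma|)$; (iii) $\INT$ broadcasts $(d, B(x))$, one $\BC$ instance with $\ell = \Order(n \log |\F|)$, contributing $\Order(n^5 \log|\F| \cdot |\sigma|)$; (iv) $\S$ broadcasts $\OK$, one $\BC$ instance with $\ell = \Order(1)$, contributing $\Order(n^4 \cdot |\sigma|)$. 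Summing and absorbing the $\log n$ factors into $\log|\F|$ (valid since $|\F| > n^5 \cdot 2^{\ssec}$), the total is $\Order(n^5 \log|\F| \cdot |\sigma|)$ bits, as claimed.

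\textbf{Plan for $\Reveal$.} Here there are no broadcasts: $\INT$ sends the $t$-degree polynomial $F(x)$ to $\Receiver$, costing $\Order(n \log|\F|)$ bits, and each verifier $P_i \in \R$ sends its triple $(\alpha_i, v_i, m_i)$ to $\Receiver$, costing $\Order(n \log|\F|)$ bits in total over all $n$ verifiers. So $\Reveal$ incurs $\Order(n \log|\F|)$ bits. (I should note the mild discrepancy that sending the degree-$t$ polynomial $F(x)$ takes $\Order(t\log|\F|) = \Order(n\log|\F|)$ bits, consistent with the stated bound.) The only ``obstacle'' --- really just a point requiring care --- is making sure the broadcast-message lengths $\ell$ are estimated correctly (in particular that $B(x)$, being a $t$-degree polynomial, is $\Order(n\log|\F|)$ bits rather than $\Order(\log|\F|)$) and that the $n$ separate $(\Received,i)$ broadcasts are counted as $n$ independent $\BC$ invocations; once those are pinned down the arithmetic is immediate. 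I would present the argument as a short case-by-case tally culminating in the two displayed bounds, thereby completing the proof of Theorem~\ref{thm:ICP}'s communication-complexity clause via Lemmas~\ref{lemma:ICPCorrectness}--\ref{lemma:ICPCommunicationComplexity}.
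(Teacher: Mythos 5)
Your proposal is correct and takes essentially the same approach as the paper: a step-by-step tally of point-to-point transmissions and $\BC$ invocations, with the latter charged at $\Order(n^4 \cdot \ell \cdot |\sigma|)$ per Theorem~\ref{thm:BC}. Your accounting is in fact a bit more careful than the paper's --- you explicitly include the $n$ broadcasts of $(\Received,i)$, which the paper's proof does not mention but which are present in Fig.~\ref{fig:ICP}, and you correctly note that $B(x)$ is a $t$-degree polynomial so its broadcast carries the dominant $\Order(n^5\log|\F|\cdot|\sigma|)$ cost.
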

\begin{proof}
During $\Auth$, signer $\mathsf{S}$ sends $t$-degree polynomials
  $F(x)$ and $M(x)$ to $\mathsf{I}$, and verification-points to each verifier. This requires a communication of $\Order(n \cdot \log{|\F|})$ bits.
    Intermediary $\mathsf{I}$ needs to broadcast $B(x)$ and $d$ using protocol $\BC$, 
    while $\S$ needs to broadcast 
     the set $\R$ using $\BC$. Moreover, $\mathsf{S}$ may need to broadcast $s$ using $\BC$.
     By substituting the communication cost of $\BC$ (see Theorem \ref{thm:BC}), the overall communication cost of $\Auth$ turns out to be
     $\Order(n^5 \cdot \log{|\F|} \cdot |\sigma|)$ bits. 
     During $\Reveal$, $\mathsf{I}$ may send $F(x)$ to $\mathsf{R}$, and each verifier may send its verification-point to $\mathsf{R}$.
    This incurs a communication of $\Order(n \cdot \log{|\F|})$ bits.
\end{proof}

The proof of Theorem \ref{thm:ICP} now follows easily from Lemma \ref{lemma:ICPCorrectness}-\ref{lemma:ICPCommunicationComplexity}.

\section{Properties of Network Agnostic VSS}
\label{app:VSS}
In this section, we prove the properties of the protocol $\VSS$ (see Fig \ref{fig:VSS}).  Throughout this section, we assume that $\Z_s$ and $\Z_a$ satisfy the conditions 
 $\Z_a \subset \Z_s$, $\Q^{(2)}(\PartySet, \Z_s)$, $\Q^{(3)}(\PartySet, \Z_a)$ and $\Q^{(2, 1)}(\PartySet, \Z_s, \Z_a)$. 
 We start with the properties in a {\it synchronous} network and first consider an {\it honest}
 $\D$. We first show that an {\it honest} $\D$ will broadcast some candidate core-sets, which will be accepted by all honest parties. Moreover, adversary will not learn
  any additional information about $s$.
 \begin{lemma}
 \label{lemma:VSSSynchronousHonestDealer}
 If the network is synchronous and $\D$ is {\it honest}, 
  participating  in $\VSS$ with input $s$, then all the following hold, where $\Hon$ is the set of honest parties.
    \begin{myitemize}
      \item[--] There exists some $S_p \in \ShareSpec_{\Z_s}$, such that $\D$ broadcasts a message 
  $(\CandidateCoreSets, \D, S_p, \{ \W_{q}\}_{q = 1, \ldots, |\Z_s|}, \BroadcastSet, \{ s_q\}_{q \in \BroadcastSet})$ at time $\Delta + \TimeAuth + \TimeBC$,
   and every $P_i \in \Hon$ includes $(\D, S_p)$ to the set $\C_i$ at time $\Delta + \TimeAuth + 2\TimeBC$. 
  Moreover, all the following hold for $q = 1, \ldots, |\Z_s|$.
    \begin{myitemize}
    \item[--] If $S_q = \Hon$, then $q \not \in \BroadcastSet$.
    \item[--] $\W_{q}$ will be either $S_q$ or $(S_p \cap S_q)$. Moreover, 
    $\Z_s$ will satisfy the $\Q^{(1)}(\W_q, \Z_s)$ condition.
    \item[--] Corresponding to every $S_q \in \ShareSpec_{\Z_s}$, every honest $P_i \in \W_{q}$ will have the IC-signature
     $\ICSig(P_j, P_i, P_k, s_q)$ of every $P_j \in \W_{q}$ for every $P_k \not \in S_q$, such that
    the underlying signatures will satisfy the linearity principle.
    Furthermore, if any corrupt $P_j \in \W_{q}$ has the IC-signature $\ICSig(P_i, P_j, P_k, s'_q)$ of any honest $P_i \in \W_{q}$ for any $P_k \in \PartySet$, then
    $s'_q = s_q$ holds and the underlying signatures will satisfy the linearity principle.
    \item[--] Corresponding to every $S_q \in \ShareSpec_{\Z_s}$, 
    every honest $P_i \in S_q$ will have the share $s_q$, except with a probability $\Order(|\ShareSpec_{\Z_s}| \cdot n^2 \cdot \errorAICP)$, at time $\Delta + \TimeAuth + 2\TimeBC + \TimeReveal$, where
    $s = s_1 + \ldots + s_{|\Z_s|}$.
    \end{myitemize}
 \item[--] The view of the adversary will be independent of $s$.
 \end{myitemize}
 \end{lemma}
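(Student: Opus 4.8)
The plan is to run the protocol $\VSS$ forward under the hypotheses ``$\D$ honest'' and ``synchronous network'', reading off the timing from the $\Z_s$-validity/$\Z_s$-consistency of $\BC$ (Theorem~\ref{thm:BC}) and the $\Z_s$-correctness/$\Z_s$-non-repudiation of $\ICP$ (Theorem~\ref{thm:ICP}). First I would note that every honest $P_i \in S_q$ receives $s_q$ from $\D$ by time $\Delta$, and that for honest $P_i,P_j$ lying in a common group $S_q$ the instances of $\Auth$ with signer $P_j$, intermediary $P_i$ and \emph{any} designated receiver complete by time $\Delta + \TimeAuth$ (the receiver plays no distinguished role in the authentication phase), so $P_i$ holds $\ICSig(P_j,P_i,P_k,s_q)$ for every $P_k$ and, since $s_{qi}=s_{qj}=s_q$, broadcasts $\OK(i,j)$; symmetrically it broadcasts $\OK(j,i)$. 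Hence by $\Z_s$-validity of $\BC$ these $\OK$-messages are delivered to all honest parties by time $\Delta + \TimeAuth + \TimeBC$, and by $\Z_s$-consistency of $\BC$ the graph $G^{(\D)}$ equals each honest $G^{(i)}$ at that time; in particular every edge between two honest parties in a common group is present.

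Next I would set $S_h := \Hon = \PartySet \setminus Z^{\star}$, where $Z^{\star}\in\Z_s$ is the corrupt set, so $S_h \in \ShareSpec_{\Z_s}$ and $S_h$ is a clique in $G^{(\D)}$ at time $\Delta + \TimeAuth + \TimeBC$. Therefore $\D$ finds a smallest-indexed clique $S_p$, computes $\W_1,\dots,\W_{|\Z_s|}$ and $\BroadcastSet$ by rules \textbf{(A)}--\textbf{(C)}, and broadcasts the $\CandidateCoreSets$ message at time $\Delta + \TimeAuth + \TimeBC$; by $\Z_s$-validity of $\BC$ it reaches every honest party by time $\Delta + \TimeAuth + 2\TimeBC$, and since the rules were followed and the relevant cliques appear in every $G^{(i)}$, each honest $P_i$ adds $(\D,S_p)$ to $\C_i$ at that time. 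The structural claims then follow by inspecting the rules: $S_h$ is a clique, so rule \textbf{(A)} fires for it, giving $\W_h = S_h$ and $h \notin \BroadcastSet$; each $\W_q$ is $S_q$ (rules \textbf{(A)},\textbf{(C)}) or $S_p\cap S_q$ (rule \textbf{(B)}); and $\Q^{(1)}(\W_q,\Z_s)$ holds because $\Q^{(2)}(\PartySet,\Z_s)$ forces $\Q^{(1)}(S_q,\Z_s)$ for every $S_q = \PartySet\setminus Z_q$, while rule \textbf{(B)} checks the condition explicitly.

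For the signature claims: when $q\in\BroadcastSet$ the signatures are the default ones, which every honest party computes locally; when $q\notin\BroadcastSet$, $\W_q$ is a clique in $G^{(\D)}$, so for honest $P_i\in\W_q$ and any $P_j\in\W_q$ the presence of the edge $(P_i,P_j)$ means $P_i$ broadcast $\OK(i,j)$, hence holds $\ICSig(P_j,P_i,P_k,s_q)$ for every $P_k$ (in particular every $P_k\notin S_q$); and since an honest signer $P_i\in S_q$ only ever runs $\Auth$ on the value $s_q$ it got from $\D$, any signature a corrupt $P_j$ holds from $P_i$ is on $s_q$. Linearity holds by construction, as the protocol instructs the parties to follow the linearity principle. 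For the share-recovery claim, when $q\in\BroadcastSet$ the value $s_q$ is delivered directly by $\BC$; when $q\notin\BroadcastSet$, an honest $P_i\in S_q\setminus\W_q$ picks an honest $P_j\in\W_q$ (which exists since $\Q^{(1)}(\W_q,\Z_s)$ holds), and for each $P_k\in\W_q$ the edge $(P_j,P_k)$ in the clique shows $P_j$ holds $\ICSig(P_k,P_j,P_\cdot,s_q)$, which $P_j$ reveals during $\Reveal$; by $\Z_s$-correctness of $\ICP$ (if $P_k$ honest) or $\Z_s$-non-repudiation of $\ICP$ (if $P_k$ corrupt, $P_j,P_i$ honest), $P_i$ accepts it except with probability $\errorAICP$, so union-bounding over the $\Order(|\ShareSpec_{\Z_s}|\cdot n^2)$ such instances gives the stated error and $P_i$ sets $[s]_q = s_q$; all of this finishes by time $\Delta + \TimeAuth + 2\TimeBC + \TimeReveal$. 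Finally, privacy follows because $h\notin\BroadcastSet$, so $s_h$ is never broadcast, and every IC-signature touching $s_h$ is among honest parties (as $S_h=\Hon$) and hence private by the privacy of $\ICP$; since the shares are uniform subject to summing to $s$, the adversary, missing $s_h$, has a view independent of $s$. The main obstacle I anticipate is the outsider-recovery argument for $P_i\in S_q\setminus\W_q$: it requires matching up exactly which $\Auth$/$\Reveal$ instances were invoked with which (signer, intermediary, receiver) roles, confirming that an honest intermediary in a clique $\W_q$ really does hold the needed signature from every member of $\W_q$, and invoking the correct $\ICP$ guarantee (correctness versus non-repudiation) according to whether the signer is honest.
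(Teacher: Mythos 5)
Your proof follows the same overall route as the paper's: run the protocol forward under the hypotheses, use $\Z_s$-correctness of $\Auth$ and $\Z_s$-validity of $\BC$ to establish that the honest group forms a clique at time $\Delta + \TimeAuth + \TimeBC$, deduce that $\D$ broadcasts a valid $\CandidateCoreSets$ message which all honest parties accept, verify the structural claims by inspecting the three rules, and derive privacy from the fact that the share $s_h$ indexed by the all-honest group is never broadcast.

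There is, however, a genuine gap in the outsider-recovery step. You argue that an honest $P_i \in S_q \setminus \W_q$ ``picks an honest $P_j \in \W_q$'' and then accepts $P_j$'s revealed signatures by $\Z_s$-correctness/non-repudiation of ICP. But the protocol does not let $P_i$ choose an honest revealer: $P_i$ sets $[s]_q = s_{qj}$ for \emph{whichever} $P_j \in \W_q$ first yields a full set of accepted signatures $\ICSig(P_k, P_j, P_i, s_{qj})$, $P_k \in \W_q$. A corrupt $P_j \in \W_q$ could be the one whose revelation $P_i$ acts on. Your argument establishes that $P_i$ will accept \emph{some} revelation (since at least one honest $P_j$'s revelation will be accepted), i.e.\ liveness, but not that whatever $P_i$ ends up accepting yields the correct $s_q$, i.e.\ soundness. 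The paper closes this with a separate invocation of the \emph{unforgeability} property of ICP: since $\W_q$ contains at least one honest $P_k$, a corrupt $P_j$ cannot produce an acceptable $\ICSig(P_k, P_j, P_i, s')$ for $s' \neq s_q$ except with probability $\errorAICP$, so every accepted revelation from $\W_q$ must carry the true $s_q$ up to the stated error. Correctness/non-repudiation alone, which you invoke, do not rule out a corrupt intermediary forging a wrong value; you need unforgeability for the soundness direction, and it is a genuinely distinct case analysis (honest vs.\ corrupt $P_j$ as intermediary, not just honest vs.\ corrupt $P_k$ as signer).

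A minor, non-fatal imprecision: you claim the graphs $G^{(\D)}$ and $G^{(i)}$ are \emph{equal} at time $\Delta + \TimeAuth + \TimeBC$ by $\Z_s$-consistency of $\BC$. Strictly, for $\OK$-messages from corrupt senders the fallback mode of $\BC$ may deliver up to $\Delta$ later to different parties; the paper therefore argues only that every edge present in $G^{(\D)}$ at time $\Delta + \TimeAuth + \TimeBC$ appears in every honest $G^{(i)}$ by $\Delta + \TimeAuth + 2\TimeBC$. Your conclusion (the honest parties' verifications succeed by $\Delta + \TimeAuth + 2\TimeBC$) is unaffected, but the intermediate equality claim is not justified.
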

\begin{proof}
 Let  $Z^\star \in \Z_s$ be the set of {\it corrupt} parties and let $\Hon = \PartySet \setminus Z^{\star}$ be the set of {\it honest} parties. 
   We note that $\Hon \in \ShareSpec_{\Z_s}$.
   Since $\D$ is honest, it picks the shares $s_1,\dots, s_{|\Z_s|}$ such that $s = s_1 + \dots + s_{|\Z_s|}$ and sends $s_q$ to each party $P_i \in S_q$, corresponding to
   every $S_q \in \ShareSpec_{\Z_s}$. These shares are delivered within time $\Delta$. Now consider an {\it arbitrary} $S_q \in \ShareSpec_{\Z_s}$.
  At time $\Delta$, each party $P_i \in (S_q \cap \Hon)$ starts giving
   $\ICSig(P_i, P_j, P_k, s_{qi})$ to every $P_j \in S_q$, for every $P_k \in \PartySet$, where $s_{qi} = s_q$ holds.
   Moreover, the linearity principle is followed while generating these IC-signatures.
      Then from the {\it $\Z_s$-correctness} of $\Auth$ in the {\it synchronous} network (Theorem \ref{thm:ICP}), 
   it follows that at time $\Delta + \TimeAuth$, each party $P_i \in (S_q \cap \Hon)$ will receive 
   $\ICSig(P_j, P_i, P_k, s_{qj})$ from every $P_j \in (S_q \cap \Hon)$, for every $P_k \in \PartySet$, such that $s_{qj} = s_{qi} = s_q$ holds. 
   Since $S_q$ is arbitrary, it follows that at time $\Delta + \TimeAuth$, every party $P_i \in \Hon$ broadcasts an $\OK(i, j)$ message, corresponding to every 
   $P_j \in \Hon$. From the {\it $\Z_s$-validity} of $\BC$ in the {\it synchronous} network (Theorem \ref{thm:BC}), it follows that
   these $\OK(i, j)$ messages are received by every party in $\Hon$ through {\it regular-mode} at time $\Delta + \TimeAuth + \TimeBC$.
   Consequently, the set $\Hon$ constitutes a clique in the consistency graph of every party in $\Hon$ at time $\Delta + \TimeAuth + \TimeBC$.
   Now since the set $\Hon \in \ShareSpec_{\Z_s}$, it follows that at time $\Delta + \TimeAuth + \TimeBC$, there exists {\it some}
   $S_p \in \ShareSpec_{\Z_s}$, such that $\D$ computes the core-sets $\{ \W_{q}\}_{q = 1, \ldots, |\Z_s|}$ and broadcast-set $\BroadcastSet$, followed by
   broadcasting  $(\CandidateCoreSets, \D, S_p, \{ \W_{q}\}_{q = 1, \ldots, |\Z_s|}, \BroadcastSet, \{ s_q\}_{q \in \BroadcastSet})$ at time $\Delta + \TimeAuth + \TimeBC$.
    Moreover, since $\D$ is {\it honest}, it computes the sets $\{ \W_{q}\}_{q = 1, \ldots, |\Z_s|}$ and $\BroadcastSet$ honestly, satisfying the following conditions, for
   $q = 1, \ldots, |\Z_s|$.
        \begin{myitemize}
	 \item[--] If $S_q$ constitutes a clique in the graph $G^{(\D)}$, then $\W_{q} = S_q$.
	 \item[--] Else if $(S_p \cap S_q)$ constitutes a clique in $G^{(\D)}$ and 
	 $\Z_s$ satisfies the $\Q^{(1)}(S_p \cap S_q, \Z_s)$ condition, 
	then $\W_{q} = (S_p \cap S_q)$.
	 \item[--] Else $\W_{q} = S_q$ and $q \in \BroadcastSet$. 	 
	 \end{myitemize}
      Note that for each $\W_q$, the condition $\Q^{(1)}(\W_q, \Z_s)$  holds. This is obviously true if $\W_{q} = (S_p \cap S_q)$, since in this case $\D$ also checks that
      $\Q^{(1)}(S_p \cap S_q, \Z_s)$ condition holds. On the other hand, even if $\W_q = S_q$, the condition $\Q^{(1)}(\W_q, \Z_s)$  holds, as
      $\Q^{(1)}(S_q, \Z_s)$  holds due to the $\Q^{(2)}(\PartySet, \Z_s)$ condition.
      We also note that if $S_q = \Hon$, then $q \notin \BroadcastSet$ and consequently, $\D$ does not make public the share $s_q$. This is because
      as shown above, at time $\Delta + \TimeAuth + \TimeBC$, the parties in $S_q$ constitute a clique in the graph $G^{(\D)}$.
      
      Since $\D$ broadcasts $(\CandidateCoreSets, \D, S_p, \{ \W_{q}\}_{q = 1, \ldots, |\Z_s|}, \BroadcastSet, \{ s_q\}_{q \in \BroadcastSet})$ at time $\Delta + \TimeAuth + \TimeBC$,
      from the {\it $\Z_s$-validity} of $\BC$ in the {\it synchronous} network, it follows that all the parties in $\Hon$ will receive
      $(\CandidateCoreSets, \D, S_p, \{ \W_{q}\}_{q = 1, \ldots, |\Z_s|}, \BroadcastSet, \{ s_q\}_{q \in \BroadcastSet})$ from the broadcast of
      $\D$, at time $\Delta + \TimeAuth + 2\TimeBC$. To show that every $P_i \in \Hon$ will include $(\D, S_p)$ to the set $\C_i$, we need to show that all the conditions which hold
      for $\D$ in its graph $G^{(\D)}$ at time $\Delta + \TimeAuth + \TimeBC$, are bound to hold for every $P_i \in \Hon$, at the time $\Delta + \TimeAuth + 2\TimeBC$.
      Namely, all the edges which are present in $G^{(\D)}$ at time $\Delta + \TimeAuth + \TimeBC$, are bound to be present in the
      graph $G^{(i)}$ every $P_i \in \Hon$, at the time $\Delta + \TimeAuth + 2\TimeBC$.
      However, this simply follows from the {\it $\Z_s$-validity, $\Z_s$-consistency} and {\it $\Z_s$-fallback consistency} of $\BC$ in the {\it synchronous} network (see Theorem \ref{thm:BC})
      and the fact that edges are added to consistency graphs, based on the receipt of $\OK(\star, \star)$ messages, which are broadcasted through various $\BC$ instances.
      Consequently, any edge $(i, j)$ which is included in $G^{(\D)}$ at the time $\Delta + \TimeAuth + \TimeBC$, is bound to be included in the graph $G^{(i)}$ of every
      $P_i \in \Hon$, latest by time $\Delta + \TimeAuth + 2\TimeBC$.
      
      We next note that corresponding to every $S_q \in \ShareSpec_{\Z_s}$, every honest $P_i \in \W_{q}$ will have the share $s_q$, which is either made public by $\D$
      as part of the $\CandidateCoreSets$ message
      or received from $\D$. Each honest $P_i \in W_q$ will thus set $[s]_q$ to $s_q$ at time $\Delta + \TimeAuth + 2\TimeBC$.
     We next show that each $P_i \in \Hon$ will have $\ICSig(P_j, P_i, P_k, s_q)$ corresponding to
     every $S_q \in \ShareSpec_{\Z_s}$, where $P_i \in \W_q$, for every $P_j \in \W_q$ and every $P_k \not \in S_q$.
      This also is set at time $\Delta + \TimeAuth + 2\TimeBC$. This is because there are two possible cases with respect to $q$.
     If $q \in \BroadcastSet$, then from the protocol steps, $\W_q$ is publicly set to $S_q$ and $\ICSig(P_j, P_i, P_k, s_q)$ is set to the default value.
     On the other hand, if $q \not \in \BroadcastSet$, then also $P_i$ will possess $\ICSig(P_j, P_i, P_k, s_q)$. This is because as per the protocol steps, 
     since $P_i, P_j \in \W_q$, it follows that 
      $P_i$ must have verified that the edge $(P_i, P_j) \in G^{(i)}$, which further implies that $P_i$ has received $\ICSig(P_j, P_i, P_k, s_{qj})$ from $P_j$, where $s_{qj} = s_{qi}$ holds.
      And since $\D$ is {\it honest}, $s_{qi} = s_q$ holds, implying that $\ICSig(P_j, P_i, P_k, s_{qj})$ is the same as $\ICSig(P_j, P_i, P_k, s_q)$.
      On the other hand, in the protocol, $P_i$ gives $\ICSig(P_i, P_j, P_k, s_{qi})$ to every $P_j \in S_q$ for every $P_k \in \PartySet$, where $s_{qi} = s_q$ holds.
      Hence if any corrupt $P_j \in \W_{q}$ has $\ICSig(P_i, P_j, P_k, s'_q)$ of any honest $P_i \in \W_{q}$
     for any $P_k  \in \PartySet$, then
    $s'_q = s_q$ holds.
    
    We now show that, corresponding to each $S_q \in \ShareSpec_{\Z_s}$, every honest party $P_i \in S_q \setminus W_q$ sets $[s]_q$ to $s_q$, except with a probability of $\Order(|\ShareSpec_{\Z_s}| \cdot n^2 \cdot \errorAICP)$, at time $\Delta + \TimeAuth + 2\TimeBC + \TimeReveal$. For this, we first show that $P_i$ sets $[s]_q$ to {\it some} value. 
     Since $\Z_s$ satisfies the $\Q^{(1)}(\W_q, \Z_s)$ condition, the set
      $\W_q$ contains {\it at least} one {\it honest} party, say $P_j$. Since $P_j$ follows the protocol steps honestly, it reveals 
      $\ICSig(P_k, P_j, P_i, [s]_q)$ of every $P_k \in \W_q$ to $P_i$, at time $\Delta + \TimeAuth + 2\TimeBC$. From the {\it $\Z_s$-correctness}  of ICP in the {\it synchronous} network
     (see Theorem \ref{thm:ICP}), it follows that $P_i$ will accept  these signatures
     after time $\TimeReveal$. On the other hand, even if $P_k \in \W_q$ is {\it corrupt}, then also from the {\it $\Z_s$-non-repudiation} property of ICP in the {\it synchronous} network
      (see Theorem \ref{thm:ICP}), it follows that $P_i$ accepts $\ICSig(P_k, P_j, P_i, [s]_q)$, except with a probability $\errorAICP$, after time $\TimeReveal$. As there can be $\Order(n)$ {\it corrupt} parties in $\W_q$, from the union bound, it follows that 
except with a probability $\Order(n \cdot \errorAICP)$, party $P_i$ will find a candidate party from $\W_q$, who reveals $[s]_q$, along with the IC-signature of all the parties in $\W_q$, after time $\TimeReveal$. Now as there can be $\Order(n)$ parties in $S_q \setminus \W_q$, it follows that except with probability $\Order(n^2 \cdot \errorAICP)$, every honest party $P_i \in S_q \setminus \W_q$ will find a candidate party from $\W_q$, who reveals $[s]_q$ along with the IC-signature of all the parties in $\W_q$ at time $\Delta + \TimeAuth + 2\TimeBC + \TimeReveal$. 

We next show that $P_i \in S_q \setminus W_q$ indeed sets $[s]_q$ to $s_q$. Suppose that $P_i$ sets $[s]_q$ to some value $s'$. From the protocols steps, this implies that there exists some $P_j \in \W_q$, such that $P_i$ has accepted
    $\ICSig(P_k, P_j, P_i, s')$ of every $P_k \in \W_q$, revealed by $P_j$.  If $P_j$ is {\it honest}, then indeed $s' = [s]_q$, as one of the IC-signatures
    $\ICSig(P_k, P_j, P_i, s')$ is the same as $\ICSig(P_k, P_j, P_i, [s]_q)$, corresponding to the {\it honest} $P_k \in \W_q$, which is guaranteed to exist.
    So consider the case when $P_j$ is {\it corrupt}.
    Moreover, let $P_k \in \W_q$ be an {\it honest} party (which is guaranteed to exist).
    In order that $s' \neq [s]_q$, it must be the case that $P_i$ accepts  $\ICSig(P_k, P_j, P_i, s')$, revealed by $P_j$.  However, from
    the {\it unforgeability} property of ICP (see Theorem \ref{thm:ICP}), this can happen only with probability $\errorAICP$. Now as there can be up to $\Order(n)$ corrupt parties in $\W_q$, from the union bound, it follows that the probability that $P_i$ outputs $s' \neq [s]_q$ is at most $\Order(n \cdot \errorAICP)$. Since there can be up to $\Order(n)$ parties in $S_q \setminus \W_q$, it follows that except with probability at most 
    $\Order(n^2 \cdot \errorAICP)$, the output of every honest party in $S_q$ is indeed $s_q$. Now, there can be $|\ShareSpec_{\Z_s}|$ possibilities for $S_q$. From the union bound, it follows that corresponding to each $S_q \in \ShareSpec_{\Z_s}$, every honest party $P_i \in S_q \setminus W_q$ sets $[s]_q$ to $s_q$, except with a probability of $\Order(|\ShareSpec_{\Z_s}| \cdot n^2 \cdot \errorAICP)$.

   Finally, the privacy for $s$ follows from the fact that throughout the protocol, the view of the adversary remains independent of the share $s_q$, corresponding to the group $S_q$, where
   $S_q = \Hon$. This is because as shown above, $q \notin \BroadcastSet$ and consequently, $\D$ does not make public the share $s_q$.
   Moreover, during the pairwise consistency tests, the view of the adversary remains independent of $s_q$, when the parties in $\Hon$ exchange IC-signed $s_q$, which
   follows from the {\it privacy} property of ICP (see Theorem \ref{thm:ICP}). Further, while computing the output, $\ICSig(P_k, P_j, P_i, s_q)$ is revealed by party $P_j \in W_q$ only to each party $P_i \in S_q \setminus W_q$. Hence, the adversary does not learn $s_q$.
\end{proof}
An immediate corollary of Lemma \ref{lemma:VSSSynchronousHonestDealer} is that if $\D$ is honest, then the parties output $[s]$ at time $\Delta + \TimeAuth + 2\TimeBC + \TimeReveal$, which follows
 from the definition of $[\cdot]$-sharing.
\begin{corollary}
If the network is synchronous and $\D$ is {\it honest} and 
  participates in $\VSS$ with input $s$, then the parties output $[s]$ at time $\Delta + \TimeAuth + 2\TimeBC + \TimeReveal$, except with a probability of $\Order(|\ShareSpec_{\Z_s}| \cdot n^2 \cdot \errorAICP)$.
\end{corollary}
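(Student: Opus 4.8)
The plan is to read off the corollary directly from Lemma~\ref{lemma:VSSSynchronousHonestDealer} by checking that the data held by the honest parties at local time $\TimeVSS = \Delta + \TimeAuth + 2\TimeBC + \TimeReveal$ is exactly a linear secret-sharing of $s$ with IC-signatures in the sense of Definition~\ref{def:SS}, and that the ``Computing Output'' step of $\VSS$ returns precisely this data at that time. So first I would fix the honest $\D$ with input $s$, let $s_1,\dots,s_{|\Z_s|}$ be the shares it picks (with $s = s_1 + \dots + s_{|\Z_s|}$), and let $S_p \in \ShareSpec_{\Z_s}$ together with the core-sets $\W_1,\dots,\W_{|\Z_s|}$ and broadcast-set $\BroadcastSet$ be those guaranteed by Lemma~\ref{lemma:VSSSynchronousHonestDealer}; by that lemma every honest party includes $(\D, S_p)$ in its set $\C_i$ at time $\Delta + \TimeAuth + 2\TimeBC$, so every honest party reaches the ``Computing Output'' step with a non-empty $\C_i$.

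Next I would match each bullet of Definition~\ref{def:SS} against the conclusions of the lemma. The lemma gives, for every $S_q \in \ShareSpec_{\Z_s}$: (i) that $\W_q$ is either $S_q$ or $S_p \cap S_q$ and that $\Z_s$ satisfies $\Q^{(1)}(\W_q, \Z_s)$, which is the core-set condition; (ii) that every honest $P_i \in S_q$ holds the share $s_q$ (honest parties in $\W_q$ set $[s]_q = s_q$ at time $\Delta + \TimeAuth + 2\TimeBC$, from the broadcasted $\CandidateCoreSets$ message if $q \in \BroadcastSet$ and otherwise from $\D$ directly, while an outsider $P_i \in S_q \setminus \W_q$ recovers $s_q$ by time $\Delta + \TimeAuth + 2\TimeBC + \TimeReveal$), except with probability $\Order(|\ShareSpec_{\Z_s}| \cdot n^2 \cdot \errorAICP)$; (iii) that every honest $P_i \in \W_q$ holds $\ICSig(P_j, P_i, P_k, s_q)$ for every $P_j \in \W_q$ and every $P_k \not\in S_q$ (default signatures when $q \in \BroadcastSet$, genuine signatures received during the pairwise-consistency exchange otherwise) with the linearity principle, and that any corrupt $P_j \in \W_q$ can only hold a correctly-valued signature of an honest $P_i \in \W_q$. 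These are, bullet for bullet, the requirements of Definition~\ref{def:SS}, so the tuple $(\W_1,\dots,\W_{|\Z_s|}, \{[s]_q\}_{P_i \in S_q}, \{\ICSig(P_j,P_i,P_k,[s]_q)\})$ output by each honest $P_i$ is a valid $[s]$.

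For the timing, I would observe that the ``Computing Output'' step has each honest $P_i$ wait until local time $\Delta + \TimeAuth + 2\TimeBC + \TimeReveal = \TimeVSS$ and then output, and that the lemma shows all the share data, core-set data, and IC-signature data needed for that output is in place by that time; hence all honest parties output $[s]$ exactly at time $\TimeVSS$. The error probability $\Order(|\ShareSpec_{\Z_s}| \cdot n^2 \cdot \errorAICP)$ is inherited verbatim from the share-recovery clause of the lemma (it is where the only randomness enters, via the $\errorAICP$ failure of unforgeability/non-repudiation across the $\Order(|\ShareSpec_{\Z_s}| \cdot n^2)$ relevant ICP instances).

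There is no substantive obstacle here: the corollary is a direct restatement of Lemma~\ref{lemma:VSSSynchronousHonestDealer} through the lens of Definition~\ref{def:SS}. The only thing that needs care is the bookkeeping check that the quantifier ranges in the lemma ($P_i \in \W_q$, $P_j \in \W_q$, $P_k \not\in S_q$) coincide with those in Definition~\ref{def:SS}, and that the protocol's fixed wait of $\TimeReveal$ after time $\Delta + \TimeAuth + 2\TimeBC$ is exactly what the lemma's outsider-recovery argument needs; both are immediate.
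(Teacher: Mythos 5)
Your proposal is correct and takes essentially the same route as the paper: the paper simply states the corollary as an immediate consequence of Lemma~\ref{lemma:VSSSynchronousHonestDealer} and Definition~\ref{def:SS}, and your proof spells out exactly that bullet-by-bullet matching (core-set $\Q^{(1)}$ condition, common shares, IC-signatures with linearity, and the timing/error budget inherited from the lemma). No gap or divergence to report.
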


We next consider a {\it corrupt} $\D$ in the {\it synchronous} network and show that if any honest party computes an output at time $T$, then there exists some $s^{\star} \in \F$ such that the (honest)
 parties output $[s^{\star}]$ by time $T + \Delta$, except with probability $\Order(|\ShareSpec_{\Z_s}| \cdot n^2 \cdot \errorAICP)$.
 \begin{lemma}
 \label{lemma:VSSSynchronousCorruptDealer} 
 If the network is synchronous and $\D$ is corrupt and if any honest party computes an output at time $T$, then there exists some $s^{\star} \in \F$, such that the honest parties output
  $[s^{\star}]$ by time $T + \Delta$, except with a probability of $\Order(|\ShareSpec_{\Z_s}| \cdot n^2 \cdot \errorAICP)$. 
 \end{lemma}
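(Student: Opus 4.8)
The plan is to trace the effect of the public $\CandidateCoreSets$ announcement and then pin down the committed value group by group. First I would observe that if an honest party $P_h$ outputs at time $T$, then $\C_h \neq \emptyset$ at time $\le T$, so $P_h$ received some $(\CandidateCoreSets, \D, S_p, \{\W_q\}_q, \BroadcastSet, \{s_q\}_{q \in \BroadcastSet})$ from the broadcast of $\D$ (through regular- or fallback-mode) and found it consistent with its graph $G^{(h)}$. Since $\D$ is corrupt, the $\Z_s$-consistency and $\Z_s$-fallback consistency of $\BC$ (Theorem~\ref{thm:BC}) force every honest party to receive the \emph{same} $\CandidateCoreSets$ message: if $P_h$ got it in regular-mode then so did all honest parties by time $\Delta + \TimeAuth + 2\TimeBC \le T$, and if $P_h$ got it in fallback-mode at some time $T_0 \le T$, then all honest parties get it by $T_0 + \Delta \le T + \Delta$. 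The same consistency/fallback-consistency applied to the $\BC$ instances underlying the $\OK(\cdot,\cdot)$ messages shows that honest parties' consistency graphs agree up to a $\Delta$ lag, so any clique in $G^{(h)}$ at time $T$ is a clique in every honest $G^{(i)}$ by time $T+\Delta$; since the remaining validity conditions on the $\W_q$'s (being $S_q$ or $S_p\cap S_q$, the $\Q^{(1)}$ checks) are network-independent, every honest party adds $(\D,S_p)$ to its $\C$ by time $T+\Delta$ and works with the same $S_p,\{\W_q\}_q,\BroadcastSet,\{s_q\}_{q\in\BroadcastSet}$. I would also record that each $\W_q$ satisfies $\Q^{(1)}(\W_q,\Z_s)$: this is explicitly checked when $\W_q = S_p\cap S_q$, and follows from $\Q^{(2)}(\PartySet,\Z_s)$ when $\W_q = S_q$.

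Next I would define the committed value. For $q \in \BroadcastSet$ set $s^\star_q := s_q$, the value made public inside the $\CandidateCoreSets$ message. For $q \notin \BroadcastSet$, $\W_q$ is a clique in the consistency graph and contains an honest party (by $\Q^{(1)}(\W_q,\Z_s)$); cliqueness means that for every pair of honest $P_i,P_j \in \W_q$ the edge $(P_i,P_j)$ is present, hence each broadcast $\OK(i,j)$ and $\OK(j,i)$, hence the shares $s_{qi}$ and $s_{qj}$ they received from $\D$ coincide. So all honest members of $\W_q$ hold a common value, which I set as $s^\star_q$, and I put $s^\star := s^\star_1 + \dots + s^\star_{|\Z_s|}$.

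The core of the argument is to show every honest party outputs $[s^\star]$ in the sense of Definition~\ref{def:SS}. An honest $P_i \in \W_q$ sets $[s]_q = s^\star_q$ directly (from the broadcast for $q \in \BroadcastSet$, from its $\D$-received value otherwise). For an honest $P_i \in S_q \setminus \W_q$, the honest anchor $P^\star \in \W_q$ reveals $\ICSig(P_j,P^\star,P_i,s^\star_q)$ for every $P_j \in \W_q$ (default signatures on the public $s^\star_q$ for $q\in\BroadcastSet$; for $q\notin\BroadcastSet$, $P^\star$ holds $\ICSig(P_j,P^\star,P_i,s^\star_q)$ thanks to the edge $(P^\star,P_j)$ in the clique). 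By $\Z_s$-correctness of ICP for honest $P_j$ and $\Z_s$-non-repudiation for corrupt $P_j$ (Theorem~\ref{thm:ICP}), $P_i$ accepts all of them — except with probability $\Order(n\cdot\errorAICP)$ over the corrupt parties of $\W_q$ — and therefore sets $[s]_q = s^\star_q$. Conversely, no corrupt $P_j \in \W_q$ can make $P_i$ accept some $s'_q \neq s^\star_q$: that would need $P_i$ to accept $\ICSig(P^\star,P_j,P_i,s'_q)$ where the signer $P^\star$ is honest and only ever gave $P_j$ a signature on $s^\star_q$, which by ICP-unforgeability happens with probability at most $\errorAICP$. The remaining clauses of Definition~\ref{def:SS} — honest $P_i\in\W_q$ holding $\ICSig(P_j,P_i,P_k,s^\star_q)$ for all $P_j\in\W_q$, $P_k\notin S_q$; linearity of all underlying signatures; and the impossibility of a corrupt $P_j\in\W_q$ holding an honest intermediary's signature on a value $\neq s^\star_q$ — all follow from the same clique/edge analysis together with the ``linearity principle'' used in the underlying $\Auth$ instances (and trivially for the default signatures). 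A union bound over the $|\ShareSpec_{\Z_s}|$ groups, the $\Order(n)$ parties in each $S_q\setminus\W_q$, and the $\Order(n)$ parties in each $\W_q$ gives the claimed $\Order(|\ShareSpec_{\Z_s}|\cdot n^2\cdot\errorAICP)$.

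Finally, for the $T+\Delta$ timing I would argue that once all honest parties hold the common $\CandidateCoreSets$ message and the graphs have settled — which, as above, happens by time $T+\Delta$ — the remaining steps (revealing IC-signatures, which runs in time $\TimeReveal$, and the absolute-time wait to $\TimeVSS$) execute in lock-step, so the last honest party finishes within $\Delta$ of $P_h$. The main obstacle I anticipate is precisely this interface between the two modes of $\BC$ and the ICP soundness analysis: I must check that the $\Delta$-gap guaranteed by $\Z_s$-fallback consistency of $\BC$ propagates cleanly through the consistency-graph construction and through $\Reveal$ so that no honest party is left more than $\Delta$ behind, while simultaneously ruling out that a corrupt $\D$ colluding with corrupt members of the $\W_q$'s can steer an honest outsider's share away from $s^\star_q$ — which is exactly where the honest anchor in each $\W_q$ (from $\Q^{(1)}(\W_q,\Z_s)$), ICP-unforgeability, and ICP-non-repudiation must be invoked together.
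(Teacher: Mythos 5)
Your overall approach matches the paper's — you trace the $\CandidateCoreSets$ announcement through $\BC$-consistency/fallback-consistency, then use the clique structure of each $\W_q$ plus the honest anchor from $\Q^{(1)}(\W_q,\Z_s)$ plus ICP unforgeability/non-repudiation to pin down a common committed value $s^\star$. The definition of $s^\star_q$, the union-bound tally, and the handling of default signatures on the broadcast shares are all essentially what the paper does.

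However, the timing argument has a genuine gap, and it is precisely at the spot you flagged as a potential obstacle. Your opening observation is that $P_h$ outputs at time $T$ implies $\C_h\neq\emptyset$ at some time $\le T$, and in the fallback case you then have all honest parties receiving the $\CandidateCoreSets$ message and settling their graphs only by $T+\Delta$. But the honest parties in $S_q\setminus\W_q$ must still run $\Reveal$ after that to obtain their shares, which costs another $\TimeReveal$; so your argument as written yields an output bound of $T+\Delta+\TimeReveal$, not the claimed $T+\Delta$. The paper closes this gap at the very first step: since $\Reveal$ takes at least $\TimeReveal$ time to complete, the fact that $P_h$ has all its shares set at time $T$ forces some honest party's $\C$ set (and hence the receipt and verification of the $\CandidateCoreSets$ message, together with the relevant clique structure in its consistency graph) to have been in place by time $T-\TimeReveal$, not just by time $T$. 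Propagating that event by $\Delta$ (via $\BC$ consistency/fallback consistency) gives all honest parties the verified message and the matching graph by $T-\TimeReveal+\Delta$, and adding the final $\TimeReveal$ for the $\RecShare$-style revelation lands exactly on $T+\Delta$. You need to insert that $\TimeReveal$-earlier anchor at the start (and you should also check that your inequality ``$T_0\le T$'' in the fallback branch actually becomes ``$T_0\le T-\TimeReveal$''); without it, the $T+\Delta$ bound does not follow from the steps you describe.
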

 \begin{proof}
 Let $Z^{\star} \in \Z_s$ be the set of {\it corrupt} parties and let $\Hon = \PartySet \setminus Z^{\star}$ be the set of {\it honest} parties. Let $P_\ell \in \Hon$ be the {\it first} honest party
  which computes an output at time $T$. The way in which the core sets are defined ensures that each set in $\{ \W_{q}\}_{q = 1, \ldots, |\Z_s|}$ must contain at least one honest party. Since $\Reveal$ takes at least $\TimeReveal$ time to complete, this means that there exists $S_p \in \ShareSpec_{\Z_s}$ such that some honest party $P_m$ receives a message
     $(\CandidateCoreSets, \D, S_p, \{ \W_{q}\}_{q = 1, \ldots, |\Z_s|}, \BroadcastSet, \{ s_q\}_{q \in \BroadcastSet})$ from the broadcast
     of $\D$ by the time $T - \TimeReveal$ and includes $(\D, S_p)$ to $\C_m$.
     This further implies that $P_m$ has verified that the following hold, for 
   $q = 1, \ldots, |\Z_s|$, by time $T - \TimeReveal$.
    \begin{myitemize}
       \item[--] If $q \not \in \BroadcastSet$, then $\W_{q}$ is either $S_q$ or $(S_p \cap S_q)$. Moreover, 
       $\Z_s$ satisfies the $\Q^{(1)}(\W_q, \Z_s)$ condition 
      and the parties in 
       $\W_q$ constitute a clique in the graph $G^{(m)}$.
    \item[--] If  $q  \in \BroadcastSet$, then $\D$ has made public $s_q$, as part of the $\CandidateCoreSets$ message. Moreover, $\W_q$ is set to $S_q$.
    \end{myitemize}
From the {\it $\Z_s$-consistency} and {\it $\Z_s$-fallback consistency} of $\BC$ in the {\it synchronous} network (see Theorem \ref{thm:BC}), it follows that every party in $\Hon$ will receive
  $(\CandidateCoreSets, \D, S_p, \{ \W_{q}\}_{q = 1, \ldots, |\Z_s|}, \BroadcastSet, \{ s_q\}_{q \in \BroadcastSet})$ from the broadcast
     of $\D$, latest by time $T - \TimeReveal + \Delta$. We next show that each $P_i \in \Hon$ will include $(\D, S_p)$ to $\C_i$, latest by time $T - \TimeReveal + \Delta$. 
     For this, it is enough to show that all the edges which are present in $G^{(m)}$ at time $T- \TimeReveal$, are bound to be present in the
      graph $G^{(i)}$ every $P_i \in \Hon$, by the time $T - \TimeReveal + \Delta$.
      However, this simply follows from the {\it $\Z_s$-validity, $\Z_s$-consistency} and {\it $\Z_s$-fallback consistency} of $\BC$ in the {\it synchronous} network (see Theorem \ref{thm:BC})
      and the fact that edges are added to consistency graphs, based on the receipt of $\OK(\star, \star)$ messages, which are broadcasted through various $\BC$ instances.
      Consequently, any edge $(i, j)$ which is included in $G^{(m)}$ at the time $T$, is bound to be included in the graph $G^{(i)}$ of every
      $P_i \in \Hon$, latest by time $T - \TimeReveal + \Delta$.
     
     We next show that by time $T - \TimeReveal + \Delta$, corresponding to every $S_q \in \ShareSpec_{\Z_s}$, every $P_i \in \W_q$ will have a common share, say $s^{\star}_q$.
     If $q \in \BroadcastSet$, this is trivially true, since in this case, $\D$ makes public the share $s_q$ and hence  $s^{\star}_q = s_q$.
     On the other hand, consider the case when $q \not \in \BroadcastSet$ and consider {\it arbitrary} $P_i, P_j \in (\W_q \cap \Hon)$. Since $P_i$ and $P_j$ are part of a clique,
     it follows that $P_i$ and $P_j$ have broadcasted the messages $\OK(i, j)$ and $\OK(j, i)$ respectively. Moreover, these messages were broadcasted, latest by time $T - \TimeReveal - \TimeBC$, since
     it takes at least $\TimeBC$ time to compute any output in an instance of $\BC$ in the {\it synchronous} network (see Theorem \ref{thm:BC}).
     Now since $P_i$ and $P_j$ have broadcasted $\OK(i, j)$ and $\OK(j, i)$ messages, it implies that they have verified that $s_{qi} = s_{qj}$ holds, where
     $s_{qi}$ and $s_{qj}$ are the shares, received by $P_i$ and $P_j$ respectively, from $\D$. Let $s_{qi} = s_{qj} = s^{\star}_q$. 
     We define
     \[ s^{\star} \defined \displaystyle \sum_{q = 1, \ldots, |\Z_s|} s^{\star}_q.    \]
     Till now we have shown that there exists some $s^{\star} \in \F$, such that by time $T - \TimeReveal + \Delta$, all the parties in $\Hon$ will have the core-sets $\W_1, \ldots, \W_{|\Z_s|}$, where
     $S_q \setminus \W_q \in \Z_a$, for $q = 1, \ldots, |\Z_s|$ and where each $P_i \in (\W_q \cap \Hon)$ will have a common share $[s^{\star}]_q$.
     We need to show that by the time $T - \TimeReveal + \Delta$, the parties in $\Hon$ will have the required IC-signatures, as part of $[s^{\star}]$, satisfying the linearity property.
     Namely, each $P_i \in \Hon$ will have $\ICSig(P_j, P_i, P_k, [s^{\star}]_q)$, corresponding to
     every $S_q \in \ShareSpec_{\Z_s}$ where $P_i \in \W_q$, of every $P_j \in \W_q$ and for every $P_k \in \PartySet$. There are two possible cases with respect to $q$.
     If $q \in \BroadcastSet$, then from the protocol steps, $\W_q$ is publicly set to $S_q$ and $\ICSig(P_j, P_i, P_k, [s^{\star}]_q)$ is set to the default value of
     $\ICSig(P_j, P_i, P_k, s_q)$, where $s_q$ is made public by $\D$. As per our notations, $s_q = s^{\star}_q$ for every 
     $q \in \BroadcastSet$.
     On the other hand, if $q \not \in \BroadcastSet$, then also $P_i$ will possess $\ICSig(P_j, P_i, P_k, [s^{\star}]_q)$. This is because as per the protocol steps, 
     since $P_i, P_j \in \W_q$, it follows that 
      $P_i$ must have verified that the edge $(P_i, P_j) \in G^{(i)}$, which further implies that $P_i$ has received $\ICSig(P_j, P_i, P_k, s_{qj})$ from $P_j$, where $s_{qj} = s_{qi}$ holds.
      Here $s_{qi}$ is the share received by $P_i$ from $\D$ and as per our notation, $s_{qi} = s^{\star}_q$. 
      Hence $\ICSig(P_j, P_i, P_k, s_{qj})$ is the same as $\ICSig(P_j, P_i, P_k, [s^{\star}]_q)$.
      On the other hand, in the protocol, $P_i$ gives $\ICSig(P_i, P_j, P_k, s_{qi})$ to every $P_j \in S_q$ for every $P_k \in \PartySet$, where $s_{qi} = [s^{\star}]_q$ holds.
      Hence if any corrupt $P_j \in \W_{q}$ has $\ICSig(P_i, P_j, P_k, s'_q)$, of any honest $P_i \in \W_{q}$
     for any $P_k \in \PartySet$, then
    $s'_q = [s^{\star}]_q$ holds. It is easy to see that all the underlying IC-signatures will be linear since the parties follow the linearity principle while generating IC-signatures.
    
 Finally, we now show that every honest party $P_i \in S_q \setminus W_q$ gets $[s^\star]_q$, except with a probability of $\Order(|\ShareSpec_{\Z_s}| \cdot n^2 \cdot \errorAICP)$, at time $T + \Delta$. For this, we first show that $P_i$ computes some share on the behalf of $S_q$. Since  $\Z_s$ satisfies the $\Q^{(1)}(\W_q, \Z_s)$ condition, the set
  $\W_q$ contains {\it at least} one {\it honest} party, say $P_j$. Since $P_j$ follows the protocol steps honestly, it reveals $\ICSig(P_k, P_j, P_i, [s^\star]_q)$ of every $P_k \in \W_q$ to $P_i$, 
  at time $T - \TimeReveal + \Delta$. From 
  the {\it $\Z_s$-correctness} of ICP in the {\it synchronous} network
   (see Theorem \ref{thm:ICP}), it follows that $P_i$ will accept the IC-signatures $\ICSig(P_k, P_j, P_i, [s^\star]_q)$, revealed by $P_j$,
    after time $\TimeReveal$. On the other hand, even if $P_k \in \W_q$ is {\it corrupt}, then also from the {\it $\Z_s$-non-repudiation} property of ICP in the {\it synchronous} network (see Theorem \ref{thm:ICP}), it follows that $P_i$ accepts $\ICSig(P_k, P_j, P_i, [s^\star]_q)$, except with a probability $\errorAICP$, after time $\TimeReveal$. As there can be $\Order(n)$ {\it corrupt} parties in $\W_q$, from the union bound, it follows that 
except with a probability $\Order(n \cdot \errorAICP)$, party $P_i$ will find a candidate party from $\W_q$, who reveals $[s^\star]_q$, along with the IC-signature of all the parties in $\W_q$, after time $\TimeReveal$. Now as there can be $\Order(n)$ parties in $S_q \setminus \W_q$, it follows that except with probability $\Order(n^2 \cdot \errorAICP)$, every honest party $P_i \in S_q \setminus \W_q$ will find a candidate party from $\W_q$, who reveals $[s^\star]_q$ along with the IC-signature of all the parties in $\W_q$ at time $T + \Delta$. 

We next show that $P_i \in S_q \setminus W_q$ indeed sets $[s^\star]_q$ as the share corresponding to $S_q$.
 Suppose that $P_i$ sets the share to some value $s'$. From the protocol steps, this implies that there exists some $P_j \in \W_q$, such that $P_i$ has accepted
   the IC-signatures $\ICSig(P_k, P_j, P_i, s')$ of every $P_k \in \W_q$, revealed by $P_j$.  If $P_j$ is {\it honest}, then indeed $s' = [s^\star]_q$, as one of the IC-signatures
    $\ICSig(P_k, P_j, P_i, s')$ is the same as $\ICSig(P_k, P_j, P_i, [s^\star]_q)$, corresponding to the {\it honest} $P_k \in \W_q$, which is guaranteed to exist.
    So consider the case when $P_j$ is {\it corrupt}.
    Moreover, let $P_k \in \W_q$ be an {\it honest} party (which is guaranteed to exist).
    In order that $s' \neq [s^\star]_q$, it must be the case that $P_i$ accepts  $\ICSig(P_k, P_j, P_i, s')$, revealed by $P_j$.  However, from
    the {\it unforgeability} property of ICP (see Theorem \ref{thm:ICP}), this can happen only with probability $\errorAICP$. Now as there can be up to $\Order(n)$ corrupt parties in $\W_q$, from the union bound, it follows that the probability that $P_i$ outputs $s' \neq [s^\star]_q$ is at most $\Order(n \cdot \errorAICP)$. Since there can be up to $\Order(n)$ parties in $S_q \setminus \W_q$, it follows that except with probability at most 
    $\Order(n^2 \cdot \errorAICP)$, the output of every honest party in $S_q$ is indeed $[s^\star]_q$. Now, there can be $|\ShareSpec_{\Z_s}|$ possibilities for $S_q$. From the union bound, it follows that corresponding to each $S_q \in \ShareSpec_{\Z_s}$, every honest party in $S_q$ outputs $[s^\star]_q$, except with a probability of $\Order(|\ShareSpec_{\Z_s}| \cdot n^2 \cdot \errorAICP)$.
\end{proof}

We next consider an {\it asynchronous} network and prove the analogue of Lemma \ref{lemma:VSSSynchronousHonestDealer} by showing that if $\D$ is {\it honest}, then the parties eventually output $[s]$ 
 except with a probability of $\Order(|\ShareSpec_{\Z_s}| \cdot n^2 \cdot \errorAICP)$. Further, the adversary learns no additional information about $s$. The proof of the lemma follows closely the proof of Lemma \ref{lemma:VSSSynchronousHonestDealer},
  except that we now rely on the properties of $\BC$, and ICP in the {\it asynchronous} network.
 \begin{lemma}
 \label{lemma:VSSAsynchronousHonestDealer}
If the network is asynchronous and $\D$ is {\it honest}, 
  participating  in $\VSS$ with input $s$, then the honest parties eventually output $[s]$ except with a probability of $\Order(|\ShareSpec_{\Z_s}| \cdot n^2 \cdot \errorAICP)$, with the view of the adversary remaining independent of $s$.
 \end{lemma}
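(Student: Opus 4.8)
The plan is to mimic the proof of Lemma \ref{lemma:VSSSynchronousHonestDealer} almost verbatim, replacing every appeal to a synchronous property of a sub-protocol by the corresponding asynchronous (eventual) property, and replacing ``all honest parties'' reasoning tied to $\Hon = \PartySet \setminus Z^\star$ with $Z^\star \in \Z_a$ (recall $\Z_a \subset \Z_s$, so $\Hon \supseteq \PartySet \setminus Z$ for some $Z \in \Z_s$, hence $\Hon$ still contains a set from $\ShareSpec_{\Z_s}$, namely $\specialS = \PartySet \setminus Z^\star$). First I would fix $Z^\star \in \Z_a$ as the corrupt set and note $\specialS := \PartySet \setminus Z^\star \in \ShareSpec_{\Z_s}$, since $\Z_a \subset \Z_s$. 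Since $\D$ is honest, it picks $s_1,\dots,s_{|\Z_s|}$ summing to $s$ and eventually sends $s_q$ to every $P_i \in S_q$. Each honest $P_i \in S_q$ eventually gives IC-signed $s_{qi} = s_q$ to every $P_j \in S_q$ following the linearity principle; by $\Z_a$-correctness of $\Auth$ (Theorem \ref{thm:ICP}) every honest $P_i \in S_q$ eventually receives the matching IC-signatures from every honest $P_j \in S_q$, so every honest $P_i$ eventually broadcasts $\OK(i,j)$ for every honest $P_j$, and by $\Z_a$-liveness/weak-consistency of $\BC$ (Theorem \ref{thm:BC}) these are eventually received by all honest parties. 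Hence the set of honest parties eventually forms a clique in $G^{(i)}$ for every honest $P_i$, and in particular in $G^{(\D)}$; since $\specialS \in \ShareSpec_{\Z_s}$, $\D$ eventually finds some clique-set $S_p \in \ShareSpec_{\Z_s}$ (possibly $S_p \ne \specialS$) and broadcasts $(\CandidateCoreSets, \D, S_p, \{\W_q\}, \BroadcastSet, \{s_q\}_{q \in \BroadcastSet})$, with each $\W_q$ equal to $S_q$ or $(S_p \cap S_q)$ and $\Q^{(1)}(\W_q,\Z_s)$ holding (by construction, and because $\Q^{(2)}(\PartySet,\Z_s)$ gives $\Q^{(1)}(S_q,\Z_s)$).

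Next I would argue the key privacy point, exactly as in the synchronous case but using the network-agnostic condition: the share $s_h$ corresponding to the honest group $\specialS = S_h$ is never made public. If the network forces $\D$ to pick $S_p = \specialS$, then $S_h$ is a clique in $G^{(\D)}$ so rule (A) applies and $h \notin \BroadcastSet$. If $S_p \ne \specialS$, then $S_p \cap S_h$ consists only of honest parties and, crucially, $\Q^{(1)}(S_p \cap S_h, \Z_s)$ holds: since $S_p = \PartySet \setminus Z$ for some $Z \in \Z_s$ and $S_h = \PartySet \setminus Z^\star$ with $Z^\star \in \Z_a$, if $\Q^{(1)}(S_p \cap S_h, \Z_s)$ failed then $\PartySet \setminus (S_p \cap S_h) = Z \cup Z^\star$ would cover $\PartySet$ up to a set in $\Z_s$, contradicting $\Q^{(2,1)}(\PartySet, \Z_s, \Z_a)$. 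Also $S_p \cap S_h$ is a clique in $G^{(\D)}$ (all its members are honest and part of the honest clique). Hence rule (B) applies to $q = h$, so again $h \notin \BroadcastSet$ and $s_h$ is never broadcast. Combined with the \emph{privacy} property of ICP during the pairwise-consistency exchanges and the fact that $\ICSig(\cdot,\cdot,P_i,s_h)$ is revealed only to $P_i \in S_h \setminus \W_h$ (all honest) during output computation, the adversary's view is independent of $s$.

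Then I would carry over the output-computation argument: all honest parties eventually receive the $\CandidateCoreSets$ message (by $\Z_a$-liveness of $\BC$), eventually see all the clique edges that $\D$ saw (by $\Z_a$-weak-consistency and $\Z_a$-fallback-consistency of $\BC$ applied to the $\OK$-messages), and hence eventually put $(\D,S_p)$ into $\C_i$. For each $S_q$, every honest $P_i \in \W_q$ sets $[s]_q = s_q$ (either from the broadcast if $q \in \BroadcastSet$, with default IC-signatures, or from $\D$ directly, with the received IC-signatures, which satisfy linearity); for each honest $P_i \in S_q \setminus \W_q$, the $\Z_a$-correctness and $\Z_a$-non-repudiation of ICP guarantee that the honest party in $\W_q$ (which exists by $\Q^{(1)}(\W_q,\Z_s)$) eventually has its revealed $\ICSig$'s accepted, except with probability $\Order(n \cdot \errorAICP)$ per such $P_i$, while $\Z_a$-unforgeability of ICP rules out accepting an incorrect share except with probability $\Order(n\cdot\errorAICP)$; a union bound over $\Order(n)$ outsiders and $|\ShareSpec_{\Z_s}|$ groups gives the $\Order(|\ShareSpec_{\Z_s}| \cdot n^2 \cdot \errorAICP)$ error. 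By Definition \ref{def:SS}, the honest parties then eventually hold $[s]$.

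\textbf{The main obstacle} I anticipate is the privacy step in the asynchronous case, specifically showing rule (C) is never triggered for the honest group $S_h$ even when $\D$ is (legitimately, as an honest party) forced by adversarial scheduling to anchor its core-sets on a \emph{different} candidate $S_p \ne \specialS$. This is precisely where the $\Q^{(2,1)}(\PartySet,\Z_s,\Z_a)$ condition is essential, and the subtlety is that in an asynchronous network $\D$ cannot wait to see $\specialS$ form a clique before committing (there is no timeout that helps), so one genuinely needs the $\Q^{(1)}(S_p \cap S_h,\Z_s)$ argument rather than the easier ``$\specialS$ reports at the same time'' argument available synchronously. Everything else is a routine transcription of the synchronous proof with ``at time $T$'' replaced by ``eventually'' and synchronous $\BC$/ICP guarantees replaced by their asynchronous counterparts.
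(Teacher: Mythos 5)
Your proposal is correct and follows essentially the same approach as the paper: same reduction to the asynchronous properties of $\Auth$, $\BC$, and $\Reveal$; the same key privacy argument that rule~(C) never fires for the honest group $\specialS$ because $\Q^{(1)}(S_p \cap \specialS, \Z_s)$ follows from $\Q^{(2,1)}(\PartySet, \Z_s, \Z_a)$; and the same union-bound tally of errors. One cosmetic remark: the reason $S_p \cap \specialS$ is a clique in $G^{(\D)}$ at the moment $\D$ commits is simply that it is a subset of $S_p$ (which $\D$ has already verified to be a clique), not that ``the honest clique'' has formed --- in the asynchronous case the honest clique may not yet be complete in $G^{(\D)}$, which is precisely why $\D$ might anchor on $S_p \neq \specialS$.
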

 \begin{proof}
  Let  $Z^\star \in \Z_a$ be the set of {\it corrupt} parties and let $\Hon = \PartySet \setminus Z^{\star}$ be the set of {\it honest} parties. 
   We note that $\Hon \in \ShareSpec_{\Z_s}$, since $Z^{\star} \in \Z_s$ as $\Z_a \subset \Z_s$.
   Since $\D$ is honest, it picks the shares $s_1,\dots, s_{|\Z_s|}$ such that $s = s_1 + \dots + s_{|\Z_s|}$ and sends $s_q$ to each party $P_i \in S_q$, corresponding to
   every $S_q \in \ShareSpec_{\Z_s}$. These shares are eventually delivered. Now consider an {\it arbitrary} $S_q \in \ShareSpec_{\Z_s}$.
  After receiving the share $s_{qi}$ from $\D$, each party $P_i \in (S_q \cap \Hon)$ starts giving
   $\ICSig(P_i, P_j, P_k, s_{qi})$ to every $P_j \in S_q$, for every $P_k \in \PartySet$, where $s_{qi} = s_q$ holds,
   such that the linearity principle is followed while generating these IC-signatures.
      Then from the {\it $\Z_a$-correctness} of $\Auth$ in the {\it asynchronous} network (Theorem \ref{thm:ICP}), 
   it follows that each party $P_i \in (S_q \cap \Hon)$ will eventually receive 
   $\ICSig(P_j, P_i, P_k, s_{qj})$ from every $P_j \in (S_q \cap \Hon)$, for every $P_k \in \PartySet$, such that $s_{qj} = s_{qi} = s_q$ holds. 
   Since $S_q$ is arbitrary, it follows that eventually, every party $P_i \in \Hon$ broadcasts an $\OK(i, j)$ message, corresponding to every 
   $P_j \in \Hon$. From the {\it $\Z_a$-weak validity} and {\it $\Z_a$-fallback validity}  of $\BC$ in the {\it asynchronous} network (Theorem \ref{thm:BC}), it follows that
   these $\OK(i, j)$ messages are eventually received by every party in $\Hon$.
   Consequently, the set $\Hon$ eventually constitutes a clique in the consistency graph of every party in $\Hon$.
   Now since the set $\Hon \in \ShareSpec_{\Z_s}$, it follows that eventually, there exists {\it some}
   $S_p \in \ShareSpec_{\Z_s}$, such that $\D$ computes the core-sets $\{ \W_{q}\}_{q = 1, \ldots, |\Z_s|}$ and broadcast-set $\BroadcastSet$, followed by
   broadcasting  $(\CandidateCoreSets, \D, S_p, \{ \W_{q}\}_{q = 1, \ldots, |\Z_s|}, \BroadcastSet, \{ s_q\}_{q \in \BroadcastSet})$.
    Moreover, since $\D$ is {\it honest}, it computes the sets $\{ \W_{q}\}_{q = 1, \ldots, |\Z_s|}$ and $\BroadcastSet$ honestly, satisfying the following conditions, for
   $q = 1, \ldots, |\Z_s|$.
        \begin{myitemize}
	 \item[--] If $S_q$ constitutes a clique in the graph $G^{(\D)}$, then $\W_{q} = S_q$.
	 \item[--] Else if $(S_p \cap S_q)$ constitutes a clique in $G^{(\D)}$ and 
	 $\Z_s$ satisfies the $\Q^{(1)}(S_p \cap S_q, \Z_s)$ condition, 
	  then $\W_{q} = (S_p \cap S_q)$.
	 \item[--] Else $\W_{q} = S_q$ and $q \in \BroadcastSet$. 	 
	 \end{myitemize}
      Note that for each $\W_q$, the condition $\Q^{(1)}(\W_q, \Z_s)$  holds. This is obviously true if $\W_{q} = (S_p \cap S_q)$, since in this case $\D$ also checks that
      $\Q^{(1)}(S_p \cap S_q, \Z_s)$ holds. On the other hand, even if $\W_q = S_q$, 
      the condition $\Q^{(1)}(\W_q, \Z_s)$ holds, since $\Q^{(1)}(S_q, \Z_s)$ holds.    
      We also note that if $S_q = \Hon$, then $q \notin \BroadcastSet$ and consequently, $\D$ does not make public the share $s_q$. This is because
      the parties in $S_p \cap S_q$ will constitute a clique and the $\Q^{(1)}(S_p \cap S_q, \Z_s)$ condition will be satisfied due to the $\Q^{(2, 1)}(\PartySet, \Z_s, \Z_a)$
      condition. Hence $\W_q$ will be set to $S_p \cap S_q$.
      
       Since $\D$ eventually broadcasts $(\CandidateCoreSets, \D, S_p, \{ \W_{q}\}_{q = 1, \ldots, |\Z_s|}, \BroadcastSet, \{ s_q\}_{q \in \BroadcastSet})$,
      from the {\it $\Z_a$-weak validity} and {\it $\Z_a$-fallback validity} of $\BC$ in the {\it asynchronous} network, it follows that all the parties in $\Hon$ will eventually receive
      $(\CandidateCoreSets, \D, S_p, \{ \W_{q}\}_{q = 1, \ldots, |\Z_s|}, \BroadcastSet, \{ s_q\}_{q \in \BroadcastSet})$ from the broadcast of
      $\D$. We next show that every $P_i \in \Hon$ will eventually include $(\D, S_p)$ to the set $\C_i$. For this, we need to show that all the conditions which hold
      for $\D$ in its graph $G^{(\D)}$ when it broadcasts the $\CandidateCoreSets$ message,
       are bound to eventually hold for every $P_i \in \Hon$.    
      However, this simply follows from the {\it $\Z_a$-weak validity, $\Z_a$-fallback validity, $\Z_a$-weak consistency} and {\it $\Z_a$-fallback consistency} of 
      $\BC$ in the {\it asynchronous} network (see Theorem \ref{thm:BC})
      and the fact that edges are added to consistency graphs, based on the receipt of $\OK(\star, \star)$ messages, which are broadcasted through various $\BC$ instances.
      Consequently, any edge $(i, j)$ which is included in $G^{(\D)}$, is bound to be eventually included in the graph $G^{(i)}$ of every
      $P_i \in \Hon$.
   
     Note that corresponding to every $S_q \in \ShareSpec_{\Z_s}$, every honest $P_i \in \W_{q}$ will have the share $s_q$, which is either made public by $\D$
      as part of the $\CandidateCoreSets$ message
      or received from $\D$.
     Next, we note that each $P_i \in \Hon$ will have $\ICSig(P_j, P_i, P_k, s_q)$, corresponding to
     every $S_q \in \ShareSpec_{\Z_s}$ where $P_i \in \W_q$, from every $P_j \in \W_q$ and for every $P_k \in \PartySet$. 
     On the other hand, if any corrupt $P_j \in \W_{q}$ has $\ICSig(P_i, P_j, P_k, s'_q)$ of any honest $P_i \in \W_{q}$ for
    any $P_k  \in \PartySet$, then
    $s'_q = s_q$ holds.
     The proof for this is exactly the same as that of Lemma \ref{lemma:VSSSynchronousHonestDealer}.
     
     We now show that every honest party $P_i \in S_q \setminus W_q$ eventually sets $[s]_q$ to $s_q$, except with a probability of $\Order(|\ShareSpec_{\Z_s}| \cdot n^2 \cdot \errorAICP)$. We first show that $P_i$ sets $[s]_q$ to {\it some} value. Since 
     $\Z_s$ satisfies the $\Q^{(1)}(\W_q, \Z_s)$ condition, 
     this means that $\W_q$ contains {\it at least} one {\it honest} party, say $P_j$. Since $P_j$ follows the protocol steps honestly, it reveals 
     $\ICSig(P_k, P_j, P_i, [s]_q)$ of every $P_k \in \W_q$ to $P_i$. From the 
     {\it $\Z_a$-correctness} property of ICP in the {\it asynchronous} network
      (see Theorem \ref{thm:ICP}), it follows that $P_i$ will eventually accept these signatures.
       On the other hand, even if 
      $P_k \in \W_q$ is {\it corrupt}, then from the {\it $\Z_a$-non-repudiation} property of ICP in the {\it asynchronous} network
       (see Theorem \ref{thm:ICP}), it follows that $P_i$ eventually accepts $\ICSig(P_k, P_j, P_i, [s]_q)$, except with a probability $\errorAICP$. From the union bound, it follows that except with probability $\Order(n^2 \cdot \errorAICP)$, each party $P_i \in S_q$ outputs some value for $[s]_q$. Further, this value must be $s_q$. The proof of this follows from what was shown in \ref{lemma:VSSSynchronousHonestDealer}. Now, there can be $|\ShareSpec_{\Z_s}|$ possibilities for $S_q$. From the union bound, it follows that corresponding to each $S_q \in \ShareSpec_{\Z_s}$, every honest party in $S_q$ eventually outputs $[s]_q$, except with a probability of $\Order(|\ShareSpec_{\Z_s}| \cdot n^2 \cdot \errorAICP)$.

   Finally, the privacy for $s$ follows from the fact that throughout the protocol, the view of the adversary remains independent of the share $s_q$, corresponding to the group $S_q$, where
   $S_q = \Hon$. This is because as shown above, $q \notin \BroadcastSet$ and consequently, $\D$ does not make public the share $s_q$.
   Moreover, during the pairwise consistency tests, the view of the adversary remains independent of $s_q$, when the parties in $\Hon$ exchange IC-signed $s_q$, which
   follows from the {\it privacy} property of ICP (see Theorem \ref{thm:ICP}). Further, while computing the output, $\ICSig(P_k, P_j, P_i, s_q)$ is revealed by party $P_j \in W_q$ only to each party $P_i \in S_q \setminus W_q$. Hence, the adversary does not learn $s_q$.     
 \end{proof}
     
 Finally, we consider a {\it corrupt} $\D$ in the {\it asynchronous} network and prove the analogue of Lemma \ref{lemma:VSSSynchronousCorruptDealer}, whose proof is very similar to that
  of Lemma \ref{lemma:VSSSynchronousCorruptDealer}.
 \begin{lemma}
 \label{lemma:VSSAsynchronousCorruptDealer} 
 If the network is asynchronous and $\D$ is corrupt and if any honest party computes an output, then there exists some $s^{\star} \in \F$, such that the honest parties eventually output
  $[s^{\star}]$, except with a probability of $\Order(|\ShareSpec_{\Z_s}| \cdot n^2 \cdot \errorAICP)$. 
 \end{lemma}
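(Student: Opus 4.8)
The plan is to mirror the proof of Lemma~\ref{lemma:VSSSynchronousCorruptDealer}, replacing every ``at time $T$'' argument by an ``eventually'' argument and invoking the asynchronous guarantees of $\BC$ (namely $\Z_a$-weak validity, $\Z_a$-fallback validity, $\Z_a$-weak consistency and $\Z_a$-fallback consistency from Theorem~\ref{thm:BC}) and of ICP ($\Z_a$-correctness, $\Z_a$-non-repudiation and unforgeability from Theorem~\ref{thm:ICP}) in place of their synchronous counterparts. Fix the set of corrupt parties $Z^{\star}\in\Z_a$ and let $\Hon=\PartySet\setminus Z^{\star}$; since $\Z_a\subset\Z_s$ we have $\Hon\in\ShareSpec_{\Z_s}$. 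Let $P_{\ell}\in\Hon$ be an honest party that computes an output; then $\C_{\ell}\neq\emptyset$, so $P_{\ell}$ has included some pair $(\D,S_p)$ into $\C_{\ell}$, which in turn means $P_{\ell}$ received a message $(\CandidateCoreSets,\D,S_p,\{\W_q\}_{q=1,\ldots,|\Z_s|},\BroadcastSet,\{s_q\}_{q\in\BroadcastSet})$ from the broadcast of $\D$ and verified, for every $q$, that either $q\in\BroadcastSet$ with $\W_q=S_q$ and $s_q$ made public, or $q\notin\BroadcastSet$ with $\W_q\in\{S_q,\,S_p\cap S_q\}$, $\Z_s$ satisfying $\Q^{(1)}(\W_q,\Z_s)$, and $\W_q$ forming a clique in $G^{(\ell)}$.

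The first step is to argue that this decision propagates to all honest parties. The $\CandidateCoreSets$ message is delivered through a dedicated instance of $\BC$; by $\Z_a$-weak consistency and $\Z_a$-fallback consistency of $\BC$ in the asynchronous network, every honest party eventually receives the same message from $\D$'s broadcast. Moreover every edge of $G^{(\ell)}$ is put there on the basis of $\OK(\star,\star)$ messages broadcast through instances of $\BC$, so by $\Z_a$-weak validity, $\Z_a$-fallback validity, $\Z_a$-weak consistency and $\Z_a$-fallback consistency of $\BC$, every such edge eventually appears in the graph $G^{(i)}$ of every $P_i\in\Hon$; the $\Q^{(1)}$ conditions are purely combinatorial and network-independent. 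Hence every $P_i\in\Hon$ eventually includes $(\D,S_p)$ in $\C_i$, so $\C_i\neq\emptyset$ and $P_i$ proceeds to the output computation; since $\C_i$ only grows and all validity conditions propagate, all honest parties eventually agree on the same (canonically chosen) $S_p$ and on the same sets $\W_1,\ldots,\W_{|\Z_s|}$ and $\BroadcastSet$.

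Next I would define the committed value. For $q\in\BroadcastSet$ set $s^{\star}_q\defined s_q$, the value made public by $\D$. For $q\notin\BroadcastSet$, $\W_q$ forms a clique satisfying $\Q^{(1)}(\W_q,\Z_s)$, hence contains at least one honest party; moreover any two honest $P_i,P_j\in\W_q\subseteq S_q$ broadcast $\OK(i,j)$ and $\OK(j,i)$ only after checking $s_{qi}=s_{qj}$ for the shares received from $\D$, so all honest parties in $\W_q$ hold a common value $s^{\star}_q$; set $s^{\star}\defined\sum_{q=1}^{|\Z_s|}s^{\star}_q$. Exactly as in Lemma~\ref{lemma:VSSSynchronousCorruptDealer}, every honest $P_i\in\W_q$ ends up with $[s^{\star}]_q=s^{\star}_q$ together with $\ICSig(P_j,P_i,P_k,s^{\star}_q)$ for every $P_j\in\W_q$ and every $P_k\notin S_q$ (from the default value when $q\in\BroadcastSet$, and from the IC-signatures exchanged during the pairwise consistency tests otherwise), with all these IC-signatures obeying the linearity principle since the parties follow it while generating them; and by unforgeability of ICP a corrupt $P_j\in\W_q$ can hold $\ICSig(P_i,P_j,P_k,s'_q)$ of an honest $P_i\in\W_q$ only for $s'_q=s^{\star}_q$.

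Finally, for the ``outsider'' parties $P_i\in S_q\setminus\W_q$: since $\W_q$ contains an honest party $P_j$, which reveals $\ICSig(P_k,P_j,P_i,s^{\star}_q)$ of every $P_k\in\W_q$, the $\Z_a$-correctness of ICP guarantees $P_i$ eventually accepts the signatures revealed by the honest $P_k$'s, and $\Z_a$-non-repudiation guarantees it eventually accepts those of the corrupt $P_k$'s except with probability $\errorAICP$ each, so $P_i$ eventually sets $[s^{\star}]_q$ to some value; unforgeability of ICP then forces this value to be $s^{\star}_q$, since a corrupt revealer would have to forge an honest party's signature on a wrong share. A union bound over the $\Order(n)$ corrupt parties in each $\W_q$, the $\Order(n)$ outsiders in each $S_q$, and the $|\ShareSpec_{\Z_s}|$ groups yields a total error of $\Order(|\ShareSpec_{\Z_s}|\cdot n^2\cdot\errorAICP)$, after which all honest parties hold $[s^{\star}]$. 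The main obstacle is the propagation/consistency step: because $\D$ is corrupt we only have the \emph{weak} and \emph{fallback} consistency of $\BC$ to work with, so one must carefully check that the first honest party's acceptance of $(\D,S_p)$ and of the core-sets is eventually echoed by every honest party, and that the canonical tie-break over possibly several valid $S_p$ converges; everything else is a routine transcription of the synchronous argument.
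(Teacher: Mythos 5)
Your proposal is correct and follows essentially the same route as the paper's own proof: fix $Z^{\star}\in\Z_a$, start from the first honest party that accepts a $\CandidateCoreSets$ message, propagate that acceptance to every honest party via the $\Z_a$-weak/fallback validity and consistency of $\BC$ (noting that consistency-graph edges are built from $\BC$ broadcasts), define $s^{\star}$ from the pairwise-consistent shares held inside each $\W_q$ (with the broadcast share for $q\in\BroadcastSet$), carry the IC-signatures across by linearity, and finish the outsider step with $\Z_a$-correctness, $\Z_a$-non-repudiation, and unforgeability of ICP plus the same union bound. The only point worth flagging is minor: since $\D$ invokes a single $\BC$ instance for its $\CandidateCoreSets$ message, the ``tie-break over several valid $S_p$'' you worry about does not actually arise — $\BC$'s consistency already forces every honest party to settle on the one message $\D$ broadcast, so $\C_i$ holds at most one entry per dealer.
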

 \begin{proof}
 Let $Z^{\star} \in \Z_a$ be the set of {\it corrupt} parties and let $\Hon = \PartySet \setminus Z^{\star}$ be the set of {\it honest} parties. Let $P_m \in \Hon$ be the {\it first} honest party,
  who computes an output in the protocol $\VSS$. This implies that there exists some  $S_p \in \ShareSpec_{\Z_s}$, such that $P_m$ receives a message
     $(\CandidateCoreSets, \D, S_p, \{ \W_{q}\}_{q = 1, \ldots, |\Z_s|}, \BroadcastSet, \{ s_q\}_{q \in \BroadcastSet})$ from the broadcast
     of $\D$, and includes $(\D, S_p)$ to $\C_m$.
     This further implies that $P_m$ has verified that the following hold, for 
   $q = 1, \ldots, |\Z_s|$.
    \begin{myitemize}
       \item[--] If $q \not \in \BroadcastSet$, then $\W_{q}$ is either $S_q$ or $(S_p \cap S_q)$. Moreover, 
       $\Z_s$ satisfies the $\Q^{(1)}(\W_q, \Z_s)$ condition 
       and the parties in 
       $\W_q$ constitute a clique in the graph $G^{(m)}$.
    \item[--] If  $q  \in \BroadcastSet$, then $\D$ has made public $s_q$, as part of the $\CandidateCoreSets$ message. Moreover, $\W_q$ is set to $S_q$.
    \end{myitemize}
From the {\it $\Z_a$-weak consistency} and {\it $\Z_a$-fallback consistency} of $\BC$ in the {\it asynchronous} network (see Theorem \ref{thm:BC}),
 it follows that every party in $\Hon$ will eventually receive
  $(\CandidateCoreSets, \D, S_p, \{ \W_{q}\}_{q = 1, \ldots, |\Z_s|}, \BroadcastSet, \{ s_q\}_{q \in \BroadcastSet})$ from the broadcast
     of $\D$. We next show that each $P_i \in \Hon$ will eventually include $(\D, S_p)$ to $\C_i$. 
     For this, it is enough to show that all the edges which are present in $G^{(m)}$ when $(\D, S_p)$ is included in $\C_i$, 
     are bound to be eventually present in the
      graph $G^{(i)}$ of every $P_i \in \Hon$.
      However, this simply follows from the {\it $\Z_a$-weak validity, $\Z_a$-fallback validity, $\Z_a$-weak consistency} and {\it $\Z_a$-fallback consistency} of $\BC$ in the 
      {\it asynchronous} network (see Theorem \ref{thm:BC})
      and the fact that edges are added to consistency graphs, based on the receipt of $\OK(\star, \star)$ messages, which are broadcasted through various $\BC$ instances.
           
     Next, it can be shown that corresponding to every $S_q \in \ShareSpec_{\Z_s}$, every $P_i \in \W_q$ will have a common share, say $s^{\star}_q$.
     The proof for this is the same as Lemma \ref{lemma:VSSSynchronousCorruptDealer}.
         We define
     \[ s^{\star} \defined \displaystyle \sum_{q = 1, \ldots, |\Z_s|} s^{\star}_q.    \]
     Now similar to the proof of Lemma \ref{lemma:VSSSynchronousCorruptDealer}, it can be shown that each party $P_i$ in $(\Hon \cap \W_q)$ will eventually have the required IC-signatures
     on $[s^{\star}]_q$ as part of $[s^{\star}]$ and 
     will reveal these to parties in $S_q \setminus W_q$. Consequently, each party in $\Hon$ will eventually set $[s^\star]_q$ as the share corresponding to $S_q$
      and hence, $s^{\star}$ will eventually be secret-shared, except with a probability of $\Order(|\ShareSpec_{\Z_s}| \cdot n^2 \cdot \errorAICP)$.
\end{proof}

We next derive the communication complexity of the protocol $\VSS$.
\begin{lemma}
\label{lemma:VSSCommunicationComplexity}
Protocol $\VSS$ incurs a communication of $\Order(|\Z_s| \cdot n^8 \cdot \log{|\F|} \cdot |\sigma|)$ bits.
\end{lemma}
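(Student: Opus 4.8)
\textbf{Proof plan for Lemma \ref{lemma:VSSCommunicationComplexity}.} The plan is to walk through the phases of $\VSS$ (Fig \ref{fig:VSS}) one by one, bound the communication incurred in each, and observe that the \emph{Exchanging IC-Signed Values} phase dominates and already matches the claimed bound, while every other phase is asymptotically subsumed. Throughout I will plug in the per-primitive costs established earlier: each instance of $\Auth$ costs $\Order(n^5 \cdot \log{|\F|} \cdot |\sigma|)$ bits and each instance of $\Reveal$ costs $\Order(n \cdot \log{|\F|})$ bits (Theorem \ref{thm:ICP}), and broadcasting an $\ell$-bit value via $\BC$ costs $\Order(n^4 \cdot \ell \cdot |\sigma|)$ bits (Theorem \ref{thm:BC}).

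First I would handle the cheap phases. In \emph{Distribution of Shares}, $\D$ sends each of the $|\Z_s|$ shares $s_q$ to at most $n$ parties, costing $\Order(|\Z_s| \cdot n \cdot \log{|\F|})$ bits. In \emph{Announcing Results of Pairwise Consistency Tests}, each $P_i$ broadcasts at most $n$ messages $\OK(i,\cdot)$, each of size $\Order(\log n)$ bits, for a total of $\Order(n^2)$ broadcasts, i.e.\ $\Order(n^6 \cdot \log n \cdot |\sigma|)$ bits. Constructing the consistency graph and identifying valid core sets are purely local. In \emph{Identification of Core Sets and Public Announcement by the Dealer}, $\D$ broadcasts a single $\CandidateCoreSets$ message whose size is $\Order(|\Z_s| \cdot n \cdot \log{|\F|})$ bits (the sets $S_p$, $\W_1,\dots,\W_{|\Z_s|}$, $\BroadcastSet$, and the revealed shares $\{s_q\}_{q \in \BroadcastSet}$), hence $\Order(|\Z_s| \cdot n^5 \cdot \log{|\F|} \cdot |\sigma|)$ bits. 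Finally, in \emph{Computing Output}, the revelation of IC-signatures proceeds via at most $\Order(|\Z_s| \cdot n^3)$ instances of $\Reveal$ (ranging over $q$, over $P_i, P_j \in \W_q$, and over $P_k \in S_q \setminus \W_q$), for a total of $\Order(|\Z_s| \cdot n^4 \cdot \log{|\F|})$ bits. All four of these are $\Order(|\Z_s| \cdot n^8 \cdot \log{|\F|} \cdot |\sigma|)$ or smaller.

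The key step, and the one requiring the most care, is counting the instances of $\Auth$ invoked in \emph{Exchanging IC-Signed Values}. For each group $S_q \in \ShareSpec_{\Z_s}$, every ordered pair $(P_i, P_j)$ with $P_i, P_j \in S_q$ gives $\ICSig(P_i, P_j, P_k, s_{qi})$ for every $P_k \in \PartySet$; since a single secret $s$ is being shared, this is exactly one $\Auth$ instance per triple $(P_i,P_j,P_k)$ within $S_q$, so at most $n^3$ instances per group and $\Order(|\Z_s| \cdot n^3)$ instances in total over all $q$. (The linearity principle only reuses the evaluation points, supporting-verifier set, and combiner across such instances; it does not reduce their number, so this upper bound is what matters.) Multiplying by the $\Order(n^5 \cdot \log{|\F|} \cdot |\sigma|)$ cost of one $\Auth$ gives $\Order(|\Z_s| \cdot n^8 \cdot \log{|\F|} \cdot |\sigma|)$ bits for this phase. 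Summing all phases, the overall communication is $\Order(|\Z_s| \cdot n^8 \cdot \log{|\F|} \cdot |\sigma|)$ bits, as claimed. The only genuine subtlety is being careful that parties lying in several groups $S_q$ do not blow the $\Auth$-instance count past $\Order(|\Z_s| \cdot n^3)$, which is handled by summing the per-group bound $n^3$ over the $|\Z_s|$ groups rather than over parties.

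\begin{proof}[of Theorem \ref{thm:VSS}] The statement follows by combining Corollary (honest dealer, synchronous), Lemma \ref{lemma:VSSSynchronousCorruptDealer}, Lemma \ref{lemma:VSSAsynchronousHonestDealer}, Lemma \ref{lemma:VSSAsynchronousCorruptDealer}, and Lemma \ref{lemma:VSSCommunicationComplexity}. \end{proof}
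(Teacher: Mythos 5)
Your proposal follows essentially the same route as the paper's proof: it accounts phase by phase, identifies the $\Order(|\Z_s|\cdot n^3)$ invocations of $\Auth$ (three indices per group, summed over the $|\Z_s|$ groups) as the dominant term contributing $\Order(|\Z_s|\cdot n^8\cdot\log{|\F|}\cdot|\sigma|)$ bits via Theorem \ref{thm:ICP}, and checks the share distribution, the $\Order(n^2)$ $\OK$ broadcasts, the dealer's $\CandidateCoreSets$ broadcast, and the $\Order(|\Z_s|\cdot n^3)$ $\Reveal$ instances are all subsumed. Your estimate for the $\CandidateCoreSets$ message size is slightly looser than the paper's ($\Order(|\Z_s|\cdot n\cdot\log{|\F|})$ versus $\Order(|\Z_s|\cdot(n+\log{|\F|}))$), but this does not affect the overall bound.
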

 \begin{proof}
 In the protocol, $\D$ needs to send the share $s_q$ to {\it all} the parties in $S_q$. This incurs a total communication of $\Order(|\Z_s| \cdot n \cdot \log{|\F|})$ bits. There are
   $\Order(|\Z_s| \cdot n^3)$ instances of $\Auth$ invoked, to exchange IC-signed values, during the pairwise consistency tests. From Theorem \ref{thm:ICP},
   this incurs a total communication of 
   $\Order(|\Z_s| \cdot n^8 \cdot \log{|\F|} \cdot |\sigma|)$ bits. There are $\Order(n^2)$ $\OK$ messages which need to be broadcasted, which from Theorem \ref{thm:BC},
   incurs a total communication of
   $\Order(n^6 \cdot \log{n} \cdot |\sigma|)$ bits, since each $\OK$ message encodes the identity of two parties, requiring $\log{n}$ bits.
   Finally, $\D$ needs to broadcast a candidate $(\CandidateCoreSets, \D, S_p, \{ \W_{q}\}_{q = 1, \ldots, |\Z_s|}, \BroadcastSet, \{ s_q\}_{q \in \BroadcastSet})$ message, where
   $S_p, \BroadcastSet$ and each $\W_q$ can be represented by $\Order(n)$ bits. And corresponding to the indices in $\BroadcastSet$, the dealer $\D$ may end up broadcasting
   $\Order(|\Z_s|)$ shares. From Theorem \ref{thm:BC}, this incurs a total communication of $\Order(|\Z_s| \cdot (n^5 \cdot |\sigma| + n^4 \cdot \log{|\F|} \cdot |\sigma|))$ bits. While computing the output, $\Order(n^3 \cdot |\Z_s|)$ instances of $\Reveal$ are involved, which incur a communication of $\Order(|\Z_s| \cdot n^4 \cdot \log{|\F|})$ bits.
 \end{proof}
 
 Theorem \ref{thm:VSS} now follows from Lemma \ref{lemma:VSSSynchronousHonestDealer}-\ref{lemma:VSSCommunicationComplexity}. \\~\\
 \noindent {\bf Theorem \ref{thm:VSS}.}
 {\it Protocol $\VSS$ achieves the following, except with a probability of $\Order(|\ShareSpec_{\Z_s}| \cdot n^2 \cdot \errorAICP)$, where $\D$ has input $s \in \F$ for $\VSS$ and where 
 $\TimeVSS = \Delta + \TimeAuth + 2\TimeBC + \TimeReveal$. 
   \begin{myitemize}
   \item[--] If $\D$ is honest, then the following hold.
         \begin{myitemize}
         \item[--] {\bf $\Z_s$-correctness}: In a synchronous network, the honest parties output $[s]$ at time $\TimeVSS$. 
         \item[--] {\bf $\Z_a$-correctness}: In an asynchronous network, the honest parties eventually output  $[s]$.
         \item[--] {\bf Privacy}: Adversary's view remains independent of $s$ in any network.  
         \end{myitemize}   
   \item[--] If $\D$ is corrupt, then the following hold.
          \begin{myitemize}      
           \item[--]  {\bf $\Z_s$-commitment}: In a synchronous network, either no honest party obtains any output or there exists some
             $s^{\star} \in \F$, such that the parties output $[s^{\star}]$.
           Moreover, if any honest party computes its output corresponding to $[s^{\star}]$ at time $T$, then all honest parties
		       compute their output corresponding to $[s^{\star}]$ by time $T + \Delta$.  
	          \item[--]  {\bf $\Z_a$-commitment}: In an asynchronous network, either no honest party obtains any output or there exists some
             $s^{\star} \in \F$, such that the honest parties eventually output $[s^{\star}]$.      
          \end{myitemize}
   \item[--] {\bf Communication Complexity}: $\Order(|\Z_s| \cdot n^8 \cdot \log{|\F|} \cdot |\sigma|)$ bits are communicated by the honest parties.
   \end{myitemize}
}

\section{Network Agnostic Reconstruction Protocols and Secure Multicast}
\label{app:Rec}
This section presents our network-agnostic reconstruction protocols and secure multicast protocol, along with their
  properties. We start with the protocol $\RecShare$ 
 for reconstructing a designated share,
  presented in Fig \ref{fig:RecShare}.
\begin{protocolsplitbox}{$\RecShare([s], S_q, \ReceiverSet)$}{Network agnostic
 reconstruction protocol to reconstruct a designated share $[s]_q$. The above code is executed by each $P_i \in \PartySet$.}{fig:RecShare}
\justify
\begin{myitemize}
\item[--] {\bf Sending IC-signed Share to the Parties}: If $P_i \in \W_q$, then 
reveal $\ICSig(P_j, P_i, P_k, [s]_q)$ of every 
 $P_j \in \W_q$ to every $P_k \in \ReceiverSet \setminus \W_q$. Here $\W_q$ denotes the publicly known core-set corresponding to $S_q \in \ShareSpec_{\Z_s}$, as part of
  $[s]$.
\item[--] {\bf Computing Output}: If $P_i \in (\ReceiverSet \cap \W_q)$, then output $[s]_q$. 
 Else if $P_i \in \ReceiverSet \setminus \W_q$, then 
 check
 if there exists any $P_j \in \W_q$ and a value $s_{qj}$, such that $P_i$ has accepted
  $\ICSig(P_k, P_j, P_i, s_{qj})$ of every $P_k \in \W_q$. Upon finding such a $P_j$,
   output $[s]_q = s_{qj}$.\footnote{If there are multiple such parties $P_j$, then consider the one with the smallest index.}
\end{myitemize}
\end{protocolsplitbox}

We next prove the properties of the protocol $\RecShare$. \\~\\
\noindent {\bf Lemma \ref{lemma:RecShare}.}
{\it 
Let $s$ be a value which is linearly secret-shared with IC signatures, let $S_q \in \ShareSpec_{\Z_s}$ be a designated set and let $\ReceiverSet \subseteq \PartySet$ be a designated
 set of receivers. Then protocol $\RecShare$ achieves the following.
  \begin{myitemize}
  \item[--] {\bf $\Z_s$-correctness}: In a synchronous network, all honest parties in $\ReceiverSet$ output $[s]_q$ at time $\TimeRecShare = \TimeReveal$, 
   except with a probability of $\Order(n^2 \cdot \errorAICP)$. 
   \item[--]  {\bf $\Z_a$-correctness}: In an asynchronous network, all honest parties in $\ReceiverSet$ eventually output $[s]_q$, 
   except with a probability of $\Order(n^2 \cdot \errorAICP)$. 
     \item[--] {\bf Privacy}: If $\ReceiverSet$ consists of only honest parties, then the view of the adversary remains independent of $[s]_q$.
   \item[--] {\bf Communication Complexity}: $\Order(|\ReceiverSet| \cdot n^3 \cdot \log{|\F|})$ bits are communicated.   
  \end{myitemize}
}
\begin{proof}
 We first note that all honest parties in $(\ReceiverSet \cap \W_q)$ output $[s]_q$ correctly. So consider an {\it arbitrary} honest $P_i \in \ReceiverSet \setminus \W_q$.  
 We first show that $P_i$ indeed computes an output in the protocol, irrespective of the network type.

 Since $\Z_s$ satisfies
  the $\Q^{(1)}(\W_q, \Z_s)$ condition, it contains {\it at least} one {\it honest} party, say
   $P_j$. Since $P_j$ follows the protocol steps honestly, it reveals $\ICSig(P_k, P_j, P_i, [s]_q)$ of 
   every $P_k \in \W_q$  to $P_i$.
   From the {\it correctness} properties of ICP (see Theorem \ref{thm:ICP}), 
       it follows that $P_i$ will accept the IC-signatures
    $\ICSig(P_k, P_j, P_i, [s]_q)$, revealed by $P_j$, after time 
    $\TimeReveal$ in a {\it synchronous} network, or eventually in an {\it asynchronous} network.
    On the other hand, even if $P_k \in \W_q$ is {\it corrupt}, then also
   from the {\it non-repudiation} properties of ICP, it follows
   that $P_i$
    accepts $\ICSig(P_k, P_j, P_i, [s]_q)$, except with a probability $\errorAICP$,
    after time 
    $\TimeReveal$ in a {\it synchronous} network, or eventually in an {\it asynchronous} network.
    As there can be $\Order(n)$ {\it corrupt} parties in $\W_q$, from the union bound, it follows that 
    except with a probability $\Order(n \cdot \errorAICP)$, party $P_i$ will find a candidate party from $\W_q$, who reveals $[s]_q$, along with the IC-signature of all the parties in $\W_q$, 
     after time 
    $\TimeReveal$ in a {\it synchronous} network, or eventually in an {\it asynchronous} network.
    This is because the {\it honest} party in $\W_q$ always constitutes a candidate party.
        Now as there can be $\Order(n)$ parties in $\ReceiverSet \setminus \W_q$, it follows that 
    except with probability $\Order(n^2 \cdot \errorAICP)$, every honest party in $\ReceiverSet \setminus \W_q$ will 
    find a candidate party from $\W_q$, who reveals $[s]_q$, along with the IC-signature of all the parties in $\W_q$, 
     after time 
    $\TimeReveal$ in a {\it synchronous} network, or eventually in an {\it asynchronous} network.
    Hence all honest parties in $\ReceiverSet$ compute an output,  after time 
    $\TimeReveal$ in a {\it synchronous} network, or eventually in an {\it asynchronous} network, except 
    with a probability $\Order(n^2 \cdot \errorAICP)$.
    
    We next show that the output computed by all the honest parties is indeed correct. While this is trivially true for the parties in $(\ReceiverSet \cap \W_q)$, 
    consider an arbitrary {\it honest} party $P_i \in \ReceiverSet \setminus \W_q$.
    The above argument shows that $P_i$ computes an output in the protocol, irrespective of the network type. So let $P_i$ output $s'$. 
    We wish to show that $s' =  [s]_q$. From the protocol steps, since $P_i$ outputs $s'$, it implies that there exists some $P_j \in \W_q$, such that $P_i$ has accepted
    $\ICSig(P_k, P_j, P_i, s')$ of every $P_k \in \W_q$, revealed by $P_j$.  If $P_j$ is {\it honest}, then indeed $s' = [s]_q$, as one of the IC-signatures
    $\ICSig(P_k, P_j, P_i, s')$ is the same as $\ICSig(P_k, P_j, P_i, [s]_q)$, corresponding to the {\it honest} $P_k \in \W_q$, which is guaranteed to exist.
    So consider the case when $P_j$ is {\it corrupt}.
    Moreover, let $P_k \in \W_q$ be an {\it honest} party (which is guaranteed to exist).
    In order that $s' \neq [s]_q$, it must be the case that $P_i$ accepts  $\ICSig(P_k, P_j, P_i, s')$, revealed by $P_j$.  However, from
    the {\it unforgeability} property of ICP (see Theorem \ref{thm:ICP}), this can happen only with probability $\errorAICP$. 
    Now as there can be up to $\Order(n)$ corrupt parties in $\W_q$, from the union bound, it follows that the probability that $P_i$ outputs
    $s' \neq [s]_q$ is at most $\Order(n \cdot \errorAICP)$.
    And since there can be up to $\Order(n)$ parties in $\ReceiverSet \setminus \W_q$, it follows that except with probability at most 
    $\Order(n^2 \cdot \errorAICP)$, the output of every honest party in $\ReceiverSet$ is indeed $[s]_q$.
    
    Communication complexity follows from the communication complexity of $\Reveal$ (Theorem \ref{thm:ICP}) and the fact that $\Order(|\ReceiverSet| \cdot n^2)$ instances of $\Reveal$ are involved.
    And privacy follows from the privacy of ICP.
\end{proof}

Protocol $\Rec$ for reconstructing $s$ by a designated set of receivers is presented in Fig \ref{fig:Rec}.

\begin{protocolsplitbox}{$\Rec([s], \ReceiverSet)$}{ Network agnostic reconstruction protocol to 
 reconstruct a secret-shared value with IC signatures. The above code is executed by each party $P_i \in \PartySet$.}{fig:Rec}
\justify
\begin{myitemize}
\item[--] {\bf Reconstructing Individual Shares}: Corresponding to each $S_q \in \ShareSpec_{\Z_s}$, participate in an instance $\RecShare([s], S_q, \ReceiverSet)$
 of $\RecShare$, to let the parties in $\ReceiverSet$ reconstruct $[s]_q$.
\item[--] {\bf Computing Output}: If $P_i \in \ReceiverSet$, then output $\displaystyle s = \sum_{S_q \in \ShareSpec} [s]_q$.
\end{myitemize}
\end{protocolsplitbox}

The properties of the protocol $\Rec$ are stated in Lemma \ref{lemma:Rec}. \\~\\
\noindent {\bf Lemma \ref{lemma:Rec}.}
 {\it Let $s$ be a value which is linearly secret-shared with IC signatures and let $\ReceiverSet \subseteq \PartySet$ be a set of designated receivers.
 Then protocol $\Rec$ achieves the following.
  \begin{myitemize}
  \item[--] {\bf $\Z_s$-correctness}: In a synchronous network, all honest parties in $\ReceiverSet$ output $s$ at time $\TimeRec = \TimeRecShare$, 
   except with probability $\Order(|\ShareSpec_{\Z_s}| \cdot n^2 \cdot \errorAICP)$. 
   \item[--]  {\bf $\Z_a$-correctness}: In an asynchronous network, all honest parties in $\ReceiverSet$ eventually output $s$, 
   except with probability $\Order(|\ShareSpec_{\Z_s}| \cdot n^2 \cdot \errorAICP)$. 
    \item[--] {\bf Privacy}: If $\ReceiverSet$ consists of only honest parties, then the view of the adversary remains independent of $s$.
   \item[--] {\bf Communication Complexity}: $\Order(|\Z_s| \cdot |\ReceiverSet| \cdot n^3 \cdot \log{|\F|})$ bits are communicated.   
  \end{myitemize}
}
\begin{proof}
The proof follows from Lemma \ref{lemma:RecShare}, and the fact that $|\ShareSpec_{\Z_s}|$ instances of $\RecShare$ are invoked.
\end{proof}
\subsection{Network Agnostic Secure Multicast}
Protocol $\SVM$ is presented in Fig \ref{fig:SVM}.
\begin{protocolsplitbox}{$\SVM(\Sender, v, \ReceiverSet)$}{The network agnostic SVM protocol}{fig:SVM}
\justify
\begin{myitemize}
\item[--] {\bf Sending the Value to the Parties}: $\Sender$ on having the input $v$, invokes an instance of $\VSS$ with input $v$ and the parties in $\PartySet$ participates in this instance.
\item[--] {\bf Verifying if Sender has Committed Any Value}: Each $P_i \in \PartySet$ {\color{red} waits till its local time becomes $\TimeVSS$}, initializes a Boolean variable $\flag^{(\Sender, \ReceiverSet)}$ to $0$ and
 then do the following. 
   \begin{myitemize}
   \item[--] Upon computing an output in the $\VSS$ instance, set $\flag^{(\Sender, \ReceiverSet)}$ to $1$.
   \item[--] Upon setting $\flag^{(\Sender, \ReceiverSet)}$ to $1$, participate in an instance $\Rec([v], \ReceiverSet)$ of $\Rec$ to let the parties in $\ReceiverSet$ reconstruct $v$
   \end{myitemize}
\item[--] {\bf Computing Output}: Each $P_i \in \ReceiverSet$ upon computing an output $v$ during the instance $\Rec([v], \ReceiverSet)$, outputs $v$.
\end{myitemize}
\end{protocolsplitbox}

We next prove the properties of the protocol $\SVM$. \\~\\
\noindent {\bf Lemma \ref{lemma:SVM}.}
{\it Protocol $\SVM$ achieves the following, where $\Sender$ participates with input $v$ and where each honest party initializes $\flag^{(\Sender, \ReceiverSet)}$ to $0$.
\begin{myitemize}
\item[--] {\bf Synchronous Network}: If $\Sender$ is honest, then all honest parties set $\flag^{(\Sender, \ReceiverSet)}$ to $1$ at time $\TimeVSS$
 and except with probability $\Order(n^3 \cdot \errorAICP)$, all honest parties in $\ReceiverSet$ output $v$, after time
 $\TimeSVM = \TimeVSS + \TimeRec$.  Moreover, if $\ReceiverSet$ consists of only honest parties, then the view of $\Adv$ remains independent of $v$.
  If $\Sender$ is corrupt and some honest party sets $\flag^{(\Sender, \ReceiverSet)}$ to $1$, then there exists some $v^{\star}$ such that, 
  except with probability $\Order(n^3 \cdot \errorAICP)$, all honest parties in $\ReceiverSet$ output $v^{\star}$. Moreover, if any honest party sets 
  $\flag^{(\Sender, \ReceiverSet)}$ to $1$ at time $T$, 
  then all honest parties in $\ReceiverSet$ output $v^{\star}$ by time $T + 2\Delta$.
  \item[--] {\bf Asynchronous Network}: If $\Sender$ is honest, then except with probability $\Order(n^3 \cdot \errorAICP)$, all honest parties in $\ReceiverSet$ eventually output $v$.
   Moreover, if $\ReceiverSet$ consists of only honest parties, then the view of the adversary remains independent of $v$.
  If $\Sender$ is corrupt and some honest party sets $\flag^{(\Sender, \ReceiverSet)}$ to $1$, then there exists some $v^{\star}$ such that, 
  except with probability $\Order(n^3 \cdot \errorAICP)$, all honest parties in $\ReceiverSet$ eventually output $v^{\star}$.   
 \item[--] {\bf Communication Complexity}:  $\Order(|\Z_s| \cdot n^8 \cdot \log{|\F|} \cdot |\sigma|)$ bits are communicated.
\end{myitemize}
}
\begin{proof}
Let us first consider an {\it honest} $\Sender$. If the network is {\it synchronous}, then from the {\it $\Z_s$-correctness} of $\VSS$ in the {\it synchronous} network (Theorem \ref{thm:VSS}),
 at time $\TimeVSS$, all {\it honest} parties will output $[v]$. Consequently, each honest party will set $\flag^{(\Sender, \ReceiverSet)}$ to $1$ and start participating in the instance of $\Rec$.
  Hence, from the {\it $\Z_s$-correctness} of $\Rec$ in the {\it synchronous} network
   (Lemma \ref{lemma:Rec}), coupled with the modifications presented in Section \ref{sec:SuperPolynomial}, it follows that all honest
  parties  in $\ReceiverSet$ output $v$, except with probability $\Order(n^3 \cdot \errorAICP)$, after time $\TimeSVM = \TimeVSS + \TimeRec$.
  The privacy of $v$ follows from the {\it privacy} of $\VSS$ (Theorem \ref{thm:VSS}) and {\it privacy} of $\Rec$ (Lemma \ref{lemma:Rec}).
  The proof for the case of honest $\Sender$ in an {\it asynchronous} network is the same as above, except that we now rely on the 
   {\it $\Z_a$-correctness} of $\VSS$ in the {\it asynchronous} network (Theorem \ref{thm:VSS}) and the 
   {\it $\Z_a$-correctness} of $\Rec$ in the {\it asynchronous} network (Lemma \ref{lemma:Rec}).
   
   Next consider a {\it corrupt} $\Sender$. Let us first consider a {\it synchronous} network. Let $P_i$ be the {\it first} honest party who sets
    $\flag^{(\Sender, \ReceiverSet)}$ to $1$. This implies that there exists some $v^{\star}$, such that $P_i$ outputs $[v^{\star}]$. Let $T$ be the time when
    $P_i$ outputs $[v^{\star}]$ during the instance of $\VSS$ (and hence sets $\flag^{(\Sender, \ReceiverSet)}$ to $1$). 
    From the {\it $\Z_s$-commitment} of $\VSS$ in the {\it synchronous} network, it follows that {\it all} honest parties will output $[v^{\star}]$ 
    (and hence set $\flag^{(\Sender, \ReceiverSet)}$ to $1$), latest by time $T + \Delta$. 
    Hence all honest parties will start participating in the instance of $\Rec$, latest by time $T + \Delta$.
    Hence, from the {\it $\Z_s$-correctness} of $\Rec$ in the {\it synchronous} network
   (Lemma \ref{lemma:Rec}), coupled with the modifications presented in Section \ref{sec:SuperPolynomial}, it follows that all honest
  parties  in $\ReceiverSet$ output $v^{\star}$, except with probability $\Order(n^3 \cdot \errorAICP)$, by time $T + 2\Delta$.

  The proof for the case of a  {\it corrupt} $\Sender$ in an {\it asynchronous} network is the same as above, except that we now rely on the 
  {\it $\Z_a$-commitment} of $\VSS$ in the {\it asynchronous} network  (Theorem \ref{thm:VSS}) and the {\it $\Z_a$-correctness} of $\Rec$ in the {\it asynchronous} network (Lemma \ref{lemma:Rec}).
  
  The communication complexity follows from the communication complexity of $\VSS$ and $\Rec$.
\end{proof}

\section{Properties of the Protocol $\MDVSS$}
\label{app:MDVSS}
In this section, we prove the properties of the protocol $\MDVSS$ (see Fig \ref{fig:MDVSS} for the formal description).

We first show that if the network is {\it synchronous}, then all honest parties will compute a {\it common} candidate set of committed dealers $\CD$ set by time $\TimeSVM + 2\TimeBA$, such that {\it all} honest dealers 
 are guaranteed to be present in $\CD$.
 \begin{lemma}
 \label{lemma:MVSSSynchronousCD}
 If the network is synchronous and $P_{\ell} \in \PartySet$ is an honest dealer participating with input $s^{(\ell)}$, then all the following hold in $\MDVSS$, where $\Hon$ is the set of
  honest parties.
     \begin{myitemize}
     \item[--] Except with probability $\Order(n^3 \cdot \errorAICP)$, all the parties in $\Hon$ will have a common $\CD$ set by time $\TimeSVM + 2\TimeBA$, where 
     $\Hon \subseteq \CD$. \footnote{This automatically implies that $\PartySet \setminus \CD \in \Z_s$.}
     \item[--] Except with probability $\Order(n^3 \cdot \errorAICP)$, corresponding to every dealer $P_{\ell} \in \CD$ and every $S_q \in \ShareSpec_{|\Z_s|}$,
      every party in $(\Hon \cap S_q)$ will have a common share, say ${s^{\star}}^{(\ell)}_q$, which is the same as $s^{(\ell)}_q$, for an honest $P_{\ell}$.
         \end{myitemize}
 \end{lemma}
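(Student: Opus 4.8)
The plan is to derive part~1 purely from the synchronous honest-sender guarantee of $\SVM$ together with the $\Z_s$-security of $\BA$ (which, under our standing assumptions, is an error-free combination), and to derive part~2 by a case analysis on whether a dealer in $\CD$ is honest or corrupt, invoking the matching guarantee of $\SVM$ in each case, with the error probability consolidated through the local-dispute-control accounting of Section~\ref{sec:SuperPolynomial}. Throughout, let $Z^\star \in \Z_s$ be the set of corrupt parties, so that $\PartySet \setminus \Hon = Z^\star \in \Z_s$ and $\Hon \in \ShareSpec_{\Z_s}$.

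For part~1, I would first note that every $P_m \in \Hon$ acts as a dealer and, the network being synchronous, invokes the instances $\SVM(P_m, s^{(m)}_1, S_1), \ldots, \SVM(P_m, s^{(m)}_{|\Z_s|}, S_{|\Z_s|})$ at the start of $\MDVSS$. By the synchronous honest-sender part of Lemma~\ref{lemma:SVM}, every honest party sets $\flag^{(P_m, S_q)} = 1$ at local time $\TimeVSS$, for every $P_m \in \Hon$ and every $S_q \in \ShareSpec_{\Z_s}$. Since $\TimeVSS < \TimeSVM$, when the ACS stage begins at local time $\TimeSVM$ each honest party enters $\BA^{(m)}$ with input $1$ for every $P_m \in \Hon$; by $\Z_s$-validity of $\BA$ in a synchronous network (Theorem~\ref{thm:BA}) all honest parties obtain output $1$ from $\BA^{(m)}$ for each such $m$ within time $\TimeBA$. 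As $\PartySet \setminus \Hon \in \Z_s$, the set $\Hon$ is a valid candidate $\CD_i$ for each honest $P_i$, so every honest party feeds input $0$ to the remaining $\BA$ instances by local time $\TimeSVM + \TimeBA$, and by $\Z_s$-guaranteed liveness all $n$ instances terminate with a binary output by local time $\TimeSVM + 2\TimeBA$. By $\Z_s$-consistency of $\BA$ all honest parties agree on the output of each $\BA^{(j)}$, hence on the set $\CD$; and $\Hon \subseteq \CD$ because $\BA^{(m)}$ outputs $1$ for each $P_m \in \Hon$, which also gives $\PartySet \setminus \CD \subseteq Z^\star \in \Z_s$. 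No instance of ICP is used in this argument, so part~1 in fact holds with probability $1$, trivially meeting the stated bound.

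For part~2, fix $P_\ell \in \CD$ and $S_q \in \ShareSpec_{\Z_s}$; the share held by the parties in $\Hon \cap S_q$ for $P_\ell$ is the value they output in $\SVM(P_\ell, s^{(\ell)}_q, S_q)$. If $P_\ell \in \Hon$, the synchronous honest-sender part of Lemma~\ref{lemma:SVM} gives that, except with the error accounted below, all honest parties in $S_q$ output $s^{(\ell)}_q$, so we may take ${s^\star}^{(\ell)}_q = s^{(\ell)}_q$. If $P_\ell$ is corrupt, then $P_\ell \in \CD$ means $\BA^{(\ell)}$ output $1$; by $\Z_s$-validity of $\BA$ not all honest parties fed it input $0$, so some honest $P_h$ fed it $1$, whence $P_h$ had set $\flag^{(P_\ell, S_{q'})} = 1$ for all $q'$, in particular for $S_q$. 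The corrupt-sender part of Lemma~\ref{lemma:SVM} then supplies a value $v^\star$, which we take as ${s^\star}^{(\ell)}_q$, such that, again modulo the error below, all honest parties in $S_q$ output ${s^\star}^{(\ell)}_q$. Taking a union bound over the at most $n$ dealers in $\CD$ and the at most $|\ShareSpec_{\Z_s}|$ sets $S_q$: every error event occurs inside an instance of $\SVM$, hence inside an instance of $\VSS$ or $\Rec$, and by the local-dispute-control mechanism of Section~\ref{sec:SuperPolynomial} the unforgeability/non-repudiation violations across all such instances total only $\Order(n^3)$, for an overall error of $\Order(n^3 \cdot \errorAICP)$.

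The main obstacle I expect is precisely this last step: a per-$\SVM$ error of $\Order(n^3 \cdot \errorAICP)$ times the $\Order(n \cdot |\Z_s|)$ relevant $\SVM$ instances would overshoot the claimed bound, so the argument cannot treat each $\SVM$ in isolation and must instead route through the global dispute-control accounting. A lesser point to verify carefully is the timing, namely that the honest-dealer flags are set strictly before the ACS stage begins (which holds since $\TimeVSS < \TimeSVM$) and that the two-layer $\BA$ invocations of the ACS stage indeed complete within $2\TimeBA$ in a synchronous network.
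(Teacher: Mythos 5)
Your proof matches the paper's own argument in structure and substance: part~1 from the honest-sender guarantee of $\SVM$ plus the $\Z_s$-security of $\BA$, part~2 by splitting on whether $P_{\ell} \in \CD$ is honest or corrupt and applying the corresponding clause of Lemma~\ref{lemma:SVM} (with the corrupt case anchored by $\Z_s$-validity of $\BA^{(\ell)}$ forcing some honest party to have had the flag set), and the error consolidated through the dispute-control accounting of Section~\ref{sec:SuperPolynomial}. The obstacle you anticipated — that naive multiplication of per-$\SVM$ error by the $\Order(n|\Z_s|)$ instances overshoots — is exactly the right thing to worry about, and routing through the global dispute-control budget is exactly how the paper handles it. The timing bookkeeping ($\TimeVSS < \TimeSVM$, two waves of $\BA$ spanning $2\TimeBA$) is also correct.

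One small overclaim worth correcting: your remark that ``No instance of ICP is used in this argument, so part~1 in fact holds with probability~1'' is not quite right. Setting $\flag^{(P_m, S_q)} = 1$ requires the underlying $\VSS$ instance in $\SVM(P_m, s^{(m)}_q, S_q)$ to produce an output, and for an honest dealer that output computation requires parties in $S_q \setminus \W_q$ to accept IC-signatures revealed during $\Reveal$; the $\Z_s$-correctness of $\VSS$ in Theorem~\ref{thm:VSS} therefore carries a $\Order(|\ShareSpec_{\Z_s}| \cdot n^2 \cdot \errorAICP)$ error term, tamed to $\Order(n^3 \cdot \errorAICP)$ only via dispute control. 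So part~1 does depend on ICP and is \emph{not} error-free; you are inheriting an imprecision in the phrasing of Lemma~\ref{lemma:SVM}, which states the flag is set at $\TimeVSS$ without qualification. The stated $\Order(n^3 \cdot \errorAICP)$ bound for part~1 is still correct, just not for the reason you give. The paper's own proof attributes error to the \emph{value} of the $\SVM$ output rather than to the flag-setting, but the flag-setting is itself in the error budget.
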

 \begin{proof}
 Let $Z^{\star} \in \Z_s$ be the set of {\it corrupt} parties and let $\Hon = \PartySet \setminus Z^{\star}$ be the set of {\it honest} parties. 
  From the properties of $\SVM$ in the {\it synchronous} network (Lemma \ref{lemma:SVM}),
   it follows that every $P_i \in \Hon$ will set $\flag^{(P_{\ell}, S_q)}$ to $1$ at time $\TimeVSS$, corresponding to every
  $P_{\ell} \in \Hon$ and every $S_q \in \ShareSpec_{\Z_s}$. Moreover, at time $\TimeSVM$, corresponding to every
  $P_{\ell} \in \Hon$ and every $S_q \in \ShareSpec_{\Z_s}$, each $P_i \in (\Hon \cap S_q)$ computes the output $s^{(\ell)}_{qi}$ during the instance $\SVM^{(P_{\ell}, S_q)}$.
  Furthermore, except with probability $\Order(n^3 \cdot \errorAICP)$, the value $s^{(\ell)}_{qi}$ will be the same as $s^{(\ell)}_q$.
  Hence, corresponding to every $P_{\ell} \in \Hon$, every $P_i \in \Hon$ will start participating with input $1$ during the instance $\BA^{(\ell)}$, at time $\TimeSVM$.
  Hence from the {\it $\Z_s$-validity} of $\BA$ in the {\it synchronous} network (Theorem \ref{thm:BA}), at time $\TimeSVM + \TimeBA$,
   all the parties in $\Hon$ will obtain the output $1$ from the $\BA^{(\ell)}$ instances, corresponding to each $P_{\ell} \in \Hon$.
   Since $\PartySet \setminus \Hon \in \Z_s$, it follows that
   at time $\TimeSVM + \TimeBA$, all the parties in $\Hon$ start participating with input $0$ in any remaining instance $\BA^{(\star)}$ of $\BA$, for which no input is provided yet.
   Hence from the {\it $\Z_s$-security} of $\BA$ in the {\it synchronous} network (Theorem \ref{thm:BA}), all the parties in $\Hon$ will compute some output in all the $n$
   instances of $\BA^{(\star)}$ by time $\TimeSVM + 2\TimeBA$. Moreover, the outputs will be common for the parties in $\Hon$.
   Consequently, all the parties in $\Hon$ will have a common $\CD$ set at the time $\TimeSVM + 2\TimeBA$. 
   Moreover, $\Hon \subseteq \CD$, since $\CD$ includes all the dealers $P_{\ell}$ such that $\BA^{(\ell)}$ outputs $1$. And as shown above
   the $\BA^{(\ell)}$ instances corresponding to $P_{\ell} \in \Hon$ outputs $1$.
   
   Next, consider an {\it arbitrary} $P_{\ell} \in \CD$. This implies that at time $\TimeSVM + \TimeBA$, at least one party from $\Hon$, say $P_k$,
   has participated with input $1$ during the instance $\BA^{(\ell)}$. If not, then from the {\it $\Z_s$-validity} of $\BA$ in the {\it synchronous} network (Theorem \ref{thm:BA}),
   all the parties in $\Hon$ would have obtained the output $0$ from the instance $\BA^{(\ell)}$ at the time $\TimeSVM + 2\TimeBA$ and hence $P_{\ell} \not \in \CD$, which is a 
   {\it contradiction}. This implies that by the time $\TimeSVM + \TimeBA$, party $P_k$ has set $\flag^{(P_{\ell}, S_q)}$ to $1$ during the instance $\SVM(P_{\ell}, s^{(\ell)}_q, S_q)$,
    for $q = 1, \ldots, |\Z_s|$.
   So consider an {\it arbitrary} $S_q \in \ShareSpec_{\Z_s}$. From the properties of $\SVM$ in the {\it synchronous} network (Lemma \ref{lemma:SVM}),
   it follows that there exists some value ${s^{\star}}^{(\ell)}_q$, which is the same as $s^{(\ell)}_q$ for an {\it honest} $P_{\ell}$,
    such that except with probability $\Order(n^3 \cdot \errorAICP)$,
   all the parties in $\Hon$ output ${s^{\star}}^{(\ell)}_q$ during the instance $\SVM^{(P_{\ell}, S_q)}$ by time $\TimeSVM + \TimeBA + 2\Delta < \TimeSVM + 2\TimeBA$.     
 \end{proof}
 
We next show that if the network is {\it synchronous} and if an {\it honest} dealer from $\CD$
  broadcasts any set of candidate core-sets, then all honest parties will ``accept" the core-sets.
   Moreover, the dealer will {\it never} make public the share corresponding to the group from $\ShareSpec_{\Z_s}$ consisting of only honest parties while making public these core-sets.
 Furthermore, each honest dealer in $\CD$ will start making 
 public {\it at least} one candidate set of core-sets, namely the one computed with respect to the group from  $\ShareSpec_{\Z_s}$, 
  consisting of only honest parties. A consequence of all these properties is that if the dealer is honest, the adversary will not learn any information about the dealer's input. 
\begin{lemma}
\label{lemma: MVSSHonestDealerSynchronous}
If the network is synchronous and $P_{\ell} \in \CD$ is an {\it honest} dealer participating with input $s^{(\ell)}$, then all the following hold in $\MDVSS$ except with
 probability $\Order(n^3 \cdot \errorAICP)$, where $\Hon$ is the set of honest parties.
\begin{myitemize}
 \item[--]  If $S_p = \Hon$, then $P_{\ell}$ will broadcast
   $(\CandidateCoreSets, P_{\ell}, S_p, \{ \W^{(\ell)}_{p, q}\}_{q = 1, \ldots, |\Z_s|},  \BroadcastSet^{(\ell)}_p, \allowbreak \{ s^{(\ell)}_q\}_{q \in \BroadcastSet^{(\ell)}_p})$ at time
    $\TimeSVM + 2\TimeBA + \TimeAuth + \TimeBC$.
\item[--] If $P_{\ell}$ broadcasts any $(\CandidateCoreSets, P_{\ell}, S_p, \{ \W^{(\ell)}_{p, q}\}_{q = 1, \ldots, |\Z_s|}, \BroadcastSet^{(\ell)}_p, \{ s^{(\ell)}_q\}_{q \in \BroadcastSet^{(\ell)}_p})$ at time
 $T$, then every honest $P_i \in \PartySet$ will include $(P_{\ell}, S_p)$ to $\C_i$ at time $T + \TimeBC$. 
  Moreover, the following will hold.
    \begin{myitemize}
    \item[--] If $S_q = \Hon$, then $q \not \in \BroadcastSet^{(\ell)}_p$.
    \item[--] For $q = 1, \ldots, |\Z_s|$, each $\W^{(\ell)}_{p, q}$ will be either $S_q$ or $(S_p \cap S_q)$. Moreover, 
    $\Z_s$ will satisfy the $\Q^{(1)}(\W^{(\ell)}_{p, q}, \Z_s)$ condition.
    \item[--] If  $q \not \in \BroadcastSet^{(\ell)}_p$, then every honest $P_i \in S_q$ will have the share $s^{(\ell)}_q$.
    Moreover, every honest $P_i \in \W^{(\ell)}_{p, q}$ will have $\ICSig(P_j, P_i, P_k, s^{(\ell)}_q)$ of every $P_j \in \W^{(\ell)}_{p, q}$ for every $P_k \in \PartySet$.
    Furthermore, if any corrupt $P_j \in \W^{(\ell)}_{p, q}$ have $\ICSig(P_i, P_j, P_k, s'^{(\ell)}_q)$ of any honest $P_i \in \W^{(\ell)}_{p, q}$ for any $P_k \in \PartySet$, then
    $s'^{(\ell)}_q = s^{(\ell)}_q$ holds. Also, all the underlying IC-signatures will satisfy the linearity property.
    \end{myitemize}
 \item[--] The view of the adversary will be independent of $s^{(\ell)}$.
\end{myitemize}
\end{lemma}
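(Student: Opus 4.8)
\textbf{Proof plan for Lemma~\ref{lemma: MVSSHonestDealerSynchronous}.}
The plan is to reuse the synchronous-network analysis of $\VSS$ (Lemma~\ref{lemma:VSSSynchronousHonestDealer} and its proof in Appendix~\ref{app:VSS}), adapting it to the multi-dealer setting where shares are delivered via $\SVM$ instead of point-to-point sends. First I would invoke Lemma~\ref{lemma:MVSSSynchronousCD}: in a synchronous network, all honest parties hold a common $\CD$ with $\Hon \subseteq \CD$ by time $\TimeSVM + 2\TimeBA$, and for every $P_\ell \in \CD$ and every $S_q$, all parties in $\Hon \cap S_q$ hold a common share ${s^\star}^{(\ell)}_q$ which equals $s^{(\ell)}_q$ for an honest $P_\ell$; this costs $\Order(n^3 \cdot \errorAICP)$ error probability and is where the $\Order(n^3 \cdot \errorAICP)$ term in the statement comes from. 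Since $P_\ell \in \CD$ is honest, it starts the IC-signing step at time $\TimeSVM + 2\TimeBA$ with the correct shares $s^{(\ell)}_q$, following the linearity principle. By the $\Z_s$-correctness of $\Auth$ in the synchronous network (Theorem~\ref{thm:ICP}), every honest $P_i \in S_q$ receives $\ICSig(P_j, P_i, P_k, s^{(\ell)}_{qj})$ with $s^{(\ell)}_{qj} = s^{(\ell)}_{qi} = s^{(\ell)}_q$ from every honest $P_j \in S_q$ by time $\TimeSVM + 2\TimeBA + \TimeAuth$, so every honest party broadcasts $\OK^{(\ell)}(i,j)$ for every honest $P_j$ at that time; by $\Z_s$-validity of $\BC$ (Theorem~\ref{thm:BC}) these are delivered through regular-mode by time $\TimeSVM + 2\TimeBA + \TimeAuth + \TimeBC$. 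Hence the honest set $\Hon$ forms a clique in $G^{(\ell,i)}$ for every honest $P_i$ (and in $G^{(\ell,\ell)}$) at that time; since $\Hon \in \ShareSpec_{\Z_s}$, this is precisely when $P_\ell$ computes and broadcasts candidate core-sets with respect to $S_p = \Hon$, giving the first bullet.

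For the second bullet, I would replay the consistency-graph argument from the proof of Lemma~\ref{lemma:VSSSynchronousHonestDealer}: any edge present in $P_\ell$'s graph $G^{(\ell,\ell)}$ when it broadcasts the $\CandidateCoreSets$ message at time $T$ is, by $\Z_s$-validity/consistency/fallback-consistency of $\BC$ applied to the underlying $\OK^{(\ell)}(\star,\star)$ broadcasts, present in every honest $P_i$'s graph $G^{(\ell,i)}$ by time $T + \TimeBC$; likewise all honest parties receive the broadcast $\CandidateCoreSets$ message by time $T + \TimeBC$. So every condition $P_\ell$ checked (clique membership, the $\Q^{(1)}$ condition on $S_p \cap S_q$, the $\BroadcastSet^{(\ell)}_p$ assignments) is re-verified by $P_i$, hence $(P_\ell, S_p)$ is added to $\C_i$. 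The sub-claims are then routine: $S_q = \Hon$ implies $S_q$ (or $S_p \cap S_q$, which still satisfies $\Q^{(1)}$ by the $\Q^{(2,1)}$ condition when $S_p$ is some other honest-majority set) is a clique, so $q \notin \BroadcastSet^{(\ell)}_p$ and the share $s^{(\ell)}_q$ is never made public; each $\W^{(\ell)}_{p,q}$ is $S_q$ or $S_p \cap S_q$ with $\Q^{(1)}$ holding (for $S_q$ because $\Q^{(2)}(\PartySet,\Z_s)$ gives $\Q^{(1)}(S_q,\Z_s)$, for $S_p \cap S_q$ because $P_\ell$ explicitly checks it). The IC-signature claims for $q \notin \BroadcastSet^{(\ell)}_p$ follow exactly as in Lemma~\ref{lemma:VSSSynchronousHonestDealer}: honest $P_i \in \W^{(\ell)}_{p,q}$ received $\ICSig(P_j, P_i, P_k, s^{(\ell)}_{qj})$ from every $P_j \in \W^{(\ell)}_{p,q}$ (since the edge $(P_i, P_j)$ is in the graph), with $s^{(\ell)}_{qj} = s^{(\ell)}_q$ because $P_\ell$ is honest; unforgeability of ICP (Theorem~\ref{thm:ICP}) forces any corrupt $P_j$'s signature on an honest $P_i$'s value to be on $s^{(\ell)}_q$; and linearity holds because the parties invoke the $\Auth$ instances following the linearity principle.

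For the third bullet (privacy), the argument mirrors the privacy part of Lemma~\ref{lemma:VSSSynchronousHonestDealer}: the share $s^{(\ell)}_q$ corresponding to the all-honest group $S_q = \Hon$ is (i) never broadcast, since $q \notin \BroadcastSet^{(\ell)}_p$ whenever $P_\ell$ publishes core-sets (shown above, and $P_\ell \in \CD$ means it does publish at least the $S_p = \Hon$ core-sets); (ii) delivered only to honest parties by $\SVM$, whose privacy when $\ReceiverSet = S_q$ consists only of honest parties follows from Lemma~\ref{lemma:SVM}; (iii) exchanged only among honest parties during pairwise consistency tests, where privacy of ICP (Theorem~\ref{thm:ICP}) keeps it hidden; and (iv) revealed via IC-signatures only to parties in $S_q \setminus \W^{(\ell)}_{p,q}$, but when $S_q = \Hon$ these are all honest. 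Since $s^{(\ell)} = s^{(\ell)}_1 + \cdots + s^{(\ell)}_{|\Z_s|}$ and $s^{(\ell)}_q$ remains uniformly random from the adversary's view, $s^{(\ell)}$ does too. The main obstacle I anticipate is bookkeeping the timeline carefully --- in particular confirming that the timeout $\TimeSVM + 2\TimeBA + \TimeAuth + \TimeBC$ is exactly when an honest dealer is guaranteed to have a clique on $\Hon$ (so that it publishes at $S_p = \Hon$ rather than some earlier, possibly adversarially-influenced $S_p$ with a smaller index), and verifying that the ``$T + \TimeBC$'' gap for fallback-mode broadcast delivery does not push any honest party past the $\C_i$-computation timeout; both follow from the synchronous timing guarantees of $\BC$ and $\SVM$ but need to be stated with the right constants.
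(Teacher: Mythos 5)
Your proposal follows essentially the same approach as the paper's proof: invoke Lemma~\ref{lemma:MVSSSynchronousCD} to establish the common $\CD$ and common SVM outputs (which is indeed where the $\Order(n^3 \cdot \errorAICP)$ comes from), track the IC-signing phase starting at $\TimeSVM + 2\TimeBA$ via $\Z_s$-correctness of $\Auth$, obtain the $\Hon$-clique at $\TimeSVM + 2\TimeBA + \TimeAuth + \TimeBC$ via $\Z_s$-validity of $\BC$, then replay the consistency-graph propagation argument from Lemma~\ref{lemma:VSSSynchronousHonestDealer} for the second bullet, and conclude privacy from $\SVM$ privacy, the non-inclusion of the all-honest index in $\BroadcastSet^{(\ell)}_p$, and ICP privacy during pairwise tests. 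Two minor remarks: the paper obtains the slightly sharper bound that edges present in $G^{(\ell,\ell)}$ at time $T$ appear in every honest $G^{(\ell,i)}$ by $T + \Delta$ (since $\OK$ messages received by $P_\ell$ at time $T$ propagate by the $\Z_s$-fallback-consistency of $\BC$ within $\Delta$), not $T + \TimeBC$ — though your looser bound still suffices because the $\CandidateCoreSets$ message itself only arrives at $T + \TimeBC$; and your privacy item (iv) about revealing IC-signatures to parties in $S_q \setminus \W^{(\ell)}_{p,q}$ is not actually applicable in $\MDVSS$ (unlike $\VSS$, every party in $S_q$ already receives the share directly through $\SVM$, so there is no reveal-to-outsiders step), but including it is harmless.
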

\begin{proof}
 Let  $Z^\star \in \Z_s$ be the set of {\it corrupt} parties and let $\Hon = \PartySet \setminus Z^{\star}$ be the set of {\it honest} parties. 
   Since the dealer $P_{\ell}$ is {\it honest}, from Lemma \ref{lemma:MVSSSynchronousCD} it follows that corresponding to
   every $S_q \in \ShareSpec_{\Z_s}$, every party in $(\Hon \cap S_q)$ will have the share $s^{(\ell)}_{qi}$, by time $\TimeSVM + 2\TimeBA$, except with probability $\Order(n^3 \cdot \errorAICP)$, where
   $s^{(\ell)}_{qi} = s^{(\ell)}_q$.
  Now consider an {\it arbitrary} $S_q \in \ShareSpec_{\Z_s}$.
  At time $\TimeSVM + 2\TimeBA$, each party $P_i \in (S_q \cap \Hon)$ starts giving
   $\ICSig(P_i, P_j, P_k, s^{(\ell)}_{qi})$ to every $P_j \in S_q$, for every $P_k \in \PartySet$, where $s^{(\ell)}_{qi} = s^{(\ell)}_q$ holds.
   Then from the {\it $\Z_s$-correctness} of $\Auth$  in the {\it synchronous} network (Theorem \ref{thm:ICP}), 
   it follows that at time $\TimeSVM + 2\TimeBA + \TimeAuth$, each party $P_i \in (S_q \cap \Hon)$ will receive 
   $\ICSig(P_j, P_i, P_k, s^{(\ell)}_{qj})$ from every $P_j \in (S_q \cap \Hon)$, for every $P_k \in \PartySet$, such that $s^{(\ell)}_{qj} = s^{(\ell)}_{qi} = s^{(\ell)}_q$ holds. 
   Since $S_q$ is arbitrary, it follows that at time $\TimeSVM + 2\TimeBA + \TimeAuth$, every party $P_i \in \Hon$ broadcasts an $\OK^{(\ell)}(i, j)$ message, corresponding to every 
   $P_j \in \Hon$. From the {\it $\Z_s$-validity} of $\BC$ in the {\it synchronous} network (Theorem \ref{thm:BC}), it follows that
   these $\OK^{(\ell)}(i, j)$ messages are received by every party in $\Hon$ through {\it regular-mode} at time $\TimeSVM + 2\TimeBA + \TimeAuth + \TimeBC$.
   Consequently, the set $\Hon$ constitutes a clique in the consistency graph $G^{(\ell, i)}$ of every party $P_i \in \Hon$ at time $\TimeSVM + 2\TimeBA + \TimeAuth + \TimeBC$.
   Note that the set $\Hon \in \ShareSpec_{\Z_s}$. Let $S_p$ be the set from $\ShareSpec_{\Z_s}$, such that $S_p = \Hon$.
   From the protocol steps, it then follows that at time $\TimeSVM + 2\TimeBA + \TimeAuth + \TimeBC$, the dealer $P_{\ell}$  will compute core-sets $\W^{(\ell)}_{p, q}$ 
   for $q = 1, \ldots, |\Z_s|$ and broadcast-set $\BroadcastSet^{(\ell)}_p$ with respect to $S_p$ as follows.
	 \begin{myitemize}
	 \item[--] If $S_q$ constitutes a clique in the graph $G^{(\ell, \ell)}$, then $\W^{(\ell)}_{p, q}$ is set as $S_q$. 
	 \item[--] Else if $(S_p \cap S_q)$ constitutes a clique in $G^{(\ell, \ell)}$ and $\Z_s$ satisfies the $\Q^{(1)}(S_p \cap S_q, \Z_s)$ condition, 
	 then $\W^{(\ell)}_{p, q}$ is set as $(S_p \cap S_q)$.
	 \item[--] Else  $\W^{(\ell)}_{p, q}$ is set to $S_q$ and $q$ is included to $\BroadcastSet^{(\ell)}_p$. 	 
	 \end{myitemize}
After computing the core-sets and  broadcast-set, $P_{\ell}$ will broadcast 
 $(\CandidateCoreSets, P_{\ell}, S_p, \{ \W^{(\ell)}_{p, q}\}_{q = 1, \ldots, |\Z_s|}, \BroadcastSet^{(\ell)}_p, \{ s^{(\ell)}_q\}_{q \in \BroadcastSet^{(\ell)}_p})$ at time
  $\TimeSVM + 2\TimeBA + \TimeAuth + \TimeBC$. This proves the first part of the lemma.
  
 We next proceed to prove the second part of the lemma. So consider an {\it arbitrary} $S_p \in \ShareSpec_{\Z_s}$, such that $P_{\ell}$ compute core-sets $\W^{(\ell)}_{p, q}$ 
   for $q = 1, \ldots, |\Z_s|$ and broadcast-set $\BroadcastSet^{(\ell)}_p$ with respect to $S_p$ and broadcasts 
   $(\CandidateCoreSets, P_{\ell}, S_p, \{ \W^{(\ell)}_{p, q}\}_{q = 1, \ldots, |\Z_s|}, \BroadcastSet^{(\ell)}_p, \{ s^{(\ell)}_q\}_{q \in \BroadcastSet^{(\ell)}_p})$ at time $T$. 
   This means at time $T$, the parties in $S_p$ constitute a clique in the graph $G^{(\ell, \ell)}$. 
   We also note that $T \geq \TimeSVM + 2\TimeBA + \TimeAuth + \TimeBC$. This is because any instance of $\BC$ takes at least $\TimeBC$ time in a {\it synchronous} network to generate an output. 
   And the parties in $\Hon$ start participating in any $\BC$ instance invoked for broadcasting any $\OK^{(\ell)}(\star, \star)$ message, only {\it after} time $\TimeSVM + 2\TimeBA + \TimeAuth$.
    Consequently,
    any $\OK^{(\ell)}(\star, \star)$ message received by $P_{\ell}$, must be {\it after} time $\TimeSVM + 2\TimeBA + \TimeAuth + \TimeBC$.
    We also note that any edge $(P_j, P_k)$ which is present in the graph $G^{(\ell, \ell)}$ of $P_{\ell}$ at time $T$, is bound to be present in the graph $G^{(\ell, i)}$ of every $P_i \in \Hon$, latest
    by time $T + \Delta$. This is because the edge $(P_j, P_k)$ is added to $G^{(\ell, \ell)}$ upon the receipt of $\OK^{(\ell)}(j, k)$ and $\OK^{(\ell)}(k, j)$ messages from the broadcast of $P_j$
    and $P_k$ respectively. And from the {\it $\Z_s$-validity, $\Z_s$-consistency} and {\it $\Z_s$-fallback consistency} of $\BC$ in the {\it synchronous} network, these $\OK^{(\ell)}(\star, \star)$
    messages will be received by every party $P_i \in \Hon$, latest by time $T + \Delta$.
        Since $P_{\ell}$ is assumed to be {\it honest}, it follows that the sets $\{ \W^{(\ell)}_{p, q}\}_{q = 1, \ldots, |\Z_s|}$ and $\BroadcastSet^{(\ell)}_p$ satisfy the following properties.
    \begin{myitemize}
   	\item[--] If $S_q$ constitutes a clique in the graph $G^{(\ell, \ell)}$, then $\W^{(\ell)}_{p, q}$ is set as $S_q$. 
	 \item[--] Else if $(S_p \cap S_q)$ constitutes a clique in $G^{(\ell, \ell)}$ and $\Z_s$ satisfies the $\Q^{(1)}(S_p \cap S_q, \Z_s)$ condition,
	  then $\W^{(\ell)}_{p, q}$ is set as $(S_p \cap S_q)$.
	 \item[--] Else  $\W^{(\ell)}_{p, q}$ is set to $S_q$ and $q$ is included to $\BroadcastSet^{(\ell)}_p$. 	 
    \end{myitemize}
    We also note that if $S_q = \Hon$, then $q \not \in \BroadcastSet^{(\ell)}_p$ and consequently, $P_{\ell}$ will {\it not} make the share $s^{(\ell)}_q$ public.
    This is because $T \geq \TimeSVM + 2\TimeBA + \TimeAuth + \TimeBC$. And as shown in the proof of the first part, the set $\Hon$ will constitute a clique in the graph $G^{(\ell, \ell)}$ at time 
    $ \TimeSVM + 2\TimeBA + \TimeAuth + \TimeBC$. Since $P_{\ell}$ is {\it honest}, from the {\it $\Z_s$-validity} of $\BC$ in the {\it synchronous} network, it follows that all the parties in
    $\Hon$ will receive $(\CandidateCoreSets, P_{\ell}, S_p, \{ \W^{(\ell)}_{p, q}\}_{q = 1, \ldots, |\Z_s|}, \BroadcastSet^{(\ell)}_p, \{ s^{(\ell)}_q\}_{q \in \BroadcastSet^{(\ell)}_p})$
    through the regular-mode at time $T + \TimeBC$. Moreover, each party $P_i \in \Hon$ will include $(P_{\ell}, S_p)$ to the set $\C_i$ at time $T + \TimeBC$. 
    This is because since $P_{\ell}$ has computed the sets 
     $\{ \W^{(\ell)}_{p, q}\}_{q = 1, \ldots, |\Z_s|}$ and $\BroadcastSet^{(\ell)}_p$ {\it honestly}, these sets will pass all the verifications for each $P_i \in \Hon$ at time $T + \Delta$.
    
     Next consider an {\it arbitrary} $q \not \in \BroadcastSet^{(\ell)}_p$. This implies that $P_{\ell}$ has set $\W^{(\ell)}_{p, q}$ as $(S_p \cap S_q)$ because
     the parties in $(S_p \cap S_q)$ constitutes a clique in the graph $G^{(\ell, \ell)}$. Now consider an {\it arbitrary} $P_i \in (\Hon \cap S_q)$. 
        This implies that $P_i$ has computed $s^{(\ell)}_{qi}$ during the instance $\SVM(P_{\ell}, s^{(\ell)}_q, S_q)$
         at time $\TimeSVM$, which will be the same as $s^{(\ell)}_q$, since $P_{\ell}$ is {\it honest}.
     Next consider {\it arbitrary} $P_i, P_j \in \W^{(\ell)}_{p, q}$, such that $P_j \neq P_i$. This implies that the edge $(i, j)$ is present in the graph $G^{(\ell, \ell)}$, which further implies that
     $P_i$ has broadcasted the message $\OK^{(\ell)}(i, j)$. This further implies that $P_i$ must have received 
     $\ICSig(P_j, P_i, P_k, s^{(\ell)}_{qj})$ from $P_j$, for every $P_k \in \PartySet$, such that $s^{(\ell)}_{qj} = s^{(\ell)}_{qi}$ holds.
     Since $s^{(\ell)}_{qi} = s^{(\ell)}_q$, it follows that $\ICSig(P_j, P_i, P_k, s^{(\ell)}_{qj})$ is the same as $\ICSig(P_j, P_i, P_k, s^{(\ell)}_{q})$.
     On the other hand, consider an {\it arbitrary} $P_j \in \W^{(\ell)}_{p, q}$, such that $P_j$ is {\it corrupt} and where $P_j$ has received
      $\ICSig(P_i, P_j, P_k, s'^{(\ell)}_{q})$ from $P_i$, for any $P_k  \in \PartySet$.
      Then from the protocol steps, it follows that $s'^{(\ell)}_{q} = s^{(\ell)}_{qi}$, since $P_i$ gives the IC-signature on the share $s^{(\ell)}_{qi}$, received from $P_{\ell}$.
      And since $s^{(\ell)}_{qi} = s^{(\ell)}_q$, it follows that  $\ICSig(P_i, P_j, P_k, s'^{(\ell)}_{q})$ is the same as  $\ICSig(P_i, P_j, P_k, s^{(\ell)}_{q})$.
      The linearity of the underlying IC-signatures follow from the fact the parties follow the linearity principle while generating IC-signatures.
      
      Finally, the privacy of $s^{(\ell)}$ follows from the fact that throughout the protocol, adversary {\it does not} learn anything about the share $s^{(\ell)}_q$, provided $S_q = \Hon$.
      Namely, during the instance $\SVM(P_{\ell}, s^{(\ell)}_q, S_q)$ where
       $(S_q \cap Z^{\star}) = \emptyset$, the view of the adversary remains independent of $s^{(\ell)}_q$, which follows from the privacy of $\SVM$ (Lemma \ref{lemma:SVM}). 
      Moreover, as shown above, $P_{\ell}$ never makes public the share  $s^{(\ell)}_q$, as $q \not \in \BroadcastSet^{(\ell)}_p$.
      Furthermore, since the set $\Hon$ will consists of {\it only} honest parties, from the {\it privacy} of ICP (see Theorem \ref{thm:ICP}), it follows that the adversary does not learn any additional information about
      $s^{(\ell)}_q$, when the parties in $\Hon$ exchange IC-signed $s^{(\ell)}_q$ during the pairwise consistency tests. 
\end{proof}

We next show that if the network is {\it synchronous}, then any candidate set of core-sets ``accepted" on the behalf of a {\it corrupt} dealer
 by any {\it honest} party at the time $T$, is bound to be accepted by all honest parties, latest by time $T + \Delta$.
\begin{lemma}
\label{lemma: MVSSHonestPartiesSynchronous}
If the network is synchronous and if in $\MDVSS$ any honest party $P_i$ receives 
 $(\CandidateCoreSets, P_{\ell}, S_p, \{ \W^{(\ell)}_{p, q}\}_{q = 1, \ldots, |\Z_s|},  \BroadcastSet^{(\ell)}_p,  \{ s^{(\ell)}_q\}_{q \in \BroadcastSet^{(\ell)}_p})$ from the broadcast of any 
  corrupt dealer $P_{\ell} \in \CD$
  and includes $(P_{\ell}, S_p)$ to $\C_i$ at time $T$, then all honest parties $P_j$ will receive 
   $(\CandidateCoreSets, P_{\ell}, S_p, \{ \W^{(\ell)}_{p, q}\}_{q = 1, \ldots, |\Z_s|},  \BroadcastSet^{(\ell)}_p,  \{ s^{(\ell)}_q\}_{q \in \BroadcastSet^{(\ell)}_p})$ from the broadcast of $P_{\ell}$
   and include  $(P_{\ell}, S_p)$ to $\C_j$ by time $T + \Delta$.
   Moreover, for $q = 1, \ldots, |\Z_s|$, the following holds, except with probability $\Order(n^3 \cdot \errorAICP)$.
     \begin{myitemize}
     \item[--] $\W^{(\ell)}_{p, q}$ is either $S_q$ or $(S_p \cap S_q)$. Moreover, $\Z_s$ satisfies the
     $\Q^{(1)}(\W^{(\ell)}_{p, q} , \Z_s)$ condition.     
     \item[--] If  $q \not \in \BroadcastSet^{(\ell)}_p$, then every honest $P_i \in S_q$ will have a common share, say ${s^{\star}}^{(\ell)}_q$.
    Moreover, every honest $P_i \in \W^{(\ell)}_{p, q}$ will have $\ICSig(P_j, P_i, P_k, {s^{\star}}^{(\ell)}_q)$ of every $P_j \in \W^{(\ell)}_{p, q}$ and for every $P_k \in \PartySet$.
    Furthermore, if any corrupt $P_j \in \W^{(\ell)}_{p, q}$ has $\ICSig(P_i, P_j, P_k, s'^{(\ell)}_q)$ of any honest $P_i \in \W^{(\ell)}_{p, q}$ for any $P_k \in \PartySet$, then
    $s'^{(\ell)}_q = {s^{\star}}^{(\ell)}_q$ holds. Also, all the underlying IC-signatures will satisfy the linearity principle.
     \end{myitemize}   
\end{lemma}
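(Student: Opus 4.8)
The plan is to mirror the proof of Lemma~\ref{lemma:VSSSynchronousCorruptDealer}, adapted to the multi-dealer setting in which shares travel through instances of $\SVM$ rather than being sent directly. Throughout, let $Z^\star \in \Z_s$ be the set of corrupt parties and $\Hon = \PartySet \setminus Z^\star$, and fix the corrupt dealer $P_\ell \in \CD$ together with the honest party $P_i$ that includes $(P_\ell, S_p)$ in $\C_i$ at time $T$, after receiving $(\CandidateCoreSets, P_\ell, S_p, \{\W^{(\ell)}_{p,q}\}_{q=1,\ldots,|\Z_s|}, \BroadcastSet^{(\ell)}_p, \{s^{(\ell)}_q\}_{q \in \BroadcastSet^{(\ell)}_p})$ from the broadcast of $P_\ell$.

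First I would establish the propagation claim. The $\CandidateCoreSets$ message reaches $P_i$ through an instance of $\BC$; by $\Z_s$-Consistency and $\Z_s$-Fallback Consistency of $\BC$ in the synchronous network (Theorem~\ref{thm:BC}), every honest party receives the same message by time $T + \Delta$. Including $(P_\ell, S_p)$ in $\C_i$ means $P_i$ verified, for each $q$: either $q \in \BroadcastSet^{(\ell)}_p$ with $\W^{(\ell)}_{p,q} = S_q$, or $q \notin \BroadcastSet^{(\ell)}_p$ with $\W^{(\ell)}_{p,q} \in \{S_q, S_p \cap S_q\}$ forming a clique in $G^{(\ell,i)}$ (and, in the latter case, $\Q^{(1)}(S_p \cap S_q, \Z_s)$ checked). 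Edges of $G^{(\ell,i)}$ are added upon receipt of matching pairs of $\OK^{(\ell)}(\cdot,\cdot)$ messages, each delivered via $\BC$; hence by $\Z_s$-Validity, $\Z_s$-Consistency and $\Z_s$-Fallback Consistency of $\BC$, every edge present in $G^{(\ell,i)}$ at time $T$ is present in $G^{(\ell,j)}$ for every honest $P_j$ by time $T + \Delta$. Since the remaining checks are local and deterministic, every honest $P_j$ includes $(P_\ell, S_p)$ in $\C_j$ by time $T + \Delta$. The structural claim then follows immediately: if $\W^{(\ell)}_{p,q} = S_q$, then $\Q^{(1)}(S_q, \Z_s)$ holds because $\Q^{(2)}(\PartySet, \Z_s)$ forbids any set of $\Z_s$ from covering $S_q = \PartySet \setminus Z_q$; and if $\W^{(\ell)}_{p,q} = S_p \cap S_q$, then $\Q^{(1)}(\W^{(\ell)}_{p,q}, \Z_s)$ is precisely the condition that was verified.

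Next I would handle the share and signature claim for $q \notin \BroadcastSet^{(\ell)}_p$. Since $P_\ell \in \CD$, the instance $\BA^{(\ell)}$ output $1$, so by $\Z_s$-Validity of $\BA$ at least one honest party voted $1$ for $P_\ell$, hence set $\flag^{(P_\ell, S_q)} = 1$ for all $q$; by the verifiability guarantee of $\SVM$ for a corrupt sender (Lemma~\ref{lemma:SVM}), there is a value ${s^\star}^{(\ell)}_q$ such that, except with probability $\Order(n^3 \cdot \errorAICP)$, every honest party in $S_q$ outputs ${s^\star}^{(\ell)}_q$ from $\SVM(P_\ell, s^{(\ell)}_q, S_q)$. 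Now $\W^{(\ell)}_{p,q}$ is a clique in $G^{(\ell,i)}$, so for any two honest $P_a, P_b \in \W^{(\ell)}_{p,q}$ the edge $(P_a, P_b)$ being present means $P_a$ broadcast $\OK^{(\ell)}(a,b)$, hence received $\ICSig(P_b, P_a, P_k, s^{(\ell)}_{qb})$ for all $P_k$ and checked $s^{(\ell)}_{qa} = s^{(\ell)}_{qb}$; both being honest, $s^{(\ell)}_{qa} = s^{(\ell)}_{qb} = {s^\star}^{(\ell)}_q$, so $P_a$ holds $\ICSig(P_b, P_a, P_k, {s^\star}^{(\ell)}_q)$. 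If a corrupt $P_b \in \W^{(\ell)}_{p,q}$ holds $\ICSig(P_a, P_b, P_k, s'^{(\ell)}_q)$ of an honest $P_a$, then since an honest signer $P_a$ only ever signs the value $s^{(\ell)}_{qa} = {s^\star}^{(\ell)}_q$ it received from $\SVM$, the semantics of $\Auth$ force $s'^{(\ell)}_q = {s^\star}^{(\ell)}_q$. Linearity of the underlying IC-signatures holds because the parties follow the linearity principle when generating them.

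I expect the main obstacle to be the timing bookkeeping in the propagation argument: a corrupt party may broadcast an $\OK^{(\ell)}$ message, or $P_\ell$ may broadcast the $\CandidateCoreSets$ message, in such a way that one honest party obtains it through regular mode while another obtains it only through fallback mode, so one must argue carefully — invoking precisely the $\Z_s$-Consistency and $\Z_s$-Fallback Consistency of $\BC$ — that this gap in view is at most $\Delta$, so that all honest parties agree on the validity of the announced core sets within $\Delta$ of the first honest party. The remaining ingredients (the $\SVM$ verifiability for a corrupt sender and the ICP unforgeability, each contributing $\Order(n^3 \cdot \errorAICP)$ error once the local dispute control of Section~\ref{sec:SuperPolynomial} caps the total number of cheating attempts at $\Order(n^3)$) are then combined by a union bound that keeps the overall error at $\Order(n^3 \cdot \errorAICP)$.
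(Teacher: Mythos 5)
Your proposal is correct and follows essentially the same route as the paper's proof: propagate the $\CandidateCoreSets$ message and the consistency-graph edges by the $\Z_s$-validity, $\Z_s$-consistency, and $\Z_s$-fallback consistency of $\BC$ (closing the $\Delta$ gap), then use $P_\ell \in \CD$ together with the verifiability of $\SVM$ for a corrupt sender to obtain the common share ${s^{\star}}^{(\ell)}_q$, and finally use the $\OK^{(\ell)}$-edge semantics and the linearity principle for the IC-signature claims. Your treatment is slightly more careful than the paper's in one place: you explicitly handle both cases $\W^{(\ell)}_{p,q} \in \{S_q, S_p \cap S_q\}$ when $q \notin \BroadcastSet^{(\ell)}_p$, whereas the paper's proof text asserts only the $(S_p \cap S_q)$ case, which is a harmless oversimplification since the clique argument is identical in both.
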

\begin{proof}
The proof follows very closely the proof of Lemma \ref{lemma: MVSSHonestDealerSynchronous}. Let $Z^{\star} \in \Z_s$ be the set of {\it corrupt} parties and let $\Hon = \PartySet \setminus Z^{\star}$
 be the set of {\it honest} parties. Now consider 
 an {\it arbitrary corrupt} dealer $P_{\ell} \in \CD$ and an {\it arbitrary} $P_i \in \Hon$, such that 
  $P_i$ receives 
  $(\CandidateCoreSets, P_{\ell}, S_p, \{ \W^{(\ell)}_{p, q}\}_{q = 1, \ldots, |\Z_s|},  \BroadcastSet^{(\ell)}_p,  \{ s^{(\ell)}_q\}_{q \in \BroadcastSet^{(\ell)}_p})$ from the broadcast of
   $P_{\ell}$
  and includes $(P_{\ell}, S_p)$ to $\C_i$ at time $T$. Now consider another {\it arbitrary} $P_j \in \Hon$, such that $P_j \neq P_i$.
  From the {\it $\Z_s$-consistency} and {\it $\Z_s$-fallback consistency} of $\BC$ in the {\it synchronous} network, it follows that $P_j$ is bound to receive
  $(\CandidateCoreSets, P_{\ell}, S_p, \{ \W^{(\ell)}_{p, q}\}_{q = 1, \ldots, |\Z_s|},  \BroadcastSet^{(\ell)}_p,  \{ s^{(\ell)}_q\}_{q \in \BroadcastSet^{(\ell)}_p})$
  from the broadcast of $P_{\ell}$, latest by time $T + \Delta$. We wish to show that $P_j$ will include $(P_{\ell}, S_p)$ to $\C_j$, by time $T + \Delta$.
  For this, we note that since $P_i$ has included $(P_{\ell}, S_p)$ to $\C_i$ at time $T$, 
  {\it all} the following conditions hold for $P_i$ at time $T$, 
    for $q = 1, \ldots, |\Z_s|$.
      \begin{myitemize}
      \item[--] If $q \in \BroadcastSet^{(\ell)}_p$, then the set $\W^{(\ell)}_{p, q} = S_q$.
      \item[--] If $(q \not \in \BroadcastSet^{(\ell)}_p)$, then $\W^{(\ell)}_{p, q}$ is either $S_q$ or  $(S_p \cap S_q)$, such that:
         \begin{myitemize}
         \item[--] If $\W^{(\ell)}_{p, q} = S_q$, then $S_q$ constitutes a clique in $G^{(\ell, i)}$.
         \item[--] Else if $\W^{(\ell)}_{p, q} = (S_p \cap S_q)$, then 
      $(S_p \cap S_q)$ constitutes a clique in $G^{(\ell, i)}$ and 
      $\Z_s$ satisfies the $\Q^{(1)}(S_p \cap S_q, \Z_s)$ condition.      
         \end{myitemize}
           \end{myitemize}
We claim that all the above conditions will hold {\it even} for $P_j$ by time $T + \Delta$. This is because {\it all} the edges which are present in the consistency graph $G^{(\ell, i)}$ 
 at time $T$ are bound to be present in the consistency graph $G^{(\ell, j)}$ by time $T + \Delta$.
  This follows from the {\it $\Z_s$-validity, $\Z_s$-consistency} and {\it $\Z_s$-fallback consistency} of $\BC$ in the {\it synchronous} network (see Theorem \ref{thm:BC})
  and the fact that the edges in the graph $G^{(\ell, i)}$ are based on $\OK^{(\ell)}(\star, \star)$ messages, which are received through various $\BC$ instances.

Next consider an {\it arbitrary} $q \not \in \BroadcastSet^{(\ell)}_p$. This implies that $\W^{(\ell)}_{p, q}$ is set as $(S_p \cap S_q)$ and
     all the parties in $(S_p \cap S_q)$ constitute a clique in the consistency graph of every party in $\Hon$. 
    From the properties of $\SVM$ in the {\it synchronous} network, all honest parties in $S_q$ compute a common output, say ${s^{\star}}^{(\ell)}_q$, during the instance 
    $\SVM(P_{\ell}, s^{(\ell)}_q, S_q)$.      
     Next consider an {\it arbitrary} $P_i \in (\Hon \cap \W^{(\ell)}_{p, q})$ and any {\it arbitrary} $P_j \in \W^{(\ell)}_{p, q}$.
     This implies that $P_i$ has broadcasted the message $\OK^{(\ell)}(i, j)$, after receiving 
      $\ICSig(P_j, P_i, P_k, s^{(\ell)}_{qj})$ from $P_j$, for every $P_k \in \PartySet$,
      and verifying that the share $s^{(\ell)}_{qj}$ is the {\it same} as the one, computed during the instance $\SVM^{(P_{\ell}, s^{(\ell)}_q, S_q)}$.
     Since the share computed by $P_i$ during $\SVM(P_{\ell}, s^{(\ell)}_q, S_q)$ is ${s^{\star}}^{(\ell)}_q$, it follows that
      $\ICSig(P_j, P_i, P_k, s^{(\ell)}_{qj})$ is the same as $\ICSig(P_j, P_i, P_k, {s^{\star}}^{(\ell)}_q)$.
      On the other hand, since $P_i$ gives its IC-signature on ${s^{\star}}^{(\ell)}_q$ to {\it every} $P_j \in S_q$, it follows that
      if any {\it corrupt} $P_j \in \W^{(\ell)}_{p, q}$ has $\ICSig(P_i, P_j, P_k, s'^{(\ell)}_q)$ from $P_i$ for any $P_k \in \PartySet$, then
      $s'^{(\ell)}_q = {s^{\star}}^{(\ell)}_q$ holds. The linearity of the underlying IC-signatures follow from the fact that the parties follow the linearity principle while generating the IC-signatures.
\end{proof}

Now based on the previous two lemmas, we show that in a {\it synchronous} network, all honest parties will output a ``legitimate"
 set of parties $\Core$ after time $\TimeSVM + \TimeAuth + 2\TimeBC + 6\TimeBA$, such that {\it at least} one honest party is present in $\Core$.
  And corresponding to every party in $\Core$, there exists some value, which is linearly secret-shared with IC-signatures. Moreover, the values corresponding to the honest parties remain private.
 \begin{lemma}
 \label{lemma:MVSSSynchronousProperties}
  If the network is synchronous, then in $\MDVSS$, except with probability $\Order(n^3 \cdot \errorAICP)$,
   at the time $\TimeMDVSS = \TimeSVM  + \TimeAuth + 2\TimeBC + 6\TimeBA$, all honest parties output a common set $\Core$, 
   such that at least one honest party will be present in $\Core$.
   Moreover, corresponding to every $P_{\ell} \in \Core$, there exists some ${s^{\star}}^{(\ell)}$, where ${s^{\star}}^{(\ell)} = s^{(\ell)}$ for an honest $P_{\ell}$,
   which is the input of $P_{\ell}$ for $\MDVSS$,
       such that the values $\{  {s^{\star}}^{(\ell)}  \}_{P_{\ell} \in \Core}$ are linearly secret-shared with IC-signatures. Furthermore, 
       if $P_{\ell}$ is honest, then adversary's view is independent of $s^{(\ell)}$.
 \end{lemma}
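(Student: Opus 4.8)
The plan is to layer the analysis of the two-dimensional ACS (which selects $\qcore$ and then $\Core$) on top of the three preceding synchronous lemmas. First I would invoke Lemma~\ref{lemma:MVSSSynchronousCD}: except with probability $\Order(n^3\cdot\errorAICP)$, by absolute time $\TimeSVM+2\TimeBA$ all honest parties hold a common committed-dealer set $\CD$ with $\Hon\subseteq\CD$ (so $\PartySet\setminus\CD\in\Z_s$), and for every $P_\ell\in\CD$ and every $S_q$ the honest parties of $S_q$ hold a common share ${s^{\star}}^{(\ell)}_q$, equal to $s^{(\ell)}_q$ when $P_\ell$ is honest. Tracking absolute time, the IC-signed pairwise-consistency exchange then finishes by $\TimeSVM+2\TimeBA+\TimeAuth$, the $\OK^{(\ell)}(\cdot,\cdot)$ broadcasts and the consistency graphs $G^{(\ell,i)}$ are in place by $\TimeSVM+2\TimeBA+\TimeAuth+\TimeBC$, and --- applying Lemma~\ref{lemma: MVSSHonestDealerSynchronous} to the group $S_p=\Hon\in\ShareSpec_{\Z_s}$ --- every honest dealer $P_\ell\in\CD$ broadcasts a valid $\CandidateCoreSets$ message for $S_p=\Hon$ at that time, which every honest party accepts (puts $(P_\ell,S_p)$ into $\C_i$) by time $\TimeSVM+2\TimeBA+\TimeAuth+2\TimeBC$.

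For the first BA layer, since $\Hon\subseteq\CD$ and $\CD\setminus\Hon=Z^{\star}\in\Z_s$, every honest party inputs $1$ to $\BA^{(1,p)}$ with $p$ the index of $\Hon$; by $\Z_s$-validity and $\Z_s$-guaranteed liveness of $\BA$ (Theorem~\ref{thm:BA}) that instance outputs $1$ after one $\TimeBA$, after which all honest parties input $0$ to the remaining instances of $\BA^{(1,\star)}$, so the common index $\qcore$ is fixed after a second $\TimeBA$. That $\BA^{(1,\qcore)}$ outputs $1$ means some honest $P_i$ saw a set $\mathcal{A}\subseteq\CD$ with $\CD\setminus\mathcal{A}\in\Z_s$ and $(P_\ell,S_{\qcore})\in\C_i$ for all $P_\ell\in\mathcal{A}$; by Lemma~\ref{lemma: MVSSHonestDealerSynchronous} and Lemma~\ref{lemma: MVSSHonestPartiesSynchronous} every honest party accepts each such $(P_\ell,S_{\qcore})$ within $\Delta$ (broadcast gives consistency even for corrupt dealers, who may emit several $\CandidateCoreSets$ messages but at most one per $(P_\ell,S_p)$ pair). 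Hence the second layer $\BA^{(2,\star)}$ proceeds symmetrically: honest parties input $1$ for the dealers in $\mathcal{A}$, one instance outputs $1$ after $\TimeBA$, everyone inputs $0$ to the rest, and a final $\TimeBA$ fixes the common set $\Core$ with $\CD\setminus\Core\in\Z_s$. Together with the $2\TimeBA$ already spent on $\CD$ this accounts for the budget $\TimeMDVSS=\TimeSVM+\TimeAuth+2\TimeBC+6\TimeBA$.

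For correctness, $\CD\setminus\Core\in\Z_s$, $\Hon\subseteq\CD$ and $\Hon\notin\Z_s$ (from $\Q^{(2)}(\PartySet,\Z_s)$) together force $\Core\cap\Hon\neq\emptyset$. For each $P_\ell\in\Core$ we have $(P_\ell,S_{\qcore})\in\C_j$ for every honest $P_j$, so Lemma~\ref{lemma: MVSSHonestDealerSynchronous} (if $P_\ell$ is honest, giving ${s^{\star}}^{(\ell)}=s^{(\ell)}$ and privacy of $s^{(\ell)}$) or Lemma~\ref{lemma: MVSSHonestPartiesSynchronous} (if $P_\ell$ is corrupt) yields a value ${s^{\star}}^{(\ell)}$ whose shares ${s^{\star}}^{(\ell)}_q$ are held consistently, with linear IC-signatures. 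I would then check that the output core-sets $\W_1,\ldots,\W_{|\Z_s|}$ --- defined deterministically as $\W_q=S_q$ if $\W^{(\ell)}_{\qcore,q}=S_q$ for every $P_\ell\in\Core$, and $\W_q=(S_{\qcore}\cap S_q)$ otherwise --- are common across honest parties, satisfy $\Q^{(1)}(\W_q,\Z_s)$ (via $\Q^{(2)}(\PartySet,\Z_s)$ in the first case, and via the $\Q^{(1)}(S_{\qcore}\cap S_q,\Z_s)$ test already performed when any $P_\ell$'s core-set $(S_{\qcore}\cap S_q)$ was accepted in the second), and --- the key bookkeeping point --- satisfy $\W_q\subseteq\W^{(\ell)}_{\qcore,q}$ for every $P_\ell\in\Core$. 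This containment lets every honest $P_i\in\W_q$ inherit from the per-dealer guarantees the share ${s^{\star}}^{(\ell)}_q$ and the IC-signatures $\ICSig(P_j,P_i,P_k,{s^{\star}}^{(\ell)}_q)$ of all $P_j\in\W_q$, and forces any corrupt $P_j\in\W_q$'s signature on an honest $P_i$'s share to be the correct ${s^{\star}}^{(\ell)}_q$, so that $\{{s^{\star}}^{(\ell)}\}_{P_\ell\in\Core}$ is linearly secret-shared with IC-signatures per Definition~\ref{def:SS}. The overall error probability is assembled from the $\Order(n^3\cdot\errorAICP)$ terms of the three invoked lemmas; local dispute control keeps it independent of $|\Z_s|$.

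The main obstacle will be the two-layer ACS argument: pinning down that the index $S_{\qcore}$ chosen in the first layer is one against which a $\Z_s$-large subset of $\CD$ has valid, commonly-accepted core-sets, and that the $\Delta$-delayed acceptances of corrupt dealers' core-sets (Lemma~\ref{lemma: MVSSHonestPartiesSynchronous}) still let every honest party compute $\qcore$ and $\Core$ within the $6\TimeBA$ slack. The containment $\W_q\subseteq\W^{(\ell)}_{\qcore,q}$, together with the observation that $\Hon$ forms a clique in every honest consistency graph and is published as a candidate by every honest dealer, are the conceptual hinges that make both consistency and the common-core-sets property go through.
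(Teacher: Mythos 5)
Your proposal is correct and follows essentially the same route as the paper's proof: first invoke Lemma~\ref{lemma:MVSSSynchronousCD} to fix $\CD\supseteq\Hon$ by $\TimeSVM+2\TimeBA$; then use $S_p=\Hon$ as the anchor candidate to show $\BA^{(1,p)}$ terminates with $1$, propagate the remaining $\BA^{(1,\star)}$ and $\BA^{(2,\star)}$ instances via the ``first honest voter must have accepted, hence all accept within $\Delta$'' argument from Lemma~\ref{lemma: MVSSHonestPartiesSynchronous}, and count six $\TimeBA$ slots; and finally assemble the common core-sets and IC-signatures from Lemmas~\ref{lemma: MVSSHonestDealerSynchronous} and~\ref{lemma: MVSSHonestPartiesSynchronous}. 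The containment $\W_q\subseteq\W^{(\ell)}_{\qcore,q}$ that you flag as the bookkeeping hinge for inheriting shares and signatures is exactly what the paper relies on implicitly (the paper derives $\W_q$ as either $S_q$ when all agree, or $S_{\qcore}\cap S_q$ otherwise, which is always contained in each $\W^{(\ell)}_{\qcore,q}$), so making it explicit is a clarifying touch rather than a divergence.
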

\begin{proof}
Let $Z^{\star} \in \Z_s$ be the set of {\it corrupt} parties and let $\Hon = \PartySet \setminus Z^{\star}$ be the set of honest parties. 
 From Lemma \ref{lemma:MVSSSynchronousCD},  except with probability $\Order(n^3 \cdot \errorAICP)$, all the parties in $\Hon$ will have a common set of committed dealers
  $\CD$ by the time $\TimeSVM + 2\TimeBA$, where 
     $\Hon \subseteq \CD$. Moreover, corresponding to every dealer $P_{\ell} \in \CD$ and every $S_q \in \ShareSpec_{|\Z_s|}$,
      every party in $(\Hon \cap S_q)$ will have a common share, say ${s^{\star}}^{(\ell)}_q$, which is the same as $s^{(\ell)}_q$, for an $P_{\ell}$.
   We begin by showing that once the set of committed dealers $\CD$ is decided, then all the $|\Z_s|$ instances $\BA^{(1, \star)}$ of $\BA$
   and then all the $|\CD|$ instances $\BA^{(2, \star)}$ of $\BA$ will produce some output, for all the parties in $\Hon$, by time $\TimeMDVSS$.

 Consider the set $S_p \in \ShareSpec_{\Z_s}$, such that $S_p = \Hon$. 
 Then corresponding to each $P_{\ell} \in (\Hon \cap \CD)$, every $P_i \in \Hon$ will receive
  $(\CandidateCoreSets, P_{\ell}, S_p, \{ \W^{(\ell)}_{p, q}\}_{q = 1, \ldots, |\Z_s|},  \BroadcastSet^{(\ell)}_p,  \{ s^{(\ell)}_q\}_{q \in \BroadcastSet^{(\ell)}_p})$
  from the broadcast of $P_{\ell}$  
   and includes $(P_{\ell}, S_p)$ to $\C_i$ at time  $\TimeSVM + 2\TimeBA + \TimeAuth + 2\TimeBC$ (see Lemma \ref{lemma: MVSSHonestDealerSynchronous}). 
    Since $\PartySet \setminus \Hon \in \Z_s$ and $\Hon \subseteq \CD$,
     it follows that at the time  $\TimeSVM + 2\TimeBA+ \TimeAuth + 2\TimeBC$, every party $P_i \in \Hon$ will have a set ${\cal A}_{p, i}$ (namely ${\cal A}_{p, i} = \Hon$), 
    where $\CD \setminus {\cal A}_{p, i} \in \Z_s$
      and where $(P_{\ell}, S_p) \in \C_i$ for every $P_{\ell} \in {\cal A}_{p, i}$.
      Consequently, each $P_i \in \Hon$ starts participating in the instance $\BA^{(1, p)}$ 
      with input $1$, at the time $\TimeSVM + 2\TimeBA + \TimeAuth + 2\TimeBC$.
       From the {\it $\Z_s$-security} of $\BA$ in the {\it synchronous} network (see Theorem \ref{thm:BA}),
        it follows that at the time $\TimeSVM + \TimeAuth + 2\TimeBC + 3 \TimeBA$, every $P_i \in \Hon$ obtains the output $1$ from the instance $\BA^{(1, p)}$. 
    Consequently, at the time $\TimeSVM +  \TimeAuth + 2\TimeBC + 3 \TimeBA$,
   every party in $\Hon$ will start participating in the remaining $\BA^{(1, \star)}$ instances for which no input has been provided yet (if there are any), with input $0$.  
  And from the {\it $\Z_s$-security} of $\BA$ in the {\it synchronous} network,
   these $\BA^{(1, \star)}$  instances will produce common outputs, for every party in $\Hon$, at the time $\TimeSVM + \TimeAuth + 2\TimeBC + 4\TimeBA$.
  As a result, at the time $\TimeSVM + \TimeAuth + 2\TimeBC + 4\TimeBA$, all the parties in $\Hon$ will compute a $\qcore$. Moreover, $\qcore$ will be {\it common} for all the
  parties in $\Hon$, since it corresponds to the least-indexed $\BA^{(1, \star)}$ instance among $\BA^{(1, 1)}, \ldots, \BA^{(1, |\Z_s|)}$, which produces output $1$. And from the
  {\it $\Z_s$-security} of $\BA$ in the {\it synchronous} network, each $\BA^{(1, q)}$ instance produces a common output for every party in $\Hon$.
  We also note that $\qcore$ will be indeed set to some value from the set $\{1, \ldots, |\Z_s| \}$. This is because as shown above, the index $p$ where $S_p = \Hon$ always constitute a candidate
  $\qcore$.

  We next claim that corresponding to $S_{\qcore}$, 
  there exists a subset of parties ${\cal B}_{\qcore}$, where $\CD \setminus {\cal B}_{\qcore} \in \Z_s$, such that corresponding to every $P_j \in {\cal B}_{\qcore}$,
  the ordered pair $(P_j, S_{\qcore})$ is present in the set $\C_i$ of every $P_i \in \Hon$, at the time $\TimeSVM  + \TimeAuth + 2\TimeBC + 4\TimeBA$.
  Assuming that the claim is true, it implies that all the parties in $\Hon$ will participate with input $1$ in the instances $\BA^{(2, j)}$, corresponding to every
  $P_j \in {\cal B}_{\qcore}$, at the time $\TimeSVM +  \TimeAuth + 2\TimeBC + 4\TimeBA$. And hence from the {\it $\Z_s$-security} of the $\BA$ in the {\it synchronous} network (Theorem \ref{thm:BA}),
  all the parties will obtain the output $1$ in the $\BA^{(2, j)}$ instances, corresponding to every
  $P_j \in {\cal B}_{\qcore}$, at the time $\TimeSVM  + \TimeAuth + 2\TimeBC + 5\TimeBA$. As a result, at time $\TimeSVM + \TimeAuth + 2\TimeBC + 5\TimeBA$, all the parties in $\Hon$ will start participating 
  in the remaining $\BA^{(2, \star)}$ instances for which no input has been provided yet (if there are any), with input $0$. 
  Consequently, from the {\it $\Z_s$-security} of the $\BA$ in the {\it synchronous} network (Theorem \ref{thm:BA}), 
  at the time $\TimeMDVSS$, all the parties in $\Hon$ will have some output from all the $|\CD|$ instances of $\BA^{(2, \star)}$. 
  Moreover, the outputs will be {\it common} for all the parties in $\Hon$. Furthermore, the parties in $\Hon$ will have a subset $\Core$, which corresponds to all the 
   $\BA^{(2, j)}$ instances, which have produced output $1$. Note that $(\Hon \cap \Core) \neq \emptyset$. This is because ${\cal B}_{\qcore} \subseteq \Core$ and $\Hon \subseteq \CD$.
   Consequently, $(\Hon \cap {\cal B}_{\qcore}) \neq \emptyset$, as otherwise $\Z_s$ {\it does not} satisfy the $\Q^{(2)}(\PartySet, \Z_s)$ condition, which is a contradiction.
    
   We next proceed to prove our claim. Since the instance $\BA^{(1, \qcore)}$ has produced output $1$, it follows that at least one party from $\Hon$, say $P_k$, have participated with input $1$
   in the instance $\BA^{(1, \qcore)}$. This is because if {\it all} the parties in $\Hon$ participates with input $0$ in the instance  $\BA^{(1, \qcore)}$, then from the {\it $\Z_s$-validity} of $\BA$ in the
   {\it synchronous} network (Theorem \ref{thm:BA}), all the parties in $\Hon$ would have obtained the output $0$ from the instance $\BA^{(1, \qcore)}$, which is a contradiction.
   We also note that $P_k$ would have started participating with input $1$ in the instance $\BA^{(1, \qcore)}$, latest by time $\TimeSVM + \TimeAuth + 2\TimeBC + 3 \TimeBA$.
   This is because as argued above, by time $\TimeSVM   + \TimeAuth + 2\TimeBC + 3\TimeBA$, all the parties in $\Hon$ would have started participating in {\it all} the $|\Z_s|$ instances
   of $\BA^{(1, \star)}$, with {\it some} input. Now since $P_k$ has participated with input $1$
   in the instance $\BA^{(1, \qcore)}$, it follows that at the time $\TimeSVM + \TimeAuth + 2\TimeBC + 3\TimeBA$, there exists a subset of parties ${\cal A}_{\qcore, k}$, where 
   $\CD \setminus {\cal A}_{\qcore, k} \in \Z_s$, such that $(P_{\ell}, S_{\qcore})$ is present in the set $\C_k$, corresponding to every $P_{\ell} \in {\cal A}_{\qcore, k}$.
   We show that the set ${\cal A}_{\qcore, k}$ constitutes the candidate ${\cal B}_{\qcore}$. For this, note that for any $P_{\ell} \in {\cal A}_{\qcore, k}$,
   party $P_k$ includes $(P_{\ell}, S_{\qcore})$ to $\C_i$, only after receiving
   a message $(\CandidateCoreSets, P_{\ell}, S_{\qcore}, \{ \W^{(\ell)}_{\qcore, q}\}_{q = 1, \ldots, |\Z_s|},  \BroadcastSet^{(\ell)}_{\qcore},  \{ s^{(\ell)}_q\}_{q \in \BroadcastSet^{(\ell)}_{\qcore}})$
   from the broadcast of $P_{\ell}$
   and verifying it. Moreover, $P_k$ must have received the message
    $(\CandidateCoreSets, P_{\ell}, S_{\qcore}, \{ \W^{(\ell)}_{\qcore, q}\}_{q = 1, \ldots, |\Z_s|},  \BroadcastSet^{(\ell)}_{\qcore},  \{ s^{(\ell)}_q\}_{q \in \BroadcastSet^{(\ell)}_{\qcore}})$ from 
    each $P_{\ell} \in {\cal A}_{\qcore, k}$, latest by time $\TimeSVM  + \TimeAuth + 2\TimeBC + 3\TimeBA$. 
    It then follows from Lemma \ref{lemma:MVSSSynchronousProperties} that by time $\TimeSVM + \TimeAuth + 2\TimeBC + 3\TimeBA + \Delta < \TimeSVM + \TimeAuth + 2\TimeBC + 4\TimeBA$,
    {\it every} party in $\Hon$ would have received  
    $(\CandidateCoreSets, P_{\ell}, S_{\qcore}, \{ \W^{(\ell)}_{\qcore, q}\}_{q = 1, \ldots, |\Z_s|},  \BroadcastSet^{(\ell)}_{\qcore},  \{ s^{(\ell)}_q\}_{q \in \BroadcastSet^{(\ell)}_{\qcore}})$ from 
    each $P_{\ell} \in {\cal A}_{\qcore, k}$. And hence each party $P_i \in \Hon$ would include $(P_{\ell}, S_{\qcore})$ to the set $\C_i$, corresponding to every $P_{\ell} \in {\cal A}_{\qcore, k}$,
    by time $\TimeSVM + \TimeAuth + 2\TimeBC + 4\TimeBA$. This proves our claim. 
    
    We next claim that at the time $\TimeMDVSS$, corresponding to every $P_{\ell} \in \Core$, every $P_i \in \Hon$ would have received a message 
    $(\CandidateCoreSets, P_{\ell}, S_{\qcore}, \{ \W^{(\ell)}_{\qcore, q}\}_{q = 1, \ldots, |\Z_s|},  \BroadcastSet^{(\ell)}_{\qcore},  \{ s^{(\ell)}_q\}_{q \in \BroadcastSet^{(\ell)}_{\qcore}})$ from 
    the broadcast of $P_{\ell}$. The proof for this is very similar to the proof of the previous claim and relies on the properties of $\BA$. So consider an {\it arbitrary} $P_{\ell} \in \Core$.
    This implies that the instance  $\BA^{(2, \ell)}$ has produced output $1$ for {\it all} the parties in $\Hon$, which further implies that at least one party from $\Hon$, say $P_m$, has participated with input
    $1$ during the instance $\BA^{(2, \ell)}$. If not, then from the {\it $\Z_s$-validity} of $\BA$ in the {\it synchronous} network (Theorem \ref{thm:BA}), the instance $\BA^{(2, \ell)}$
    would have produced output $0$ for all the parties in $\Hon$, which is a contradiction. 
    We also note that $P_m$ must have started participating in the instance  $\BA^{(2, \ell)}$, latest by time $\TimeSVM + \TimeAuth + 2\TimeBC + 5\TimeBA$.
    This is because as shown above, by time $\TimeSVM + \TimeAuth + 2\TimeBC + 5\TimeBA$, all the parties in $\Hon$ would have started participating in {\it all} the $|\CD|$
    instances of $\BA^{(2, \star)}$, with some input. Now since $P_m$ participates with input $1$ in the instance $\BA^{(2, \ell)}$, it follows that by time 
    $\TimeSVM + \TimeAuth + 2\TimeBC + 5\TimeBA$, party $P_m$ must have received a message 
     $(\CandidateCoreSets, P_{\ell}, S_{\qcore}, \{ \W^{(\ell)}_{\qcore, q}\}_{q = 1, \ldots, |\Z_s|},  \BroadcastSet^{(\ell)}_{\qcore},  \{ s^{(\ell)}_q\}_{q \in \BroadcastSet^{(\ell)}_{\qcore}})$
     from the broadcast of $P_{\ell}$ and included $(P_{\ell}, S_{\qcore})$ to $\C_m$. 
      It then follows from Lemma \ref{lemma:MVSSSynchronousProperties} that by time $\TimeSVM + \TimeAuth + 2\TimeBC + 5\TimeBA + \Delta < \TimeSVM + \TimeAuth + 2\TimeBC + 6\TimeBA$,
      {\it every} party $P_i$ in $\Hon$ would have received  
    $(\CandidateCoreSets, P_{\ell}, S_{\qcore}, \{ \W^{(\ell)}_{\qcore, q}\}_{q = 1, \ldots, |\Z_s|},  \BroadcastSet^{(\ell)}_{\qcore},  \{ s^{(\ell)}_q\}_{q \in \BroadcastSet^{(\ell)}_{\qcore}})$ from 
    $P_{\ell}$ and would include $(P_{\ell}, S_{\qcore})$ to $\C_i$.
    
     Till now we have shown that all the all the $|\Z_s|$ instances $\BA^{(1, \star)}$ of $\BA$
      and then all the $|\CD|$ instances $\BA^{(2, \star)}$ of $\BA$ will produce some output, for all the parties in $\Hon$, by time $\TimeMDVSS$. Moreover, at the time
     $\TimeMDVSS$, all the parties in $\Hon$ will have a common $\qcore \in \{1, \ldots, |\Z_s| \}$ and a common set $\Core \subseteq \PartySet$, where $\Core$ has at least one {\it honest} party.
     Furthermore, corresponding to every $P_{\ell} \in \Core$, each $P_i \in \Hon$ would have received a message 
     $(\CandidateCoreSets, P_{\ell}, S_{\qcore}, \{ \W^{(\ell)}_{\qcore, q}\}_{q = 1, \ldots, |\Z_s|},  \BroadcastSet^{(\ell)}_{\qcore},  \{ s^{(\ell)}_q\}_{q \in \BroadcastSet^{(\ell)}_{\qcore}})$
     from the broadcast of $P_{\ell}$. Furthermore, from Lemma \ref{lemma:MVSSSynchronousProperties}, corresponding to each $P_{\ell} \in \Core$, the set 
     $\W^{(\ell)}_{\qcore, q}$ will be either the set $S_q$ or $(S_{\qcore} \cap S_q)$, for $q = 1, \ldots, |\Z_s|$. If 
     $\W^{(\ell)}_{\qcore, q} = S_q$ for {\it every} $P_{\ell} \in \Core$, then all the parties in $\Hon$ will set $\W_q$ to $S_q$.
     On the other hand, if $\W^{(\ell)}_{\qcore, q} = (S_{\qcore} \cap S_q)$ for {\it any} $P_{\ell} \in \Core$, then all the parties in 
     $\Hon$ will set $\W_q$ to $(S_{\qcore} \cap S_q)$. Irrespective of the case, all the parties in $\Hon$ would set $\W_q$ to a common subset. 
     We also note that irrespective of the case, $S_q \setminus \W_q \in \Z_a$ holds. This is because from Lemma \ref{lemma:MVSSSynchronousProperties}
     and Lemma \ref{lemma: MVSSHonestPartiesSynchronous}, the condition $S_q \setminus \W^{(\ell)}_{\qcore, q} \in \Z_a$ holds, corresponding to {\it every}
     $P_{\ell} \in \Core$.
     
     Finally consider an {\it arbitrary} $P_{\ell} \in \Core$ and an {\it arbitrary} $S_q \in \ShareSpec_{\Z_s}$. We claim that at the time $\TimeMDVSS$, all the parties in $(\Hon \cap S_q)$
     will have a common share, say ${s^{\star}}^{(\ell)}_q$, where ${s^{\star}}^{(\ell)}_q = s^{(\ell)}_q$ for an {\it honest} $P_{\ell}$. For this, we consider two possible cases.
      If $q \in \BroadcastSet^{(\ell)}_{\qcore}$, then each $P_i \in (\Hon \cap S_q)$ would have received  $s^{(\ell)}_q$ from the broadcast of $P_{\ell}$, as part of the 
     $(\CandidateCoreSets, P_{\ell}, S_{\qcore}, \{ \W^{(\ell)}_{\qcore, q}\}_{q = 1, \ldots, |\Z_s|},  \BroadcastSet^{(\ell)}_{\qcore},  \{ s^{(\ell)}_q\}_{q \in \BroadcastSet^{(\ell)}_{\qcore}})$ 
     message. Consequently, in this case ${s^{\star}}^{(\ell)}_q$ is the same as $s^{(\ell)}_q$, received from the broadcast of $P_{\ell}$.
     On the other hand, if  $q \not \in \BroadcastSet^{(\ell)}_{\qcore}$, then from Lemma \ref{lemma: MVSSHonestPartiesSynchronous}, 
     each $P_i \in (\Hon \cap S_q)$ would have a common share, say ${s^{(\star)}}^{(\ell)}_q$; moreover, from Lemma \ref{lemma: MVSSHonestDealerSynchronous},
     if $P_{\ell}$ is {\it honest}, then ${s^{(\star)}}^{(\ell)}_q = s^{(\ell)}_q$  holds. We define
     \[ {s^{\star}}^{(\ell)} \defined \displaystyle \sum_{q = 1, \ldots, |\Z_s|}  {s^{(\star)}}^{(\ell)}_q,     \]
     where ${s^{\star}}^{(\ell)} = s^{(\ell)}$ for an {\it honest} $P_{\ell}$.
      Hence at time $\TimeMDVSS$, each party in $(\Hon \cap S_q)$ has $[{s^{\star}}^{(\ell)}]_q$. 
      We also note that if $q \in \BroadcastSet^{(\ell)}_{\qcore}$, then {\it every} $P_i \in \W_q$ sets $\ICSig(P_j, P_i, P_k, [{s^{\star}}^{(\ell)}]_q)$
      to the default value, corresponding to every $P_j \in \W_q$ and every $P_k \in \PartySet$.
      On the other hand, if $q \not \in  \BroadcastSet^{(\ell)}_{\qcore}$, 
      then every $P_i \in (\Hon \cap S_q)$ will have $\ICSig(P_j, P_i, P_k, {s^{\star}}^{(\ell)}_q)$ of every $P_j \in \W_q$ and for every $P_k \in \PartySet$.
    Furthermore, if any corrupt $P_j \in \W_q$ has $\ICSig(P_i, P_j, P_k, s'^{(\ell)}_q)$ of any $P_i \in (\Hon \cap S_q)$ for any $P_k  \in \PartySet$, then
    $s'^{(\ell)}_q = {s^{\star}}^{(\ell)}_q$ holds.
    Moreover, from Lemma \ref{lemma: MVSSHonestPartiesSynchronous}, if $P_{\ell}$ is {\it honest}, then ${s^{\star}}^{(\ell)}_q$ in the IC-signatures mentioned above will be the same as
    $s^{(\ell)}_q$. It then follows that ${s^{\star}}^{(\ell)}$ will be linearly secret-shared; the linearity of the underlying IC-signatures
    follows since the (honest) parties follow the linearity principle, while generating the IC-signatures.
    
    The privacy of $s^{(\ell)}$ for an {\it honest} $P_{\ell}$ follows from Lemma \ref{lemma: MVSSHonestPartiesSynchronous}.
\end{proof}

We next consider an {\it asynchronous} network. We first prove an analogue of Lemma \ref{lemma:MVSSSynchronousCD} in the asynchronous network.
 \begin{lemma}
 \label{lemma:MVSSAsynchronousCD}
 If the network is asynchronous and $P_{\ell} \in \PartySet$ is an honest dealer participating with input $s^{(\ell)}$, then all the following hold in $\MDVSS$, where $\Hon$ is the set of
  honest parties.
     \begin{myitemize}
     \item[--] Except with probability $\Order(n^3 \cdot \errorAICP)$, almost-surely, all the parties in $\Hon$ will eventually have a common $\CD$ set, where  
     $\PartySet \setminus \CD \in \Z_s$.
     \item[--] Except with probability $\Order(n^3 \cdot \errorAICP)$, corresponding to every dealer $P_{\ell} \in \CD$ and every $S_q \in \ShareSpec_{|\Z_s|}$,
      every party in $(\Hon \cap S_q)$ will eventually have a common share, say ${s^{\star}}^{(\ell)}_q$, which is the same as $s^{(\ell)}_q$, for an honest $P_{\ell}$.
     \end{myitemize}
 \end{lemma}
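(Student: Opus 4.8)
The plan is to mirror the proof of Lemma~\ref{lemma:MVSSSynchronousCD}, replacing each synchronous guarantee of $\SVM$ and $\BA$ by its asynchronous counterpart. Throughout, write $Z^\star \in \Z_a$ for the set of corrupt parties, so that $\Hon = \PartySet \setminus Z^\star$, and note $Z^\star \in \Z_s$ since $\Z_a \subset \Z_s$.

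First I would show that a common committed-dealer set $\CD$ is eventually computed and that $\PartySet \setminus \CD \in \Z_s$. Since every honest party (in particular every honest dealer $P_\ell$) invokes the required instances of $\SVM$, the asynchronous guarantees of $\SVM$ (Lemma~\ref{lemma:SVM}) imply that, corresponding to each honest dealer $P_\ell$ and each $S_q \in \ShareSpec_{\Z_s}$, every honest party eventually sets $\flag^{(P_\ell, S_q)}$ to $1$; by the local dispute-control argument of Section~\ref{sec:SuperPolynomial} the error incurred across all these $\SVM$ instances (and hence across all underlying $\Auth/\Reveal$ instances of the protocol) stays bounded by $\Order(n^3 \cdot \errorAICP)$, independently of $|\Z_s|$. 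Consequently every honest party eventually participates with input $1$ in $\BA^{(\ell)}$ for each honest $P_\ell$, and by $\Z_a$-validity of $\BA$ (Theorem~\ref{thm:BA}) each such $\BA^{(\ell)}$ eventually outputs $1$. As $\PartySet \setminus \Hon = Z^\star \in \Z_s$, each honest party $P_i$ eventually obtains a set $\CD_i$ with $\PartySet \setminus \CD_i \in \Z_s$ all of whose $\BA$ instances have output $1$, and therefore begins participating with input $0$ in every $\BA^{(\star)}$ for which it has not yet supplied an input. By the $\Z_a$-almost-surely liveness and $\Z_a$-consistency of $\BA$, all $n$ instances $\BA^{(\star)}$ almost-surely eventually produce common binary outputs for all honest parties, so all honest parties compute the same set $\CD$. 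Finally $\Hon \subseteq \CD$ (the $\BA$ instances of honest dealers output $1$), hence $\PartySet \setminus \CD \subseteq Z^\star$, and monotonicity of $\Z_s$ gives $\PartySet \setminus \CD \in \Z_s$.

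Next I would establish the share property. Fix an arbitrary $P_\ell \in \CD$. Since $\BA^{(\ell)}$ output $1$, the $\Z_a$-validity of $\BA$ forces some honest party $P_k$ to have participated with input $1$, i.e.\ $P_k$ had set $\flag^{(P_\ell, S_q)} = 1$ for every $q$. Now fix an arbitrary $S_q \in \ShareSpec_{\Z_s}$. Invoking the asynchronous verifiability property of $\SVM$ for a (possibly) corrupt sender (Lemma~\ref{lemma:SVM}): since an honest party set the flag of $\SVM(P_\ell, s^{(\ell)}_q, S_q)$ to $1$, there exists a value ${s^{\star}}^{(\ell)}_q$ such that all honest parties in $S_q$ eventually output ${s^{\star}}^{(\ell)}_q$ in that instance; and if $P_\ell$ is honest, the $\Z_a$-correctness of $\SVM$ forces ${s^{\star}}^{(\ell)}_q = s^{(\ell)}_q$. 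The extra error this introduces is again absorbed into the $\Order(n^3 \cdot \errorAICP)$ bound of Section~\ref{sec:SuperPolynomial}, which completes the proof.

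The step I expect to require the most care --- rather than a genuine obstacle --- is the simultaneous almost-sure termination of the $n$ parallel $\BA$ instances: one must verify that the rule ``participate with input $0$ once a qualifying $\CD_i$ is found'' leaves no honest party permanently waiting on a $\BA$ instance to which it never contributed an input, which is precisely what the $\Z_a$-validity argument above guarantees. The probability bookkeeping is otherwise routine, since the local dispute-control mechanism already caps the number of cheating opportunities (and hence the cumulative ICP error) at $\Order(n^3)$ instances, matching the claimed $\Order(n^3 \cdot \errorAICP)$.
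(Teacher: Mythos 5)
Your argument for the first bullet has a genuine gap. You claim that every honest party eventually participates with input $1$ in $\BA^{(\ell)}$ for each honest dealer $P_\ell$, and conclude via $\Z_a$-validity that each such $\BA^{(\ell)}$ outputs $1$, hence $\Hon \subseteq \CD$. In an asynchronous network this chain of events is not guaranteed. An honest party $P_i$ participates with input $1$ in $\BA^{(\ell)}$ only if it has \emph{not yet} found a qualifying set $\CD_i$; once some subset of $\BA$ instances (possibly corresponding to a mix of honest and corrupt dealers, scheduled favourably by the adversary) produce output $1$ for $P_i$ and form a $\CD_i$ with $\PartySet \setminus \CD_i \in \Z_s$, $P_i$ switches to input $0$ in \emph{all} remaining instances, including $\BA^{(\ell)}$ for an honest $P_\ell$ whose $\SVM$ flags $P_i$ had not yet set. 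Once honest parties diverge on the input bit to $\BA^{(\ell)}$, $\Z_a$-validity no longer applies and $\BA^{(\ell)}$ may legitimately output $0$. Consequently $\Hon \subseteq \CD$ is \emph{not} a valid conclusion in the asynchronous case --- and indeed the lemma statement itself carefully avoids asserting it, in contrast to the synchronous Lemma~\ref{lemma:MVSSSynchronousCD}.

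The paper repairs this by a two-case argument establishing a candidate set ${\cal D}$ with $\PartySet \setminus {\cal D} \in \Z_s$ whose $\BA^{(\star)}$ instances output $1$ for all honest parties. If \emph{some} honest party has already started inputting $0$ somewhere, then it holds a qualifying $\CD_i$; by $\Z_a$-consistency of $\BA$, the instances in $\CD_i$ output $1$ for everyone, so ${\cal D} = \CD_i$. Otherwise, no honest party has yet input $0$, and then (and only then) the clean argument applies: every honest party eventually inputs $1$ in $\BA^{(\ell)}$ for each $P_\ell \in \Hon$, and $\Z_a$-validity gives ${\cal D} = \Hon$. Either way $\PartySet \setminus \CD \subseteq \PartySet \setminus {\cal D} \in \Z_s$ by monotonicity. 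Your second bullet (the share-consistency argument via the verifiability of $\SVM$ for a possibly-corrupt sender) is fine and matches the paper, as is the local dispute-control bookkeeping of the $\Order(n^3 \cdot \errorAICP)$ bound.
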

 \begin{proof}
 Let $Z^{\star} \in \Z_a$ be the set of {\it corrupt} parties and let $\Hon = \PartySet \setminus Z^{\star}$ be the set of {\it honest} parties. Note that $\Hon \in \ShareSpec_{\Z_s}$, since $\Z_a \subset \Z_s$. 
  From the properties of $\SVM$ in the {\it asynchronous} network (Lemma \ref{lemma:SVM}),
   it follows that every $P_i \in \Hon$ will eventually set $\flag^{(P_{\ell}, S_q)}$ to $1$ during the instance $\SVM(P_{\ell}, s^{(\ell)}_q, S_q)$, corresponding to every
  $P_{\ell} \in \Hon$ and every $S_q \in \ShareSpec_{\Z_s}$. Moreover, corresponding to every
  $P_{\ell} \in \Hon$ and every $S_q \in \ShareSpec_{\Z_s}$, each $P_i \in (\Hon \cap S_q)$ eventually computes an output $s^{(\ell)}_{qi}$ during the instance $\SVM((P_{\ell}, s^{(\ell)}_q, S_q)$.
  Furthermore, except with probability $\Order(n^3 \cdot \errorAICP)$, the value $s^{(\ell)}_{qi}$ will be the same as $s^{(\ell)}_q$.

   We first claim that there always exists a subset of parties ${\cal D}$, where $\PartySet \setminus {\cal D} \in \Z_s$, such that the $\BA$ instance $\BA^{(\ell)}$ eventually produces output $1$ for all the
   parties in $\Hon$, corresponding to every $P_{\ell} \in {\cal D}$. Assuming that the claim is true, it implies that all the parties in $\Hon$ will eventually participate with some input in the $\BA$ instances
   $\BA^{(1)}, \ldots, \BA^{(n)}$. This is because from the protocol steps, once the $\BA^{(\star)}$ instances corresponding to the parties in ${\cal D}$ produce output $1$, all the parties in $\Hon$ will start participating with input
   $0$ in the remaining instances $\BA^{(\star)}$ of $\BA$ (if any), for which no input has been provided yet.
    And hence from the {\it $\Z_a$-security} of $\BA$ in the {\it asynchronous} network, it follows that almost-surely, all these $\BA$ instances will eventually produce some output
   for all the parties in $\Hon$. Moreover, the outputs will be the same for all the parties in $\Hon$. Consequently, all the parties in $\Hon$ will eventually obtain a common $\CD$ set. Moreover, 
   $\PartySet \setminus \CD \in \Z_s$, since $\CD$ consists of all those parties $P_{\ell}$, such that the instance $\BA^{(\ell)}$ produces output $1$. And according to our claim, ${\cal D} \subseteq \CD$ holds. We now
   proceed to prove our claim.
   
   There are two possible cases. Consider the case when {\it some} $P_i \in \Hon$ starts participating with input $0$ in any $\BA^{(\star)}$ instance. This implies that for $P_i$, there exists a subset of parties
   $\CD_i$ where $\PartySet \setminus \CD_i \in \Z_s$, such that corresponding to every $P_{\ell} \in \CD_i$, the instance $\BA^{(\ell)}$ has produced output $1$ for $P_i$. In this case, the set $\CD_i$ is the
   candidate ${\cal D}$ set, whose existence we want to prove. Next, consider the case when {\it no} party in $\Hon$ has started participating with input $0$ in {\it any} of the $\BA^{(\star)}$ instances.
   In this case, the set $\Hon$ constitutes the candidate ${\cal D}$ set. This is because as shown above, every $P_i \in \Hon$ will eventually set $\flag^{(P_{\ell}, S_q)}$ to $1$, corresponding to every
  $P_{\ell} \in \Hon$ and every $S_q \in \ShareSpec_{\Z_s}$. And hence every $P_i \in \Hon$ will eventually start participating with input $1$ in the $\BA^{(\ell)}$ instances, corresponding to $P_{\ell} \in \Hon$.
  Consequently, the {\it $\Z_s$-validity} of $\BA$ in the {\it asynchronous} network (Theorem \ref{thm:BA}) will guarantee that the $\BA^{(\ell)}$ instances, corresponding to $P_{\ell} \in \Hon$ eventually
  produce output $1$ for all the parties in $\Hon$.

   Next, consider an {\it arbitrary} $P_{\ell} \in \CD$. This implies that at least one party from $\Hon$, say $P_k$,
   has participated with input $1$ during the instance $\BA^{(\ell)}$. If not, then from the {\it $\Z_s$-validity} of $\BA$ in the {\it asynchronous} network (Theorem \ref{thm:BA}),
   all the parties in $\Hon$ would have obtained the output $0$ from the instance $\BA^{(\ell)}$ and hence $P_{\ell} \not \in \CD$, which is a 
   {\it contradiction}. This implies that party $P_k$ has set $\flag^{(P_{\ell}, S_q)}$ to $1$ during the instance $\SVM(P_{\ell}, s^{(\ell)}_q, S_q)$,
    for $q = 1, \ldots, |\Z_s|$.
   So consider an {\it arbitrary} $S_q \in \ShareSpec_{\Z_s}$. From the properties of $\SVM$ in the {\it asynchronous} network (Lemma \ref{lemma:SVM}),
   it follows that there exists some value ${s^{\star}}^{(\ell)}_q$, which is the same as $s^{(\ell)}_q$ for an {\it honest} $P_{\ell}$,
    such that except with probability $\Order(n^3 \cdot \errorAICP)$,
   all the parties in $\Hon$ eventually output ${s^{\star}}^{(\ell)}_q$ during the instance $\SVM(P_{\ell}, s^{(\ell)}_q, S_q)$.   
 \end{proof}

We next prove the analogue of Lemma \ref{lemma: MVSSHonestDealerSynchronous} in the {\it asynchronous} network.
\begin{lemma}
\label{lemma: MVSSHonestDealerAsynchronous}
If the network is asynchronous and $P_{\ell} \in \CD$ is an {\it honest} dealer participating with input $s^{(\ell)}$, then all the following hold in $\MDVSS$ except with probability $\Order(n^3 \cdot \errorAICP)$,
 where $\Hon$ is the set of honest parties.
\begin{myitemize}
 \item[--]  If $S_p = \Hon$, then $P_{\ell}$ will eventually broadcast
   $(\CandidateCoreSets, P_{\ell}, S_p, \{ \W^{(\ell)}_{p, q}\}_{q = 1, \ldots, |\Z_s|}, \allowbreak  \BroadcastSet^{(\ell)}_p, \{ s^{(\ell)}_q\}_{q \in \BroadcastSet^{(\ell)}_p})$.
\item[--] If $P_{\ell}$ broadcasts any $(\CandidateCoreSets, P_{\ell}, S_p, \{ \W^{(\ell)}_{p, q}\}_{q = 1, \ldots, |\Z_s|}, \BroadcastSet^{(\ell)}_p, \{ s^{(\ell)}_q\}_{q \in \BroadcastSet^{(\ell)}_p})$ 
 then every honest $P_i \in \PartySet$ will eventually include $(P_{\ell}, S_p)$ to $\C_i$. 
  Moreover, the following will hold.
    \begin{myitemize}
    \item[--] If $S_q = \Hon$, then $q \not \in \BroadcastSet^{(\ell)}_p$.
    \item[--] For $q = 1, \ldots, |\Z_s|$, each $\W^{(\ell)}_{p, q}$ will be either $S_q$ or $(S_p \cap S_q)$ such that 
    $\Z_s$ satisfies the $\Q^{(1)}(\W^{(\ell)}_{p, q}, \Z_s)$ condition.    
    \item[--] If  $q \not \in \BroadcastSet^{(\ell)}_p$, then every honest $P_i \in S_q$ will have the share $s^{(\ell)}_q$.
    Moreover, every honest $P_i \in \W^{(\ell)}_{p, q}$ will have $\ICSig(P_j, P_i, P_k, s^{(\ell)}_q)$ of every $P_j \in \W^{(\ell)}_{p, q}$ for every $P_k \in \PartySet$.
    Furthermore, if any corrupt $P_j \in \W^{(\ell)}_{p, q}$ have $\ICSig(P_i, P_j, P_k, s'^{(\ell)}_q)$ of any honest $P_i \in \W^{(\ell)}_{p, q}$ for any $P_k \in \PartySet$, then
    $s'^{(\ell)}_q = s^{(\ell)}_q$ holds. Also, all the underlying IC-signatures will satisfy the linearity property.
    \end{myitemize}
 \item[--] The view of the adversary will be independent of $s^{(\ell)}$.
\end{myitemize}
\end{lemma}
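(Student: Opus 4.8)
The plan is to mirror the proof of Lemma~\ref{lemma: MVSSHonestDealerSynchronous}, replacing every use of the synchronous guarantees of $\SVM$, $\Auth$, $\BC$ and $\BA$ by their asynchronous counterparts, so that each ``favourable'' event now occurs \emph{eventually} rather than at a fixed time. Fix $Z^{\star} \in \Z_a$ as the set of corrupt parties, $\Hon = \PartySet \setminus Z^{\star}$, and note $\Hon \in \ShareSpec_{\Z_s}$ since $\Z_a \subset \Z_s$. First I would invoke Lemma~\ref{lemma:MVSSAsynchronousCD}: since $P_{\ell}$ is honest and $P_{\ell} \in \CD$, for every $S_q \in \ShareSpec_{\Z_s}$ each honest $P_i \in S_q$ eventually holds the common share $s^{(\ell)}_{qi} = s^{(\ell)}_q$, except with probability $\Order(n^3 \cdot \errorAICP)$. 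For the first bullet I would then show that $\Hon$ eventually becomes a clique in $P_{\ell}$'s consistency graph $G^{(\ell,\ell)}$: each honest $P_i \in S_q \cap \Hon$ gives $\ICSig(P_i, P_j, P_k, s^{(\ell)}_q)$ to every $P_j \in S_q$ following the linearity principle, and by $\Z_a$-correctness of $\Auth$ (Theorem~\ref{thm:ICP}) each honest $P_j$ eventually receives a matching IC-signed share, so eventually every honest $P_i$ broadcasts $\OK^{(\ell)}(i,j)$ for every honest $P_j$; by $\Z_a$-weak validity and $\Z_a$-fallback validity of $\BC$ (Theorem~\ref{thm:BC}) these messages reach all honest parties, so $\Hon$ becomes a clique in $G^{(\ell,i)}$ for every honest $P_i$, in particular for $P_{\ell}$. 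Since $\Hon \in \ShareSpec_{\Z_s}$, $P_{\ell}$ eventually finds a candidate clique $S_p$ (at worst $S_p = \Hon$) and broadcasts the corresponding $\CandidateCoreSets$ message, computing the core sets by rules (A), (B), (C) with respect to $S_p$.

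For the second bullet, suppose $P_{\ell}$ broadcasts $(\CandidateCoreSets, P_{\ell}, S_p, \{\W^{(\ell)}_{p,q}\}_{q}, \BroadcastSet^{(\ell)}_p, \{s^{(\ell)}_q\}_{q \in \BroadcastSet^{(\ell)}_p})$. Being honest, $P_{\ell}$ respects rules (A), (B), (C), so each $\W^{(\ell)}_{p,q}$ is $S_q$ or $S_p \cap S_q$ and forms a clique in $G^{(\ell,\ell)}$; $\Q^{(1)}(\W^{(\ell)}_{p,q}, \Z_s)$ holds either because $\Q^{(2)}(\PartySet, \Z_s)$ gives $\Q^{(1)}(S_q, \Z_s)$, or because $P_{\ell}$ itself checks it. When $S_q = \Hon = \PartySet \setminus Z^{\star}$ with $Z^{\star} \in \Z_a$ and $S_p = \PartySet \setminus Z$ with $Z \in \Z_s$, the key point is that $\Q^{(1)}(S_p \cap S_q, \Z_s)$ follows from $\Q^{(2,1)}(\PartySet, \Z_s, \Z_a)$ (taking the two $\Z_s$-sets to be $Z$ and an arbitrary $Z' \in \Z_s$, and the $\Z_a$-set to be $Z^{\star}$), while $S_p \cap S_q \subseteq \Hon$ consists of only honest parties and hence eventually forms a clique in $G^{(\ell,\ell)}$ by the argument above; therefore rule (B) applies to $S_q = \Hon$, so $q \notin \BroadcastSet^{(\ell)}_p$ and $s^{(\ell)}_q$ is never made public. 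To see that every honest $P_i$ eventually includes $(P_{\ell}, S_p)$ in $\C_i$, I would use $\Z_a$-weak validity and $\Z_a$-fallback validity of $\BC$ for delivery of the $\CandidateCoreSets$ message, together with the standard fact (from $\Z_a$-weak validity, $\Z_a$-fallback validity, $\Z_a$-weak consistency, $\Z_a$-fallback consistency of $\BC$) that every edge present in $G^{(\ell,\ell)}$ when $P_{\ell}$ broadcasts eventually appears in every honest party's $G^{(\ell,i)}$, since edges come only from $\OK^{(\ell)}(\star,\star)$ messages carried over $\BC$; thus all of $P_i$'s verification conditions eventually hold. Finally, for $q \notin \BroadcastSet^{(\ell)}_p$, any edge $(i,j)$ inside the clique $\W^{(\ell)}_{p,q}$ forces $P_i$ to have received $\ICSig(P_j, P_i, P_k, s^{(\ell)}_{qj})$ with $s^{(\ell)}_{qj} = s^{(\ell)}_{qi} = s^{(\ell)}_q$, and since $P_i$ gave its signature only on $s^{(\ell)}_q$ to parties in $S_q$, any corrupt $P_j \in \W^{(\ell)}_{p,q}$ holding $\ICSig(P_i, P_j, P_k, s'^{(\ell)}_q)$ must have $s'^{(\ell)}_q = s^{(\ell)}_q$; linearity of the IC-signatures is inherited from the linearity principle.

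For the third bullet, privacy, I would argue that the adversary's view is independent of $s^{(\ell)}_q$ for the group $S_q = \Hon$ (which exists since $\Hon \in \ShareSpec_{\Z_s}$): during $\SVM(P_{\ell}, s^{(\ell)}_q, S_q)$ with $S_q \cap Z^{\star} = \emptyset$ the privacy of $\SVM$ (Lemma~\ref{lemma:SVM}) applies; by the second bullet $q \notin \BroadcastSet^{(\ell)}_p$ for every $S_p$ that $P_{\ell}$ might broadcast, so $s^{(\ell)}_q$ is never made public; and during the pairwise consistency tests only honest parties in $\Hon$ exchange IC-signed $s^{(\ell)}_q$, so the privacy of ICP (Theorem~\ref{thm:ICP}) applies. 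Since $s^{(\ell)} = \sum_q s^{(\ell)}_q$, the view of the adversary is independent of $s^{(\ell)}$. The main obstacle I anticipate is the bookkeeping for edge-propagation in the consistency graphs under the purely eventual guarantees of $\BC$, and pinning down exactly why $\Q^{(1)}(S_p \cap S_q, \Z_s)$ holds when $S_q = \Hon$ in the asynchronous regime---this is precisely the place where the combined condition $\Q^{(2,1)}$, rather than $\Q^{(2)}$ and $\Q^{(3)}$ separately, is used; the remainder is a faithful asynchronous translation of Lemma~\ref{lemma: MVSSHonestDealerSynchronous}.
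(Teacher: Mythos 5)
Your plan is essentially the same as the paper's proof: systematically replace the synchronous guarantees of $\SVM$, $\Auth$, $\BC$ and $\BA$ with their asynchronous counterparts so that every favourable event happens eventually, propagate consistency-graph edges to every honest $G^{(\ell,i)}$ via the weak and fallback properties of $\BC$, and invoke $\Q^{(2,1)}(\PartySet, \Z_s, \Z_a)$ (with $Z$, an arbitrary $Z' \in \Z_s$, and $Z^{\star} \in \Z_a$) to obtain $\Q^{(1)}(S_p \cap S_q, \Z_s)$ for $S_q = \Hon$. The IC-signature and privacy bullets are handled exactly as the paper does, by reuse of the synchronous lemma's argument with the $\Z_a$ properties of ICP and $\SVM$ substituted in.

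There is one imprecision worth fixing. To conclude $q \notin \BroadcastSet^{(\ell)}_p$ when $S_q = \Hon$, you argue that $S_p \cap S_q \subseteq \Hon$ ``eventually forms a clique in $G^{(\ell,\ell)}$'' and so rule (B) applies. But $P_\ell$ decides between rules (A)/(B)/(C) for a given $S_p$ at the specific moment when $S_p$ first becomes a clique in $G^{(\ell,\ell)}$, not after waiting for all honest-pair edges to arrive; an ``eventually a clique'' argument does not by itself rule out that at that decision moment $S_p \cap S_q$ has a missing edge, which would push $q$ into $\BroadcastSet^{(\ell)}_p$ via rule (C). The correct (and simpler) reason is the one the paper gives: $S_p \cap S_q \subseteq S_p$, and $S_p$ \emph{is} a clique in $G^{(\ell,\ell)}$ at the moment $P_\ell$ computes the core sets for it, so $S_p \cap S_q$ is also a clique at that moment. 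With that one-line fix, together with your (correct) $\Q^{(1)}$ argument, rule (A) or rule (B) applies and $q \notin \BroadcastSet^{(\ell)}_p$; the remainder of your proof then goes through as written.
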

\begin{proof}
Let $Z^{\star} \in \Z_a$ be the set of {\it corrupt} parties and let $\Hon = \PartySet \setminus Z^{\star}$ be the set of {\it honest} parties. We first note that $\Hon \in \ShareSpec_{\Z_s}$,
 since $Z^{\star} \in \Z_s$ as $\Z_a \subset \Z_s$. The proof for the first part of the lemma is similar to the proof of the first part of Lemma \ref{lemma: MVSSHonestDealerSynchronous}, except that
  all the ``favourable" conditions hold for an {\it honest} $P_{\ell}$ {\it eventually}. In more detail, consider an {\it arbitrary} $S_q \in \ShareSpec_{\Z_s}$.
   Then each $P_i \in (S_q \cap \Hon)$ eventually computes the share $s^{(\ell)}_{qi}$ during the instance $\SVM(P_{\ell}, s^{(\ell)}_q, S_q)$, where
   $s^{(\ell)}_{qi} = s^{(\ell)}_q$ holds, except with probability $\Order(n^3 \cdot \errorAICP)$. 
  Consequently, $P_i$
   starts giving
   $\ICSig(P_i, P_j, P_k, s^{(\ell)}_{qi})$ to every $P_j \in S_q$, for every $P_k \in \PartySet$.
   Then from the {\it $\Z_a$-correctness} of $\Auth$  in the {\it asynchronous} network (Theorem \ref{thm:ICP}), 
   it follows that each party $P_i \in (S_q \cap \Hon)$ will eventually receive 
   $\ICSig(P_j, P_i, P_k, s^{(\ell)}_{qj})$ from every $P_j \in (S_q \cap \Hon)$, for every $P_k \in \PartySet$, such that $s^{(\ell)}_{qj} = s^{(\ell)}_{qi} = s^{(\ell)}_q$ holds. 
   Since $S_q$ is arbitrary, it follows that eventually, every party $P_i \in \Hon$ broadcasts an $\OK^{(\ell)}(i, j)$ message, corresponding to every 
   $P_j \in \Hon$. From the {\it $\Z_a$-weak validity} and  {\it $\Z_a$-fallback validity}  of $\BC$ in the {\it asynchronous} network (Theorem \ref{thm:BC}), it follows that
   these $\OK^{(\ell)}(i, j)$ messages are eventually received by every party in $\Hon$.
   Consequently, the set $\Hon$ eventually becomes a clique in the consistency graph $G^{(\ell, i)}$ of every party $P_i \in \Hon$.
    Let $S_p$ be the set from $\ShareSpec_{\Z_s}$, such that $S_p = \Hon$.
   From the protocol steps, it then follows that the dealer $P_{\ell}$  will eventually compute core-sets $\W^{(\ell)}_{p, q}$ 
   for $q = 1, \ldots, |\Z_s|$ and broadcast-set $\BroadcastSet^{(\ell)}_p$ with respect to $S_p$ as follows,
	 \begin{myitemize}
	 \item[--] If $S_q$ constitutes a clique in the graph $G^{(\ell, \ell)}$, then $\W^{(\ell)}_{p, q}$ is set as $S_q$. 
	 \item[--] Else if $(S_p \cap S_q)$ constitutes a clique in $G^{(\ell, \ell)}$ and
	 $\Z_s$ satisfies the $\Q^{(1)}(S_p \cap S_q, \Z_s)$ condition, 
	 then $\W^{(\ell)}_{p, q}$ is set as $(S_p \cap S_q)$.
	 \item[--] Else  $\W^{(\ell)}_{p, q}$ is set to $S_q$ and $q$ is included to $\BroadcastSet^{(\ell)}_p$. 	 
	 \end{myitemize}
After computing the core-sets and  broadcast-set, $P_{\ell}$ will eventually broadcast 
 $(\CandidateCoreSets, P_{\ell}, S_p, \{ \W^{(\ell)}_{p, q}\}_{q = 1, \ldots, |\Z_s|}, \BroadcastSet^{(\ell)}_p, \{ s^{(\ell)}_q\}_{q \in \BroadcastSet^{(\ell)}_p})$. 
  
We next proceed to prove the second part of the lemma, whose proof is again similar to the proof of the second part of the Lemma \ref{lemma: MVSSHonestDealerSynchronous}, except that all the
  ``favourable" conditions which hold for $P_{\ell}$, are guaranteed to hold eventually for all the parties in $\Hon$. In more detail, 
 consider an {\it arbitrary} $S_p \in \ShareSpec_{\Z_s}$, such that $P_{\ell}$ compute core-sets $\W^{(\ell)}_{p, q}$ 
   for $q = 1, \ldots, |\Z_s|$ and broadcast-set $\BroadcastSet^{(\ell)}_p$ with respect to $S_p$ and broadcasts 
   $(\CandidateCoreSets, P_{\ell}, S_p, \{ \W^{(\ell)}_{p, q}\}_{q = 1, \ldots, |\Z_s|}, \BroadcastSet^{(\ell)}_p, \{ s^{(\ell)}_q\}_{q \in \BroadcastSet^{(\ell)}_p})$. 
   This means the parties in $S_p$ constitute a clique in the graph $G^{(\ell, \ell)}$. 
    We note that all the edges which are present in the graph $G^{(\ell, \ell)}$ when $S_p$ constitute a clique in $G^{(\ell, \ell)}$ are bound to be eventually
    included in the graph $G^{(\ell, i)}$ of every party $P_i \in \Hon$. This is because the edges are included by $P_{\ell}$ based on various $\OK^{(\ell)}(\star, \star)$
    messages, which are received by $P_{\ell}$ through various $\BC$ instances. Consequently, due to the various properties of $\BC$ in the {\it asynchronous} network,
    these $\OK^{(\ell)}(\star, \star)$ messages are bound to be eventually delivered to every party in $\Hon$. As a result, all the properties which hold for $P_{\ell}$ in the graph
    $G^{(\ell, \ell)}$ when $S_p$ constitute a clique in $G^{(\ell, \ell)}$ are bound to hold eventually for every $P_i \in \Hon$ in the graph $G^{(\ell, \ell)}$. 
       Since $P_{\ell}$ is assumed to be {\it honest}, it computes the sets $\{ \W^{(\ell)}_{p, q}\}_{q = 1, \ldots, |\Z_s|}$ and $\BroadcastSet^{(\ell)}_p$, satisfying the following properties.
    \begin{myitemize}
   	\item[--] If $S_q$ constitutes a clique in the graph $G^{(\ell, \ell)}$, then $\W^{(\ell)}_{p, q}$ is set as $S_q$. 
	 \item[--] Else if $(S_p \cap S_q)$ constitutes a clique in $G^{(\ell, \ell)}$ and 
	  $\Z_s$ satisfies the $\Q^{(1)}(S_p \cap S_q, \Z_s)$ condition, then $\W^{(\ell)}_{p, q}$ is set as $(S_p \cap S_q)$.
	 \item[--] Else  $\W^{(\ell)}_{p, q}$ is set to $S_q$ and $q$ is included to $\BroadcastSet^{(\ell)}_p$. 	 
    \end{myitemize}
    We also note that if $S_q = \Hon$, then $q \not \in \BroadcastSet^{(\ell)}_p$ and consequently, $P_{\ell}$ will {\it not} make the share $s^{(\ell)}_q$ public.
    This is because $P_{\ell}$ will set $\W^{(\ell)}_{p, q}$ to $(S_p \cap S_q)$. 
     In more detail, the parties in $(S_p \cap S_q)$ will constitute a clique in $G^{(\ell, \ell)}$, since
    $S_p$ constitutes a clique in $G^{(\ell, \ell)}$, when $P_{\ell}$ starts computing the core-sets $\{ \W^{(\ell)}_{p, q}\}_{q = 1, \ldots, |\Z_s|}$.
    Moreover,  $\Z_s$ will satisfy the $\Q^{(1)}(S_p \cap S_q, \Z_s)$ condition, due to the $\Q^{(2, 1)}(\PartySet, \Z_s, \Z_a)$ condition.
    
    Since $P_{\ell}$ is {\it honest}, from the {\it $\Z_a$-weak validity} and {\it $\Z_a$-fallback validity}  of $\BC$ in the {\it asynchronous} network (see Theorem \ref{thm:BC}),
     it follows that all the parties in
    $\Hon$ will eventually receive $(\CandidateCoreSets, P_{\ell}, S_p, \{ \W^{(\ell)}_{p, q}\}_{q = 1, \ldots, |\Z_s|}, \BroadcastSet^{(\ell)}_p, \{ s^{(\ell)}_q\}_{q \in \BroadcastSet^{(\ell)}_p})$ from the
    broadcast of $P_{\ell}$. Moreover, each party $P_i \in \Hon$ will eventually include $(P_{\ell}, S_p)$ to the set $\C_i$. 
    This is because since $P_{\ell}$ has computed the sets 
     $\{ \W^{(\ell)}_{p, q}\}_{q = 1, \ldots, |\Z_s|}$ and $\BroadcastSet^{(\ell)}_p$ {\it honestly}, these sets will eventually pass all the verifications for each $P_i \in \Hon$.
     The last statement is true because as shown above, all the properties which hold for $P_{\ell}$ in the graph
    $G^{(\ell, \ell)}$ when $S_p$ constitute a clique in $G^{(\ell, \ell)}$ are bound to hold eventually for every $P_i \in \Hon$ in the graph $G^{(\ell, \ell)}$. 
    
    The proof for the rest of the properties stated in the lemma is similar to that of Lemma \ref{lemma: MVSSHonestDealerSynchronous}, except that we now rely on the security properties of ICP
    in the {\it asynchronous} network; to avoid repetition we do not produce the details here.
    The linearity of the underlying IC-signatures is ensured since the parties follow the linearity principle while generating the IC-signatures.
\end{proof}

We next prove an analogue of Lemma \ref{lemma: MVSSHonestPartiesSynchronous} in the {\it asynchronous} network.
\begin{lemma}
\label{lemma: MVSSHonestPartiesAsynchronous}
If the network is asynchronous and if in $\MDVSS$ any honest party $P_i$ receives 
 $(\CandidateCoreSets, P_{\ell}, S_p, \{ \W^{(\ell)}_{p, q}\}_{q = 1, \ldots, |\Z_s|},  \BroadcastSet^{(\ell)}_p,  \{ s^{(\ell)}_q\}_{q \in \BroadcastSet^{(\ell)}_p})$ from the broadcast of any 
  corrupt dealer $P_{\ell} \in \CD$
  and includes $(P_{\ell}, S_p)$ to $\C_i$, then all honest parties $P_j$ will eventually receive 
   $(\CandidateCoreSets, P_{\ell}, S_p, \{ \W^{(\ell)}_{p, q}\}_{q = 1, \ldots, |\Z_s|},  \BroadcastSet^{(\ell)}_p,  \{ s^{(\ell)}_q\}_{q \in \BroadcastSet^{(\ell)}_p})$ from the broadcast of $P_{\ell}$
   and include  $(P_{\ell}, S_p)$ to $\C_j$.
   Moreover, for $q = 1, \ldots, |\Z_s|$, the following holds, except with probability $\Order(n^3 \cdot \errorAICP)$.
     \begin{myitemize}
     \item[--] $\W^{(\ell)}_{p, q}$ is either $S_q$ or $(S_p \cap S_q)$. Moreover, 
       $\Z_s$ satisfies the $\Q^{(1)}(\W^{(\ell)}_{p, q}, \Z_s)$ condition.    
     \item[--] If  $q \not \in \BroadcastSet^{(\ell)}_p$, then every honest $P_i \in S_q$ will have a common share, say ${s^{\star}}^{(\ell)}_q$.
    Moreover, every honest $P_i \in \W^{(\ell)}_{p, q}$ will have $\ICSig(P_j, P_i, P_k, {s^{\star}}^{(\ell)}_q)$ of every $P_j \in \W^{(\ell)}_{p, q}$ for every $P_k \in \PartySet$.
    Furthermore, if any corrupt $P_j \in \W^{(\ell)}_{p, q}$ has $\ICSig(P_i, P_j, P_k, s'^{(\ell)}_q)$ of any honest $P_i \in \W^{(\ell)}_{p, q}$ for any $P_k \in \PartySet$, then
    $s'^{(\ell)}_q = {s^{\star}}^{(\ell)}_q$ holds. Also, all the underlying IC-signatures will satisfy the linearity property.
     \end{myitemize}   
\end{lemma}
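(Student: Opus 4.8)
The proof will closely parallel that of the synchronous analogue (Lemma~\ref{lemma: MVSSHonestPartiesSynchronous}), with the synchronous guarantees of $\BC$ replaced throughout by their asynchronous counterparts from Theorem~\ref{thm:BC}. Fix a corrupt dealer $P_{\ell} \in \CD$ and an honest $P_i$ that receives $(\CandidateCoreSets, P_{\ell}, S_p, \{ \W^{(\ell)}_{p, q}\}_{q = 1, \ldots, |\Z_s|},  \BroadcastSet^{(\ell)}_p,  \{ s^{(\ell)}_q\}_{q \in \BroadcastSet^{(\ell)}_p})$ from the broadcast of $P_{\ell}$ and includes $(P_{\ell}, S_p)$ in $\C_i$. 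Let $Z^{\star} \in \Z_a$ be the corrupt set and $\Hon = \PartySet \setminus Z^{\star}$ the honest parties. The first step is to argue that every honest $P_j$ eventually receives the same $\CandidateCoreSets$ message: since $P_i$ obtained it from $P_{\ell}$'s broadcast, the $\Z_a$-weak consistency together with $\Z_a$-fallback consistency of $\BC$ (Theorem~\ref{thm:BC}) guarantee that every honest party also eventually outputs this (same) value from that instance of $\BC$.

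\textbf{Inclusion of $(P_{\ell}, S_p)$ in $\C_j$.} Next I would show that the verification checks that $P_i$ performed to include $(P_{\ell}, S_p)$ in $\C_i$ must eventually also hold for every honest $P_j$. The point is that all the relevant checks are of the form ``a certain subset of $\PartySet$ is a clique in the consistency graph $G^{(\ell, j)}$'' together with static $\Q^{(1)}(\cdot, \Z_s)$ conditions (which are verdict-free). Edges of $G^{(\ell, \cdot)}$ are added solely on receipt of $\OK^{(\ell)}(\star,\star)$ messages delivered through instances of $\BC$; hence by $\Z_a$-weak validity, $\Z_a$-fallback validity, $\Z_a$-weak consistency and $\Z_a$-fallback consistency of $\BC$, every edge present in $G^{(\ell, i)}$ at the time $P_i$ accepted the message is eventually present in $G^{(\ell, j)}$. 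Consequently, for $q = 1, \ldots, |\Z_s|$, each $\W^{(\ell)}_{p,q}$ is either $S_q$ or $(S_p \cap S_q)$, and in either case $\Z_s$ satisfies the $\Q^{(1)}(\W^{(\ell)}_{p,q}, \Z_s)$ condition --- for $\W^{(\ell)}_{p,q} = (S_p \cap S_q)$ this is checked directly by $P_i$, and for $\W^{(\ell)}_{p,q} = S_q$ it follows from the $\Q^{(2)}(\PartySet, \Z_s)$ condition.

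\textbf{Common shares and IC-signatures.} For the final bullet, fix $q \notin \BroadcastSet^{(\ell)}_p$. Then $\W^{(\ell)}_{p,q}$ is a clique in $G^{(\ell, \cdot)}$, so it contains at least one honest party, and each honest party in $\W^{(\ell)}_{p,q}$ has broadcast an $\OK^{(\ell)}$ edge only after verifying that the IC-signed share it received equals the value it computed as output during $\SVM(P_{\ell}, s^{(\ell)}_q, S_q)$. Since at least one honest member of $\W^{(\ell)}_{p,q}$ set the corresponding $\flag$ to $1$ in that $\SVM$ instance, Lemma~\ref{lemma:SVM} (asynchronous case, corrupt sender) yields a common value ${s^{\star}}^{(\ell)}_q$ that all honest parties in $S_q$ eventually output from that $\SVM$ instance, except with probability $\Order(n^3 \cdot \errorAICP)$. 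The clique structure of $\W^{(\ell)}_{p,q}$ then forces pairwise agreement of the shares held by honest parties in $\W^{(\ell)}_{p,q}$ and, via the $\Auth$ messages that generated those edges, that each honest $P_i \in \W^{(\ell)}_{p,q}$ holds $\ICSig(P_j, P_i, P_k, {s^{\star}}^{(\ell)}_q)$ for every $P_j \in \W^{(\ell)}_{p,q}$ and $P_k \in \PartySet$, with the reverse (no corrupt $P_j$ holds a signature of an honest $P_i$ on a wrong value) following because an honest $P_i$ only ever signs the share it computed from $\SVM$; linearity of these IC-signatures is immediate since the parties follow the linearity principle.

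\textbf{Main obstacle.} The delicate part is coordinating the asynchronous $\BC$ guarantees (which give only weak/fallback consistency, not the clean $\Z_s$-consistency available in the synchronous case) with the clique-based verification so that ``edge present for $P_i$'' reliably becomes ``edge eventually present for $P_j$'' and, crucially, so that the exact same $\W^{(\ell)}_{p,q}$ sets are agreed upon rather than merely some valid sets. One must also carefully track that at least one honest party inside each $\W^{(\ell)}_{p,q}$ has completed the relevant $\SVM$ instance in order to invoke Lemma~\ref{lemma:SVM}, which relies on the clique containing an honest vertex --- itself a consequence of the $\Q^{(1)}(\W^{(\ell)}_{p,q}, \Z_s)$ condition. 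Everything else is a routine transcription of the synchronous argument with eventual delivery replacing fixed-time delivery.
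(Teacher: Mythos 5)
Your proof is correct and mirrors the paper's own proof: both treat the lemma as a direct transcription of the synchronous Lemma \ref{lemma: MVSSHonestPartiesSynchronous}, replacing the synchronous $\BC$ guarantees with $\Z_a$-weak/fallback validity and consistency to propagate the $\CandidateCoreSets$ message and the consistency-graph edges, and reusing the $\SVM$ and IC-signature arguments verbatim. The only superficial difference is that you re-derive the existence of the common share ${s^{\star}}^{(\ell)}_q$ by noting an honest member of $\W^{(\ell)}_{p,q}$ set the $\flag$ in the $\SVM$ instance, while the paper leans directly on Lemma \ref{lemma:MVSSAsynchronousCD} (which already established this for every $P_{\ell} \in \CD$); both are valid routes to the same fact.
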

\begin{proof}
The proof is very similar to the proof of Lemma \ref{lemma: MVSSHonestPartiesSynchronous}, except that we now rely on the properties of $\BC$ in the {\it asynchronous} network.
  Let $Z^{\star} \in \Z_a$ be the set of {\it corrupt} parties and let $\Hon = \PartySet \setminus Z^{\star}$
 be the set of {\it honest} parties. Now consider 
 an {\it arbitrary corrupt} dealer $P_{\ell} \in \CD$ and an {\it arbitrary} $P_i \in \Hon$, such that 
  $P_i$ receives 
  $(\CandidateCoreSets, P_{\ell}, S_p, \{ \W^{(\ell)}_{p, q}\}_{q = 1, \ldots, |\Z_s|},  \BroadcastSet^{(\ell)}_p,  \{ s^{(\ell)}_q\}_{q \in \BroadcastSet^{(\ell)}_p})$ from the broadcast of
   $P_{\ell}$
  and includes $(P_{\ell}, S_p)$ to $\C_i$. Now consider another {\it arbitrary} $P_j \in \Hon$, such that $P_j \neq P_i$.
  From the {\it $\Z_a$-weak consistency} and {\it $\Z_a$-fallback consistency} of $\BC$ in the {\it asynchronous} network, it follows that $P_j$ is bound to eventually receive
  $(\CandidateCoreSets, P_{\ell}, S_p, \{ \W^{(\ell)}_{p, q}\}_{q = 1, \ldots, |\Z_s|},  \BroadcastSet^{(\ell)}_p,  \{ s^{(\ell)}_q\}_{q \in \BroadcastSet^{(\ell)}_p})$
  from the broadcast of $P_{\ell}$. We wish to show that $P_j$ will eventually include $(P_{\ell}, S_p)$ to $\C_j$.
  For this, we note that since $P_i$ has included $(P_{\ell}, S_p)$ to $\C_i$, 
  {\it all} the following conditions hold for $P_i$, 
    for $q = 1, \ldots, |\Z_s|$.
      \begin{myitemize}
      \item[--] If $q \in \BroadcastSet^{(\ell)}_p$, then the set $\W^{(\ell)}_{p, q} = S_q$.
      \item[--] If $(q \not \in \BroadcastSet^{(\ell)}_p)$, then $\W^{(\ell)}_{p, q}$ is either $S_q$ or  $(S_p \cap S_q)$, such that:
         \begin{myitemize}
         \item[--] If $\W^{(\ell)}_{p, q} = S_q$, then $S_q$ constitutes a clique in $G^{(\ell, i)}$.
         \item[--] Else if $\W^{(\ell)}_{p, q} = (S_p \cap S_q)$, then 
      $(S_p \cap S_q)$ constitutes a clique in $G^{(\ell, i)}$ and   $\Z_s$ satisfies the $\Q^{(1)}(S_p \cap S_q, \Z_s)$ condition.    
         \end{myitemize}
           \end{myitemize}
We claim that all the above conditions will hold eventually {\it even} for $P_j$. This is because {\it all} the edges which are present in the consistency graph $G^{(\ell, i)}$ 
 when $P_i$ includes $(P_{\ell}, S_p)$ to $\C_i$ are bound to be eventually present in the consistency graph $G^{(\ell, j)}$.
 This follows from the {\it $\Z_a$-weak validity, $\Z_a$-fallback validity, $\Z_a$-weak consistency} and {\it $\Z_a$-fallback consistency} 
  of $\BC$ in the {\it asynchronous} network (see Theorem \ref{thm:BC})
  and the fact that the edges in the graph $G^{(\ell, i)}$ are based on $\OK^{(\ell)}(\star, \star)$ messages, which are received through various $\BC$ instances.

The proof regarding the IC-signatures is exactly the same as Lemma \ref{lemma: MVSSHonestPartiesSynchronous} and to avoid repetition, we do not produce the formal details here.
\end{proof}

Finally, based on the previous two lemmas, we prove an analogue of Lemma \ref{lemma:MVSSSynchronousProperties} and show that 
  in an {\it asynchronous} network, all honest parties will eventually output a ``legitimate"
 set of parties $\Core$.
 \begin{lemma}
 \label{lemma:MVSSAsynchronousProperties}
  If the network is asynchronous, then in $\MDVSS$, except with probability $\Order(n^3 \cdot \errorAICP)$,
  almost-surely all honest parties eventually output a common set $\Core$, 
   such that at least one honest party will be present in $\Core$.
   Moreover, corresponding to every $P_{\ell} \in \Core$, there exists some ${s^{\star}}^{(\ell)}$, where ${s^{\star}}^{(\ell)} = s^{(\ell)}$ for an honest $P_{\ell}$,
   which is the input of $P_{\ell}$ for $\MDVSS$,
       such that the values $\{  {s^{\star}}^{(\ell)}  \}_{P_{\ell} \in \Core}$ are linearly secret-shared with IC-signatures. Furthermore, 
       if $P_{\ell}$ is honest, then the adversary's view is independent of $s^{(\ell)}$.
 \end{lemma}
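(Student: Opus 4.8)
The plan is to mirror the proof of Lemma~\ref{lemma:MVSSSynchronousProperties} verbatim in structure, replacing every ``at a fixed time'' statement by an ``eventually, almost-surely'' statement and invoking the asynchronous counterparts of the three building-block lemmas, namely Lemma~\ref{lemma:MVSSAsynchronousCD}, Lemma~\ref{lemma: MVSSHonestDealerAsynchronous} and Lemma~\ref{lemma: MVSSHonestPartiesAsynchronous}. Fix the set of corrupt parties $Z^\star \in \Z_a$ and let $\Hon = \PartySet \setminus Z^\star$; since $\Z_a \subset \Z_s$ we have $\Hon \in \ShareSpec_{\Z_s}$, say $\Hon = S_p$. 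By Lemma~\ref{lemma:MVSSAsynchronousCD}, except with probability $\Order(n^3 \cdot \errorAICP)$ all honest parties eventually agree on a common $\CD$ with $\PartySet \setminus \CD \in \Z_s$, and for every $P_\ell \in \CD$ and every $S_q$ the parties in $\Hon \cap S_q$ eventually hold a common share ${s^\star}^{(\ell)}_q$, equal to $s^{(\ell)}_q$ when $P_\ell$ is honest.

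The first and main step is to show that the two layers of agreement-on-common-subset (the $|\Z_s|$ instances $\BA^{(1,\star)}$, then the $|\CD|$ instances $\BA^{(2,\star)}$) almost-surely terminate with common outputs. For the first layer I will use that the all-honest group $S_p = \Hon$ is a permanent anchor candidate: by Lemma~\ref{lemma: MVSSHonestDealerAsynchronous}, every honest $P_\ell \in \CD$ eventually broadcasts a valid $\CandidateCoreSets$ message with respect to $S_p$, and by the fallback/consistency guarantees of $\BC$ together with Lemma~\ref{lemma: MVSSHonestPartiesAsynchronous}, every honest party eventually includes $(P_\ell, S_p)$ in $\C_i$. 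Since $\CD \cap Z^\star \in \Z_s$ (monotonicity), the set $\Hon \cap \CD$ serves as a candidate ${\cal A}_{p,i}$, so every honest party eventually inputs $1$ into $\BA^{(1,p)}$; by $\Z_a$-validity and almost-sure liveness of $\BA$ in the asynchronous network, $\BA^{(1,p)}$ eventually outputs $1$, after which all honest parties supply inputs to the remaining $\BA^{(1,\star)}$ instances and, by $\Z_a$-security of $\BA$, all $|\Z_s|$ instances eventually produce common binary outputs. Hence a common $\qcore$ is determined (the least index with output $1$, which exists since $p$ is always a candidate). The second layer is handled analogously: since $\BA^{(1,\qcore)}$ output $1$, some honest $P_k$ input $1$, so $P_k$ held a set ${\cal A}_{\qcore,k} \subseteq \CD$ with $\CD \setminus {\cal A}_{\qcore,k} \in \Z_s$ and $(P_\ell, S_{\qcore}) \in \C_k$ for all $P_\ell \in {\cal A}_{\qcore,k}$; Lemma~\ref{lemma: MVSSHonestPartiesAsynchronous} propagates these memberships to every honest party, so ${\cal A}_{\qcore,k}$ plays the role of the set ${\cal B}_{\qcore}$, all honest parties eventually input $1$ into $\BA^{(2,j)}$ for $P_j \in {\cal B}_{\qcore}$, and $\Z_a$-security of $\BA$ yields common outputs in all $|\CD|$ instances $\BA^{(2,\star)}$, hence a common $\Core$.

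Next I will show $\Core \cap \Hon \neq \emptyset$ and that $\{{s^\star}^{(\ell)}\}_{P_\ell \in \Core}$ is linearly secret-shared with IC-signatures. Since ${\cal B}_{\qcore} \subseteq \Core$, it suffices to show ${\cal B}_{\qcore} \cap \Hon \neq \emptyset$; but $\PartySet \setminus ({\cal B}_{\qcore} \cap \Hon) = Z^\star \cup (\PartySet \setminus \CD) \cup (\CD \setminus {\cal B}_{\qcore})$ is a union of one set from $\Z_a$ and two sets from $\Z_s$, hence a proper subset of $\PartySet$ by the $\Q^{(2,1)}(\PartySet, \Z_s, \Z_a)$ condition. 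For the secret-sharing claim, I will argue that for each $P_\ell \in \Core$ all honest parties eventually receive $(\CandidateCoreSets, P_\ell, S_{\qcore}, \ldots)$ from the broadcast of $P_\ell$ (again because $\BA^{(2,\ell)}$ output $1$ forces some honest party to have received and verified it, and Lemma~\ref{lemma: MVSSHonestPartiesAsynchronous} propagates), after which every honest party computes the same $\W_q$ (equal to $S_q$ if $\W^{(\ell)}_{\qcore,q} = S_q$ for all $P_\ell \in \Core$, and $S_{\qcore} \cap S_q$ otherwise), with $\Z_s$ satisfying $\Q^{(1)}(\W_q, \Z_s)$ by Lemmas~\ref{lemma: MVSSHonestDealerAsynchronous} and~\ref{lemma: MVSSHonestPartiesAsynchronous}. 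Setting ${s^\star}^{(\ell)} \defined \sum_q {s^\star}^{(\ell)}_q$, the same two lemmas guarantee that every honest party in $S_q$ eventually holds $[{s^\star}^{(\ell)}]_q$ and every honest party in $\W_q$ eventually holds the required IC-signatures of the parties in $\W_q$ on it (the default value being used when $q \in \BroadcastSet^{(\ell)}_{\qcore}$), with all underlying IC-signatures satisfying the linearity property since the parties follow the linearity principle. Privacy for an honest $P_\ell \in \Core$ is exactly the privacy clause of Lemma~\ref{lemma: MVSSHonestDealerAsynchronous}, and collecting the failure probabilities from all invoked sub-lemmas --- which, thanks to local dispute control, do not scale with $|\Z_s|$ --- gives the overall bound $\Order(n^3 \cdot \errorAICP)$.

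The hard part will be the two-dimensional ACS termination argument: one must carefully track that the anchor group $\Hon$ stays a valid candidate throughout, that the first-layer BA instances feed correctly into the choice of $\qcore$ (so that $\qcore$ is well-defined and common), and that the second-layer BA instances terminate without any deadlock among honest parties --- all while only $\Z_a$-validity (not $\Z_s$-validity) of $\BA$ is available in the asynchronous setting, and while the almost-sure-liveness caveat of ABA must be threaded through every BA invocation and through the ``almost-surely'' qualifier in the statement.
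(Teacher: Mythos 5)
Your overall structure matches the paper's own proof: the same anchor group $S_p=\Hon$, the same appeal to Lemmas~\ref{lemma:MVSSAsynchronousCD}, \ref{lemma: MVSSHonestDealerAsynchronous} and \ref{lemma: MVSSHonestPartiesAsynchronous}, the same two-layer ACS, and essentially the same $\Q^{(2,1)}$ counting argument for $\Core\cap\Hon\neq\emptyset$ (you apply it to ${\cal B}_{\qcore}$, the paper to $\Core$ directly; both are fine). However, there is a genuine gap in the step you yourself flag as ``the hard part,'' and the flag does not cure it.

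You assert that ``every honest party eventually inputs $1$ into $\BA^{(1,p)}$; by $\Z_a$-validity and almost-sure liveness of $\BA$ \ldots $\BA^{(1,p)}$ eventually outputs $1$.'' This premise is false in general. In the protocol, an honest party inputs $0$ to every remaining $\BA^{(1,\star)}$ instance as soon as \emph{any} instance has produced output $1$ for it, so an honest party that has already seen some $\BA^{(1,q)}$ output $1$ (for $q\neq p$) will input $0$ to $\BA^{(1,p)}$. Thus the honest inputs to $\BA^{(1,p)}$ can be split, $\Z_a$-validity does not apply to it, and $\BA^{(1,p)}$ is not guaranteed to output $1$. What is guaranteed is weaker but sufficient: \emph{some} $\BA^{(1,\star)}$ eventually outputs $1$ for all honest parties. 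The paper obtains this by a case distinction --- either some honest party has already started with input $0$ on $\BA^{(1,p)}$, in which case it has already seen some $\BA^{(1,q)}$ output $1$ and $\Z_a$-consistency propagates that $1$ to everyone, or no honest party has yet input $0$ there, in which case they all eventually input $1$ and $\Z_a$-validity gives output $1$. Your conclusion ``hence a common $\qcore$ is determined \ldots which exists since $p$ is always a candidate'' implicitly assumes $\BA^{(1,p)}$ outputs $1$, which does not follow from your argument. Exactly the same issue reappears in your second-layer claim that ``all honest parties eventually input $1$ into $\BA^{(2,j)}$ for $P_j\in{\cal B}_{\qcore}$,'' and again the paper handles it by a case split on whether some honest party has already started supplying $0$s. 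Both case analyses need to be written out; without them the proof that the two ACS layers terminate with a well-defined common $\qcore$ and $\Core$ is incomplete.

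A minor secondary point: when you propagate $(P_\ell,S_p)\in\C_i$ to all honest parties you cite Lemma~\ref{lemma: MVSSHonestPartiesAsynchronous}, which is stated for \emph{corrupt} dealers; for honest $P_\ell\in\CD$ the propagation is Lemma~\ref{lemma: MVSSHonestDealerAsynchronous}. In the ACS step you cannot distinguish the two a priori, so both lemmas should be invoked.
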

\begin{proof}
The proof structure is very similar to that of Lemma \ref{lemma:MVSSSynchronousProperties}, except that we now rely on the properties of $\BA$ and $\BC$ in the {\it asynchronous} network and
 Lemma \ref{lemma:MVSSAsynchronousCD}-\ref{lemma: MVSSHonestPartiesAsynchronous}. 
  Let $Z^{\star} \in \Z_a$ be the set of {\it corrupt} parties and let $\Hon = \PartySet \setminus Z^{\star}$ be the set of honest parties. 
   From Lemma \ref{lemma:MVSSAsynchronousCD},  except with probability $\Order(n^3 \cdot \errorAICP)$, almost-surely all the parties in $\Hon$ will eventually 
   have a common set of committed dealers
  $\CD$, where 
     $\PartySet \setminus \CD \in \Z_s$. Moreover, corresponding to every dealer $P_{\ell} \in \CD$ and every $S_q \in \ShareSpec_{|\Z_s|}$,
      every party in $(\Hon \cap S_q)$ will eventually have a common share, say ${s^{\star}}^{(\ell)}_q$, which is the same as $s^{(\ell)}_q$, for an $P_{\ell}$.
   We begin by showing that once the set of committed dealers $\CD$ is decided, then almost-seurely, all the $|\Z_s|$ instances $\BA^{(1, \star)}$ of $\BA$
   and then all the $|\CD|$ instances $\BA^{(2, \star)}$ of $\BA$ will eventually produce some output, for all the parties in $\Hon$.
 
 We first claim that irrespective of way messages are scheduled and the order in which the parties in $\Hon$ participate in various $\BA^{(1, \star)}$ instances, there will be
  some instance $\BA^{(1, p)}$ corresponding to some $S_p \in \ShareSpec_{\Z_s}$, which will eventually produce output $1$ for all the parties in $\Hon$.
  For this, consider the set $S_p \in \ShareSpec_{|\Z_s|}$, such that $S_p = \Hon$ (such an $S_p$ is bound to exist since $Z^{\star} \in \Z_s$ also holds, as $\Z_a \subset \Z_s$). 
  If there exists {\it some} $P_i \in \Hon$ which starts participating with input $0$ in the instance $\BA^{(1, p)}$, then the claim is true, because $P_i$ participates with input  
  $0$ during $\BA^{(1, p)}$, only after receiving the output $1$ from some other instance of $\BA^{(1, \star)}$, say $\BA^{(1, q)}$.
  And hence from the {\it $\Z_a$-consistency} of $\BA$ in the {\it asynchronous} network (Theorem \ref{thm:BA}), all the parties in $\Hon$ will eventually obtain the output $1$ from the instance
  $\BA^{(1, q)}$, thus proving our claim. On the other hand, consider the case when {\it no} party has yet started participating with any input in the instance $\BA^{(1, p)}$. 
   Then corresponding to each $P_{\ell} \in (\Hon \cap \CD)$, every $P_i \in \Hon$ will eventually receive
  $(\CandidateCoreSets, P_{\ell}, S_p, \{ \W^{(\ell)}_{p, q}\}_{q = 1, \ldots, |\Z_s|},  \BroadcastSet^{(\ell)}_p,  \{ s^{(\ell)}_q\}_{q \in \BroadcastSet^{(\ell)}_p})$
  from the broadcast of $P_{\ell}$ 
   and includes $(P_{\ell}, S_p)$ to $\C_i$ (see Lemma \ref{lemma: MVSSHonestDealerAsynchronous}). 
    Since $\CD \setminus \Hon  \subseteq Z^{\star}$ and $Z^{\star} \in \Z_s$ (due to condition $\Z_s \subset \Z_a$), 
    it follows that 
    every party $P_i \in \Hon$ will eventually have a set ${\cal A}_{p, i}$ (namely ${\cal A}_{p, i} = \Hon$), 
    where $\CD \setminus {\cal A}_{p, i} \in \Z_s$
      and where $(P_{\ell}, S_p) \in \C_i$ for every $P_{\ell} \in {\cal A}_{p, i}$.
      Consequently, each $P_i \in \Hon$ will eventually start participating in the instance $\BA^{(1, p)}$ 
      with input $1$, if they have not done so. And from the {\it $\Z_s$-validity} of $\BA$ in the {\it asynchronous} network (see Theorem \ref{thm:BA}), it follows that 
      all the parties in $\Hon$ eventually obtain the output $1$ from the instance $\BA^{(1, p)}$, thus proving our claim in this case as well.
      
      Now from the above claim, it follows that every party in $\Hon$ will eventually 
      start participating in the remaining $\BA^{(1, \star)}$ instances for which no input has been provided yet (if there are any), with input $0$.  
  And from the {\it $\Z_a$-security} of $\BA$ in the {\it asynchronous} network, almost-surely,
   these $\BA^{(1, \star)}$  instances will eventually produce common outputs, for every party in $\Hon$.
  As a result, all the parties in $\Hon$ will eventually compute a $\qcore \in \{1, \ldots, |\Z_s| \}$. Moreover, $\qcore$ will be {\it common} for all the
  parties in $\Hon$, since it corresponds to the least-indexed $\BA^{(1, \star)}$ instance among $\BA^{(1, 1)}, \ldots, \BA^{(1, |\Z_s|)}$, which produces output $1$. And from the
  {\it $\Z_a$-security} of $\BA$ in the {\it asynchronous} network, each $\BA^{(1, \star)}$ instance produces a common output for every party in $\Hon$.
  We also note that $\qcore$ will be indeed set to some value from the set $\{1, \ldots, |\Z_s| \}$. This is because as shown above, the index $p$ where $S_p = \Hon$ always constitute a candidate
  $\qcore$.
  
   We next claim that corresponding to $S_{\qcore} \in \ShareSpec_{\Z_s}$, 
  there exists a subset of parties ${\cal B}_{\qcore}$, where $\CD \setminus {\cal B}_{\qcore} \in \Z_s$, such that corresponding to every $P_j \in {\cal B}_{\qcore}$,
  the ordered pair $(P_j, S_{\qcore})$ is eventually included in the set $\C_i$ of every $P_i \in \Hon$.
  Assuming that the claim is true, we next show that there always exists a set of parties ${\cal B}$ where $\CD \setminus {\cal B} \in \Z_s$, such that the 
  $\BA^{(2, j)}$ instance eventually produce output $1$ for all the parties in $\Hon$, corresponding to every $P_j \in {\cal B}$. For this, we consider two possible cases.
  If any $P_i \in \Hon$ has started participating with input $0$ in any instance $\BA^{(2, \star)}$, then it implies that for $P_i$, there exists a subset ${\cal B}_i$ where
  $\CD \setminus {\cal B}_i \in \Z_s$ and where corresponding to each $P_j \in {\cal B}_i$, the instance $\BA^{(2, j)}$ has produced output $1$. Hence from the 
  {\it $\Z_a$-consistency} of $\BA$ in the {\it asynchronous} network (see Theorem \ref{thm:BA}), all these $\BA^{(2, j)}$ instances will eventually produce output $1$ for all the parties in $\Hon$.
  On the other hand, consider the case when {\it no} party in $\Hon$ has started participating with input $0$ in any instance $\BA^{(2, \star)}$.
  Then as per the claim, all the parties in $\Hon$ will eventually start participating with input $1$ in the instances $\BA^{(2, j)}$, corresponding to every
  $P_j \in {\cal B}_{\qcore}$. Hence from the {\it $\Z_a$-security} of the $\BA$ in the {\it synchronous} network (Theorem \ref{thm:BA}),
  all the parties will eventually obtain the output $1$ in the $\BA^{(2, j)}$ instances, corresponding to every
  $P_j \in {\cal B}_{\qcore}$. Thus irrespective of the case, the set ${\cal B}$ is guaranteed. 
   As a result, all the parties in $\Hon$ will eventually start participating 
  in the remaining $\BA^{(2, \star)}$ instances for which no input has been provided yet (if there are any), with input $0$. 
  Consequently, from the {\it $\Z_a$-security} of the $\BA$ in the {\it asynchronous} network (Theorem \ref{thm:BA}), 
  almost-surely, all the parties in $\Hon$ will eventually have some output from all the $|\CD|$ instances of $\BA^{(2, \star)}$. 
  Moreover, the outputs will be {\it common} for all the parties in $\Hon$. Furthermore, the parties in $\Hon$ will have a subset $\Core$, which corresponds to all the 
   $\BA^{(2, j)}$ instances, which produces output $1$. Note that $\CD \setminus \Core \in \Z_s$ holds, since we have shown that the 
   $\BA^{(2, j)}$ instances, corresponding to the parties $P_j \in {\cal B}$ will produce output $1$, implying ${\cal B} \subseteq \Core$.
    And $\CD \setminus {\cal B} \in \Z_s$ holds. Now since $\PartySet \setminus \CD \in \Z_s$ and $\CD \setminus \Core \in \Z_s$, it follows that
    $(\Hon \cap \Core) \neq \emptyset$, since the $\Q^{(2, 1)}(\PartySet, \Z_s, \Z_a)$ condition is satisfied.
   
   We next proceed to prove our claim about the existence of ${\cal B}_{\qcore}$. 
   Since the instance $\BA^{(1, \qcore)}$ has produced output $1$, it follows that at least one party from $\Hon$, say $P_k$, must have participated with input $1$
   in the instance $\BA^{(1, \qcore)}$.  This is because if {\it all} the parties in $\Hon$ participates with input $0$ in the instance  $\BA^{(1, \qcore)}$, then from the {\it $\Z_a$-validity} of $\BA$ in the
   {\it asynchronous} network (Theorem \ref{thm:BA}), all the parties in $\Hon$ would have obtained the output $0$ from the instance $\BA^{(1, \qcore)}$, which is a contradiction.
   Now since $P_k$ has participated with input $1$
   in the instance $\BA^{(1, \qcore)}$, it follows there exists a subset of parties ${\cal A}_{\qcore, k}$, where 
   $\CD \setminus {\cal A}_{\qcore, k} \in \Z_s$, such that $(P_{\ell}, S_{\qcore})$ is present in the set $\C_k$, corresponding to every $P_{\ell} \in {\cal A}_{\qcore, k}$.
   We show that the set ${\cal A}_{\qcore, k}$ constitutes the candidate ${\cal B}_{\qcore}$. For this, note that for any $P_{\ell} \in {\cal A}_{\qcore, k}$,
   party $P_k$ includes $(P_{\ell}, S_{\qcore})$ to $\C_i$, only after receiving
   a message $(\CandidateCoreSets, P_{\ell}, S_{\qcore}, \{ \W^{(\ell)}_{\qcore, q}\}_{q = 1, \ldots, |\Z_s|},  \BroadcastSet^{(\ell)}_{\qcore},  \{ s^{(\ell)}_q\}_{q \in \BroadcastSet^{(\ell)}_{\qcore}})$
   from the broadcast of $P_{\ell}$
   and verifying it. 
    It then follows from Lemma \ref{lemma:MVSSAsynchronousProperties} that eventually,
    {\it every} party in $\Hon$ would have received  
    $(\CandidateCoreSets, P_{\ell}, S_{\qcore}, \{ \W^{(\ell)}_{\qcore, q}\}_{q = 1, \ldots, |\Z_s|},  \BroadcastSet^{(\ell)}_{\qcore},  \{ s^{(\ell)}_q\}_{q \in \BroadcastSet^{(\ell)}_{\qcore}})$ from 
    each $P_{\ell} \in {\cal A}_{\qcore, k}$. Hence each party $P_i \in \Hon$ would eventually include $(P_{\ell}, S_{\qcore})$ to the set $\C_i$, corresponding to every $P_{\ell} \in {\cal A}_{\qcore, k}$.

   We next claim that corresponding to every $P_{\ell} \in \Core$, every $P_i \in \Hon$ eventually receives a message 
    $(\CandidateCoreSets, P_{\ell}, S_{\qcore}, \{ \W^{(\ell)}_{\qcore, q}\}_{q = 1, \ldots, |\Z_s|},  \BroadcastSet^{(\ell)}_{\qcore},  \{ s^{(\ell)}_q\}_{q \in \BroadcastSet^{(\ell)}_{\qcore}})$ from 
    the broadcast of $P_{\ell}$. The proof for this is very similar to the proof of the previous claim and relies on the properties of $\BA$ in the {\it asynchronous} network.
     So consider an {\it arbitrary} $P_{\ell} \in \Core$.
    This implies that the instance  $\BA^{(2, \ell)}$ have produced output $1$, which further implies that at least one party from $\Hon$, say $P_m$, have participated with input
    $1$ during the instance $\BA^{(2, \ell)}$. If not, then from the {\it $\Z_a$-validity} of $\BA$ in the {\it asynchronous} network (Theorem \ref{thm:BA}), the instance $\BA^{(2, \ell)}$
    would have produced output $0$, which is a contradiction. 
    Now since $P_m$ participates with input $1$ in the instance $\BA^{(2, \ell)}$, it follows that
    $P_m$ must have received a message 
     $(\CandidateCoreSets, P_{\ell}, S_{\qcore}, \{ \W^{(\ell)}_{\qcore, q}\}_{q = 1, \ldots, |\Z_s|},  \BroadcastSet^{(\ell)}_{\qcore},  \{ s^{(\ell)}_q\}_{q \in \BroadcastSet^{(\ell)}_{\qcore}})$
     from the broadcast of $P_{\ell}$ and included $(P_{\ell}, S_{\qcore})$ to $\C_m$. 
      It then follows from Lemma \ref{lemma:MVSSAsynchronousProperties} that eventually,
      {\it every} party $P_i$ in $\Hon$ would receive
    $(\CandidateCoreSets, P_{\ell}, S_{\qcore}, \{ \W^{(\ell)}_{\qcore, q}\}_{q = 1, \ldots, |\Z_s|},  \BroadcastSet^{(\ell)}_{\qcore},  \{ s^{(\ell)}_q\}_{q \in \BroadcastSet^{(\ell)}_{\qcore}})$ from 
    $P_{\ell}$ and includes $(P_{\ell}, S_{\qcore})$ to $\C_i$.
    
     Till now we have shown that almost-surely, all the $|\Z_s|$ instances $\BA^{(1, \star)}$ of $\BA$
      and then all the $|\CD|$ instances $\BA^{(2, \star)}$ of $\BA$ will eventually produce some output, for all the parties in $\Hon$. Moreover, 
      all the parties in $\Hon$ will eventually have a common $\qcore \in \{1, \ldots, |\Z_s| \}$ and a common set $\Core \subseteq \PartySet$, where $\Core$ has at least one {\it honest} party.    
     Furthermore, corresponding to every $P_{\ell} \in \Core$, each $P_i \in \Hon$ would eventually received a message 
     $(\CandidateCoreSets, P_{\ell}, S_{\qcore}, \{ \W^{(\ell)}_{\qcore, q}\}_{q = 1, \ldots, |\Z_s|},  \BroadcastSet^{(\ell)}_{\qcore},  \{ s^{(\ell)}_q\}_{q \in \BroadcastSet^{(\ell)}_{\qcore}})$
     from the broadcast of $P_{\ell}$. The rest of the proof will be now same as Lemma \ref{lemma:MVSSSynchronousProperties}, except that we 
     now rely on Lemma \ref{lemma: MVSSHonestDealerAsynchronous} and Lemma \ref{lemma: MVSSHonestPartiesAsynchronous}.
     To avoid repetition, we do not produce the formal details here.        
\end{proof}

We finally derive the communication complexity of the protocol.
\begin{lemma}
\label{lemma:MDVSSCC}
Protocol $\MDVSS$ incurs a communication of $\Order(|\Z_s|^2 \cdot n^9 \cdot \log{|\F|} \cdot |\sigma|)$ bits. In addition, 
 $\Order(|\Z_s| + n)$ instances of $\BA$ are invoked.
\end{lemma}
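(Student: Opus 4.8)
\textbf{Proof plan for Lemma \ref{lemma:MDVSSCC}.} The plan is to walk through the protocol $\MDVSS$ (Fig \ref{fig:MDVSS}) phase by phase, tally the number of invocations of each sub-protocol (together with the message sizes being broadcast), substitute the per-instance costs established earlier, and then take the dominant term. Since every sub-protocol cost I cite (that of $\SVM$, $\Auth$, $\BC$) already accounts for the communication internal to it, no double counting occurs and the sum is just the sum over the explicit phases.

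First I would handle the \emph{Committing Shares} phase, which is the main contributor. Here every $P_i \in \PartySet$ acts as sender in $|\Z_s|$ instances of $\SVM$, so there are $n \cdot |\Z_s|$ instances of $\SVM$ in total, and by the communication bound in Lemma \ref{lemma:SVM} each costs $\Order(|\Z_s| \cdot n^8 \cdot \log{|\F|} \cdot |\sigma|)$ bits; multiplying gives $\Order(|\Z_s|^2 \cdot n^9 \cdot \log{|\F|} \cdot |\sigma|)$ bits for this phase. Next, in the \emph{Exchanging IC-Signed Values} phase, for each dealer $P_\ell \in \CD$ (at most $n$), each group $S_q \in \ShareSpec_{\Z_s}$ with $P_i \in S_q$ (at most $|\Z_s|$), each $P_j \in S_q$ and each $P_k \in \PartySet$, party $P_i$ runs one instance of $\Auth$; summing over all $P_i$ this is $\Order(|\Z_s| \cdot n^4)$ instances of $\Auth$, each costing $\Order(n^5 \cdot \log{|\F|} \cdot |\sigma|)$ bits by Theorem \ref{thm:ICP}, hence $\Order(|\Z_s| \cdot n^9 \cdot \log{|\F|} \cdot |\sigma|)$ bits, which is subsumed by the $\SVM$ term (it has an extra $|\Z_s|$ factor). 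Then, in the \emph{Announcing Results of Pairwise Consistency Tests} phase there are at most $\Order(n^3)$ broadcasts of $\OK^{(\ell)}(\star,\star)$ messages (one per dealer, per ordered pair of parties), each of size $\Order(\log n)$ bits; via $\BC$ (Theorem \ref{thm:BC}) this costs $\Order(n^7 \cdot \log n \cdot |\sigma|)$ bits. In the \emph{Public Announcement of Core Sets} phase there are at most $n \cdot |\Z_s|$ broadcasts of $\CandidateCoreSets$ messages, each of size $\Order(|\Z_s|\cdot n + |\Z_s|\cdot\log{|\F|})$ bits (encoding $|\Z_s|$ core-sets of $\Order(n)$ bits plus up to $|\Z_s|$ field elements), giving $\Order(|\Z_s|^2 \cdot n^5 \cdot (n + \log{|\F|}) \cdot |\sigma|)$ bits via $\BC$, again dominated by the $\SVM$ term. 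The remaining phases (constructing consistency graphs, identifying valid core sets, computing the output) involve only local computation. Summing, the total communication is $\Order(|\Z_s|^2 \cdot n^9 \cdot \log{|\F|} \cdot |\sigma|)$ bits.

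For the count of $\BA$ instances, I would observe that $\BA$ is invoked only in the two ACS sub-steps: the \emph{Identifying the Set of Committed Dealers} step runs $n$ instances $\BA^{(1)},\ldots,\BA^{(n)}$, and the \emph{Selecting the Common Committed Dealers and Core Sets} step runs $|\Z_s|$ instances $\BA^{(1,\star)}$ followed by $|\CD| \le n$ instances $\BA^{(2,\star)}$. No $\BA$ instance is hidden inside the invoked sub-protocols ($\SVM$, $\VSS$, $\Rec$, $\Auth/\Reveal$, $\BC$ all avoid $\BA$), so the total is $\Order(|\Z_s| + n)$ instances of $\BA$.

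The only real obstacle is bookkeeping: getting the multiplicities right in the two phases that dominate — in particular correctly accounting that there are $n\cdot|\Z_s|$ (not just $|\Z_s|$) instances of $\SVM$ because every party is a sender, and that the $\Auth$-exchange phase is run once per committed dealer rather than once overall — and then verifying that the $\BC$-based broadcasts of $\OK$ and $\CandidateCoreSets$ messages really are lower-order than the $\SVM$ contribution. Once these are pinned down, the result follows by a direct substitution of the cited per-instance bounds.
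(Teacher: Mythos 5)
Your proof is correct and follows essentially the same route as the paper: you tally the two dominant contributions — $\Order(|\Z_s|\cdot n)$ instances of $\SVM$ and $\Order(|\Z_s|\cdot n^4)$ instances of $\Auth$ — substitute the per-instance bounds from Lemma~\ref{lemma:SVM} and Theorem~\ref{thm:ICP}, and count the $\BA$ instances across the two ACS steps, exactly as the paper does. The extra accounting you supply for the $\OK^{(\ell)}(\star,\star)$ and $\CandidateCoreSets$ broadcasts is sound and confirms they are lower order, but it is not part of the paper's (terser) argument.
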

\begin{proof}
The number of $\BA$ instances follows easily from the protocol inspection. The communication complexity of the protocol is dominated by the instances of $\SVM$ and $\Auth$ invoked in the protocol.
 There are $\Order(|\Z_s| \cdot n)$ instances of $\SVM$ and $\Order(|\Z_s| \cdot n^4)$ instances of $\Auth$ invoked. The communication complexity now follows from the communication complexity of
  $\SVM$ (Lemma \ref{lemma:SVM}) and $\Auth$ (Theorem \ref{thm:ICP}).
\end{proof}

Theorem \ref{thm:MDVSS} now follows from Lemma \ref{lemma:MVSSSynchronousProperties}, Lemma \ref{lemma:MVSSAsynchronousProperties} and Lemma \ref{lemma:MDVSSCC}.

\section{Properties of the Triple-Generation Protocol}
\label{app:Triples}
In this section, we prove the properties of the triple-generation protocol and related sub-protocols.
\subsection{Properties of the Protocol $\LSh$}
We first prove the properties of the protocol $\LSh$ (see Fig \ref{fig:LSh} for the formal description). \\~\\
\noindent {\bf Lemma \ref{lemma:LSh}.}
{\it Let $r$ be a random value which is linearly secret-shared with IC-signatures with ${\GW}_1, \ldots, {\GW}_{|\Z_s|}$ being the underlying core-sets.
  Then protocol $\LSh$ achieves the following where $\D$ participates with the input $s$.
\begin{myitemize}  
\item[--] If $\D$ is honest, then the following hold, where $\TimeLSh = \TimeRec + \TimeBC$.
 \begin{myitemize}
    \item[--] {\bf $\Z_s$-Correctness}: If the network is synchronous, then except with probability $\Order(n^3 \cdot \errorAICP)$, the honest parties output $[s]$ at the time $\TimeLSh$, with
     ${\GW}_1, \ldots, {\GW}_{|\Z_s|}$ being the underlying core-sets.
     \item[--] {\bf $\Z_a$-Correctness}: If the network is asynchronous, then except with probability $\Order(n^3 \cdot \errorAICP)$, the honest parties eventually
      output $[s]$, with
     ${\GW}_1, \ldots, {\GW}_{|\Z_s|}$ being the underlying core-sets.
    \item[--] {\bf Privacy}: Irrespective of the network type, the view of the adversary remains independent of $s$.
 \end{myitemize}
  \item[--] If $\D$ is corrupt then either no honest party computes any output or there exists some value, say $s^{\star}$, such that the following hold.
    \begin{myitemize}
     \item[--] {\bf $\Z_s$-Commitment}: If the network is synchronous, then
      except with probability $\Order(n^3 \cdot \errorAICP)$, the honest parties output $[s^{\star}]$, with
     ${\GW}_1, \ldots, {\GW}_{|\Z_s|}$ being the underlying core-sets. Moreover, if any honest party computes its output at the time $T$, then all honest parties will have their respective output by
     the time $T + \Delta$.
        \item[--] {\bf $\Z_a$-Commitment}: If the network is asynchronous, then
      except with probability $\Order(n^3 \cdot \errorAICP)$, the honest parties eventually output $[s^{\star}]$, with
     ${\GW}_1, \ldots, {\GW}_{|\Z_s|}$ being the underlying core-sets.     
    \end{myitemize}
  \item[--] {\bf Communication Complexity}: $\Order(|\Z_s| \cdot n^3 \cdot \log{|\F|} + n^4 \cdot \log{|\F|} \cdot |\sigma|)$ bits are communicated by the honest parties.
\end{myitemize}
}
\begin{proof}
Let us first consider an {\it honest} dealer. Moreover, we consider a {\it synchronous} network. Form the {\it $\Z_s$-correctness} of $\Rec$ in the {\it synchronous} network (Lemma \ref{lemma:Rec}),
  after time $\TimeRec$, the dealer $\D$ will reconstruct
 $r$, except with probability $\Order(n^3 \cdot \errorAICP)$. Moreover, from the {\it privacy} property of $\Rec$, $r$ will be random from the point of the view of the adversary. Since $\D$ is {\it honest}, from the
  {\it $\Z_s$-validity} of $\BC$ in the {\it synchronous} network (Theorem \ref{thm:BC}), all honest parties will receive $\s$ from the broadcast of $\D$ at the time $\TimeRec + \TimeBC$, 
   where $\s = s + r$. The parties then take the
  default linear secret-sharing $[\s]$ of $\s$ with the IC-signatures, with ${\GW}_1, \ldots, {\GW}_{|\Z_s|}$ being the underlying core-sets. Since
  $r$ is also linearly secret-shared with IC-signatures, with ${\GW}_1, \ldots, {\GW}_{|\Z_s|}$ being the underlying core-sets, from the linearity property of the secret sharing, it follows that
  $[\s - r]$ will be the same as a linear secret-sharing of $s$ with IC-signatures, with ${\GW}_1, \ldots, {\GW}_{|\Z_s|}$ being the underlying core-sets.
  This proves the {\it $\Z_s$-correctness}.
  The {\it privacy} of $s$ follows since $r$ remains random for the adversary and hence $\s$ {\it does not} reveal any information about $s$ to the adversary.
  The {\it $\Z_a$-correctness} and {\it privacy} for an {\it honest} dealer in an {\it asynchronous} network follows using similar arguments as above, except that we now rely on the 
  {\it $\Z_a$-correctness} of $\Rec$ in the {\it asynchronous} network (Lemma \ref{lemma:Rec}) and the  {\it $\Z_a$-validity} of $\BC$ in the {\it asynchronous} network (Theorem \ref{thm:BC}).
  
  We next consider a {\it corrupt} dealer and a {\it synchronous} network. Let $P_h$ be the {\it first} honest party which computes some output in the protocol, at the time $T$. This implies that $P_h$ has received some value
   $\s$ from the broadcast of $\D$. From the {\it $\Z_s$-consistency} and {\it $\Z_s$-fallback consistency} of $\BC$ in the {\it synchronous} network (Theorem \ref{thm:BC}), it follows that all the honest parties will receive
   $\s$ from the broadcast of $\D$, latest by the time $T + \Delta$. We define
   \[ s^{\star} \defined \s - r.  \]
   Since the (honest) parties take the
  default linear secret-sharing $[\s]$ of $\s$ with the IC-signatures, with ${\GW}_1, \ldots, {\GW}_{|\Z_s|}$ being the underlying core-sets
  and since $r$ is also linearly secret-shared with IC-signatures, with ${\GW}_1, \ldots, {\GW}_{|\Z_s|}$ being the underlying core-sets, from the 
  linearity property of the secret sharing, it follows that
  $[\s - r]$ will be the same as a linear secret-sharing of $s^{\star}$ with IC-signatures, with ${\GW}_1, \ldots, {\GW}_{|\Z_s|}$ being the underlying core-sets.
  This completes the proof of the {\it $\Z_s$-commitment} in the {\it synchronous} network. The proof of the {\it $\Z_a$-commitment} in the {\it asynchronous} network is similar as above, except that
  we now rely on the {\it $\Z_a$-weak consistency} and {\it $\Z_a$-fallback consistency} of $\BC$ in the {\it asynchronous} network (Theorem \ref{thm:BC})
  and {\it $\Z_a$-correctness} of $\Rec$ in the {\it asynchronous} network (Lemma \ref{lemma:Rec}).
  
  In the protocol, one instance of $\Rec$ with $|\ReceiverSet| = 1$ and one instance of $\BC$ with $\ell = \log{|\F|}$ bits are invoked. 
    The communication complexity now follows from the communication complexity of $\Rec$ (Lemma \ref{lemma:Rec}) and communication complexity of $\BC$ (Theorem \ref{thm:BC}).
\end{proof}
\subsection{Properties of the Protocol $\BasicMult$}
In this section, we prove the properties of the protocol $\BasicMult$ (see Fig \ref{fig:BasicMult} for the formal details).
 While proving these properties, we will assume that {\it no honest} party is present in the set $\Discarded$; 
 looking ahead, this will be ensured in the protocol $\RandMultCI$ (presented in Section \ref{ssec:RandMultCI}), where the set $\Discarded$ is maintained. 
 
 We begin by showing that no summand-sharing party during the first two phases are from the discarded set of parties. \\~\\
 \noindent {\bf Lemma \ref{lemma:BasicMultFuture}.}
{\it During any instance $\BasicMult(\Z_s, \Z_a, \ShareSpec_{\Z_s}, [a], [b], {\GW}_1, \ldots, \allowbreak {\GW}_{|\Z_s|}, \Discarded, \iter)$ of $\BasicMult$,
 if $P_j \in \Selected_{\iter}$ then $P_j \not \in \Discarded$, irrespective of the network type. 
 }  
\begin{proof}
Let $P_j$ be an {\it arbitrary} party belonging to the set $\Selected_{\iter}$. This implies that $P_j$ is included to $\Selected_{\iter}$, either during Phase I or Phase II.
 If $P_j \in \Discarded$, then
  no honest party will participate with input $1$ in the instance $\BA^{(\phaseone, \hop, j)}$ during Phase I for any value of $\hop$ and 
   instance $\BA^{(\phasetwo, j)}$ during Phase II. Consequently, from the 
  {\it $\Z_s$-validity} and {\it $\Z_a$-validity} of $\BA$ (Theorem \ref{thm:BA}),
  party $P_j$ will not be added to $\Selected_\iter$, which is a contradiction.
\end{proof}

 We next show that the first phase will get over for the honest parties
  after a fixed time in a {\it synchronous} network and eventually in an asynchronous network. Towards this we show that if the honest parties start any hop during the first phase, then they will complete it after a fixed time
   in a {\it synchronous} network and eventually in an asynchronous network. \\~\\
\noindent {\bf Lemma \ref{lemma:BasicMultHopTermination}.}
{\it Suppose that no honest party is present in $\Discarded$. If the honest parties start participating during hop number $\hop$ of Phase I of $\BasicMult$ with iteration number $\iter$, 
 then except with probability $\Order(n^3 \cdot \errorAICP)$,
 the hop takes $\TimeLSh + 2\TimeBA$ time to complete in a synchronous network, or almost-surely completes eventually in an asynchronous network.
}
\begin{proof}
Let $Z^{\star}$ be the set of {\it corrupt} parties and let $\Hon = \PartySet \setminus Z^{\star}$ be the set of {\it honest} parties. 
 We note that since $\Z_a \subset \Z_s$, {\it irrespective} of the network type,
 there exists some set in $\ShareSpec_{\Z_s}$, say $S_h$, such that $S_h \in \ShareSpec_{\Z_s}$.
  Since the honest parties participate in hop number $\hop$, it implies that there exists {\it no} $S_q \in \ShareSpec_{\Z_s}$,
   such that $\Products^{(S_q)}_\iter = \emptyset$.
   Particularly, this implies that $\Products^{(S_h)}_\iter \neq \emptyset$.
   Hence there exists some $P_j \in \Hon$, such that $P_j \notin \Selected_\iter$.
   This is because if $\Hon \subseteq \Selected_\iter$, then clearly $\Products^{(S_h)}_\iter =  \emptyset$ and hence the parties in $\Hon$ will {\it not} participate in hop number $\hop$.
   
    Let us first consider a {\it synchronous} network. During the hop number $\hop$, {\it every} $P_j \in \Hon$ such that $P_j \notin \Selected_\iter$ will invoke an instance of $\LSh$ with the input 
    $c^{(j)}_\iter$. Then from the {\it $\Z_s$-correctness} of $\LSh$ in the {\it synchronous} network (Lemma \ref{lemma:LSh}),
     after time $\TimeLSh$, the parties in $\Hon$ output $[c^{(j)}_\iter]$, except with probability $\Order(n^3 \cdot \errorAICP)$. 
     Since $(\Hon \cap \Discarded) = \emptyset$, 
     all the parties in $\Hon$ will participate in the instance $\BA^{(\phaseone, \hop, j)}$ with input $1$. Hence
     from the {\it $\Z_s$-validity} of $\BA$ in the {\it synchronous} network (Theorem \ref{thm:BA}), all the parties in $\Hon$
      output $1$ during the instance $\BA^{(\phaseone, \hop, j)}$,
       after time $\TimeLSh + \TimeBA$. 
       Consequently, the parties in $\Hon$ start participating with input $0$ in the remaining $\BA$ instances 
       $\BA^{(\phaseone, \hop, k)}$ (if any), for which no input is provided yet. Hence from the {\it $\Z_s$-security} of $\BA$ in the {\it synchronous} network, at the 
       time $\TimeLSh + 2\TimeBA$, the parties in $\Hon$ compute some output during the $\BA$ instances invoked during hop number $\hop$       
       and hence complete the hop number $\hop$.

Next, consider an {\it asynchronous} network. 
 From the {\it $\Z_a$-correctness} of $\LSh$ in the {\it asynchronous} network (Lemma \ref{lemma:LSh}),
 corresponding to  {\it every} $P_j \in \Hon$ such that $P_j \notin \Selected_\iter$, 
  the parties in $\Hon$ eventually output $[c^{(j)}_\iter]$, except with probability $\Order(n^3 \cdot \errorAICP)$, during the instance
   $\LSh^{(\phaseone, \hop, j)}$. Now there are now two cases.
\begin{myitemize}
\item[--] {\bf Case 1 - There exists some $P_j \in \Hon$ where $P_j \not \in \Selected_\iter$ and some party $P_i \in \Hon$, 
 such that 
 $P_i$ has started participating with input $0$ in the instance $\BA^{(\phaseone, \hop, j)}$}: This implies that
  $P_i$ has computed the output $1$ in some $\BA$ instance during the Phase I, say $\BA^{(\phaseone, \hop, m)}$. And hence $P_i$ starts participating in all the {\it remaining} $\BA$ instances of 
   hop number $\hop$ of Phase I
   (if any) with the input $0$. From the {\it $\Z_a$-consistency} of $\BA$ in the {\it asynchronous} network, all the parties in $\Hon$ will also eventually compute the output $1$ during the instance
   $\BA^{(\phaseone, \hop, m)}$ and will start participating in all the {\it remaining} $\BA$ instances of hop number $\hop$ of Phase I
   (if any) with the input $0$. Consequently, from the {\it $\Z_a$-security} of $\BA$ in the {\it asynchronous} network (Theorem \ref{thm:BA}), almost-surely, the parties in $\Hon$ eventually compute some
   output during all the $\BA$ instances of hop number $\hop$  
       and hence complete the hop number $\hop$.
\item[--] {\bf Case 2 - No honest party has yet started participating with input $0$ in any of the BA instances of Phase I corresponding to the honest parties}:
  In this case, the honest parties will eventually start participating with input $1$ in the $\BA$ instances $\BA^{(\phaseone, \hop, j)}$,
  corresponding to the parties $P_j \in \Hon$ where $P_j \not \in \Selected_\iter$. Hence from the 
  {\it $\Z_a$-validity} of $\BA$ in the {\it asynchronous} network (Theorem \ref{thm:BA}),
  it follows that eventually, there will be some $P_j \in \Hon$ where $P_j \not \in \Selected_\iter$, such that
  all the parties in $\Hon$ compute the output $1$ during the instance $\BA^{(\phaseone,\hop, j)}$. The rest of the proof is similar to what is argued in the previous case.
\end{myitemize}
\end{proof}

We next show that in protocol $\BasicMult$, the honest parties compute some output, after a fixed time in a {\it synchronous} network and eventually in an {\it asynchronous} network. \\~\\
{\bf Lemma \ref{lemma:BasicMultTermination}.}
{\it If no honest party is present in $\Discarded$, then in protocol $\BasicMult$, except with probability $\Order(n^3 \cdot \errorAICP)$,
 all honest parties compute some output by the time $\TimeBasicMult = (2n + 1) \cdot \TimeBA + (n + 1) \cdot \TimeLSh + \TimeRec$ in a synchronous network, or almost-surely,
  eventually in an asynchronous network.
}
\begin{proof}
In the protocol $\BasicMult$, to compute an output, the (honest) parties need to complete the three phases. We first show that 
 {\it irrespective} of the network type, the honest parties will complete Phase II, provided they complete Phase I.
 This is because 
 the honest parties will participate in {\it all} the $\BA$ instances 
  $\BA^{(\phasetwo, j)}$ of phase II with {\it some} input, 
   after waiting {\it exactly} for time $\TimeLSh$. Consequently, once phase I is completed, from the 
   {\it $\Z_s$-security} of $\BA$ in the {\it synchronous} network (Theorem \ref{thm:BA}), it takes $\TimeLSh + \TimeBA$ time
   for the honest parties to compute outputs in the $\BA$ instances during phase II in a {\it synchronous} network. 
   On the other hand, the {\it $\Z_a$-security} of $\BA$ in the {\it asynchronous} network (Theorem \ref{thm:BA})
    guarantees that almost-surely, all honest parties 
   eventually compute some output during the $\BA$ instances during phase II in an asynchronous network. Now once Phase I and Phase II are completed,
   it takes $\TimeRec$ time for the parties to complete Phase III in a {\it synchronous} network (follows from Lemma \ref{lemma:Rec}),
   while in an {\it asynchronous} network, it gets completed eventually (see Lemma \ref{lemma:Rec}).

We now show that Phase I always gets completed for the honest parties.
  Lemma \ref{lemma:BasicMultHopTermination} guarantees that if the honest parties start any hop during Phase I, then it gets completed for all the honest parties
 after time $\TimeLSh + 2\TimeBA$ in a {\it synchronous} network or eventually in an {\it asynchronous} network. 
  From the protocol steps, it follows that there can be {\it at most} $n$ hops during Phase I. This is because once the set of {\it honest} parties are included in the set of summand-sharing parties
  $\Selected_\iter$ during Phase I, then the parties will exit Phase I. 
\end{proof}

We next show that the adversary does not learn anything additional about $a$ and $b$ during the protocol. \\~\\
{\bf Lemma \ref{lemma:BasicMultPrivacy}.}
{\it If no honest party is present in $\Discarded$, then the view of the adversary remains independent of $a$ and $b$ throughout the protocol, irrespective of the network type.
}
\begin{proof}
Let $Z^{\star}$ be the set of {\it corrupt} parties and let $\Hon = \PartySet \setminus Z^{\star}$ be the set of {\it honest} parties. 
 The view of the adversary remains independent of $a$ and $b$ during Phase I and Phase II.
  This is because the {\it privacy} of $\LSh$ (Lemma \ref{lemma:LSh}) ensures that the view of the adversary remains independent of the summand-sums shared by the {\it honest}
  summand-sharing parties during Phase I and Phase II. Let $S_h \in \ShareSpec_{\Z_s}$ be the group consisting of {\it only} honest parties; i.e. $(S_h \cap Z^{\star}) = \emptyset$.
  To prove that the view of the adversary remains independent of $a$ and $b$ during Phase III, we show that {\it irrespective} of the network type, 
    any summand of the form $(h, q)$ or $(p, h)$
    will {\it not} be present in $\Products_\iter$ during this phase corresponding to any $p, q \in \{1, \ldots, |\Z_s| \}$, implying that the shares $[a]_h$ and $[b]_h$ {\it does not} get publicly reconstructed.
    Note that $(h, q) \not \in \Products^{(j)}_\iter$ for any $P_j \in Z^{\star}$, as otherwise
 it would imply that a {\it corrupt} $P_j \in (S_h \cap Z^{\star})$, which is a contradiction. 
  Similarly, $(p, h) \not \in \Products^{(j)}_\iter$ for any $P_j \in Z^{\star}$

Let us first consider a {\it synchronous} network and consider an {\it arbitrary} ordered pair $(h, q)$. 
  Consider the case when 
  $(h, q)$ has {\it not been} removed from $\Products_\iter$ during Phase I in any of the hops. 
  Then, during Phase II, the pair $(h, q)$ will get {\it statically} re-assigned to the $\Products^{(j)}_\iter$ set of some $P_j \in \Hon$, such that $P_j \not \in \Selected_{\iter}$.
  From the protocol steps, $P_j$ will include the summand $[a]_h \cdot [b]_q$ while computing $c^{(j)}_\iter$ and share
  $c^{(j)}_\iter$ through an instance of $\LSh$. From the {\it $\Z_s$-correctness} of $\LSh$ in the {\it synchronous} network, the parties in $\Hon$ will output $[c^{(j)}_\iter]$ during the instance of
  $\LSh$ invoked by $P_j$. Since $P_j \notin (\Selected_\iter \cup \Discarded)$, it follows that all the parties in $\Hon$
   will participate in the instance $\BA^{(\phasetwo, j)}$ with input $1$ and compute the output $1$. Consequently, $(h, q)$ will be removed from the updated $\Products_\iter$ during Phase II, if not removed
   during Phase I. By the same logic, any ordered pair of the form $(p, h)$ will also be removed from $\Products_\iter$, by the end of Phase II.

Next, consider an {\it asynchronous} network and an {\it arbitrary} $(h, q)$.  In this case, we show
 that $(h, q)$ will eventually be removed from $\Products_\iter$ during Phase I itself. 
  Let the parties in $\Hon$ complete Phase I. 
  This implies that there exists some $S_{\ell} \in \ShareSpec_{\Z_s}$ such that $\Products^{(S_{\ell})}_\iter = \emptyset$ when Phase I gets over. 
  We show that $(h, q)$ was present in $\Products^{(S_\ell)}_\iter$ at the beginning of Phase I, when the parties initialize 
  $\Products^{(S_\ell)}_\iter$. That is, at the time of initialization, there was some $P_j \in (S_{\ell} \cap \Hon \cap S_q)$ such that 
  $(h, q) \in \Products^{(j)}_{\iter}$. For this, it is enough to show that $(S_{\ell} \cap \Hon \cap S_q) \neq \emptyset$, which follows from the fact that
  $\Z_s$ and $\Z_a$ satisfy the condition $\Q^{(2, 1)}(\PartySet, \Z_s, \Z_a)$ and 
  $(S_{\ell} \cap \Hon \cap S_q) = \PartySet \setminus (Z_{\ell} \cup Z^{\star} \cup Z_q)$, where 
  $Z^{\star} \in \Z_a$ (because we are considering an {\it asynchronous} network) and $Z_{\ell}, Z_q \in \Z_s$ (follows from the construction of $\ShareSpec_{\Z_s}$). 
  Let $P_j \in (S_{\ell} \cap \Hon \cap S_q)$.
   Hence at the time of the initialization, $(h, q) \in \Products^{(j)}_\iter$, which further implies that 
    $(h, q) \in \Products^{(j)}_\iter \subseteq \Products^{(S_\ell)}_\iter$ at the time of initialization.
    And hence $(h, q)$ must have been removed from $\Products_\iter$ by the end of Phase I, since $P_j$ would have been selected as the summand-sharing party in one of the hops during Phase I.
     The same logic also applies to any arbitrary $(p, h)$, implying that $(p, h)$ would have been removed from $\Products_\iter$ by the end of Phase I itself.
\end{proof}

We next show that if all the parties behave honestly in the protocol, the parties output a linear-secret sharing of $a \cdot b$ with IC-signatures. \\~\\
{\bf Lemma \ref{lemma:BasicMultCorrectness}.}
{\it If no honest party is present in $\Discarded$ and if all parties in $\PartySet \setminus \Discarded$ behave honestly, then in protocol $\BasicMult$, the honest parties output
 a linear secret-sharing of $a \cdot b$ with IC-signatures, with ${\GW}_1, \ldots, {\GW}_{|\Z_s|}$ being the underlying core-sets, irrespective of the network type.
}
\begin{proof}
Note that $(\Selected_\iter \cap \Discarded) = \emptyset$, which follows from Lemma \ref{lemma:BasicMultFuture}.
Now
 Consider an {\it arbitrary} $(p, q) \in \Products_\iter$. We claim that {\it irrespective} of the network type, if all the parties behave honestly, 
 then the summand $[a]_p \cdot [b]_q$ is considered on behalf of {\it exactly} one party while
 secret-sharing the summand-sums. That is there is {\it exactly} one $P_j$ across the three phases,
  such that the following hold:
  \[ c^{(j)}_\iter = \ldots +  [a]_p[b]_q + \ldots.\] 
  Assuming that the claim is true, the proof then follows from the linearity property of secret-sharing and the fact that $c_\iter = c^{(1)}_\iter + \ldots + c^{(n)}_\iter$ holds. That is, corresponding to each $P_j \in \Selected_\iter$, the value
  $c^{(j)}_\iter$ is secret-shared through an instance of $\LSh$, with ${\GW}_1, \ldots, {\GW}_{|\Z_s|}$ being the underlying core-sets, which guarantees that 
   $c^{(j)}_\iter$ is linearly secret-shared with IC-signatures, with ${\GW}_1, \ldots, {\GW}_{|\Z_s|}$ being the underlying core-sets.
   Moreover, by the end of Phase II, the parties will be publicly knowing the set $\Selected_\iter$, which follows from the security properties of $\BA$.
   Furthermore, after Phase II, the parties will be publicly reconstructing the shares $[a]_p$ and $[b]_q$ corresponding to every $(p, q)$, which are still present in $\Products_\iter$.
   Since corresponding to each 
   $P_j \in \PartySet \setminus \Selected_\iter$, the parties take the default linear secret-sharing of 
   $c^{(j)}_\iter$ with IC-signatures and ${\GW}_1, \ldots, {\GW}_{|\Z_s|}$ being the underlying core-sets, it follows that 
  $c$ will be linearly secret-shared with IC-signatures and ${\GW}_1, \ldots, {\GW}_{|\Z_s|}$ being the underlying core-sets.
   Moreover, $c = a \cdot b$ holds, since each summand $[a]_p \cdot [b]_q$ is considered exactly once, as per our claim.
   We now proceed to prove our claim. 
   
   We first show that there exists {\it at least one} $P_j$ in one of the three phases, such that the following holds:
   \[c^{(j)}_\iter = \ldots + [a]_p[b]_q + \ldots .\]
 For this, consider the following cases.
\begin{myitemize}
	\item[--] {\bf Case 1 --- During Phase I, there is some $P_j \in \Selected_\iter$, such that $(p, q)$ was present in the set $\Products^{(j)}_\iter$ when $P_j$ 
	was added to $\Selected_\iter$}:  In this case,  there is nothing to show.
	\item[--] {\bf Case 2 --- At the end of Phase I, $(p, q)$ is still {\it present} in $\Products_\iter$}: 
	This implies that at the end of Phase I, $(\Selected_\iter \cap S_p \cap S_q) = \emptyset$. Since $\Z_s$ and $\Z_a$ satisfy the condition $\Q^{(2, 1)}(\PartySet, \Z_s, \Z_a)$,
	it follows that $(S_p \cap S_q) \neq \emptyset$. Let $P_j \defined \min(S_p, S_q)$.
	Note that $P_j \not \in \Selected_\iter$ at the end of Phase I, otherwise $(p, q)$ would have been {\it removed} from $\Products_\iter$, which is a contradiction. Now if
	$P_j \not \in \Discarded$, then clearly the summand $[a]_p \cdot [b]_q$ will be present in $c^{(j)}_\iter$, shared by $P_j$ during the Phase II.
	Else, during Phase III, the parties will publicly reconstruct $[a]_p$ and $[b]_q$ and consequently $[a]_p \cdot [b]_q$ will be present in the default secret-sharing of $c^{(j)}_\iter$, taken on the
	behalf of $P_j$. \\[.2cm]
\end{myitemize}

To complete the proof of our claim, we next show that $(p, q)$ {\it cannot} be present in the summand-sum of {\it more than one} party across the three phases. 
 On {\it contrary}, let $P_j$ and $P_k$ be two {\it distinct} parties, such that the following holds across the three phases:
 \[ c^{(j)}_\iter = \ldots +  [a]_p[b]_q + \ldots  \quad \wedge  \quad c^{(k)}_\iter = \ldots +  [a]_p[b]_q + \ldots  \]
Now there are three following cases. 
\begin{myitemize}
	\item[--] {\bf Case 1 - $P_j, P_k \in \Selected_\iter$ at the end of Phase I}: From the security properties of $\BA$, the parties will 
	agree on which party to add to $\Selected_\iter$ during every hop during Phase I.
	Moreover, from the protocol steps, exactly one party is selected as a summand-sharing party and added to $\Selected_\iter$ in each hop.
     Suppose that $P_j$ was added to $\Selected_\iter$ during $\hop^{(j)}$, and that $P_k$ was added to $\Selected_\iter$ during $\hop^{(k)}$. 
     Moreover, without loss of generality, let $\hop^{(j)} < \hop^{(k)}$. From the protocol steps, it follows that $P_j, P_k \not \in \Discarded$, as otherwise no honest party would have voted for $P_j$
     and $P_k$ as a candidate summand-sharing party and consequently, $P_j, P_k \not \in \Selected_\iter$. 
     Now as per the lemma conditions, all the parties (including $P_k$) behave honestly. Hence, the ordered pair 
     $(p, q)$ would be removed from $\Products^{(k)}_\iter$ at the end of $\hop^{(j)}$.
     Consequently, $P_k$ will {\it not include} the summand $[a]_p \cdot [b]_q$ while computing $c^{(k)}_\iter$ during hop number $\hop^{(k)}$, which is a contradiction.
	\item[--] {\bf Case 2 --- $P_j \in \Selected_\iter$ at the end of Phase I and $P_k \notin \Selected_\iter$ at the end of Phase I}:
	In this case, $(p, q)$ will be {\it removed} from $\Products_\iter$ at the end of Phase I.
	 Hence, it {\it cannot} get re-assigned to any other party $P_k$ after Phase I and hence {\it cannot} belong to $\Products^{(k)}_\iter$.
	  Consequently, the summand $[a]_p \cdot [b]_q$ will {\it not} be considered while computing $c^{(k)}_\iter$, which is a contradiction,
	\item[--] {\bf Case 3 - $P_j \notin \Selected_\iter$ and $P_k \notin \Selected_\iter$ at the end of Phase I}: 
	In this case, the summand $[a]_p \cdot [b]_p$ is {\it deterministically} and {\it statically} re-assigned to the least-indexed party from the set $(S_p \cap S_q)$, as per the 
	$\min$ function. Hence $(p, q)$ will be re-assigned to either $P_j$ or $P_k$, but {\it not both}.
	Consequently, the summand $[a]_p \cdot [b]_q$ will be considered while computing either $c^{(j)}_\iter$ or  $c^{(k)}_\iter$, but not both, which is a contradiction.
\end{myitemize}
\end{proof}

{\bf Lemma \ref{lemma:BasicMultCommunication}.}
{\it Protocol $\BasicMult$ incurs a communication of $\Order(|\Z_s| \cdot n^5 \cdot \log{|\F|} + n^6 \cdot \log{|\F|} \cdot |\sigma|)$
 bits and makes $\Order(n^2)$ calls to $\BA$.
}
\begin{proof}
During Phase I, there can be $\Order(n)$ hops, where during each hop, a party from $\PartySet \setminus \Discarded$ secret-shares a field element through an instance of $\LSh$. Moreover,
 $n$ instances of $\BA$ are invoked to agree upon the summand-sharing party of the hop.
  Hence, $\Order(n^2)$ instances of $\LSh$ and $\Order(n^2)$ instances of $\BA$ are required during Phase I.
  During Phase II, $\Order(n)$ instances of $\LSh$ and $\Order(n)$ instances of $\BA$ are required.
  Finally during Phase III, up to $\Order(|\Z_s|)$ shares need to be publicly reconstructed. 
  The communication complexity now follows from the communication complexity of $\LSh$ (Lemma \ref{lemma:LSh}) and 
  communication complexity of $\RecShare$ with $|\ReceiverSet| = n$ (Lemma \ref{lemma:RecShare}).
\end{proof}

As a corollary of Lemma \ref{lemma:BasicMultCommunication}, we derive the following corollary, which determines the {\it maximum} number of instances of $\LSh$ which are invoked during an instance
 of $\BasicMult$. Looking ahead, this will be useful to later calculate the maximum number of instances of $\LSh$ which need to be invoked as part of our final multiplication protocol.
 This will be further useful to determine the number of linearly secret-shared values with IC-signatures and core-sets ${\GW}_1, \ldots, {\GW}_{|\Z_s|}$, which need to be generated through the
  protocol $\Rand$ beforehand. \\~\\
{\bf Corollary \ref{cor:BasicMultUpperBound}.}
{\it During any instance of $\BasicMult$, there can be at most $n^2 + n$ instances of $\LSh$ invoked.
}
\begin{proof}
The proof follows from the fact that during Phase I, there can be up to $n^2$ instances of $\LSh$, if $\Discarded = \emptyset$
 and during Phase II, there can be up to $n - 1$ instances of $\LSh$, if only one party is added to $\Selected_\iter$ during Phase I.
\end{proof}
\subsection{Properties of the Protocol $\RandMultCI$}
In this section, we prove the properties of the protocol $\RandMultCI$ (see Fig \ref{fig:RandMultCI} for the formal description).
 We begin by showing that irrespective of the network type, the (honest) parties compute linearly secret-shared $a_\iter, b_\iter, b'_\iter$
 and $r_\iter$ with IC-signatures, which are random from the point of view of the adversary. \\~\\
 \noindent {\bf Lemma \ref{lemma:RandMultCIACS}.}
{\it  In protocol $\RandMultCI$, the following hold.
 \begin{myitemize}
 \item[--] {\bf Synchronous Network}: Except with probability $\Order(n^3 \cdot \errorAICP)$, honest parties will have linearly secret-shared $a_\iter, b_\iter, b'_\iter$
 and $r_\iter$ with IC-signatures, with ${\GW}_1, \ldots, {\GW}_{|\Z_s|}$ being the underlying core-sets, by the time $\TimeLSh + 2\TimeBA$. Moreover, adversary's view is independent of
  $a_\iter, b_\iter, b'_\iter$ and $r_\iter$. 
 \item[--] {\bf Asynchronous Network}: Except with probability $\Order(n^3 \cdot \errorAICP)$, almost-surely, honest parties will eventually have linearly secret-shared $a_\iter, b_\iter, b'_\iter$
 and $r_\iter$ with IC-signatures, with ${\GW}_1, \ldots, {\GW}_{|\Z_s|}$ being the underlying core-sets. Moreover, adversary's view is independent of
  $a_\iter, b_\iter, b'_\iter$ and $r_\iter$. 
 \end{myitemize}
 }
 \begin{proof}
 We first consider a {\it synchronous} network. Let $Z^{\star} \in \Z_s$ be the set of {\it corrupt} parties and
  let $\Hon =  \PartySet \setminus Z^{\star}$
  be the set of {\it honest} parties.
     Corresponding to each $P_j \in \Hon$, 
     the honest parties compute the output $[a^{(j)}_\iter], [b^{(j)}_\iter], [b'^{(j)}_\iter]$ and $[r^{(j)}_\iter]$ during the instances of $\LSh$ invoked by $P_j$ at the time
     $\TimeLSh$, except with probability $\Order(n^3 \cdot \errorAICP)$. This follows from the {\it $\Z_s$-correctness} of $\LSh$ in the {\it synchronous} network (Lemma \ref{lemma:LSh}). 
      Consequently, at the time $\TimeLSh$, all the parties in $\Hon$ will be present in the set $\CSet_i$ of every $P_i \in \Hon$. 
      Hence corresponding to each $P_j \in \Hon$, each $P_i \in \Hon$
    starts participating with input $1$ in the instance $\BA^{(j)}$ at the time
  $\TimeLSh$. Hence from the {\it $\Z_s$-validity} and {\it $\Z_s$-guaranteed liveness} properties of
   $\BA$ in the {\it synchronous} network (Theorem \ref{thm:BA}), it follows that at the time $\TimeLSh + \TimeBA$,
    all the parties in $\Hon$ compute the output $1$ during the instance $\BA^{(j)}$,
  corresponding to every $P_j \in \Hon$. Consequently, at the time $\TimeLSh + \TimeBA$,
  all the parties in $\Hon$ will start participating in the remaining $\BA$ instances for which no input has been provided yet (if there are any). 
  And from the {\it $\Z_s$-guaranteed liveness} and {\it $\Z_s$-consistency} properties
   of $\BA$ in the {\it synchronous} network (Theorem \ref{thm:BA}),
   these $\BA$ instances will produce common outputs for every honest party by the time $\TimeLSh + 2 \TimeBA$.
  Hence, at the time $\TimeLSh + 2 \TimeBA$, the honest parties will have a common $\CoreSet$, where $\PartySet \setminus \CoreSet \in \Z_s$
   and where $\Hon \subseteq \CoreSet$.
  We next wish to show that corresponding to {\it every} $P_j \in \CoreSet$, 
  there exists some quadruplet of values, which are linearly secret-shared with IC-signatures, with ${\GW}_1, \ldots, {\GW}_{|\Z_s|}$ being the underlying core-sets.
    
 Consider an {\it arbitrary} party $P_j \in \CoreSet$. If $P_j \in \Hon$,
  then whatever we wish to show is correct, as shown above.
  Next, consider a {\it corrupt} $P_j \in \CoreSet$. Since $P_j \in \CoreSet$, it follows that the instance $\BA^{(j)}$ produces the output $1$
  for all honest parties.
   This further implies that at least one 
  {\it honest} $P_i$ must have computed some output during the instances of $\LSh$ invoked by $P_j$, by the time
   $\TimeLSh + \TimeBA$ (implying that $P_j \in \CSet_i$) and participated
  with input $1$ in the instance $\BA^{(j)}$. Otherwise, 
  {\it all honest} parties would participate with input
   $0$ in the instance $\BA^{(j)}$ at the time $\TimeLSh + \TimeBA$ and 
   then from the {\it $\Z_s$-validit}y of $\BA$ in the {\it synchronous} network, every honest party would have
   computed the output $0$ in the instance
  $\BA^{(j)}$ and hence $P_j$ will not be present in $\CoreSet$, which is a contradiction. 
  Now if $P_i$ has computed
   some output during the instances of $\LSh$ invoked by $P_j$ at the time $\TimeLSh + \TimeBA$, then from the {\it $\Z_s$-commitment} of $\LSh$ in the
   {\it synchronous} network (Lemma \ref{lemma:LSh}), it follows that except with probability $\Order(n^3 \cdot \errorAICP)$,
   there exist values $(a^{(j)}_\iter, b^{(j)}_\iter, b'^{(j)}_\iter, r^{(j)}_\iter)$, which will be 
   linearly secret-shared with IC-signature, with ${\GW}_1, \ldots, {\GW}_{|\Z_s|}$ being the underlying core-sets, 
   by the time $\TimeLSh + \TimeBA + \Delta$.
  Since $\Delta < \TimeBA$, it follows that by the time $\TimeLSh + 2\TimeBA$, 
  the honest parties will have $[a^{(j)}_\iter], [b^{(j)}_\iter], [b'^{(j)}_\iter]$ and $[r^{(j)}_\iter]$. 
  From the linearity property of secret-sharing, it then follows that by the time $\TimeLSh + 2\TimeBA$,
  the values $a_\iter, b_\iter, b'_\iter$
 and $r_\iter$ will be linearly secret-shared with IC-signatures, with ${\GW}_1, \ldots, {\GW}_{|\Z_s|}$ being the underlying core-sets.

  From the {\it privacy} of $\LSh$ (Lemma \ref{lemma:LSh}), the view of the adversary will be independent of the values 
   $(a^{(j)}_\iter, b^{(j)}_\iter, b'^{(j)}_\iter, r^{(j)}_\iter)$, corresponding to the parties $P_j \in \Hon$. As $(\Hon \cap \CoreSet) \neq \emptyset$,
   it follows that $a_\iter, b_\iter, b'_\iter$ and $r_\iter$ will be indeed random from the point of view of the adversary. This completes the proof for the case of {\it synchronous} network.
   
   We next consider an {\it asynchronous} network. 
   Let $Z^{\star} \in \Z_a$ be the set of {\it corrupt} parties and
 let $\Hon = \PartySet \setminus Z^{\star}$
  be the set of {\it honest} parties. Notice that $\PartySet \setminus \Hon \in \Z_s$, since $\Z_a \subset \Z_s$. Now irrespective of the way messages are scheduled, there
    will be eventually a subset of parties $\PartySet \setminus Z$ for some $Z \in \Z_s$, such that
    all the parties in $\Hon$ participate with input $1$ in the instances of $\BA$, corresponding to the parties in
    $\PartySet \setminus Z$. This is because,
  corresponding to every $P_j \in \Hon$, all the parties in $\Hon$ {\it eventually} compute
  some output during the instances of $\LSh$ invoked by $P_j$ except with probability $\Order(n^3 \cdot \errorAICP)$, 
   which follows from the 
  {\it $\Z_a$-correctness} of $\LSh$ in the {\it asynchronous} network (Lemma \ref{lemma:LSh}). 
   So even if the {\it corrupt} parties $P_j$ do not invoke their respective instances of $\LSh$, 
   there will be a set of $\BA$ instances corresponding to the parties in $\PartySet \setminus Z$ for some $Z \in \Z_s$, 
    in which all
   the parties in $\Hon$ eventually participate with input $1$.
   Consequently, from the {\it $\Z_a$-almost-surely liveness} and {\it $\Z_a$-consistency} properties of 
   $\BA$ in the {\it asynchronous} network (Theorem \ref{thm:BA}), these $\BA$ instances
  eventually produce the output $1$ for all the parties in $\Hon$.
   Hence, all the parties in $\Hon$ eventually participate with some input in the remaining $\BA$ instances,
  which almost-surely produce some output for every honest party eventually. 
   From the properties of $\BA$ in the {\it asynchronous} network,
   it then follows that all the honest
  parties output the same $\CoreSet$.
  
  Now consider an {\it arbitrary} 
  $P_j \in \CoreSet$. It implies that the honest parties computed the output $1$ during
   the instance $\BA^{(j)}$, which further implies that at least one
   {\it honest} $P_i$ participated with input $1$ in $\BA^{(j)}$ after computing
    its output in the instances of $\LSh$ invoked by $P_j$. If $P_j$ is {\it honest}, then the
   {\it $\Z_a$-correctness} of $\LSh$ in the {\it asynchronous} network 
    guarantees that except with probability $\Order(n^3 \cdot \errorAICP)$, the values
        $(a^{(j)}_\iter, b^{(j)}_\iter, b'^{(j)}_\iter, r^{(j)}_\iter)$ 
     will be {\it eventually} linearly secret-shared with IC-signatures, with 
     ${\GW}_1, \ldots, {\GW}_{|\Z_s|}$ being the underlying core-sets. 
       On the other hand, even if $P_j$ is {\it corrupt}, the {\it $\Z_a$-commitment} 
   of $\LSh$ in the {\it asynchronous} network (Lemma \ref{lemma:LSh})
    guarantees that there exist values $(a^{(j)}_\iter, b^{(j)}_\iter, b'^{(j)}_\iter, r^{(j)}_\iter)$ which are 
   {\it eventually} linearly secret-shared with IC-signatures, with 
     ${\GW}_1, \ldots, {\GW}_{|\Z_s|}$ being the underlying core-sets, except with probability $\Order(n^3 \cdot \errorAICP)$.
      From the linearity property of secret-sharing, it then follows that eventually,
  the values $a_\iter, b_\iter, b'_\iter$
 and $r_\iter$ will be linearly secret-shared with IC-signatures, with ${\GW}_1, \ldots, {\GW}_{|\Z_s|}$ being the underlying core-sets.
    The privacy of $a_\iter, b_\iter, b'_\iter$ and $r_\iter$ is similar as for the synchronous communication network and the fact that 
    $(\Hon \cap \CoreSet) \neq \emptyset$ still holds in the {\it asynchronous} network.
 \end{proof}

We next claim that all honest parties will eventually agree on whether the instances of $\BasicMult$ in $\RandMultCI$ have succeeded or failed.\\~\\
{\bf Lemma \ref{lemma:RandMultCITermination}.}
{\it Consider an arbitrary $\iter$, 
  such that all honest parties participate in the instance $\RandMultCI(\PartySet, \Z_s, \Z_a, \ShareSpec_{\Z_s}, {\GW}_1, \ldots, {\GW}_{|\Z_s|}, \Discarded, \iter)$,
   where
   no honest party is present in $\Discarded$. 
   Then except with probability $\Order(n^3 \cdot \errorAICP)$, all honest parties
   reconstruct a (common) value $d_\iter$ and set $\flag_\iter$ to a common Boolean value, 
    at the time $\TimeLSh + 2\TimeBA + \TimeBasicMult + 3\TimeRec$ in a synchronous network, or eventually in an asynchronous network. 
}
\begin{proof}
From Lemma \ref{lemma:RandMultCIACS}, the honest parties have $[a_\iter], [b_\iter], [b'_\iter]$ and $[r_\iter]$ at the time $\TimeLSh + 2\TimeBA$ in a synchronous network, or eventually in an asynchronous network, 
 except with probability $\Order(n^3 \cdot \errorAICP)$, with ${\GW}_1, \ldots, {\GW}_{|\Z_s|}$ being the underlying core-sets.
 From Lemma \ref{lemma:BasicMultTermination}, it follows that the honest parties have
  $\{[c^{(1)}_\iter] , \ldots, [c^{(n)}_\iter], [c_\iter] \}$ and 
         $\{[c'^{(1)}_\iter], \ldots, \allowbreak  [c'^{(n)}_\iter], [c'_\iter] \}$ from the 
          corresponding instances of $\BasicMult$, either after time $\TimeBasicMult$ or eventually, based on the network type, where
          ${\GW}_1, \ldots, {\GW}_{|\Z_s|}$ are the underlying core-sets.
            From Lemma \ref{lemma:Rec},
          the honest parties reconstruct $r_\iter$ from the corresponding instance of $\Rec$ after time $\TimeRec$ in a {\it synchronous} network or eventually in an {\it asynchronous} network,
          except with probability $\Order(n^3 \cdot \errorAICP)$. 
           From the linearity property of secret-sharing, it then follows that the honest parties
          compute $[e_\iter]$ and hence reconstruct $e_{\iter}$ 
          from the corresponding instance of $\Rec$, after time $\TimeRec$ in a {\it synchronous} network or eventually in an {\it asynchronous} network, except with probability 
          $\Order(n^3 \cdot \errorAICP)$. Moreover,
          ${\GW}_1, \ldots, {\GW}_{|\Z_s|}$ will be the underlying core-sets for $[e_\iter]$. 
         Again, from the linearity property of secret-sharing, it follows that
          the honest parties compute $[d_\iter]$ with ${\GW}_1, \ldots, {\GW}_{|\Z_s|}$ being
           the underlying core-sets, followed by reconstructing $d_\iter$ from the corresponding instance of $\Rec$,
          which takes $\TimeRec$ time in a {\it synchronous} network or happens eventually in an {\it asynchronous} network.
         Thus, the honest parties will have  $d_\iter$ either at the time $\TimeLSh + 2\TimeBA + \TimeBasicMult + 3\TimeRec$ 
         in a synchronous network, or eventually in an asynchronous network. 
         Now depending upon the value of $d_\iter$, the honest parties set $\flag_\iter$ to either $0$ or $1$. 
\end{proof}

We next claim that if no cheating occurs, then the honest parties output a multiplication-triple, which is linearly secret-shared with IC-signatures.\\~\\
\noindent {\bf Lemma \ref{lemma:RandMultCIHonestBehaviour}.}
{\it Consider an arbitrary $\iter$, 
  such that all honest parties participate in the instance $\RandMultCI(\PartySet, \Z_s, \Z_a, \ShareSpec_{\Z_s}, {\GW}_1, \ldots, {\GW}_{|\Z_s|}, \Discarded, \iter)$,
   where
   no honest party is present in $\Discarded$.
   If no party in $\PartySet \setminus \Discarded$ behaves maliciously, then $d_\iter = 0$
   and 
    the honest parties output $([a_\iter], [b_\iter], [c_\iter])$     
     at the time $ \TimeLSh + 2\TimeBA + \TimeBasicMult + 3\TimeRec$ in a synchronous network or eventually in an asynchronous network, where $c_\iter = a_\iter \cdot b_\iter$ 
     and where ${\GW}_1, \ldots, {\GW}_{|\Z_s|}$ are
           the underlying core-sets
}   
\begin{proof}
If no party in $\PartySet \setminus \Discarded$ behaves maliciously, then from Lemma \ref{lemma:BasicMultCorrectness}, the honest parties compute
 $[c_\iter]$ and $[c'_\iter]$ from the respective instances of $\BasicMult$, such that $c_\iter = a_\iter \cdot b_\iter$ and 
 $c'_\iter = a_\iter \cdot b'_\iter$ holds
 and where ${\GW}_1, \ldots, {\GW}_{|\Z_s|}$ are
           the underlying core-sets.
  Moreover, from Lemma \ref{lemma:RandMultCITermination}, the honest parties will compute
   $d_\iter$ at the time $\TimeLSh + 2\TimeBA + \TimeBasicMult + 3\TimeRec$ in a synchronous network or eventually in an asynchronous network.
  Furthermore, if $c_\iter = a_\iter \cdot b_\iter$ and 
   $c'_\iter = a_\iter \cdot b'_\iter$ holds, the value $d_\iter$ will be $0$ and consequently, the honest parties will output 
  $([a_\iter], [b_\iter], [c_\iter])$. Furthermore, it is easy to see that 
   ${\GW}_1, \ldots, {\GW}_{|\Z_s|}$ will be
           the underlying core-sets. This is because all the secret-shared values in the protocol are linearly secret-shared with IC-signatures, 
           with ${\GW}_1, \ldots, {\GW}_{|\Z_s|}$ being
           the underlying core-sets.
\end{proof} 
 We next show that if $d_\iter \neq 0$, then the honest parties include at least one {\it new} maliciously-corrupt party in the set $\Discarded$. \\~\\
\noindent {\bf Lemma \ref{lemma:RandMultCICorruptBehaviour}.}
{\it Consider an arbitrary $\iter$, 
  such that all honest parties participate in the instance $\RandMultCI(\PartySet, \Z_s, \Z_a, \ShareSpec_{\Z_s}, {\GW}_1, \ldots, {\GW}_{|\Z_s|}, \Discarded, \iter)$,
   where
   no honest party is present in $\Discarded$. 
   If $d_\iter \neq 0$, then except with probability $\Order(n^3 \cdot \errorAICP)$, 
   the honest parties update $\Discarded$ by adding a new maliciously-corrupt
   party in $\Discarded$, either at the time $\TimeRandMultCI = \TimeLSh + 2\TimeBA + \TimeBasicMult + 4\TimeRec$ in a synchronous network or eventually in an asynchronous network.
  }
   \begin{proof}
   Let $d_\iter \neq 0$ and let $\Selected_\iter$ be the set of summand-sharing parties across the two instances of $\BasicMult$ executed in $\RandMultCI$.
   That is:
   \[ \Selected_\iter \defined \Selected_{\iter, c} \cup \Selected_{\iter, c'}.\] 
   Note that
   there exists no $P_j \in \Selected_\iter$ such that $P_j \in \Discarded$, which follows from Lemma \ref{lemma:BasicMultFuture}.
   We claim that there exists at least one party $P_j \in \Selected_\iter$, such that corresponding to
    $c^{(j)}_\iter$ and $c'^{(j)}_\iter$, the following holds:
\[   r_\iter \cdot c^{(j)}_\iter + c'^{(j)}_\iter \neq r_\iter \cdot \sum_{(p,q) \in \Products^{(j)}_{\iter, c}} [a_\iter]_p [b_\iter]_q  
    + \sum_{(p, q) \in \Products^{(j)}_{\iter, c'}} [a_\iter]_p [b'_\iter]_q. \]
   Assuming the above holds, the proof now follows from the fact that once the parties reconstruct 
   $d_\iter \neq 0$, they proceed to reconstruct the shares
   $\{ [a_\iter]_q, [b_\iter]_q,  [b'_\iter]_q\}_{S_q \in \ShareSpec_{\Z_s}}$ through appropriate instances of $\RecShare$ 
   and the values $c^{(1)}_\iter, \ldots, c^{(n)}_\iter,  c'^{(1)}_\iter, \ldots, c'^{(n)}_\iter$ through 
   appropriate instances of $\Rec$. From Lemma \ref{lemma:RandMultCITermination}, Lemma \ref{lemma:RecShare} and Lemma \ref{lemma:Rec},
    this happens by the time $\TimeLSh + 2\TimeBA + \TimeBasicMult + 4\TimeRec$ in a synchronous network or eventually in an asynchronous network,
    except with probability $\Order(n^3 \cdot \errorAICP)$.
     Upon reconstructing these values, party $P_j$ will be included in the set
   $\Discarded$. Moreover, it is easy to see that $P_j$ is a maliciously-corrupt party since, for every {\it honest}
   $P_j \in \Selected_\iter$, the following conditions hold:
   \[ \displaystyle  c^{(j)}_\iter = \sum_{(p,q) \in \Products^{(j)}_{\iter, c}}[a_\iter]_p [b_\iter]_q \quad \mbox{ and } \quad 
     c'^{(j)}_\iter = \sum_{(p,q) \in \Products^{(j)}_{\iter,c'}}[a_\iter]_p [b'_\iter]_q.\] 
    We prove the above claim through a contradiction. Let the following condition hold for {\it each} $P_j \in \Selected_\iter$:
    \[r_\iter \cdot c^{(j)}_\iter + c'^{(j)}_\iter = r_\iter \cdot \sum_{(p,q) \in \Products^{(j)}_{\iter, c}} [a_\iter]_p [b_\iter]_q + 
    \sum_{(p, q) \in \Products^{(j)}_{\iter, c'}} [a_\iter]_p [b'_\iter]_q. \]
   Next, summing the above equation over all $P_j \in \Selected_\iter$, we get that the following holds:
  \[
   \sum_{P_j \in  \Selected_\iter} r_\iter \cdot c^{(j)}_\iter + c'^{(j)}_\iter = \sum_{P_j \in  \Selected_\iter} r_\iter \cdot \sum_{(p,q) \in \Products^{(j)}_{\iter, c}} [a_\iter]_p [b_\iter]_q 
   + \sum_{(p, q) \in \Products^{(j)}_{\iter, c'}} [a_\iter]_p [b'_\iter]_q.
 \]
   This implies that the following holds:
  \[
   		r_\iter \cdot  \sum_{P_j \in  \Selected_\iter} c^{(j)}_\iter + c'^{(j)}_\iter =  r_\iter \cdot \sum_{P_j \in  \Selected_\iter}  \sum_{(p,q) \in \Products^{(j)}_{\iter, c}} [a_\iter]_p [b_\iter]_q 
    + \sum_{(p, q) \in \Products^{(j)}_{\iter, c'}} [a_\iter]_p [b'_\iter]_q.
\]
    Now based on the way $a_{\iter}, b_{\iter}, b'_{\iter}, c_{\iter}$ and $c'_{\iter}$  are defined, the above implies that the following holds:
    \[r_\iter \cdot c_\iter + c'_\iter = r \cdot a_\iter \cdot b_\iter + a_\iter \cdot b'_\iter \]
   This further implies that 
   \[ r_\iter \cdot c_\iter + c'_\iter = (r_\iter \cdot b_\iter + b'_\iter) \cdot a_\iter \]
   Since in the protocol $e_\iter \defined r_\iter \cdot b_\iter + b'_\iter$, the above implies that
   \[r_\iter \cdot c_\iter + c'_\iter  = e_\iter \cdot a_\iter  \quad \Rightarrow  \; e_\iter \cdot a_\iter - r_\iter \cdot c_\iter - c'_\iter = 0 \quad \Rightarrow \; d_\iter = 0,   \]
   where the last equality follows from the fact that in the protocol, $d_\iter \defined e_\iter \cdot a_\iter - r_\iter \cdot c_\iter - c'_\iter$.
   However $d_\iter = 0$ is a contradiction since, according to the hypothesis of the lemma, we are given that
    $d_\iter \neq 0$.
   \end{proof}

We next show that if the honest parties output a secret-shared triple in the protocol, then except with probability $\frac{1}{|\F|}$, the triple is a multiplication-triple.
 Moreover, the triple will be random for the adversary. \\~\\
\noindent {\bf Lemma \ref{lemma:RandMultCICorrectness}.}
{\it Consider an arbitrary $\iter$, 
  such that all honest parties participate in the instance $\RandMultCI(\PartySet, \Z_s, \Z_a, \ShareSpec_{\Z_s}, {\GW}_1, \ldots, {\GW}_{|\Z_s|}, \Discarded, \iter)$,
   where
   no honest party is present in $\Discarded$. 
   If $d_\iter = 0$, then the honest parties output linearly secret-shared
  $([a_\iter], [b_\iter], [c_\iter])$ with IC-signatures with ${\GW}_1, \ldots, {\GW}_{|\Z_s|}$ being
           the underlying core-sets, 
  at the time $\TimeLSh + 2\TimeBA + \TimeBasicMult + 3\TimeRec$ in a synchronous network or eventually in an asynchronous network where, 
   except with probability  $\frac{1}{|\F|}$, the condition $c_\iter = a_\iter \cdot b_\iter$ holds. 
   Moreover, the view of $\Adv$ will be independent of $(a_\iter, b_\iter, c_\iter)$.
}
\begin{proof}
Let $d_\iter = 0$. From Lemma \ref{lemma:BasicMultTermination}, all honest parties will agree that $d_\iter = 0$, either at the time $\TimeLSh + 2\TimeBA + \TimeBasicMult + 3\TimeRec$ in a synchronous network or eventually in an asynchronous network. Then, from the protocol steps, the honest parties output $([a_\iter], [b_\iter], [c_\iter])$, with ${\GW}_1, \ldots, {\GW}_{|\Z_s|}$ being
           the underlying core-sets.
  In the protocol $d_{\iter} \defined e_\iter \cdot a_\iter - r_\iter \cdot c_\iter - c'_\iter$, where
  $e_\iter \defined r_\iter \cdot b_\iter + b'_\iter$. Since $d_\iter = 0$ holds, it implies that the honest parties have verified that the following holds:
   \[r_\iter (a_\iter \cdot b_\iter - c_\iter) = (c'_\iter - a_\iter \cdot b'_\iter). \]
  We note that $r_\iter$ will be a random element from $\F$ and will be unknown to $\Adv$ till it is publicly reconstructed, which follows from Lemma \ref{lemma:RandMultCIACS}
   We also note that $r_\iter$ will be unknown to $\Adv$, till the outputs for the underlying instances
  of $\BasicMult$ are computed, and the honest parties have $[c_\iter]$ and $[c'_\iter]$. This is because, in the protocol, 
   the honest parties start participating in the instance of $\Rec$ to reconstruct $r_\iter$, only after they compute
    $[c_\iter]$ and $[c'_\iter]$. 
  Now we have the following cases with respect to
 whether any party from $\PartySet \setminus \Discarded$ behaved maliciously during the underlying instances of $\BasicMult$.
 \begin{myitemize}
 \item[--] {\bf Case I: $c_\iter = a_\iter \cdot b_\iter$ and $c'_\iter = a_\iter \cdot b'_\iter$} --- In this case, $(a_\iter, b_\iter, c_\iter)$ is a multiplication-triple.
 \item[--] {\bf Case II: $c_\iter = a_\iter \cdot b_\iter$, but $c'_\iter \neq a_\iter \cdot b'_\iter$} --- This case is never possible, as this will lead to the contradiction
  that $r_\iter (a_\iter \cdot b_\iter - c_\iter) \neq (c'_\iter - a_\iter \cdot b'_\iter)$ holds.
 \item[--] {\bf Case III: $c_\iter \neq a_\iter \cdot b_\iter$, but $c'_\iter = a_\iter \cdot b'_\iter$} --- This case is possible only if $r_\iter = 0$, as otherwise 
   this will lead to the contradiction
  that $r_\iter (a_\iter \cdot b_\iter - c_\iter) \neq (c'_\iter - a_\iter \cdot b'_\iter)$ holds. However, since $r_\iter$ is a random element from $\F$, it implies that this case
  can occur only with probability at most $\frac{1}{|\F|}$.
  \item[--] {\bf Case IV: $c_\iter \neq a_\iter \cdot b_\iter$ as well as $c'_\iter \neq a_\iter \cdot b'_\iter$} --- This case is possible only if 
   $r_\iter = (c'_\iter - a_\iter \cdot b'_\iter) \cdot (a_\iter \cdot b_\iter - c_\iter)^{-1}$, as otherwise 
   this will lead to the contradiction
  that $r_\iter (a_\iter \cdot b_\iter - c_\iter) \neq (c'_\iter - a_\iter \cdot b'_\iter)$ holds. However, since $r_\iter$ is a random element from $\F$, it implies that this case
  can occur only with probability at most $\frac{1}{|\F|}$.
 \end{myitemize}
Hence, we have shown that except with probability at most $\frac{1}{|\F|}$, the triple $(a_\iter, b_\iter, c_\iter)$ is a multiplication-triple. 
 To complete the proof, we need to argue that the view of $\Adv$ in the protocol will be independent of the triple 
 $(a_\iter, b_\iter, c_\iter)$. For this, we first note that $a_\iter, b_\iter$ and $b'_\iter$ will be random for the adversary
 at the time of their generation, which follows from Lemma \ref{lemma:RandMultCIACS}.
   From Lemma \ref{lemma:BasicMultPrivacy}, $\Adv$ learns nothing additional 
  about $a_\iter$, $b_\iter$ and $b'_\iter$ during the two instances of $\BasicMult$. 
   While $\Adv$ learns the value of $e_\iter$, since $b'_\iter$ is a uniformly distributed for $\Adv$, 
    for every candidate value of $b'_\iter$ from the view-point of $\Adv$,
    there is a corresponding value of $b_\iter$ consistent with the $e_\iter$ learnt by $\Adv$.
    Hence, learning $e_\iter$ does not add any new information about $(a_\iter, b_\iter, c_\iter)$ to the view of $\Adv$. 
    Moreover, $\Adv$ will be knowing beforehand that $d_\iter$ will be $0$ and hence, learning this value does not change
    the view of $\Adv$ regarding $(a_\iter, b_\iter, c_\iter)$.     
\end{proof}

We next derive the communication complexity of the protocol $\RandMultCI$. \\~\\
\noindent {\bf Lemma \ref{lemma:RandMultCICommunication}.}
{\it Protocol $\RandMultCI$ incurs a communication of $\Order(|\Z_s| \cdot n^5 \cdot \log{|\F|} + n^6 \cdot \log{|\F|} \cdot |\sigma|)$
 bits and makes $\Order(n^2)$ calls to $\BA$.
}
\begin{proof}
To generate $[a_\iter], [b_\iter], [b'_\iter]$ and $[r_\iter]$, $\Order(n)$ instances of $\LSh$ and $\BA$ are invoked. To compute $[c_\iter]$ and $[c'_\iter]$, two instances of $\BasicMult$
 are invoked. To publicly reconstruct $e_\iter$ and $d_\iter$, two instances of $\Rec$ are invoked with $|\ReceiverSet| = n$. Finally, if $d_\iter \neq 0$, then $3 \cdot |\ShareSpec_{\Z_s}|$ instances of
  $\RecShare$ and $2n$ instances of $\Rec$ are invoked, with $|\ReceiverSet| = n$. The communication complexity now follows from the communication complexity of
   $\LSh$ (Lemma \ref{lemma:LSh}), communication complexity of $\BasicMult$ (Lemma \ref{lemma:BasicMultCommunication}),
   communication complexity of $\RecShare$ (Lemma \ref{lemma:RecShare}) and  communication complexity of $\Rec$ (Lemma \ref{lemma:Rec}). 
\end{proof}
\subsection{Properties of the Protocol $\TripGen$}
In this section, we prove the properties of the protocol $\TripGen$ (see Fig \ref{fig:TripGen} for the formal details.)
 We begin by showing that each party computes an output in the protocol. \\~\\
\noindent {\bf Lemma \ref{lemma:TripGenTermination}.}
{\it Let $t$ be the size of the largest set in $\Z_s$. Then except with probability $\Order(n^3 \cdot \errorAICP)$, the honest parties compute an output during 
  $\TripGen$, by the time $\TimeTripGen = (t + 1) \cdot \TimeRandMultCI$ in a synchronous network, or almost-surely, eventually in an asynchronous network, where
  $\TimeRandMultCI = \TimeLSh + 2\TimeBA + \TimeBasicMult + 4\TimeRec$.
}
\begin{proof}
From Lemma \ref{lemma:RandMultCITermination}, except with probability $\Order(n^3 \cdot \errorAICP)$,
  the honest parties will know the outcome of each iteration $\iter$, since all honest parties set the Boolean variable 
 $\flag_\iter$ to a common value. For every iteration
  $\iter$ where $\flag_\iter$ is set to $1$, from Lemma \ref{lemma:RandMultCICorruptBehaviour}, a {\it new} corrupt party is added to $\Discarded$. Thus, after at most $t$ iterations, 
  all the corrupt parties will be included in $\Discarded$ and the parties will set 
  $\flag_\iter$ to $0$ in the next iteration. Moreover, they will output 
  ${({\GW}_1,  \ldots,  {\GW}_{|\Z_s|}, [a_\iter], [b_\iter], [c_\iter])}$, computed during the corresponding instance of $\RandMultCI$.
\end{proof}

We next claim that the output computed by the honest parties is indeed a multiplication-triple.\\~\\
\noindent {\bf Lemma \ref{lemma:TripGenCorrectnessPrivacy}.}
{\it If the honest parties output ${({\GW}_1,  \ldots,  {\GW}_{|\Z_s|}, [a_\iter], [b_\iter], [c_\iter])}$ during the protocol $\TripGen$,
 then $a_\iter, b_\iter$ and $c_\iter$ are linearly secret-shared with IC-signatures, with
  ${\GW}_1,  \ldots,  {\GW}_{|\Z_s|}$ being the underlying core-sets. Moreover, $c_\iter = a_\iter b_\iter$ holds, except with probability $\frac{1}{|\F|}$. 
   Furthermore, the view of the adversary remains independent of $a_\iter, b_\iter$ and $c_\iter$.
}
\begin{proof}
Follows from Lemma \ref{lemma:RandMultCICorrectness}.
\end{proof}

We finally derive the communication complexity of the protocol $\TripGen$.\\~\\
\noindent {\bf Lemma \ref{lemma:TripGenCommunication}.}
{\it Protocol $\TripGen$ incurs a communication of $\Order(|\Z_s| \cdot n^6 \cdot \log{|\F|} + n^7 \cdot \log{|\F|} \cdot |\sigma|)$
 bits and makes $\Order(n^3)$ calls to $\BA$.
}
\begin{proof}
The proof follows from the communication complexity of $\RandMultCI$ (Lemma \ref{lemma:RandMultCICommunication}) and the fact that $\Order(n)$ instances of 
 $\RandMultCI$ are invoked in the protocol.
\end{proof}

\section{Properties of the Circuit Evaluation Protocol}
\label{app:MPC}
In this section, we prove the properties of the protocol $\PiMPC$ (see Fig \ref{fig:MPC} for the formal description).
 We begin by showing that the honest parties compute some output during the pre-processing phase.
\begin{lemma}
\label{lemma:MPCPreProcessing}
Protocol $\PiMPC$ achieves the following during the pre-processing phase.
\begin{myitemize}
\item[--] {\bf Synchronous Network}: Except with probability $\Order(n^3 \cdot \errorAICP)$,  at the time $\TimeRand$, the honest parties have
$\{r^{(\mathfrak{l})} \}_{\mathfrak{l} = 1, \ldots, L}$, which are linearly secret-shared with IC-signatures, with 
  ${\GW}_1, \ldots, {\GW}_{|\Z_s|}$ being the underlying core-sets, where $L \defined n^3 \cdot c_M + 4n^2 \cdot c_M + n^2 + n$. Moreover, the view of the adversary remains independent of
   the values $\{r^{(\mathfrak{l})} \}_{\mathfrak{l} = 1, \ldots, L}$.
   At the time $\TimeRand + \TimeTripGen$, the honest parties have triples $\{(a^{(\ell)}, b^{(\ell)}, c^{(\ell)}) \}_{\ell = 1, \ldots, c_M}$, 
   which are linearly secret-shared with IC-signatures, with 
  ${\GW}_1, \ldots, {\GW}_{|\Z_s|}$ being the underlying core-sets, where $c^{(\ell)} = a^{(\ell)} \cdot b^{(\ell)}$, except with probability $\frac{1}{|\F|}$.
  The view of the adversary will be independent of the multiplication-triples. 
\item[--] {\bf Asynchronous Network}: Except with probability $\Order(n^3 \cdot \errorAICP)$,  almost-surely, the honest parties eventually have
$\{r^{(\mathfrak{l})} \}_{\mathfrak{l} = 1, \ldots, L}$, which are linearly secret-shared with IC-signatures, with 
  ${\GW}_1, \ldots, {\GW}_{|\Z_s|}$ being the underlying core-sets, where $L \defined n^3 \cdot c_M + 4n^2 \cdot c_M + n^2 + n$. Moreover, the view of the adversary remains independent of the
  values $\{r^{(\mathfrak{l})} \}_{\mathfrak{l} = 1, \ldots, L}$. Furthermore, the honest parties eventually have
  triples $\{(a^{(\ell)}, b^{(\ell)}, c^{(\ell)}) \}_{\ell = 1, \ldots, c_M}$, 
   which are linearly secret-shared with IC-signatures, with 
  ${\GW}_1, \ldots, {\GW}_{|\Z_s|}$ being the underlying core-sets, where $c^{(\ell)} = a^{(\ell)} \cdot b^{(\ell)}$, except with probability $\frac{1}{|\F|}$.
  The view of the adversary will be independent of the multiplication-triples. 
\end{myitemize}
\end{lemma}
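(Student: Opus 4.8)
The plan is to prove Lemma~\ref{lemma:MPCPreProcessing} as a direct corollary of the two theorems governing the sub-protocols invoked in the pre-processing phase: Theorem~\ref{thm:Rand} for $\Rand$ and Lemma~\ref{lemma:TripGenTermination}, Lemma~\ref{lemma:TripGenCorrectnessPrivacy} (together with the communication lemma, which we do not need here) for $\TripGen$. The pre-processing phase consists of exactly two steps executed sequentially: first an instance of $\Rand(\PartySet, \Z_s, \Z_a, \SharingSpec_{\Z_s}, L)$ with $L = n^3 \cdot c_M + 4n^2 \cdot c_M + n^2 + n$, and then, upon obtaining output from $\Rand$, an instance of $\TripGen$ with $L = c_M$, where the secret-shared random values $\{[r^{(\mathfrak{l})}]\}_{\mathfrak{l} = 1, \ldots, n^3 \cdot c_M + 4n^2 \cdot c_M + n^2}$ serve as the pads for the $\LSh$ instances invoked inside $\TripGen$. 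So the whole argument is a matter of stitching the guarantees of these two building blocks and tracking that the core-sets and privacy properties compose correctly.

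First I would handle the $\Rand$ part. By Theorem~\ref{thm:Rand}, in a synchronous network, except with probability $\Order(n^3 \cdot \errorAICP)$, at time $\TimeRand$ there exist values $r^{(1)}, \ldots, r^{(L)}$ linearly secret-shared with IC-signatures whose core-sets are $\GW_1, \ldots, \GW_{|\Z_s|}$, and the view of the adversary is independent of these values; in an asynchronous network the same holds eventually (and almost-surely, since $\Rand$ calls $\BA$ which in turn invokes $\ABA$). This is verbatim the first sentence of each bullet of the lemma, so that portion requires essentially no work beyond invoking the theorem and noting the value of $L$. Second I would handle the $\TripGen$ part. The crucial observation is that $\TripGen$ is invoked \emph{with the global core-sets $\GW_1, \ldots, \GW_{|\Z_s|}$ already fixed}, and Notation~\ref{notation:LSh} together with the counting in Section~\ref{sec:triples} (``On the Maximum Number of Calls of $\LSh$ in $\TripGen$'') guarantees that the number of $\LSh$ instances needed inside $\TripGen$ for generating $c_M$ triples is at most $n^3 \cdot c_M + 4n^2 \cdot c_M + n^2$, which is exactly the number of pads allocated to it from the output of $\Rand$. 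Hence every $\LSh$ instance inside $\TripGen$ has its required pre-shared pad, so the hypotheses of Lemma~\ref{lemma:LSh} (and therefore of all the triple-generation lemmas) are met. Then Lemma~\ref{lemma:TripGenTermination} gives termination by time $\TimeTripGen$ (synchronous) or eventually and almost-surely (asynchronous), except with probability $\Order(n^3 \cdot \errorAICP)$, and Lemma~\ref{lemma:TripGenCorrectnessPrivacy} gives that each output triple $(a^{(\ell)}, b^{(\ell)}, c^{(\ell)})$ is linearly secret-shared with IC-signatures with core-sets $\GW_1, \ldots, \GW_{|\Z_s|}$, satisfies $c^{(\ell)} = a^{(\ell)} b^{(\ell)}$ except with probability $\frac{1}{|\F|}$, and is random from the adversary's viewpoint. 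Composing the times, the honest parties hold the triples at time $\TimeRand + \TimeTripGen$ in the synchronous case. (A mild point to state carefully: the generalized $\TripGen$ for $L = c_M$ triples produces $c_M$ independent triples, and the per-triple error $\frac{1}{|\F|}$ is as claimed; a union bound over $c_M$ triples would give $\frac{c_M}{|\F|}$, but the statement as written keeps it per-triple, matching the phrasing ``$c^{(\ell)} = a^{(\ell)} \cdot b^{(\ell)}$, except with probability $\frac{1}{|\F|}$'' for each $\ell$.)

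I expect the main (though still modest) obstacle to be the bookkeeping around pad allocation and privacy composition: I must argue that the pads consumed inside $\TripGen$ are disjoint from each other and are precisely the first $n^3 \cdot c_M + 4n^2 \cdot c_M + n^2$ of the $L$ values generated by $\Rand$, leaving $n$ of them for the input phase, and that since $\Rand$ guarantees \emph{joint} independence of all $L$ values from the adversary's view, using a subset of them as OTPs inside $\TripGen$ preserves the privacy claims of the triple-generation lemmas (which already assume their pads are random and unknown to the adversary). This is exactly the interface specified by Notation~\ref{notation:LSh}, so the argument is: the pads are fixed in advance, the parties publicly know which pad goes with which $\LSh$ instance, the total demand is bounded by Corollary~\ref{cor:BasicMultUpperBound} lifted through $\RandMultCI$ and $\TripGen$ to $n^3 \cdot c_M + 4n^2 \cdot c_M + n^2$, and hence the precondition of Lemma~\ref{lemma:LSh} that a fresh random secret-shared pad is available for each $\LSh$ instance is satisfied. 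Everything else is a direct quotation of the cited results, and the probabilities $\Order(n^3 \cdot \errorAICP)$ are absorbed by a union bound over the constantly-many ($\Rand$ plus $\TripGen$) sub-protocol invocations.
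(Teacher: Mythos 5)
Your proof proposal is correct and takes essentially the same approach as the paper: it invokes Theorem~\ref{thm:Rand} for the $\Rand$ part and Lemmas~\ref{lemma:TripGenTermination}, \ref{lemma:TripGenCorrectnessPrivacy} for the $\TripGen$ part, and observes that the $n^3 \cdot c_M + 4n^2 \cdot c_M + n^2$ bound on $\LSh$ calls inside $\TripGen$ is covered by the $L$ pads generated by $\Rand$ (with the underlying core-sets $\GW_1, \ldots, \GW_{|\Z_s|}$ therefore common). Your extra care about pad disjointness, privacy composition, and the per-triple error bound is sound bookkeeping that the paper handles more tersely but in the same spirit.
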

\begin{proof}
The proof follows from the {\it $\Z_s$-correctness}, {\it $\Z_a$-correctness} and {\it privacy} of the protocol $\Rand$ (Theorem \ref{thm:Rand}) and
 from the properties of $\TripGen$ in the asynchronous and asynchronous network (Lemma \ref{lemma:TripGenTermination} and Lemma \ref{lemma:TripGenCorrectnessPrivacy}).
  We also note that the multiplication-triples will be linearly secret-shared with IC-signatures, with 
  ${\GW}_1, \ldots, {\GW}_{|\Z_s|}$ being the underlying core-sets. This is because there will be at most $n^3 \cdot c_M + 4n^2 \cdot c_M + n^2$ instances of $\LSh$
  invoked as part of $\TripGen$ for generating $c_M$ multiplication-triples. And prior to invoking the instance of $\TripGen$, the honest parties would have already generated
   $n^3 \cdot c_M + 4n^2 \cdot c_M + n^2 + n$ number of linearly secret-shared random pads with IC-signatures with ${\GW}_1, \ldots, {\GW}_{|\Z_s|}$ being the underlying 
   core-sets through the instance of $\Rand$, which
   can serve $n^3 \cdot c_M + 4n^2 \cdot c_M + n^2 + n$ instances of $\LSh$.  
\end{proof}

We next show that during the input phase, the inputs of all honest parties will be linearly secret-shared with IC-signatures in a {\it synchronous} network
 and in an {\it asynchronous} network, the inputs of a subset of the parties will be linearly secret-shared with IC-signatures.
 \begin{lemma}
 \label{lemma:MPCInputPhase}
Protocol $\PiMPC$ achieves the following during the input phase.
\begin{myitemize}
\item[--] {\bf Synchronous Network}: Except with probability $\Order(n^3 \cdot \errorAICP)$,  at the time $\TimeRand + \TimeTripGen + \TimeLSh + 2\TimeBA$, 
 the honest parties will have a common subset $\CoreSet$ where $\PartySet \setminus \CoreSet \in \Z_s$, such that all honest parties will be present in $\CoreSet$.
  Moreover, corresponding to every $P_j \in \CoreSet$, there will be some value, say ${x^{\star}}^{(j)}$, which is the same as $x^{(j)}$, which will be linearly secret-shared with IC-signatures,
  with ${\GW}_1, \ldots, {\GW}_{|\Z_s|}$ being the underlying core-sets. Furthermore, the view of the adversary will be independent of the $x^{(j)}$ values, corresponding to the
  honest parties $P_j \in \CoreSet$.
\item[--] {\bf Asynchronous Network}: Except with probability $\Order(n^3 \cdot \errorAICP)$,  almost-surely, 
 the honest parties will eventually have a common subset $\CoreSet$ where $\PartySet \setminus \CoreSet \in \Z_s$.
  Moreover, corresponding to every $P_j \in \CoreSet$, there will be some value, say ${x^{\star}}^{(j)}$, which is the same as $x^{(j)}$, which will be eventually linearly secret-shared with IC-signatures,
  with ${\GW}_1, \ldots, {\GW}_{|\Z_s|}$ being the underlying core-sets. Furthermore, the view of the adversary will be independent of the $x^{(j)}$ values, corresponding to the
  honest parties $P_j \in \CoreSet$.
\end{myitemize}
\end{lemma}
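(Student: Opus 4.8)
The plan is to follow closely the proof of Lemma~\ref{lemma:RandMultCIACS}, since the input phase of $\PiMPC$ runs exactly the same ``agreement on a common subset'' sub-routine: each $P_j$ invokes a single instance of $\LSh$ on its input (using a pre-generated random pad), the parties wait a fixed time $\TimeLSh$ measured from the start of the input phase, populate $\CSet_i$ with those dealers whose $\LSh$ instance has produced an output, and then run $n$ instances of $\BA$, one per dealer, to carve out $\CoreSet$. Throughout, let $Z^\star$ denote the set of corrupt parties and $\Hon = \PartySet \setminus Z^\star$ the set of honest parties; recall that $\Hon \in \ShareSpec_{\Z_s}$ irrespective of the network type, since $Z^\star \in \Z_s$ (using $\Z_a \subset \Z_s$ in the asynchronous case). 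I will also use that, by Lemma~\ref{lemma:MPCPreProcessing}, before the input phase begins the parties already hold enough linearly secret-shared random pads with core-sets ${\GW}_1,\ldots,{\GW}_{|\Z_s|}$ to service one $\LSh$ instance per party, so every output of $\LSh$ in this phase is a linear secret-sharing with IC-signatures whose core-sets are ${\GW}_1,\ldots,{\GW}_{|\Z_s|}$.

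For the synchronous case I would argue as follows. By the $\Z_s$-correctness of $\LSh$ (Lemma~\ref{lemma:LSh}), for every honest $P_j$ the honest parties output $[x^{(j)}]$ within time $\TimeLSh$ of the start of the input phase, except with probability $\Order(n^3 \cdot \errorAICP)$; hence at that time $\Hon \subseteq \CSet_i$ for every honest $P_i$, so every honest party participates with input $1$ in $\BA^{(j)}$ for every $P_j \in \Hon$. By the $\Z_s$-validity and $\Z_s$-guaranteed-liveness of $\BA$ (Theorem~\ref{thm:BA}) these instances output $1$ for all honest parties within a further $\TimeBA$; since $\PartySet \setminus \Hon \in \Z_s$, all honest parties then start participating with input $0$ in the remaining $\BA$ instances, which by $\Z_s$-security produce common binary outputs within one more $\TimeBA$. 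Thus at local time $\TimeRand + \TimeTripGen + \TimeLSh + 2\TimeBA$ the honest parties hold a common $\CoreSet$ with $\Hon \subseteq \CoreSet$, and $\PartySet \setminus \CoreSet \in \Z_s$ follows from the stopping rule (a party only switches to input $0$ once the $\BA$ instances for some $\PartySet \setminus Z$, $Z \in \Z_s$, have output $1$). For $P_j \in \CoreSet$ I set ${x^{\star}}^{(j)} = x^{(j)}$ when $P_j$ is honest, in which case $[{x^{\star}}^{(j)}]$ is already held; if $P_j$ is corrupt, then $\BA^{(j)}$ output $1$, so some honest $P_i$ had $P_j \in \CSet_i$, i.e.\ it computed an output in $P_j$'s $\LSh$ instance, and by the $\Z_s$-commitment of $\LSh$ there is some ${x^{\star}}^{(j)}$ which all honest parties have linearly secret-shared with IC-signatures and core-sets ${\GW}_1,\ldots,{\GW}_{|\Z_s|}$ within $\Delta < \TimeBA$ more, hence by the stated deadline. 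Finally, for $P_j \notin \CoreSet$ the parties take the default linear secret-sharing of $0$ with these core-sets, which is trivially of the required form.

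The asynchronous case is the same template, with the synchronous guarantees replaced by their almost-sure analogues. I would first show that, irrespective of message scheduling, there is eventually a set $\PartySet \setminus Z$ with $Z \in \Z_s$ such that all honest parties participate with input $1$ in the corresponding $\BA$ instances: either some honest party has already switched to input $0$, whence such a set has produced output $1$ and, by $\Z_a$-consistency, does so for everyone; or no honest party has switched, in which case every honest $P_j$'s $\LSh$ instance eventually delivers outputs to all honest parties (by $\Z_a$-correctness of $\LSh$, except with probability $\Order(n^3 \cdot \errorAICP)$) so the $\BA$ instances for $\Hon$ output $1$ by $\Z_a$-validity. Once this happens all honest parties participate with input $0$ in the remaining $\BA$ instances, and by $\Z_a$-security all $n$ of them almost-surely yield common outputs, giving a common $\CoreSet$ with $\PartySet \setminus \CoreSet \in \Z_s$ (but now without the guarantee $\Hon \subseteq \CoreSet$, which is why the statement omits it in the asynchronous case). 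The identification of ${x^{\star}}^{(j)}$ for $P_j \in \CoreSet$ and the default sharing for $P_j \notin \CoreSet$ go through verbatim using the $\Z_a$-commitment of $\LSh$. Privacy in both cases is immediate from the privacy of $\LSh$, which guarantees the adversary's view is independent of $x^{(j)}$ for every honest $P_j$ (in particular for honest $P_j \in \CoreSet$), together with the fact that the pads are random for the adversary by Lemma~\ref{lemma:MPCPreProcessing}. The only point requiring genuine care --- and the one I would write out in full --- is the timing bookkeeping that pins the $\CoreSet$-computation exactly at local time $\TimeRand + \TimeTripGen + \TimeLSh + 2\TimeBA$ and verifies that the $\Delta$-slack coming from the $\Z_s$-commitment of $\LSh$ for corrupt dealers fits inside the second $\TimeBA$ window; everything else is a direct transcription of the argument already carried out for $\RandMultCI$ in Lemma~\ref{lemma:RandMultCIACS}.
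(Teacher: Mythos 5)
Your proposal is correct and follows essentially the same route as the paper's own proof, which likewise defers to Lemma~\ref{lemma:RandMultCIACS} as the template (ACS on $n$ single-input $\LSh$ instances, $\Z_s$/$\Z_a$-correctness and commitment of $\LSh$ for honest/corrupt dealers, $\Z_s$/$\Z_a$-security of $\BA$, the $\Delta$-slack absorbed inside the second $\TimeBA$ window, and privacy via the privacy of $\LSh$ with pads still fresh after $\TripGen$). If anything, your write-up fills in the BA bookkeeping that the paper abbreviates.
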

\begin{proof}
 We first consider a {\it synchronous} network. 
 From the protocol steps, the honest parties start participating in the input phase, only after computing output during the instance of $\TripGen$, which 
 from Lemma \ref{lemma:MPCPreProcessing} happens at the time $\TimeRand + \TimeTripGen$. We also note that there can be at most
 $n^3 \cdot c_M + 4n^2 \cdot c_M + n^2$ instances of $\LSh$, invoked as part of the instance of $\TripGen$, which will utilize the 
 secret-shared values $\{r^{(\mathfrak{l})} \}_{\mathfrak{l} = 1, \ldots, n^3 \cdot c_M + 4n^2 \cdot c_M + n^2}$ as pads. 
 Consequently, the remaining linearly secret-shared values $\{r^{(\mathfrak{l})} \}_{\mathfrak{l} = n^3 \cdot c_M + 4n^2 \cdot c_M + n^2 + 1, \ldots, n^3 \cdot c_M + 4n^2 \cdot c_M + n^2 + n}$
 will be still available to the honest parties for being used as pads in up to $n$ instances of $\LSh$, since these pads will be random from the adversary's point of view.
 
 Let $Z^{\star} \in \Z_s$ be the set of {\it corrupt} parties and
  let $\Hon =  \PartySet \setminus Z^{\star}$
  be the set of {\it honest} parties.
  We claim that by the time $\TimeRand + \TimeTripGen + \TimeLSh + 2\TimeBA$, except with probability $\Order(n^3 \cdot \errorAICP)$,
   all the parties in $\Hon$ will have a common subset $\CoreSet$, where $\PartySet \setminus \CoreSet \in \Z_s$
   and where $\Hon \subseteq \CoreSet$. The proof for this is exactly the same as that of Lemma \ref{lemma:RandMultCIACS}. Namely, at the time  $\TimeRand + \TimeTripGen + \TimeLSh$, all
   the parties in $\Hon$ would start participating with input $1$ in the BA instances $\BA^{(j)}$, corresponding to the parties $P_j \in \Hon$, since by this time, the instances of $\LSh$ invoked by the
   parties in $\Hon$ will produce output for all the parties in $\Hon$. Consequently, these BA instances 
    will produce output $1$ at the time 
    $\TimeRand + \TimeTripGen + \TimeLSh + \TimeBA$, after which all the parties in $\Hon$ will start participating with input $0$ in the remaining BA instances (if any).
    Consequently, by the time  $\TimeRand + \TimeTripGen + \TimeLSh + 2\TimeBA$, all the $n$ instances of $\BA$ will produce some output and the parties in $\Hon$ will have a common $\CoreSet$.
    Next, it can be shown that corresponding to every $P_j \in \CoreSet$, there exists some value, say ${x^{\star}}^{(j)}$, which is the same as $x^{(j)}$, 
    such that the parties in $\Hon$ have a linear secret-sharing with IC-signatures of ${x^{\star}}^{(j)}$,
     with ${\GW}_1, \ldots, {\GW}_{|\Z_s|}$ being the underlying core-sets. The proof for this follows similar lines as that of Lemma \ref{lemma:RandMultCIACS}.
     It is easy to see that ${x^{\star}}^{(j)}$ will be linearly secret-shared with IC-signatures with ${\GW}_1, \ldots, {\GW}_{|\Z_s|}$ being the underlying core-sets.
     This is because the instance of $\LSh$ invoked by $P_j$ utilizes $r^{(n^3 \cdot c_M + 4n^2 \cdot c_M + n^2 + j)}$ as the pad, which is linearly secret-shared
     with IC-signatures with ${\GW}_1, \ldots, {\GW}_{|\Z_s|}$ being the underlying core-sets.
     
     The proof of the lemma for an {\it asynchronous} network is almost the same as above and follows using similar arguments as used to prove Lemma \ref{lemma:RandMultCIACS}
     for the case of {\it asynchronous} network. 
     
     Finally, the privacy of the inputs $x^{(j)}$ of the parties $P_j \in (\Hon \cap \CoreSet)$ 
     follows from the {\it privacy} of $\LSh$ and the fact that the underlying pads $r^{(n^3 \cdot c_M + 4n^2 \cdot c_M + n^2 + j)}$ used in the corresponding
     instances of $\LSh$ are still random for the adversary, after the instance of $\TripGen$.    
 \end{proof}

We next show that the honest parties compute an output during the circuit-evaluation phase.

 \begin{lemma}
 \label{lemma:MPCCircuitEvaluationPhase}
Protocol $\PiMPC$ achieves the following during the circuit-evaluation phase, where $D_M$ denotes the multiplicative depth of $\ckt$.
\begin{myitemize}
\item[--] {\bf Synchronous Network}: Except with probability $\Order(n^3 \cdot \errorAICP)$,  at the time $\TimeRand + \TimeTripGen + \TimeLSh + 2\TimeBA + (D_M + 1) \cdot \TimeRec$, 
 the honest parties will have $y$, where $y = f({x^{\star}}^{(1)}, \ldots, {x^{\star}}^{(n)})$, such that ${x^{\star}}^{(j)} = x^{(j)}$ for every honest party $P_j \in \CoreSet$ and where
 ${x^{\star}}^{(j)} = 0$ for every $P_j \not \in \CoreSet$. Moreover, all honest parties will be present in $\CoreSet$. 
 Furthermore, the view of the adversary will be independent of the $x^{(j)}$ values, corresponding to the
  honest parties $P_j \in \CoreSet$.
\item[--] {\bf Asynchronous Network}: Except with probability $\Order(n^3 \cdot \errorAICP)$,  almost-surely, 
 the honest parties will eventually have $y$, where $y = f({x^{\star}}^{(1)}, \ldots, {x^{\star}}^{(n)})$, where ${x^{\star}}^{(j)} = x^{(j)}$ for every honest party $P_j \in \CoreSet$ and where
 ${x^{\star}}^{(j)} = 0$ for every $P_j \not \in \CoreSet$. Furthermore, the view of the adversary will be independent of the $x^{(j)}$ values, corresponding to the
  honest parties $P_j \in \CoreSet$.
\end{myitemize}
\end{lemma}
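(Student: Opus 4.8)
The plan is to prove the lemma by carrying the two immediately preceding sub-lemmas forward and then arguing correctness, timing, and privacy of the gate-by-gate evaluation. Concretely, I would condition on the good events of Lemma \ref{lemma:MPCInputPhase} and Lemma \ref{lemma:MPCPreProcessing}: all honest parties hold a common $\CoreSet$ with $\PartySet \setminus \CoreSet \in \Z_s$ (and, in a synchronous network, $\Hon \subseteq \CoreSet$), every ${x^{\star}}^{(j)}$ is linearly secret-shared with IC-signatures under the \emph{global} core-sets ${\GW}_1, \ldots, {\GW}_{|\Z_s|}$ (with ${x^{\star}}^{(j)} = 0$ for $P_j \notin \CoreSet$), and the $c_M$ multiplication-triples $\{([a^{(\ell)}], [b^{(\ell)}], [c^{(\ell)}])\}$ are linearly secret-shared under the same core-sets with $c^{(\ell)} = a^{(\ell)} b^{(\ell)}$. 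I would then prove, by induction on the topological ordering of $\ckt$, the invariant: after evaluating a gate $g$ whose output wire carries the value $v_g$ obtained by evaluating $g$ on inputs ${x^{\star}}^{(1)}, \ldots, {x^{\star}}^{(n)}$, all honest parties hold $[v_g]$ linearly secret-shared with IC-signatures under core-sets ${\GW}_1, \ldots, {\GW}_{|\Z_s|}$.

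For the inductive step, addition gates are handled non-interactively: by the linearity of $[\cdot]$-sharing (Definition \ref{def:SS} and the remark following it) the parties locally compute $[z] = [x] + [y]$, preserving both correctness and the core-sets, at no error and no time cost. For the $\ell$-th multiplication gate the parties reconstruct $d^{(\ell)} = x - a^{(\ell)}$ and $e^{(\ell)} = y - b^{(\ell)}$ by two parallel instances of $\Rec$ (possible since $[d^{(\ell)}]$ and $[e^{(\ell)}]$ are obtained locally by linearity), then take the default linear secret-sharing of $d^{(\ell)} e^{(\ell)}$ with core-sets ${\GW}_1, \ldots, {\GW}_{|\Z_s|}$, and locally compute $[z] = [d^{(\ell)} e^{(\ell)}] + d^{(\ell)} [b^{(\ell)}] + e^{(\ell)} [a^{(\ell)}] + [c^{(\ell)}]$. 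A one-line algebraic check (Beaver's identity) gives $z = xy$ whenever $c^{(\ell)} = a^{(\ell)} b^{(\ell)}$, which holds under the conditioning; and the core-sets are preserved because every summand, including the default sharing, uses ${\GW}_1, \ldots, {\GW}_{|\Z_s|}$. The output gate is evaluated by one final instance of $\Rec$.

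For timing, in a synchronous network all multiplication gates of a common multiplicative depth are reconstructed simultaneously, so the circuit-evaluation phase consumes $D_M$ reconstruction rounds (each of duration $\TimeRec$ by the $\Z_s$-correctness of $\Rec$, Lemma \ref{lemma:Rec}) plus one more for the output gate, i.e.\ $(D_M + 1) \cdot \TimeRec$ on top of the time $\TimeRand + \TimeTripGen + \TimeLSh + 2\TimeBA$ at which the input phase completes (Lemma \ref{lemma:MPCInputPhase}); in an asynchronous network each gate is evaluated as soon as its input wires are shared, so by the induction the output is reconstructed eventually, the almost-sure qualifier being inherited from the $\ABA$ instances of the earlier phases. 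For the error probability, each $\Rec$ call fails only if some underlying unforgeability/non-repudiation event fails; by the local-dispute-control mechanism of Section \ref{sec:SuperPolynomial} the total number of such failures across \emph{all} instances of $\Auth$ and $\Reveal$ in the whole protocol is $\Order(n^3)$, so the union over the polynomially many $\Rec$ calls contributes only $\Order(n^3 \cdot \errorAICP)$, independent of $c_M$, $D_M$ and $|\Z_s|$ — which is exactly why a single tuple of global core-sets was generated up front via $\Rand$. Privacy then follows from the standard Beaver argument: the adversary's view in this phase is just $\{d^{(\ell)}, e^{(\ell)}\}_{\ell}$ together with $y$; since each triple is used exactly once and $a^{(\ell)}, b^{(\ell)}$ are uniformly random and independent of the adversary's view by the privacy clauses of Lemma \ref{lemma:MPCPreProcessing} and Lemma \ref{lemma:MPCInputPhase}, each reconstructed $d^{(\ell)}, e^{(\ell)}$ is uniformly random and carries no information about the wire values, hence none about honest parties' inputs in $\CoreSet$.

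The main obstacle I anticipate is not any individual gate but the bookkeeping that keeps the failure probability at $\Order(n^3 \cdot \errorAICP)$ rather than letting it scale with the circuit size: this forces one to invoke the local dispute control correctly and to verify that every reconstruction and every default sharing really does operate with the one common tuple ${\GW}_1, \ldots, {\GW}_{|\Z_s|}$, so that the linearity of IC-signatures (Section \ref{sec:ICPLinearity}) remains available throughout the circuit. The asynchronous termination argument — making precise that "evaluate each gate when its inputs are ready" finishes eventually — is the other place needing care, but it reduces to the eventual-termination guarantees of $\Rec$ together with those of the $\BA$/$\ABA$ instances already established.
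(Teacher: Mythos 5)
Your proposal is correct and follows essentially the same approach as the paper's own proof: condition on the good events of Lemmas \ref{lemma:MPCPreProcessing} and \ref{lemma:MPCInputPhase}, argue by (implicit or explicit) induction along the topological ordering that each gate-output remains linearly secret-shared with IC-signatures under the fixed global core-sets, use linearity for addition gates and Beaver's trick with two $\Rec$ calls for multiplication gates, parallelize per multiplicative depth to obtain the $(D_M + 1) \cdot \TimeRec$ bound, and obtain privacy from the freshness and uniformity of the pre-processed triples. Your explicit invocation of the local dispute control mechanism to keep the aggregate error at $\Order(n^3 \cdot \errorAICP)$ independently of $c_M$ and $D_M$ is a slightly more careful articulation of a point the paper's proof treats more tersely, but the substance is the same.
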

\begin{proof}
Let us first consider a {\it synchronous} network. Let $Z^{\star} \in \Z_s$ be the set of {\it corrupt} parties and
  let $\Hon =  \PartySet \setminus Z^{\star}$
  be the set of {\it honest} parties.
 From Lemma \ref{lemma:MPCPreProcessing}, at the time $\TimeRand + \TimeTripGen$, 
 the parties in $\Hon$ will have the triples $\{(a^{(\ell)}, b^{(\ell)}, c^{(\ell)}) \}_{\ell = 1, \ldots, c_M}$, 
   which are linearly secret-shared with IC-signatures, with 
  ${\GW}_1, \ldots, {\GW}_{|\Z_s|}$ being the underlying core-sets and where $c^{(\ell)} = a^{(\ell)} \cdot b^{(\ell)}$, except with probability $\frac{1}{|\F|}$.
  Moreover, from Lemma \ref{lemma:MPCInputPhase}, at the time $\TimeRand + \TimeTripGen + \TimeLSh + 2\TimeBA$,
  the honest parties will have a common subset $\CoreSet$ where $\PartySet \setminus \CoreSet \in \Z_s$, such that all honest parties will be present in $\CoreSet$.
  Furthermore, corresponding to every $P_j \in \CoreSet$, there will be some value, say ${x^{\star}}^{(j)}$, which is the same as $x^{(j)}$ for an
   {\it honest} $P_j$, which will be linearly secret-shared with IC-signatures,
  with ${\GW}_1, \ldots, {\GW}_{|\Z_s|}$ being the underlying core-sets. At the end of the input phase, the parties take $0$ as the input on behalf of the parties $P_j \not \in \CoreSet$
  and take the default linear secret-sharing of $0$ with IC-signatures, with ${\GW}_1, \ldots, {\GW}_{|\Z_s|}$ being the underlying core-sets.
  To prove the lemma, we show that all the gates in $\ckt$ are correctly evaluated. Namely, for every gate in $\ckt$, given the gate-inputs in a linearly secret-shared fashion
  with IC-signatures with ${\GW}_1, \ldots, {\GW}_{|\Z_s|}$ being the underlying core-sets, the parties compute the gate-output in a linearly secret-shared fashion
  with IC-signatures with ${\GW}_1, \ldots, {\GW}_{|\Z_s|}$ being the underlying core-sets. 
  While this is true for the linear gates in $\ckt$, which follows from the linearity of the secret-sharing, 
  the same is true even for the multiplication gates, except with probability $\Order(n^3 \cdot \errorAICP)$.
  This is because, for every multiplication gate, the parties deploy a linearly secret-shared multiplication-triple from the pre-processing phase and apply Beaver's method.
  And the masked gate-inputs are correctly reconstructed through instances of $\Rec$, except with probability $\Order(n^3 \cdot \errorAICP)$.
  Since all the independent multiplication gates at the same multiplicative depth can be evaluated in parallel, to evaluate the multiplication gates, it will take a total $D_M \cdot \TimeRec$ time. 
    Finally, once the circuit-output is ready in a secret-shared fashion, it is publicly reconstructed through an instance of $\Rec$, which takes $\TimeRec$ time
    and produces the correct output, except with probability $\Order(n^3 \cdot \errorAICP)$.
    The {\it privacy} of the inputs of the honest parties in $\CoreSet$ follows from the {\it privacy} of $\LSh$ (Lemma \ref{lemma:LSh}) 
    and the fact no additional information is revealed during the evaluation of multiplication gates. This is because the underlying multiplication-triples which are deployed while
    applying Beaver's method are random for the adversary.
  
  The proof for the case of {\it asynchronous} network follows similar arguments as above and depends upon the properties of the pre-processing phase and input phase
  in the {\it asynchronous} network (Lemma \ref{lemma:MPCPreProcessing} and Lemma \ref{lemma:MPCInputPhase}).
\end{proof}

We finally show that the honest parties terminate the protocol.
\begin{lemma}
\label{lemma:MPCTermination}
If the network is synchronous, then except with probability $\Order(n^3 \cdot \errorAICP)$,
  the honest parties terminate the protocol at the time $\TimeRand + \TimeTripGen + \TimeLSh + 2\TimeBA + (D_M + 1) \cdot \TimeRec + \Delta$.
  If the network is asynchronous, then except with probability $\Order(n^3 \cdot \errorAICP)$, almost-surely,
  the honest parties eventually terminate the protocol.
\end{lemma}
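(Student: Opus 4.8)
The plan is to reduce everything to Lemma~\ref{lemma:MPCCircuitEvaluationPhase}, which already carries out the heavy lifting. First I would fix the event $E$ of that lemma --- holding except with probability $\Order(n^3 \cdot \errorAICP)$, and almost-surely in the asynchronous case --- on which all honest parties compute a common circuit output $y = f({x^{\star}}^{(1)}, \ldots, {x^{\star}}^{(n)})$: at time $T^{\star} \defined \TimeRand + \TimeTripGen + \TimeLSh + 2\TimeBA + (D_M + 1) \cdot \TimeRec$ in a synchronous network, and eventually in an asynchronous network. The whole argument is carried out conditioned on $E$; since the termination phase consists only of plain point-to-point $\ready$-messages and invokes no further instance of ICP, broadcast or BA, it introduces no additional error, so the final error probability stays $\Order(n^3 \cdot \errorAICP)$.

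\textbf{Liveness.} Let $\Hon$ be the set of honest parties and $Z^{\star} = \PartySet \setminus \Hon$ the set of corrupt parties; then $Z^{\star} \in \Z_s$ in the synchronous case and $Z^{\star} \in \Z_a \subset \Z_s$ in the asynchronous case, so in both cases $\PartySet \setminus \Hon \in \Z_s$. Under $E$, every $P_i \in \Hon$ obtains $y$ and hence (first bullet of the termination phase) sends $(\ready, P_i, y)$ to all parties --- by time $T^{\star}$ in a synchronous network, eventually in an asynchronous network. Consequently every honest party receives $(\ready, P_j, y)$ from every $P_j \in \Hon$: within $\Delta$ further time in a synchronous network (so by time $T^{\star} + \Delta$), and eventually in an asynchronous network. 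Taking $\W = \Hon$, the condition $\PartySet \setminus \W \in \Z_s$ of the third bullet is met, so every honest party outputs some value and terminates --- at time $T^{\star} + \Delta$ in the synchronous network, and (almost-surely) eventually in the asynchronous network, as claimed.

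\textbf{Consistency of the terminating value.} It remains to rule out that the adversary tricks an honest party into terminating with some $y' \neq y$. Suppose an honest $P_i$ terminates with $y'$; then it received $(\ready, P_j, y')$ from a set $\W$ with $\PartySet \setminus \W \in \Z_s$. Since $\Z_s$ satisfies $\Q^{(2)}(\PartySet, \Z_s)$ and $Z^{\star} \in \Z_s$ (using $\Z_a \subset \Z_s$ in the asynchronous case), the set $\Hon$ is not in $\Z_s$, so $\W$ cannot be contained in $Z^{\star}$, i.e. $\W \cap \Hon \neq \emptyset$; likewise any set ${\cal A}$ satisfying $\Q^{(1)}({\cal A}, \Z_s)$ meets $\Hon$. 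Now order the honest parties by the time they first send a $\ready$-message for $y'$; the earliest such honest party cannot have been triggered by the amplification step (second bullet), which requires a $\Q^{(1)}(\cdot, \Z_s)$-set of senders that itself contains an honest party already committed to $y'$. Hence it was triggered by the first bullet, i.e. it computed $y'$ as its circuit output; but under $E$ every honest circuit output equals $y$, so $y' = y$. Combined with the liveness paragraph, all honest parties terminate, and they all terminate with $y$.

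The argument is routine given Lemma~\ref{lemma:MPCCircuitEvaluationPhase}; the only step needing genuine care is the backward-induction argument of the consistency paragraph, which rests on the combinatorial observation that both a $\Q^{(1)}(\cdot, \Z_s)$-set and the complement of a $\Z_s$-set necessarily meet $\Hon$ --- exactly what $\Q^{(2)}(\PartySet, \Z_s)$ together with $\Z_a \subset \Z_s$ provide. I would also note in passing (a design consequence, not something to be proved here) that since honest parties keep participating in all sub-protocol instances until they terminate, the termination of $\PiMPC$ forces termination of every underlying sub-protocol instance.
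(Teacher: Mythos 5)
Your synchronous argument and your consistency argument (ordering honest parties by the time they first send a $\ready$ for $y'$ and tracing the earliest back to the circuit-evaluation rule) are correct and match the paper's reasoning. The timing $T^{\star} + \Delta$ also checks out.

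However, there is a genuine gap in the asynchronous liveness part. You write that, under $E$, every honest $P_i$ eventually obtains $y$ and hence eventually sends $(\ready, P_i, y)$, citing Lemma~\ref{lemma:MPCCircuitEvaluationPhase}. But that lemma (like every sub-protocol guarantee in the paper) is proved under the convention, stated in the preliminaries, that honest parties keep participating in all sub-protocol instances indefinitely. Once the \emph{first} honest party $P_h$ collects a $\ready$-quorum $\W$ with $\PartySet \setminus \W \in \Z_s$ and terminates $\PiMPC$ (and hence drops out of all the running instances of $\Rec$, $\BC$, $\BA$, etc.), the remaining honest parties may be unable to make further progress in the circuit-evaluation phase, so Lemma~\ref{lemma:MPCCircuitEvaluationPhase} can no longer be invoked for them. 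This is exactly why the $\ready$-amplification rule and the termination phase exist, and why the paper's proof instead argues as follows: $P_h$'s quorum $\W$ satisfies $\PartySet \setminus \W \in \Z_s$ and $Z^{\star} \in \Z_a$, so by the $\Q^{(2,1)}(\PartySet, \Z_s, \Z_a)$ condition the set $\Hon \cap \W$ of \emph{honest} $\ready$-senders is a $\Q^{(1)}(\cdot,\Z_s)$ set; their already-sent $\ready$ messages for $y$ are eventually delivered to every honest party, which triggers the second bullet and causes everyone (whether or not they ever finish circuit evaluation) to send and then collect $\ready$s for $y$. Your closing remark --- that the termination of $\PiMPC$ forces termination of the underlying sub-protocols, dismissed as ``not something to be proved here'' --- is in fact pointing at the very phenomenon that breaks your invocation of Lemma~\ref{lemma:MPCCircuitEvaluationPhase}; it is not a side remark but the reason the amplification argument, and the $\Q^{(2,1)}$ condition behind it, are needed.
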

\begin{proof}
Let us first consider a {\it synchronous} network. Let $Z^{\star} \in \Z_s$ be the set of {\it corrupt} parties and
  let $\Hon =  \PartySet \setminus Z^{\star}$
  be the set of {\it honest} parties.
  From Lemma \ref{lemma:MPCCircuitEvaluationPhase}, except with probability $\Order(n^3 \cdot \errorAICP)$,  at the time $\TimeRand + \TimeTripGen + \TimeLSh + 2\TimeBA + (D_M + 1) \cdot \TimeRec$, 
  all the parties in $\Hon$ will have $y$. Hence
  every party in $\Hon$ will send a $\ready$ message for $y$ to all the parties, which gets delivered within $\Delta$ time, while
  the parties in $Z^{\star}$ may send a $\ready$ message for some $y'$ where $y' \neq y$. Now since
  $\Z_s$ {\it does not} satisfy the $\Q^{(1)}(Z^{\star}, \Z_s)$ condition, it follows that no party in $\Hon$ will ever send a $\ready$ message for any $y'$ where $y' \neq y$. 
   As $\PartySet \setminus \Hon \in \Z_s$, it follows that at the time 
   $\TimeRand + \TimeTripGen + \TimeLSh + 2\TimeBA + (D_M + 1) \cdot \TimeRec + \Delta$, all the parties in $\Hon$ will have sufficient number of $\ready$ messages for $y$ and hence they terminate
   with output $y$.
   
   Next, consider an {\it asynchronous} network. Let $Z^{\star} \in \Z_a$ be the set of {\it corrupt} parties and
  let $\Hon =  \PartySet \setminus Z^{\star}$
  be the set of {\it honest} parties. Note that $\PartySet \setminus Z^{\star} \in \Z_s$, since $\Z_a \subset \Z_s$. 
   From Lemma \ref{lemma:MPCCircuitEvaluationPhase}, except with probability $\Order(n^3 \cdot \errorAICP)$,  almost-surely,   
  all the parties in $\Hon$ will eventually compute $y$. Hence
  every party in $\Hon$ will eventually send some $\ready$ message. We claim that no party in $\Hon$ will ever send a $\ready$ message for any $y' \neq y$.
  On the contrary, let $P_i \in \Hon$ be the {\it first} party, which sends a $\ready$ message for $y' \neq y$.
  From the protocol steps, it follows that $P_i$ sends the $\ready$ message for $y'$ after computing $y'$ during the circuit-evaluation phase.
  Otherwise, there should exist a subset of parties ${\cal A}$ where $\Z_s$ satisfies  $\Q^{(1)}({\cal A}, \Z_s)$ condition (implying that ${\cal A}$ has at least one party from $\Hon$),
  who should have sent the $\ready$ message for $y'$ to $P_i$, which is {\it not} possible, since we are assuming $P_i$ to be {\it first} party from $\Hon$ to send a $\ready$
  message for $y'$. From Lemma \ref{lemma:MPCCircuitEvaluationPhase}, $P_i$ will {\it not} compute $y'$ and hence will {\it not} send a $\ready$ message for $y'$.
  Now since every party in $\Hon$ eventually computes $y$ in the circuit-evaluation phase, it eventually sends a $\ready$ message for $y$. And since
  $\PartySet \setminus Z^{\star} \in \Z_s$ and $\PartySet \setminus Z^{\star} \not \in \Z_s$, it follows that irrespective of the behaviour of the corrupt parties, the parties in $\Hon$ will eventually
  receive a sufficient number of $\ready$ messages for $y$, to terminate with output $y$. 
  
  Let $P_h$ be the {\it first} party from $\Hon$, who terminates with output $y$. This implies that
  there exists a subset of parties ${\cal W}$ with $\PartySet \setminus {\cal W} \in \Z_s$, who sends a $\ready$ message for $y$ to $P_h$. Now consider the set $(\Hon \cap {\cal W})$.
  The set satisfies the $\Q^{(1)}(\Hon \cap {\cal W}, \Z_s)$ condition, due to the $\Q^{(2, 1)}(\PartySet, \Z_s, \Z_a)$ condition.
   The $\ready$ messages of these parties (for $y$) get eventually delivered to every party in $\Hon$.
  Consequently, every party in $\Hon$ (including $P_h$) who has {\it not} yet sent any $\ready$ message will eventually send the $\ready$ message for $y$, which gets eventually delivered to all the parties.
  And as a result, every party in $\Hon$ will eventually have a sufficient number of $\ready$ messages for $y$, to terminate with the output $y$.
\end{proof}

We next derive the communication complexity of the protocol.
\begin{lemma}
\label{lemma:MPCComplexity}
Protocol $\PiMPC$ incurs a communication of 
  $\Order(|\Z_s|^2 \cdot n^{12} \cdot \log{|\F|} \cdot |\sigma|)$
   bits and makes $\Order(n^3)$ calls to $\BA$.
\end{lemma}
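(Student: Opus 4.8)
The plan is to bound both quantities phase by phase, reusing the complexity statements already proved for the sub-protocols, and then to observe that the pre-processing phase is the bottleneck. Concretely, $\PiMPC$ decomposes into a pre-processing phase (one instance of $\Rand$ followed by one instance of $\TripGen$), an input phase ($n$ instances of $\LSh$ and $n$ instances of $\BA$), a circuit-evaluation phase (local work for linear gates, $\Order(c_M)$ instances of $\Rec$, no $\BA$), and a termination phase (only point-to-point $\ready$ messages, no $\BA$). I would bound each phase in turn and then add up.

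First I would handle the pre-processing phase, which dominates. The $\Rand$ instance is invoked with $L = n^3\cdot c_M + 4n^2\cdot c_M + n^2 + n$; by Theorem~\ref{thm:Rand} (which runs the $L$-value generalized $\MDVSS$) it communicates $\Order(|\Z_s|^2 \cdot L \cdot n^9 \cdot \log{|\F|} \cdot |\sigma|)$ bits and invokes $\Order(|\Z_s| + n)$ instances of $\BA$. Substituting $L = \Order(n^3 c_M)$ and absorbing the linear circuit-size factor $c_M$ (consistent with the paper's convention of reporting the dependence on $n$ and $|\Z_s|$) yields $\Order(|\Z_s|^2 \cdot n^{12} \cdot \log{|\F|} \cdot |\sigma|)$ bits. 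The generalized $\TripGen$ instance with $L = c_M$ then contributes $\Order(|\Z_s| \cdot n^6 \cdot c_M \cdot \log{|\F|} + n^7 \cdot c_M \cdot \log{|\F|} \cdot |\sigma|)$ bits and $\Order(n^3)$ calls to $\BA$ (independent of $L$, which is precisely why the $L$-fold variants were introduced); both terms are dominated by the $\Rand$ contribution.

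Next I would check that no other phase contributes a higher-order term. The $n$ instances of $\LSh$ in the input phase cost, by Lemma~\ref{lemma:LSh}, a total of $\Order(|\Z_s| \cdot n^4 \cdot \log{|\F|} + n^5 \cdot \log{|\F|} \cdot |\sigma|)$ bits plus $n$ calls to $\BA$. The circuit-evaluation phase uses two instances of $\Rec$ with $\ReceiverSet = \PartySet$ per multiplication gate (and one for the output gate), so $\Order(c_M)$ instances; by Lemma~\ref{lemma:Rec} each costs $\Order(|\Z_s| \cdot n^4 \cdot \log{|\F|})$ bits, for a total of $\Order(|\Z_s| \cdot c_M \cdot n^4 \cdot \log{|\F|})$ bits and no $\BA$ calls. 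The termination phase sends at most a constant number of $\ready$ messages per ordered pair of parties, i.e.\ $\Order(n^2 \cdot \log{|\F|})$ bits, with no $\BA$. Each of these is dominated by the pre-processing cost. Summing, the communication is $\Order(|\Z_s|^2 \cdot n^{12} \cdot \log{|\F|} \cdot |\sigma|)$ bits, and the number of $\BA$ invocations is $\Order(|\Z_s| + n) + \Order(n^3) + n = \Order(n^3)$, the $\TripGen$ term dominating.

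I expect the only non-routine part to be the bookkeeping rather than any genuine difficulty: one must invoke the correct $L$-fold generalized versions of $\MDVSS/\Rand$, $\BasicMult$, $\RandMultCI$, and $\TripGen$ so that the $\BA$ count stays $\Order(n^3)$ and does not scale with $L$, and one must verify that the exponent $n^{12}$ in the $\Rand$ bound (together with suppressing the linear $c_M$ factor) really does absorb every lower-order contribution from the other phases. Everything else is a direct substitution into the previously established lemmas.
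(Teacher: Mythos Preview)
Your proposal is correct and follows essentially the same approach as the paper: identify that the pre-processing phase (the instance of $\Rand$ with $L = \Order(n^3 c_M)$ and the generalized $\TripGen$) dominates, and read off the bounds from Theorem~\ref{thm:Rand} and the communication complexity of the generalized $\TripGen$. The paper's proof is in fact a one-line pointer to exactly those two results; your phase-by-phase verification that the input, circuit-evaluation, and termination phases are lower-order is more careful than what the paper actually writes, but the argument is the same.
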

\begin{proof}
The communication complexity is dominated by the instance of $\Rand$ to generate  $L \defined n^3 \cdot c_M + 4n^2 \cdot c_M + n^2 + n$ random secret-shared values
 and the instance of $\TripGen$ to generate $L = c_M$ number of secret-shared multiplication-triples. The proof now follows from the communication complexity of $\Rand$ (Theorem \ref{thm:Rand})
  and the communication complexity of the (generalized) $\TripGen$ protocol (Lemma \ref{lemma:TripGenCommunication}).
\end{proof}

Theorem \ref{thm:MPC} now easily follows from Lemma \ref{lemma:MPCPreProcessing}-\ref{lemma:MPCComplexity}.

\end{document}